\documentclass[a4paper]{article}

\usepackage[left=3cm,right=3cm]{geometry}

\usepackage{booktabs}   
\usepackage{subcaption} 

\usepackage{amsmath}
\usepackage{amsthm}
\usepackage{thmtools} 
\usepackage{amsfonts}
\usepackage{amssymb}
\usepackage{mathtools}
\usepackage{bm}

\usepackage{tikz}
\usetikzlibrary{matrix}
\usetikzlibrary{cd}
\usepackage{quiver}

\usepackage{stmaryrd}
\usepackage{ebproof}
\usepackage{cmll}

\usepackage{varioref}
\usepackage{hyperref}
\usepackage[nameinlink, capitalize]{cleveref}

\definecolor{blue(pigment)}{rgb}{0.2, 0.2, 0.6}
\definecolor{darkgreen}{rgb}{0.0, 0.5, 0.0}
\definecolor{darkred}{HTML}{4A0005}

\hypersetup
{
     colorlinks=true,
     linkcolor=darkred, 
     filecolor=magenta,            
     urlcolor=blue(pigment),
     citecolor=darkgreen
}

\usepackage[colorinlistoftodos]{todonotes}

\newcommand\Tnoted[1]{\todo[color=pink,inline]{T:\the\day/\the\month/\the\year: #1}}

\title{Coherent differentiation with a monoid of degrees: \\
derivatives, Taylor expansion, finite differences\dots}

\author{Thomas Ehrhard\\
  Université Paris Cité, CNRS, Inria, IRIF, F-75013, Paris, France\\
  Aymeric Walch\\ Fédération ENAC ISAE-SUPAERO ONERA, Université de Toulouse, France.}

\newcommand\CMLLPAR{
\usepackage{cmll}
\newcommand\IPar{\mathord{\parr}}
}

\CMLLPAR
\input{mainnot}

\begin{document}

\maketitle

\begin{abstract}
Coherent differentiation has been introduced a few years ago as a
refinement of the (categorical semantics of the) differential
lambda-calculus and of differential linear logic, compatible with the
determinism of computation, and hence in categories where addition of
morphisms in the hom-sets is only partially defined.
In this setting differentiation is implemented by a kind of ``tangent
functor'', with the difference that it maps an object to its object of
summable tuples rather to a tangent bundle.
In our earlier work, these tuples were pairs in the case of simple
differentiation, and families indexed by integers when dealing with
Taylor expansion and the Faà~di~Bruno formula.

We push this idea further, considering now tuples indexed by an
arbitrary commutative partial monoid satisfying a nonnegativity
condition.
The monoid structure on these indices plays a crucial role in the
definition of the ``differential'' calculus associated with this
``tangent functor''.
In that way, and without changing the general categorical
infrastructure of coherent differentiation, this theory accommodates
new notions of ``differentiation'' of which we describe several
instances.
Among many others, such notions include various forms of finite
differences as well as Taylor expansion truncated at any fixed degree,
and finite Taylor expansion of a fixed finite degree with a remainder.


\end{abstract}


\section*{Introduction}

The differential lambda-calculus and differential linear logic
\cite{EhrhardRegnier02,EhrhardRegnier06d}, as well as the description
of their associated categorical structures
\cite{BluteCockettSeely06,Fiore07,BluteCockettSeely09,Manzonetto12,Ehrhard15c,BluteCockettLemaySeely20},
arose from the observation that in several categorical models of
linear logic (\(\LL\)), non-linear morphisms admit derivatives because
they can be understood as some kind of analytic functions: this is the
case in the relational model, in Köthe spaces \cite{Ehrhard02} and
finiteness spaces \cite{Ehrhard00b}.
As in Calculus, the description of these derivatives makes a strong use
of addition: in the differential lambda-calculus there are reduction
rules similar in spirit to the Leibniz Law \((uv)'=u'v+uv'\).
As far as categorical models are concerned such summations correspond
to additions of morphisms in hom-sets and therefore categorical models
of the differential calculus (and of differential \(\LL\)) are
required to be additive categories, meaning that they are enriched in
commutative monoids.
From a computational point of view, this is a strong assumption which
means for instance that, in the ``object of booleans'' of our models,
the superposition \(\True+\False\) of the two boolean values must
exist, which amounts to saying that our models must feature a strong
form of non-determinism.

From this point of view the differential lambda-calculus (and
differential \(\LL\)) are unsatisfactory approaches to the
differentiation of programs/proofs as we do not see \emph{a priori}
why such an extension of the standard functional paradigm should
entail the non-determinism of computations.
This consideration is enforced by the observation that there are
perfectly well-behaved models of \(\LL\), such as probabilistic
coherence spaces, where non-linear morphisms are clearly analytic
functions for which derivatives can be defined.
Concretely, the issue is that, if we define these derivatives naively,
there is no way to make them morphisms in the considered category.
We give a concrete example of this situation.
In probabilistic coherence spaces, the object \(\Sone\) (unit of the
tensor product which is also the standard interpretation of the
\(\mathtt{unit}\) data-type) is the interval
\(\Intcc 01\subseteq\Real\), whose elements should be understood as
the probability of a closed program of type \(\mathtt{unit}\) to
terminate.
A non-linear morphism \(\Sone\to\Sone\) is a function
\(f:\Intcc01\to\Intcc01\) which can be written as a power series with
non-negative coefficients: there is a family
\((a_n\in\Realp)_{n\in\Nat}\) such that
\(\forall x\in\Intcc01\ f(x)=\sum_{n=0}^\infty a_nx^n\).
Clearly, a family \((a_n\in\Realp)_{n\in\Nat}\) defines such a
function \(f:\Intcc01\to\Intcc01\) iff \(\sum_{n\in\Nat}a_n\leq1\).
Then the derivative \(f'(x)=\sum_{n=0}^\infty (n+1)a_{n+1}x^n\) is a
monotonic function \(\Interco01\to\Realp\) which is not necessarily
bounded and \emph{a fortiori} does not extend in general to a morphism
\(\Sone\to\Sone\) in our non-linear category.
Take for instance \(f(x)=1-\sqrt{1-x}\) for which the non-negative
\(a_n\)'s are easy to compute using the usual series expansion of
\(\sqrt{1+y}\) at \(0\), and for which we have
\(f'(x)=\frac 1{2\sqrt{1-x}}\).
This example is particularly relevant because it is the
interpretation of the following recursively defined probabilistic
(call-by-name) program \(F:\mathtt{unit}\to\mathtt{unit}\):
\begin{align*}
  \mathtt{let\ rec}\  F\ x
  =\mathtt{if}\ \mathtt{bernouilli}(0.5)\ 
  \mathtt{then}\ x\ \mathtt{else}\ (F\ x);(F\ x)
\end{align*}
so it cannot be discarded as pathological.

\paragraph*{Coherent differentiation.}
To escape this aporia, we cannot change the definition of derivatives,
but we can put some restrictions on the use we make of them.
For this we are guided by the \emph{general} definition of the
derivative of a (sufficiently regular) function \(f:E\to F\) where
\(E\) and \(F\) are finite dimensional \(\Real\)-vector spaces: it is
the unique map \(E\to\cL(E,F)\) (where \(\cL(E,F)\) is the
\(\Real\)-vector space of linear maps from \(E\) to \(F\)) such that
\begin{align*}
  f(x+u)=f(x)+f'(x)\cdot u+o(\Norm u)
\end{align*}
where \(\Norm{\_}\) is any of the (all equivalent) norms which define
the usual topology of \(E\) and \(\cdot\) stand for the application of
a linear map to its argument.
In the case \(E=\Real\) the usual convention is to use implicitly the
canonical isomorphism \(\cL(\Real,F)\Isom F\) mapping \(\phi\) to
\(\phi\cdot 1\) and consider that \(f'(x)\in F\), so that the formula
above reads \(f(x+u)=f(x)+uf'(x)+o(|u|)\).

Now, coming back to our analytic maps \(f:\Intercc01\to\Intercc01\)
(with non-negative coefficients) the crucial observation is that our
source and target spaces \(\Intercc01\) are not vector spaces, but we
can consider \(\Intercc01\) as a \emph{partial} monoid: given
\(x,u\in\Intercc 01\), \(x+u\) is defined as usual if the
value of this sum is \(\leq 1\) and is undefined otherwise.
Moreover, when \(x+u\leq 1\), we have \(f(x)+uf'(x)\leq f(x+u)\leq 1\)
because all the coefficients \(a_n\) defining \(f\) are 
nonnegative.
Setting
\(\Sfun\Intercc01=\{(x,u)\in\Intercc01^2\St x+u\in\Intercc01\}\)
(which can easily be seen as a probabilistic coherence space) we can
therefore define a map \(\Dfun f:\Sfun\Intercc01\to\Sfun\Intercc01\)
by \(\Dfun f(x,u)=(f(x),uf'(x))\) and this operation on functions,
formally similar to the tangent bundle functor of differential
geometry, is well defined and functorial.

This simple observation generalizes easily to all probabilistic
coherence spaces and analytic functions between them%
\footnote{See \cref{ex:pcoh-cart-smc-elem-sum} for a survey on
  probabilistic coherence spaces.}%
: given an analytic morphism \(t:X\to Y\) and \(x,u\in\Pcoh X\) we can
always define \(t'(x)\cdot u\) but we can only say that it is an
element of \(\Realpcto{\Web Y}\) taking possibly \(\infty\) values in
some coordinates.
However if we assume moreover that \(x+u\in\Pcoh X\) then it is
granted that \(t(x)+t'(x)\cdot u\in\Pcoh Y\) and we can again define
an analytic \(\Dfun t:\Sfun X\to\Sfun Y\) where \(\Sfun X\) is a
probabilistic coherence space such that, up to a simple isomorphism,
\(\Pcoh{\Sfun X}=\Eset{(x,u)\in\Pcoh X^2\St x+u\in\Pcoh X}\) and then,
up to this iso, we have
\(\Dfun t(x,u)=(t(x),t'(x)\cdot u)\in\Pcoh{\Sfun Y}\).

Analyzing further the situation, one of the authors 
observed in \cite{Ehrhard23a} that this operation
\(X\mapsto\Sfun X\) on probabilistic coherence spaces is actually a
functor \(\Sfun:\PCOH\to\PCOH\) (where \(\PCOH\) is the category of
probabilistic coherence spaces and \emph{linear} morphisms which are
actually matrices generalizing sub-stochastic matrices).
He observed also that, in almost all known cases, this functor is
\emph{representable}: \(\Sfun\_=\Limplp\Dbimongen\_\) where
\(\Dbimongen=\With\Sone\Sone\) (we use \(\Sone\) for the unit of the
tensor product, \(\with\) for the categorical product and
\(\Limpl\_\_\) for the internal hom functor associated with the fact
that \(\PCOH\) is a symmetric monoidal closed category).
It is useful to be a bit more pedantic here:
\(\Dbimongen=\Bwith_{i\in\Mdeg}\Sone\) where the elements of
\(\Mdeg=\{0,1\}\) are the two indices we use to specify the two
components of the summable pairs \((x,u)=(x_0,x_1)\in\Pcoh{\Sfun X}\).
It is important to notice that there are jointly epic canonical
injections \((\Win i\in\PCOH(\Sone,\Mdeg))_{i\in\{0,1\}}\) mapping
\(1\in\Pcoh\Sone=\Intercc01\) to \((1,0)\) and \((0,1)\) respectively
(both pairs are elements of \(\Pcoh\Dbimongen\Isom\Intercc01^2\)).
Then one essential feature of the functor \(\Sfun\), and hence of the
object \(\Dbimongen\), is that they completely describe the notion of
summability in the homsets \(\PCOH(X,Y)\): two morphisms
\(t_0,t_1\in\PCOH(X,Y)\) are summable iff there is a morphism
\(t\in\PCOH(\Tens X\Dbimongen,Y)\) such that
\(t_i=t\Compl\Tensp X{\Win i}\) for \(i=0,1\), keeping again implicit
the canonical iso between \(X\) and \(\Tens X\Sone\).
And actually \(t_0+t_1=t\Compl\Tensp X\Delta\) where
\(\Delta\in\PCOH(\Sone,\Dbimongen)\) is the diagonal morphism.
More explicitly, this means that \(t_0\) and \(t_1\) are summable iff
the function \(s:\Intercc01^2\to\PCOH(X,Y)\) mapping
\((\lambda_0,\lambda_1)\) to \(\lambda_0t_0+\lambda_1t_1\) is
well-defined, and then of course \(t_0+t_1=s(1,1)\).

To understand the differentiation operator \(\Dfun\), we must recall
that, as a model of \(\LL\), the category \(\PCOH\) is equipped with a
resource comonad \(\oc\) and that the category of non-linear morphisms
(here, the analytic functions between probabilistic coherence space) is
the Kleisli category \(\Kl\PCOH\) of this comonad: the analytic maps
from \(X\) to \(Y\) can be seen as elements of
\(\Kl\PCOH(X,Y)=\PCOH(\Oc X,Y)\).
Our differentiation functor \(\Dfun\) can therefore be understood as a
functor \(\Kl\PCOH\to\Kl\PCOH\), which extends the functor
\(\Sfun:\PCOH\to\PCOH\) because the derivative of a linear map is the
map itself.
It is well known that such extensions are in bijective correspondence
with the distributive laws \(\Oc\Comp\Sfun \Natto \Sfun\Comp\Oc\) (see
\cref{sec:distr-law-comonad}) and, because of the representability of
the functor \(\Sfun\), as explained in \cref{sec:coalg-to-dl}, each
\(\oc\)-coalgebra structures on \(\Dbimongen\) induces such a
distributive law.
So, at the end of the day, our coherent differentiation infrastructure
boils down to a morphism \(\Ddiff\in\PCOH(\Dbimongen,\Oc\Dbimongen)\)
satisfying two commutations expressing that \((\Dbimongen,\Ddiff)\) is a
\(\oc\)-coalgebra, that is, an object of the Eilenberg-Moore category
\(\Em\PCOH\).
The fact that the functor \(\Dfun\) induced in that way by this
coalgebra structure \(\Ddiff\) actually computes the derivatives
manifests itself by the specific commutative \(\ITens\)-comonoid
structure it induces on \(\Dbimongen\), as any \(\oc\)-coalgebra
structure does on its carrier object in a model of \(\LL\).
This ``differential'' comonoid structure consists of a counit which is
the first projection \(\Proj0\in\PCOH(\Dbimongen,\Sone)\) and of a
comultiplication
\(\Dbimonm\in\PCOH(\Dbimongen,\Tens\Dbimongen\Dbimongen)\) that is
determined by the two following equations
\begin{align*}
  \Dbimonm\Compl\Win0&=\Tens{\Win0}{\Win0}\\
  \Dbimonm\Compl\Win1&=\Tens{\Win1}{\Win0}+\Tens{\Win0}{\Win1}
\end{align*}
keeping implicit the unique canonical iso between \(\Sone\) and
\(\Tens\Sone\Sone\).
It is essential to observe here that the addition used in the second
equation is well-defined in
\(\PCOH(\Sone,\Tens\Dbimongen\Dbimongen)
\Isom\Pcohp{\Tens\Dbimongen\Dbimongen}\).
These constructions can be generalized to all models of \(\LL\) where
hom-sets have a partially defined addition which can be characterized
by means of the representable \(\Sfun\) functor, and it turns out that
there are a lot of such models (most of the ones we know actually).
A noticeable example is Girard's category of coherence spaces with the
\emph{free} exponential whose definition uses finite multicliques.
A typical non-example is
the category of coherence spaces again, but with the original Girard's
exponential based on finite cliques (whose associated Kleisli category
is isomorphic to the category of coherence spaces and stable
functions).

In \cite{EhrhardWalch25} we observed that, by taking now
\(\Dbimongen=\Bwith_{i\in\Nat}\Sone\) (for which we have jointly epic
injections \((\Win i\in\PCOH(\Sone,\Dbimongen))_{i\in\Nat}\), with
\(\Win i\) mapping \(1\) to
\[(\overbrace{0,\dots,0}^{{i}\times},1,0,\dots)
\in\Pcoh\Dbimongen\Isom\Intercc01^\Nat\,\]
and choosing a \(\oc\)-coalgebra structure on \(\Dbimongen\) whose
associated comonoid has the first projection \(\Proj0\) as counit and
a comultiplication characterized by
\begin{align*}
  \forall i\in\Nat\quad
  \Dbimonm\Compl\Win i=\sum_{l+r=i}\Tens{\Win l}{\Win r}
\end{align*}
we get a completely similar theory where the associated
``differentiation'' functor \(\Dfun\) computes the Faà~di~Bruno
formula which has to be understood as a functorial generalization of
the usual Taylor expansion.

As in the ``binary'' case above of differentiation, the functor
\(\Sfun\) (and hence the object \(\Dbimongen\)) completely describes the
summability structure of the hom-sets \(\PCOH(X,Y)\): a countable
family \((t_i\in\PCOH(X,Y))_{i\in\Nat}\) is summable iff there is a
\(t\in\PCOH(\Tens X\Dbimongen, Y)\) such that
\(\forall i\in\Nat\ t_i=t\Compl\Tensp X{\Win i}\), and then
\(\sum_{i\in \Nat}t_i=t\Compl\Tensp X\Delta\).
As in the previous theory of differentiation, this construction can be
done in most models of \(\LL\) we know.

\paragraph*{Contribution of the paper.}
In the present paper we generalize these two strikingly similar
constructions.
The main parameter of this generalization is a commutative partial
monoid%
\footnote{See \cref{def:pcm,sec:summability-structure} for precise
  definitions.}%
\(\Mdeg\) having at least \(2\) elements.
It must also satisfy a \Mpositivity{} requirement
(\(i+_\Mdeg j=0\Implies i=0\)) whose purpose will become clear in our
developments.
The elements of \(\Mdeg\) should be considered as \emph{degrees} in the
sense of differentiation.

Then a model of coherent \(\Mdeg\)-differentiation is a model \(\cL\)
of \(\LL\) whose hom-sets have a commutative partial monoid structure,
and where this structure can be characterized by means of an endofunctor
$\Sfun$ equipped with projections (see \cref{sec:summability-structure}).
In most cases, this functor is \emph{representable} in the same way as 
described in our previous examples:
$\Sfun = \Limpl{\Dbimongen}{\_}$ where \(\Dbimongen=\Bwith_{i\in\Mdeg}\Sone\) 
is an object of \(\cL\)
\footnote{Actually, it is not essential that \(\Dbimongen\) be the
  product \(\Bwith_{i\in\Mdeg}\Sone\), it suffices that the jointly
  epic \(\Win i\in\cL(\Sone,\Dbimongen)\) exist as well as a diagonal
  morphism \(\Delta\in\cL(\Sone,\Dbimongen)\), related as explained above
  with the partial sum operation in the hom-sets of \(\cL\),
  see~\cite{TassonWalch26}.}.
We also require \(\Dbimongen\) to have a
\(\oc\)-coalgebra structure whose associated commutative comonoid has
\(\Proj0\) (the projection associated with the neutral element of
\(\Mdeg\)) as counit, and a comultplication \(\Dbimonm\) such that
\begin{align*}
  \Dbimonm\Compl\Win i=\sum_{l+_\Mdeg r=i}\Tens{\Win l}{\Win r}\,,
\end{align*}
where the \((\Win i\in\PCOH(\Sone,\Mdeg))_{i\in\Mdeg}\) are the usual
jointly epic injections.
This \(\oc\)-coalgebra structure induces a distributive law
\(\oc\Comp\Sfun\Natto\Sfun\Comp\oc\) between the functor \(\Sfun\) and
the comonad \(\oc\)
which allows us to extend the functor \(\Sfun\) to a functor
\(\Dfun:\Kl\cL\to\Kl\cL\) which inherits from \(\Dbimongen\) a
commutative strong monad structure.
This comonoid structure on \(\Dbimongen\) induces a monad structure on
\(\Sfun\), which extends to a monad structure on \(\Dfun\) and even to
a bimonad structure when \(\Mdeg\) is a total monoid in the sense that
addition of elements of \(\Mdeg\) is always defined, generalizing an
observation made in~\cite{EhrhardWalch25}.

Our first two constructions (coherent differentiation and coherent
Taylor expansion) are instances of this general theory upon taking
\(\Mdeg=\Eset{0,1}\) with \(1+_\Mdeg1\) undefined and
\(\Mdeg=(\Nat,\mathord+)\) respectively.
In the paper we exhibit other examples and here is a short description
of three of them.
In each case we pick an
\(f\in\Kl\cL(X,Y)\) that we consider intuitively as an analytic
function \(X\to Y\) (as in the case \(\cL=\PCOH\)) and we describe
\(\Dfun f\in\Kl\cL(\Sfun X,\Sfun Y)\) as an analytic function mapping
a summable family \((x_i)_{i\in\Mdeg}\) to a summable family
\((y_i)_{i\in\Mdeg}\).
\begin{enumerate}
\item\label{item:binary-differences} \(\Mdeg=\Eset{0,1}\) with \(1+_\Mdeg1=1\).
  Then
  \(\Dfun f(x_0,x_1)=(f(x_0),f(x_0+x_1)-f(x_0))\).
  Interestingly this functor exists in the category of coherence
  spaces equipped with Girard's original resource comonad \(\oc\) based
  on finite cliques, even though there is no satisfactory notion of
  differentiation for this comonad, see \cref{sec:diff-in-LL}.
\item \(\Mdeg=\Nat\) with \(i+_\Mdeg j=\max(i,j)\). Then
  \(\Dfun
  f(x_0,x_1,\dots)
  =(f(x_0),f(x_0+x_1)-f(x_0),f(x_0+x_1+x_2)-f(x_0+x_1),\dots)\).
\item \(\Mdeg=\Eset{0,\dots,N}\) with \(i+_\Mdeg j=\min(i+j,N)\) then
  \[
    \Dfun f(0,x_1,\underbrace{0,\dots,0}_{(N-1)\times})
    =\Big(f(0),f'(0)\cdot x_1,\frac12 f''(0)\cdot(x_1,x_1),\dots,
    f(x_1)-\sum_{n=0}^{N-1}\frac1{\Factor
      n}f^{(n)}(0)\cdot(x_1,\dots,x_1)\Big)
  \]
  which shows that \(\Dfun f\) contains a Taylor expansion of \(f\) of
  order \(N-1\) with a remainder.
  Actually the expression of \(\Dfun f(x_0,\dots,x_N)\) is more
  complicated and is a Faà~di~Bruno formula of order \(N-1\), with a
  remainder.
\end{enumerate}

\paragraph{Perspectives.}
These examples suggest that \DDIFF{} widely subsumes the standard
notions of higher derivatives, including finite differences, and
various kinds of combinations between the two notions.
And there are many choices of \(\Mdeg\) for which understanding \DDIFF{}
could be challenging and interesting: think for instance of the
nonnegative rational numbers, equipped with the usual addition, or of
the set of nonempty finite subsets of \(\Nat\) equipped with set
addition.
Moreover since our models of coherent \DDIFF{} contain all the basic
ingredients required for interpreting Turing-complete functional
programming languages in call-by-name and call-by-value, including
general recursive types, it will most probably be possible to extend
such programming languages with syntactic \(\Mdeg\)-differential
operations for arbitrary choices of \Mpositive{} \Pmonoid{} \(\Mdeg\).
This means that such extensions will feature ``differential''
constructs and reductions much more general than those considered in
the differential extensions of the \(\lambda\)-calculus considered so
far. The differential application of the 
differential $\lambda$-calculus constraints the argument of the function 
to be used linearly, that is, exactly once. Our generalization 
to \DDIFF{} introduces other possible notions of resource sensitivity. 

\paragraph{Contents.}
\cref{sec:diff-in-LL} is an appetizer where we consider the category
of coherence spaces and stable functions and we exhibit a functor from
this category to itself which maps each coherence space \(E\) to a
coherence space \(\Sfun E\) whose cliques (configurations) can be seen
as disjoint pairs \((x_0,x_1)\) of cliques of \(E\) whose union is a
clique.
This functor maps each stable function \(f:E\to F\) to the stable
function \(\Dfun f:\Sfun E\to\Sfun F\) which maps \((x_0,x_1)\) to
\((f(x_0),f(x_0\cup x_1)\setminus f(x_0))\).
We also explain how this difference operation provides another
characterization of stability (in the sense of Berry and Girard).
In the sequel of the paper we shall see that this construction is an
instance of our general theory, and more precisely of the situation
described in \cref{item:binary-differences} above.

In \cref{sec:categorical-prelim} we introduce the categorical notions
that we need.
In particular we develop a notion of symmetric (co)monoid and bimonoid
in a monoidal category which is not necessarily symmetric, with
applications to categories of functors.
This general setting allows us to explain the representable theory in
a principled way.
We also recall the necessary definitions and basic results on (op)lax
functors, categorical models of \(\LL\) and mates.

In \cref{sec:summability} we introduce the notion of strong partial
commutative monoid (\Pmonoid).
%
We also define the notion of \Pmonoid-enriched category, which is just
a specialization to \Pmonoid{} of the general notion of enriched
category.

In \cref{sec:summability-structure} we develop the theory of \SUMBLTY{}
structures in its general functorial form.
Given a category \(\cL\) where each hom-set has a structure of
\Pmonoid{}, a \SUMBLTY{} structure is a triple
\((\Sfun,(\Sproj i)_{i\in\Mdeg},\Ssum)\) where
\(\Sfun:\cL\to\cL\) is a functor, \(\Sproj {i,X}\in\cL(\Sfun X,X)\)
are jointly monic natural transformations and
\(\Ssum_X\in\cL(\Sfun X,X)\) is a natural transformation.
The main assumption is that a family of morphisms
\((f_i\in\cL(X,Y))_{i\in\Mdeg}\) is summable (in the sense of the
\Pmonoid-structure of \(\cL(X,Y)\)) iff there is
\(h\in\cL(X,\Sfun Y)\) such that \(f_i=\Sproj i\Compl h\) for all
\(i\in\Mdeg\), and then \(\sum_{i\in\Mdeg}f_i=\Ssum\Compl h\).
We prove that such a category is automatically \Pmonoid-enriched.
In \cref{sec:representable-sum} we focus on the representable case
that we described above more thoroughly, where \(\Sfun\) is the
functor \(\Limpl\Dbimongen\_\) for \(\Dbimongen=\Bwith_{i\in\Mdeg}\One\).

In \cref{sec:diff-structure} we develop a \DDIFF{} theory in terms of
a distributive law between the functor \(\Sfun\) and the resource
comonad \(\oc\) of our model \(\cL\) of \(\LL\).
This distributive law allows us to extend the functor \(\Sfun\) into a
functor \(\Dfun:\Kl\cL\to\Kl\cL\) that implements
\DDIFF{} on non-linear morphisms.
In the representable case, this distributive law boils down to a
\(\oc\)-coalgebra structure on \(\Dbimongen\).
We notably give a set of axioms relating this distributive law (or
\(\oc\)-coalgebra structure) with the algebraic \Pmonoid{} structure
of \(\Mdeg\).
In that way, we determine the ``differential'' part of our
\(\Mdeg\)-differentiation theory.

\cref{sec:diff-lambda-cat} explains how these summability and \DDIFF{}
structures interact with the product, the function space construction
of \(\cL\), and with the fixpoint operators when available in
\(\Kl\cL\), as it is the case in most models of \(\LL\) which have
such a coherent generalized difference structure.
A programming language featuring general recursion, value types, and 
coherent differentiation (in the sense of~\cite{Ehrhard23a})
has already been introduced in~\cite{Ehrhard23b}. This language 
has good operational properties in spite of its heavy 
syntax. Our \cref{sec:closed-differential} shows that our general theory 
of \DDIFF{} can be carried to a programming language similarly.

\paragraph{Comparison with our previous articles on coherent
  differentiation.}
This article is the third long account of coherent differentiation
following~\cite{Ehrhard23a} and~\cite{EhrhardWalch25}, after its
discovery in~2021.
The present article essentially subsumes the two previous ones with
one major exception: in~\cite{EhrhardWalch25} we also study the
``cartesian'' theory of coherent Taylor expansion in the spirit
of~\cite{EhrhardWalch23}, an aspect of the theory we chose not to
address here for keeping the present article of a reasonable length.
Such a cartesian theory is of course perfectly possible for \DDIFF{}
and can be developed along the lines of~\cite{EhrhardWalch25} without
any surprise. 

The main contribution of this paper wrt.~the two previous ones is of
course the introduction of \DDIFF{} parameterized by an arbitrary
\Mpositive{} \Pmonoid{} \(\Mdeg\) thus widely enlarging the scope of
applications of our theory, including now finite differences for
instance.
This is why we try to present as many examples as possible, including
some not based on the multiset exponential.

Another contribution of our paper lies in the presentation of 
the representable theory. The explicit computations done in~\cite{Ehrhard23a}
were identified as an instance of a mate construction in~\cite{EhrhardWalch25}.
We continue along this line, and present the representable theory as
the transportation of a symmetric monoid structures (see
\cref{sec:symmetric-monoids}) along strong functors. This approach is
more concise and conceptually cleaner than the approach developed
in~\cite{EhrhardWalch25}, in which all structural maps had to be
considered separately.

In parallel to this work, one of the authors published
in~\cite{TassonWalch26} a generalization of the coherent Taylor
expansion of \cite{EhrhardWalch25} to a more liberal notion of partial
sums called PCMs (from which our terminology SPCM is directly
inspired).
The goal of this generalization is to encompass models in which sums are not 
necessarily positive: $x+y = 0$ does not imply $x=y=0$. These
include the historical 
models of the differential $\lambda$-calculus, which are the categories 
of finiteness spaces~\cite{Ehrhard00b} and Köthe spaces~\cite{Ehrhard02}.
This generalization to PCM also applies to
\DDIFF{}, but we keep here the SPCM-based theory for two reasons.
First, it is a bit simpler.
Second, and more importantly, we focus here on the semantics of 
programming languages featuring both differentiation and general recursion
(see~\cref{sec:fixpoints}). It turns out that the semantics of general 
recursion hardly seems compatible with non-positive PCMs (see 
\cref{ex:fixpoint-models}).

Concerning the axiomatization of summability structures, we depart
from \cite{Ehrhard23a} and follow the guidelines
of~\cite{EhrhardWalch25}: we start from the assumption that the
hom-sets of the categorical model \(\cL\) of \(\LL\) are equipped with
a \Pmonoid{} structure, and the summability structure characterizes
them.
This axiomatization splits in two cases depending on whether \(\Mdeg\)
is finite or not whereas in~\cite{EhrhardWalch25} only the infinite
case was considered; surprisingly the finite case is more complicated
to deal with.
In the initial presentation of summability structures in
\cite{Ehrhard23a}, the \Pmonoid{} structures of hom-sets were
byproducts of the summability structure thus requiring specific axioms
guaranteeing that the induced partial addition is associative and
commutative.
The new approach initiated in~\cite{EhrhardWalch25} simplifies the
theory by turning these ``axioms'' into consequences of the \Pmonoid{}
structure of hom-sets.
It is clear however that these two ways of axiomatizing summability
structures are equivalent%
\footnote{The initial presentation keeps some heuristic virtue: it
  often allows to find the suitable notion of summability in hom-sets.
  In coherence space for instance, a family of morphisms (presented as
  cliques) is summable if their union is a clique, and if they are
  pairwise disjoint:  the first condition comes immediately to one's
  mind but the second one is more hidden.}.

\paragraph*{Related work.}
Another approach to resource sensitivity is that of
graded Linear Logic~\cite{BreuvartPagani15,BrunelGaboardiMazzaZdancewic14}.
The key idea of grading is to replace the $\Oc$ exponential with a
parameterized version $\Oc_r$, were $r$ lies in an ordered semiring.
For example, a morphism $f : \Limpl{\Oc_n X}{Y}$ (where $n$ lies in
the semiring $(\Nat, +, \times)$ equipped with its standard order) 
describes a program that
can access its input at most $n$ times.
It would be quite natural to investigate the link between grading and
differentiation, even more so that our \DDIFF{} theory can
now accommodate for very general monoids. 
One direction could be to follow the spirit
of~\cite{BreuvartKerjeanMirwasser26,LemayVienney23} and express a
theory of \DDIFF{} as a distributive law 
with regard to a graded exponential.
Alternatively, we conjecture that a category equipped with a
\DDIFF{} admits a graded exponential, assuming the
existence of sufficient colimits.

\tableofcontents

\section{Differences in LL} %
\label{sec:diff-in-LL}

This section develops a motivating concrete instance of the general
theory presented in the paper, based on the category of coherence
spaces and stable functions.
The stable semantics of the lambda-calculus and of PCF has been
introduced by Gérard~Berry in \cite{Berry78} as a refinement of Scott
semantics featuring stronger properties of determinism.
On coherence spaces, which are special cases of Berry's dI-domains,
Jean-Yves Girard observed in \cite{Girard86,Girard87} that the
coherence space \(\Simpl EF\) of stable function from the coherence
space \(E\) to the coherence space \(F\) can be written
\(\Limpl{\Ocg E}{F}\) where \(\Limpl{E'}{F}\) is the coherence space
of linear functions (stable functions commuting with all existing
lubs) from \(E'\) to \(F\) and \(\Ocg\) is a new construction on
coherence spaces, a resource modality called exponential: this is one
of the origins of \(\LL\).

We provide a new characterization%
\footnote{As far as we know this result does not appear in the
  available literature on stable functions, in spite of the fact that
  this notion is more than 40 years old.} %
of stable functions on coherence spaces as a strong form of
monotonicity involving the difference between \(\subseteq\)-comparable
cliques.
Then we exhibit a functorial difference operator on stable functions
which is directly related to this characterization of stability, and
show that this operation preserves stability.
Formally this operation is similar to the ``tangent bundle functor''
on smooth manifolds which involves computing a differential.
In our functor this differential is replaced by a subtraction, and
actually the stable functions on coherence spaces have no well-behaved
derivatives as explained in~\cite{Ehrhard26}.
In the context of coherence spaces, derivatives can be obtained by
replacing the Girard exponential with another one which uses multisets
instead of sets~\cite{EhrhardWalch25}: this differential operation can
also be presented using the framework of the present article.

A coherence space (CS) is a pair \(E=(\Web E,\mathord{\Coh E})\) where
\(\Web E\) is a set, the \emph{web}, and \(\Coh E\) is a binary,
reflexive and symmetric relation on \(\Web E\), the \emph{coherence
  relation}.
A clique of \(E\) is a subset \(x\) of \(\Web E\) such that
\(\forall a,a'\in x\ a\Coh Ea'\).
Given CSs \(E\) and \(F\), one defines a CS \(\Limpl EF\) by
\(\Web{\Limpl EF}=\Web E\times\Web F\) and
\((a,b)\Coh{\Limpl EF}(a',b')\) if
\(a\Coh Ea'\Implies(b\Coh Fb'\text{ and }b=b'\Implies a=a')\), and the
category \(\COH\) has CSs as objects, and
\(\COH(E,F)=\Cl{\Limpl EF}\), identity morphisms being the diagonal
relations and composition being the standard composition of relations.

If \(E\) and \(F\) are coherence spaces, a function
\(f:\Cl E\to\Cl F\) is Scott continuous if it is increasing and
commutes with the unions of directed sets of cliques, that is, given
\(S\subseteq\Cl E\) directed, one has
\(f(\Union S)\subseteq\Union f(S)\).
A Scott continuous function \(f:\Cl E\to\Cl F\) is stable if, for all
\(x\in\Cl E\) and \(b\in f(x)\) the set
\(\Eset{x'\subseteq x\St b\in f(x')}\) has a unique minimal element
(intuitively: the part of \(x\) that \(f\) has used for computing
\(b\)).

The category \(\COH\) is a model of \LL{} (in particular, it is a
resource category) and the exponential considered first by Girard
\(\Ocg E\) is defined as follows:
\(\Web{\Ocg E}=\{\Set{\List a1n} \St n\in\Nat\text{ and }\{\List
a1n\}\in\Cl E\}\).

The main feature of this exponential is that its Kleisli category
\(\Klcat\COH{\Ocg}\) is isomorphic to the category of coherence spaces
and stable functions by the following correspondence:
with any \(t\in\Klcat\COH\Ocg(E,F)=\Cl{\Limpl{\Ocg E}F}\) one
associates the stable function %
\(\Fun t:\Cl E\to\Cl F\) defined by %
\(\Fun t(x)=\Set{b\in\Web F\St\exists x_0\subseteq x\ (x_0,b)\in
  t}\) %
and the mapping \(t\mapsto\Fun t\) is a bijection between %
\(\Klcat\COH\Ocg(E,F)\) and the set of stable functions
\(\Cl E\to\Cl F\).

Such a \(t\in\Klcat\COH\Ocg(E,F)\) is linear if all its elements
\((x_0,b)\) are such that \(x_0\) is a singleton, so that these linear
stable maps are just the same thing as elements of \(\COH(E,F)\).
We use the linear algebraic notation \(\Matappa tx\) to denote the
application of such a linear \(t\) to \(x\in\Cl E\), that is
\(\Matappa tx=\Set{b\in\Web F\St \exists a\in x\ (a,b)\in t}\).

\subsection{Stability and differences}
Given \(u\in\Cl E\), one can define a coherence space \(E_u\) as
follows:
\(\Web{E_u}=\{a\in\Web E\St \forall a'\in u\ a\Scoh E{a'}\}
\subseteq\Web E\setminus u\) and \(a\Coh{E_u}a'\) iff \(a\Coh{E}a'\)
for all \(a,a'\in\Web{E_u}\).
Notice that
\(\Cl{E_u}=\{x\in\Cl E\St x\cap u=\emptyset\text{ and }x\cup u\in\Cl
E\}\).
If \(f:\Cl E\to\Cl F\) we can define \(\Delta_uf:\Cl{E_u}\to\Cl F\) by
\(\Delta_u f(x)=f(x\cup u)\setminus f(x)\).

\begin{proposition}
  \label{prop:stab-charact}
  Let \(f:\Cl E\to\Cl F\) be Scott continuous.
  The following properties are equivalent.
  \begin{enumerate}
  \item\label{prop:stab-char1} \(f\) is stable;
  \item\label{prop:stab-char2} \(f\) is conditionally multiplicative,
    that is: for all \(x,x'\in\Cl E\), if \(x\cup x'\in\Cl E\) then
    \(f(x)\cap f(x')\subseteq f(x\cap x')\);
  \item\label{prop:stab-char3} for all \(u\in\Cl E\) the function
    \(\Delta_u f\) is increasing.
  \end{enumerate}
\end{proposition}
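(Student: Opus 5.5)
We prove the cycle of implications \(\ref{prop:stab-char1}\Rightarrow\ref{prop:stab-char2}\Rightarrow\ref{prop:stab-char3}\Rightarrow\ref{prop:stab-char1}\). Throughout we use that \(f\) is increasing, being Scott continuous.

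\emph{From stability to conditional multiplicativity.} Let \(x,x'\in\Cl E\) with \(z=x\cup x'\in\Cl E\), and let \(b\in f(x)\cap f(x')\). Since \(b\in f(z)\) by monotonicity, stability applied at \(z\) gives a unique minimal \(x_0\subseteq z\) with \(b\in f(x_0)\). The sets \(\{y\subseteq x\St b\in f(y)\}\) and \(\{y\subseteq x'\St b\in f(y)\}\) are nonempty. Using Scott continuity, each contains a minimal element: take a finite \(y\subseteq x\) with \(b\in f(y)\) and shrink it. Call these minimal elements \(y_0\subseteq x\) and \(y_0'\subseteq x'\). Any \(y\subseteq y_0\) with \(b\in f(y)\) lies in \(\{y\subseteq x\St b\in f(y)\}\), so it equals \(y_0\). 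Hence \(y_0\) is also minimal in \(\{y\subseteq z\St b\in f(y)\}\), and the same holds for \(y_0'\). By uniqueness, \(y_0=x_0=y_0'\subseteq x\cap x'\), so \(b\in f(x\cap x')\) by monotonicity.

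\emph{From conditional multiplicativity to monotonicity of \(\Delta_uf\).} Let \(x\subseteq y\) in \(\Cl{E_u}\) and \(b\in f(x\cup u)\setminus f(x)\). Monotonicity gives \(b\in f(y\cup u)\), and it remains to show \(b\notin f(y)\). Suppose instead \(b\in f(y)\). Since \(y\cup u\in\Cl E\), item~\ref{prop:stab-char2} applied to \(x\cup u\) and \(y\) gives \(b\in f((x\cup u)\cap y)\). The key computation is \((x\cup u)\cap y=x\), which holds because \(y\cap u=\emptyset\) (as \(y\in\Cl{E_u}\)) and \(x\subseteq y\). This yields \(b\in f(x)\), a contradiction.

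\emph{From monotonicity of every \(\Delta_uf\) to stability.} Let \(b\in f(x)\). Minimal elements of \(\{x'\subseteq x\St b\in f(x')\}\) exist by the finiteness argument above, so it is enough to show that two minimal elements \(x_1,x_2\) coincide. First we prove that this set is closed under binary intersection, by deriving \(b\in f(x_1\cap x_2)\) from item~\ref{prop:stab-char3}. Set \(u=x_1\setminus x_2\), \(x'=x_1\cap x_2\) and \(y=x_2\). Then \(x',y\in\Cl{E_u}\): they are disjoint from \(u\), and their unions with \(u\) are contained in \(x\), hence are cliques. Also \(x'\subseteq y\), \(x'\cup u=x_1\) and \(y\cup u=x_1\cup x_2\). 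Suppose \(b\notin f(x')\). Then \(b\in f(x_1)\setminus f(x')=\Delta_uf(x')\subseteq\Delta_uf(y)=f(x_1\cup x_2)\setminus f(x_2)\), so \(b\notin f(x_2)\), a contradiction. Hence \(b\in f(x_1\cap x_2)\), and minimality forces \(x_1=x_1\cap x_2=x_2\).

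The main subtlety is this last step: choosing \(u\) so that the two cliques compared by \(\Delta_u f\) are legitimate elements of \(\Cl{E_u}\), and checking the side conditions that membership requires.
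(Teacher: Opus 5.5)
Your proof is correct and matches the paper's argument where the paper gives one. Your (2)\(\Rightarrow\)(3) step is the paper's argument verbatim, and your (3)\(\Rightarrow\)(1) step reuses the paper's (3)\(\Rightarrow\)(2) trick: take \(u=x_1\setminus x_2\), compare \(\Delta_u f(x_1\cap x_2)\) with \(\Delta_u f(x_2)\), and derive a contradiction. The difference is that the paper only proves (2)\(\Leftrightarrow\)(3) and cites (1)\(\Leftrightarrow\)(2) as well known, whereas you make the proposition self-contained by proving (1)\(\Rightarrow\)(2) and closing the cycle, using Scott continuity to guarantee that minimal elements exist.
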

\begin{proof}
  We only prove \cref{prop:stab-char2} \(\Equiv\)
  \cref{prop:stab-char3}, the equivalence between the two first items
  being well known.
  Assume first that \(f\) is conditionally multiplicative.
  Let \(x\subseteq x'\in\Cl{E_u}\) and let
  \(b\in\Delta_u f(x)=f(x\cup u)\setminus f(x)\).
  By monotonicity we have \(b\in f(x'\cup u)\) and assume that \(b\in f(x')\).
  Since \(x\cup u\cup x'=x'\cup u\in\Cl E\) and
  \(b\in f(x\cup u)\cap f(x')\), we have
  \(b\in f((x\cup u)\cap x')=f(x)\) which is a contradiction.
  It follows that \(b\notin f(x')\) and hence \(b\in\Delta_u f(x')\)
  as required.

  Assume now that \cref{prop:stab-char3} holds, let \(x,x'\in\Cl E\)
  be such that \(x\cup x'\in\Cl E\) and let \(b\in f(x)\cap f(x')\),
  we must prove that \(b\in f(x\cap x')\), so assume that this is not
  the case.
  Let \(u=x\setminus(x\cap x')\), we have \(b\in\Delta_u f(x\cap x')\)
  since \((x\cap x')\cup u=x\).
  Notice that \(x'\cap u=\emptyset\) and hence \(x'\in\Cl{E_u}\) and
  since \(x\cap x'\subseteq x'\) we must have \(b\in\Delta_u f(x')\)
  which implies \(b\notin f(x')\), contradiction.
  Hence we must have \(b\in f(x\cap x')\).
\end{proof}

This result provides a new understanding of stability, related to a
simple operation on functions similar to differentiation, but based on
differences rather than derivatives.
More specifically, given a coherence space \(E\) we can define a
coherence space \(\Sfun E\) by \(\Web{\Sfun E}=\{0,1\}\times\Web E\)
and \((i,a)\Coh{\Sfun E}(i',a')\) if \(a\Coh Ea'\) and
\(i\not=i'\Implies a\not=a'\).
This means that, up to a trivial order isomorphism,
\(\Cl{\Sfun E}=\{(x,u)\in\Cl E^2\St x\cap u=\emptyset\text{ and }x\cup
u\in\Cl E\}\) so that, in particular,
\(\Cl{E_u}=\Eset{x\St (x,u)\in\Cl{\Sfun E}}\).
If \(f:\Cl E\to\Cl F\) is any function, we can define a function
\begin{align*}
  \Dfun f:\Sfun E &\to\Sfun F\\
  (x,u)&\mapsto \Eset0\times f(x)\cup\Eset 1\times(f(x\cup u)\setminus f(x))
\end{align*}
that is \(\Dfun f(x,u)=(f(x),f(x\cup u)\setminus f(x))\) up to the
trivial iso mentioned above.

\begin{proposition}
  If \(f:\Cl E\to\Cl F\) is Scott continuous and stable then so is
  \(\Dfun f\) and \(\Dfun\) is a functor.
\end{proposition}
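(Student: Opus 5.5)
We have to show four things: \(\Dfun f\) is well defined as a map \(\Cl{\Sfun E}\to\Cl{\Sfun F}\), it is Scott continuous, it is stable, and \(\Dfun\) preserves identities and composition. Throughout we use the trivial iso \(\Cl{\Sfun E}\simeq\{(x,u)\St x\cap u=\emptyset,\ x\cup u\in\Cl E\}\). The order on the left is inclusion, which on pairs is componentwise inclusion.

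\emph{Well-definedness.} Let \((x,u)\in\Cl{\Sfun E}\). Then \(f(x)\) and \(f(x\cup u)\setminus f(x)\) are disjoint. Their union is \(f(x\cup u)\), because \(f(x)\subseteq f(x\cup u)\) by monotonicity, and \(f(x\cup u)\) is a clique. So \(\Dfun f(x,u)\in\Cl{\Sfun F}\).

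\emph{Monotonicity.} Take \((x,u)\subseteq(x',u')\). The first component \(f(x)\subseteq f(x')\) is clear. For the second, let \(b\in f(x\cup u)\setminus f(x)\). Then \(b\in f(x'\cup u')\) by monotonicity. We must show \(b\notin f(x')\). Suppose \(b\in f(x')\). We have \(x\cup u\cup x'\subseteq x'\cup u'\in\Cl E\), so \cref{prop:stab-charact} (conditional multiplicativity) gives \(b\in f((x\cup u)\cap x')\). Since \(u\cap x'\subseteq u'\cap x'=\emptyset\), we get \((x\cup u)\cap x'=x\), so \(b\in f(x)\), a contradiction.

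\emph{Scott continuity.} Let \(S\) be directed and set \(X=\bigcup_{(x,u)\in S}x\) and \(U=\bigcup u\). The union is \((X,U)\), and \(X\cup U\) is the directed union of the sets \(x\cup u\). Take \(b\) in the second component of \(\Dfun f(X,U)\), so \(b\in f(X\cup U)\setminus f(X)\). By continuity of \(f\), \(b\in f(x\cup u)\) for some \((x,u)\in S\). Also \(b\notin f(x)\), since \(f(x)\subseteq f(X)\). The first component follows directly from continuity of \(f\).

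\emph{Stability.} We prove conditional multiplicativity of \(\Dfun f\). Take \((x,u)\) and \((x',u')\) whose union is in \(\Cl{\Sfun E}\); this means \(x\cup x'\) and \(u\cup u'\) are disjoint and \(x\cup x'\cup u\cup u'\in\Cl E\). Their intersection is \((x\cap x',u\cap u')\). The first component follows from conditional multiplicativity of \(f\). For the second, let \(b\in(f(x\cup u)\setminus f(x))\cap(f(x'\cup u')\setminus f(x'))\).
\begin{itemize}
\item Conditional multiplicativity of \(f\) gives \(b\in f((x\cup u)\cap(x'\cup u'))\). By disjointness, \((x\cup u)\cap(x'\cup u')=(x\cap x')\cup(u\cap u')\).
\item We also need \(b\notin f(x\cap x')\). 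This holds by monotonicity, since \(b\notin f(x)\supseteq f(x\cap x')\).
\end{itemize}
Keeping track of the disjointness conditions to compute these intersections is the main bookkeeping point.

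\emph{Functoriality.} \(\Dfun\mathrm{id}=\mathrm{id}\) is immediate. For composition, let \(g:\Cl F\to\Cl G\) be stable and let \((y,v)=\Dfun f(x,u)\), so \(y=f(x)\) and \(y\cup v=f(x\cup u)\). Then
\[
\Dfun g(y,v)=\bigl(g(f(x)),\ g(f(x\cup u))\setminus g(f(x))\bigr)=\Dfun(g\circ f)(x,u).
\]
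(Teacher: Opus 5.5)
Your proof is correct and follows essentially the same route as the paper's. Monotonicity comes from the conditional-multiplicativity/difference characterization of \cref{prop:stab-charact}, stability from \((x\cup u)\cap(x'\cup u')=(x\cap x')\cup(u\cap u')\) together with monotonicity and stability of \(f\), and functoriality from \(f(x)\subseteq f(x\cup u)\). The differences are small: you spell out well-definedness and Scott continuity, which the paper calls trivial, and you inline the conditional-multiplicativity argument for monotonicity instead of citing item 3 of \cref{prop:stab-charact}, which directly covers growth in both components at once.
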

\begin{proof}
  The fact that \(\Dfun f\) is increasing comes from \cref{prop:stab-charact}.
  Scott continuity of \(\Dfun f\) trivially results from that of
  \(f\), so let us check that \(\Dfun f\) is stable.
  Let \((x,u),(x',u')\in\Cl{\Sfun E}\) be such that
  \((x,u)\cup(x',u')\in\Cl{\Sfun E}\); this means that
  \(x\cup u\cup x'\cup u'\in\Cl E\) and that
  \((x\cup x')\cap(u\cup u')=\emptyset\), so that
  \((x,u)\cup(x',u')=(x\cup x',u\cup u')\) and
  \((x,u)\cap(x',u')=(x\cap x',u\cap u')\).
  We have
  \begin{align*}
    \Dfun f(x,u)\cap\Dfun f(x',u')
    &=(f(x)\cap f(x'),(f(x\cup
      u)\setminus f(x))\cap(f(x'\cup u')\setminus f(x')))\\
    &=(f(x\cap x'),(f(x\cup
      u)\setminus f(x))\cap(f(x'\cup u')\setminus f(x')))
  \end{align*}
  so it suffices to prove that
  \begin{align*}
    (f(x\cup
    u)\setminus f(x))\cap(f(x'\cup u')\setminus f(x'))\subseteq
    f((x\cap x')\cup (u\cap u'))\setminus f(x\cap x')
  \end{align*}
  which results obviously from the fact that
  \((x\cup u)\cap(x'\cup u')=(x\cap x')\cup(u\cap u')\)
  and from the monotonicity and stability of \(f\). 

  Last if \(f:\Cl E\to\Cl F\) and \(g:\Cl F\to\Cl G\) are stable
  (actually we only need \(f\) to be increasing) we have
  \begin{align*}
    (\Dfun g\Comp\Dfun f)(x,u)
    &=\Dfun g(f(x),f(x\cup u)\setminus f(x))\\
    &=(g(f(x)),g(f(x)\cup(f(x\cup u)\setminus f(x)))\setminus g(f(x)))\\
    &=(g(f(x)), g(f(x\cup u))\setminus g(f(x)))
    \text{\quad because }f(x)\subseteq f(x\cup u)\\
    &=\Dfun(g\Comp f)(x,u)
  \end{align*}
  so that \(\Dfun\) is a functor.
\end{proof}

To conclude, we can observe that the functor \(\Dfun\) is a monad and
also a comonad.
The monad structure is given by the stable functions
\(\Dmonu:\Cl E\to\Cl{\Sfun E}\) (the unit of the monad) and
\(\Dmonm:\Cl{\Sfun^2E}\to\Cl{\Sfun E}\) (the multiplication) given by
\(\Dmonu(x)=(x,\emptyset)\) and
\(\Dmonm((x_{00},x_{01}),(x_{10},x_{11}))= (x_{00},x_{01}\cup
x_{10}\cup x_{11})\), using again the description of \(\Cl{\Sfun E}\)
as a set of pairs of cliques.
Similarly the comonad structure is given by the stable functions
\(\Dsum:\Cl{\Sfun E}\to\Cl E\) (counit) such that
\(\Dsum(x_0,x_1)=x_0\cup x_1\) and
\(\Dlift:\Cl{\Sfun E}\to\Cl{\Sfun^2E}\) (comultiplication) given by
\(\Dlift(x_0,x_1)=((x_0,\emptyset),(\emptyset,x_1))\).
These monad and comonad structures of \(\Dfun\) satisfy some
additional compatibility conditions which turns \(\Dfun\) into a
bimonad.
This additional structure on the functor \(\Dfun\) will be a byproduct
of our general theory.

Proving for instance that \(\Dmonm\) is a natural transformation
involves painful computations that we do not need to do because this
property is granted by the general theory of \DDIFF{} of which this
example is an instance.

\section{Categorical preliminaries}
\label{sec:categorical-prelim}

The purpose of this preliminary section is to gather, as compactly as
possible, all the categorical material needed to understand the paper
(most standard notions such as adjunctions, monads, monoidal
categories, are assumed to be known by the reader).
This section is admittedly a bit dry and we think that the reader
can skip it and use it when required: in the paper, at the places where
these notions are used, we have inserted precise references to the
corresponding definitions and results listed in the present section.

In all the monoidal categories considered in this section, we use
\(\ITens\) for the monoidal product, \(\Sone\) for the monoidal unit,
\(\Leftu\) and \(\Rightu\) for the left and right unitors and
\(\Assoc\) for the associator of type
\(\Tens{\Tensp{X_1}{X_2}}{X_3}\to\Tens{X_1}{\Tensp{X_2}{X_3}}\).
When the monoidal category is symmetric, we use \(\Sym\) for the
braiding, which consists of a natural transformation  
\(\Sym_{X_1,X_2}:\Tens{X_1}{X_2}\to\Tens{X_2}{X_1}\)
such that 
\(\Sym_{X_2,X_1}\Compl\Sym_{X_1,X_2}=\Id_{\Tens{X_1}{X_2}}\).

\begin{lemma}
  Let
  \(\cL\) be a monoidal category and let \(f\in\cL(X,\Sone)\) and
  \(g\in\cL(\Sone,Y)\).
  The following diagram commutes.
  \begin{center}
    \label{lemma:sone-tens-comp}
    \begin{tikzcd}
      \Tens X\Sone
      \ar[r,"\Tens fg"]\ar[d,swap,"\Rightu_X"]
      &
      \Tens\Sone Y
      \ar[d,"\Leftu_Y"]
      \\
      X
      \ar[r,"g\Compl f"']
      &
      Y
    \end{tikzcd}
  \end{center}
\end{lemma}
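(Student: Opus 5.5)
The plan is to factor \(\Tens fg\) through the unit object \(\Tens\Sone\Sone\). By functoriality of \(\ITens\), \(\Tens fg=\Tensp{\Sone}{g}\Compl\Tensp{f}{\Sone}\). Here, and below, \(\Sone\) standing as a morphism means \(\Id_\Sone\). This splits the square into two smaller squares, and each can be handled by naturality of one unitor.

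For the left part I use naturality of the right unitor \(\Rightu\) with respect to \(f\colon X\to\Sone\). This gives \(\Rightu_\Sone\Compl\Tensp f\Sone=f\Compl\Rightu_X\). For the right part I use naturality of the left unitor \(\Leftu\) with respect to \(g\colon\Sone\to Y\). This gives \(\Leftu_Y\Compl\Tensp\Sone g=g\Compl\Leftu_\Sone\).

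Composing these, I get
\begin{align*}
  \Leftu_Y\Compl\Tensp fg
  =\Leftu_Y\Compl\Tensp\Sone g\Compl\Tensp f\Sone
  =g\Compl\Leftu_\Sone\Compl\Tensp f\Sone .
\end{align*}
To finish, I need to replace \(\Leftu_\Sone\) by \(\Rightu_\Sone\). I then conclude with
\begin{align*}
  g\Compl\Rightu_\Sone\Compl\Tensp f\Sone=g\Compl f\Compl\Rightu_X .
\end{align*}

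The only non-formal ingredient is the equality \(\Leftu_\Sone=\Rightu_\Sone\colon\Tens\Sone\Sone\to\Sone\). This is a standard coherence fact for monoidal categories (Kelly's lemma), and it holds without any symmetry. I expect to cite it rather than reprove it. If a proof is wanted, it follows from the triangle axiom together with the fact that \(\Tens{\_}{\Sone}\) is faithful up to \(\Rightu\), the usual argument in Kelly's paper. This is the only step that is more than bookkeeping; everything else is two naturality squares.
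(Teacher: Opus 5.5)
Your proposal is correct and follows the paper's own argument: its proof is exactly ``naturality of \(\Leftu\) and \(\Rightu\), plus \(\Leftu_\Sone=\Rightu_\Sone\)''. You spell this out by splitting \(\Tens fg\) into \(\Tensp{\Sone}{g}\Compl\Tensp{f}{\Sone}\), and like the paper you cite the coherence fact \(\Leftu_\Sone=\Rightu_\Sone\) rather than reproving it.
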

\begin{proof}
  Use the naturality of \(\Leftu\) and \(\Rightu\), and the fact that
  \(\Leftu_\Sone=\Rightu_\Sone\).
\end{proof}

\subsection{Notations}
\label{sec:notations}
A finite multiset of elements of a set \(A\) is a function %
\(m:A\to\Nat\) such that \(\Supp m=\{a\in A\St m(a)\not=0\}\) is
finite.
We use \(\Msetempty\) for the empty multiset such that
\(\Supp\Msetempty=\emptyset\) and standard algebraic notations
\(m_1+m_2\) and \(\sum_{i\in I}m_i\) (for \(I\) finite) for the
pointwise addition of multisets.

We write \(m=\Mset{\List a1k}\in\Mfin A\) the multiset defined by
\(m(a)=\Card{\Eset{j\St a_j=a}}\), and given \(i\in I\), we define %
\(\Indactms im=\Mset{(i,a_1),\dots,(i,a_k)}\in\Mfin{I\times A}\).

If \(M=\Mset{\List m1k}\in\Mfin{\Mfin A}\), we set
\(\sum M=\sum_{i=1}^km_k\in\Mfin A\).

\subsection{Commutative and symmetric (co)monoids}
Since the notions of monoid and comonoid in a monoidal category will
be very useful in this paper, let us start with some reminders about
them.

\subsubsection{Monoids and commutative monoids}

\begin{definition}
  Let \(\cF\) be a monoidal category.
  A monoid in \(\cF\) is a structure
  \(M=(\Monoidca M,\Monoidu M,\Monoidm M)\) where \(\Monoidca M\) is
  an object of \(\cF\), \(\Monoidu M\in\cF(\Sone,\Monoidca M)\),
  \(\Monoidm M\in\cF(\Tens{\Monoidca M}{\Monoidca M},\Monoidca M)\)
  subject to the following commutations where we keep the associator
  implicit:
  \begin{center}
    \begin{tikzcd}
      \Tens{\Sone}{\Monoidca M}\ar[r,"\Tens{\Monoidu M}{\Monoidca M}"]
      \ar[dr,swap,"\Leftu_{\Monoidca M}"]
      &[1em]
      \Tens{\Monoidca M}{\Monoidca M}
      \ar[d,"\Monoidm M"]
      \\
      &
      \Monoidca M
      \end{tikzcd}
    \quad
    \begin{tikzcd}
      \Tens{\Monoidca M}{\Sone}\ar[r,"\Tens{\Monoidca M}{\Monoidu M}"]
      \ar[dr,swap,"\Rightu_{\Monoidca M}"]
      &[1em]
      \Tens{\Monoidca M}{\Monoidca M}
      \ar[d,"\Monoidm M"]
      \\
      &
      \Monoidca M
    \end{tikzcd}
      \quad
    \begin{tikzcd}
      \Monoidca M\ITens\Monoidca M\ITens\Monoidca M
      \ar[r,"\Tens{\Monoidca M}{\Monoidm M}"]
      \ar[d,swap,"\Tens{\Monoidm M}{\Monoidca M}"]
      &[1em]
      \Tens{\Monoidca M}{\Monoidca M}\ar[d,"\Monoidm M"]
      \\
      \Tens{\Monoidca M}{\Monoidca M}\ar[r,"\Monoidm M"']
      &
      \Monoidca M
    \end{tikzcd}
  \end{center}
  Assume moreover that \(\cF\) is a symmetric monoidal category.
  Then \(M\) is a commutative monoid if we also have
  \(\Monoidm M\Compl\Sym_{\Monoidca M,\Monoidca M}=\Monoidm M\).
  If \(M\) and \(N\) are 
  monoids, a morphism from \(M\) to \(N\) is an
  \(f\in\cF(\Monoidca M,\Monoidca N)\) such that the following diagrams
  commute
    \begin{center}
    \begin{tikzcd}
      &[-2em]\Sone\ar[dl,swap,"\Monoidu M"]\ar[dr,"\Monoidu N"]&[-2em]
      \\
      \Monoidca M\ar[rr,"f"]
      &&
      \Monoidca N
    \end{tikzcd}
    \quad
    \begin{tikzcd}
      \Tens{\Monoidca M}{\Monoidca M}\ar[r,"\Tens ff"]\ar[d,swap,"\Monoidm M"]
      &
      \Tens{\Monoidca N}{\Monoidca N}\ar[d,swap,"\Monoidm N"']
      \\
      \Monoidca M\ar[r,"f"']
      &
      \Monoidca M
    \end{tikzcd}
  \end{center}
  We use \(\CMON\cF\) for the category of commutative monoids in
  \(\cF\) and monoid morphisms.
\end{definition}

  We define dually (commutative) comonoids as a tuple
  \(C=(\Cmonoidca C,\Cmonoidu C,\Cmonoidm C)\),
  with \(\Cmonoidu C\in\cF(\Sone,\Cmonoidca C)\) and
  \(\Cmonoidm C\in\cF(\Cmonoidca C,\Tens{\Cmonoidca C}{\Cmonoidca
    C})\) subject to similar commutations.
  We use \(\CCMON\cF\) 
  for the category of commutative comonoids in
  \(\cF\) and comonoid morphisms.

\begin{lemma} \label{lemma:commutative-monoid-product}
  Assume that \(\cF\) is an SMC.
  Given objects \(M_1\) and \(M_2\) of \(\CMON\cF\), one can define a
  commutative monoid \(\Tens{M_1}{M_2}\) in \(\cF\) by taking
  \(\Monoidca{\Tens{M_1}{M_2}}=\Tens{\Monoidca{M_1}}{\Monoidca{M_2}}\)
  with a structure of commutative monoid whose unit and multiplication
  are defined as the following compositions of morphisms in \(\cF\),
  keeping the associator implicit:
  \begin{center}
    \begin{tikzcd}
      \Sone\ar[r,"\Leftu_\Sone=\Rightu_\Sone"]
      &[0.8em]
      \Tens\Sone\Sone\ar[r,"\Tens{\Monoidu{M_1}}{\Monoidu{M_2}}"]
      &[1.4em]
      \Tens{\Monoidca{M_1}}{\Monoidca{M_2}}
    \end{tikzcd}
  \end{center}
  \begin{center}
    \begin{tikzcd}
      \Monoidca{M_1}\ITens\Monoidca{M_2}\ITens\Monoidca{M_1}\ITens\Monoidca{M_2}
      \ar[r,"\Monoidca{M_1}\ITens\Sym\ITens\Monoidca{M_2}"]
      &[3em]
      \Monoidca{M_1}\ITens\Monoidca{M_1}\ITens\Monoidca{M_2}\ITens\Monoidca{M_2}
      \ar[r,"\Tens{\Monoidm{M_1}}{\Monoidm{M_2}}"]
      &[2em] \Tens{\Monoidca{M_1}}{\Monoidca{M_2}}
    \end{tikzcd}
  \end{center}
  Dually, given objects \(C_1\) and \(C_2\) of \(\CCMON\cF\), one can
  define a commutative comonoid \(\Tens{C_1}{C_2}\) in \(\cF\) by
  tacking
  \(\Cmonoidca{\Tens{C_1}{C_2}}=\Tens{\Cmonoidca{C_1}}{\Cmonoidca{C_2}}\),
  with unit \(\Cmonoidu{\Tens{C_1}{C_2}}\) and multiplication
  \(\Cmonoidm{\Tens{C_1}{C_2}}\) defined dually to the corresponding
  operations of the monoid \(\Tens{M_1}{M_2}\).
\end{lemma}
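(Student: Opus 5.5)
The plan is to check the three monoid axioms and commutativity directly in a strict presentation of \(\cF\). By Mac Lane's coherence theorem I can replace \(\cF\) by an equivalent strict symmetric monoidal category, so associators and unitors become identities. Then every composite of symmetries \(\Sym\) with the same underlying permutation of tensor factors is equal. Write \(A_k=\Monoidca{M_k}\), \(u=\Tens{\Monoidu{M_1}}{\Monoidu{M_2}}\) and \(m=\Tensp{\Monoidm{M_1}}{\Monoidm{M_2}}\Compl(A_1\ITens\Sym_{A_2,A_1}\ITens A_2)\).

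\emph{Unit laws.} For the left unit law, \(m\Compl\Tens u{\Tensp{A_1}{A_2}}\) equals \(m\Compl(\Monoidu{M_1}\ITens\Monoidu{M_2}\ITens A_1\ITens A_2)\). By naturality of \(\Sym\) I slide \(\Monoidu{M_2}\) past \(A_1\); this uses \(\Sym_{\Sone,A_1}=\Id\) in the strict setting, which is the coherence statement behind \cref{lemma:sone-tens-comp}. The composite becomes \(\Tens{(\Monoidm{M_1}\Compl\Tens{\Monoidu{M_1}}{A_1})}{(\Monoidm{M_2}\Compl\Tens{\Monoidu{M_2}}{A_2})}\), which is \(\Id\) by the unit laws of \(M_1\) and \(M_2\). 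The right unit law is symmetric.

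\emph{Associativity.} Both \(m\Compl(m\ITens A_1\ITens A_2)\) and \(m\Compl(A_1\ITens A_2\ITens m)\) will be rewritten, using naturality of \(\Sym\) to push all the multiplications to the end, into the form
\[
(\Monoidm{M_1}\Compl(\Monoidm{M_1}\ITens A_1))\ITens(\Monoidm{M_2}\Compl(\Monoidm{M_2}\ITens A_2))\Compl\sigma
\]
and the same with the other bracketing, respectively. Here \(\sigma\) is a symmetry isomorphism realizing the shuffle \(A_1A_2A_1A_2A_1A_2\mapsto A_1A_1A_1A_2A_2A_2\). The two \(\sigma\)'s induce the same permutation, so coherence makes them equal. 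Associativity of \(M_1\) and \(M_2\) then concludes.

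\emph{Commutativity.} I compute \(m\Compl\Sym_{\Tens{A_1}{A_2},\Tens{A_1}{A_2}}\). The composite symmetry \((A_1\ITens\Sym\ITens A_2)\Compl\Sym_{A_1A_2,A_1A_2}\) has the same permutation as \((\Sym_{A_1,A_1}\ITens\Sym_{A_2,A_2})\Compl(A_1\ITens\Sym\ITens A_2)\). By coherence, \(m\Compl\Sym\) therefore equals \(\Tensp{\Monoidm{M_1}\Compl\Sym}{\Monoidm{M_2}\Compl\Sym}\Compl(A_1\ITens\Sym\ITens A_2)\), which is \(m\) by commutativity of \(M_1\) and \(M_2\).

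The comonoid case is the exact dual: apply the above in \(\cF^{\mathrm{op}}\), which is again symmetric monoidal.

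The main obstacle is bookkeeping rather than any single step: making the appeal to the coherence theorem precise (equality of symmetry composites with equal permutations) and checking that the naturality moves are legitimate. The one nontrivial point is that the unitor identification \(\Leftu_\Sone=\Rightu_\Sone\) used in the definition of the unit is compatible with the symmetry. Since \(\Sym_{\Sone,\Sone}=\Id\) holds in any symmetric monoidal category, this is again covered by coherence.
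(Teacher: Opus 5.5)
The paper states this lemma without proof, treating it as a standard fact, so there is no argument of the authors to compare yours with. Your verification is correct and is the usual one: you use naturality of $\Sym$ to push the multiplications to the end, and coherence to identify the two order-preserving shuffles in the associativity check and the two symmetry composites in the commutativity check. You then conclude with the axioms of $M_1$ and $M_2$ componentwise, and obtain the comonoid case by working in $\Op\cF$.

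One small correction: in the strictified setting, $\Sym_{\Sone,A}=\Id$ follows from the symmetric monoidal axiom $\Leftu_A\Compl\Sym_{A,\Sone}=\Rightu_A$, together with $\Sym_{\Sone,A}=\Inv{\Sym_{A,\Sone}}$. It does not come from \cref{lemma:sone-tens-comp}, which involves only the unitors.
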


\begin{theorem} \label{thm:monoid-cocartesian} 
  Assume that \(\cF\) is an SMC.
  The category \(\CMON\cF\) is cocartesian:
  \begin{itemize}
  \item the initial object is \(\Sone\) with the obvious monoid
    structure induced by \(\Leftu\) and \(\Rightu\), and for any
    object \(M\) of \(\CMON\cF\), the unique element of
    \(\CMON\cF(\Sone,M)\) is \(\Monoidu M\);
  \item the coproduct of \(M_1\) and \(M_2\) is the monoid
    \(\Tens{M_1}{M_2}\) equipped with the injections
    \((\Tinj_i\in\CMON\cF(M_i,\Tens{M_1}{M_2}))_{i=1,2}\) where
    \(\Tinj_1=\Tensp{\Monoidca{M_1}}{\Monoidu{M_2}}
    \Compl\Inv{\Rightu_{\Monoidca{M_1}}}\) and similarly for
    \(\Tinj_2\);
  \item given \((f_i\in\CMON\cF(M_i,M))_{i=1,2}\), the unique
    \(f\in\CMON\cF(\Tens{M_1}{M_2},M)\) such that
    \((f\Compl\Tinj_i=f_i)_{i=1,2}\) is
    \(f=\Monoidm M\Compl\Tensp{f_1}{f_2}\) and we set
    \(\Cotuplet{f_1,f_2}=f\).
  \end{itemize}
  The category \(\CCMON\cF\) is cartesian: \label{thm:comonoid-cartesian}
  \begin{itemize}
  \item the terminal object is \(\Sone\) with the obvious comonoid
    structure induced by \(\Leftu\) and \(\Rightu\), and for any
    object \(C\) of \(\CCMON\cF\), the unique element of
    \(\CCMON\cF(C,\Sone)\) is \(\Cmonoidu C\);
  \item the product of \(C_1\) and \(C_2\) is the comonoid
    \(\Tens{C_1}{C_2}\) equipped with the projections
    \((\Tproj_i\in\CMON\cF(\Tens{C_1}{C_2},C_i))_{i=1,2}\) where
    \(\Tproj_1={\Rightu_{\Cmonoidca{C_1}}}
    \Compl\Tensp{\Cmonoidca{C_1}}{\Cmonoidu{C_2}}
    \) and similarly for \(\Tproj_2\);
  \item given \((f_i\in\CCMON\cF(C,C_i))_{i=1,2}\), the unique
    \(f\in\CCMON\cF(C,\Tens{C_1}{C_2})\) such that
    \((\Tproj_i\Compl f=f_i)_{i=1,2}\) is
    \(f=\Tensp{f_1}{f_2}\Compl\Cmonoidm C\) and we set
    \(\Tuplet{f_1,f_2}=f\).
  \end{itemize}
\end{theorem}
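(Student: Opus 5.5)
We prove the statement for \(\CMON\cF\); the claim about \(\CCMON\cF\) then follows by duality, applying the monoid case to \(\cF^{\mathrm{op}}\), which is again an SMC.

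\emph{Initial object.} The unit \(\Sone\) is a commutative monoid with multiplication \(\Leftu_\Sone=\Rightu_\Sone\) and unit \(\Id_\Sone\). The unit law of \(M\) shows that \(\Monoidu M\) is a monoid morphism \(\Sone\to M\). The unit-preservation diagram forces every monoid morphism \(\Sone\to M\) to equal \(\Monoidu M\), which gives uniqueness.

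\emph{Coproduct: the injections.} By \cref{lemma:commutative-monoid-product}, \(\Tens{M_1}{M_2}\) is a commutative monoid. We check that \(\Tinj_1\) is a monoid morphism.
\begin{itemize}
\item For units, \(\Tinj_1\Compl\Monoidu{M_1}=\Tens{\Monoidu{M_1}}{\Monoidu{M_2}}\Compl\Inv{\Rightu_\Sone}\), by naturality of \(\Rightu\).
\item For multiplication, we expand \(\Monoidm{\Tens{M_1}{M_2}}\Compl\Tens{\Tinj_1}{\Tinj_1}\). Naturality of the symmetry moves \(\Monoidu{M_2}\) past \(\Monoidca{M_1}\), using that \(\Sym\) on a factor \(\Sone\) reduces to unitors. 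The unit law of \(M_2\) then reduces \(\Monoidm{M_2}\Compl\Tens{\Monoidu{M_2}}{\Monoidu{M_2}}\) to \(\Monoidu{M_2}\). This is routine coherence bookkeeping.
\end{itemize}
The argument for \(\Tinj_2\) is symmetric.

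\emph{Coproduct: the mediating morphism.} Given \(f_i\in\CMON\cF(M_i,M)\), set \(f=\Monoidm M\Compl\Tensp{f_1}{f_2}\).
\begin{itemize}
\item Factorization: \(f\Compl\Tinj_1=\Monoidm M\Compl\Tens{f_1}{\Monoidu M}\Compl\Inv{\Rightu}\), since \(f_2\) preserves units. By the right unit law of \(M\) this equals \(f_1\).
\item \(f\) preserves units, because the \(f_i\) do and \(\Monoidm M\Compl\Tens{\Monoidu M}{\Monoidu M}=\Monoidu M\).
\item \(f\) preserves multiplication; this is the key step and the only place where commutativity of \(M\) is used. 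We must show
\[\Monoidm M\Compl\Tensp{f_1}{f_2}\Compl\Tensp{\Monoidm{M_1}}{\Monoidm{M_2}}\Compl(\Monoidca{M_1}\ITens\Sym\ITens\Monoidca{M_2})
=\Monoidm M\Compl\Tens ff .\]
Using that the \(f_i\) preserve multiplication, the left side becomes the fourfold product \(\Monoidm M^{(4)}\) applied to \(\Tens{f_1}{\Tens{f_1}{\Tens{f_2}{f_2}}}\), precomposed with the middle swap. The right side is \(\Monoidm M^{(4)}\Compl(f_1\ITens f_2\ITens f_1\ITens f_2)\). Naturality of \(\Sym\) moves the swap past the \(f_i\), and associativity together with commutativity gives \(\Monoidm M^{(4)}\Compl(M\ITens\Sym\ITens M)=\Monoidm M^{(4)}\). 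This closes the equation.
\end{itemize}

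\emph{Uniqueness.} Let \(g\in\CMON\cF(\Tens{M_1}{M_2},M)\) with \(g\Compl\Tinj_i=f_i\). The identity on \(\Tens{\Monoidca{M_1}}{\Monoidca{M_2}}\) factors as
\[\Monoidm{\Tens{M_1}{M_2}}\Compl\Tens{\Tinj_1}{\Tinj_2},\]
up to unitors: unfold the definitions, the symmetry involves a factor \(\Sone\), and the unit laws apply. Since \(g\) preserves multiplication,
\[g=\Monoidm M\Compl\Tens{(g\Compl\Tinj_1)}{(g\Compl\Tinj_2)}=\Monoidm M\Compl\Tens{f_1}{f_2}.\]

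The main obstacle is the multiplicativity of \(f\), which hinges on the four-fold commutativity/interchange identity in \(M\). Everything else is straightforward coherence manipulation.
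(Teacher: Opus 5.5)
Your proof is correct. The injections are monoid morphisms by naturality of \(\Sym\) and the unit laws. The cotupling \(\Monoidm M\Compl\Tensp{f_1}{f_2}\) is multiplicative precisely because associativity and commutativity of \(M\) give \(\Monoidm M\Compl\Tensp{\Monoidm M}{\Monoidm M}\Compl(\Monoidca M\ITens\Sym\ITens\Monoidca M)=\Monoidm M\Compl\Tensp{\Monoidm M}{\Monoidm M}\). Uniqueness follows from \(\Monoidm{\Tens{M_1}{M_2}}\Compl\Tensp{\Tinj_1}{\Tinj_2}=\Id\), using \(\Sym_{\Sone,\Sone}=\Id\). The comonoid half follows by duality: \(\CCMON\cF\) is the opposite of the category of commutative monoids in the opposite SMC, and this turns \(\Tinj_i\) and \(\Monoidm M\Compl\Tensp{f_1}{f_2}\) into \(\Tproj_i\) and \(\Tensp{f_1}{f_2}\Compl\Cmonoidm C\).

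The paper states this theorem without proof, treating it as classical, so there is no alternative route to compare. Your direct verification is the standard argument it omits.
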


\subsubsection{\SYmmonoid{}s}
\label{sec:symmetric-monoids}

The goal of this section is to define a notion of \emph{symmetric (co)monoid}
that we will apply in two different contexts.
\begin{enumerate}
\item In symmetric monoidal categories (typically, derived from a
  model of \(\LL\)); we describe in that way the standard notions of
  commutative (co)monoid
\item In categories of functors with various kinds of additional
  structures, where the monoidal product is functor composition; we
  describe in that way (co)monads equipped with a symmetry.
\end{enumerate}
For our purpose, it is convenient to develop a common language for
dealing with both situations which will be connected by means of a
functor as we shall see.
This notion of symmetric monoid was first introduced 
in \cite{Grandis01} in a different setting, but with the 
same underlying motivation.

\begin{definition}
  \label{def:symmonoid}
  A \emph{\Symmonoid{}} is a structure
  \(M=(\Monoidca M,\Monoidu M,\Monoidm M,\Monoidc M)\) where
  \((\Monoidca M,\Monoidu M,\Monoidm M)\) is a monoid
  and
  \(\Monoidc M\in\cF(\Tens{\Monoidca M}{\Monoidca M},\Tens{\Monoidca
    M}{\Monoidca M})\), subject to the following
  commutations where we keep the associator implicit:
  \begin{center}
    \begin{tikzcd}
      \Tens{\Monoidca M}{\Monoidca M}
      \ar[rr,"\Monoidc M"]\ar[dr,swap,"\Monoidm M"]
      &[-3em]&[-3em]
      \Tens{\Monoidca M}{\Monoidca M}\ar[dl,"\Monoidm M"]
      \\
      & {\Monoidca M} &
    \end{tikzcd}
    \quad
    \begin{tikzcd}
      \Tens\Sone{\Monoidca M}\ar[r,"\Leftu"]
      \ar[d,swap,"\Tens{\Monoidu M}{\Monoidca M}"]
      &[-1em]
      \Monoidca M\ar[r,"\Inv{\Rightu}"]
      &[-1em]
      \Tens{\Monoidca M}\Sone
      \ar[d,"\Tens{\Monoidca M}{\Monoidu M}"]
      \\
      \Tens{\Monoidca M}{\Monoidca M}
      \ar[rr,"\Monoidc M"']
      &&
      \Tens{\Monoidca M}{\Monoidca M}
    \end{tikzcd}
  \end{center}
  \begin{center}
    \begin{tikzcd}
      \Monoidca M\ITens\Monoidca M\ITens\Monoidca M
      \ar[r,"\Tens{\Monoidca M}{\Monoidc M}"]
      \ar[d,swap,"\Tens{\Monoidm M}{\Monoidca M}"]
      &[1em]
      \Monoidca M\ITens\Monoidca M\ITens\Monoidca M
      \ar[r,"\Tens{\Monoidc M}{\Monoidca M}"]
      &[1em]
      \Monoidca M\ITens\Monoidca M\ITens\Monoidca M
      \ar[d,"\Tens{\Monoidca M}{\Monoidm M}"]
      \\
      \Tens{\Monoidca M}{\Monoidca M}
      \ar[rr,"\Monoidc M"']
      &&
      \Tens{\Monoidca M}{\Monoidca M}
    \end{tikzcd}
  \end{center}
  \begin{center}
    \begin{tikzcd}
      \Monoidca M\ITens\Monoidca M\ITens\Monoidca M
      \ar[r,"\Tens{\Monoidca M}{\Monoidc M}"]
      \ar[d,swap,"\Tens{\Monoidc M}{\Monoidca M}"]
      &[1em]
      \Monoidca M\ITens\Monoidca M\ITens\Monoidca M
      \ar[r,"\Tens{\Monoidc M}{\Monoidca M}"]
      &[1em]
      \Monoidca M\ITens\Monoidca M\ITens\Monoidca M
      \ar[d,"\Tens{\Monoidc M}{\Monoidca M}"]
      \\
      \Monoidca M\ITens\Monoidca M\ITens\Monoidca M
      \ar[r,"\Tens{\Monoidca M}{\Monoidc M}"']
      &
      \Monoidca M\ITens\Monoidca M\ITens\Monoidca M
      \ar[r,"\Tens{\Monoidc M}{\Monoidca M}"']
      &
      \Monoidca M\ITens\Monoidca M\ITens\Monoidca M
    \end{tikzcd}
  \end{center}
  \begin{center}
    \begin{tikzcd}
      \Tens{\Monoidca M}{\Monoidca M}
      \ar[rr,"\Id"]\ar[dr,swap,"\Monoidc M"]
      &[-4.5em]&[-4.5em]
      \Tens{\Monoidca M}{\Monoidca M}
      \\
      & \Tens{\Monoidca M}{\Monoidca M}\ar[ur,swap,"\Monoidc M"] &
    \end{tikzcd}
  \end{center}
  The first diagram is the notion of commutativity we want to axiomatize.
  In the language of monads, the second and third diagrams express
  that \(\Monoidc M\) is a distributive law between \(\Monoidca M\)
  and itself; they express how \(\Monoidc M\) swaps the unit and
  multiplication of the monoid with arbitrary elements of the monoid.
  The fourth diagram is a Yang-Baxter commutation~\cite{JoyalStreet93}
  and the last one means that \(\Monoidc M\) is involutive%
  \footnote{This means that we are generalizing the symmetric monoidal
    situation and not the braided monoidal situation. This assumption
    as well as the ``commutativity'' commutations could be lightened
    if we were looking for the most general theory, but this is not
    the point of this section.}.

  If \(M\) and \(N\) are \Symmonoid{}s, a morphism from \(M\) to
  \(N\) is a morphism of monoid from $M$ to $N$ subject to the following 
  additional commutation.
  \begin{center}
    \begin{tikzcd}
      \Tens{\Monoidca M}{\Monoidca M}\ar[r,"\Tens ff"]\ar[d,swap,"\Monoidc M"]
      &
      \Tens{\Monoidca N}{\Monoidca N}\ar[d,"\Monoidc N"]
      \\
      \Tens{\Monoidca M}{\Monoidca M}\ar[r,"\Tens ff"']
      &
      \Tens{\Monoidca N}{\Monoidca N}
    \end{tikzcd}
  \end{center}
  We use \(\SMON\cF\) for the category defined in that way and one
  defines similarly \(\SCMON\cF\), the category of symmetric comonoids
  defined dually.
\end{definition}

Observe that any commutative monoid $M$ in a symmetric monoidal 
category is a symmetric monoid, taking $\Monoidc M$ to be the 
braiding $\Sym_{\Monoidca M, \Monoidca M}$ of the symmetric monoidal structure.
One defines in that way a full and faithful functor
from \(\CMON\cF\) to \(\SMON\cF\).
The main feature of \Symmonoid{}s is that for any \(n\in\Nat\), the
object \(\Tpower{\Monoidca M}n\) can be equipped with a canonical
structure of \Symmonoid{}.
This construction generalizes to symmetric monoids 
the product of commutative monoids given 
in~\cref{lemma:commutative-monoid-product}.
For instance, for \(n=2\), the unit, multiplication and symmetry of
this \Symmonoid{} are defined as the following compositions of morphisms
in \(\cF\), keeping the associator implicit:
\begin{center}
  \begin{tikzcd}
    \Sone\ar[r,"\Leftu"]
    &
    \Tens\Sone\Sone\ar[r,"\Tens{\Monoidu M}{\Monoidu M}"]
    &[1em]
    \Tens{\Monoidca M}{\Monoidca M}
  \end{tikzcd}
\end{center}
\begin{center}
  \begin{tikzcd}
    \Tpower{\Monoidca M}4
    \ar[r,"\Monoidca M\ITens\Monoidc M\ITens\Monoidca M"]
    &[3em]
    \Tpower{\Monoidca M}4
    \ar[r,"\Tens{\Monoidm M}{\Monoidm M}"]
    &[2em]
    \Tens{\Monoidca M}{\Monoidca M}
  \end{tikzcd}
\end{center}
\begin{center}
  \begin{tikzcd}
    \Tpower{\Monoidca M}4
    \ar[r,"\Monoidca M\ITens\Monoidc M\ITens\Monoidca M"]
    &[3em]
    \Tpower{\Monoidca M}4
    \ar[r,"\Tens{\Monoidc M}{\Monoidc M}"]
    &[2em]
    \Tpower{\Monoidca M}4
    \ar[r,"\Monoidca M\ITens\Monoidc M\ITens\Monoidca M"]
    &[3em]
    \Tpower{\Monoidca M}4
  \end{tikzcd}
\end{center}
Intuitively these operations are just those of the original monoid
\(M\), performed componentwise.
When \(M\) is a monad, this construction corresponds to the iterated
monad structure described in~\cite{Cheng11}.

One defines of course dually the notion of \Symcomonoid{}
\(C=(\Cmonoidca C,\Cmonoidu C,\Cmonoidm C,\Cmonoidc C)\), with similar
properties, so that \(\Tpower{\Cmonoidca C}n\) has canonically a
structure of \Symcomonoid{}.
We use \(\SCMON\cF\) for the corresponding categories (whose morphisms
are defined similarly to those of \(\SMON\cF\)).
%
\begin{definition}
  \label{def:symmetric-bimonoid}
  A \Symbimonoid{} is a structure
  \(M=(\Monoidca M,\Monoidu M,\Monoidm M,\Cmonoidu M,\Cmonoidm
  M,\Monoidc M)\) such that
  \begin{itemize}
  \item \(\Bmonoid M=(\Monoidca M,\Monoidu M,\Monoidm M,\Monoidc M)\)
    is a symmetric monoid;
  \item  \(\Bcmonoid M=(\Monoidca M,\Cmonoidu M,\Cmonoidm M,\Monoidc M)\)
    is a symmetric comonoid;
  \item \(\Monoidu M\in\SCMON\cF(\Sone,\Bcmonoid M)\) and
    \(\Monoidm M\in\SCMON\cF(\Tens{\Bcmonoid M}{\Bcmonoid M},\Bcmonoid
    M)\);
  \item \(\Cmonoidu M\in\SMON\cF(\Bmonoid M,\Sone)\) and
    \(\Cmonoidm M\in\SMON\cF(\Bmonoid M,\Tens{\Bmonoid M}{\Bmonoid
      M})\).
  \end{itemize}
  Spelling out the definitions, one observes that the two last
  conditions boil down to the following commutations.
\begin{center}
  \begin{tikzcd}
    \Sone\ar[rr,"\Id"]\ar[dr,swap,"\Monoidu M"]
    &[-2em]&[-2em]
    \Sone
    \\
    &\Monoidca M\ar[ru,swap,"\Cmonoidu M"]&
  \end{tikzcd}
  \quad
  \begin{tikzcd}
    \Sone\ar[d,swap,"\Monoidu M"]\ar[r,"\Inv\Leftu"]
    &
    \Tens\Sone\Sone\ar[d,"\Tens{\Monoidu M}{\Monoidu M}"]
    \\
    \Monoidca M\ar[r,"\Cmonoidm M"']
    &
    \Tens{\Monoidca M}{\Monoidca M}
  \end{tikzcd}
  \quad
  \begin{tikzcd}
    \Tens{\Monoidca M}{\Monoidca M}\ar[r,"\Monoidm M"]
    \ar[d,swap,"\Tens{\Cmonoidu M}{\Cmonoidu M}"]
    &
    \Monoidca M\ar[d,"\Cmonoidu M"]
    \\
    \Tens\Sone\Sone\ar[r,"\Leftu"']
    &
    \Sone
  \end{tikzcd}
\end{center}
\begin{center}
  \begin{tikzcd}
    \Tens{\Monoidca M}{\Monoidca M}\ar[r,"\Monoidm M"]
    \ar[d,swap,"\Tens{\Cmonoidm M}{\Cmonoidm M}"]
    &
    \Monoidca M\ar[r,"\Cmonoidm M"]
    &
    \Tens{\Monoidca M}{\Monoidca M}
    \\
    \Tpower{\Monoidca M}4
    \ar[rr,"\Monoidca M\ITens\Monoidc M\ITens\Monoidca M"']
    &&
    \Tpower{\Monoidca M}4
    \ar[u,swap,"\Tens{\Monoidm M}{\Monoidm M}"]
  \end{tikzcd}
\end{center}
\end{definition}

\begin{definition}
  Assume that \(\cF\) is symmetric (with braiding \(\Sym\)).
  A bicommutative bimonoid is a tuple
  \(M=(\Monoidca M,\Monoidu M,\Monoidm M,\Cmonoidu M,\Cmonoidm M)\)
  such that 
  \((\Monoidca M,\Monoidu M,\Monoidm M,\Cmonoidu M,\Cmonoidm
  M,\Sym_{\Monoidca M,\Monoidca M})\) is a \Symbimonoid.
\end{definition}

\subsubsection{Lax and oplax monoidal functor}

We now introduce monoidal functors, which are functors between 
monoidal categories that preserve the monoidal structures. 
In particular, it is well known that lax monoidal functors preserve 
monoids~\cite[Sec~6.2.]{Mellies09} and dually oplax monoidal 
functors preserve comonoids. We prove in this section 
(\cref{thm:symmonoid-strong-mondoidal-functor-image}) that similarly, 
strong monoidal functors preserve both \Symmonoid{}s, 
\Symcomonoid{}s and \Symbimonoid{}s. This result will be a key ingredient 
to build bimonoids in our categories of interest.

\begin{definition}
  \label{def:smf}
  A lax monoidal functor from a monoidal category $\smtuple$ to
  another monoidal category $\smtuplebis$ is a tuple
  $F = (\Monlaxca F, \Monlaxz, \Monlaxt)$ where
  $\Monlaxca F : \smcat \arrow \smcatbis$ is a functor,
  $\Monlaxz \in \smcatbis(\smonebis, \Monlaxca{F} \smone)$ is a
  morphism, and
  $\Monlaxt_{X,Y} \in \smcatbis(\Monlaxca{F} X \smbis \Monlaxca{F} Y,
  \Monlaxca{F} (X \sm Y))$ is a natural transformation such that
  $\Monlaxz$ and $\Monlaxt$ are compatible with the monoidal
  structures:
  \[
    \begin{tikzcd}
      {\Monlaxca{F} X \smbis \smonebis} &[1em] {\Monlaxca{F} X \smbis \Monlaxca{F} \smone } &[-1em] {\Monlaxca{F}(X \sm \smone)} \\
      {\Monlaxca{F} X} & & {\Monlaxca{F} X} \arrow["{\Monlaxca{F} X
        \smbis \Monlaxz}", from=1-1, to=1-2] \arrow["\Monlaxt",
      from=1-2, to=1-3] \arrow["{\smunitRbis}"', from=1-1, to=2-1]
      \arrow["{\Monlaxca{F} \smunitR}", from=1-3, to=2-3]
      \arrow["{\Id}"', from=2-1, to=2-3]
    \end{tikzcd} \
    \begin{tikzcd}
      {\smonebis \smbis \Monlaxca{F} X} &[1em] {\Monlaxca{F} \smone \smbis \Monlaxca{F} X} &[-1em] {\Monlaxca{F}(\smone \sm X)} \\
      {\Monlaxca{F} X} & & {\Monlaxca{F} X} \arrow["{\Monlaxz \smbis
        \Monlaxca{F} X}", from=1-1, to=1-2] \arrow["\Monlaxt",
      from=1-2, to=1-3] \arrow["{\smunitLbis}"', from=1-1, to=2-1]
      \arrow["{\Monlaxca{F} \smunitL}", from=1-3, to=2-3]
      \arrow["{\Id}"', from=2-1, to=2-3]
    \end{tikzcd}
  \] 
  \[ 
    \begin{tikzcd}
      {(\Monlaxca{F} X \smbis \Monlaxca{F} Y) \smbis \Monlaxca{F} Z}
      &[1em] {\Monlaxca{F} (X \sm Y) \smbis \Monlaxca{F} Z} &
      {\Monlaxca{F} ((X \sm Y) \sm Z)} \\
      {\Monlaxca{F} X \smbis (\Monlaxca{F} Y \smbis \Monlaxca{F} Z)} &
      {\Monlaxca{F} X \smbis \Monlaxca{F}(Y \sm Z)} & {\Monlaxca{F} (X
        \sm (Y \sm Z))} \arrow["{\smassocbis}"', from=1-1, to=2-1]
      \arrow["{\Monlaxca{F} \smassoc}", from=1-3, to=2-3]
      \arrow["{\Monlaxt \smbis \Monlaxca{F} Z}", from=1-1, to=1-2]
      \arrow["\Monlaxt", from=1-2, to=1-3] \arrow["{\Monlaxca{F} X
        \smbis \Monlaxt}"', from=2-1, to=2-2] \arrow["\Monlaxt"',
      from=2-2, to=2-3]
    \end{tikzcd}
  \]  
  A lax monoidal functor between two symmetric monoidal categories is
  \emph{symmetric} if its lax monoidal structure is compatible with
  the symmetries:
  \[
    \begin{tikzcd}
      {\Monlaxca{F} X \smbis \Monlaxca{F} Y} & {\Monlaxca{F}(X \sm Y)} \\
      {\Monlaxca{F} Y \smbis \Monlaxca{F} X} & {\Monlaxca{F}(Y \sm X)}
      \arrow["{\Monlaxt_{X, Y}}", from=1-1, to=1-2]
      \arrow["{\smsymbis_{\Monlaxca{F} X, \Monlaxca{F} Y}}"',
      from=1-1, to=2-1] \arrow["{\Monlaxt_{Y, X}}"', from=2-1, to=2-2]
      \arrow["{\Monlaxca{F} \smsym_{X, Y}}", from=1-2, to=2-2]
    \end{tikzcd}
  \]	
\end{definition}

Observe that a (commutative) monoid $M$ in a (symmetric) monoidal
category $\cL$ is exactly the same thing as a lax (symmetric) monoidal
functor $M : \One \arrow \cL$, where $\One$ is the category with one
object and one morphism.

\begin{definition}
  \label{def:monoidal-natural-trans}
  A monoidal natural transformation between two monoidal functors
  $(\Monlaxca{F}, \Monlaxz_{F}, \Monlaxt_{F})$ and
  $(\Monlaxca{G}, \Monlaxz_{G}, \Monlaxt_{G})$ is a natural
  transformation $\alpha : \Monlaxca F \Natto \Monlaxca G$ such that
  the following diagrams commutes
  \[
    \begin{tikzcd}
      {\smonebis} & {\smonebis} \\
      {\Monlaxca{F} \smone} & {G \smone}
      \arrow["{\Monlaxz_{\Monlaxca{F}}}"', from=1-1, to=2-1]
      \arrow["{\Monlaxz_{G}}", from=1-2, to=2-2]
      \arrow["{\alpha_{\smone}}"', from=2-1, to=2-2]
      \arrow["{\Id_{\smonebis}}", from=1-1, to=1-2]
    \end{tikzcd} \qquad
    \begin{tikzcd}
      {\Monlaxca{F} A \smbis \Monlaxca{F} B} & {\Monlaxca{F}(A \sm B)} \\
      {G A \smbis G B} & {G(A \sm B)} \arrow["{\alpha \smbis
        \alpha}"', from=1-1, to=2-1] \arrow["\Monlaxt_{\Monlaxca{F}}",
      from=1-1, to=1-2] \arrow["\Monlaxt_{G}"', from=2-1, to=2-2]
      \arrow["\alpha", from=1-2, to=2-2]
    \end{tikzcd}
  \]
\end{definition}

For every symmetric monoidal categories 
$\smcat$ and $\smcatbis$,
let $\FUNMONLAX{\smcat}{\smcatbis}$ be the category 
whose objects are the symmetric monoidal functors
$\smcat \arrow \smcatbis$, and 
whose morphisms are the monoidal natural transformations.
In particular, the category $\CMON \cL$ of commutative monoids of a
SMC $\cL$ is trivially isomorphic to $\FUNMONLAX{1}{\cL}$.

\begin{definition}
  Dually, an oplax monoidal functor from a monoidal category
  $\smtuple$ to another monoidal category $\smtuplebis$ is a tuple
  $F = (\Monoplaxca F, \Monoplaxz, \Monoplaxt)$ where
  $\Monoplaxca F : \smcat \arrow \smcatbis$ is a functor,
  $\Monoplaxz \in \smcatbis(\Monlaxca{F} \smone, \smonebis)$ is a
  morphism, and
  $\Monoplaxt_{X,Y} \in \smcatbis(\Monoplaxca{F} (X \sm Y),
  \Monoplaxca{F} X \smbis \Monlaxca{F} Y)$ is a natural transformation
  such that $\Monoplaxz$ and $\Monoplaxt$ are compatible with the
  monoidal structure (the commutations are dual to the commutations of
  lax monoidal functors given in \cref{def:smf}).
\end{definition}

Every definition on lax monoidal functors can be adapted to oplax
monoidal functors.
We use $\FUNMONOPLAX{\smcat}{\smcatbis}$ for the category whose
objects are the oplax symmetric monoidal functors from $\smcat$ to
$\smcatbis$, and whose morphisms are the monoidal natural
transformations.
Observe that the category $\CCMON{\cL}$ of commutative comonoids is
trivially isomorphic to $\FUNMONOPLAX{1}{\cL}$.

\begin{definition} \label{def:strong-monoidal-functor}
  A strong monoidal functor is a lax and an oplax 
  monoidal functor whose lax and oplax structures are inverse of each other.
  A strong monoidal functor $F$ is \emph{strict}
  if $\smonebis = \Monlaxca{F} \smone$, and for all objects $X$ and $Y$
  $\Monlaxca{F} X \smbis \Monlaxca{F} Y =
    \Monlaxca{F} (X \sm Y)$, and if the lax and oplax structure 
    are the identity morphisms.
\end{definition}

A crucial property of strong monoidal functors is that they preserve 
both symmetric monoids, symmetric comonoids
and symmetric bimonoids. 

\begin{theorem}
  \label{thm:symmonoid-strong-mondoidal-functor-image}
  Let \(\cL\) and \(\cF\) be monoidal categories and let
  \(F:\cL\to\cF\) be a functor equipped with a strong monoidal
  structure given by \(\Funmonoidz\in\cF(\Sone,F(\Sone))\) and
  \(\Funmonoidt_{U,V}\in\cF(\Tens{F(U)}{F(V)},F(\Tens UV))\).
  If \(M\) is a \Symmonoid{} of \(\cL\) then \(F(\Monoidca M)\) can be
  canonically equipped with a \Symmonoid{} structure; we use simply
  \(F(M)\) for the corresponding \Symmonoid{}.
  Moreover, this correspondence defines a functor
  \(\SMON\cL\to\SMON\cF\). A similar result holds 
  for \Symcomonoid{}s and \Symbimonoid{}s.
\end{theorem}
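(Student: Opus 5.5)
We transport each structure map along the strong monoidal structure of \(F\), writing \(\Funmonoidt^{-1}\) and \(\Funmonoidz^{-1}\) for the inverses given by the oplax part. Given a \Symmonoid{} \(M\) in \(\cL\), we set \(\Monoidca{F(M)}=F(\Monoidca M)\) and
\begin{align*}
  \Monoidu{F(M)}&=F(\Monoidu M)\Compl\Funmonoidz,\qquad
  \Monoidm{F(M)}=F(\Monoidm M)\Compl\Funmonoidt_{\Monoidca M,\Monoidca M},\\
  \Monoidc{F(M)}&=\Inv{(\Funmonoidt_{\Monoidca M,\Monoidca M})}\Compl F(\Monoidc M)\Compl\Funmonoidt_{\Monoidca M,\Monoidca M}.
\end{align*}
The unit and associativity axioms for \((F(\Monoidca M),\Monoidu{F(M)},\Monoidm{F(M)})\) are the standard fact that lax monoidal functors preserve monoids, and we only recall it. For the symmetric part, we first observe that \(\Monoidm{F(M)}\Compl\Monoidc{F(M)}=F(\Monoidm M\Compl\Monoidc M)\Compl\Funmonoidt=\Monoidm{F(M)}\), because the inverse of \(\Funmonoidt\) cancels. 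Involutivity follows in the same way from \(\Funmonoidt\Compl\Inv{\Funmonoidt}=\Id\).

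The remaining axioms are the unit-swap square, the distributivity of \(\Monoidc{}\) over \(\Monoidm{}\), and Yang--Baxter. They involve three-fold tensors, and for these we introduce the canonical iso
\[
  \Funmonoidt^{(3)}=\Funmonoidt_{\Monoidca M\ITens\Monoidca M,\Monoidca M}\Compl\Tensp{\Funmonoidt}{F\Monoidca M}
\]
from \(F\Monoidca M\ITens F\Monoidca M\ITens F\Monoidca M\) to \(F(\Tpower{\Monoidca M}3)\). The associativity coherence of \(\Funmonoidt\) shows that this iso can equally be written using \(\Tens{F\Monoidca M}{\Funmonoidt}\), up to \(F(\Assoc)\).

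The key lemma will be the conjugation identities
\[
  \Funmonoidt^{(3)}\Compl\Tensp{\Monoidc{F(M)}}{F\Monoidca M}=F\Tensp{\Monoidc M}{\Monoidca M}\Compl\Funmonoidt^{(3)}
\]
and its analogue \(\Funmonoidt^{(3)}\Compl\Tensp{F\Monoidca M}{\Monoidc{F(M)}}=F\Tensp{\Monoidca M}{\Monoidc M}\Compl\Funmonoidt^{(3)}\). Both follow from the definition of \(\Monoidc{F(M)}\) together with the naturality of \(\Funmonoidt\). With these identities, every composite in the axiom diagrams for \(F(M)\) becomes \(F\) applied to the corresponding composite for \(M\), conjugated by \(\Funmonoidt^{(3)}\). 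The diagrams therefore commute because they commute in \(\cL\). The unit-swap square is handled the same way, additionally using the unit coherences of \(\Funmonoidz\) with \(\Funmonoidt\) against \(\Leftu\) and \(\Rightu\).

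For functoriality, a morphism \(f\in\SMON\cL(M,N)\) is sent to \(F(f)\). Preservation of units and multiplications is again standard, and compatibility with \(\Monoidc{}\) follows from naturality of \(\Funmonoidt\) and of its inverse, since \(\Tens{F f}{F f}\) is conjugate to \(F\Tensp ff\). Identities and composition are preserved because \(F\) is a functor.

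\Symcomonoid{}s are the dual case, using \(\Inv{\Funmonoidt}\) and \(\Inv{\Funmonoidz}\) as the oplax structure. For a \Symbimonoid{}, the monoid and comonoid built on \(F(\Monoidca M)\) share the same \(\Monoidc{F(M)}\). This shared symmetry is well defined precisely because \(\Funmonoidt\) is invertible, and it is the point where strongness rather than mere laxness is needed. The compatibility squares are then checked like the others. The interesting one is the four-fold square, where \(\Monoidca M\ITens\Monoidc M\ITens\Monoidca M\) must be transported through a four-fold version of \(\Funmonoidt^{(3)}\). I expect this bookkeeping with the iterated isomorphisms \(\Funmonoidt^{(n)}\) (\(n=3,4\)) and their independence from bracketing to be the main obstacle. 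I would handle it once and for all by a coherence argument for strong monoidal functors, and then reduce every diagram to its image under \(F\) conjugated by \(\Funmonoidt^{(n)}\).
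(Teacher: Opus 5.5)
Your proposal is correct and is essentially the paper's proof: the paper defines the unit, multiplication and symmetry of \(F(M)\) exactly as you do (the symmetry being \(F(\Monoidc M)\) conjugated by \(\Funmonoidt\)) and calls the remaining checks routine diagrammatic computations, which your conjugation argument via the iterated isomorphisms built from \(\Funmonoidt\) simply spells out. One minor correction: strongness is not first needed in the bimonoid case, because invertibility of \(\Funmonoidt\) is already required to define the symmetry of \(F(M)\) for a single symmetric monoid.
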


\begin{proof}
  One defines \(\Monoidu{F(M)}\), \(\Monoidm{F(M)}\) and
  \(\Monoidc{F(M)}\) as the following compositions of morphisms in
  \(\cF\)
  \begin{center}
    \begin{tikzcd}
      \Sone\ar[r,"\Funmonoidz"]
      &
      F(\Sone)\ar[r,"F(\Monoidu M)"]
      &
      F(\Monoidca M)
    \end{tikzcd}
    \quad
    \begin{tikzcd}
      \Tens{F(\Monoidca M)}{F(\Monoidca M)}\ar[r,"\Funmonoidt"]
      &[-1em]
      F(\Tens{\Monoidca M}{\Monoidca M})\ar[r,"F(\Monoidm M)"]
      &[1em]
      F(\Monoidca M)
    \end{tikzcd}
  \end{center}
  \begin{center}
    \begin{tikzcd}
      \Tens{F(\Monoidca M)}{F(\Monoidca M)}
      \ar[r,"\Funmonoidt"]
      &
      F(\Tens{\Monoidca M}{\Monoidca M})
      \ar[r,"F(\Monoidc M)"]
      &
      F(\Tens{\Monoidca M}{\Monoidca M})
      \ar[r,"\Invp\Funmonoidt"]
      &
      \Tens{F(\Monoidca M)}{F(\Monoidca M)}
    \end{tikzcd}
  \end{center}
  The verification of the required commutations are routine
  diagrammatic computations.
\end{proof}

\subsubsection{Monads and comonads}

\label{sec:monads}

As promised in \cref{sec:symmetric-monoids}, we now 
describe monads, comonads and bimonads as monoids, comonoids and 
bimonoids.
If \(\cC\) and \(\cD\) are two categories we use \(\FUN\cC\cD\) for
the category of functors \(\cC\to\cD\), with natural transformations
as morphisms.
For any category \(\cC\), the category \(\FUN\cC\cC\) is a strict
monoidal category with functor composition as monoidal product and the
identity functor as monoidal unit.
The monoidal product of two natural transformations
\((\phi_i\in\FUN\cC\cC(F_i,G_i))\) is their ``horizontal composition''
\(\phi_1\Compv\phi_2\in\FUN\cC\cC(F_1\Compfun F_2,G_1\Compfun
G_2)\) whose definition is standard,
\(\phi_1\Compv\phi_2=(\phi_1G_2)\Compl (F_1\phi_2)=(G_1\phi_2)\Compl(\phi_1F_2)\)
(by naturality).

\begin{definition} \label{def:monad}
A monad on \(\cL\) is a monoid \(M=(\Monadca M,\Monadu M,\Monadm M)\)
in the category \(\FUN\cL\cL\) and similarly, a comonad is a comonoid
\(C=(\Cmonadca C,\Cmonadu C,\Cmonadm C)\) in \(\FUN\cL\cL\).
\end{definition}

We use the standard notations $\Klcat \cL M$ for the Kleisli category
of $M$ and $\Emcat \cL M$ for its Eilenberg-Moore category (also known
as the category of $M$-algebras).
We use the same notations for comonads.

\begin{definition}
  \label{def:sym-monad}
  A \Symmonad{} on \(\cL\) is a \Symmonoid{}
  in \(\FUN\cL\cL\),
  see \cref{def:symmonoid}, and a \Symcomonad{} is a \Symcomonoid{} in
  \(\FUN\cL\cL\).
  A \Symbimonad{} is a \Symbimonoid{} in \(\FUN\cL\cL\), see
  \cref{def:symmetric-bimonoid}.
\end{definition}

As explained in \cref{sec:symmetric-monoids} and \cite{Cheng11},
each symmetric monad $M$ can be iterated to give a symmetric monad $M^n$.
And similarly for symmetric comonads and symmetric bimonads.

\begin{remark}
  Our symmetric bimonads
  \(M=(\Monadca M,\Monadu M,\Monadm M,\Cmonadu M,\Cmonadm M,\Monadc
  M)\) are exactly the $\Monadc M$-bimonads of
  \cite{MesablishviliWisbauer2011} in which $\Monadc M$ is involutive
  and satisfies the Yang Baxter equation.
\end{remark}

Monads and comonads on a monoidal category $\cL$ can be equipped with
an additional structure that expresses compatibility between the monad
and the monoidal structure.
Observe that the composition of two (op)lax monoidal functors has a
canonical (op)lax monoidal structure, and thus $\FUNMONLAX{\cL}{\cL}$
and $\FUNMONOPLAX{\cL}{\cL}$ are monoidal categories with the same
monoidal structure as $\FUN \cL \cL$.

\begin{definition}
  A lax symmetric monoidal monad on a SMC $\cL$ is a monoid in
  $\FUNMONLAX{\cL}{\cL}$.
  A lax symmetric monoidal comonad on $\cL$ is a comonoid in
  $\FUNMONLAX{\cL}{\cL}$.

  An oplax symmetric monoidal monad on $\cL$ is a monoid in
  $\FUNMONOPLAX{\cL}{\cL}$.
  An oplax symmetric monoidal comonad on $\cL$ is a comonoid in
  $\FUNMONOPLAX{\cL}{\cL}$.
\end{definition}

These compatibility conditions allow extending the monoidal 
product of $\cL$ to the Kleisli category of the (co)monad, 
or its Eilenberg-Moore category.

\begin{theorem} \label{thm:sm-extension} The following assertions hold.
  \begin{enumerate}
    \item If $M$ is a lax symmetric monoidal monad on $\cL$, then
    the Kleisli category $\Klcat{\cL}{M}$ of $M$ is symmetric 
    monoidal (\cite[p.30]{Day70}).
    \item If $C$ is a lax symmetric monoidal comonad on $\cL$, then
    the Eilenberg-Moore category $\Emcat{\cL}{C}$ is symmetric 
    monoidal \cite[Sec.~3.4]{Wisbauer08}.
    \item If $M$ is an oplax symmetric monoidal monad on $\cL$, then
    the Eilenberg-Moore $\Emcat{\cL}{M}$ of $M$ is symmetric 
    monoidal \cite[Prop.~1.4]{Moerdijk09}.
    \item If $C$ is an oplax symmetric monoidal comonad on $\cL$, then
    the Kleisli category $\Klcat{\cL}{C}$ of $C$ is symmetric 
    monoidal.
  \end{enumerate}
\end{theorem}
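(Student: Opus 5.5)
The first three items are cited from the literature (Day, Wisbauer, Moerdijk), so the plan is to prove the fourth, the only one without a reference. I will derive it by duality from the first. Write \(C=(\Cmonadca C,\Cmonadu C,\Cmonadm C)\) for the comonad with oplax structure \(\Monoplaxz\in\cL(C\Sone,\Sone)\) and \(\Monoplaxt_{X,Y}\in\cL(C(\Tens XY),\Tens{CX}{CY})\). Passing to \(\cL^{\mathrm{op}}\), which is again symmetric monoidal with the same tensor, \(C\) becomes a monad. Its oplax structure becomes a lax structure, and the comonoid axioms in \(\FUNMONOPLAX\cL\cL\) become the monoid axioms in \(\FUNMONLAX{\cL^{\mathrm{op}}}{\cL^{\mathrm{op}}}\). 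Item 1 then makes \(\Klcat{\cL^{\mathrm{op}}}{C}\) symmetric monoidal. Since \(\Klcat{\cL}{C}=(\Klcat{\cL^{\mathrm{op}}}{C})^{\mathrm{op}}\), and the opposite of an SMC is an SMC, we are done.

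For readability I will also spell out the structure directly. On objects the tensor is that of \(\cL\). For Kleisli morphisms \(f\in\cL(CX_1,Y_1)\) and \(g\in\cL(CX_2,Y_2)\) we set
\[
  f\ITens_C g=\Tensp fg\Compl\Monoplaxt_{X_1,X_2}\in\cL(C(\Tens{X_1}{X_2}),\Tens{Y_1}{Y_2}).
\]
The unitors, associator and braiding of \(\cL\) are lifted along the free functor \(h\mapsto h\Compl\Cmonadu C\). Functoriality of \(\ITens_C\) has two parts:
\begin{itemize}
\item Identities: \(\Cmonadu C\ITens_C\Cmonadu C=\Cmonadu C\), which is the counit condition saying \(\Cmonadu C\) is a monoidal natural transformation (\cref{def:monoidal-natural-trans}).
\item Composition: the Kleisli composite \(g\circ f=g\Compl Cf\Compl\Cmonadm C\) must be preserved. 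This uses, in order, monoidality of \(\Cmonadm C\), naturality of \(\Monoplaxt\), and functoriality of \(C\).
\end{itemize}

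Naturality of the lifted coherence isomorphisms with respect to \(\ITens_C\) comes from three facts: naturality of the unitors, associator and braiding in \(\cL\); the oplax unit and associativity diagrams (dual to \cref{def:smf}); and the symmetry axiom for \(\Monoplaxt\). The pentagon, triangle and hexagon hold because the free functor \(\cL\to\Klcat\cL C\) preserves composition and the diagrams already commute in \(\cL\).

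The main obstacle is the compatibility of \(\ITens_C\) with Kleisli composition. This is where both layers of structure interact: the comultiplication \(\Cmonadm C\) of the comonad and the oplax comparison \(\Monoplaxt\), through the monoidality square \(\Monoplaxt\Compl\Cmonadm C=(\Cmonadm C\ITens\Cmonadm C)\Compl\Monoplaxt\Compl C\Monoplaxt\), suitably composed. The duality argument handles this automatically, so I would present the proof that way and leave the explicit formulas as a remark.
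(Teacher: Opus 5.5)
Your proof is correct. The paper gives no argument for this theorem: items 1--3 are delegated to the cited references, and item~4 is stated with neither a citation nor a proof. Your reduction of item~4 to item~1 through $\cL^{\mathrm{op}}$ therefore supplies exactly the justification the paper leaves implicit.

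Write $\psi,\psi_0$ for the oplax structure and $\delta,\varepsilon$ for the comultiplication and counit. All three ingredients of your reduction check out:
\begin{itemize}
\item Comonoids in oplax symmetric monoidal endofunctors of $\cL$ are monoids in lax symmetric monoidal endofunctors of $\cL^{\mathrm{op}}$. In particular, the composite oplax structure $\psi_{CX,CY}\circ C\psi_{X,Y}$ on $CC$ dualizes to the composite lax structure.
\item $\Klcat{\cL^{\mathrm{op}}}{C}=(\Klcat{\cL}{C})^{\mathrm{op}}$ holds on the nose, including composition.
\item Dualizing the tensor of item~1 gives back your formula $(f\otimes g)\circ\psi_{X_1,X_2}$.
\end{itemize}
The same trick also derives item~3 from item~2. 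An algebra $MX\to X$ is a coalgebra for $M$ viewed as a comonad on $\cL^{\mathrm{op}}$, and $\Emcat{\cL}{M}=(\Emcat{\cL^{\mathrm{op}}}{M})^{\mathrm{op}}$. So only items 1 and 2 are genuinely independent.

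There are two slips in your explicit remark; neither affects the duality proof.

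First, the monoidality square you display for $\delta$ does not typecheck as written, because the $\delta$'s are on the wrong sides. The condition actually used in the composition check, together with naturality of $\psi$, is
\[
  \psi_{CX,CY}\circ C\psi_{X,Y}\circ\delta_{X\otimes Y}
  =(\delta_X\otimes\delta_Y)\circ\psi_{X,Y}.
\]

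Second, naturality of the lifted unitors needs more than the oplax unit diagrams, which are phrased with $\psi_0\colon C\Sone\to\Sone$. It also uses the nullary half of monoidality of the counit, $\varepsilon_{\Sone}=\psi_0$, because $\varepsilon_{\Sone}$ is the Kleisli identity on $\Sone$. This holds automatically since $C$ is a comonoid of oplax functors, so it is an omission from your list of facts rather than a gap.
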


\begin{remark}
\cref{thm:sm-extension} can be improved into a bijection between 
lax (resp.~oplax) monoidal structures and 
\emph{extensions} (resp.~\emph{liftings}) to the
Kleisli category (resp.~the Eilenberg-Moore category) of the
monad, see \cite[Thm.~7.1]{Moerdijk09}. But we will not use 
this bijection here.
\end{remark}

For example, the monoidal product of two objects in the Kleisli
category $\Klcat{\cL}{M}$ of a lax symmetric monoidal monad $M$ is the
same as the product in $\cL$. The product of two morphisms
$f_i \in \Klcat{\cL}{M}(X_i, Y_i)$ for \(i=1,2\) is given by the
following composition in $\cL$.
\[
  \begin{tikzcd}[column sep = large] {X_1 \ITens X_2} & {\Monadca{M}
      Y_1 \ITens \Monadca{M} Y_2} & {\Monadca{M}(Y_1 \ITens Y_2)}
    \arrow["{f_1 \ITens f_2}", from=1-1, to=1-2]
    \arrow["{\Monlaxt_{Y_1, Y_2}}", from=1-2, to=1-3]
  \end{tikzcd}
\]
The monoidal product in the category of coalgebras of a comonad is defined
similarly, we will see an explicit definition in 
\cref{sec:coalg-comonoids} for the exponential modality of \(\LL\). 

\subsection{Resource categories and coalgebras of the resource modality}

\label{sec:ll}

We recall in this section material on the categorical semantics 
of \(\LL\). We consider in this paper models of \(\LL\) 
with cartesian products.

\begin{definition} %
  \cite{Seely89,Bierman95}
  \label{def:resource-category}
  A \emph{resource category} is a category \(\cL\) equipped with the
  following structures and satisfying the following properties.
  \begin{itemize}
  \item \(\cL\) is an SMC. 
  \item \(\cL\) is cartesian that is, any finite family
    \((X_l)_{l\in L}\) of objects of \(\cL\) has a cartesian product %
    \((\Bwith_{l\in L}X_l,(\Proj l)_{l\in L})\) where the
    \((\Proj q\in\cL(\Bwith_{l\in L}X_l,X_q))_{q\in L}\) are the
    projections; we use \(\Stop\) for the terminal object and if
    \((f_l\in\cL(Y,X_l))_{l\in L}\) we use \(\Tuple{f_l}_{l\in L}\) for
    the unique element of \(\cL(X,\Bwith_{l\in L}X_l)\) such that
    \(\Proj q\Tuple{f_l}_{l\in L}=f_q\) for all \(q\in L\).
  \item \(\cL\) is equipped with a \emph{resource modality}, which is
    a tuple %
    \((\Oc,\Der{},\Digg{},\Seelyz,\Seelyt)\) where \(\Oc:\cL\to\cL\)
    is a functor, \(\Der{}\) (dereliction) and \(\Digg{}\) (digging)
    are respectively the counit and the comultiplication of a comonad
    structure on this functor and \((\Seelyz\in\cL(\Sone,\Oc\Stop))\) and
    \(\Seelyt_{X_1,X_2}\in\cL(\Tens{\Oc X_1}{\Oc
      X_2},\Oc{\Withp{X_1}{X_2}})\) turn \(\Oc{}\)
    into a strong symmetric monoidal comonad from the SMC
    \((\cL,\Stop,\mathord{\IWith})\) to the SMC
    \((\cL,\Sone,\mathord{\ITens})\).
    These isos are called the \emph{Seely isomorphisms} of \(\cL\).
    The iso $\Seelyt$ should also satisfy an additional commutation,
    given in \cite[Eq.~(72)]{Mellies09}.
  \end{itemize}
\end{definition}

If \(\cL\) is a resource category, the resource modality induces
a Kleisli category %
\(\Kloc\cL\) whose objects are those of \(\cL\) and where %
\(\Kloc\cL(X,Y)=\cL(\Oc X,Y)\).
In this category the identity morphisms are \(\Klocid_X=\Der X\) %
and composition of \(f\in\Kloc\cL(X,Y)\) and \(g\in\Kloc\cL(Y,Z)\) %
is defined by \(g\Comp f=g\Compl(\Oc f)\Compl\Digg X\).
The basic intuition in this situation is that the morphisms of \(\cL\)
are linear whereas \(\Kloc\cL\) is a category of nonlinear morphisms.
Here the word ``linear'' can be used in its algebraic and its computer
science meaning.
This intuition is supported by the fact that there is a functor %
\(\Derfun:\cL\to\Kloc\cL\) which acts as the identity on objects and
maps \(f\in\cL(X,Y)\) to \(\Derfun(f)=f\Compl\Der X\in\Kloc\cL(X,Y)\).
This functor is not necessarily faithful (it is, in most known
categorical models of \LL), but it should nevertheless be considered
as a kind of ``inclusion'' of \(\cL\) into the larger \(\Kloc\cL\).

\begin{convention}
  In this section, we assume that \(\cL\) is a resource category.
\end{convention}

\subsubsection{Coalgebras and comonoids}
\label{sec:coalg-comonoids}

As in \cref{def:monad}, we use \(\Em\cL\) for the Eilenberg-Moore
category of the comonad \(\Oc\).
We write a $\Oc$-coalgebra $P$ as a pair $(\Coalgca{P}, \Coalgst{P})$,
here $\Coalgca{P}$ is an object of $\cL$ and
$\Coalgst{P} \in \cL(\Coalgca{P}, \Oc \Coalgca{P})$.
Let \(\Emforg:\Em\cL\to\cL\) be the forgetful functor, and 
let $\Emfree : \cL \arrow \Em \cL$ be the free coalgebra functor 
that maps an object $X$ to the $\Oc$-coalgebra $(\Oc X, \Digg{X})$ and 
a morphism $f \in \cL(X, Y)$ to $\Oc f$.

Remember (from~\cite{Mellies09}, for instance) that the functor
\(\Oc\_\) inherits, from its strict symmetric monoidal structure
\((\Seelyz,\Seelyt)\) from the SMC \((\cL,\mathord{\IWith})\) to the
SMC \((\cL,\mathord{\ITens})\), a lax symmetric monoidal structure
\((\ExpMon0,\ExpMon2)\) from the SMC \((\cL,\mathord{\ITens})\) to
itself.
As explained in \cref{thm:sm-extension}, this lax monoidal structure 
endows the category $\Em \cL$
with a symmetric monoidal structure. The unit is given by the $\Oc$-coalgebra 
$(\Sone, \ExpMon0)$ that we simply denote as \(\Sone\).
The monoidal product of two \(\Oc\)-coalgebras \(P_1\) and \(P_2\)
is the pair %
\((\Tens{\Coalgca{P_1}}{\Coalgca{P_2}},h)\) where \(h\) is defined as
the following composition of morphisms
\begin{equation*}
  \begin{tikzcd}
    \Tens{\Coalgca{P_1}}{\Coalgca{P_2}}
    \ar[r,"\Tens{\Coalgst{P_1}}{\Coalgst{P_2}}"]
    &[1.4em]
    \Tens{\Oc{\Coalgca{P_1}}}{\Oc{\Coalgca{P_2}}}
    \ar[r,"\ExpMon2_{\Coalgca{P_1},\Coalgca{P_2}}"]
    &[1.4em]
    \Oc{\Tensp{\Coalgca{P_1}}{\Coalgca{P_2}}}
  \end{tikzcd}.
\end{equation*}

The point of a resource modality in Linear Logic is to endow 
objects with an erasure and a duplication operations, as follows.

\begin{proposition} \label{prop:resource-is-linear}
For all resource category $\cL$ and object $X$ of $\cL$, 
the objects $\Oc X$ has a canonical
commutative comonoid structure given by 
\begin{equation} \label{eq:canonical-comonoid}  
    \Weak X : \begin{tikzcd}[ampersand replacement = \&]
    \Oc{X}\ar[r,"\Oc 0"]
    \&[-1em]
    \Oc\Top\ar[r,"\Invp{\Seelyz}"]
    \&[1em]
    \Sone
  \end{tikzcd} \qquad 
  \Contr X : \begin{tikzcd}[ampersand replacement = \t] 
      \Oc{X}\ar[r,"{\Oc{\Tuple{\Id,\Id}}}"]
      \t[1em]
      \Oc{(X \with X)}\ar[r,"\Invp{\Seelyt}"]
      \t[1em]
      \Tens{\Oc{X}}{\Oc{X}}
    \end{tikzcd}
\end{equation}
Furthermore, $\Digg{X}$ is a comonoid morphism from 
$(\Oc X, \Weak X, \Contr X)$ to $(\Oc \Oc X, \Weak{\Oc X}, \Contr{\Oc X})$, 
$\Weak X$ is a coalgebra morphism from 
$(\Oc X, \Digg X)$ to $1$, and $\Contr X$ is a coalgebra morphism from 
$(\Oc X, \Digg X)$ to  $(\Oc X, \Digg X) \ITens (\Oc X, \Digg X)$. 
\end{proposition}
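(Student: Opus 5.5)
The plan is to obtain everything by transport of structure along the strong symmetric monoidal functor \(\Oc:(\cL,\Stop,\IWith)\to(\cL,\Sone,\ITens)\) given by the Seely isomorphisms \((\Seelyz,\Seelyt)\).

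\emph{Comonoid structure.} In the cartesian SMC \((\cL,\Stop,\IWith)\), every object \(X\) carries a canonical commutative comonoid \((X,0,\Tuple{\Id,\Id})\), with counit the unique map to \(\Stop\) and comultiplication the diagonal. The axioms are immediate from the universal property of products, and the braiding is \(\Tuple{\Proj2,\Proj1}\). Viewing this as a symmetric comonoid whose symmetry is the braiding, \cref{thm:symmonoid-strong-mondoidal-functor-image} (in its comonoid version) produces a symmetric comonoid on \(\Oc X\). Its counit is \(\Invp{\Seelyz}\Compl\Oc 0=\Weak X\) and its comultiplication is \(\Invp{\Seelyt}\Compl\Oc{\Tuple{\Id,\Id}}=\Contr X\), which are exactly the maps of \cref{eq:canonical-comonoid}. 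Because \(\Seelyt\) is \emph{symmetric} monoidal, the transported symmetry \(\Invp{\Seelyt}\Compl\Oc\Sym^{\IWith}\Compl\Seelyt\) equals the braiding \(\Sym_{\Oc X,\Oc X}\). Hence the comonoid is commutative.

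\emph{\(\Digg X\) is a comonoid morphism.} First, \(\Digg{}\) is a monoidal natural transformation \(\Oc\Natto\Oc\Oc\), since \(\Oc\) is a symmetric monoidal comonad and \(\Oc\Oc\) carries the composite strong structure. Spelling this out gives \(\Seelyz_{\Oc}\)-compatibility and
\(\Digg{X\IWith Y}\Compl\Seelyt=\Oc\Seelyt\Compl\Seelyt_{\Oc X,\Oc Y}\Compl\Tensp{\Digg X}{\Digg Y}\); I expect this to be the content of the extra axiom \cite[Eq.~(72)]{Mellies09}. Next, naturality of \(\Digg{}\) gives \(\Digg{X\IWith X}\Compl\Oc\Tuple{\Id,\Id}=\Oc\Oc\Tuple{\Id,\Id}\Compl\Digg X\). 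Combining the two, I compute
\(\Contr{\Oc X}\Compl\Digg X=\Invp{\Seelyt}\Compl\Oc\Tuple{\Id,\Id}\Compl\Digg X\).
One has to check \(\Oc\Tuple{\Id_{\Oc X},\Id_{\Oc X}}\Compl\Digg X=\Oc\Seelyt\Compl\Oc\Oc\Tuple{\Id,\Id}\Compl\Digg X\) up to the relevant Seely maps. Concretely, the required identity
\(\Contr{\Oc X}\Compl\Digg X=\Tensp{\Digg X}{\Digg X}\Compl\Contr X\)
is obtained by rewriting \(\Oc\) applied to the equation \(\Seelyt\Compl\Contr X=\Oc\Tuple{\Id,\Id}\), and the weakening case is analogous. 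This bookkeeping of Seely maps is the step I expect to be the main obstacle: one must make sure the precise form of Eq.~(72) gives exactly the monoidality of \(\Digg{}\) needed.

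\emph{Coalgebra morphisms.} The coalgebra-morphism claims follow from the comonoid-morphism claim together with the definition of \(\ExpMon0,\ExpMon2\) in terms of \(\Seelyz,\Seelyt\) and \(\Digg{}\): \(\ExpMon2=\Oc\Tens{}{}\)-structure defined by \(\Oc(\Der{}\ITens\Der{})\Compl\Invp\Seelyt\Compl\Oc\Tuple{\ldots}\Compl\Digg{}\)-style composites. Writing \(\ExpMon2_{\Oc X,\Oc X}\Compl(\Digg X\ITens\Digg X)\Compl\Contr X\), I push \(\Contr\) through \(\Digg\) using the previous step. The remaining goal is \(\ExpMon2_{\Oc X,\Oc X}\Compl\Contr{\Oc X}\Compl\Digg X=\Oc\Contr X\Compl\Digg X\), which follows from the description of \(\ExpMon2\) via Seely and the comonad laws \(\Der{}\Compl\Digg{}=\Id\). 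Weakening is treated identically with \(\ExpMon0\) and \(\Seelyz\).
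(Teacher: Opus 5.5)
The paper does not prove this itself; it defers to \cite{Bierman95}, so I compare your proposal with the standard argument.

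Your first step is correct and neat. Transporting the cartesian comonoid $(X,0,\Tuple{\Id,\Id})$ along the strong symmetric monoidal functor $(\Oc,\Seelyz,\Seelyt)$ gives exactly $\Weak X$ and $\Contr X$, and symmetry of $\Seelyt$ gives commutativity.

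Your last step is also fine in outline, once the middle step is available. With $\ExpMon2_{Y,Z}=\Oc{\Tensp{\Der Y}{\Der Z}}\Compl\Oc{\Invp{\Seelyt_{Y,Z}}}\Compl\Digg{Y\IWith Z}\Compl\Seelyt_{Y,Z}$, naturality and coassociativity of $\Digg{}$ turn $\ExpMon2_{\Oc X,\Oc X}\Compl\Contr{\Oc X}\Compl\Digg X$ into $\Oc{\big(\Tensp{\Der{\Oc X}}{\Der{\Oc X}}\Compl\Contr{\Oc X}\Compl\Digg X\big)}\Compl\Digg X$. The comonoid-morphism property and $\Der{\Oc X}\Compl\Digg X=\Id$ then reduce this to $\Oc{\Contr X}\Compl\Digg X$.

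The genuine gap is the middle step, which is the only place where the extra Seely axiom \cite[Eq.~(72)]{Mellies09} enters.

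\begin{itemize}
\item $\Oc\Oc$ carries no composite \emph{strong} structure from $(\cL,\IWith)$ to $(\cL,\ITens)$. The outer $\Oc$ is strong only out of $(\cL,\IWith)$, and on $(\cL,\ITens)$ it is merely lax.
\item Your guessed form of the axiom is therefore ill-typed: $\Seelyt_{\Oc X,\Oc Y}$ lands in $\Oc{\Withp{\Oc X}{\Oc Y}}$, whereas $\Oc{\Seelyt}$ starts from $\Oc{(\Oc X\ITens\Oc Y)}$.
\item Applying $\Oc$ to $\Seelyt\Compl\Contr X=\Oc{\Tuple{\Id,\Id}}$ only relates $\Oc{\Contr X}$ to $\Oc\Oc{\Tuple{\Id,\Id}}$. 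But $\Contr{\Oc X}$ is built from $\Oc{\Tuple{\Id_{\Oc X},\Id_{\Oc X}}}$, and nothing in your argument connects the two.
\end{itemize}

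The missing ingredient is the canonical comparison map $\Tuple{\Oc{\Proj1},\Oc{\Proj2}}$ from $\Oc{\Withp{X}{Y}}$ to $\With{\Oc X}{\Oc Y}$. It makes $\Oc\Oc$ \emph{oplax} monoidal, and the Seely axiom says precisely that $\Digg{}$ is a monoidal transformation between the oplax functors $\Oc$ and $\Oc\Oc$:
\[
  \Oc{\Tuple{\Oc{\Proj1},\Oc{\Proj2}}}\Compl\Digg{X\IWith Y}\Compl\Seelyt_{X,Y}
  =\Seelyt_{\Oc X,\Oc Y}\Compl\Tensp{\Digg X}{\Digg Y}\,.
\]
Combine this with naturality of $\Digg{}$ and the identity $\Tuple{\Oc{\Proj1},\Oc{\Proj2}}\Compl\Oc{\Tuple{\Id,\Id}}=\Tuple{\Id_{\Oc X},\Id_{\Oc X}}$. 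One gets
\[
  \Tensp{\Digg X}{\Digg X}\Compl\Contr X
  =\Invp{\Seelyt_{\Oc X,\Oc X}}\Compl\Oc{\Tuple{\Oc{\Proj1},\Oc{\Proj2}}}
  \Compl\Oc\Oc{\Tuple{\Id,\Id}}\Compl\Digg X
  =\Contr{\Oc X}\Compl\Digg X\,.
\]

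The counit case needs no Seely axiom. Since $0_{\Oc X}=\Der{\Top}\Compl\Oc{0}$ (both are maps to the terminal object), naturality of $\Digg{}$ and $\Oc{\Der{\Top}}\Compl\Digg{\Top}=\Id$ give $\Weak{\Oc X}\Compl\Digg X=\Weak X$.
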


\cref{prop:resource-is-linear} above means that every resource category is a \emph{linear category}, 
in the sense of~\cite{Bierman94}.
The proof of this implication can be found in~\cite{Bierman95}.
Now, this comonoid structure that is defined on free coalgebras can be extended to all coalgebras, as 
follows.

\begin{proposition} \label{prop:coalgebra-to-comonoid}
  Every coalgebra $P$ has a canonical comonoid structure (in 
  the category $\Em \cL$) given by
  the following compositions in $\cL$
  \begin{equation} \label{eq:coalgebra-to-comonoid}
    \begin{split}
    \Coalgw P &: 
    \begin{tikzcd}[ampersand replacement = \t]
    \Coalgca P\ar[r,"\Coalgst P"]
    \t
    \Oc{\Coalgca P}
    \ar[r, "\Weak{\Coalgca P}"]
    \t[1em]
    \Sone
  \end{tikzcd} \\
    \Coalgc P &: 
    \begin{tikzcd}[ampersand replacement = \t]
      \Coalgca P\ar[r,"\Coalgst P"]
      \t
      \Oc{\Coalgca P}
      \ar[r,"\Contr{\Coalgca P}"]
      \t[1em]
      \Tens{\Oc{\Coalgca P}}{\Oc{\Coalgca P}}
      \ar[r,"\Tens{\Der{\Coalgca P}}{\Der{\Coalgca P}}"]
      \t[3em]
      \Tens{\Coalgca P}{\Coalgca P}
    \end{tikzcd}
  \end{split}
  \end{equation}
  and every coalgebra morphism $f \in \Em \cL(P, Q)$ 
  is a comonoid morphism from $(\Coalgca{P}, \Coalgw P, \Coalgc P)$
  to  $(\Coalgca{Q}, \Coalgw Q, \Coalgc Q)$.
  We get in that way a functor 
  \(\Coalcomo:\Em\cL\to\CCMON\cL\) that maps an 
  object \(P\) of \(\Em\cL\) to \((\Coalgca P,\Coalgw P,\Coalgc P)\)
  and a morphism \(f\in\Em\cL(P,Q)\) to \(f\).
\end{proposition}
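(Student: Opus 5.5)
The natural route is to avoid checking comonoid axioms directly on an arbitrary coalgebra \(P\). We transport them from the free coalgebras, where \cref{prop:resource-is-linear} already provides everything. The key observation is that \(\Coalgst P\in\Em\cL(P,\Emfree\Coalgca P)\) is a coalgebra morphism; this is just the coassociativity axiom \(\Digg{\Coalgca P}\Compl\Coalgst P=\Oc{\Coalgst P}\Compl\Coalgst P\). Moreover it is a split mono in \(\cL\), with retraction \(\Der{\Coalgca P}\), by the counit axiom \(\Der{\Coalgca P}\Compl\Coalgst P=\Id\).

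I would first show that \(\Coalgw P\) and \(\Coalgc P\) are coalgebra morphisms, from \(P\) to \(\Sone\) and to \(\Tens PP\) respectively. For \(\Coalgw P=\Weak{\Coalgca P}\Compl\Coalgst P\), this is a composite of the coalgebra morphism \(\Coalgst P\) with \(\Weak{\Coalgca P}\), which is a coalgebra morphism by \cref{prop:resource-is-linear}. For \(\Coalgc P\), the composite \(\Contr{\Coalgca P}\Compl\Coalgst P\) is likewise a coalgebra morphism \(P\to\Tens{\Emfree\Coalgca P}{\Emfree\Coalgca P}\). It then remains to check that
\[
\Tens{\Coalgca P}{\Coalgca P}\xleftarrow{\Tens{\Der{}}{\Der{}}}\cdots
\]
composed with this map lands as a coalgebra morphism into \(\Tens PP\). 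Here \(\Der{}\) is not a coalgebra morphism, so I would argue differently. Using naturality of \(\Contr{}\) with respect to \(\Oc{\Coalgst P}\) (naturality of \(\Seelyt\)), I would rewrite
\[
\Tensp{\Coalgst P}{\Coalgst P}\Compl\Coalgc P
=\Tensp{\Der{}}{\Der{}}\Compl\Tensp{\Oc\Coalgst P}{\Oc\Coalgst P}\Compl\Contr{}\Compl\Coalgst P
=\Tensp{\Der{}}{\Der{}}\Compl\Contr{}\Compl\Oc{\Coalgst P}\Compl\Coalgst P
=\Tensp{\Der{}}{\Der{}}\Compl\Contr{}\Compl\Digg{}\Compl\Coalgst P .
\]
Since \(\Digg{}\) is a comonoid morphism, this becomes \(\Tensp{\Der{}\Compl\Digg{}}{\Der{}\Compl\Digg{}}\Compl\Contr{}\Compl\Coalgst P=\Contr{}\Compl\Coalgst P\). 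This yields the useful identity \(\Tensp{\Coalgst P}{\Coalgst P}\Compl\Coalgc P=\Contr{\Coalgca P}\Compl\Coalgst P\), and similarly \(\Coalgw P=\Weak{}\Compl\Coalgst P\) tautologically. So \(\Coalgst P\) intertwines \((\Coalgw P,\Coalgc P)\) with the comonoid structure of \(\Oc\Coalgca P\).

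The comonoid axioms (counit, coassociativity, commutativity) for \(P\) then follow by post-composing with \(\Coalgst P\) or \(\Tensp{\Coalgst P}{\Coalgst P}\) and the like, using the corresponding axioms on \(\Oc\Coalgca P\). Since \(\Coalgst P\) and its tensor powers are split monos, with retractions \(\Der{}\) and its tensor powers, these can be cancelled. For the coalgebra-morphism property of \(\Coalgc P\), I would compute \(\ExpMon2\Compl\Tensp{\Coalgst P}{\Coalgst P}\Compl\Coalgc P=\ExpMon2\Compl\Contr{}\Compl\Coalgst P\). This is the coalgebra structure of \(\Tens{\Emfree}{\Emfree}\) applied after \(\Contr{}\Compl\Coalgst P\), which is a coalgebra map. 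Pushing back through \(\Oc{\Coalgc P}\) uses the same intertwining identity together with functoriality of \(\Oc\). Finally, for \(f\in\Em\cL(P,Q)\), one has \(\Coalgst Q\Compl f=\Oc f\Compl\Coalgst P\); combined with naturality of \(\Weak{}\), \(\Contr{}\) and \(\Der{}\), this gives \(\Coalgw Q\Compl f=\Coalgw P\) and \(\Coalgc Q\Compl f=\Tensp ff\Compl\Coalgc P\), so functoriality of \(\Coalcomo\) is immediate.

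I expect the main obstacle to be the coalgebra-morphism property of \(\Coalgc P\) into \(\Tens PP\), since dereliction is not a coalgebra map. The intertwining identity above is the device I would rely on to circumvent this.
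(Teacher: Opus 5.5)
Your argument is correct, and it does more than the paper does. The paper gives no proof of its own here: it simply refers to \cite[Sec.~7.4]{Mellies09}, where this comonoid structure is established as part of showing that \(\Em\cL\) is cartesian.

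You instead reduce everything to the free case of \cref{prop:resource-is-linear}. The structure map \(\Coalgst P\) is a coalgebra morphism \(P\to\Emfree\Coalgca P\) that is split mono in \(\cL\). Your identity \(\Tensp{\Coalgst P}{\Coalgst P}\Compl\Coalgc P=\Contr{\Coalgca P}\Compl\Coalgst P\), together with \(\Coalgw P=\Weak{\Coalgca P}\Compl\Coalgst P\), says exactly that \(\Coalgst P\) is a comonoid morphism into \((\Oc{\Coalgca P},\Weak{\Coalgca P},\Contr{\Coalgca P})\). For \(P=\Emfree X\) this is precisely the computation of \cref{rem:same-comonoids}, so your identity generalizes it to arbitrary coalgebras. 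The comonoid equations are then pulled back by cancelling \(\Coalgst P\) and its tensor powers, and functoriality of \(\Coalcomo\) is the naturality computation you give.

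Your route is self-contained and uses only facts already stated in the paper. It also yields the by-product that every coalgebra structure map is a comonoid morphism. The citation buys brevity and places the result inside a general account of linear categories.

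One step should be tightened. Write \(X=\Coalgca P\). The map \(\ExpMon2_{X,X}\Compl\Contr{X}\Compl\Coalgst P\) is not ``the coalgebra structure of \(\Tens{\Emfree X}{\Emfree X}\) applied after \(\Contr{X}\Compl\Coalgst P\)''. That structure is \(\ExpMon2_{\Oc{X},\Oc{X}}\Compl\Tensp{\Digg{X}}{\Digg{X}}\), which lands in \(\Oc{\Tensp{\Oc{X}}{\Oc{X}}}\), not in \(\Oc{\Tensp XX}\).

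The clean fix is your own cancellation trick one level up. Set \(m=\Tens{\Coalgst P}{\Coalgst P}\). It is a coalgebra morphism \(\Tens PP\to\Tens{\Emfree X}{\Emfree X}\), being a tensor of coalgebra morphisms, and \(\Oc{m}\) is split mono with retraction \(\Oc{\Tensp{\Der{X}}{\Der{X}}}\). Then
\begin{align*}
  \Oc{m}\Compl\Coalgst{\Tens PP}\Compl\Coalgc P
  &=\ExpMon2_{\Oc{X},\Oc{X}}\Compl\Tensp{\Digg{X}}{\Digg{X}}\Compl m\Compl\Coalgc P
   =\ExpMon2_{\Oc{X},\Oc{X}}\Compl\Tensp{\Digg{X}}{\Digg{X}}\Compl\Contr{X}\Compl\Coalgst P\\
  &=\Oc{(\Contr{X}\Compl\Coalgst P)}\Compl\Coalgst P
   =\Oc{m}\Compl\Oc{\Coalgc P}\Compl\Coalgst P .
\end{align*}
This uses that \(\Contr{X}\Compl\Coalgst P\) is a coalgebra morphism, and your intertwining identity twice. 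Cancelling \(\Oc{m}\) gives \(\Coalgst{\Tens PP}\Compl\Coalgc P=\Oc{\Coalgc P}\Compl\Coalgst P\).

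Alternatively, you can use \(\ExpMon2_{X,X}=\Oc{\Tensp{\Der{X}}{\Der{X}}}\Compl\ExpMon2_{\Oc{X},\Oc{X}}\Compl\Tensp{\Digg{X}}{\Digg{X}}\), which holds by monoidality of \(\Der{}\) and \(\Digg{}\). With it your sentence becomes literally correct, and the conclusion follows by functoriality of \(\Oc{}\).
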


\begin{proof}
The proof is not trivial and can be found in~\cite[Sec.~7.4]{Mellies09}.
\end{proof}

\begin{remark} \label{rem:same-comonoids}
Observe that for all free coalgebra $\Emfree(X) = (X, \Digg X)$, the
commutative comonoid $(X, \Coalgw {\Emfree X}, \Coalgc {\Emfree X})$ is defined as follows
  \begin{equation*} 
    \begin{split}
    \Coalgw {\Emfree X}&: \begin{tikzcd}[ampersand replacement = \t] 
    \Oc X \ar[r,"\Digg X"]
    \t
    \Oc{\Oc X}
    \ar[r, "\Weak{\Oc X}"]
    \t[1em]
    \Sone
  \end{tikzcd} \\
  \Coalgc {\Emfree X}&: \begin{tikzcd}[ampersand replacement = \t]  
      \Oc X \ar[r,"\Digg X"]
      \t
      \Oc{\Oc X}
      \ar[r,"\Contr{\Oc X}"]
      \t[1em]
      \Tens{\Oc{\Oc X}}{\Oc{\Oc X}}
      \ar[r,"\Tens{\Der{\Oc X}}{\Der{\Oc X}}"]
      \t[2em]
      \Tens{\Oc X}{\Oc X}
    \end{tikzcd}
  \end{split}
  \end{equation*}
  and it immediately follows from the fact that $\Digg X$ is a comonoid 
  morphism that $\Coalgw{\Emfree{X}} = \Weak X$ and $\Coalgc {\Emfree X} = \Contr X$.
  So the comonoid structure on the coalgebras given by
  \cref{prop:coalgebra-to-comonoid} is a conservative generalization
  of the comonoid structures on the free coalgebras given in
  \cref{eq:canonical-comonoid}.
\end{remark}

\begin{proposition} \label{prop:em-excl-cartesian}
The category $\Em \cL$ is cartesian. The terminal 
object is the coalgebra \((\Sone,\ExpMon0)\) (that we will 
often write $1$ for conciseness), and the unique 
coalgebra morphism from $P$ to $1$ is  
$\Coalgw P$.
The projection 
\(\Coalgproj i\in\cL(\Coalgca{\Tens{P_1}{P_2}},\Coalgca{P_i})\) 
is given by the following composition (for \(i=1\), the case \(i=2\) is of course
similar):
\begin{center}
  \begin{tikzcd}
    \Tens{\Coalgca{P_1}}{\Coalgca{P_2}}
    \ar[r,"\Tens{\Coalgca{P_1}}{\Coalgw{P_2}}"]
    &[2em]
    \Tens{\Coalgca{P_1}}{\Sone}\ar[r,"\Rightu"]
    &[-1em]
    \Coalgca{P_1}
  \end{tikzcd}
\end{center}
Given \((f_i\in\Em\cL(Q,P_i))_{i=1,2}\), the unique morphism
  \(g\in\Em\cL(Q,\Tens{P_1}{P_2})\) such that
  \(\Coalgproj i\Compl g=f_i\) for \(i=1,2\) is given by 
  \(\Tensp{f_1}{f_2}\Compl\Coalgc Q\).
\end{proposition}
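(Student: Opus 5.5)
The plan is to deduce this from the general cartesianness of $\CCMON\cF$ (\cref{thm:comonoid-cartesian}), transported along the functor $\Coalgcomo:\Em\cL\to\CCMON\cL$ of \cref{prop:coalgebra-to-comonoid}. The observation that makes this work is that $\Coalgcomo$ sends the tensor of coalgebras to the tensor of comonoids, and sends the unit coalgebra $(\Sone,\ExpMon0)$ to the trivial comonoid $\Sone$. With that in hand, the candidate projections and pairings in the statement are exactly those of \cref{thm:comonoid-cartesian}. So the remaining work is to check two things: that these candidates are morphisms of $\Em\cL$, and that they are unique there.

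\textbf{Terminal object.} First I show that $\Coalgw P$ is a coalgebra morphism $P\to(\Sone,\ExpMon0)$. It is the composite $\Weak{\Coalgca P}\Compl\Coalgst P$. Here $\Coalgst P$ is a coalgebra morphism $P\to(\Oc\Coalgca P,\Digg{})$, by the coalgebra coassociativity axiom, and $\Weak{}$ is a coalgebra morphism to $1$ by \cref{prop:resource-is-linear}.

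For uniqueness, take any coalgebra morphism $f:P\to1$. By \cref{prop:coalgebra-to-comonoid}, $f$ is a comonoid morphism $\Coalgcomo P\to\Coalgcomo 1$, hence $f=\Cmonoidu{}\Compl f$ is forced to equal $\Coalgw P$. This uses $\Coalgw 1=\Id$, which follows from $\Weak\Sone\Compl\ExpMon0=\Id$, a standard coherence of the lax structure.

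\textbf{Projections.} Each $\Coalgproj i$ is $\Rightu\Compl(\Coalgca{P_1}\ITens\Coalgw{P_2})$. This is a tensor, in $\Em\cL$, of coalgebra morphisms (the identity and $\Coalgw{P_2}$), followed by the unitor, which is a coalgebra iso because $\Em\cL$ is monoidal with unit $(\Sone,\ExpMon0)$. So each $\Coalgproj i$ lies in $\Em\cL$.

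\textbf{Pairing.} The pairing $(f_1\ITens f_2)\Compl\Coalgc Q$ is a coalgebra morphism provided $\Coalgc Q:Q\to Q\ITens Q$ is one. I will get this from $\Coalgc Q=(\Der{}\ITens\Der{})\Compl\Contr{}\Compl\Coalgst Q$ as follows. $\Coalgst Q$ is a coalgebra map to the free coalgebra, and $\Contr{}$ is a coalgebra map by \cref{prop:resource-is-linear}. The step $\Der{}\ITens\Der{}$ is not a coalgebra map on its own, so I argue via the retraction instead: $\Coalgst Q$ is a split mono in $\Em\cL$ with retraction $\Der{}$ when restricted suitably. Concretely, I will verify directly that
\[
\ExpMon2\Compl(\Coalgst Q\ITens\Coalgst Q)\Compl\Coalgc Q=\Oc{\Coalgc Q}\Compl\Coalgst Q .
\]
The verification goes through naturality of $\Contr{}$, the identity $\Contr{\Oc X}\Compl\Digg X=(\Digg{}\ITens\Digg{})\Compl\Contr X$, and the monoidality of $\Der{}$, $\Digg{}$. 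This is the equation given in \cite[Sec.~7.4]{Mellies09}, and I would cite it rather than redo it.

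\textbf{Equations and uniqueness.} The equations $\Coalgproj i\Compl g=f_i$ follow by applying $\Coalgcomo$ and \cref{thm:comonoid-cartesian}, since $f_1,f_2$ are comonoid morphisms. Uniqueness works the same way: any coalgebra $g$ with these projections is a comonoid morphism into the product comonoid, so it is unique already in $\CCMON\cL$, and $\Coalgcomo$ is faithful, being the identity on morphisms.

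The main obstacle is the claim that $\Coalgcomo$ preserves tensors, i.e.\ that the comonoid structure of $P_1\ITens P_2$ computed via \cref{eq:coalgebra-to-comonoid} coincides with the tensor comonoid of \cref{lemma:commutative-monoid-product}. This requires the compatibility of $\Weak{}$ and $\Contr{}$ with $\ExpMon2$, which I would again extract from Melliès' treatment of the Eilenberg–Moore category of a linear category.
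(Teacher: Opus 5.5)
Your proof is correct, but it is organized differently from the paper's.

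\textbf{The paper's route.} It argues in one step via Fox's characterization \cite[Cor.~17]{Mellies09}: a symmetric monoidal category is cartesian as soon as every object carries a commutative comonoid, all morphisms are comonoid morphisms, and these comonoids are compatible with \(\ITens\) and \(\Sone\). It then cites \cite[Sec.~7.4]{Mellies09} for the fact that the comonoids \((\Coalgca P,\Coalgw P,\Coalgc P)\) satisfy these hypotheses. The formulas for the projections and the pairing are then just those of Fox's theorem.

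\textbf{Your route.} You transport the cartesian structure of \(\CCMON\cL\) (\cref{thm:comonoid-cartesian}) along the faithful, identity-on-morphisms functor \(\Coalcomo\). You check by hand that the candidate projections and pairings lie in \(\Em\cL\), and that uniqueness descends. In effect you reprove the relevant half of Fox's theorem in this instance, from the same three inputs: structure maps are coalgebra maps, coalgebra maps are comonoid maps, and \(\Coalcomo\) respects \(\ITens\) and \(\Sone\).

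\textbf{What your route buys.} It shows where each input is used:
\begin{itemize}
\item The existence of the pairing and the equations \(\Coalgproj i\Compl g=f_i\) only need \(\Coalgc Q\in\Em\cL(Q,\Tens QQ)\), counit preservation by \(f_1\) and \(f_2\), and the counit law of \(\Coalcomo Q\).
\item Compatibility of \(\Coalcomo\) with \(\ITens\) is needed only for uniqueness, namely to get \((\Coalgproj 1\ITens\Coalgproj 2)\Compl\Coalgc{\Tens{P_1}{P_2}}=\Id\). You correctly single this out as the crux and defer it to Melli\`es, just as the paper does.
\end{itemize}
The paper's route is shorter but leaves all of this inside the citation.

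Two small remarks.
\begin{itemize}
\item You do not need to re-verify that \(\Coalgc Q\) is a coalgebra morphism. \cref{prop:coalgebra-to-comonoid} already states that \((\Coalgca P,\Coalgw P,\Coalgc P)\) is a comonoid \emph{in} \(\Em\cL\), so \(\Coalgw P\) and \(\Coalgc P\) are already morphisms of \(\Em\cL\).
\item Your aside that \(\Coalgst Q\) is a split mono in \(\Em\cL\) with retraction \(\Der{}\) is wrong as stated. \(\Der{\Coalgca Q}\) is a coalgebra map \((\Oc{\Coalgca Q},\Digg{\Coalgca Q})\to Q\) only when it is inverse to \(\Coalgst Q\). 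This is harmless, though, because the direct verification you then describe does go through. It uses naturality of dereliction and contraction, the coalgebra axiom, \(\Der{\Oc X}\Compl\Digg X=\Id\), and the fact that contraction is a coalgebra morphism (\cref{prop:resource-is-linear}).
\end{itemize}
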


\begin{proof}
  As explained in~\cite[Cor.~17]{Mellies09}, a cartesian category is
  the same thing as a monoidal category where every object is equipped
  with a commutative comonoid structure subject to additional
  conditions.
  As shown in \cite[Sec.~7.4]{Mellies09}, these conditions are
  satisfied by the comonoids $(P, \Coalgw P, \Coalgc P)$, this is why
  $\Em \cL$ is cartesian.
\end{proof}

\subsubsection{The Lafont situation}

\label{sec:Lafont}

As pointed out in~\cite{Mellies09}, there is a particularly interesting
situation which occurs quite often in concrete models of \(\LL\):
this is the case where the resource modality is free.

\begin{definition}
  \label{def:Lafont}
  A SMC \(\cL\) is \emph{Lafont} if the forgetful functor
  \(\Forgcmon:\CCMON\cL\to\cL\) has a right adjoint.
  We use \(\Freecmon\) for (a choice for) this right adjoint,
  \(\Freecmonu:\Id\Natto\Freecmon\Compfun\Forgcmon\) for the unit of
  the adjunction and
  \(\Freecmoncu:\Forgcmon\Compfun\Freecmon\Natto\Id\) for its counit.
  This adjunction induces a comonad on \(\cL\) whose underlying
  functor is \(\Oc=\Forgcmon\Compfun\Freecmon\), whose counit is
  \(\Der{}=\Freecmoncu\) and comultiplication is
  \(\Digg{}=\Forgcmon\Freecmonu\Freecmon\).
\end{definition}

Spelling out one of the characterizations of adjunctions in the
present setting, we get the following useful presentation of this
adjunction in the language of universal properties.
\begin{lemma}
  \label{lemma:Lafont-universal}
  A SMC \(\cL\) is Lafont if and only if for each object \(X\) of
  \(\cL\) there is a pair \((\Freecmon X,\Der X)\) where \(\Freecmon X\)
  is an object of \(\CCMON\cL\) and
  \(\Der X\in\cL(\Monoidca{\Freecmon X},X)\), with the following
  universal property: for each object \(C\) of \(\CCMON\cL\) and each
  \(f\in\cL(\Monoidca C,X)\), there is a unique morphism
  \(\Promcm f\in\CCMON\cL(C,\Freecmon X)\) such that
  \(\Der X\Compl\Promcm f=f\).
  This universally defined operation on morphisms allows one to
  retrieve the functorial action of \(\Freecmon\): given
  \(f\in\cL(X_1,X_2)\), we have
  \(\Freecmon f=\Promcm{(f\Compl\Der{X_1})}\).
\end{lemma}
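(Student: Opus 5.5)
This is the standard ``representable'' characterization of a right adjoint, so I will prove both directions by unfolding the definition of adjunction in terms of universal arrows (couniversal morphisms from the functor \(\Forgcmon\)).

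\emph{Lafont implies the universal property.} Assume \(\Forgcmon\dashv\Freecmon\) with counit \(\Freecmoncu\). For each object \(X\) take the pair \((\Freecmon X,\Freecmoncu_X)\), and set \(\Der X=\Freecmoncu_X\in\cL(\Monoidca{\Freecmon X},X)\). I will use the characterization of an adjunction by its counit: every component \(\Freecmoncu_X\) is a terminal arrow from \(\Forgcmon\) to \(X\). Concretely, the adjunction gives a bijection
\[
  \CCMON\cL(C,\Freecmon X)\to\cL(\Forgcmon C,X),\quad g\mapsto \Freecmoncu_X\Compl\Forgcmon g,
\]
natural in \(C\). Since \(\Forgcmon g=g\) as a morphism of \(\cL\), bijectivity says exactly that for each \(f\in\cL(\Monoidca C,X)\) there is a unique comonoid morphism \(\Promcm f\) with \(\Der X\Compl\Promcm f=f\).

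\emph{The universal property implies Lafont.} Given the pairs \((\Freecmon X,\Der X)\) with the stated property, I will use the dual of the standard pointwise construction of a left adjoint from universal arrows (Mac~Lane, Thm.~IV.1.2). First extend \(\Freecmon\) to morphisms by \(\Freecmon f=\Promcm{(f\Compl\Der{X_1})}\). Functoriality then follows from uniqueness in the universal property. For identities, both \(\Id\) and \(\Freecmon\Id\) are comonoid morphisms whose composite with \(\Der{X}\) is \(\Der X\). For composition, \(\Freecmon g\Compl\Freecmon f\) is a comonoid morphism, and
\[
  \Der{X_3}\Compl\Freecmon g\Compl\Freecmon f=g\Compl\Der{X_2}\Compl\Freecmon f=g\Compl f\Compl\Der{X_1},
\]
so it coincides with \(\Freecmon(g\Compl f)\). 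The same computation shows that \(\Der{}\) is natural \(\Forgcmon\Compfun\Freecmon\Natto\Id\). The map \(g\mapsto\Der X\Compl g\) is then a bijection \(\CCMON\cL(C,\Freecmon X)\Isom\cL(\Monoidca C,X)\), with inverse \(f\mapsto\Promcm f\). It is natural in \(C\) trivially, and natural in \(X\) by the naturality of \(\Der{}\) just established. Hence \(\Freecmon\) is right adjoint to \(\Forgcmon\), with counit \(\Der{}\).

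The formula \(\Freecmon f=\Promcm{(f\Compl\Der{X_1})}\) holds in the first direction too: \(\Freecmon f\) is a comonoid morphism, and naturality of the counit gives \(\Der{X_2}\Compl\Freecmon f=f\Compl\Der{X_1}\), so uniqueness identifies it with \(\Promcm{(f\Compl\Der{X_1})}\).

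There is no real obstacle here. The only point requiring care is to check that uniqueness is taken among \emph{comonoid} morphisms, and that composites and identities of comonoid morphisms are again comonoid morphisms, so that the uniqueness arguments used for functoriality are legitimate.
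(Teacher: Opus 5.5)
Your proof is correct and follows the paper's approach: the paper gives no separate argument and presents the lemma as a direct unfolding of the characterization of the adjunction \(\Forgcmon\dashv\Freecmon\) by couniversal arrows (as in Mac~Lane, Thm.~IV.1.2), with \(\Der{}\) the counit \(\Freecmoncu\), and you spell out exactly that in both directions. Your additional checks are the routine details the paper leaves implicit: functoriality, naturality of \(\Der{}\), and that uniqueness is taken among comonoid morphisms.
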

This lemma expresses exactly that \(\Freecmon X\) is the cofree
commutative comonoid cogenerated by \(X\).

\begin{theorem}[\cite{Bierman94}] \label{thm:lafont-is-ressource}
  If \(\cL\) a Lafont SMC which is cartesian, then, equipped with the
  induced comonad \(\Oc\), \(\cL\) is a resource category.
  Furthermore, the comonoid $(\Oc X, \Weak X, \Contr X)$ given in
  \cref{eq:canonical-comonoid} coincides with the free commutative
  comonoid $\Freecmon X$.
\end{theorem}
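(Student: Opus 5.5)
The statement to prove is \cref{thm:lafont-is-ressource}. The plan is to build the resource modality directly from the Lafont adjunction of \cref{def:Lafont}. The comonad \((\Oc,\Der{},\Digg{})\) is already given by \(\Oc=\Forgcmon\Compfun\Freecmon\). What remains is to produce the Seely isomorphisms \(\Seelyz\in\cL(\Sone,\Oc\Stop)\) and \(\Seelyt_{X_1,X_2}\in\cL(\Tens{\Oc X_1}{\Oc X_2},\Oc\Withp{X_1}{X_2})\), and to check that they make \(\Oc\) a strong symmetric monoidal comonad from \((\cL,\IWith)\) to \((\cL,\ITens)\).

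The key observation is that \(\Freecmon:\cL\to\CCMON\cL\) is a right adjoint, so it preserves limits, in particular finite products. By \cref{thm:comonoid-cartesian}, products in \(\CCMON\cL\) are given by \(\ITens\) with the projections \(\Tproj_i\), and the terminal object is \(\Sone\). Hence \(\Freecmon\Stop\) is terminal in \(\CCMON\cL\), which gives an iso \(\Sone\Isom\Freecmon\Stop\) whose underlying morphism is \(\Seelyz\). Likewise the comparison morphism \(\Tuplet{\Freecmon\Proj1,\Freecmon\Proj2}:\Freecmon\Withp{X_1}{X_2}\to\Tens{\Freecmon X_1}{\Freecmon X_2}\) is an iso, and \(\Seelyt\) is defined as its inverse. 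Concretely, using \cref{lemma:Lafont-universal}, \(\Seelyt\) is the unique comonoid morphism \(\Promcm{h}\) with \(h=\Tuple{\Der{X_1}\Compl\Tproj_1,\Der{X_2}\Compl\Tproj_2}\). The associativity, unit and symmetry coherences then follow from uniqueness of comparison maps between products. Any two comonoid morphisms into a \(\Freecmon X\) that agree after \(\Der X\) are equal, so each coherence square reduces to checking it after postcomposing with dereliction, where it becomes a trivial product identity.

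Next come the compatibilities with the comonad, namely that \(\Der{}\) and \(\Digg{}\) are monoidal natural transformations, and the extra condition \cite[Eq.~(72)]{Mellies09}. I would derive these the same way. Both sides of each equation are comonoid morphisms into a cofree comonoid, which holds because \(\Digg{}=\Forgcmon\Freecmonu\Freecmon\) is the underlying map of one. It therefore suffices to compare them after applying \(\Der{}\), using the counit laws. I expect this step to be the main obstacle: for each side one must verify that it is indeed a comonoid morphism for the tensor comonoid structure of \cref{lemma:commutative-monoid-product}, and the bookkeeping with symmetries in the Eq.~(72) condition is delicate. Alternatively, one could invoke Melliès' general result that a symmetric monoidal adjunction between a cartesian category and \(\cL\) yields a linear–non-linear model. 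In that approach the required fact is that \(\Forgcmon\dashv\Freecmon\) is such an adjunction, since \(\Forgcmon\) is strict monoidal from \((\CCMON\cL,\ITens)\), which is cartesian by \cref{thm:comonoid-cartesian}.

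Finally, for the second claim, I would unfold \cref{eq:canonical-comonoid}. \(\Weak X=\Invp{\Seelyz}\Compl\Oc 0\) is the underlying map of the unique comonoid morphism \(\Freecmon X\to\Freecmon\Stop\Isom\Sone\), which by \cref{thm:comonoid-cartesian} is the counit of \(\Freecmon X\). Similarly, \(\Contr X=\Invp{\Seelyt}\Compl\Oc\Tuple{\Id,\Id}\) is the diagonal \(\Freecmon X\to\Tens{\Freecmon X}{\Freecmon X}\) in the cartesian category \(\CCMON\cL\). Again by \cref{thm:comonoid-cartesian}, this diagonal is \(\Tensp{\Id}{\Id}\Compl\Cmonoidm{\Freecmon X}\), the comultiplication. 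So the two comonoid structures coincide.
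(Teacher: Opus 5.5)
Your proposal is correct. The paper gives no proof of its own for this statement; it only attributes it to \cite{Bierman94}. Your argument is the standard one behind such citations (see also the treatment of Lafont categories in \cite{Mellies09}):
\begin{itemize}
\item $\Freecmon$ preserves finite products because it is a right adjoint;
\item products in $\CCMON\cL$ are tensors, by \cref{thm:comonoid-cartesian};
\item every remaining equation follows from the uniqueness clause of \cref{lemma:Lafont-universal}, which also gives the second claim directly, as you describe.
\end{itemize}

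The step you flag as the main obstacle is actually short. The extra commutation can be written
\[
\Oc{\Tuple{\Oc{\Proj 1},\Oc{\Proj 2}}}\Compl\Digg{\Withp{X_1}{X_2}}\Compl\Seelyt_{X_1,X_2}
=\Seelyt_{\Oc{X_1},\Oc{X_2}}\Compl\Tensp{\Digg{X_1}}{\Digg{X_2}} .
\]
Both sides are comonoid morphisms into $\Freecmon{\Withp{\Oc{X_1}}{\Oc{X_2}}}$. After postcomposing with dereliction, both reduce to $\Tuple{\Tproj_1,\Tproj_2}$. This uses $\Oc{\Proj i}\Compl\Seelyt_{X_1,X_2}=\Tproj_i$ and $\Tproj_i\Compl\Tensp{\Digg{X_1}}{\Digg{X_2}}=\Digg{X_i}\Compl\Tproj_i$, the latter because digging preserves counits.

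One small caveat concerns monoidality of $\Der{}$ for the induced lax structure, if you want it. That is not an equation between maps into a cofree comonoid, so your uniqueness argument does not literally apply to it. It follows at once from naturality of $\Der{}$ and $\Der{}\Compl\Digg{}=\Id$.
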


Such a category is called a Lafont resource category.
Notice that, for an SMC, being a Lafont resource category is a
property, and not an additional structure.
It turns out that many interesting and non-additive models of \LL{}
are Lafont resource categories, here are a few examples but there are
many others:
\begin{itemize}
\item the category of coherence spaces with the multiset based
  exponential;
\item the category of hypercoherence spaces with the multiset based
  exponential\footnote{This exponential has not been formally
    introduced as far as we know but is easy to describe.};
\item the category of nonuniform coherence spaces equipped with the
  Boudes exponential~\cite{BucciarelliEhrhard99,Boudes11};
\item the category of probabilistic coherence spaces with its unique
  known exponential~\cite{DanosEhrhard08,CrubilleEhrhardPaganiTasson16}.
\end{itemize}

It is widely accepted among the \(\LL\) community that for all Lafont
SMC $\cL$, the associated category of commutative comonoids
$\CCMON \cL$ is isomorphic its Eilenberg-Moore category $\Em \cL$.
However, and to our surprise, this results does not seem to be
properly proved anywhere.
Thus, we provide here a proof of this isomorphism. We rely on the 
following standard result on adjunctions.

\begin{proposition} \label{prop:em-terminal}
  Let $\cL$ be a category equipped with a comonad 
  $(\Oc, \Der, \Digg{})$. Given an adjunction
  \[ 
    \begin{tikzcd}
      \cC && \cL
      \arrow[""{name=0, anchor=center, inner sep=0}, "L", curve={height=-12pt}, from=1-1, to=1-3]
      \arrow[""{name=1, anchor=center, inner sep=0}, "M", curve={height=-12pt}, from=1-3, to=1-1]
      \arrow["\dashv"{anchor=center, rotate=-90}, draw=none, from=0, to=1]
    \end{tikzcd} 
  \]
  such that $\Oc = LM$, there exists a unique functor 
  $F : \cC \arrow \Em \cL$ such that 
  the following diagrams commute
  \[ 
  \begin{tikzcd}
    {\cC} & \cL \\
    & {\Em\cL}
    \arrow["F"', from=1-1, to=2-2]
    \arrow["L", from=1-1, to=1-2]
    \arrow["\Emforg"', from=2-2, to=1-2]
  \end{tikzcd} \qquad  
    \begin{tikzcd}
    {\cC} & \cL \\
    & {\Em\cL}
    \arrow["F"', from=1-1, to=2-2]
    \arrow["M"', from=1-2, to=1-1]
    \arrow["\Emfree", from=1-2, to=2-2]
  \end{tikzcd}\]
  where we recall that $\Emforg : \Em \cL \arrow \cL$ 
  is the forgetful functor, and 
  $\Emfree : \cL \arrow \Em \cL$ is the free coalgebra functor 
  that maps an object $X$ to the $\Oc$-coalgebra $(\Oc X, \Digg{X})$ and 
  a morphism $f \in \cL(X, Y)$ to $\Oc f$.
\end{proposition}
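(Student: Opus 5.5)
This is the classical comparison functor for the comonad induced by an adjunction $L\dashv M$, and I will build it by hand. Write $\eta:\Id\Natto ML$ and $\varepsilon:LM\Natto\Id$ for the unit and counit. Since $\Oc=LM$, I will assume, as the statement implicitly intends, that the comonad structure of $\Oc$ is the one induced by the adjunction: $\Der{}=\varepsilon$ and $\Digg{}=L\eta M$. This is exactly the situation of \cref{def:Lafont}, with $L=\Forgcmon$ and $M=\Freecmon$.

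\emph{Construction.} For an object $C$ of $\cC$, set $F(C)=(LC,\,L\eta_C)$. Here $L\eta_C\in\cL(LC,LMLC)=\cL(LC,\Oc LC)$.
\begin{itemize}
\item The counit law $\Der{LC}\Compl L\eta_C=\varepsilon_{LC}\Compl L\eta_C=\Id$ is one triangle identity.
\item The coassociativity law $\Digg{LC}\Compl L\eta_C=\Oc(L\eta_C)\Compl L\eta_C$ reads $L\eta_{MLC}\Compl L\eta_C=LML\eta_C\Compl L\eta_C$. It follows by applying $L$ to the naturality square of $\eta$ at the morphism $\eta_C$.
\end{itemize}
For $g\in\cC(C,C')$, set $F(g)=Lg$. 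This is a coalgebra morphism because $\Oc Lg\Compl L\eta_C=LMLg\Compl L\eta_C=L\eta_{C'}\Compl Lg$, again by naturality of $\eta$. Functoriality of $F$ is inherited from $L$. By construction $\Emforg\Compl F=L$. For the second triangle, $F(MX)=(LMX,L\eta_{MX})=(\Oc X,\Digg X)=\Emfree X$ and $F(Mf)=LMf=\Oc f=\Emfree f$, so $F\Compl M=\Emfree$.

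\emph{Uniqueness.} Suppose $F'$ makes both triangles commute. The first triangle forces $F'(C)=(LC,h_C)$ and $F'(g)=Lg$, so only the structure map $h_C$ is undetermined. To pin it down, apply $F'$ to $\eta_C\in\cC(C,MLC)$. The image $L\eta_C$ must then be a coalgebra morphism from $(LC,h_C)$ to $F'(MLC)=\Emfree(LC)=(\Oc LC,\Digg{LC})$. That condition says $\Digg{LC}\Compl L\eta_C=\Oc(L\eta_C)\Compl h_C$. Postcompose both sides with $\Oc\Der{LC}=LM\varepsilon_{LC}$:
\begin{itemize}
\item on the left, the comonad law $\Oc\Der{}\Compl\Digg{}=\Id$ gives $L\eta_C$;
\item on the right, we get $LM(\varepsilon_{LC}\Compl L\eta_C)\Compl h_C=h_C$ by the triangle identity.
\end{itemize}
Hence $h_C=L\eta_C$ and $F'=F$.

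\emph{Main obstacle.} The only delicate point is the interpretation of the hypothesis. Uniqueness genuinely relies on the coherence $\Der{}=\varepsilon$ and $\Digg{}=L\eta M$, not merely on the equality of functors $\Oc=LM$. I will state this assumption explicitly and note that it holds in the intended application to the Lafont setting. Everything else is a routine check of triangle identities and naturality of $\eta$.
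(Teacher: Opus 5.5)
Your proof is correct and takes essentially the paper's approach. The paper invokes the dual of Mac Lane's comparison-functor theorem (Chap.~VI, Sec.~3, Thm.~1), and you carry out exactly that dual argument explicitly: $F(C)=(LC,L\eta_C)$, with uniqueness obtained by applying the candidate functor to $\eta_C$ and postcomposing with $\Oc\Der{}$. Your caveat that $\Oc=LM$ must be read as an equality of comonads ($\Der{}=\varepsilon$, $\Digg{}=L\eta M$) is right, and this reading is implicit in the paper's intended Lafont application.
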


\begin{proof}
 This is the dual of a standard result on 
  adjunctions and monads, see for example 
  \cite[Chap.~6,Sec.~3,Thm.~1]{Maclane71} for the result 
  on monads.
\end{proof}

Assume that $\cL$ is Lafont.
By \cref{prop:em-terminal} above there exists 
a functor $\CmonEmFun : \CCMON\cL \arrow \Em\cL$ 
such that the following diagrams commute
\[
  \begin{tikzcd}
    {\CCMON\cL} & \cL \\
    & {\Em\cL} \arrow["\CmonEmFun"', from=1-1, to=2-2]
    \arrow["\Forgcmon", from=1-1, to=1-2] \arrow["\Emforg"', from=2-2,
    to=1-2]
  \end{tikzcd} \qquad  
  \begin{tikzcd}
    {\CCMON\cL} & \cL \\
    & {\Em\cL} \arrow["\CmonEmFun"', from=1-1, to=2-2]
    \arrow["\Freecmon"', from=1-2, to=1-1] \arrow["\Emfree", from=1-2,
    to=2-2]
  \end{tikzcd}
\]
Explicitly, the functor $\CmonEmFun$ maps a comonoid $C$ to the
coalgebra $(\Cmonca C, \Promcm{\Id_C})$ where
$\Promcm{\Id_{\Cmonca C}} \in \cL(\Cmonca C, \Oc{\Cmonca C})$ is given
by \cref{lemma:Lafont-universal}; this is a coalgebra by naturality of
the unit \(\Freecmonu\) of the adjunction
$\Forgcmon \dashv \Freecmon$.
As required by the leftmost diagram, $\CmonEmFun$ does nothing on
morphisms, in particular every comonoid morphism
$f \in \CCMON\cL(C, D)$ is also a coalgebra morphism
$f \in \Em \cL (\CmonEmFun C, \CmonEmFun D)$.
Conversely, as stated in \cref{prop:coalgebra-to-comonoid} there is a
functor \(\Coalcomo:\Em\cL\to\CCMON\cL\) that maps a coalgebra \(P\)
to the comonoid \((\Coalgca P,\Coalgw P,\Coalgc P)\), and does nothing
on the morphism.

\begin{theorem}
  \label{thm:Lafont-comonoid-coalgebra-iso}
  Assume that $\cL$ is a Lafont SMC. Then the functors
  $\CmonEmFun : \CCMON\cL \arrow \Em\cL$ and
  \(\Coalcomo:\Em\cL\to\CCMON\cL\) are inverse of each other, and thus
  $\CCMON\cL$ and $\Em \cL$ are isomorphic.
\end{theorem}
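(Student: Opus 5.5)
We need to check the two composites $\Coalcomo\Compfun\CmonEmFun$ and $\CmonEmFun\Compfun\Coalcomo$. Both functors act as the identity on morphisms and preserve the underlying object, so it suffices to check that they are mutually inverse on objects: for a comonoid $C$ the comonoid structure induced by the coalgebra $\CmonEmFun C$ is the original one, and for a coalgebra $P$ the coalgebra structure $\Promcm{\Id}$ computed from the comonoid $\Coalcomo P$ is $\Coalgst P$.

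\emph{First composite.} Let $C$ be a commutative comonoid and write $h=\Promcm{\Id_{\Cmonca C}}\in\CCMON\cL(C,\Freecmon\Cmonca C)$. By \cref{thm:lafont-is-ressource}, the comonoid $\Freecmon\Cmonca C$ is $(\Oc\Cmonca C,\Weak{\Cmonca C},\Contr{\Cmonca C})$. Since $h$ is a comonoid morphism, we get $\Weak{\Cmonca C}\Compl h=\Cmonoidu C$ and $\Contr{\Cmonca C}\Compl h=\Tensp hh\Compl\Cmonoidm C$. Plugging these into \cref{eq:coalgebra-to-comonoid} gives $\Coalgw{\CmonEmFun C}=\Cmonoidu C$. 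It also gives $\Coalgc{\CmonEmFun C}=\Tensp{\Der{}\Compl h}{\Der{}\Compl h}\Compl\Cmonoidm C=\Cmonoidm C$, because $\Der{}\Compl h=\Id$. So $\Coalcomo\Compfun\CmonEmFun=\Id$.

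\emph{Second composite.} Let $P$ be a coalgebra. By \cref{lemma:Lafont-universal}, the morphism $\Promcm{\Id}$ associated with the comonoid $\Coalcomo P$ is the unique comonoid morphism $g:(\Coalgca P,\Coalgw P,\Coalgc P)\to(\Oc\Coalgca P,\Weak{},\Contr{})$ with $\Der{}\Compl g=\Id$. I will show that $\Coalgst P$ has both properties, so uniqueness gives $g=\Coalgst P$.

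The counit law of the coalgebra gives $\Der{}\Compl\Coalgst P=\Id$. For the comonoid-morphism property, observe that $\Coalgst P$ is a coalgebra morphism from $P$ to the free coalgebra $(\Oc\Coalgca P,\Digg{})$; this is exactly the coassociativity law of the coalgebra. Then \cref{prop:coalgebra-to-comonoid} says that $\Coalgst P$ is a comonoid morphism from $\Coalcomo P$ to $\Coalcomo(\Emfree\Coalgca P)$. By \cref{rem:same-comonoids}, the latter comonoid is $(\Oc\Coalgca P,\Weak{},\Contr{})$, which by \cref{thm:lafont-is-ressource} is $\Freecmon\Coalgca P$. Hence $\CmonEmFun\Compfun\Coalcomo=\Id$.

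The main subtle point is this second half: we must identify the comonoid structure on the free coalgebra with the cofree comonoid $\Freecmon$. This identification combines \cref{rem:same-comonoids} with \cref{thm:lafont-is-ressource}, and the comonoid-morphism property of $\Coalgst P$ comes from \cref{prop:coalgebra-to-comonoid}. Everything else is unfolding of definitions.
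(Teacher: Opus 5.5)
Your proof is correct. The first half ($\Coalcomo\CmonEmFun=\Id$) is exactly the paper's argument. The second half takes a different route.

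The paper never computes $\Promcm{\Id}$ for the comonoid $\Coalcomo P$. Instead it invokes the uniqueness part of \cref{prop:em-terminal} for the Eilenberg--Moore adjunction $\Emforg\dashv\Emfree$. It checks that $\CmonEmFun\Coalcomo$ commutes with $\Emforg$, which is trivial, and with $\Emfree$, which reduces to $\Coalcomo\Emfree X=\Freecmon X$ via \cref{rem:same-comonoids} and \cref{thm:lafont-is-ressource}. Uniqueness then forces $\CmonEmFun\Coalcomo$ to be the identity.

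You instead identify the structure map objectwise, through the uniqueness clause of \cref{lemma:Lafont-universal}. Coassociativity makes $\Coalgst P$ a coalgebra morphism $P\to\Emfree\Coalgca P$, hence a comonoid morphism into the cofree comonoid. The counit law gives $\Der{}\Compl\Coalgst P=\Id$.

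Both routes hinge on the same fact: the comonoid carried by a free coalgebra is the cofree comonoid. Your version is more elementary and self-contained. It avoids the comparison-functor uniqueness, which the paper only cites as the dual of a result on monads, and it shows concretely that every coalgebra structure is the promotion of the identity. The paper's version is more conceptual: it reuses the very proposition that defines $\CmonEmFun$ and handles objects and morphisms uniformly without any computation.
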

\begin{proof}
  \Proofcase
  First, let us prove that
  $\Coalcomo{}\, \CmonEmFun = \Idfun_{\CCMON \cL}$.  Since this
  functor acts trivially on morphisms, it suffices to prove that
  $\Coalcomo\,\CmonEmFun C = C$ for all comonoid
  $C = (\Cmonca C, \Ccomonu C, \Ccomonm C)$.
  By definition of $\Coalcomo$ and $\CmonEmFun$, the structural maps
  of the comonoid $\Coalcomo\, \CmonEmFun C$ are given by the
  following compositions in $\cL$.
  \begin{equation*}
    \begin{split}
      &\begin{tikzcd}[ampersand replacement = \t] 
        \Cmonca C \ar[r,"\Promcm{\Id_{\Cmonca C}}"]
        \t
        \Oc{\Cmonca C}
        \ar[r, "\Weak{\Cmonca C}"]
        \t[1em]
        \Sone
      \end{tikzcd} \\
      &\begin{tikzcd}[ampersand replacement =
         \t] 
         \Cmonca C \ar[r,"\Promcm{\Id_{\Cmonca C}}"] \t \Oc{\Cmonca C}
         \ar[r,"\Contr{\Cmonca C}"] \t[1em] \Tens{\Oc{\Cmonca
             C}}{\Oc{\Cmonca C}} \ar[r,"\Tens{\Der{\Cmonca
             C}}{\Der{\Cmonca C}}"] \t[2em] \Tens{\Cmonca C}{\Cmonca
           C}
       \end{tikzcd}
    \end{split}
  \end{equation*}
  Now, recall from \cref{lemma:Lafont-universal} that
  $\Der{\Cmonca C} \Compl \Promcm{\Id_{\Cmonca C}} = \Id_{\Cmonca C}$,
  and that $\Promcm{\Id_{\Cmonca C}}$ is a comonoid morphism from $C$
  to $\Freecmon C$.
  Furthermore,
  $\Freecmon C = (\Cmonca{C}, \Weak{\Cmonca C}, \Contr{\Cmonca C})$ as
  stated in \cref{thm:lafont-is-ressource}.
  Thus,
  $\Weak{\Cmonca C} \Compl \Promcm{\Id_{\Cmonca C}} = \Ccomonu C$ and
  $(\Der{\Cmonca C} \ITens \Der{\Cmonca C}) \Compl \Contr{\Cmonca C}
  \Compl \Promcm{\Id_{\Cmonca C}} = (\Der{\Cmonca C} \ITens
  \Der{\Cmonca C}) \Compl (\Promcm{\Id_{\Cmonca C}} \ITens
  \Promcm{\Id_{\Cmonca C}}) \Compl \Ccomonm C = \Ccomonm C$.
  In other word, $\Coalcomo \CmonEmFun C = C$.

  \Proofcase
  We now prove that $\CmonEmFun \Coalcomo = \Idfun_{\Em \cL}$.
  To that end, we first prove that the two following diagrams commute.
  \[
    \begin{tikzcd}
      {\Em \cL} & \cL \\
      {\CCMON \cL} & \cL \\
      {\Em \cL} & \cL \arrow["\Emforg", from=1-1, to=1-2]
      \arrow["\Coalcomo"', from=1-1, to=2-1] \arrow["\Idfun",
      from=1-2, to=2-2] \arrow["\Forgcmon", from=2-1, to=2-2]
      \arrow["\CmonEmFun"', from=2-1, to=3-1] \arrow["\Idfun",
      from=2-2, to=3-2] \arrow["\Emforg"', from=3-1, to=3-2]
    \end{tikzcd} \qquad
    \begin{tikzcd}
      {\Em \cL} & \cL \\
      {\CCMON \cL} & \cL \\
      {\Em \cL} & \cL
      \arrow["\Coalcomo"', from=1-1, to=2-1]
      \arrow["\Emfree"', from=1-2, to=1-1]
      \arrow["{(*)}"{description}, draw=none, from=1-2, to=2-1]
      \arrow["\Idfun", from=1-2, to=2-2]
      \arrow["\CmonEmFun"', from=2-1, to=3-1]
      \arrow["\Freecmon"', from=2-2, to=2-1]
      \arrow["\Idfun", from=2-2, to=3-2]
      \arrow["\Emfree", from=3-2, to=3-1]
    \end{tikzcd}
  \]
  The leftmost diagram is trivial, since
  $\Coalgca{\CmonEmFun \Coalcomo P} = \Coalgca{P}$ and both
  $\CmonEmFun$ and $\Coalcomo$ acts trivially on morphisms.
  The bottom part of the rightmost diagram commutes by definition of
  $\CmonEmFun$ which leaves only the top part $(*)$.
  To prove that it commutes, it suffices to check that for each object
  $X$, $\Coalcomo\, \Emfree X = \Freecmon X$.
  But $\Freecmon\, X = (\Oc X, \Weak X, \Contr X)$ by
  \cref{thm:lafont-is-ressource}, and
  $\Coalcomo \Emfree X = (\Oc X, \Coalgw{\Oc X}, \Coalgc{\Oc})$ by
  definition.
  Thus we need to prove that
  $(\Oc X, \Coalgw{\Oc X}, \Coalgc{\Oc}) = (\Oc X, \Weak X, \Contr
  X)$, which we have already done in \cref{rem:same-comonoids}.
  Now, these two diagrams mean that the functor $\CmonEmFun \Coalcomo$
  is the unique functor $\Em \cL \arrow \Em \cL$ that satisfies the
  conditions given in \cref{prop:em-terminal}. But $\Idfun_{\Em \cL}$
  also fulfills these conditions, and thus we must have
  $\CmonEmFun \Coalcomo = \Idfun$.
\end{proof}

Recall that for each SMC $\cL$, $\CCMON \cL$ is cartesian, with
terminal object $(\Sone, \Id_{\Sone}, \Leftu_{\Sone})$ and product
$\Tens{C_1}{C_2}$ described in \cref{thm:comonoid-cartesian}.
Furthermore, if $\cL$ is Lafont then it is a resource category by
\cref{thm:lafont-is-ressource} and $\Em \cL$ is cartesian with
terminal coalgebra $(1, \ExpMon0)$ and product coalgebra
$\Tens{P_1}{P_2}$ described in \cref{prop:em-excl-cartesian}.
It turns out that $\CmonEmFun : \CCMON\cL \arrow \Em\cL$ preserves
these products.

\begin{proposition}
  \label{prop:comonoid-to-coalgebra-cartesian}
  The functor $\CmonEmFun : \CCMON\cL \arrow \Em\cL$ is a strict
  monoidal functor (\cref{def:strong-monoidal-functor}):
  $\CmonEmFun (\Sone, \Id_{\Sone}, \Leftu_{\Sone}) = (1, \ExpMon0)$
  and
  $\CmonEmFun (\Tens{C_1}{C_2}) = \Tens{\CmonEmFun C_1}{\CmonEmFun
    C_2}$.
\end{proposition}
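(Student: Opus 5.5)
Since \(\CmonEmFun\) acts as the identity on underlying objects and morphisms, strictness means checking that the coalgebra structures coincide on the nose. Concretely, I would show that \(\Promcm{\Id_\Sone}=\ExpMon0\) for the unit, and that \(\Promcm{\Id_{\Tens{\Cmonca{C_1}}{\Cmonca{C_2}}}}=\ExpMon2\Compl\Tensp{\Promcm{\Id_{\Cmonca{C_1}}}}{\Promcm{\Id_{\Cmonca{C_2}}}}\) for the product. Once these identities hold, the lax and oplax structure maps can be taken to be identities. The coherence diagrams of \cref{def:strong-monoidal-functor} then hold trivially, because the associators and unitors of \(\Em\cL\) are those of \(\cL\), as are those of \(\CCMON\cL\).

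The main tool is the uniqueness clause of \cref{lemma:Lafont-universal}. Given a comonoid \(C\), \(\Promcm{\Id_{\Cmonca C}}\) is the unique comonoid morphism \(h\in\CCMON\cL(C,\Freecmon\Cmonca C)\) with \(\Der{}\Compl h=\Id\). So for each candidate I need to check two things. The first is \(\Der{}\Compl h=\Id\), which is easy. The second is that \(h\) is a comonoid morphism from \(C\) to \((\Oc\Cmonca C,\Weak{},\Contr{})\), using \cref{thm:lafont-is-ressource} to identify \(\Freecmon\) with the canonical comonoid structure.

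For the unit, \(\Der\Sone\Compl\ExpMon0=\Id_\Sone\) follows from the definition of \(\ExpMon0\) as \(\Oc{}\)-coalgebra structure of the unit of \(\Em\cL\), since it is a coalgebra and coalgebras satisfy the counit law. To see that \(\ExpMon0\) is a comonoid morphism from \((\Sone,\Id,\Leftu^{-1})\), I would use \cref{prop:coalgebra-to-comonoid} applied to the coalgebra \((\Sone,\ExpMon0)\). Its induced comonoid is terminal in \(\CCMON\cL\), hence equal to the trivial one. Indeed, by \cref{prop:em-excl-cartesian} the coalgebra \((\Sone,\ExpMon0)\) is terminal, and \(\Coalgw{}\), \(\Coalgc{}\) on it reduce to the identity and \(\Inv\Leftu\). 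Finally, a coalgebra structure map \(\Coalgst P\) is always a comonoid morphism \(\Coalcomo P\to\Freecmon\Coalgca P\), since it is a coalgebra morphism \(P\to\Emfree\Coalgca P\) and \(\Coalcomo\) is functorial with \(\Coalcomo\Emfree=\Freecmon\) (\cref{rem:same-comonoids}).

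For the product I would use the same trick rather than a direct computation. Let \(P_i=\CmonEmFun C_i\). By \cref{thm:Lafont-comonoid-coalgebra-iso} we have \(\Coalcomo P_i=C_i\). The product coalgebra \(\Tens{P_1}{P_2}\) has structure \(h=\ExpMon2\Compl\Tensp{\Coalgst{P_1}}{\Coalgst{P_2}}\), and \(h\) is a comonoid morphism \(\Coalcomo(\Tens{P_1}{P_2})\to\Freecmon(\Tens{\Cmonca{C_1}}{\Cmonca{C_2}})\) with \(\Der{}\Compl h=\Id\). So it remains only to show that \(\Coalcomo(\Tens{P_1}{P_2})=\Tens{C_1}{C_2}\) as comonoids. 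This is the main obstacle. I would derive it from cartesianness: both are products of \(C_1,C_2\) in \(\CCMON\cL\) on the same object with the same projections. The projections of \cref{prop:em-excl-cartesian} are \(\Rightu\Compl(\Cmonca{C_1}\ITens\Coalgw{P_2})\), and \(\Coalgw{P_2}=\Cmonoidu{C_2}\). It then remains to observe that a comonoid structure on \(\Tens{\Cmonca{C_1}}{\Cmonca{C_2}}\) making these maps the product projections is forced. Its counit must be \(\Cmonoidu{}\) of the terminal map, and its comultiplication is the pairing of the diagonals. So, since \(\Coalcomo\) preserves products (being an isomorphism, with \(\Em\cL\) cartesian), I expect the comparison to reduce to checking that the product comonoid in \cref{thm:comonoid-cartesian} is determined by its projections. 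This is the step I would verify most carefully, possibly falling back on a direct diagram chase using that \(\ExpMon2\) is compatible with \(\Contr{}\) and \(\Weak{}\).
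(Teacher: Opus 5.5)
Your reduction to equalities of coalgebra structures is sound, and so is the use of uniqueness in \cref{lemma:Lafont-universal}. The unit case also goes through: both projections of $\Tens\Sone\Sone$ equal $\Rightu_\Sone=\Leftu_\Sone$, which is an iso of $\cL$, so the comultiplication $\Inv\Leftu$ really is forced.

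The gap is in the binary case, at exactly the step you flag. Write $X_i=\Cmonca{C_i}$. The principle that a comonoid structure on $\Tens{X_1}{X_2}$ for which $\Tproj_1,\Tproj_2$ are product projections in $\CCMON\cL$ is forced is false. Two such products are related by a unique comonoid iso $\theta$ with $\Tproj_i\Compl\theta=\Tproj_i$. But $\Tproj_1,\Tproj_2$ are not jointly monic in $\cL$, so nothing makes $\theta$ the identity. The counit is forced (it must be $\Cmonoidu{C_1}\Compl\Tproj_1$), but the comultiplication is not. For instance, in $\mathbf{Vect}$ take the cocommutative coalgebra $C$ with basis $1$ (grouplike) and $x$ (primitive). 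Let $\theta$ be the linear automorphism of $C\otimes C$ sending $1\otimes 1$ to $1\otimes 1+x\otimes x$ and fixing the other basis vectors. Since both projections kill $x\otimes x$, it satisfies $\Tproj_i\Compl\theta=\Tproj_i$, yet it is not a coalgebra map. Transporting the tensor coalgebra structure along $\theta$ therefore gives a different comonoid that is still a product with the same projections. So knowing that $\Coalcomo(\Tens{P_1}{P_2})$ and $\Tens{C_1}{C_2}$ are products with the same projections proves nothing. Note also that the equality $\Coalcomo(\Tens{P_1}{P_2})=\Tens{C_1}{C_2}$ you reduced to is equivalent to the proposition itself, since $\Coalcomo$ and $\CmonEmFun$ are mutually inverse.

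Your fallback is the right idea and has to become the argument. The most direct route uses the uniqueness set-up from your first paragraph. Let $h_i=\Promcm{\Id_{X_i}}$ and $h=\ExpMon2\Compl\Tensp{h_1}{h_2}$.
\begin{itemize}
\item $\Der{}\Compl h=\Id$, because $\Der{}\Compl\ExpMon2=\Tens{\Der{}}{\Der{}}$.
\item $h$ is a comonoid morphism from $\Tens{C_1}{C_2}$ to $\Freecmon{\Tensp{X_1}{X_2}}$. Indeed, $\Tensp{h_1}{h_2}$ is one between tensor comonoids, and $\ExpMon2$ is one from $\Tens{\Freecmon{X_1}}{\Freecmon{X_2}}$, because weakening and contraction are monoidal natural transformations in a linear category.
\end{itemize}
Uniqueness then gives the result.

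Alternatively, compare from the source side rather than the target side. This is what turns the paper's one-line proof (an isomorphism of cartesian categories preserves products) into an equality rather than an iso. Set $Q=\CmonEmFun(\Tens{C_1}{C_2})$; the $\Tproj_i$ are coalgebra morphisms $Q\to P_i$. By \cref{prop:em-excl-cartesian}, their pairing into $\Tens{P_1}{P_2}$ is $\Tensp{\Tproj_1}{\Tproj_2}\Compl\Coalgc Q$. Using $\Coalcomo Q=\Tens{C_1}{C_2}$, this equals $\Tensp{\Tproj_1}{\Tproj_2}\Compl\Cmonoidm{\Tens{C_1}{C_2}}=\Id$. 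So the identity is a coalgebra morphism $Q\to\Tens{P_1}{P_2}$, and the two coalgebra structures coincide.
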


\begin{proof}
  Isomorphisms between cartesian categories preserve products.
\end{proof}

In particular, $\CmonEmFun$ is strong monoidal. This implies by
\cref{thm:symmonoid-strong-mondoidal-functor-image} that this functor
sends symmetric monoids/comonoids/bimonoids in $\CCMON\cL$ to
symmetric monoids/comonoids/bimonoids in $\Em\cL$.
This means that the Lafont situation provides a very convenient
setting to build bimonoids in $\Emcat{\cL}{\Oc}$.
As we will see in \cref{sec:coalg-to-dl} our \DDIFF{} theory is often
expressed by such bimonoids.

\subsection{Distributive laws between functors and comonads}

\label{sec:distr-law-comonad}

In our introductory example \cref{sec:diff-in-LL}, we described a  
calculus of differences by means of an endofunctor $\Dfun$ (which has, 
in fact, a symmetric bimonad structure)
on the Kleisli category of the set based exponential of coherence spaces. 
Our theory of coherent differences generalizes this construction
to the Kleisli category of other models of \(\LL\), by means
of a \emph{distributive law}~\cite{PowerWatabane02}.

Let \(\DL\cL\Oc\) be the specialization to the $\Oc$ comonad 
of the \(2\)-category of comonads and 
distributive laws \cite[Sec.~5]{PowerWatabane02}.
Concretely, an object of this category is a pair \((F,d)\) where
\(F:\cL\to\cL\) is a functor and \(d:\Oc F\Natto F\Oc\) is a natural
transformation (called a \emph{distributive law}) 
subject to the two following commutations in \(\FUN\cL\cL\).
\begin{equation} \label{eq:oc-dl}
  \begin{tikzcd}
    \Oc F
    \ar[rr,"d"]\ar[rd,swap,"\Der{}F"]
    &[-2em]&[-2em]
    F\Oc
    \ar[ld,"F\Der{}"]
    \\
    &
    F
    &
  \end{tikzcd}
  \quad
  \begin{tikzcd}
    \Oc F
    \ar[rr,swap,"d"]\ar[d,"\Digg{}F"]
    &&
    F\Oc
    \ar[d,"F\Digg{}"]
    \\
    \Oc\Oc F
    \ar[r,"\Oc d"']
    &
    \Oc F\Oc
    \ar[r,"d\Oc"']
    &
    F\Oc\Oc
  \end{tikzcd}
\end{equation}
A morphism in \(\DL\cL\Oc((F,d),(G,e))\) is a natural transformation \(\phi:F\Natto G\) such
that the following diagram commutes in \(\FUN\cL\cL\)
\begin{equation} \label{eq:oc-dl-morphism}
  \begin{tikzcd}
    \Oc F\ar[r,"\Oc\phi"]\ar[d,swap,"d"]
    &
    \Oc G
    \ar[d,"e"]
    \\
    F\Oc
    \ar[r,"\phi\Oc"']
    &
    G\Oc
  \end{tikzcd}
\end{equation}
  We define in the same way a category 
  $\coDL \cL \Oc$ whose objects are pairs \((F,d^*)\) where
  \(F:\cL\to\cL\) is a functor and \(d^*:F \Oc\Natto \Oc F\)
  is a natural transformation subject to commutations similar 
  to those of~\cref{eq:oc-dl}.
  A morphism in $\coDL \cL \Oc ((F, d^*), (G, e^*))$ is a natural
  transformation $\phi : F \Natto G$ subject to a commutation similar
  to that of \cref{eq:oc-dl-morphism}.

The main purpose of distributive laws is to extend 
functors to Kleisli categories and to lift functors 
to Eilenberg-Moore categories. Our functor of 
\DDIFF{},
which is a functor in the Kleisli category of the exponential comonad, will be 
defined by such an extension.

\begin{definition} \label{def:extension} 
Let $F : \cL \arrow \cL$ be an endofunctor on $\cL$.
An \emph{extension} of $F$ to the Kleisli category of $\Oc$
is a functor $\Extfun{F} : \Kl\cL \arrow \Kl\cL$
such that the following diagram commutes. 
\[ \begin{tikzcd}
      \cL\ar[r,"F"]\ar[d,swap,"\Derfun"]
      &
      \cL\ar[d,"\Derfun"]
      \\
      \Kl\cL\ar[r,"{\Extfun{F}}"']
      &
      \Kl\cL
    \end{tikzcd}
\] 
In particular, $\Extfun{F} X = F X$.
We define in that way a category whose objects are pairs $(F, \Extfun{F})$
of an endofunctor and its extension, and whose morphisms from 
$(F, \Extfun{F})$ to $(G, \Extfun{G})$ are the natural transformations
$\alpha : F \Natto G$ such that the family of morphisms
$\beta_X = \Derfun(\alpha_X)$ is a natural transformation 
$\Extfun{F} \Natto \Extfun{G}$.
\end{definition}

\begin{definition} \label{def:lifting}
Let $F : \cL \arrow \cL$ be an endofunctor on $\cL$. 
A \emph{lifting} of $F$ to the Eilenberg-Moore category of $\Oc$
is a functor $\Liftfun{F} : \Em\cL \arrow \Em\cL$
such that the following diagram commutes.
\[ 
    \begin{tikzcd}
      \Em \cL\ar[r,"{\Liftfun{F}}"]\ar[d,swap,"\Emforg"]
      &
      \Em \cL\ar[d,"\Emforg"]
      \\
      \cL\ar[r,"F"']
      &
      \cL
    \end{tikzcd}
\]
In particular, for all $\Oc$-coalgebra $P$, 
$\Coalgca{\Liftfun F P} = F \Coalgca P$.
We define in that way a category whose objects are pairs $(F, \Liftfun{F})$
of an endofunctor and its lifting, and whose morphisms 
from $(F, \Extfun{F})$ to $(G, \Extfun{G})$ are the natural transformations
$\alpha : F \Natto G$ such that there exists 
a (necessarily unique) natural transformation 
$\beta : \Liftfun F \Natto \Liftfun G$
such that for all $\Oc$-coalgebra $P$, 
$\alpha_{\Coalgca P} = \Emforg(\beta_P)$.
\end{definition}

\begin{theorem} \cite{PowerWatabane02} \label{thm:extension-dl-iso}
  The categories $\EXTCAT{\Oc}$ and $\DL\cL\Oc$ are isomorphic. 
  The categories $\LIFTCAT{\Oc}$ and $\coDL \cL \Oc$ are isomorphic.
\end{theorem}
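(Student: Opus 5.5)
We prove both isomorphisms by writing down explicit functors in each direction, acting as the identity on the underlying endofunctors and natural transformations, and then checking that they are well defined and mutually inverse. We give the first isomorphism in detail; the second is its formal dual.

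\emph{From distributive laws to extensions.} Given \((F,d)\) in \(\DL\cL\Oc\), we set \(\Extfun F X=FX\). For \(f\in\Kl\cL(X,Y)=\cL(\Oc X,Y)\) we put
\[
\Extfun F f=Ff\Compl d_X\in\cL(\Oc FX,FY).
\]
Preservation of identities follows from the first diagram of \cref{eq:oc-dl}:
\[
\Extfun F(\Der X)=F\Der X\Compl d_X=\Der{FX}.
\]
Preservation of Kleisli composition \(g\Comp f=g\Compl\Oc f\Compl\Digg X\) comes from the second diagram. We expand \(\Extfun F g\Compl\Oc(\Extfun F f)\Compl\Digg{FX}\), rewrite \(\Oc(Ff\Compl d_X)\) as \(\Oc Ff\Compl\Oc d_X\), and use \(d_{\Oc Y}\Compl\Oc d_X\Compl\Digg{FX}=F\Digg X\Compl d_X\). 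Naturality of \(d\) in the form \(d_Y\Compl\Oc Ff=F\Oc f\Compl d_{\Oc X}\) then yields \(F(g\Compl\Oc f\Compl\Digg X)\Compl d_X\). The extension square holds because
\[
\Extfun F(f\Compl\Der X)=Ff\Compl F\Der X\Compl d_X=Ff\Compl\Der{FX},
\]
where the last step is the first axiom again.

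\emph{From extensions to distributive laws.} Given an extension \(\Extfun F\), we define
\[
d_X=\Extfun F(\Id_{\Oc X})\in\cL(\Oc FX,F\Oc X),
\]
viewing \(\Id_{\Oc X}\) as a Kleisli morphism \(X\to\Oc X\). The key observation is that every \(f\in\Kl\cL(X,Y)\) factors in \(\Kl\cL\) as \(\Derfun(f)\Comp\Id_{\Oc X}\). Indeed,
\[
\Derfun(f)\Comp\Id_{\Oc X}=f\Compl\Der{\Oc X}\Compl\Oc\Id\Compl\Digg X=f.
\]
Functoriality of \(\Extfun F\) and the extension square therefore force \(\Extfun F f=Ff\Compl d_X\). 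This is precisely why the two constructions are inverse on objects. Naturality of \(d\) and the two axioms of \cref{eq:oc-dl} are obtained by applying \(\Extfun F\) to the following Kleisli identities:
\begin{itemize}
\item \(\Id_{\Oc Y}\Comp\Derfun(f)=\Derfun(\Oc f)\Comp\Id_{\Oc X}\), which gives naturality;
\item \(\Derfun(\Der X)\Comp\Id_{\Oc X}=\Klocid_X\), which gives the first axiom;
\item \(\Id_{\Oc\Oc X}\Comp\Id_{\Oc X}=\Derfun(\Digg X)\Comp\Id_{\Oc X}\), which gives the second axiom.
\end{itemize}
In each case we unfold the composition using the formula \(\Extfun F g=Fg\Compl d\) just derived. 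The step we expect to demand the most care is this verification of the second axiom, since it requires correctly unfolding \(\Extfun F\) of a Kleisli composite into \(F\Digg X\Compl d_X\) on one side and \(d_{\Oc X}\Compl\Oc d_X\Compl\Digg{FX}\) on the other.

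\emph{Morphisms.} For \(\alpha:F\Natto G\), naturality of \(\Derfun(\alpha)\) as a transformation \(\Extfun F\Natto\Extfun G\) means \(Gf\Compl e_X\Compl\Oc\alpha_X=\alpha_Y\Compl Ff\Compl d_X\) for all \(f\). Taking \(f=\Id_{\Oc X}\) gives \cref{eq:oc-dl-morphism}. Conversely, \cref{eq:oc-dl-morphism} together with naturality of \(\alpha\) gives the equation for every \(f\). The functors are the identity on morphisms, so this establishes the isomorphism \(\EXTCAT{\Oc}\cong\DL\cL\Oc\).

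\emph{Liftings.} The second isomorphism is dual. Given \(d^*:F\Oc\Natto\Oc F\), we lift a coalgebra \(P\) to \((F\Coalgca P,\ d^*_{\Coalgca P}\Compl F\Coalgst P)\). The coalgebra axioms follow from those of \(P\) and the axioms of \(d^*\). Conversely, given a lifting \(\Liftfun F\), we apply it to the free coalgebras \(\Emfree X=(\Oc X,\Digg X)\); this yields structure maps \(F\Oc X\to\Oc F\Oc X\), which we post-compose with \(\Oc F\Der X\) to obtain \(d^*_X\). To show the correspondence is inverse, we use that every coalgebra structure map \(\Coalgst P\) is a coalgebra morphism \(P\to\Emfree\Coalgca P\). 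Hence \(\Liftfun F\) on a general \(P\) is determined by its value on free coalgebras. This is the main subtle point of the lifting case, and it is a standard argument \cite{PowerWatabane02}.
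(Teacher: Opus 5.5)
Your proof is correct. The paper gives no argument of its own here. It defers to the cited source and only records the correspondence \(\Extfun F(f)=(Ff)\Compl d\), which is exactly your first construction, so what you supply is a self-contained elementary replacement for the citation.

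The Kleisli half is complete, apart from a typo in the composition check: \(d_{\Oc Y}\) should read \(d_{\Oc X}\).
\begin{itemize}
\item The factorization \(f=\Derfun(f)\Comp\Id_{\Oc X}\), which says that \(\Id_{\Oc X}\) is a generic Kleisli map, is the right key lemma.
\item The three Kleisli identities you list do give naturality of \(d\) and both axioms of \cref{eq:oc-dl}.
\item The morphism correspondence is right.
\end{itemize}
The route behind the citation is 2-categorical: distributive laws are treated as morphisms in a 2-category of comonads, and Kleisli and Eilenberg--Moore categories are characterized by universal properties. That route gets compositionality for free and handles both halves uniformly. Yours needs no machinery, and your explicit formula makes the strict monoidality asserted right after the theorem a one-line check: \(\Extfun F(\Extfun G f)=F(Gf\Compl e_X)\Compl d_{GX}=FGf\Compl(Fe_X\Compl d_{GX})\), which is the extension induced by \(Fe\Compl dG\).

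The lifting half, by contrast, is only sketched. It is also not obtained from the first half by dualizing \(\cL\): that turns the Kleisli statement for the comonad \(\Oc\) into a Kleisli statement for a monad. The Kleisli/Eilenberg--Moore swap is only a formal 2-categorical duality, so the separate argument you begin is needed and should be completed.

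Write \(\sigma_X\) for the structure map of \(\Liftfun F(\Emfree X)\), so that \(d^*_X=\Oc F\Der X\Compl\sigma_X\). You still have to check the following.
\begin{itemize}
\item Naturality of \(d^*\): apply \(\Liftfun F\) to the coalgebra morphism \(\Oc h:\Emfree X\to\Emfree Y\) and use \(\Der Y\Compl\Oc h=h\Compl\Der X\).
\item The counit law: \(\Der{FX}\Compl d^*_X=F\Der X\Compl\Der{F\Oc X}\Compl\sigma_X=F\Der X\).
\item The comultiplication law: apply \(\Liftfun F\) to the coalgebra morphism \(\Digg X:\Emfree X\to\Emfree(\Oc X)\) to get \(d^*_{\Oc X}\Compl F\Digg X=\sigma_X\). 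Then use coassociativity of \(\sigma_X\) and naturality of \(\Digg{}\).
\item The second round trip: \(\Oc F\Der X\Compl d^*_{\Oc X}\Compl F\Digg X=d^*_X\Compl F(\Oc\Der X\Compl\Digg X)=d^*_X\).
\item The morphism correspondence: \(\alpha\) is a morphism of liftings iff every \(\alpha_{\Coalgca P}\) is a coalgebra morphism. At \(P=\Emfree X\), postcomposing with \(\Oc G\Der X\) gives \(\Oc\alpha_X\Compl d^*_X=e^*_X\Compl\alpha_{\Oc X}\). Conversely, this equation together with naturality of \(\alpha\) at \(\Coalgst P\) gives the coalgebra-morphism property for every \(P\).
\end{itemize}
All of these are routine and of the same kind as your free-coalgebra argument, so this is an omission rather than a genuine gap.
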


More precisely, for all endofunctor $F : \cL \arrow \cL$
there is a bijection between distributive laws 
$d : \Oc F \Natto F \Oc$ and extensions 
$\Extfun{F} : \Kl \cL \arrow \Kl \cL$ of $F$.
In particular, the extension $\Extfun F$ associated to 
a distributive law $(F, d)$ is given by 
\[ \Extfun F(X)=F(X) \qquad  
\Extfun F (f)=(Ff)\Compl d\in\Kl\cL(FX,FY) \text{ for all } 
f\in\Kl\cL(X,Y)=\cL(\Oc X,Y). \]
Furthermore, for all distributive laws $(F, d)$ and 
$(G, e)$
and their associated extensions $(F, \Extfun{F})$ and $(G, \Extfun{G})$,
$\alpha$ is a morphism from $(F, d)$ to $(G, e)$ in $\DL\cL\Oc$
if and only if it is a morphism from $(F, \Extfun{F})$ to 
$(G, \Extfun{G})$ in $\EXTCAT{\Oc}$. The isomorphism between 
$\LIFTCAT{\Oc}$ and $\coDL \cL \Oc$ can be described similarly.
We can thus build two chains of functors
\begin{equation} \label{eq:dl-chain-of-functors}
\begin{tikzcd}
	{\DL \cL \Oc} & {\EXTCAT \Oc} & {\FUN{\Kl \cL}{\Kl \cL}}
	\arrow["\sim", from=1-1, to=1-2]
	\arrow["{\Extforg{\Oc}}", from=1-2, to=1-3]
\end{tikzcd} 
\qquad 
\begin{tikzcd}
	{\coDL \cL \Oc} & {\LIFTCAT \Oc} & {\FUN{\Em \cL}{\Em \cL}}
	\arrow["\sim", from=1-1, to=1-2]
	\arrow["{\Liftforg{\Oc}}", from=1-2, to=1-3]
\end{tikzcd} 
\end{equation}
where in the above,
\begin{enumerate}
\item $\Extforg{\Oc} : \EXTCAT{\Oc} \arrow \FUN {\Kl \cL} {\Kl \cL}$
  sends $(F, \Extfun F)$ to $\Extfun F$, and
  $\alpha \in \EXTCAT \Oc((F, \Extfun F), (G, \Extfun G))$ to
  $\Derfun \alpha : \Extfun F \Natto \Extfun G$.
\item $\Liftforg{\Oc} : \LIFTCAT{\Oc} \arrow \FUN {\Em \cL} {\Em \cL}$
  sends $(F, \Liftfun F)$ to $\Liftfun F$, and
  $\alpha \in \LIFTCAT{\Oc}((F, \Liftfun F), (G, \Liftfun G))$ to the
  unique $\beta: \Liftfun F \Natto \Liftfun G$ such that
  $\Emforg \beta = \alpha$.
\end{enumerate}
In that way, distributive laws provide a convenient way to build
functors and their associated structure in the Kleisli and the
Eilenberg-Moore category of $\Oc$.
We now want to use this construction to build symmetric (co)monads in
$\Kl \cL$. To that end, it suffices to show that these chains of
functors are strict monoidal functors, for a monoidal structure that
we now define.

Observe that $\EXTCAT \Oc$ is a strict monoidal 
category, defining the product of $(F_1, \Extfun{F}_1)$ with 
$(F_2, \Extfun {F}_2)$ as $(F_1 F_2, \Extfun{F}_1 \Extfun{F}_2)$, and defining 
the product of natural transformations by their horizontal 
composition as in \cref{sec:monads}. Similarly, 
$\LIFTCAT \Oc$ is also a strict monoidal category.
Observe that $\Extforg{\Oc} : \EXTCAT{\Oc} \arrow \FUN {\Kl \cL} {\Kl \cL}$ 
and $\Liftforg{\Oc} : \LIFTCAT{\Oc} \arrow \FUN {\Em \cL} {\Em \cL}$
are trivially strict monoidal functors.

\begin{lemma}
  \label{lemma:dist-law-monoidal-struct}
  Given two objects \((F,d)\) and \((G,e)\) of \(\DL\cL\Oc\), we can
  define a new object 
  \[ \Tens{(F,d)}{(G,e)}=(FG, \phi) \text{ with } 
  \phi = \begin{tikzcd}
	{\Oc F G} & {F \Oc G} & {F G \Oc}
	\arrow["{d G}", from=1-1, to=1-2]
	\arrow["{F e}", from=1-2, to=1-3]
\end{tikzcd} \] 
  of \(\DL\cL\Oc\) and in that way we turn the category \(\DL\cL\Oc\)
  into a strict monoidal category.
  The unit of this monoidal category is \(\Sone=(\Idfun,\Id)\) where
  \(\Idfun:\cL\to\cL\) is the identity functor and \(\Id\) is the
  identity natural transformation on the functor \(\mathord\Oc\).
  The category $\coDL \cL \Oc$ is also strict monoidal, with a similar construction.
\end{lemma}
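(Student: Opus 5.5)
We need to check three things: that \((FG,\phi)\) satisfies the two commutations of \cref{eq:oc-dl}; that the product acts functorially on morphisms; and that the associativity and unit laws hold strictly. We argue by pasting diagrams in \(\FUN\cL\cL\), using only naturality of \(d\) and \(e\) and the axioms of \cref{eq:oc-dl} for each law separately.

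\emph{Counit axiom.} We must show \(FG\Der{}\Compl Fe\Compl dG=\Der{}FG\). By the counit axiom for \(e\), whiskered by \(F\), we have \(FG\Der{}\Compl Fe=F\Der{}G\). By the counit axiom for \(d\), whiskered by \(G\), we have \(F\Der{}G\Compl dG=\Der{}FG\).

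\emph{Comultiplication axiom.} We must show
\[
FG\Digg{}\Compl\phi=\phi\Oc\Compl\Oc\phi\Compl\Digg{}FG .
\]
Start from the left-hand side \(FG\Digg{}\Compl Fe\Compl dG\). First apply the axiom for \(e\) (whiskered by \(F\)), which rewrites \(FG\Digg{}\Compl Fe\) as \(Fe\Oc\Compl F\Oc e\Compl F\Digg{}G\). Then apply the axiom for \(d\) (whiskered by \(G\)), which rewrites \(F\Digg{}G\Compl dG\) as \(d\Oc G\Compl\Oc dG\Compl\Digg{}FG\).

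This leaves \(Fe\Oc\Compl F\Oc e\Compl d\Oc G\Compl\Oc dG\Compl\Digg{}FG\). To reach the right-hand side
\[
Fe\Oc\Compl dG\Oc\Compl\Oc Fe\Compl\Oc dG\Compl\Digg{}FG,
\]
it suffices to show \(F\Oc e\Compl d\Oc G=dG\Oc\Compl\Oc Fe\). This is an instance of the interchange law: both sides equal the horizontal composite \(d\Compv e\) of \(d:\Oc F\Natto F\Oc\) and \(e:\Oc G\Natto G\Oc\), i.e.\ it is naturality of \(d\) applied to the components of \(e\). This interchange step is the only slightly delicate point; the rest is bookkeeping of whiskerings.

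\emph{Monoidal structure.} On morphisms \(\alpha:(F,d)\to(F',d')\) and \(\beta:(G,e)\to(G',e')\), take the horizontal composite \(\alpha\Compv\beta\). We check \cref{eq:oc-dl-morphism} for it in two stages. First, \(\alpha\) commutes with \(d\) whiskered by \(G\), combined with naturality of \(d'\) in the \(\beta\) argument. Second, \(\beta\) commutes with \(e\) whiskered by \(F'\), combined with naturality of \(\alpha\) with respect to \(e\).

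Functoriality and strictness are inherited from \(\FUN\cL\cL\). Associativity on distributive laws holds strictly because both bracketings give \(FGe\Compl FeH\Compl dGH\): by strictness of whiskering we have \((Fe)H=FeH\), and the whiskerings of \(FGH\) agree. For the unit \((\Idfun,\Id)\), composing with \(d\) on either side returns \(d\).

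Finally, \(\coDL\cL\Oc\) is handled dually, with \(\phi^*=d^*G\Compl Fe^*\) going \(FG\Oc\Natto F\Oc G\Natto\Oc FG\). The same arguments go through with all arrows reversed, again relying on the interchange step.
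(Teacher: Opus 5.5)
Your proof is correct. The counit and comultiplication checks for \((FG,\phi)\) go through, with the interchange identity \(F\Oc e\Compl d\Oc G=dG\Oc\Compl\Oc Fe\) (naturality of \(d\) at the components of \(e\)) as the key step. So do the morphism condition for \(\alpha\Compv\beta\) via \(\alpha\Compv\beta=F'\beta\Compl\alpha G\) and the mirrored argument for \(\coDL\cL\Oc\).

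The paper states this lemma without proof, treating it as the standard composition of distributive laws, so your direct verification is exactly the routine check it omits. One small slip: in the associativity step, the leftmost factor should be \(FG\) whiskered with the \emph{third} distributive law (say \(FGf\) for an object \((H,f)\)), not \(FGe\), which is ill-typed.
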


\begin{proposition}
The isomorphisms $\DL \cL \Oc \simeq \EXTCAT \Oc$ and 
$\coDL \cL \Oc \simeq \LIFTCAT \Oc$ are strict monoidal functors.
\end{proposition}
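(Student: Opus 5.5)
We have to check that the bijection of \cref{thm:extension-dl-iso}, sending $(F,d)$ to $(F,\Extfun F)$ with $\Extfun F(f)=(Ff)\Compl d$, preserves the monoidal structure on the nose. Since it is the identity on the functor component and on natural transformations, and both monoidal structures compose the functor components and horizontally compose the transformations, everything reduces to objects. Two things must be checked. First, the unit $(\Idfun,\Id_{\Oc})$ goes to the identity extension: $\Extfun{\Idfun}(f)=f\Compl\Id=f$. Second, for $(F,d)$ and $(G,e)$ with extensions $\Extfun F$ and $\Extfun G$, the extension attached to $\Tens{(F,d)}{(G,e)}=(FG,(Fe)\Compl(dG))$ must equal $\Extfun F\Compfun\Extfun G$. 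Objects agree trivially, so the second point is a check on morphisms.

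\textbf{The key computation.} Take $f\in\Kl\cL(X,Y)=\cL(\Oc X,Y)$. By definition the extension of the composite law sends $f$ to
\[(FGf)\Compl(F e_X)\Compl(d_{GX}).\]
On the other side we compute $\Extfun F(\Extfun G f)$. We have $\Extfun G f=(Gf)\Compl e_X\in\cL(\Oc GX,GY)$, and applying $\Extfun F$ gives $F((Gf)\Compl e_X)\Compl d_{GX}$. By functoriality of $F$ this is $(FGf)\Compl(Fe_X)\Compl d_{GX}$, the same morphism. So the isomorphism $\DL\cL\Oc\simeq\EXTCAT\Oc$ is strict monoidal. The only care needed is to remember that Kleisli composition plays no role here: we compose functors $\Extfun F\Compfun\Extfun G$, and each is applied to a Kleisli morphism, i.e.\ an $\cL$-morphism out of $\Oc$.

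\textbf{The lifting side.} This is dual. The lifting attached to $(F,d^*)$ sends a coalgebra $P$ to $(F\Coalgca P,\,d^*_{\Coalgca P}\Compl F\Coalgst P)$. For the composite law $\phi^*$, built as in \cref{lemma:dist-law-monoidal-struct}, we have $\phi^*=(d^*G)\Compl(Fe^*):FG\Oc\Natto\Oc FG$. Then
\[\Liftfun F(\Liftfun G P)=\bigl(FG\Coalgca P,\ d^*_{G\Coalgca P}\Compl F(e^*_{\Coalgca P}\Compl G\Coalgst P)\bigr),\]
which equals $(FG\Coalgca P,\ \phi^*_{\Coalgca P}\Compl FG\Coalgst P)$ by functoriality of $F$. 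The unit law is again immediate, and on morphisms both liftings act as $FGh$.

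\textbf{Checks on morphisms.} It remains to confirm that the bijection respects the tensor of morphisms. In all four categories the tensor of $\alpha$ and $\beta$ is horizontal composition of the underlying natural transformations, and the isomorphisms leave these unchanged, so nothing further needs proof. I expect no real obstacle; the only subtle point is getting the order of the composite law right, $dG$ first and then $Fe$, so that it matches $\Extfun F\Compfun\Extfun G$ rather than the reverse composite.
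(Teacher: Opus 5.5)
Your proof is correct and follows the same route as the paper. The paper reduces the claim to checking that the extension (resp.\ lifting) induced by the composite law $\Tens{(F,d)}{(G,e)}$ coincides with $\Extfun F\Extfun G$, and then appeals to the standard compositionality of distributive laws; you carry out that check explicitly via functoriality of $F$, and you also treat the unit and the lifting case, which the paper leaves implicit.
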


\begin{proof}
It suffices to check that the extension of $F G$ induced 
by the distributive law 
$\Tens{(F,d)}{(G,e)}$ given in \cref{lemma:dist-law-monoidal-struct}
is equal to $\Extfun F \Extfun G$, and similarly for liftings.
This is a standard result on compositionality of distributive 
laws.
\end{proof}

In that way, the chain of functors described in
\cref{eq:dl-chain-of-functors} yields a chain of strict monoidal
functors. Thus, by \cref{thm:symmonoid-strong-mondoidal-functor-image}
every symmetric monoid in $\DL \cL \Oc$ (which is the same thing as a
symmetric monad on $\cL$ equipped with a distributive law with regard
to $\Oc$) induces a symmetric monoid in $\FUN {\Kl \cL} {\Kl \cL}$
(which is the same thing as a symmetric monad on $\Kl \cL$).  Similar
results hold for symmetric comonoid and symmetric bimonoids.
The symmetric bimonad of coherent differences will be given by this construction.

\subsection{From monoids to monads}

As explained in the previous section, we want to describe our 
\DDIFF{} by a symmetric bimonad on $\Kl \cL$, which is constructed from 
a symmetric bimonoid on the category 
$\DL \cL \Oc$ of distributive laws.
Our goal in this section is to provide a recipe to build such bimonoid
as some kind of reader/writer effect compatible with the duplication
and erasure of \(\LL\).

More precisely, we first define in \cref{sec:monoid-to-monad} two 
strong monoidal functors 
\[
  \TensL : \cL \arrow \FUN{\cL}{\cL}
  \qquad \Sfung : \Op \cL \arrow \FUN \cL \cL
\]
The functor $\TensL$ maps an object $U$ of $\cL$ to the functor
$\Tens{\_}U$, and $\Sfung$ maps an object $U$ of $\cL$ to the functor
$\Limpl{U}{\_}$ (where $\Limpl{\_}{\_}$ is the internal hom functor).
Since these functors are strong, we know
by~\cref{thm:symmonoid-strong-mondoidal-functor-image} that they send
symmetric (co)monoids in $\cL$ and $\Op \cL$ to symmetric (co)monoids
in $\FUN \cL \cL$.
In other word, we can build (co)monads in $\cL$ out of
(co)monoids. This construction revisits the definition of the
well-known \emph{reader} and the \emph{writer} (co)monads.
Then, we prove in \cref{sec:coalg-to-dl} that these two functors
induce two strong monoidal functors
\[
  \CAcoDL \Oc : \Em \cL \arrow \coDL \cL \Oc 
  \qquad \CADL \Oc : \Op{(\Em \cL)} \arrow \DL{\cL}{\Oc}
\]
meaning that the reader/writer (co)monads associated to commutative
(co)monoids that also feature a $\Oc$-coalgebra structure are in fact
compatible with the exponential of \(\LL\).
In that way, we get two chains of strong monoidal functors
\begin{equation} \label{eq:full-chain-of-functors}
  \begin{tikzcd}[row sep = tiny]
    {\CCMON \cL} & {\Em \cL} & {\coDL \cL \Oc}
    & {\LIFTCAT \Oc} & {\FUN {\Em \cL}{\Em \cL}} \\
    {\Op {\CCMON{\cL}}} & {\Op{(\Em \cL)}} & {\DL \cL \Oc}
    & {\EXTCAT \Oc} & {\FUN {\Kl \cL}{\Kl \cL}}
	\arrow["\CmonEmFun", dashed, from=1-1, to=1-2]
	\arrow["{\CAcoDL \Oc}", from=1-2, to=1-3]
	\arrow["\sim", from=1-3, to=1-4]
	\arrow["{\Liftforg \Oc}", from=1-4, to=1-5]
  \arrow["{\Op \CmonEmFun}", dashed, from=2-1, to=2-2]
	\arrow["{\CADL \Oc}", from=2-2, to=2-3]
	\arrow["\sim", from=2-3, to=2-4]
	\arrow["{\Extforg \Oc}", from=2-4, to=2-5]
\end{tikzcd} 
\end{equation}
by precomposing $\CAcoDL \Oc$ and $\CADL \Oc$ with the strict monoidal
functor $\CmonEmFun$ defined in \cref{sec:Lafont} (only when $\cL$ is
Lafont, this is why the corresponding arrows are dashed), and
postcomposing it with the chain of strong monoidal functors described
in \cref{eq:dl-chain-of-functors}.
Thus, we can carve out from any commutative bimonoid in $\Em \cL$ (or
$\CCMON \cL$ when $\cL$ is Lafont) a reader/writer effect in $\Kl \cL$
and $\Em \cL$.
As we will see in \cref{sec:representable-diff}, many concrete models
of \(\LL\) feature a \DDIFF{} built in such a way.

The proof that $\TensL : \cL \arrow \FUN{\cL}{\cL} $ induces a functor
$\CAcoDL \Oc : \Em \cL \arrow \coDL \cL \Oc$ is quite straightforward,
but the proof that $\Sfung : \Op \cL \arrow \FUN \cL \cL $ induces a
functor $\CADL \Oc : \Op{(\Em \cL)} \arrow \DL \cL \Oc$ is not.
In order to prove this result, we use a construction called the
\emph{mate isomorphism} that we introduce in \cref{sec:mates}.

\subsubsection{Reader and writer monads}

\label{sec:monoid-to-monad}

\begin{figure}
  \begin{center}
  \begin{tabular}{c | c c c c}
    Object $U$ & & $\TensL(U) = U \ITens \_$
    & & $\Sfung(U) = \Limpl{U}{\_}$ \\ \\
    Monoid & & Writer monad & & Comonad \\ 
    Comonoid \qquad & & Comonad & & Reader monad \\ 
    Bimonoid & & Bimonad & & Bimonad
  \end{tabular}
\end{center}
\caption{(Co)monads induced by (co)monoids}
\label{fig:reader-writer}
\end{figure}

\paragraph{(Co)monad induced by a tensor product.}
Assume that $\cL$ is a symmetric monoidal 
category. 
The bifunctor $\Tens{\_}{\_}: \cL \times \cL \arrow \cL$ induces a
functor $\TensL : \cL \to \FUN \cL \cL$ which maps an object
$X \in \Obj{\cL}$ to the functor $\Tens{\_}{X}$, and a morphism
$f \in \cL(X, Y)$ to the natural transformation
$\Tens{\Id}{f} : \Tens{\_}{X} \Natto \Tens{\_}{Y}$.
Observe that the right unitor and the associator
\[
  \Rightu_X^{-1} \in \cL(X, \Tens X \Sone) \qquad \Assoc_{X, U, V}
  \in \cL\big((X \ITens U) \ITens V, X \ITens (U \ITens V) \big)
\]
are also natural transformations
$\Rightu^{-1} : \Idfun \Natto \TensL(\Sone)$ and
$\Assoc : \TensL(V) \TensL(U) \Natto \TensL (\Tens U V)$.
 The following proposition is an immediate consequence of
 coherence conditions in monoidal categories.
\begin{proposition}
The triple \((\TensL,\Rightu^{-1},\Assoc)\) is a strong monoidal
  functor from \(\cL\) to \(\FUN\cL\cL\).\footnote{
    Note that the monoidal structure on \(\FUN\cL\cL\) considered here is given by 
  composition of functors in the reverse order compared 
  to the structure described in~\cref{sec:monads}.
  Fortunately, taking such reverse monoidal structure does not change 
  the definition of monads, comonads, nor their symmetric structure.
  }
\end{proposition}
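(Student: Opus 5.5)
The plan is to check directly the data and axioms of a strong monoidal functor (\cref{def:strong-monoidal-functor}), using the reversed composition order on \(\FUN\cL\cL\) indicated in the footnote: the product of \(F\) and \(G\) is \(G\Compfun F\). Accordingly we need a unit comparison \(\Idfun\Natto\TensL(\Sone)\) and a product comparison \(\TensL(V)\Compfun\TensL(U)\Natto\TensL(\Tens UV)\), both invertible. We also need the three coherence diagrams of \cref{def:smf} (two unit laws and the associativity hexagon, here a square) to commute as equalities of natural transformations.

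\emph{Step 1: the data are well-defined natural transformations.} \(\TensL\) is a functor since \(\Tens{\_}{\_}\) is a bifunctor. The component of \(\Rightu^{-1}\) at \(X\) is \(\Inv{\Rightu_X}:X\to\Tens X\Sone\), and it is natural in \(X\) by naturality of \(\Rightu\). The component of the product comparison at \(X\) is \(\Assoc_{X,U,V}:\Tens{\Tensp XU}V\to\Tens X{\Tensp UV}\). It is natural in \(X\), giving a morphism of functors. It is also natural in \(U,V\), which is exactly the naturality of the comparison required in \cref{def:smf} with respect to morphisms of \(\cL\), that is, with respect to the horizontal composites \(\TensL(g)\Compv\TensL(f)=\Tens{\Tensp{\_}f}g\). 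Both comparisons are componentwise invertible, hence invertible in \(\FUN\cL\cL\). So their inverses supply the oplax structure, and it suffices to verify the lax axioms.

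\emph{Step 2: coherence axioms, checked componentwise at an arbitrary object \(X\).} The associativity axiom, evaluated at \(X\), becomes a diagram relating \(\Tens{\Tens{\Tensp XU}V}W\) and \(\Tens X{\Tensp{U}{\Tensp VW}}\). One path is \(\Tens{\Assoc_{X,U,V}}W\) followed by \(\Assoc_{X,\Tens UV,W}\) and then \(\Tens X{\Assoc_{U,V,W}}\). The other is \(\Assoc_{\Tens XU,V,W}\) followed by \(\Assoc_{X,U,\Tens VW}\). One must take care with how horizontal composition whiskers each \(\Assoc\): one of the whiskered terms is \(\TensL(W)\) applied to \(\Assoc\), giving \(\Tens{\Assoc}W\), and the other is \(\Assoc\) instantiated at \(\Tens XU\). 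With this settled, the required square is precisely Mac Lane's pentagon. The two unit axioms reduce, at \(X\), to the triangle identity \(\Tensp{\Rightu_X}U\Compl\Inv{\Assoc_{X,\Sone,U}}=\Tens X{\Leftu_U}\) and to the derived identity \(\Rightu_{\Tens XU}=\Tens X{\Rightu_U}\Compl\Assoc_{X,U,\Sone}\). The second is a standard consequence of the pentagon and the triangle (Kelly's coherence lemma), so it can be invoked via the coherence theorem.

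\emph{Main obstacle.} The mathematical content is only the pentagon and triangle. The real difficulty is bookkeeping: the reversed monoidal order on \(\FUN\cL\cL\), and which side of the horizontal composite \(\Compv\) whiskers which comparison, so that each axiom lands exactly on a pentagon or triangle instance rather than a mirrored variant. I would first write out the associativity axiom at a generic \(X\) with explicit components, confirm it is the pentagon, and then handle the unit laws the same way, appealing to Mac Lane's coherence theorem for the derived unit identity.
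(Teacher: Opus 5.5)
Your proposal is correct and takes the same route as the paper, which simply states the result as an immediate consequence of the coherence conditions of monoidal categories. Your componentwise unfolding is exactly what that one-line justification abbreviates: the associativity axiom becomes Mac Lane's pentagon, and the two unit axioms become the triangle identity and Kelly's derived identity \(\Rightu_{\Tens XU}=\Tens X{\Rightu_U}\Compl\Assoc_{X,U,\Sone}\). The only slip is wording: since the associator of \(\FUN\cL\cL\) is an identity, the associativity axiom collapses to a pentagon, not a square.
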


Since the functor \(\TensL:\cL\to\FUN\cL\cL\) is strong monoidal, we
know by \cref{thm:symmonoid-strong-mondoidal-functor-image} that, for
each commutative monoid
\(M = (\Monoidca{M}, \Monoidu{M}, \Monoidm{M})\), the functor
\(\TensL{\Monoidca M}\) has a structure of a symmetric monoid in the
monoidal category \(\FUN\cL\cL\), that is, it is a symmetric monad.
For instance, the monad multiplication
$\TensL{\Monoidca M} \TensL{\Monoidca M} \Natto \TensL{\Monoidca M}$
is given by the following natural transformation.
\[
  \begin{tikzcd}[column sep = large]
	{\Monoidca M \ITens (\Monoidca M \ITens X)} 
        & {X \ITens (\Monoidca M \ITens \Monoidca M)} & {X \ITens
          \Monoidca M} \arrow["{\Assoc_{X, \Monoidca M, \Monoidca
            M}}", from=1-1, to=1-2] \arrow["{X \ITens \Monoidm M}",
        from=1-2, to=1-3]
  \end{tikzcd}
\]
This monad is known as the \emph{writer monad}, also called the
\emph{monoid action monad}.
Similarly, every commutative comonoid $C$ 
induces a symmetric comonad 
$\TensL \Cmonoidca{C}$, and every symmetric bimonoid induces 
a symmetric bimonad.

\paragraph{(Co)monads induced by the internal hom.}
When \(\cL\) is a symmetric monoidal category (SMC) which is closed
(SMCC), we use \((\Limpl XY,\Evlin\in\cL(\Tens{\Limplp XY}X,Y))\) for
(a choice for) the internal hom object from \(X\) to \(Y\) and, given
\(f\in\cL(\Tens ZX,Y)\), we use \(\Curlin f\) for the corresponding
morphism in \(\cL(Z,\Limpl XY)\).
This operation is bijective, and we use \(\Curlininv\) for its inverse:
if \(f\in\cL(Z,\Limpl XY)\) then
\(\Curlininv f=\Evlin\Compl \Tensp fX\).

Let \(\cL\) be an SMCC.
The internal hom functor \(\Limpl{}{}:\Op\cL\times\cL\to\cL\) can be
curried as a functor \(\Sfung:\Op \cL \arrow \FUN\cL\cL\).
In other words \(\Sfung(U)=\Limpl U\_\) and similarly for morphisms.
There are natural isomorphisms \(\Sfungmonz:\Id\Natto\Sfung(\Sone)\)
and
\((\Sfungmont)^{U,V}:\Sfung(U)\Compfun\Sfung(V)\Natto\Sfung(\Tens
UV)\), which are defined in the obvious way, using the monoidal
closeness of \(\cL\).
For instance, given an object \(X\) of \(\cL\), the morphism
\((\Sfungmont)^{U,V}_X \) 
is
\[
  (\Sfungmont)^{U,V}_X = \Curlin(f)
  \in\cL\bigg(\Sfung(U)\Big(\Sfung(V)(X)\Big),\Sfung(\Tens
  UV)(X)\bigg)
\] 
where $f$ is given by the following composition
\[
  \begin{tikzcd}
    {(\Limpl{U}{(\Limpl{V}{X})}) \ITens (U \ITens V) } &
    {(\Limpl{U}{(\Limpl{V}{X})} \ITens U) \ITens V} & {(\Limpl V X)
      \ITens V} & X \arrow["{\Assoc^{-1}}", from=1-1, to=1-2]
    \arrow["{\Evlin \ITens V}", from=1-2, to=1-3] \arrow["\Evlin",
    from=1-3, to=1-4]
  \end{tikzcd}.
\]
The following result holds by standard SMCC computations.
\begin{proposition}
  The triple \((\Sfung,\Sfungmonz,\Sfungmont)\) is a strong monoidal
  functor from \(\Op\cL\) to \(\FUN\cL\cL\).
\end{proposition}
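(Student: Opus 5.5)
To prove that \((\Sfung,\Sfungmonz,\Sfungmont)\) is a strong monoidal functor from \(\Op\cL\) to \(\FUN\cL\cL\), I will treat the internal hom as a right adjoint and transport the monoidal coherences of \(\cL\) along the currying bijection \(\Curlin\). The monoidal structure on \(\Op\cL\) is that of \(\cL\) with the structural isos reversed. The monoidal structure on \(\FUN\cL\cL\) is composition, in the same order as for \(\TensL\).

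The first step is to check that everything is well defined. I will check that \(\Sfung\) is a functor \(\Op\cL\to\FUN\cL\cL\): for \(u\in\cL(U',U)\), the natural transformation \(\Sfung(u)_X=\Curlin(\Evlin\Compl\Tensp{\Limplp UX}u)\) is natural in \(X\) and functorial in \(u\), and both facts follow from uniqueness of curried morphisms. Next I will check that \(\Sfungmonz_X=\Curlin(\Rightu_X)\) and \((\Sfungmont)^{U,V}_X\) are natural in \(X\). I will also check that \(\Sfungmont\) is natural in \(U\) and \(V\), with the correct contravariance. Each of these is proved by applying \(\Curlininv\) to both sides and using naturality of \(\Assoc\), \(\Evlin\) and \(\Rightu\).

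The second step is invertibility. \(\Sfungmonz_X\) is invertible with inverse \(\Rightu_X\Compl\Inv{(\Rightu_{\Limplp\Sone X})}\)... more conceptually, by Yoneda. Indeed, \(\cL(Z,\Limpl\Sone X)\Isom\cL(\Tens Z\Sone,X)\Isom\cL(Z,X)\) naturally in \(Z\), and \(\Sfungmonz\) induces exactly this bijection. Similarly, \(\cL(Z,\Limpl U{(\Limpl VX)})\Isom\cL(\Tens{\Tensp ZU}V,X)\Isom\cL(\Tens Z{\Tensp UV},X)\Isom\cL(Z,\Limpl{\Tensp UV}X)\) naturally in \(Z\). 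I will check that postcomposition with \((\Sfungmont)^{U,V}_X\) realizes this composite bijection, so that \(\Sfungmont\) is an iso by Yoneda. This is also what makes the functor strong rather than merely lax.

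The third step, which is the main obstacle, is verifying the three coherence diagrams of \cref{def:smf}. My plan is to avoid diagram chasing in \(\cL\) by using the same Yoneda reformulation. Under \(\cL(Z,\Sfung(U_1)\cdots\Sfung(U_n)X)\Isom\cL(Z\ITens U_1\ITens\cdots\ITens U_n,X)\), every structural map in the diagrams, including those built from \(\Sfungmont\) and \(\Sfungmonz\), corresponds to precomposition with a canonical map built from \(\Assoc\), \(\Leftu\) and \(\Rightu\) alone. The associativity hexagon then reduces to Mac Lane's pentagon for \(\Tens{\Tens{\Tens Z{U_1}}{U_2}}{U_3}\to\Tens Z{\Tensp{U_1}{\Tensp{U_2}{U_3}}}\). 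The two unit diagrams reduce to the triangle axiom and to \(\Leftu_\Sone=\Rightu_\Sone\). Any two such canonical maps agree by the coherence theorem, so the diagrams commute. The delicate point is bookkeeping: the order of composition in \(\FUN\cL\cL\) must match the order of tensoring, and the direction of the structure maps must be right for \(\Op\cL\). I would fix these once on the binary case and then carry them into the ternary case.
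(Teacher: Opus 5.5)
Your proposal is correct and is essentially what the paper means by ``standard SMCC computations'' (it gives no further detail). Transporting everything along the currying bijections and using Yoneda to reduce the unit and associativity diagrams to the triangle and pentagon axioms of \(\cL\) is exactly right.

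Two small slips need fixing:
\begin{itemize}
\item Your aborted inverse is ill-typed; it should read \(\Evlin\Compl\Inv{(\Rightu_{\Limplp\Sone X})}\).
\item The composition order on \(\FUN\cL\cL\) here is the ordinary one, \(\Sfung(U)\otimes\Sfung(V)=\Sfung(U)\Compfun\Sfung(V)\). This is the opposite of the reversed order the paper uses for \(\TensL\), as forced by \(\TensL(V)\TensL(U)\dashv\Sfung(U)\Sfung(V)\). It is in fact the order your hom-set isomorphisms \(\cL(Z,\Sfung(U_1)\cdots\Sfung(U_n)X)\Isom\cL(Z\ITens U_1\ITens\cdots\ITens U_n,X)\) already use.
\end{itemize}
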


Since the functor \(\Sfung:\Op\cL\to\FUN\cL\cL\) is strong monoidal,
we know by \cref{thm:symmonoid-strong-mondoidal-functor-image} that,
for any object \(C = (\Cmonoidca{C}, \Cmonoidu{C}, \Cmonoidm{C})\) of
\(\CCMON\cL = \CMON{\Op \cL}\), the functor \(\Sfung{\Cmonoidca C}\)
has a structure of symmetric monad in the monoidal category
\(\FUN\cL\cL\).
For instance, the monad multiplication is given 
by the following natural transformation.
\[
  \begin{tikzcd}[column sep = large]
    {\Limpl{\Cmonoidca
        C}{\Limpl{\Cmonoidca C}{X}}} &[1em] {\Limpl{\big(\Cmonoidca C
        \ITens \Cmonoidca C \big)}{X}} & {\Limpl{\Cmonoidca C}{X}}
    \arrow["{(\Sfungmont)^{\Cmonoidca C, \Cmonoidca C}_{X}}",
    from=1-1, to=1-2] \arrow["{\Limpl{\Cmonoidm C}{X} }", from=1-2,
    to=1-3]
  \end{tikzcd}
\]

\begin{example}
  It is well known (see \cite[Prop.~16]{Mellies09} for instance) that
  every object $R$ in a cartesian category $\cC$ has a canonical
  commutative comonoid structure, with comultiplication given by the
  pairing $\Pair{\Id_R}{\Id_R} \in \cC(R, R \times R)$.
  Assuming that $\cC$ is cartesian closed, the symmetric monad
  structure $\Sfung R$ associated to this comonoid structure
  corresponds to the well known \emph{reader monad}.
\end{example} 

Similarly, every symmetric commutative monoid $M$ induces a symmetric
comonad $\Sfung \Monoidca{M}$, and every symmetric bimonoid induces a
symmetric bimonad.

\subsubsection{The mate isomorphism.}

\label{sec:mates}
Observe that for all object $U \in \Obj{\cL}$, the functors
$\TensL(U)$ and $\Sfung(U)$ are adjoints, through the adjunction
$U \ITens \_ \dashv \Limpl{U}{\_}$ of SMCCs.
This adjunction induces a bijection between natural transformations
involving $\TensL(U)$ and natural transformations involving
$\Sfung(U)$ called the \emph{mate isomorphism}~\cite{Kelly74} that we
now briefly explain.
Consider two adjunctions
\[
  \begin{tikzcd}
    \cC & \cD \arrow[""{name=0, anchor=center, inner sep=0}, "L",
    curve={height=-12pt}, from=1-1, to=1-2] \arrow[""{name=1,
      anchor=center, inner sep=0}, "R", curve={height=-12pt},
    from=1-2, to=1-1] \arrow["\dashv"{anchor=center, rotate=-90},
    draw=none, from=0, to=1]
  \end{tikzcd}
  \qquad
  \begin{tikzcd}
    {\cC'} & {\cD'} \arrow[""{name=0, anchor=center,
      inner sep=0}, "{L'}", curve={height=-12pt}, from=1-1, to=1-2]
    \arrow[""{name=1, anchor=center, inner sep=0}, "{R'}",
    curve={height=-12pt}, from=1-2, to=1-1]
    \arrow["\dashv"{anchor=center, rotate=-90}, draw=none, from=0,
    to=1]
  \end{tikzcd}.
\]
Then for all functors $H : \cC \arrow \cC'$ and $K : \cD \arrow \cD'$,
there is a \emph{mate isomorphism}~\cite[Prop.~2.1]{Kelly74} which is
a bijection between natural transformations $\Lmate : L' H \Natto K L$
and natural transformations $\Rmate: K R \Natto R' K$.
Furthermore, this bijection is compositional,
see~\cite[Prop.~2.2]{Kelly74} for a definition of compositionality
that we will not fully spell out here.
It turns out that the structures on $\TensL$ and 
$\Sfung$ we described in the previous section are mates.
For example, for all monoid  
$M = (\Monoidca{M}, \Monoidu M, \Monoidm M)$, the natural 
transformations 
\[
  \Id \ITens \Monoidm M : \TensL(\Monoidca M \ITens \Monoidca M)
  \Natto \TensL \Monoidca M
  \quad \text{ and }
  \quad \Limpl{\Monoidm
    M}{\Id} : \Sfung \Monoidca M \Natto \Sfung(\Monoidca M \ITens
  \Monoidca M)
\]
are mates, it suffices to unfold the definition of mates
\cite[Prop.~2.1]{Kelly74} for the adjunctions
$\TensL(\Monoidca M \ITens \Monoidca M) \dashv \Sfung(\Monoidca M
\ITens \Monoidca M)$ and
$\TensL \Monoidca M \dashv \Sfung \Monoidca M$, taking
$H = K = \smash{\Idfun_{\cL}}$.
Similarly, the strong monoidal structures 
\[
  \Assoc^{-1} : \TensL(U \ITens V) \Natto \TensL(V) \TensL(U)
  \quad \text{ and }
  \quad
  \Sfungmont : \Sfung(U) \Sfung(V) \Natto \Sfung(U \ITens V)
\]
are mates, also by definition. We use here the adjunction
$\TensL(V) \TensL(U) \dashv \Sfung(U) \Sfung(V)$ given by composition
of the adjunctions $\TensL(V) \dashv \Sfung(V)$ and
$\TensL(U) \dashv \Sfung(U)$.
Finally, the induced monad multiplication on $\TensL\Monoidca{M}$ 
and the induced comonad comultiplication on $\Sfung \Monoidca{M}$
\[
  \begin{tikzcd}[column sep = large, row sep = tiny]
	{\Monoidca M \ITens (\Monoidca M \ITens X)} 
    & {X \ITens (\Monoidca M \ITens \Monoidca M)}
    &[3em] {X \ITens \Monoidca M} \\
  {\Limpl{\Monoidca M}{X}}
    & {\Limpl{\big(\Monoidca M \ITens \Monoidca M \big)}{X} \ \ \ }
    & {\Limpl{\Monoidca M}{\Limpl{\Monoidca M}{X}}} 
	\arrow["{\Assoc_{X, \Monoidca M, \Monoidca M}}", from=1-1, to=1-2]
	\arrow["{X \ITens \Monoidm M}", from=1-2, to=1-3]
	\arrow["{\big((\Sfungmont)^{\Monoidca M, \Monoidca M}_{X}\big)^{-1}}", from=2-2, to=2-3]
	\arrow["{\Limpl{\Monoidm M}{X} }", from=2-1, to=2-2]
  \end{tikzcd}
\]
are mates, by compositionality of the mate
construction~\cite[Prop.~2.2]{Kelly74}.
Observe indeed that these two natural transformations are compositions
of natural transformations that are mates of each other.
A similar story holds for all the structure on $\TensL$ and $\Sfung$
we described in this section, as summarized in \cref{fig:mates}.

\begin{figure}
  \begin{center}
  \begin{tabular}{c | c c c c}
    & & Structure on $\TensL$
    & & Structure on $\Sfung$ \\ \\ \hline \\
    Strong 
      & & $\Rightu : \TensL(\Sone) \Natto \Idfun$ 
      & & $\Sfungmonz : \Idfun \Natto \Sfung(\Sone)$ \\ 
    monoidality
      & & $\Assoc^{-1} : \TensL(U \ITens V) \Natto \TensL(V) \TensL(U)$
      & & $\Sfungmont : \Sfung(U) \Sfung(V) \Natto \Sfung(U \ITens V)$ \\ \\
    Monoid 
      & & $\Id \ITens \Monoidu M : \TensL(\Sone) \Natto \TensL \Monoidca M$
      & & $\Limpl{\Monoidu M}{\Id} : \Sfung \Monoidca M \Natto \Sfung(\Sone)$ \\
    $(\Monoidca{M}, \Monoidu{M}, \Monoidm{M})$
      & & $\Id \ITens \Monoidm M : \TensL(\Monoidca M \ITens \Monoidca M)
        \Natto \TensL \Monoidca M$
      & & $\Limpl{\Monoidm M}{\Id} : \Sfung \Monoidca M
        \Natto \Sfung(\Monoidca M \ITens \Monoidca M)$ \\  
      & & \textbf{Induced monad} & & \textbf{Induced comonad} \\ \\
    Comonoid
      & & $\Id \ITens \Cmonoidu C : \TensL(\Cmonoidca C) \Natto \TensL(\Sone)$
      & & $\Limpl{\Cmonoidu C}{\Id} : \Sfung(\Sone) \Natto \Sfung \Cmonoidca C$\\
    $(\Cmonoidca{C}, \Cmonoidu{C}, \Cmonoidm{C})$
      & &$\Id \ITens \Cmonoidm C : \TensL \Cmonoidca C 
        \Natto \TensL(\Cmonoidca C \ITens \Cmonoidca C)$
      & & $\Limpl{\Cmonoidm C}{\Id} : \Sfung(\Cmonoidca C \ITens \Cmonoidca C)
        \Natto \Sfung \Cmonoidca C$ \\
      & & \textbf{Induced comonad} & & \textbf{Induced monad} \\ \\ 
    Symmetry $\Sym$ 
      & & $\Id \ITens \Sym : \TensL(X \ITens Y) \Natto \TensL(Y \ITens X)$
      & & $\Limpl{\Sym}{\Id} : \Sfung(Y \ITens X) \Natto \Sfung(X \ITens Y)$ \\
      & & \textbf{Induced symmetry} & & \textbf{Induced symmetry} \\
  \end{tabular}
\end{center}
\caption{Mate isomorphism. All natural transformations are mates by
  definition of mates~\cite[Prop.~2.1]{Kelly74}, except structures in
  \textbf{bold} that are mates by compositionality of
  mates~\cite[Prop.~2.2]{Kelly74}.}
\label{fig:mates}
\end{figure}

\subsubsection{From coalgebras to distributive laws}
\label{sec:coalg-to-dl}

We now prove that $\TensL : \cL \arrow \FUN{\cL}{\cL} $ extends to a
functor $\CAcoDL \Oc : \Em \cL \arrow \coDL \cL \Oc$ and that
$\Sfung : \Op \cL \arrow \FUN \cL \cL $ extends to a functor
$\CADL \Oc : \Op{(\Em \cL)} \arrow \DL \cL \Oc$.
As we will see, the proof for $\CAcoDL \Oc$ is rather straightforward,
and the proof for $\CADL \Oc$ relies on the mate isomorphism.
For all object \(P = (\Coalgca{P}, \Coalgst{P})\) of \(\Em\cL\),
we can define a natural transformation
$\Coalgdlmate P : \TensL(\Coalgca P) \Oc \Natto \Oc \TensL (\Coalgca P)$ as 
\begin{equation} \label{eq:Sdiff-elem-left-mate}
  \begin{tikzcd}
    {\Oc X \ITens \Coalgca{P}}
    &[1em] {\Oc X \ITens \Oc  \Coalgca{P}}
    & {\Oc(X \ITens  \Coalgca{P})}
    \arrow["{X \ITens \Coalgst P}", from=1-1, to=1-2]
    \arrow["\ExpMon2", from=1-2, to=1-3]
\end{tikzcd}.
\end{equation}

\begin{proposition} 
  \label{prop:coalg-to-distr-law-mate}
  The following assertions hold.
  \begin{enumerate}
  \item For all objects \(P\) of \(\Em\cL\), the natural
    transformation \(\Coalgdlmate P\) is a distributive law; in other
    words the pair $(\TensL \Coalgca P, \Coalgdlmate{P})$ is an object
    of $\coDL \cL \Oc$.
  \item For each morphism of $\Oc$-coalgebras \(f\in\Em\cL(P,Q)\),
    \(\TensL f : \TensL{\Coalgca P} \Natto \TensL{\Coalgca Q}\) is a
    morphism of distributive law (that is, a morphism in
    $\coDL{\cL}{\Oc}$) from $(\TensL{\Coalgca P}, \Coalgdlmate{P})$ to
    $(\TensL{\Coalgca Q}, \Coalgdlmate{Q})$
  \item The strong monoidal structure of $\TensL$,
    $\Rightu^{-1} : \Idfun \Natto \TensL(\Sone)$ and
    $\Assoc : \TensL \Coalgca{Q} \TensL \Coalgca{P} \Natto \TensL
    \Coalgca{\Tens P Q}$ are also morphisms of distributive laws,
    respectively from $(\Idfun, \Id)$ to $(\TensL 1, \Coalgdlmate{1})$
    and from
    $(\TensL \Coalgca{Q}, \Coalgdlmate{Q}) \ITens (\TensL \Coalgca{Q},
    \Coalgdlmate{Q})$ to
    $(\TensL \Coalgca{P \ITens Q}, \Coalgdlmate{P \ITens Q})$.
\end{enumerate}
In that way, we have defined a strong monoidal functor
\(\CAcoDL\Oc:\Em\cL \arrow \coDL\cL\Oc\).
\end{proposition}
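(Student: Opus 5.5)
The plan is to verify the three items in order, by direct diagram chasing. Each item reduces to a standard property of the lax monoidal structure \((\ExpMon0,\ExpMon2)\) of the comonad \(\Oc\): \(\ExpMon0\) and \(\ExpMon2\) are monoidal natural transformations compatible with \(\Der{}\) and \(\Digg{}\), since \(\Oc\) is a lax symmetric monoidal comonad. We then conclude using the fact that the monoidal structure of \(\coDL\cL\Oc\) is the one given (dually) in \cref{lemma:dist-law-monoidal-struct}.

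For item 1, naturality of \(\Coalgdlmate P\) in \(X\) follows from naturality of \(\ExpMon2\). For the counit law we must show \(\Der{X\ITens\Coalgca P}\Compl\ExpMon2\Compl(\Oc X\ITens\Coalgst P)=\Der X\ITens\Coalgca P\). By monoidality of \(\Der{}\), namely \(\Der{}\Compl\ExpMon2=\Der{}\ITens\Der{}\), this reduces to \(\Der{\Coalgca P}\Compl\Coalgst P=\Id\), which is the coalgebra counit axiom. For the comultiplication law we compare two composites. The first is \(\Digg{}\Compl\ExpMon2\Compl(\Oc X\ITens\Coalgst P)\). The second is \(\Oc(\Coalgdlmate P)\Compl\Coalgdlmate P_{\Oc}\Compl(\Digg X\ITens\Coalgca P)\), that is, \(\Oc\ExpMon2\Compl\Oc(\Oc X\ITens\Coalgst P)\Compl\ExpMon2\Compl(\Oc\Oc X\ITens\Coalgst P)\Compl(\Digg X\ITens\Coalgca P)\). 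We proceed in three steps. First, rewrite \(\Digg{}\Compl\ExpMon2=\Oc\ExpMon2\Compl\ExpMon2\Compl(\Digg{}\ITens\Digg{})\) using monoidality of \(\Digg{}\). Second, replace \(\Digg{\Coalgca P}\Compl\Coalgst P\) by \(\Oc\Coalgst P\Compl\Coalgst P\) using the coalgebra axiom. Third, move \(\Oc\Coalgst P\) across \(\ExpMon2\) by naturality. The two sides then coincide.

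For item 2, the required square \((\Oc X\ITens f)\Compl\Coalgdlmate P=\Coalgdlmate Q\Compl\Oc(X\ITens f)\) follows from naturality of \(\ExpMon2\) together with the coalgebra morphism equation \(\Coalgst Q\Compl f=\Oc f\Compl\Coalgst P\). Functoriality of \(\CAcoDL\Oc\) is inherited from \(\TensL\), since the functor does nothing new on morphisms.

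Item 3 is where I expect the real work, mainly in the associativity part. The unit part is easy: it requires \(\Oc\Rightu^{-1}=\ExpMon2\Compl(\Oc X\ITens\ExpMon0)\Compl\Rightu^{-1}\), which is exactly the unit coherence of the lax monoidal functor \(\Oc\) (\cref{def:smf}). For the associativity part, we unfold the composite distributive law of \cref{lemma:dist-law-monoidal-struct} on \(\TensL\Coalgca Q\,\TensL\Coalgca P\). It applies \(\Coalgst P\), then \(\ExpMon2\), then \(\Coalgst Q\), then \(\ExpMon2\) again. We must show that composing this with \(\Oc\Assoc\) equals \(\Coalgdlmate{P\ITens Q}\Compl\Assoc\). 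Here the coalgebra structure of \(P\ITens Q\) is \(\ExpMon2\Compl(\Coalgst P\ITens\Coalgst Q)\). The equality follows from the associativity coherence of \((\Oc,\ExpMon2)\) plus bifunctoriality of \(\ITens\). The main obstacle is keeping track of the reversed-order convention for composition in \(\FUN\cL\cL\) noted in the footnote, so that the source and target match.

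Finally, items 1 to 3 say precisely that \(P\mapsto(\TensL\Coalgca P,\Coalgdlmate P)\) is a functor \(\Em\cL\to\coDL\cL\Oc\). Its strong monoidal structure maps are those of \(\TensL\), which are isomorphisms. Their inverses are again morphisms of distributive laws, because the forgetful functor \(\coDL\cL\Oc\to\FUN\cL\cL\) reflects isomorphisms. The coherence axioms hold because they already hold for \(\TensL\) and that forgetful functor is faithful. This gives the strong monoidal functor \(\CAcoDL\Oc\).
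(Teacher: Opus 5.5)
Your proof is correct, but it takes a different route from the one the paper writes out. The paper notes that the commutations can be checked by straightforward computation, which is what you do. It then argues conceptually through the correspondence of \cref{thm:extension-dl-iso} between objects of \(\coDL\cL\Oc\) and liftings to \(\Em\cL\):
\begin{itemize}
\item the monoidal product of \(\Em\cL\) (\cref{thm:sm-extension}) lifts that of \(\cL\), so \(\_\ITens P\) lifts \(\_\ITens\Coalgca P\), and \(\Coalgdlmate{P}\) is declared to be the associated law;
\item \(\Id\ITens f\), \(\Rightu^{-1}\) and \(\Assoc\) also lift to \(\Em\cL\), so they are morphisms of distributive laws.
\end{itemize}

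Your version is self-contained and shows exactly which axioms are used. These are the monoidality of \(\Der{}\) and \(\Digg{}\), the coalgebra axioms, and the unit and associativity coherences of \((\ExpMon0,\ExpMon2)\). Your last paragraph also spells out what the paper leaves implicit: the inverses of the structure maps are again morphisms, and coherence transfers along the faithful forgetful functor to \(\FUN\cL\cL\).

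The paper's route is shorter, but it rests on two things. First, that \(\Em\cL\) is monoidal with a strict monoidal forgetful functor, which is itself proved by essentially your computations. Second, the identification of the law attached to the lifting, read off at the cofree coalgebras \((\Oc X,\Digg X)\), with \(\Coalgdlmate{P}\); the paper asserts this without computation.

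One small slip: in item 2 the required square is \(\Oc(X\ITens f)\Compl\Coalgdlmate{P}_X=\Coalgdlmate{Q}_X\Compl(\Oc X\ITens f)\). As you wrote it, the composites are swapped and do not typecheck. Your justification, naturality of \(\ExpMon2\) together with \(\Coalgst Q\Compl f=\Oc f\Compl\Coalgst P\), proves the correct version.
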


\begin{proof}
  The commutations can be checked by straightforward computations.
  Alternatively, we can use the isomorphism between distributive laws
  and liftings of \cref{thm:extension-dl-iso} to prove the result
  without explicit computations.
  \begin{enumerate}
  \item The monoidal structure of $\Em \cL$ induced by the lax
    monoidal structure $(\ExpMon{0}, \ExpMon{2})$ of $\Oc$ by
    \cref{thm:sm-extension} is a lifting (see \cref{def:lifting}) of
    the monoidal structure of $\cL$\footnote{In fact, lax monoidal
      structures on a comonad can be understood as distributive laws}.
    Then, the natural transformation $\Coalgdlmate{P}$ is the
    distributive law associated to the lifting
    $\_ \ITens P : \Em \cL \arrow \Em \cL$ of the functor
    $\_ \ITens \Coalgca P : \cL \arrow \cL$.
  \item Similarly, for every morphism of $\Oc$-coalgebras
    $f \in \Em \cL(P, Q)$, the natural transformation
    $\Id \ITens f : \_ \ITens P \Natto \_ \ITens Q$ in $\Em \cL$ is a
    lifting of the natural transformation
    $\Id \ITens f : \_ \ITens \Coalgca{P} \Natto \_ \ITens
    \Coalgca{Q}$ in $\cL$.
    Thus, $\Id \ITens f$ is a morphism of distributive laws from
    $(\TensL{\Coalgca P}, \Coalgdlmate{P})$ to
    $(\TensL{\Coalgca Q}, \Coalgdlmate{Q})$.
  \item As explained in our first item, the monoidal structure of
    $\Em \cL$ is a lifting of the monoidal structure of $\cL$. As
    such, $\Rightu^{-1} : \Idfun \Natto \TensL(\Sone)$ and
    $\Assoc : \TensL \Coalgca{Q} \TensL \Coalgca{P} \Natto \TensL
    \Coalgca{\Tens P Q}$ have liftings in $\Em \cL$ and thus are
    morphisms of distributive laws by \cref{thm:extension-dl-iso}.
\end{enumerate}
This concludes the proof.
\end{proof}

For each object \(P = (\Coalgca{P}, \Coalgst{P})\) of \(\Em\cL\), we
can define a natural transformation
$\Coalgdl P : \Oc \Sfung(\Coalgca P) \Natto \Sfung(\Coalgca P) \Oc$
that is mate of the natural transformation
$\Coalgdlmate P : \TensL(\Coalgca P) \Oc \Natto \Oc \TensL (\Coalgca
P)$.
This natural transformation is defined as the currying of
\begin{equation} \label{eq:Sdiff-elem-right-mate}
\begin{tikzcd}
  {\Oc (\Limpl{\Coalgca{P}}{X}) \ITens \Coalgca{P}}
  &[1em] {\Oc((\Limpl{\Coalgca{P}} X) \ITens \Coalgca{P}) }
  & { \Oc X}
	\arrow["{\Coalgdlmate{P}_{\Limpl{\Coalgca{P}} X}}", from=1-1, to=1-2]
	\arrow["{\Oc \Evlin}", from=1-2, to=1-3]
\end{tikzcd}.
\end{equation}
The following result is an immediate consequence of
\cref{prop:coalg-to-distr-law-mate} and of the compositionality of
mate.  We refer the reader to \cite[Sec.~7.3]{EhrhardWalch25} for a
detailed explanation on how the mate isomorphism relates distributive
laws and their morphisms.

\begin{proposition}\label{prop:coalg-to-distr-law} 
  The following assertions hold.
  \begin{enumerate}
  \item For all objects \(P\) of \(\Em\cL\), the natural
    transformation \(\Coalgdl P\) is a distributive law; in other
    words the pair $(\Sfung \Coalgca P, \Coalgdl{P})$ is an object of
    $\DL \cL \Oc$.
  \item For each morphism of $\Oc$-coalgebras \(f\in\Em\cL(P,Q)\),
    \(\Sfung f : \Sfung{\Coalgca Q} \Natto \Sfung{\Coalgca P}\) is
    also a morphism of distributive law (that is, a morphism in
    $\DL{\cL}{\Oc}$) from $(\Sfung{\Coalgca Q}, \Coalgdl{Q})$ to
    $(\Sfung{\Coalgca Q}, \Coalgdl{Q})$.
  \item For each morphism of the strong monoidal structure of $\Sfung$
    are also morphisms of distributive laws.
  \end{enumerate}
  In that way, we have defined a strong monoidal contravariant functor
  \(\CADL\Oc:\Em\cL \Tocontra \DL\cL\Oc\).
\end{proposition}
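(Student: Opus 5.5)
The plan is to transport everything along the mate isomorphism of \cref{sec:mates}, using \cref{prop:coalg-to-distr-law-mate} as the only input. Fix \(P\) and the adjunction \(\TensL(\Coalgca P)\dashv\Sfung(\Coalgca P)\). By construction, \(\Coalgdl P\) is the mate of \(\Coalgdlmate P\) with respect to this adjunction, taking \(H=K=\Oc\) in Kelly's setting. The axioms to check are the two commutations of \cref{eq:oc-dl}. The first says that \(\Sfung(\Coalgca P)\Der{}\Compl\Coalgdl P=\Der{}\Sfung(\Coalgca P)\). Each side is a composite of whiskerings of \(\Der{}\) with \(\Coalgdl P\). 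Under the mate bijection, taken for \(H=K\) varying over \(\Idfun\), \(\Oc\), \(\Oc\Oc\), these composites correspond to the composites of \(\Der{}\) and \(\Digg{}\) with \(\Coalgdlmate P\) arranged in the dual shape. That dual shape is exactly the distributive law axiom for \((\TensL\Coalgca P,\Coalgdlmate P)\) in \(\coDL\cL\Oc\).

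This correspondence rests on the compositionality of mates \cite[Prop.~2.2]{Kelly74}. Horizontal pasting of squares is respected, and so is vertical pasting, which composes along the functors \(H,K\). Moreover, a natural transformation \(\theta:H\Natto H'\) corresponds to itself, used as an identity adjunction square. Concretely, I would write each side of each axiom as a pasting of squares of these kinds and apply Prop.~2.2 termwise. Since the mate map is a bijection, the axioms for \(\Coalgdl P\) follow from those for \(\Coalgdlmate P\) given by \cref{prop:coalg-to-distr-law-mate}(1). The \(\Digg{}\) axiom is the delicate one. There the square for \(\Oc\Oc\) must be decomposed as the vertical composite of two \(\Oc\)-squares, and one must check that the mate of \(\Coalgdlmate P\Oc\Compl\Oc\Coalgdlmate P\) is the corresponding composite of \(\Coalgdl P\).

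For item 2, take \(f\in\Em\cL(P,Q)\). We have \(\Sfung f=\Limpl f\Id\), and this is the mate of \(\TensL f=\Id\ITens f\) with respect to the two adjunctions, again taking \(H=K=\Idfun\), as in \cref{fig:mates}. The morphism condition \cref{eq:oc-dl-morphism} is an equality between two pastings of \(\Sfung f\) with \(\Coalgdl{Q}\) and \(\Coalgdl P\). Note that it is contravariant, going from \((\Sfung\Coalgca Q,\Coalgdl Q)\) to \((\Sfung\Coalgca P,\Coalgdl P)\). Its mate is the pasting of \(\TensL f\) with \(\Coalgdlmate P,\Coalgdlmate Q\), which commutes by \cref{prop:coalg-to-distr-law-mate}(2). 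Item 3 is handled in the same way. The strong monoidal structure maps \(\Sfungmonz\) and \(\Sfungmont\) are mates of \(\Rightu\) and \(\Assoc^{-1}\) (\cref{fig:mates}), and these are morphisms of distributive laws by \cref{prop:coalg-to-distr-law-mate}(3). Inverting them preserves this property, because an invertible morphism of distributive laws has an inverse that is again such a morphism.

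For the composite distributive law of \cref{lemma:dist-law-monoidal-struct}, I need that the mate of \(\Coalgdlmate Q\TensL\Coalgca P\) followed by \(\TensL\Coalgca Q\Coalgdlmate P\), taken with respect to the composite adjunction, equals \(\Coalgdl P\Sfung\Coalgca Q\) followed by \(\Sfung\Coalgca P\Coalgdl Q\). This is once more horizontal compositionality of mates. Functoriality and strong monoidality of \(\CADL\Oc\) then follow, because \(\Sfung\) is already a strong monoidal functor and all structure maps have been shown to lie in \(\DL\cL\Oc\).

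I expect the main obstacle to be bookkeeping. One has to set up the \(\Digg{}\)-axiom and the tensor compatibility precisely as pastings to which Kelly's Prop.~2.2 literally applies, keeping track of the direction reversal between \(\coDL\) and \(\DL\). If that proves awkward, the fallback is a direct computation. In that approach one would uncurry using \(\Curlininv\) and reduce to \cref{eq:Sdiff-elem-right-mate}, with naturality of \(\ExpMon2\) and the coalgebra laws of \(\Coalgst P\) doing the work.
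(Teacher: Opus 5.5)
Your proposal is correct and follows the paper's own argument. The paper states the result as an immediate consequence of \cref{prop:coalg-to-distr-law-mate} and the compositionality of mates \cite[Prop.~2.2]{Kelly74}; you spell out the pastings it leaves implicit, and you rightly read item~2 as a morphism from \((\Sfung\Coalgca Q,\Coalgdl Q)\) to \((\Sfung\Coalgca P,\Coalgdl P)\), which corrects a misprint in the statement. Two harmless slips: the \(\Der{}\)- and \(\Digg{}\)-axioms live over \((H,K)=(\Oc,\Idfun)\) and \((\Oc,\Oc\Oc)\) rather than \(H=K\), which your identity-adjunction squares already handle. Also, the composite law on \(\TensL\Coalgca Q\,\TensL\Coalgca P\) is \(\TensL\Coalgca Q\,\Coalgdlmate P\) followed by \(\Coalgdlmate Q\,\TensL\Coalgca P\), not the other way round.
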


To sum up, every coalgebra $P$ induces a distributive law
$\Coalgdlmate P : \TensL(\Coalgca P) \Oc \Natto \Oc \TensL (\Coalgca
P)$, which by the mate isomorphism induces a distributive law
$\Coalgdl P : \Oc \Sfung(\Coalgca P) \Natto \Sfung(\Coalgca P) \Oc$.
In that way we get the chain of strong monoidal functors of
\cref{eq:full-chain-of-functors}.

\begin{remark}
Conversely, every distributive law 
$\Coalgdl P : \Oc \Sfung(\Coalgca P) \Natto \Sfung(\Coalgca P) \Oc$ induces by the same 
mate isomorphism a distributive law  
$\Coalgdlmate P : \TensL(\Coalgca P) \Oc \Natto \Oc \TensL (\Coalgca P)$, which in turn 
induces a morphism
\[ \Coalgst P = \begin{tikzcd}
    {\Coalgca{P}}
    & {1 \ITens \Coalgca{P}}
    &[1em] {\Oc 1 \ITens \Coalgca{P}}
    & {\Oc(1 \ITens \Coalgca{P})}
    & {\Oc \Coalgca{P}}
	\arrow["{\Leftu^{-1}_{\Coalgca P}}", from=1-1, to=1-2]
	\arrow["{\ExpMon0 \ITens \Id_P}", from=1-2, to=1-3]
	\arrow["{\Coalgdlmate P_1}", from=1-3, to=1-4]
	\arrow["\Oc \Leftu_{\Coalgca P}", from=1-4, to=1-5]
\end{tikzcd}, \]
and it is rather straightforward to check that $(\Coalgca{P}, \Coalgst{P})$ is a $\Oc$-coalgebra.
In other word, for all object $\Coalgca{P}$ there is in fact a
bijection between $\Oc$-coalgebra structures on $\Coalgst{P}$,
distributive laws
$\Coalgdlmate P : \TensL(\Coalgca P) \Oc \Natto \Oc \TensL (\Coalgca
P)$, and distributive laws
$\Coalgdl P : \Oc \Sfung(\Coalgca P) \Natto \Sfung(\Coalgca P) \Oc$.
\end{remark}

\section{Categories enriched over \Pmonoid{}s}
\label{sec:summability}

Differential categories~\cite{BluteCockettSeely06} provide an axiomatization
of differentiation in categories enriched in commutative monoids.
As explained in our introduction, we want to consider instead a theory
of differentiation on categories enriched in \emph{partial}
commutative monoids, that we define in this section.

\subsection{\PMonoid{}s and \Pimonoid{}s}

Let $\Pmuniv$ be an infinite countable set of labels, chosen once and
for all. We can take $\Pmuniv = \Nat$, but no particular properties or
structures of $\Nat$ will be used.
Let us fix a set of index sets
$\Pmindex \in \Set{\Part{\Pmuniv}, \Partfin{\Pmuniv}}$.
Choosing $\Pmindex = \Part{\Pmuniv}$ will allow us to describe partial 
monoids with at most countable sums, and taking 
$\Pmindex = \Partfin{\Pmuniv}$ will allow us to describe partial monoids 
with finite sums.

For all set $M$ and index set $I \in \Pmindex$,
an $I$-\textbf{indexed family} of 
elements of $M$ is a function $\Vect x : I \arrow M$, written 
$\Vect x = (x_i)_{i \in I}$.
We consider a set $M$  with a partial function 
$\Sigma$ called the \emph{sum} that maps 
indexed families on $M$ to elements of $M$.
An indexed family 
$(x_i)_{i \in I}$ is \emph{summable} if it is in the domain of $\Sigma$,
and we write its image as $\sum_{i \in I} x_i$.

\begin{remark}
Observe that in this definition we do not consider families indexed by arbitrary 
sets, but only by sets $I \in \Pmindex$.
In that way, we avoid a quantification on the proper class of all
at most countable sets, as opposed to the traditional definition of partial
monoids in literature~\cite{ArbibManes86,Haghverdi01,Hines13}.
This choice is by no mean restrictive: we will 
recover in \cref{def:summability-arbitrary-index} a definition of summability on 
families indexed by arbitrary sets.
\end{remark}

\begin{definition} \label{notation:sum-def}
  For any  expression
  $e$ and $e'$ involving partial sums, we write 
  $e \sumsub e'$ (or $e' \sumsubinv e$) if whenever $e$ 
  is defined, then $e'$ is also defined 
  and $e = e'$
  We write  $e \sumiff e'$ if $e \sumsub e'$ and $e \sumsubinv e'$.
\end{definition}

\begin{definition}
    \label{def:pcm} 
    \labeltext{(Unary)}{ax:unary}
  \labeltext{(PA)}{ax:pa}
  The tuple $(M, \Pmindex, \Sigma)$ is a \emph{Strong Partial Commutative Monoid (\Pmonoid{})}
 if $M$ is non-empty and if the
  sum $\Sigma$ satisfies the following axioms. \begin{itemize}
  \item \ref{ax:unary} Unary sum axiom: every singleton family
    \((x_i)_{i\in I}\in M^I\) where \(I=\Eset j\) is summable, with
    sum \(x_j\).
  \item \ref{ax:pa} Partition associativity axiom: let
    $I \in \Pmindex$, let $(x_i)_{i \in I}\in M^I$ be an indexed
    family, let $K\in\Pmindex$, and let $\{I_k\}_{k \in K}$ a
    partition of $I$ (we use the word ``partition'' in a slightly non
    standard way: the $I_k$'s can be empty).
    Then
    \[
      \sum_{i \in I} x_i \sumiff \sum_{k \in K} \left(\sum_{i \in
          I_k} x_i \right)\,.
    \]
  \end{itemize}
An \Pmonoid{} $(M, \Pfmindex, \Sigma)$ is called \emph{finitary}, and an 
\Pmonoid{} $(M, \Pimindex, \Sigma)$ is called \emph{infinitary}. 
\end{definition}

We use the terminology \Pmonoid{} instead of the terminology
$\Sigma$-monoid from~\cite{Haghverdi01} as a reference to the PCMs
of~\cite{Hines13}, which are a generalization of \Pmonoid{}s to a
weaker version of \ref{ax:pa}.

\begin{example}
  Every monoid is a \Pfmonoid{}. The set of \Mpositive{} real numbers
  $\Realp$ equipped with its usual summability is an \Pimonoid{}.
  The set $[0, 1]$ in which a family $(x_i)_{i \in I}$ is summable if
  its sum (in $\Realp$) is defined and belongs to $[0, 1]$ is both a
  \Pfmonoid{} (if we consider only finite indexed families) and an
  \Pimonoid{} (if we consider all at most countable indexed families).
\end{example}

Let us fix a \Pmonoid{} $(M, \Pmindex, \Sigma)$.
Observe that the neutral element $0$ is not given as part of the data,
but can be defined from the axioms by taking the sum over the empty
family, which is always defined by \cref{prop:subfamily-summable}
below.

\begin{lemma}
  \label{prop:subfamily-summable}
  Consider $I, I' \in \Pmindex$ such that $I' \subseteq I$. Then for
  all summable family $(x_i)_{i \in I}$, the family $(x_i)_{i \in I'}$
  is also summable.
  In particular, the empty indexed family is summable.
\end{lemma}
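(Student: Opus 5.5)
The plan is to obtain the summability of the subfamily from the partition associativity axiom \ref{ax:pa}, reading it from left to right: if the whole family is summable, then the grouped sum is defined, and so each group's inner sum must be defined as well. The empty family will then be handled as a special case of the first claim.

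First I choose a partition of \(I\) into two blocks. I fix an index set \(K\in\Pmindex\) with exactly two elements, say \(K=\Eset{a,b}\subseteq\Pmuniv\). This is possible because \(\Pmuniv\) is infinite and both choices of \(\Pmindex\) contain all two-element subsets. I then set \(I_a=I'\) and \(I_b=I\setminus I'\). This is a partition of \(I\) in the paper's relaxed sense, where empty blocks are allowed. I also check that \(I'\) and \(I\setminus I'\) lie in \(\Pmindex\). When \(\Pmindex=\Part\Pmuniv\) this is immediate. When \(\Pmindex=\Partfin\Pmuniv\), both are subsets of the finite set \(I\), so they are finite.

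Next I apply \ref{ax:pa} to the family \((x_i)_{i\in I}\) with this partition. The axiom gives \(\sum_{i\in I}x_i\sumiff\sum_{k\in K}(\sum_{i\in I_k}x_i)\). Since the left-hand side is defined, the right-hand side is defined too. I read the notation of \cref{notation:sum-def} as follows: the outer expression can only be defined if every inner sum \(\sum_{i\in I_k}x_i\) is defined, since otherwise the family being summed over \(K\) does not even exist. Taking \(k=a\), the sum \(\sum_{i\in I'}x_i\) is defined, which proves the first claim.

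For the empty family I use that \(M\) is non-empty. I pick some \(x\in M\) and some \(j\in\Pmuniv\), and form the singleton family on \(\Eset j\). This family is summable by \ref{ax:unary}. Since \(\emptyset\subseteq\Eset j\) and \(\emptyset\in\Pmindex\), the first part shows that the empty family is summable.

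The only delicate step is justifying that definedness of the partitioned sum forces every inner block sum to be defined. This is a matter of how the notation in \cref{notation:sum-def} is read: an expression counts as defined only if all its subexpressions are. I would state this convention explicitly at that point in the proof.
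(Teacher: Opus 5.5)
Your proof is correct and matches the paper's argument: you apply (PA) to the two-block partition $\{I', I\setminus I'\}$ of $I$, then obtain the empty family as a subfamily of a singleton family, which is summable by (Unary) because $M$ is non-empty. The extra points you make explicit are details the paper leaves implicit: indexing the partition by a two-element $K\in\Pmindex$, checking that both blocks lie in $\Pmindex$, and the convention that a defined outer sum forces its inner sums to be defined.
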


\begin{proof}
  The first statement follows from~\ref{ax:pa} on the family
  $(x_i)_{i \in I}$, used on the partition $\Set{I \setminus I', I'}$.
  In particular, let $x \in M$ ($M$ is non-empty by assumption), let
  \(I=\Eset i\in\Pmindex\) be a singleton set and \(\Vect x\in M^I\)
  be defined by \(x_i=x\).
  Then by \ref{ax:unary}, the singleton family $\Vect x$ is summable,
  and since $\emptyset \subseteq I$, the empty family is summable.
\end{proof}

This element $0$ behaves as a neutral element.
\begin{proposition}
  \label{prop:zero-neutral}
  Let $\Vect x = (x_i)_{i \in I}$ be an
  indexed family, and $I' \in \Pmindex$ be such that
  $\Supp{\Vect x} \subseteq I' \subseteq I$, where
  $\Supp{\Vect x} = \Set{i \in I \St x_i \neq 0}$.
  Then $\sum_{i \in I'} x_i \sumiff \sum_{i \in I} x_i$.
\end{proposition}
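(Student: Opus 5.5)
We will prove the statement by applying \ref{ax:pa} twice, each time to a two-block partition, together with \ref{ax:unary} and \cref{prop:subfamily-summable}. Write \(Z=I\setminus I'\). Every \(i\in Z\) satisfies \(x_i=0\), because \(\Supp{\Vect x}\subseteq I'\). Also, \(I'\) and \(Z\) belong to \(\Pmindex\), since they are subsets of \(I\in\Pmindex\) and \(\Pmindex\) is closed under subsets (finite subsets of a finite set are finite). Let \(K=\Eset{1,2}\subseteq\Pmuniv\) be a two-element index set and use the partition \(\Eset{I',Z}\) of \(I\). By \ref{ax:pa},
\[
  \sum_{i\in I}x_i \sumiff x'+z ,
\]
where \(x'=\sum_{i\in I'}x_i\) and \(z=\sum_{i\in Z}x_i\) are the two block sums, and \(x'+z\) means the \(K\)-indexed sum of these two elements.

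The key step is to show that the zero family \((x_i)_{i\in Z}\) is summable with sum \(0\). Recall that \(0\) is defined as the sum of the empty family, which is summable by \cref{prop:subfamily-summable}. Take a singleton \(L=\Eset\ell\in\Pmindex\) and apply \ref{ax:pa} to the family \((x_i)_{i\in Z}\) with the partition \(\Eset{Z_l}_{l\in L}\), where \(Z_\ell=Z\). This gives \(\sum_{i\in Z}x_i\sumiff\sum_{l\in L}\sum_{i\in Z_l}x_i\), which is trivial and does not help. So I will instead use the partition of \(Z\) into singletons, indexed by \(Z\) itself, i.e.\ \(Z_j=\Eset j\). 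Each block is a singleton family, so by \ref{ax:unary} it is summable with sum \(x_j=0\). Then \ref{ax:pa} gives \(\sum_{i\in Z}x_i\sumiff\sum_{j\in Z}0\). It therefore suffices to show that any family of zeros indexed by some \(J\in\Pmindex\) is summable with sum \(0\).

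For this, start from the empty family and view \(\emptyset\) as partitioned into the blocks \((\emptyset)_{j\in J}\). These blocks may be empty, as the definition of partition explicitly allows. By \ref{ax:pa}, \(\sum_{i\in\emptyset}x_i\sumiff\sum_{j\in J}\sum_{i\in\emptyset}x_i=\sum_{j\in J}0\). The left-hand side is defined and equal to \(0\), so \(\sum_{j\in J}0\) is defined and equal to \(0\). Hence \(z\) is defined and \(z=0\).

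It remains to show \(x'+0\sumiff x'\). This is again \ref{ax:pa}, applied to the family \((x_i)_{i\in I'}\), indexed by \(I'\), with the partition of \(I'\) indexed by \(K=\Eset{1,2}\) into \(I'\) and \(\emptyset\). It gives \(x'\sumiff x'+0\), where the sum over the empty block is \(0\). Both directions of \(\sumiff\) then follow. If \(\sum_{i\in I'}x_i\) is defined, then \(x'+0\) is defined and equals \(x'\). Since \(z=0\), this is \(x'+z\), so \(\sum_{i\in I}x_i\) is defined and equals \(x'\). Conversely, if \(\sum_{i\in I}x_i\) is defined, \cref{prop:subfamily-summable} shows that \(x'\) is defined, and the same chain of equalities gives \(\sum_{i\in I}x_i=x'\).

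The main obstacle is the lemma that arbitrary families of zeros sum to \(0\), and I will handle it via the empty-block partition trick described above. A smaller bookkeeping point is that all index sets used (\(K\), \(I'\), \(Z\), and \(Z\) used as its own index set) must lie in \(\Pmindex\). This holds in both the finitary and the countable setting, because each of them is a subset of \(I\) or a two-element set.
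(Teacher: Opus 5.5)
Your proof is correct, but it takes a longer route than the paper's. The paper applies \ref{ax:pa} only once. It partitions $I'$ into blocks indexed by $I$ itself, with $J_i=\{i\}$ for $i\in I'$ and $J_i=\emptyset$ for $i\notin I'$. Each inner sum $\sum_{j\in J_i}x_j$ is then defined and equal to $x_i$: by \ref{ax:unary} when $i\in I'$, and because the empty sum is $0=x_i$ when $i\notin I'$. So the right-hand side of \ref{ax:pa} is literally $\sum_{i\in I}x_i$, and the empty blocks absorb the zero entries directly.

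You instead split $I$ into $I'$ and $Z=I\setminus I'$. You then prove two auxiliary facts: a family of zeros sums to $0$ (partition $\emptyset$ into empty blocks), and $x'+0\sumiff x'$ (partition $I'$ as $\{I',\emptyset\}$). Each of these is an instance of the same empty-block trick that the paper uses once. Your version needs three applications of \ref{ax:pa} instead of one. In exchange, it isolates the neutrality of $0$ as standalone lemmas, which some readers may find more transparent.

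Two small cleanups. First, the singleton-partition step for $Z$ is superfluous: since $x_i=0$ for every $i\in Z$, the family $(x_i)_{i\in Z}$ already is a family of zeros indexed by $Z$, so your zero-family lemma applies directly. Second, the abandoned attempt with the one-block partition indexed by $\{\ell\}$ should simply be deleted. Also, for $K$ you should pick two distinct labels of the infinite set $\Pmuniv$ rather than assuming $1,2\in\Pmuniv$.
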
 

\begin{proof}
  For all $i \in I$, define $J_i = \{i\}$ if $i \in I'$, and
  $J_i = \emptyset$ otherwise.
  Then, $\{J_i\}_{i \in I}$ is a partition of $I'$.  Thus, by
  \ref{ax:pa},
  $\sum_{i \in I'} x_i \sumiff \sum_{i \in I} \sum_{j \in J_i} x_j$.
  But for all $i \in I$, $\sum_{j \in J_i} x_j = x_i$, by
  \ref{ax:unary} if $x_i \neq 0$, and by definition of $0$ if
  $x_i = 0$.
  Thus, $ \sum_{i \in I'} x_a \sumiff \sum_{i \in I} x_i$. 
\end{proof}

\begin{remark}
  For $\Pmindex = \Pfmindex$, we can equivalently define an
  \Pmonoid{} as a tuple $(M, +, 0)$ such that for all $a, b, c \in M$,
  (1) $a + b \sumiff b + a$, 
  (2) \(0+a\) is defined for all \(a\in\Seta\) and \(0+a=a\), and
  (3) \((a+b)+c \sumiff a+(b+c)\).
  We prefer \cref{def:pcm} for dealing with both the finitary and infinitary 
  cases in a unified way.
\end{remark}

A crucial feature of \Pmonoid{}s is that the choice of the index set
is completely irrelevant.
Given two sets $I, J$ and a bijection \(\phi:I \arrow J\), we define a
$J$-indexed family $\Famact\phi \Vect x= (y_j)_{j \in J}$ by
$y_{j}= x_{\Inv\phi(j)}$.

\begin{proposition}
  \label{prop:reindexing-bijection}
  For all index sets $I,J \in \Pmindex$, for all bijection
  $\phi : I \arrow J$ and for all $I$-indexed family $\Vect x$,
  $\Famact{\phi} \Vect x$ is summable if and only if $\Vect x$ is
  summable, and the two sums are equal when defined.
\end{proposition}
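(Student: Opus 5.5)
The plan is to apply \ref{ax:pa} twice: once to $\Vect x$ and once to $\Famact\phi\Vect x$, each time with a partition into singletons. The two sides then reduce to the same family, up to renaming of the outer index set.

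Let $\Vect y=\Famact\phi\Vect x$, so that $y_j=x_{\Inv\phi(j)}$. Consider the family $(I_j)_{j\in J}$ with $I_j=\Eset{\Inv\phi(j)}$. Since $\phi$ is a bijection, this is a partition of $I$ indexed by $J\in\Pmindex$. By \ref{ax:pa},
\[
  \sum_{i\in I}x_i\sumiff\sum_{j\in J}\Big(\sum_{i\in I_j}x_i\Big).
\]
By \ref{ax:unary}, each inner family $(x_i)_{i\in I_j}$ is summable with sum $x_{\Inv\phi(j)}=y_j$. The right-hand side is therefore exactly $\sum_{j\in J}y_j$: the outer family is $(y_j)_{j\in J}$ and is defined precisely when $\Vect y$ is summable. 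This already gives $\sum_{i\in I}x_i\sumiff\sum_{j\in J}y_j$, which is both directions of the equivalence together with equality of the sums.

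One subtlety must be handled. The statement of \ref{ax:pa} requires that each inner sum be defined before the outer family makes sense, and I read it as asserting the $\sumiff$ relation whenever all inner sums are defined. Here this is unproblematic, because singleton sums are always defined by \ref{ax:unary}. If one prefers a symmetric presentation, one can also run the same argument from $\Vect y$ with the partition $(\Eset{\phi(i)})_{i\in I}$ of $J$, which yields the converse directly.

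I do not expect a real obstacle; this is a direct unfolding of the axioms. The only point needing care is checking that $(I_j)_{j\in J}$ is a legitimate partition indexed by a set in $\Pmindex$: the $I_j$ are pairwise disjoint and cover $I$ because $\phi$ is a bijection, and the index set is $J\in\Pmindex$ by hypothesis.
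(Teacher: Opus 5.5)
Your proof is correct and is essentially the paper's argument: one application of \ref{ax:pa} to a partition into singletons, with \ref{ax:unary} evaluating the inner sums. The only difference is that you partition $I$ as $(\{\phi^{-1}(j)\})_{j\in J}$ and apply the axiom to $\Vect x$, whereas the paper partitions $J$ as $(\{\phi(i)\})_{i\in I}$ and applies it to $\Famact\phi\Vect x$ (your ``symmetric presentation''). Since $\sumiff$ is two-sided, either choice alone gives both directions and the equality of the sums.
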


\begin{proof}
  It suffices to apply \ref{ax:unary} and \ref{ax:pa} on the partition
  $\Set{\phi(i)}_{i \in I}$ of $J$.
\end{proof}

Because the choice of the index sets is irrelevant, we can now define
a notion of partial sums on arbitrary index sets, and not only on
elements of $\Pmindex$.

\begin{definition}
  \label{def:summability-arbitrary-index}
  Let $K$ be a set (finite if $\Pmindex = \Pfmindex$, at most
  countable if $\Pmindex = \Pimindex$), that does not necessarily
  belong to $\Pmindex$.  An indexed family $\Vect x = (x_k)_{k \in K}$
  is \emph{summable} if there exists $I \in \Pmindex$ and a bijection
  $\phi : K \arrow I$ such that
  $(y_i)_{i \in I} = \Famact \phi \Vect x$ is summable, and we set
  \[ \sum_{k \in K} x_k = \sum_{i \in I} (\Famact \phi \Vect x)_i \]
\end{definition}

This definition makes sense because the summability of $\Vect x$ and
its sum do not depend on the choice of the bijection.
Indeed, for all bijections $\phi_1 : K \arrow I_1$ and
$\phi_2 : K \arrow I_2$, $\phi_2 \circ \phi_1^{-1} : I_1 \arrow I_2$
is a bijection, and we can easily check that reindexing is functorial
so that
$\Famact {(\phi_2 \circ \phi_1^{-1})} {\Famact{\phi_1}} \Vect x =
\Famact{\phi_2}{\Vect x}$.
It follows by \cref{prop:reindexing-bijection} that
\[
  \sum_{i \in I} (\Famact{\phi_1} \Vect x)_i \sumiff \sum_{j \in J}
  (\Famact{\phi_2} \Vect x)_j\,.
\]
This extension of summability to arbitrary indexed families is also
conservative: for all $K \in \Pmindex$, a family $(x_k)_{k \in K}$ is
summable in the sense of \cref{def:summability-arbitrary-index} if and
only if it is summable in the \Pmonoid{}, and the two sums agree,
thanks to \cref{prop:reindexing-bijection}.

\begin{remark} \label{rem:pcm-generalization}
Observe that the properties \ref{ax:unary} and \ref{ax:pa}
also hold on this generalized definition of summability.
We recover in that way the traditional definitions of partial monoids
in the literature~\cite{ArbibManes86,Haghverdi01,Hines13}.
In particular, the proof of 
\cref{prop:zero-neutral,prop:reindexing} directly carry to indexed 
set that do not necessarily belong to $\Pmindex$.
\end{remark}

We now prove that summable families can also be reindexed by
injections.
Given two sets $I$ and $J$, an injection \(\phi:I \Injection J\) and a
family $\Vect x= (x_i)_{i \in I}$, we define a $J$-indexed family
$\Famact\phi \Vect x= (y_j)_{j \in J}$ by $y_{\phi(i)}= x_i$, and
$y_j = 0$ if $j\notin \phi(I)$.
This definition generalizes the reindexing by a bijection defined
in \cref{prop:reindexing-bijection}.

\begin{proposition}
  \label{prop:reindexing}
  For all index sets $I,J$ (not necessarily in $\Pmindex$), for all
  injection $\phi : I \Injection J$ and for all $I$-indexed family
  $\Vect x$, $\Famact{\phi} \Vect x$ is summable if and only if
  $\Vect x$ is summable, and the two sums are equal.
\end{proposition}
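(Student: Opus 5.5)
The plan is to reduce the statement to \cref{prop:zero-neutral}, extended to arbitrary index sets as in \cref{rem:pcm-generalization}, together with \cref{def:summability-arbitrary-index}. Let \(\Vect y=\Famact\phi\Vect x\). First I will observe that the restriction of \(\Vect y\) to the image \(\phi(I)\subseteq J\) is exactly the reindexing of \(\Vect x\) along the bijection \(\phi:I\to\phi(I)\) obtained by corestricting \(\phi\). By \cref{def:summability-arbitrary-index} and the bijection invariance of \cref{prop:reindexing-bijection}, \(\Vect x\) is summable iff \((y_j)_{j\in\phi(I)}\) is summable, and then the sums agree. Concretely, if \(\psi:\phi(I)\to I'\) is a bijection onto some \(I'\in\Pmindex\), then \(\psi\Compl\phi\) is a bijection from \(I\) to \(I'\) witnessing summability of \(\Vect x\), and conversely.

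Second, I will compare \(\Vect y\) with its restriction to \(\phi(I)\). By construction, \(y_j=0\) for \(j\notin\phi(I)\), so \(\Supp{\Vect y}\subseteq\phi(I)\subseteq J\). The generalized \cref{prop:zero-neutral} then gives \(\sum_{j\in\phi(I)}y_j\sumiff\sum_{j\in J}y_j\). Chaining this with the first step yields the claim: \(\Vect y\) is summable iff \(\Vect x\) is, and the two sums are equal.

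The only delicate point is that \(I\) and \(J\) need not lie in \(\Pmindex\). To handle it, I will make the appeal to \cref{rem:pcm-generalization} explicit. I will transport the whole situation along a bijection \(\theta:J\to J'\) with \(J'\in\Pmindex\); such a bijection exists because \(J\) has the right cardinality and \(\Pmuniv\) is infinite. Under \(\theta\), the set \(\phi(I)\) becomes \(\theta(\phi(I))\subseteq J'\), which lies in \(\Pmindex\) since it is a subset of \(J'\) (finite when \(J'\) is finite). At that point \cref{prop:zero-neutral} applies verbatim with \(I'=\theta(\phi(I))\). Combining this with \cref{prop:reindexing-bijection} for the bijections \(\theta\) and \(\theta\Compl\phi\) finishes the argument.

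I do not expect any real obstacle here; the work is bookkeeping of bijections.
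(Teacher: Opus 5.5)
Your proof is correct and follows essentially the same route as the paper. The paper also uses that the support of $\Famact\phi\Vect x$ lies in $\phi(I)$ to reduce, via \cref{prop:zero-neutral} and \cref{rem:pcm-generalization}, to the restriction to $\phi(I)$, and then identifies that restriction with the reindexing of $\Vect x$ along the corestricted bijection $I\to\phi(I)$; your transport along $\theta:J\to J'$ just spells out the bookkeeping that the paper delegates to \cref{rem:pcm-generalization}.
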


\begin{proof}
  Let  $\Vect y = \Famact{\phi} \Vect x = (y_j)_{j \in J}$.
  By definition of $\Famact \phi\Vect x$, we have
  $\Supp{\Vect y} \subseteq \phi(I) \subseteq B$, so by
  \cref{prop:zero-neutral,rem:pcm-generalization},
  $\sum_{j \in J} y_j \sumiff \sum_{j \in \phi(I)} y_j$.
  Now observe that the injection $\phi : I \arrow J$ induces a bijection 
  $\psi : I \arrow \phi(I)$, and that 
  $(y_j)_{j \in \phi(I)} = \Famact \psi \Vect x$. 
  Thus, by \cref{prop:reindexing-bijection,rem:pcm-generalization} we conclude that
  $\sum_{j \in \phi(I)} y_j \sumiff  \sum_{i \in I} x_i$.
\end{proof}

\subsection{Enrichment over \Pmonoid{}s}

\label{sec:spm-enrichement}

We define in this section a notion of categories enriched over \Pmonoid{}.
We can define a category $\SPM$ whose objects are the \Pfmonoid{}s and
whose morphisms in $\SPM(M, N)$ are given by the maps $f : M \arrow N$
such that for all finite family $(x_i)_{i \in I}$ of $M$,
\[
  f \left(\sum_{i \in I} x_i \right) \sumsub \sum_{i \in I} f(x_i)\,.
\]
Such a function will be called \emph{linear} or \emph{additive}.
Similarly, let $\SPIM$ be the category of \Pimonoid{}s and sum
preserving maps, for which we use the same terminology.
We will write $\SPMS$ to designate any of these categories
interchangeably.
We adopt the convention that the index sets $I, J, \ldots$ we consider
in this section are either finite (if we consider enrichment over
\Pfmonoid{}s) or at most countable (if we consider enrichment over
\Pimonoid{}s).

\begin{definition}
  Let $A_1, A_2, B$ be \Pmonoid{}s.
  A function $f : A_1 \times A_2 \arrow B$ is \emph{bilinear} if for
  all $a_1 \in A_1$ and $a_2 \in A_2$, the functions
  $f(a_1, \_) : A_2 \arrow B$ and $f(\_, a_2) : A_1 \arrow B$ are
  morphisms in
  $\SPMS$. 
\end{definition}

\begin{remark}
  Observe that a function is bilinear if and only for all indexed
  families $(a_1^i)_{i \in I}$ of $A_1$ and all indexed families
  $(a_2^j)_{j \in J}$ of $A_2$,
  \[
    f \Big(\sum_{i \in I} a_1^i, \sum_{j \in J} a_2^j \Big) \sumsub
    \sum_{i \in I, j \in J} f(a_1^i, a_2^j)\,.
  \]
  This is an immediate consequence of \ref{ax:pa}.
\end{remark}

We now define what is a category enriched over \Pmonoid{}s.

\begin{definition}
  \label{def:spm-enriched}
  A \emph{\Pmenriched{}} category is a category $\cL$ in which each
  homset of \(\cL\) is equipped with a \Pmonoid{} structure, and such
  that composition is bilinear: for all indexed families
  \((f_j\in\cL(X,Y))_{j\in J}\) and \((g_i\in\cL(X,Y))_{i\in I}\),
  \[
    \Big(\sum_{i \in I} g_i \Big) \Compl f \sumsub \sum_{i \in I} (g_i
    \Compl f) \qquad g \Compl \Big(\sum_{j \in J} f_j \Big) \sumsub
    \sum_{j \in J} (g \Compl f_j)\,.
  \]
\end{definition}
Observe that \Pmenriched{} categories have \emph{zero morphisms}: if
$0_{X, Y}$ is the zero of the \Pmonoid{} $\cL(X, Y)$ (that is, the sum
over the empty family), then it follows from the definition of
enrichment that for all $f \in \cL(X', X)$ and $g \in \cL(Y, Y')$,
$0_{X, Y} \Compl f = 0_{X', Y}$ and $g \Compl 0_{X, Y} = 0_{X, Y'}$.

As explained in~\cite{Hoshino12}, our definition of \Pmenriched{}
categories is the specialization to the category $\SPMS$ of the
standard notion of categories enriched over a monoidal
category~\cite{Kelly05}.
We now describe how the enrichment over \Pmonoid{}s should behave
with respect to the symmetric monoidal structure of $\cL$, the closure
and the cartesian product, whenever they exist.
These definitions are also the specialization to $\SPMS$ of standard
notions in enriched category theory.

\begin{definition}
  \label{def:pm-enriched-monoidal}
  A \emph{symmetric monoidal \Pmenriched{}} category is a
  \Pmenriched{} category equipped with a symmetric monoidal structure
  such that the tensor $\ITens$ is bilinear.
  That is, for all all indexed families
  \((f_j\in\cL(X_1,Y_1))_{j\in J}\) and
  \((g_i\in\cL(X_2,Y_2))_{i\in I}\),
  \[
    \Big(\sum_{i \in I} g_i \Big) \ITens f \sumsub \sum_{i \in I} (g_i
    \ITens f) \qquad g \ITens \Big(\sum_{j \in J} f_j \Big) \sumsub
    \sum_{j \in J} (g \ITens f_j)\,.
  \]
\end{definition}

The definition above is a bit redundant: the two equations above 
can be derived from each other by symmetry of the tensor product.

\begin{definition}
  \label{def:spm-enriched-smcc}
  \emph{A symmetric monoidal closed \Pmenriched{}} category is a
  symmetric monoidal \Pmenriched{} category which is closed with
  respect to its tensor product $\ITens$, and such that for all object
  $X$, $\Limpl X \_$ is linear.
  That is, for all indexed family $(f_i)_{i \in I}$ in $\cL(Y, Z)$,
  \[
    \Limpl X {\Big( \sum_{i \in I} f_i \Big)} \sumsub \sum_{i \in I}
    (\Limpl X {f_i})\,.
  \]
\end{definition} 

\begin{remark} \label{rem:cur-additive} %
  Observe that for all $f \in \cL(\Tens X Y, Z)$ and
  $g \in \cL(Y, Z)$,
  \[
    \Curlin(f) = (\Limpl Y f) \Compl \Curlin(\Id_{\Tens X Y}) \quad
    \Limpl X g = \Curlin(g \Compl \Evlin)
  \]
  so by bilinearity of composition, \cref{def:spm-enriched-smcc} is
  equivalent the additivity of $\Curlin$, which intuitively expresses
  that the sum of functions is defined pointwise:
  $\Curlin(\sum_{i \in I} f_i) \sumsub \sum_{i \in I} \Curlin(f_i)$.
  This alternative definition is the point of view adopted
  in~\cite{EhrhardWalch25}.
  We chose \cref{def:spm-enriched-smcc} instead because it is more in
  line with the standard theory of enriched categories and because it
  simplifies some proofs we do in \cref{sec:summability-tensor}.
\end{remark}

This definition can be further simplified. We can prove that the two morphisms given 
in~\cref{def:spm-enriched-smcc} are always equal, whenever they are defined. 
So we get the following characterization of symmetric monoidal closed \Pmenriched{}
categories.

\begin{proposition}
  Let $\cL$ be a symmetric monoidal \Pmenriched{} category that is closed.
  Then the following are equivalent:
  \begin{enumerate}
  \item $\cL$ is a symmetric monoidal closed \Pmenriched{} category;
  \item for all indexed families $(f_i)_{i \in I}$ in $\cL(Y, Z)$, if
    $(f_i)_{i \in I}$ is summable then $(\Limpl X f_i)_{i \in I}$ is
    summable in $\cL(\Limpl X Y, \Limpl X Y)$.
  \end{enumerate}
\end{proposition}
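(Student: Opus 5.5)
The implication from the first item to the second is immediate from \cref{def:spm-enriched-smcc}: if \((f_i)_{i\in I}\) is summable then the left-hand side \(\Limpl X{(\sum_{i\in I}f_i)}\) is defined, so the right-hand side \(\sum_{i\in I}(\Limpl X{f_i})\) is defined as well. (The target hom-set in the second item should read \(\cL(\Limpl XY,\Limpl XZ)\).) The content of the proposition is therefore the converse. Assuming only summability of \((\Limpl X{f_i})_{i\in I}\), we must show that its sum equals \(\Limpl X{(\sum_i f_i)}\).

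The plan is to reduce this equality to a computation in \(\cL(\Tens{(\Limpl XY)}X,Z)\), where bilinearity of \(\ITens\) and of composition is available. Set \(g=\sum_{i}(\Limpl X{f_i})\in\cL(\Limpl XY,\Limpl XZ)\). Since \(\Curlininv\) is a bijection, it suffices to prove
\[
  \Curlininv(g)=\Curlininv\Big(\Limpl X{\Big(\sum_{i}f_i\Big)}\Big)\,.
\]
For the left-hand side, \(\Curlininv(g)=\Evlin\Compl\Tensp gX\). Bilinearity of the tensor (\cref{def:pm-enriched-monoidal}) gives \(\Tensp gX=\sum_i\Tensp{(\Limpl X{f_i})}X\). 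Bilinearity of composition (\cref{def:spm-enriched}) then gives
\[
  \Curlininv(g)=\sum_i\Evlin\Compl\Tensp{(\Limpl X{f_i})}X\,,
\]
with every sum involved being defined. By naturality of \(\Evlin\), each summand equals \(f_i\Compl\Evlin\). Hence \(\Curlininv(g)=\sum_i(f_i\Compl\Evlin)\).

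For the right-hand side, naturality of \(\Evlin\) gives
\[
  \Curlininv\Big(\Limpl X{\Big(\sum_i f_i\Big)}\Big)=\Big(\sum_i f_i\Big)\Compl\Evlin\,,
\]
and this equals \(\sum_i(f_i\Compl\Evlin)\) by bilinearity of composition. Here \(\sum_i f_i\) is defined because \((f_i)_{i\in I}\) is summable, which is the standing hypothesis of the implication. The two sides therefore coincide, and applying \(\Curlin\) yields \(\Limpl X{(\sum_i f_i)}=\sum_i(\Limpl X{f_i})\).

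The main subtlety is the direction of \(\sumsub\) in the enrichment axioms: bilinearity only transports definedness from a sum inside a composite or tensor to the sum of the composites. Each step must therefore start from a sum already known to be defined. On the left this is \(g\), which exists by item 2; on the right it is \(\sum_i f_i\), which exists by hypothesis. As long as we never rely on a reverse inference, no further obstacle arises. The remaining care is in checking that the naturality square \(\Evlin\Compl\Tensp{(\Limpl Xh)}X=h\Compl\Evlin\) holds with the conventions for \(\Curlin\) fixed earlier in the paper.
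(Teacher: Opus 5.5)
Your proof is correct and follows the paper's argument. In both, you apply the bijection \(\Curlininv\) to \(\sum_i(\Limpl X{f_i})\) and to \(\Limpl X{(\sum_i f_i)}\), use bilinearity of \(\ITens\) and of composition (always starting from a sum known to be defined) together with \(\Curlininv(\Limpl X h)=h\Compl\Evlin\), and find that both sides equal \(\sum_i(f_i\Compl\Evlin)\). The paper only packages your left-hand computation as a separate observation that \(\Curlininv\) is additive, and your correction of the target hom-set to \(\cL(\Limpl XY,\Limpl XZ)\) is right.
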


\begin{proof}
  The forward implication is obvious, since $(2)$ is just a sub-property of 
  $(1)$. For the reverse implication,
  observe that for all families
  $(g_i)_{i \in I}$ of $\cL(X, \Limpl Y Z)$,
  by \Pmonoid{} enrichment,
  \[
    \Curlin^{-1} \Big(\sum_{i \in I} g_i \Big) = \Evlin_{Y, Z} \Big(
    \big(\sum_{i \in I} g_i \big) \ITens Y \Big) \sumsub \sum_{i \in
      I} \big(\Evlin_{Y, Z} \Compl (g_i \ITens Y)\big) = \sum_{i \in
      I} \Curlin^{-1}(g_i)\,.
  \]
  Now, let $(f_i)_{i \in I}$ be a summable family of $\cL(X, Y)$. Then 
  $(\Limpl X {f_i})_{i \in I}$ is also summable by $(2)$ and
  \[
    \Curlin^{-1} \Big(\Limpl X {\big(\sum_{i \in I} f_i \big)}\Big) =
    \Curlin^{-1} \Big(\Curlin\Big(\big(\sum_{i \in I} f_i \big) \Compl
    \Evlin \Big) \Big) = \big(\sum_{i \in I} f_i \big) \Compl \Evlin
    \sumsub \sum_{i \in I} (f_i \Compl \Evlin)
  \]
  \[
    \Curlin^{-1} \Big(\sum_{i \in I} (\Limpl X {f_i})\Big) \sumsub
    \sum_{i \in I} \Curlin^{-1} (\Limpl X {f_i}) = \sum_{i \in I}
    \Curlin^{-1}(\Curlin(f_i \Compl \Evlin)) = \sum_{i \in I} (f_i
    \Compl \Evlin)\,.
  \]
  By bijectivity of $\Curlin^{-1}$, we deduce that
  $\Limpl X {\Big( \sum_{i \in I} f_i \Big)} 
  = \sum_{i \in I} (\Limpl X {f_i})$ and thus $(1)$ holds.
\end{proof}
In particular, we always have $\Limplp X 0 = 0$ and $\Curlin(0) = 0$, even 
if we do not assume \cref{def:spm-enriched-smcc}.

\begin{definition}
  \label{def:spm-enriched-cartesian}
  A \emph{cartesian \Pmenriched{}} category is a \Pmenriched{}
  category with finite products (written $\with$, see \cref{sec:ll}
  for the notations) such that for each finite set $L$, the functor
  $\Bwith_{l \in L} : \cL^L \arrow \cL$ is linear.
  That is, for each collection of indexed families
  $(f_i^l \in \cL(X_l, Y_l))_{i \in I}$ where $l$ ranges over $L$,
  \[ \Bwith_{l \in L} \big(\sum_{i \in I} f_i^l \big) \sumsub 
  \sum_{i \in I} \big(\Bwith_{l \in L} f_i^l) \]
 \end{definition}

 \begin{remark}
   \label{rem:pairing-additive}
   A counterpart of \cref{rem:cur-additive} also holds for
   products, since for all families of morphisms
   $(f_l \in \cL(X, Y_l))_{l \in L}$ and
   ${g_l \in \cL(X_l, Y_l)}_{l \in L}$,
   \[
     \Tuple{f_l}_{l \in L} = (\Bwith_{l \in L} f_l) \Compl
     \Tuple{\Id_X}_{l \in L} \quad \text{and} \quad \Bwith_{l \in L}
     g_l = \Tuple{g_l \Compl \Proj l}_{l \in L}
   \]
   so by bilinearity of composition, \cref{def:spm-enriched-cartesian}
   is equivalent to the additivity of tupling, expressing that the
   the sums in product objects are defined componentwise:
   $\Tuple{\sum_{i \in I} f_l^i}_{l \in L} \sumsub \sum_{i \in I}
   \Tuple{f_l^i}_{l \in L}$.
   This alternative definition is the point of view adopted
   in~\cite{EhrhardWalch25}.
   We chose \cref{def:spm-enriched-smcc} instead because it is more in
   line with the standard theory of enriched categories and because it
   simplifies some proofs in \cref{sec:summability-cartesian}.
\end{remark}

This definition can be further simplified as above for monoidal
closeness.
 \begin{proposition}
   Let $\cL$ be a \Pmenriched{} category that has finite products. The
   the following are equivalent.
   \begin{enumerate}
   \item $\cL$ is a cartesian \Pmenriched{} category.
   \item For all collection of indexed families $(f_i^l)_{i \in I}$ in
     $\cL(X_l, Y_l)$, if $(f_i^l)_{i \in I}$ is summable for all
     $l \in L$, then $(\Bwith_{l \in L} f_i^l)_{i \in I}$ is summable
     in $\cL(\Bwith_{l \in L} X_l, \Bwith_{l \in L} X_l)$.
   \end{enumerate}
 \end{proposition}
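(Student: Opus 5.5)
The plan mirrors the proof given just above for the closed case. The implication from (1) to (2) is immediate, because (2) is the summability half of the inequality $\Bwith_{l\in L}(\sum_{i\in I}f^l_i)\sumsub\sum_{i\in I}\Bwith_{l\in L}f^l_i$. So the real work is (2) implies (1): assuming that $(\Bwith_{l\in L}f^l_i)_{i\in I}$ is summable, I must show that its sum equals $\Bwith_{l\in L}(\sum_{i\in I}f^l_i)$.

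The strategy is to compare both morphisms after postcomposing with each projection $\Proj q$, since the projections are jointly monic. This is the analogue of using the bijectivity of $\Curlininv$ in the closed case.

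First, by bilinearity of composition (\cref{def:spm-enriched}), since $\sum_{i}\Bwith_l f^l_i$ is defined we get
\[
\Proj q\Compl\sum_{i\in I}\Bwith_{l\in L}f^l_i=\sum_{i\in I}\Proj q\Compl\Bwith_{l\in L}f^l_i=\sum_{i\in I}f^q_i\Compl\Proj q .
\]
The second equality uses $\Proj q\Compl\Bwith_l f^l_i=f^q_i\Compl\Proj q$.

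On the other side, $\Proj q\Compl\Bwith_{l}(\sum_i f^l_i)=(\sum_i f^q_i)\Compl\Proj q$. Since $(f^q_i)_i$ is summable by hypothesis, bilinearity again gives $(\sum_i f^q_i)\Compl\Proj q=\sum_i f^q_i\Compl\Proj q$. Both sides therefore have the same $q$-th component for every $q\in L$, and joint monicity of the $\Proj q$ yields the desired equality.

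I expect no real obstacle. The only care needed is the direction of the $\sumsub$ relation: bilinearity only says that a sum $\sum g_i$ being defined forces $\sum(g_i\Compl f)$ to be defined and equal. That is exactly what we use in both directions, because both $\sum_i\Bwith_l f^l_i$ (by (2)) and $\sum_i f^q_i$ (by hypothesis) are known to be defined. The same argument also covers the empty index set, where it shows $\Bwith_l 0=0$.
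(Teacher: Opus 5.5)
Your proposal is correct and follows essentially the same route as the paper. Like the paper, you postcompose both morphisms with each projection, use bilinearity of composition (noting which sums are already known to be defined), and conclude by joint monicity of the projections.
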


 \begin{proof}
  The forward implication is obvious, since $(2)$ is just a sub-property of 
  $(1)$. For the reverse implication,
  assume that $(f_i^l)_{i \in I}$ is a summable family of $\cL(X_l, Y_l)$ 
  for all $l \in L$.
  Then by our assumption, $(\Bwith_{l \in L} f_i^l)_{i \in I}$ is also
  summable and
  \[
    \Proj l \Compl \Big(\Bwith_{l \in L} \big(\sum_{i \in I} f_i^l
    \big) \Big) = \big(\sum_{i \in I} f_i^l \big) \Compl \Proj l
    \sumsub \sum_{i \in I} (f_i^l \Compl \Proj l)
  \]
  \[
    \Proj l \Compl \sum_{i \in I} \big(\Bwith_{l \in L} f_i^l)
    \sumsub \sum_{i \in I} \big(\Proj l \Compl \Bwith_{l \in L} f_i^l
    \big) = \sum_{i \in I} (f_i^l \Compl \Proj l)
  \]
  so by joint monicity of the $\Proj l$, the morphisms
  $\Bwith_{l \in L} \big(\sum_{i \in I} f_i^l \big) 
  = \sum_{i \in I} \big(\Bwith_{l \in L} f_i^l)$ and $(1)$ holds.
\end{proof}
 
In particular, we always have $\Bwith_{l \in L} 0 = 0$ and
$\Tuple{0}_{l \in L} = 0$, even if we do not
assume~\cref{def:spm-enriched-cartesian}.

\section{Summability structures} %
\label{sec:summability-structure}

Let \(\Mdeg\) be a \Pfmonoid{} (in the sense of \cref{def:pcm}) whose
elements are called \emph{degrees}, with neutral element \(0\) and
addition \(+\).
We assume that \(\Mdeg\) has at least two elements.
We assume moreover that \(\Mdeg\) is \Mpositive{} in the following
sense: if \(i,j\in\Mdeg\) satisfy \(i+j=0\) then \(i=0\) (and hence
\(j=0\) as well).
Notice that since \(\Mdeg\not=\Set0\), \(\Mdeg\) is not a group.
We consider two cases:
\begin{itemize}
\item The set \(\Mdeg\) is finite
\item The set \(\Mdeg\) is infinite and countable.
\end{itemize}

\begin{convention}
  We use the terminology \Dpmonoid{} for finitary SPCMs when
  \(\Mdeg\) is finite, and for infinitary SPCMs when \(\Mdeg\) is
  infinite.
\end{convention}

Here are a few examples of such monoids \(\Mdeg\) which induce
interesting ``differentiation'' theories, but there is a whole
universe of ``monoid-based differential theories'' to be explored.
Let \(n\in\Natc\).
\begin{itemize}
\item \(\Mdeg=\Eset{i\in\Nat\St i<n}\) with \(i+j\) defined as the
  ordinary sum of \(i\) and \(j\) if this sum is in \(\Mdeg \), and
  undefined otherwise;
  the associated differential theory will be that of a functorial
  generalization of Taylor expansion%
  \footnote{Corresponding to the Faà di Bruno formula} %
  of order \(n\).
\item \(\Mdeg=\Eset{i\in\Nat\St i\leq n}\) with \(i+j\) defined as the
  min of \(n\) and of the ordinary sum of \(i\) and \(j\);
  the associated differential theory will be that of a functorial
  generalization of Taylor expansion of order \(n - 1\), with a remainder
  if \(n<\infty\).
\item \(\Mdeg=\Eset{i\in\Nat\St i<n}\) with \(i+j=\max(i,j)\);
  the associated differential theory is a calculus of finite
  differences instead of derivatives.
\item \(\Mdeg=\Mfin{\Nat}\) the set of all finite multisets on $\Nat$,
  with $m + p$ defined as the pointwise sum of multisets; this case
  has to be worked out and corresponds probably to a theory of
  multi-variate Taylor expansion.
\end{itemize}

\begin{remark}
  As a set, \(\Mdeg\) can be infinite, but it 
  will always be a \Pfmonoid{} and never a \Pimonoid.
\end{remark}

\begin{convention}
  \label{conv:cardinality-indexing-sets}
  In the whole paper we stick to the following convention: the
  indexing sets denoted with letters \(I\), \(J\) and \(K\), with
  possibly subscripts or superscripts, are tacitly assumed to be
  finite when \(\Mdeg\) is finite, and at most countable when $\Mdeg$
  is infinite and countable.
\end{convention}

\begin{definition}
  \label{conv:cardinality-partial-mon-enrichement}
  A \emph{\Pmcat{}} is a category \(\cL\) in which 
  each homset is equipped
  with a \Dpmonoid{} structure.
  Furthermore, the zeros $0_{X, Y}$ of
  the \Dpmonoid{}s $\cL(X, Y)$ are assumed to be zero 
  morphisms: for all $f \in \cL(X', X)$ and $g \in \cL(Y, Y')$, 
  $0_{X, Y} \Compl f = 0_{X', Y}$ and  $g \Compl 0_{X, Y} = 0_{X, Y'}$.
\end{definition}

Observe that \Pmcats{} are \emph{not} assumed to be 
\Pmenriched{} (\cref{def:spm-enriched}).
Our goal here is to describe \Pmcats{} where summability is described
by a functor $\Sfun$ called a \emph{summability structure} of which
the enrichment will be a byproduct.

\begin{definition}
  \label{def:pre-summ-struct}
  A \emph{\PSUMBLTY{} structure} on a \Pmcat{} \(\cL\)
  consists in a tuple %
  \((\Sfun,(\Sproj i)_{i\in\Mdeg},\Ssum)\) where
  \begin{enumerate}
  \item \ref{ax:S-fun} \labeltext{(S-fun)}{ax:S-fun} %
    \(\Sfun:\cL\to\cL\) is a functor;
  \item \ref{ax:S-nat} \labeltext{(S-nat)}{ax:S-nat} %
    the \(\Sproj i\)'s as well as \(\Ssum\) are natural
    transformations \(\Sfun\Natto\Id\);
  \item \ref{ax:S-monic} \labeltext{(S-monic)}{ax:S-monic} %
    for all objects \(X\) of \(\cL\), the morphisms
    \((\Sproj{i,X}\in\cL(\Sfun X,X))_{i\in\Mdeg}\) are jointly
    monic. That is, for all $f,g \in \cL(X, \Sfun Y)$, if
    $\Sproj{i, Y} \Compl f = \Sproj{i, Y} \Compl g$ for all
    $i \in \Mdeg$ then $f = g$.
  \end{enumerate}
\end{definition}

\begin{definition}
  \label{def:summ-struct}
  A \emph{\SUMBLTY{} structure} on a \Pmcat{} \(\cL\) is a \PSUMBLTY{}
  structure \((\Sfun,(\Sproj i)_{i\in\Mdeg},\Ssum)\) satisfying
  \begin{enumerate}
  \item \ref{ax:S-witness} \labeltext{(S-witness)}{ax:S-witness}  a family
    \((f_i\in\cL(X,Y))_{i\in\Mdeg}\) is summable iff there is a
    \(h\in\cL(X,\Sfun Y)\) (necessarily unique) such that
    \(\forall i\in\Mdeg\,f_i=\Sproj i\Compl h\) and then
    \(\sum_{i\in\Mdeg}f_i=\Ssum\Compl h\);
  \item \ref{ax:S-sum} \labeltext{(S-sum)}{ax:S-sum}  and a family
    \((h_i\in\cL(X,\Sfun Y))_{i\in I}\) is summable if the
    family \((\Ssum\Compl h_i)_{i\in I}\) is summable.
  \end{enumerate}
  When a \(\Mdeg\)-indexed family \((f_i\in\cL(X,Y))_{i\in\Mdeg}\) is
  summable we use \(\Stuple{f_i}_{i\in\Mdeg}\) for the unique element
  of \(\cL(X,\Sfun Y)\) such that
  \(\forall i\in\Mdeg\ \Sproj i\Compl \Stuple{f_i}_{i\in\Mdeg}=f_i\)
  and call this morphism the \emph{witness} of the summability of the
  \(f_i\)'s.
  Such a tuple \((\cL,\Sfun,(\Sproj i)_{i\in\Mdeg},\Ssum)\) is called
  a \emph{\SUMCAT}.
\end{definition}

Defining summability by an endofunctor is standard in algebraic
programming semantics; this is the case in partially additive 
categories~\cite{ArbibManes86,ChoJacosetAl15}.
But in this approach the counterpart of our object $\Sfun X$ is given
by a coproduct.
As explained in \cite[Part.~2]{TassonWalch26},
our summability structures are more general, and enjoy an epi-mono
factorization with the coproduct and the product of $\cL$ (whenever
they exist).
\[  
  \begin{tikzcd}
    {\bigoplus_{i \in \Mdeg} X } & {\Sfun X} & {\Bwith_{i
        \in \Mdeg} X} \arrow["e", from=1-1, to=1-2] \arrow["{m}",
    from=1-2, to=1-3]
  \end{tikzcd}.
\]
When sums are total, this factorization collapses into an equality and
thus the category has \emph{biproducts}, either finite if $\Mdeg$ is
finite ---~this is a standard result, see Chapter 8 of
\cite{Maclane71}~--- or infinite if $\Mdeg$ is
infinite~\cite[Prop.~237]{EhrhardWalch25}.
In the next lemmas we assume that \(\cL\) is a \Pmcat{} \(\cL\)
equipped with a summability structure
\((\Sfun,(\Sproj i)_{i\in\Mdeg},\Ssum)\).

\begin{lemma}
  \label{lemma:sproj-summable}
  \label{lemma:Sfun-functor-stuple}
  For each \(f\in\cL(X,Y)\), the family
  \((f\Compl\Sproj i)_{i\in\Mdeg}\) is summable,
  \(\Sfun f=\Stuple{f\Compl\Sproj i}_{i\in\Mdeg}\) and
  \(\sum_{i\in\Mdeg}(f\Compl\Sproj i)=f\Compl\Ssum\).
  In particular \((\Sproj i)_{i\in\Mdeg}\) is summable,
  \(\Stuple{\Sproj i}_{\in\Mdeg}=\Id\) and
  \(\sum_{i\in\Mdeg}\Sproj i=\Ssum\).
\end{lemma}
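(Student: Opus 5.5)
The plan is to apply \ref{ax:S-witness} to the candidate witness \(h=\Sfun f\in\cL(\Sfun X,\Sfun Y)\). By \ref{ax:S-nat}, each \(\Sproj i\) is a natural transformation \(\Sfun\Natto\Id\). Hence for every \(i\in\Mdeg\) we have \(\Sproj{i,Y}\Compl\Sfun f=f\Compl\Sproj{i,X}\). So \(\Sfun f\) is a morphism whose components under the projections are exactly the members of the family \((f\Compl\Sproj i)_{i\in\Mdeg}\).

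The backward direction of \ref{ax:S-witness} then gives that this family is summable, and that its sum is \(\Ssum_Y\Compl\Sfun f\). By uniqueness of the witness (which follows from \ref{ax:S-monic}), the witness \(\Stuple{f\Compl\Sproj i}_{i\in\Mdeg}\) must be \(\Sfun f\). Naturality of \(\Ssum\), also from \ref{ax:S-nat}, gives \(\Ssum_Y\Compl\Sfun f=f\Compl\Ssum_X\). Hence \(\sum_{i\in\Mdeg}(f\Compl\Sproj i)=f\Compl\Ssum\).

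The particular case is obtained by taking \(f=\Id_X\). Since \(\Sfun\) is a functor by \ref{ax:S-fun}, \(\Sfun\Id_X=\Id_{\Sfun X}\). The general statement then yields that \((\Sproj i)_{i\in\Mdeg}\) is summable, that its witness is \(\Id\), and that its sum is \(\Ssum\).

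I expect no real obstacle here. The only point needing care is to use the ``if'' direction of \ref{ax:S-witness}: the existence of a morphism \(h\) with \(\Sproj i\Compl h=f_i\) for all \(i\) is enough to conclude summability. Nothing beyond naturality and functoriality is needed.
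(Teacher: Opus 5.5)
Your proof is correct and follows the paper's own (one-line) argument. Naturality of the \(\Sproj i\) makes \(\Sfun f\) a witness for the family via \ref{ax:S-witness}, and naturality of \(\Ssum\) identifies the sum as \(f\Compl\Ssum\). The particular case then follows by taking \(f=\Id\).
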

\begin{proof}
  Direct consequence of the naturality of $\Sproj i$ and of \(\Ssum\),
  and of the definition of witnesses.
\end{proof}

\begin{lemma} %
  \label{lemma:summability-compl}
  If the family \((f_i\in\cL(X,Y))_{i\in\Mdeg}\) is summable and if %
  \(g\in\cL(U,X)\) and \(h\in\cL(Y,V)\) then the family %
  \((h\Compl f_i\Compl g)_{i\in\Mdeg}\) is summable with %
  \(\Stuple{h\Compl f_i\Compl g}_{i\in\Mdeg} =\Sfun
  h\Compl\Stuple{f_i}_{i\in\Mdeg}\Compl g\) and %
  \(\sum_{i\in\Mdeg}h\Compl f_i\Compl g= h\Compl
  (\sum_{i\in\Mdeg}f_i)\Compl g\).
\end{lemma}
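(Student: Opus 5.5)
The plan is to build the candidate witness \(\Sfun h\Compl\Stuple{f_i}_{i\in\Mdeg}\Compl g\in\cL(U,\Sfun V)\) explicitly and then invoke \ref{ax:S-witness} in both directions. Since \(\Stuple{f_i}_{i\in\Mdeg}\) exists by \ref{ax:S-witness}, this composite is a well-defined morphism. The key property to check is that its \(i\)-th projection is \(h\Compl f_i\Compl g\). For this I will use naturality of \(\Sproj i\) from \ref{ax:S-nat}, which gives \(\Sproj{i,V}\Compl\Sfun h=h\Compl\Sproj{i,Y}\). Hence
\[
\Sproj i\Compl\Sfun h\Compl\Stuple{f_i}_{i\in\Mdeg}\Compl g
= h\Compl\Sproj i\Compl\Stuple{f_i}_{i\in\Mdeg}\Compl g
= h\Compl f_i\Compl g .
\]

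Next, I apply the ``if'' direction of \ref{ax:S-witness}. It says the family \((h\Compl f_i\Compl g)_{i\in\Mdeg}\) is summable, and that its witness is this morphism, the witness being unique by joint monicity \ref{ax:S-monic}. This gives the equation \(\Stuple{h\Compl f_i\Compl g}_{i\in\Mdeg}=\Sfun h\Compl\Stuple{f_i}_{i\in\Mdeg}\Compl g\).

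For the sum, \ref{ax:S-witness} gives \(\sum_{i\in\Mdeg}h\Compl f_i\Compl g=\Ssum\Compl\Sfun h\Compl\Stuple{f_i}_{i\in\Mdeg}\Compl g\). Naturality of \(\Ssum\), that is \(\Ssum_V\Compl\Sfun h=h\Compl\Ssum_Y\), rewrites this as \(h\Compl\Ssum\Compl\Stuple{f_i}_{i\in\Mdeg}\Compl g\). By \ref{ax:S-witness} again, this equals \(h\Compl(\sum_{i\in\Mdeg}f_i)\Compl g\).

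There is no real obstacle. The only point needing care is that the statement uses summability in the sense of the \Pmcat{} structure and not a priori enrichment, so every step must go through \ref{ax:S-witness} rather than through bilinearity of composition. The argument above does exactly that.
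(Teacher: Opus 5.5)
Your proposal is correct and is essentially the paper's proof, which simply says it follows from naturality of the projections and of the sum. You spell that out, building the witness \(\Sfun h\Compl\Stuple{f_i}_{i\in\Mdeg}\Compl g\), checking its projections and sum via those naturality squares, and concluding with the witness axiom rather than enrichment, which the paper only derives later using this very lemma.
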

\begin{proof}
  By naturality of $\Sproj i$ and $\Ssum$.
\end{proof}

\begin{theorem}
  Let \(\cL\) be a \Pmcat{} and let
  \((\Sfun,(\Sproj i)_{i\in\Mdeg},\Ssum)\) and
  \((\Sfun',(\Sproj i')_{i\in\Mdeg},\Ssum')\) be \SUMBLTY{} structures
  on \(\cL\).
  Then for each object \(X\) of \(\cL\) there is exactly one
  morphism %
  \(\phi_X\in\cL(\Sfun X,\Sfun' X)\) such that
  \((\Sproj i'\Compl\phi_X=\Sproj i)_{i\in\Mdeg}\).
  This morphism also satisfies \(\Ssum'\Compl\phi_X=\Ssum\).
  The family of morphisms \(\phi\) defined in that way is a natural
  isomorphism from \(\Sfun\) to \(\Sfun'\).
\end{theorem}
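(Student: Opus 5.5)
The plan is to build \(\phi_X\) directly from \ref{ax:S-witness} for the structure \(\Sfun'\), then prove naturality and invertibility using the joint monicity \ref{ax:S-monic} of the projections.

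\emph{Construction of \(\phi_X\).} By \cref{lemma:sproj-summable} applied to the structure \(\Sfun\), the family \((\Sproj{i,X}\in\cL(\Sfun X,X))_{i\in\Mdeg}\) is summable in the \Dpmonoid{} \(\cL(\Sfun X,X)\), and \(\sum_{i\in\Mdeg}\Sproj{i,X}=\Ssum_X\). Summability is a property of the hom-set alone, not of the chosen summability structure. So \ref{ax:S-witness} for \(\Sfun'\) yields a morphism \(\phi_X\in\cL(\Sfun X,\Sfun' X)\) with \(\Sproj i'\Compl\phi_X=\Sproj i\) for all \(i\). The same axiom gives \(\Ssum'\Compl\phi_X=\sum_i\Sproj i=\Ssum_X\). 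Uniqueness of \(\phi_X\) is exactly joint monicity of the \(\Sproj i'\).

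\emph{Naturality.} Let \(f\in\cL(X,Y)\). It suffices to compare \(\Sproj i'\Compl\phi_Y\Compl\Sfun f\) and \(\Sproj i'\Compl\Sfun' f\Compl\phi_X\) for every \(i\), and then conclude by \ref{ax:S-monic} for \(\Sfun'\).
\begin{itemize}
\item Using the defining property of \(\phi_Y\) and then naturality of \(\Sproj i\), the first composite equals \(\Sproj i\Compl\Sfun f=f\Compl\Sproj i\).
\item Using naturality of \(\Sproj i'\) and then the defining property of \(\phi_X\), the second composite equals \(f\Compl\Sproj i'\Compl\phi_X=f\Compl\Sproj i\).
\end{itemize}
The two composites agree for every \(i\), so \(\phi\) is natural.

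\emph{Invertibility.} By symmetry, the same construction with the roles of the two structures exchanged gives \(\psi_X\in\cL(\Sfun' X,\Sfun X)\) with \(\Sproj i\Compl\psi_X=\Sproj i'\). Then \(\Sproj i\Compl\psi_X\Compl\phi_X=\Sproj i'\Compl\phi_X=\Sproj i=\Sproj i\Compl\Id\) for all \(i\), so \(\psi_X\Compl\phi_X=\Id\) by \ref{ax:S-monic}. The same argument gives \(\phi_X\Compl\psi_X=\Id\).

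No step is a real obstacle. The only point needing care is the first one: we must justify that the family \((\Sproj i)_{i\in\Mdeg}\), witnessed by \(\Sfun\), counts as summable for \(\Sfun'\). This holds because both structures characterize the same given \Dpmonoid{} structure on the hom-sets. Axiom \ref{ax:S-sum} is not needed.
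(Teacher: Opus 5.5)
Your proof is correct and takes the same route as the paper. You build \(\phi_X\) as the \(\Sfun'\)-witness of the summable family \((\Sproj i)_{i\in\Mdeg}\), get the inverse by the symmetric construction, and conclude by joint monicity of the projections. The paper instead derives naturality and \(\Ssum'\Compl\phi_X=\Ssum\) through \cref{lemma:summability-compl}, where you check naturality directly on projections and read the sum equation off \ref{ax:S-witness}; this is a cosmetic difference and equally valid.
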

\begin{proof}
  Since the family \((\Sproj i\in\cL(\Sfun X,X))_{i\in\Mdeg}\) is
  summable, we have a unique %
  \(\phi_X=\Stuple{\Sproj i}'_{i\in\Mdeg}\in\cL(\Sfun X,\Sfun' X)\)
  characterized by
  \((\Sproj i'\Compl\phi_X=\Sproj i)_{i\in\Mdeg}\).
  Similarly we have
  \(\phi'_X=\Stuple{\Sproj i'}_{i\in\Mdeg}\in\cL(\Sfun' X,\Sfun X)\)
  uniquely characterized by
  \((\Sproj i\Compl\phi'_X=\Sproj i')_{i\in\Mdeg}\).
  Then for each \(i\in\Mdeg\) we have
  \(\Sproj i'\Compl\phi'_X\Compl\phi_X=\Sproj i\Compl\phi_X=\Sproj
  i'\) and hence \(\phi'_X\Compl\phi_X=\Id_{\Sfun X}\), and similarly
  for \(\phi_X\Compl\phi'_X\).
  Hence \(\phi_X\) is an iso whose inverse is \(\phi'_X\).
  The equation \(\Ssum'\Compl\phi_X=\Ssum\) results from
  \(\Ssum'=\sum_{i\in\Mdeg}\Sproj i'\) and
  \cref{lemma:summability-compl}, and similarly for \(\Ssum\).
  Last we prove naturality: let \(f\in\cL(X,Y)\), we have %
  \(\phi'_Y\Compl(\Sfun' f)=\Stuple{\Sproj
    i'}_{i\in\Mdeg}\Compl(\Sfun' f)=\Stuple{\Sproj i'\Compl(\Sfun'
    f)}_{i\in\Mdeg}=\Stuple{f\Compl\Sproj i'}_{i\in\Mdeg}=(\Sfun
  f)\Compl\Stuple{\Sproj i'}_{i\in\Mdeg}=(\Sfun f)\Compl\phi'_X\) by
  \cref{lemma:summability-compl}, which proves that \(\phi'\) is a
  natural transformation, the same holds of course for \(\phi\).
\end{proof}

\begin{remark}
  \label{rk:unique-summability-structure}
  This easy but important observation shows that, up to a unique
  natural iso, a \Pmcat{} \(\cL\) has at most one \SUMBLTY{}
  structure.
  So having a \SUMBLTY{} structure is actually a \emph{property} for
  a \Pmcat{}.
  This property implies in particular that \(\cL\) is enriched in
  \Pmonoid{}s, as we prove in \cref{sec:ss-induces-enrichement}.
\end{remark}

\subsection{\texorpdfstring{\SUMBLTY{}}{Summability} structures and
  \texorpdfstring{\Dpmonoid{}}{SPCM} enrichment}

\label{sec:ss-induces-enrichement}

We prove in this section the following theorem.

\begin{theorem}
  \label{thm:ss-induces-enrichement}
  Assume that $\cL$ is a \Pmcat{} equipped with a \SUMBLTY{}
  structure.
  Then $\cL$ is enriched over \Dpmonoid{}s, in the sense of
  \cref{def:spm-enriched}.
\end{theorem}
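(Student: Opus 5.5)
The idea is to reduce bilinearity of composition for an arbitrary indexed family to the case of \(\Mdeg\)-indexed families, where \cref{lemma:summability-compl} already gives exactly what we need. Two tools make this reduction possible: reindexing along injections (\cref{prop:reindexing}, together with \cref{rem:pcm-generalization}), and partition associativity \ref{ax:pa}. By \cref{conv:cardinality-indexing-sets}, the families to handle are finite when \(\Mdeg\) is finite and at most countable when \(\Mdeg\) is infinite. We treat only left composition \(\big(\sum_{i\in I}g_i\big)\Compl f\sumsub\sum_{i\in I}(g_i\Compl f)\); the right-composition case \(g\Compl\sum_j f_j\) is identical, using the left factor \(h\) of \cref{lemma:summability-compl} instead of the right one.

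\emph{Step 1: families indexed by subsets of \(\Mdeg\).} Let \((g_i\in\cL(Y,Z))_{i\in I}\) be summable and assume there is an injection \(\phi:I\Injection\Mdeg\). By \cref{prop:reindexing}, \(\Famact\phi\Vect g\) is a summable \(\Mdeg\)-indexed family with the same sum; its entries outside \(\phi(I)\) are the zero morphisms \(0_{Y,Z}\). By \ref{ax:S-witness} it has a witness, so \cref{lemma:summability-compl} shows that \((\,(\Famact\phi\Vect g)_k\Compl f)_{k\in\Mdeg}\) is summable with sum \(\big(\sum_{i\in I}g_i\big)\Compl f\). Since zeros are zero morphisms (\cref{conv:cardinality-partial-mon-enrichement}), this family is exactly \(\Famact\phi\) applied to \((g_i\Compl f)_{i\in I}\). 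Applying \cref{prop:reindexing} backwards then gives the summability of \((g_i\Compl f)_{i\in I}\) with the expected sum. When \(\Mdeg\) is infinite, every admissible \(I\) is at most countable and so injects into \(\Mdeg\), which ends the proof in that case.

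\emph{Step 2: finite \(\Mdeg\), by induction on \(\Card I\).} When \(\Mdeg\) is finite, \(I\) is finite but may be larger than \(\Mdeg\), so we induct on \(\Card I\). The cases \(\Card I\le 2\) are covered by Step 1, since \(\Mdeg\) has at least two elements. For the inductive step, pick \(k\in I\) and set \(I'=I\setminus\Eset k\). Applying \ref{ax:pa} to the partition \(\Eset{I',\Eset k}\) shows that \((g_i)_{i\in I'}\) is summable and that the pair \((\sum_{i\in I'}g_i,\,g_k)\) is summable with sum \(\sum_{i\in I}g_i\). By induction, \((g_i\Compl f)_{i\in I'}\) is summable with sum \((\sum_{i\in I'}g_i)\Compl f\). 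By Step 1 applied to the pair, the pair \(\big((\sum_{i\in I'}g_i)\Compl f,\;g_k\Compl f\big)\) is summable with sum \((\sum_{i\in I}g_i)\Compl f\). A second use of \ref{ax:pa}, now from right to left on the same partition, gives the summability of \((g_i\Compl f)_{i\in I}\) with the required sum.

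I expect the only delicate points to be bookkeeping. First, the \(\Mdeg\)-indexed witness produced by \ref{ax:S-witness} must be matched with the padded family; this is exactly where we need that the zero elements of the hom-\Dpmonoid{}s are genuine zero morphisms. Second, the induction must use \ref{ax:pa} in both directions, with the empty and singleton parts handled by \ref{ax:unary} and \cref{prop:subfamily-summable}. I do not expect \ref{ax:S-sum} to be needed for this statement.
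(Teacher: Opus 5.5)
Your proof is correct, and it takes a genuinely different route from the paper's.

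The paper first establishes \cref{prop:iterated-summability-structure}, namely that \((\Sfun^n,(\Sproj{\Vect i})_{\Vect i\in\Mdeg^n},\Ssum^n)\) is a summability structure indexed by \(\Mdeg^n\). It then injects an arbitrary \(I\) into \(\Mdeg^n\) for \(n\) large enough (using \(\Card\Mdeg\geq 2\)), pads with zeros, and applies \cref{lemma:summability-compl} once, to \(\Sfun^n\). You instead only apply \cref{lemma:summability-compl} to \(\Sfun\) itself. You reduce large finite families to the binary case by peeling off one index at a time, using \ref{ax:pa} in both directions.

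Your argument is more elementary and shows slightly more: the enrichment follows from \ref{ax:S-witness}, \ref{ax:pa} and the zero-morphism assumption alone. The paper's inductive proof that \(\Sfun^{n+1}\) has witnesses goes through \ref{ax:S-sum}, which it needs to know that the family of \(\Sfun^n\)-witnesses is summable. So your remark that \ref{ax:S-sum} is not needed for this statement is right.

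The paper's route, on the other hand, treats finite and infinite \(\Mdeg\) uniformly. Its by-product, the iterated summability structure \(\Sfun^n\), is also reused later, e.g.\ to construct \(\Sflip\) in \cref{lemma:sflip-invol-yb} and in the proof of \cref{prop:ss-tensor}. With your approach that proposition would still have to be proved separately when it is needed, and there \ref{ax:S-sum} does get used.
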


We have already seen in \cref{lemma:summability-compl} that the
existence of a \SUMBLTY{} structure implies some kind of bilinearity
of composition.
But this lemma deals only with $\Mdeg$-indexed families,
we generalize it to all formulas as follows.
\begin{enumerate}
\item First, we prove that for all $n \in \Nat$,
  $(\Sfun^n, (\Sproj{i_1} \Compl \cdots \Compl \Sproj{i_n})_{(i_1,
    \ldots, i_n) \in \Mdeg^n}, \Ssum^n)$ is a summability structure whose
  index set is given by the \Pmonoid{} $\Mdeg^n$ (with partial sums
  defined componentwise).
  In that way, \cref{lemma:summability-compl} above gives enrichment
  for all $\Mdeg^n$-indexed families.
\item For each indexing set \(I\) one can find \(n\in\Nat\) and an
  injection from \(I\) to \(\Mdeg^n\).
  In that way the general case reduces to $\Mdeg^n$-indexed families
  by~\cref{prop:reindexing}.
\end{enumerate}

We start with the first step. We introduce the following notations for convenience.

\begin{definition}
  If \(\Vect i\in\Mdeg^n\), we 
  define
  \(\Sproj{\Vect i}=\Sproj{i_1}\Compl\cdots\Sproj{i_n}
  \in\cL(\Sfun^n X,X)\). We define
  $\Ssum^n
  = \overbrace{\Ssum \Compl \cdots \Compl \Ssum}^{n\times}
  \in \cL(\Sfun^n X, X)$.
\end{definition}

\begin{proposition}
  \label{prop:iterated-summability-structure}
  The tuple $(\Sfun^n, (\Sproj{\Vect i})_{\Vect i \in \Mdeg^n}, \Ssum^n)$
  is a \SUMBLTY[\Mdeg^n] structure.
\end{proposition}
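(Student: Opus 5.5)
We argue by induction on \(n\), using \(\Sfun^{n+1}=\Sfun\Compfun\Sfun^n\). The case \(n=0\) is degenerate: \(\Mdeg^0\) is a singleton, \(\Sfun^0=\Idfun\) and the projection and sum are identities, so the conditions hold by \ref{ax:unary}. The case \(n=1\) is the hypothesis. The component \(\Mdeg^n\) enters only as an index set; its componentwise addition plays no role in the axioms of \cref{def:pre-summ-struct,def:summ-struct}.

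We first check the \PSUMBLTY{} part. \(\Sfun^n\) is a functor, and \(\Sproj{\Vect i}\) and \(\Ssum^n\) are composites of horizontally composed natural transformations, hence natural. For \ref{ax:S-monic}, write \(\Vect k=(i,\Vect j)\) with \(i\in\Mdeg\) and \(\Vect j\in\Mdeg^n\). By naturality we can present \(\Sproj{(i,\Vect j)}\) as \(\Sproj{\Vect j,Y}\Compl\Sproj{i,\Sfun^nY}\). If \(f,g\in\cL(X,\Sfun^{n+1}Y)\) agree after every such projection, the induction hypothesis (monicity of the \(\Sproj{\Vect j}\)) gives \(\Sproj{i}\Compl f=\Sproj i\Compl g\) for all \(i\), and then monicity of the \(\Sproj i\) gives \(f=g\).

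The key auxiliary lemma is an iterated form of \ref{ax:S-sum}: a family \((h_k\in\cL(X,\Sfun^nY))_{k\in K}\) is summable as soon as \((\Ssum^n\Compl h_k)_{k\in K}\) is summable. We prove it by induction on \(n\). Since \(\Ssum^{n}=\Ssum^{n-1}\Compl\Ssum_{\Sfun^{n-1}Y}\) (up to naturality), the induction hypothesis makes \((\Ssum\Compl h_k)_k\) summable in \(\cL(X,\Sfun^{n-1}Y)\), and \ref{ax:S-sum} for \(\Sfun\) at the object \(\Sfun^{n-1}Y\) then makes \((h_k)_k\) summable. This lemma is exactly \ref{ax:S-sum} for the iterated structure.

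For \ref{ax:S-witness} at level \(n+1\), suppose first that \(h\in\cL(X,\Sfun^{n+1}Y)\) exists with \(f_{(i,\Vect j)}=\Sproj{(i,\Vect j)}\Compl h\).
\begin{itemize}
\item Set \(h_i=\Sproj{i}\Compl h\in\cL(X,\Sfun^nY)\). By the induction hypothesis, each row \((f_{(i,\Vect j)})_{\Vect j}\) is summable with sum \(\Ssum^n\Compl h_i\).
\item By naturality, \(\Ssum^n\Compl h_i=\Sproj i\Compl\Sfun(\Ssum^n)\Compl h\). So the family of row sums is summable by \ref{ax:S-witness} for \(\Sfun\), with sum \(\Ssum\Compl\Sfun(\Ssum^n)\Compl h=\Ssum^{n+1}\Compl h\).
\item \ref{ax:pa}, applied to the partition of \(\Mdeg^{n+1}\) into rows, gives summability of the whole family with the expected sum.
\end{itemize}
Conversely, if \((f_{\Vect k})\) is summable, \ref{ax:pa} makes each row summable, together with the family of row sums.
\begin{itemize}
\item The induction hypothesis yields witnesses \(h_i\in\cL(X,\Sfun^nY)\) with \(\Ssum^n\Compl h_i\) equal to the row sums.
\item The auxiliary lemma makes \((h_i)_{i\in\Mdeg}\) summable, so \ref{ax:S-witness} for \(\Sfun\) gives \(h=\Stuple{h_i}_{i\in\Mdeg}\).
\item Then \(\Sproj{(i,\Vect j)}\Compl h=\Sproj{\Vect j}\Compl h_i=f_{(i,\Vect j)}\).
\end{itemize}
Uniqueness of \(h\) follows from joint monicity.

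The main obstacle is the bookkeeping of which \(\Sproj{}\) or \(\Ssum\) acts at which level of \(\Sfun^{n+1}\). We will fix once and for all the decomposition \(\Sproj{(i,\Vect j)}=\Sproj{\Vect j}\Compl\Sproj{i,\Sfun^n}\) and \(\Ssum^{n+1}=\Ssum\Compl\Sfun(\Ssum^n)\), and justify every interchange by naturality of \(\Sproj i\) and \(\Ssum\). The other delicate point is remembering that \ref{ax:S-sum} is needed, beyond \ref{ax:S-witness}, to lift summability of the rows' witnesses.
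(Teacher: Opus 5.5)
Your proof is correct and follows the paper's argument essentially step for step. Both proceed by induction on $n$: first joint monicity, then the iterated \ref{ax:S-sum}, then both directions of \ref{ax:S-witness}, using the row decomposition of $\Mdeg^{n+1}$, \ref{ax:pa}, and the naturality identity $\Ssum^n\Compl\Sproj i=\Sproj i\Compl\Sfun(\Ssum^n)$. The differences are only cosmetic: you put the outer index first, you peel off the innermost $\Ssum$ rather than the outermost one in the iterated \ref{ax:S-sum}, and you add the degenerate case $n=0$.
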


\begin{proof}
  \Proofcase{}
  We can easily prove by induction on $n$ that the morphisms
  \((\Sproj{\Vect i}\in\cL(\Sfun^n X,X))_{\Vect i\in\Mdeg^n}\) are
  jointly monic, and thus
  $(\Sfun^n, (\Sproj{\Vect i})_{\Vect i \in \Mdeg^n}, \Ssum^n)$ is a
  \PSUMBLTY[\Mdeg^n] structure.

  \Proofcase{}
  We prove that \((\Sfun^n, (\Sproj{\Vect i})_{\Vect i\in\Mdeg^n}, \Ssum^n)\) 
  satisfies \ref{ax:S-sum}, by induction on $n$.
  The case $n=1$ is satisfied by~\ref{ax:S-sum} 
  in the \SUMBLTY{} structure $(\Sfun, (\Sproj i)_{i \in \Mdeg}, \Ssum)$.
  For the inductive step, consider a family
  $(f_i\in\cL(X, \Sfun^{n+1} Y))_{i \in I}$ such that
  $(\Ssum \Compl \Ssum^{n} \Compl f_i\in\cL(X,Y))_{i \in I}$ is summable.
  Then, by~\ref{ax:S-sum} 
  in the \SUMBLTY{} structure $(\Sfun, (\Sproj i)_{i \in \Mdeg}, \Ssum)$,
  $(\Ssum^{n} \Compl f_i)_{i \in I}$ is summable in
  $\cL(X, \Sfun Y)$.
  Furthermore, we know by inductive hypothesis that
  \((\Sfun^n, (\Sproj{\Vect i})_{\Vect i\in\Mdeg^n}, \Ssum^n)\)
  satisfies \ref{ax:S-sum} and hence $(f_i)_{i \in I}$ is summable.

  \Proofcase{}
  Finally, we prove that
  \((\Sfun^n, (\Sproj{\Vect i})_{\Vect i\in\Mdeg^n}, \Ssum^n)\)
  satisfies \ref{ax:S-witness} by induction on $n$.
  The base case $n = 1$ holds by \ref{ax:S-witness} satisfied by the
  \SUMBLTY{} structure $(\Sfun, (\Sproj i)_{i \in \Mdeg}, \Ssum)$.
  For the inductive step, let
  $(f_{i_1, \ldots, i_n, i})_{(i_1, \ldots, i_n, i) \in \Mdeg^{n+1}}$
  be an indexed family of $\cL(X, Y)$.
  First, assume that
  $(f_{i_1, \ldots, i_n, i})_{(i_1, \ldots, i_n, i) \in \Mdeg^{n+1}}$
  is summable.
  Then, for each $i \in \Mdeg$,
  $(f_{i_1, \ldots, i_n, i})_{(i_1, \ldots, i_n) \in \Mdeg^n}$ is summable
  by \cref{prop:subfamily-summable}, so by inductive hypothesis there
  is a morphism $f_i \in \cL(X, \Sfun^n Y)$ such that
  \[
    \Sproj {i_1, \dots, i_n} \Compl f_i = f_{i_1, \ldots, i_n, i}
    \qquad \Ssum^n \Compl f_i = \sum_{(i_1, \ldots, i_n) \in \Mdeg^n}
    f_{i_1, \ldots, i_n, i}\,.
  \]
  By \ref{ax:pa}, $(\Ssum^n \Compl f_i)_{i \in \Mdeg}$ is summable and
  hence $(f_i)_{i \in \Mdeg}$ is summable since
  \((\Sfun^n, (\Sproj{\Vect i})_{\Vect i\in\Mdeg^n}, \Ssum^n)\)
  satisfies \ref{ax:S-sum}, as already proved.
  Since
  $\Sproj {i_1, \ldots, i_n} \Compl \Sproj i \Compl \Stuple{f_j}_{j
    \in \Mdeg} = \Sproj {i_1, \ldots, i_n} \Compl f_i = f_{i_1,
    \ldots, i_n, i}$, the morphism $\Stuple{f_i}_{i \in \Mdeg}$ is a
  witness for the family
  $(f_{i_1, \ldots, i_n, i})_{(i_1, \ldots, i_n, i) \in \Mdeg^{n+1}}$.

  Conversely, assume that such a witness $f \in \cL(X, \Sfun^{n+1} Y)$
  exists.
  Then by inductive hypothesis, for each $i \in \Mdeg$, the family
  $(f_{i_1, \ldots, i_n, i})_{(i_1, \ldots, i_n) \in \Mdeg^n}$ is
  summable with witness $\Sproj i \Compl f$ and sum
  \[
    \sum_{(i_1, \ldots, i_n) \in \Mdeg^n} f_{i_1, \ldots, i_n, i} =
    \Ssum^n \Compl \Sproj i \Compl f = \Sproj i \Compl \Sfun(\Ssum^n)
    \Compl f\,.
  \]
  Thus by the \ref{ax:S-witness} property in the \SUMBLTY{} structure
  $(\Sfun, (\Sproj i)_{i \in \Mdeg}, \Ssum)$, we have that the family
  $(\sum_{(i_1, \ldots, i_n) \in \Mdeg^n} f_{i_1, \ldots, i_n, i})_{i
    \in \Mdeg}$ is summable with witness $\Sfun(\Ssum^n) \Compl f$ and
  with sum
  $\Ssum \Compl \Sfun(\Ssum^n) \Compl f = \Ssum^n \Compl \Ssum \Compl
  f = \Ssum^{n+1} \Compl f$.
  Hence, by \ref{ax:pa},
  $(f_{i_1, \ldots, i_n, i})_{i_1, \ldots, i_n, i \in \Mdeg}$ is
  summable with sum $\Ssum^{n+1} \Compl f$.
\end{proof}

We now prove \cref{thm:ss-induces-enrichement} using our reindexing
technique.

\begin{proof}[Proof of \cref{thm:ss-induces-enrichement}]
  Let $(f_i)_{i \in I}$ be an indexed family of $\cL(X, Y)$, and let
  \(g\in\cL(U,X)\) and \(h\in\cL(Y,V)\).
  Whatever be the cardinality of $\Mdeg$, by our assumption on the
  index sets considered in this section, there is $n \in \Nat_{>0}$
  such that $\Card{I} \leq \Card{\Mdeg}^{n}$ (remember that
  \(\Card\Mdeg\geq 2\)).
  Let $\phi : I \Injection \Mdeg^{n}$ be an injection.
  Consider the family
  $(f'_{i_1, \ldots, i_{n}})_{(i_1, \ldots, i_{n}) \in \Mdeg^n} =
  \Famact \phi (f_i)_{i \in I}$.
  We know that $0$ is a zero morphism by definition of \Pmcats{},
  and thus:
  \[
    \forall (i_1, \ldots, i_n) \in \Mdeg^n \setminus \phi(I), \ g
    \Compl f'_{i_1, \ldots, i_n} \Compl h = g \Compl 0 \Compl h = 0\,.
  \]
  Therefore
  $(g \Compl f'_{i_1, \ldots, i_n} \Compl h)_{(i_1, \ldots, i_n) \in
    \Mdeg^n} = \Famact \phi (g \Compl f_i \Compl h)_{i\in I}$.
  We conclude by \cref{prop:reindexing} and by
  \cref{lemma:summability-compl} applied to the
  \SUMBLTY[\Mdeg^n] structure $\Sfun^n$ that
  \[
    g \Compl \Big(\sum_{i \in I} f_i \Big) \Compl h \sumiff g \Compl
    \Big(\Smashspace{\sum_{(i_1, \ldots, i_{n}) \in \Mdeg^n}} f'_{i_1,
      \ldots, i_n} \Big) \Compl h \sumsub \Smashspace{\sum_{(i_1,
        \ldots, i_n) \in \Mdeg^n}} (g \Compl f'_{i_1, \ldots, i_n} \Compl
    h) \sumiff \sum_{i \in I} (g \Compl f_i \Compl h).
  \]
  Taking $g = \Id_{Y}$ implies that composition is left additive, and
  taking $h = \Id_X$ implies that composition is right additive.
  So composition is bilinear and hence $\cL$ is \Pmenriched{}.
\end{proof}

\subsection{The monad (and comonad) structure of the summability functor}
\label{sec:ssum-monad-structure}

We assume that the \Pmcat{} \(\cL\) is equipped with a
\SUMBLTY{} structure and we use the notations introduced previously.
We prove in this section that $\Sfun$ has a canonical symmetric 
monad structure, a canonical symmetric 
comonad structure, and that when $\Mdeg$ is total these two structures 
form a symmetric bimonad (see \cref{def:sym-monad}).
Our coherent calculus of differences will be given in
\cref{sec:diff-structure} by a distributive law that extends this
bimonad to the Kleisli category $\Kl \cL$.
First, we prove that naturality in $\cL$ is preserved by sums and
witnesses.

\begin{proposition}
  \label{prop:summability-naturality}
  The sums and witnesses of natural transformations are natural
  transformations.
\end{proposition}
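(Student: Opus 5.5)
We need to show two things. Let \(F,G:\cC\to\cL\) be functors, and let \((\phi_i:F\Natto G)_{i\in I}\) be a family of natural transformations such that for each object \(A\) of \(\cC\) the family \((\phi_{i,A})_{i\in I}\) is summable in \(\cL(FA,GA)\). First, the pointwise sum \(\psi_A=\sum_{i\in I}\phi_{i,A}\) is natural. Second, when \(I=\Mdeg\), the pointwise witness \(\theta_A=\Stuple{\phi_{i,A}}_{i\in\Mdeg}\in\cL(FA,\Sfun GA)\) is a natural transformation \(F\Natto\Sfun G\). The whole argument uses only the bilinearity of composition given by \cref{thm:ss-induces-enrichement} and the joint monicity \ref{ax:S-monic} of the projections.

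For sums, let \(f\in\cC(A,B)\). Since \(\cL\) is \Pmenriched{} by \cref{thm:ss-induces-enrichement}, we have \(Gf\Compl\psi_A=Gf\Compl\sum_i\phi_{i,A}=\sum_i(Gf\Compl\phi_{i,A})\), and both sides are defined because the left one is. The same argument gives \(\psi_B\Compl Ff=\sum_i(\phi_{i,B}\Compl Ff)\). By naturality of each \(\phi_i\), the two families \((Gf\Compl\phi_{i,A})_i\) and \((\phi_{i,B}\Compl Ff)_i\) are literally the same family, so their sums coincide. This proves naturality of \(\psi\).

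For witnesses, again let \(f\in\cC(A,B)\) and fix \(i\in\Mdeg\). Using naturality of \(\Sproj i\) (axiom \ref{ax:S-nat}), we compute \(\Sproj i\Compl\Sfun(Gf)\Compl\theta_A=Gf\Compl\Sproj i\Compl\theta_A=Gf\Compl\phi_{i,A}\). On the other side, \(\Sproj i\Compl\theta_B\Compl Ff=\phi_{i,B}\Compl Ff\). These agree by naturality of \(\phi_i\). Since this holds for every \(i\), joint monicity of the \(\Sproj i\) gives \(\Sfun(Gf)\Compl\theta_A=\theta_B\Compl Ff\). Alternatively, one can invoke \cref{lemma:summability-compl} to identify both sides as witnesses of the same family.

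No step is really an obstacle. The only point that needs care is to apply bilinearity in the \(\sumsub\) direction from a sum already known to be defined, so that the right-hand sums are guaranteed to exist. For the witness statement, the definedness of \(\theta\) is exactly the hypothesis, so nothing further needs checking there.
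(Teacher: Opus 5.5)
Your proposal is correct and follows essentially the same route as the paper. In the sum case, the paper also uses bilinearity of composition (from \cref{thm:ss-induces-enrichement}) in the $\sumsub$ direction on both sides. In the witness case, the paper writes the chain $\Stuple{\alpha^i_Y}_{i\in\Mdeg}\Compl Ff=\Stuple{\alpha^i_Y\Compl Ff}_{i\in\Mdeg}=\Stuple{Gf\Compl\alpha^i_X}_{i\in\Mdeg}=\Sfun(Gf)\Compl\Stuple{\alpha^i_X}_{i\in\Mdeg}$, which is exactly your alternative via \cref{lemma:summability-compl}; your joint-monicity computation is that lemma unfolded.
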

More precisely, let $\cL'$ be a category and $F, G : \cL' \arrow \cL$ 
be two functors. Let $I$ be an index set, and for 
each $i \in I$, let $\alpha^i : F \Natto G$ be a 
natural transformation.
We say that $(\alpha^i)_{i \in I}$ is 
summable if for each object $X$ of $\cL'$, 
$(\alpha^i_X)_{i \in I}$ is summable. 
Then, the family of morphisms 
$\sum_{i \in I} \alpha^i$ given by
$(\sum_{i \in I} \alpha^i)_X 
= \sum_{i \in I} \alpha^i_X \in \cL(F X, G X)$
is a natural transformation $F \Natto G$.
Furthermore, if $I = \Mdeg$, then the family
of morphisms $\Stuple{\alpha^i}_{i \in \Mdeg}$ given by
$(\Stuple{\alpha^i}_{i \in \Mdeg})_X = 
\Stuple{\alpha^i_X} \in \cL(F X, \Sfun G X)$
is a natural transformation $F \Natto \Sfun G$.

\begin{proof} Let $f \in \cL'(X, Y)$ be a morphism.
  Then, 
  \[ \Big( \sum_{i \in I} \alpha_Y^i \Big) \Compl F f  
  \sumsub \sum_{i \in I} (\alpha_Y^i \Compl F f) 
  = \sum_{i \in I} (G f \Compl \alpha_X^i) 
  \sumsubinv G f \Compl \Big(\sum_{i \in I} \alpha_X^i \Big) \]
  by enrichment over \Pmonoid{}s.
  So $\sum_{i \in I} \alpha^i$ is natural. Furthermore, if 
  $I = \Mdeg$.
  \[ \Stuple{\alpha_Y^i}_{i \in \Mdeg} \Compl F f 
  = \Stuple{\alpha_Y^i \Compl F f}_{i \in \Mdeg}
  = \Stuple{G f \Compl \alpha_X^i}_{i \in \Mdeg}
  = \Sfun G f \Compl \Stuple{\alpha_X^i}_{i \in \Mdeg} \] 
  so $\Stuple{\alpha^i}_{i \in \Mdeg}$ is a natural transformation 
  $F \Natto \Sfun G$.
\end{proof}

We first introduce the symmetry of the monad and the comonad structure.
This symmetry swaps the order of the projections.

\begin{lemma} %
  \label{lemma:sflip-invol-yb}
  For each object \(X\) of \(\cL\), there is a unique morphism
  \(\Sflip_X\in\cL(\Sfun^2X,\Sfun^2X)\) such that %
  \(\Sproj i\Compl\Sproj j\Compl\Sflip_X=\Sproj j\Compl\Sproj i\) %
  for all \(i,j\in\Mdeg\).
  This morphism is natural and such that \(\Sflip_X^2=\Id_{\Sfun^2X}\).
  Moreover, the natural transformation \(\Sflip\) satisfies the
  Yang-Baxter equation, that is
  \begin{center}
    \begin{tikzcd}
      \Sfun^3
      \ar[r,"\Sfun\Sflip"]\ar[d,swap,"\Sflip\Sfun"]
      &
      \Sfun^3
      \ar[r,"\Sflip\Sfun"]
      &
      \Sfun^3
      \ar[d,"\Sfun\Sflip"]
      \\
      \Sfun^3
      \ar[r,"\Sfun\Sflip"']
      &
      \Sfun^3
      \ar[r,"\Sflip\Sfun"']
      &
      \Sfun^3
    \end{tikzcd}
  \end{center}
\end{lemma}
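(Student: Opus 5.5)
The plan is to use the summability structure $\Sfun^2$ over $\Mdeg^2$ given by \cref{prop:iterated-summability-structure}. Its projections are $\Sproj{(i,j)}=\Sproj i\Compl\Sproj j$, and by \ref{ax:S-witness} a family $(f_{i,j}\in\cL(Y,X))_{(i,j)\in\Mdeg^2}$ has a unique witness in $\cL(Y,\Sfun^2X)$ exactly when it is summable. So $\Sflip_X$ exists as soon as the swapped family $(\Sproj j\Compl\Sproj i)_{(i,j)\in\Mdeg^2}$ is summable in $\cL(\Sfun^2X,X)$. The unswapped family $(\Sproj i\Compl\Sproj j)_{(i,j)}$ is summable with witness $\Id_{\Sfun^2X}$, by \cref{lemma:sproj-summable} applied to $\Sfun^2$. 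The swapped family is its reindexing along the bijection $(i,j)\mapsto(j,i)$ of $\Mdeg^2$, so \cref{prop:reindexing-bijection} makes it summable as well. Uniqueness of $\Sflip_X$ is then the joint monicity of the $\Sproj{(i,j)}$, which is the \PSUMBLTY{} part of \cref{prop:iterated-summability-structure}.

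All the remaining properties will follow from joint monicity, checked componentwise.

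\textbf{Involutivity.} We compute $\Sproj i\Compl\Sproj j\Compl\Sflip_X\Compl\Sflip_X=\Sproj j\Compl\Sproj i\Compl\Sflip_X=\Sproj i\Compl\Sproj j$. This holds for all $i,j$, so $\Sflip_X^2=\Id$.

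\textbf{Naturality.} Let $f\in\cL(X,Y)$. Naturality of the $\Sproj i$ gives $\Sproj i\Compl\Sproj j\Compl\Sfun^2f=f\Compl\Sproj i\Compl\Sproj j$. Hence both $\Sproj i\Compl\Sproj j\Compl\Sflip_Y\Compl\Sfun^2 f$ and $\Sproj i\Compl\Sproj j\Compl\Sfun^2f\Compl\Sflip_X$ equal $f\Compl\Sproj j\Compl\Sproj i$. Alternatively, this is \cref{prop:summability-naturality} applied to the witness of the natural transformations $\Sproj j\Compl\Sproj i$.

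\textbf{Yang--Baxter.} We post-compose both sides with $\Sproj i\Compl\Sproj j\Compl\Sproj k$ and use the joint monicity of the triple projections from \cref{prop:iterated-summability-structure} with $n=3$. The needed facts are
\[
\Sproj i\Compl\Sproj j\Compl\Sproj k\Compl\Sflip_{\Sfun X}=\Sproj j\Compl\Sproj i\Compl\Sproj k
\]
and, using naturality of $\Sproj i$ as $\Sproj{i,\Sfun^2X}\Compl\Sfun\Sflip_X=\Sflip_X\Compl\Sproj{i,\Sfun^3X}$,
\[
\Sproj i\Compl\Sproj j\Compl\Sproj k\Compl\Sfun\Sflip_X=\Sproj i\Compl\Sproj k\Compl\Sproj j .
\]
In words, $\Sflip\Sfun$ swaps the first two indices and $\Sfun\Sflip$ swaps the last two. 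Each side of the Yang--Baxter hexagon therefore composes three adjacent transpositions. Tracking the indices shows that both sides send $\Sproj i\Compl\Sproj j\Compl\Sproj k$ to $\Sproj k\Compl\Sproj j\Compl\Sproj i$, the reversal, and joint monicity concludes.

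The only mildly delicate point is existence. It needs summability of the swapped family, and hence the reindexing result together with the identification of $\Sfun^2$ as a summability structure. I also have to keep straight which projection, $\Sproj i$ or $\Sfun\Sproj i$, acts on which layer in the naturality step for Yang--Baxter.
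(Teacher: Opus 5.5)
Your route is the paper's: existence and uniqueness of $\Sflip_X$ from the $\Mdeg^2$-indexed summability structure of \cref{prop:iterated-summability-structure}, then involutivity, naturality and Yang--Baxter by joint monicity of the iterated projections. The existence step differs only slightly. The paper obtains summability of $(\Sproj j\Compl\Sproj i)_{(i,j)\in\Mdeg^2}$ from \cref{thm:ss-induces-enrichement}. You get it from the fact that $\Id$ witnesses the unswapped family, together with invariance of summability under the swap bijection. Both are valid, and yours avoids passing through the enrichment theorem.

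The two displayed identities in your Yang--Baxter step are false as written, because you have the roles of $\Sflip\Sfun$ and $\Sfun\Sflip$ reversed. In $\Sproj i\Compl\Sproj j\Compl\Sproj k$ on $\Sfun^3X$, the projection applied first is $\Sproj{k,\Sfun^2X}$, then $\Sproj{j,\Sfun X}$, then $\Sproj{i,X}$. Now $\Sflip_{\Sfun X}$ exchanges the two outermost layers, so its defining equation gives
\[
\Sproj i\Compl\Sproj j\Compl\Sproj k\Compl\Sflip_{\Sfun X}=\Sproj i\Compl\Sproj k\Compl\Sproj j .
\]
The projection adjacent to $\Sfun\Sflip_X$ is $\Sproj k$, not $\Sproj i$. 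Naturality in the form $\Sproj{k,\Sfun^2X}\Compl\Sfun\Sflip_X=\Sflip_X\Compl\Sproj{k,\Sfun^2X}$ then yields the second identity; note the subscript is $\Sfun^2X$ on both sides, not $\Sfun^3X$:
\[
\Sproj i\Compl\Sproj j\Compl\Sproj k\Compl\Sfun\Sflip_X=\Sproj i\Compl\Sproj j\Compl\Sflip_X\Compl\Sproj k=\Sproj j\Compl\Sproj i\Compl\Sproj k .
\]
So $\Sfun\Sflip$ swaps the first two written indices and $\Sflip\Sfun$ the last two, exactly as in the paper's computation.

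Your final conclusion survives anyway. Write $s=\Sfun\Sflip$ and $t=\Sflip\Sfun$; the two paths of the hexagon are $s\,t\,s$ and $t\,s\,t$, and this equation is unchanged when $s$ and $t$ are interchanged. With the correct labels, both paths still send $\Sproj i\Compl\Sproj j\Compl\Sproj k$ to $\Sproj k\Compl\Sproj j\Compl\Sproj i$. Correct the two displays and the naturality instance, and the proof is complete.
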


\begin{proof}
  By \cref{thm:ss-induces-enrichement}, $(\Sproj j \Compl \Sproj i)_{(i,j) \in \Mdeg^2}$
  is summable, and thus $\Sflip$ exists by \cref{prop:iterated-summability-structure}, and 
  is natural by~\cref{prop:summability-naturality}.
  Observe that $\Sproj i \Compl \Sproj j \Compl \Sflip^2 
  = \Sproj j \Compl \Sproj i \Compl \Sflip 
  = \Sproj i \Compl \Sproj j$ so $\Sflip^2 = \Sflip$ 
  by joint monicity of the $\Sproj i \Compl \Sproj j$.
  We can check the
  Yang-Baxter property by a similar argument: let \(i,j,k\in\Mdeg\), we have
  \begin{align*}
    \Sproj i\Compl\Sproj j\Compl\Sproj k\Compl
    (\Sfun\Sflip)\Compl(\Sflip\Sfun)\Compl(\Sfun\Sflip)
    &=\Sproj i\Compl\Sproj j\Compl\Sflip
      \Compl\Sproj k\Compl(\Sflip\Sfun)\Compl(\Sfun\Sflip)\\
    &=\Sproj j\Compl\Sproj i
      \Compl\Sproj k\Compl(\Sflip\Sfun)\Compl(\Sfun\Sflip)\\
    &=\Sproj j\Compl\Sproj k
      \Compl\Sproj i\Compl(\Sfun\Sflip)
    =\Sproj j\Compl\Sproj k\Compl\Sflip
      \Compl\Sproj i\\
    &=\Sproj k\Compl\Sproj j\Compl\Sproj i
  \end{align*}
  and a similarly
  \(\Sproj i\Compl\Sproj j\Compl\Sproj k\Compl
  (\Sflip\Sfun)\Compl(\Sfun\Sflip)\Compl(\Sflip\Sfun)
  =\Sproj i\Compl\Sproj k\Compl\Sproj j
  \Compl(\Sfun\Sflip)\Compl(\Sflip\Sfun)
  =\Sproj i\Compl\Sproj k\Compl\Sflip\Compl\Sproj j
  \Compl(\Sflip\Sfun)
  =\Sproj k\Compl\Sproj i\Compl\Sproj j
  \Compl(\Sflip\Sfun)
  =\Sproj k\Compl\Sproj j\Compl\Sproj i\).
  We conclude that the Yang-Baxter equation holds by joint monicity
  of the $\Sproj i\Compl\Sproj j\Compl\Sproj k$.
\end{proof}

We now define the monad structure.
By \ref{ax:unary}, the family \((\Kronecker ij\Id_X)_{j\in\Mdeg}\) 
is summable in \(\cL(X,X)\) with sum $\Id_X$, so 
there exists a unique morphism
\(\Sin i\in\cL(X,\Sfun X)\) such that
\(\Sproj j\Compl\Sin i=\Kronecker ij\Id_X\), and 
 \(\Ssum\Compl\Sin i=\Id_X\).
This is a natural transformation by \cref{prop:summability-naturality}.

\begin{remark}
  Until now, the only information we have used about \(\Mdeg\) is its
  cardinality.
  Its \Pmonoid{} structure will only be used in the 
  definition of the monad multiplication.
\end{remark}

\begin{definition}
  \label{def:mdegsl}
  Given \(i\in\Mdeg\) we set
    \(\Mdegsl i=\Eset{(l,r)\in\Mdeg^2\St l+r=i}\).
  These sets are pairwise disjoint and
  \(\Mdegdom=\Union_{i\in\Mdeg}\Mdegsl i\) is a subset of \(\Mdeg^2\).
  This inclusion is an equality iff \(\Mdeg\) is total, that is, is a
  commutative monoid.
\end{definition}

Notice that our \Mpositivity{} assumption on \(\Mdeg\) means that
\(\Mdegsl 0=\Set{(0,0)}\).

\begin{lemma} %
  \label{lemma:ssum-monad-mult}
  There is a unique morphism
  \(\Sfunmonm_X\in\cL(\Sfun^2X,\Sfun X)\) such that %
  \(\Sproj
  i\Compl\Sfunmonm=\sum_{(l,r)\in\Mdegsl i}\Sproj
  l\Compl\Sproj r\).
  This morphism is natural and we have
  \(\Sfunmonm_X\Compl\Sflip_X=\Sfunmonm_X\).
\end{lemma}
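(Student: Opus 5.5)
To construct \(\Sfunmonm_X\), I would use \ref{ax:S-witness}: it suffices to show that the \(\Mdeg\)-indexed family \(\big(\sum_{(l,r)\in\Mdegsl i}\Sproj l\Compl\Sproj r\big)_{i\in\Mdeg}\) in \(\cL(\Sfun^2X,X)\) is well defined and summable. Uniqueness is then the uniqueness of witnesses, which follows from \ref{ax:S-monic}.

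\textbf{Summability.} By \cref{prop:iterated-summability-structure}, the family \((\Sproj l\Compl\Sproj r)_{(l,r)\in\Mdeg^2}\) is summable, since it has the witness \(\Id_{\Sfun^2X}\), and its sum is \(\Ssum^2\). By \cref{prop:subfamily-summable}, its restriction to \(\Mdegdom\subseteq\Mdeg^2\) is still summable. By \cref{def:mdegsl}, the sets \((\Mdegsl i)_{i\in\Mdeg}\) form a partition of \(\Mdegdom\), in the slightly nonstandard sense where empty parts are allowed. Axiom \ref{ax:pa} then gives two things at once: each inner sum \(\sum_{(l,r)\in\Mdegsl i}\Sproj l\Compl\Sproj r\) is defined, and the \(\Mdeg\)-indexed family of these inner sums is summable. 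The index sets involved are subsets of \(\Mdeg^2\), so they respect \cref{conv:cardinality-indexing-sets}, and the general indexing of \cref{def:summability-arbitrary-index} applies. Axiom \ref{ax:S-witness} now produces \(\Sfunmonm_X=\Stuple{\sum_{(l,r)\in\Mdegsl i}\Sproj l\Compl\Sproj r}_{i\in\Mdeg}\).

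\textbf{Naturality.} Each \(\Sproj l\Compl\Sproj r\) is a natural transformation \(\Sfun^2\Natto\Id\). By \cref{prop:summability-naturality}, both their sums and witnesses of summable families of natural transformations are natural, so \(\Sfunmonm\) is natural. Alternatively, I could check naturality directly: for each \(i\), compose with \(\Sproj i\), then use \cref{thm:ss-induces-enrichement} together with naturality of the projections, and conclude by joint monicity.

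\textbf{Symmetry.} For each \(i\), bilinearity of composition (\cref{thm:ss-induces-enrichement}) and the defining property of \(\Sflip\) from \cref{lemma:sflip-invol-yb} give
\[
\Sproj i\Compl\Sfunmonm\Compl\Sflip=\sum_{(l,r)\in\Mdegsl i}\Sproj l\Compl\Sproj r\Compl\Sflip=\sum_{(l,r)\in\Mdegsl i}\Sproj r\Compl\Sproj l .
\]
Commutativity of \(\Mdeg\) says that \((l,r)\mapsto(r,l)\) is a bijection of \(\Mdegsl i\). By \cref{prop:reindexing-bijection}, the last sum therefore equals \(\Sproj i\Compl\Sfunmonm\). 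Joint monicity \ref{ax:S-monic} then yields \(\Sfunmonm\Compl\Sflip=\Sfunmonm\).

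The only delicate step is the summability argument, specifically applying \ref{ax:pa} to a partition of a subset of \(\Mdeg^2\) rather than of \(\Mdeg^2\) itself. I would handle this by first passing to the subfamily indexed by \(\Mdegdom\) via \cref{prop:subfamily-summable}, as above. An alternative is to partition all of \(\Mdeg^2\) by adding the extra part \(\Mdeg^2\setminus\Mdegdom\).
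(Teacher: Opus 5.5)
Your proposal is correct and follows essentially the same route as the paper. The argument runs through the same steps: summability of the double family, restriction to \(\Mdegdom\), \ref{ax:pa} along the partition \((\Mdegsl i)_{i\in\Mdeg}\), the witness from \ref{ax:S-witness}, naturality via \cref{prop:summability-naturality}, and symmetry by composing with \(\Sflip\).

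The differences are cosmetic. You get summability of \((\Sproj l\Compl\Sproj r)_{(l,r)\in\Mdeg^2}\) from \cref{prop:iterated-summability-structure} (witness \(\Id\)) rather than from \cref{thm:ss-induces-enrichement}. You also spell out the reindexing along \((l,r)\mapsto(r,l)\) and the final joint-monicity step, which the paper leaves implicit.
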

\begin{proof}
  The family \((\Sproj l\Compl\Sproj r)_{(l,r)\in\Mdeg^2}\) is summable
  by \cref{thm:ss-induces-enrichement}.
  It follows by \cref{prop:subfamily-summable} that the family 
  \((\Sproj l\Compl\Sproj r)_{(l,r)\in\Mdegdom}\) is
  summable.
  Hence for each \(i\in\Mdeg\) the family
  \((\Sproj l\Compl\Sproj r)_{(l,r)\in \Mdegsl i}\) is summable as well as
  the family
  \((\sum_{(l,r)\in \Mdegsl i}\Sproj l\Compl\Sproj r)_{i\in\Mdeg}\) and we
  set
  \(\Sfunmonm_X=\Stuple{\sum_{(l,r)\in \Mdegsl i}\Sproj l\Compl\Sproj
    r}_{i\in\Mdeg}\). This is a natural transformation by 
    \cref{prop:summability-naturality}.
  Last we have %
  \(\Sproj i\Compl\Sfunmonm_X\Compl\Sflip_X =(\sum_{(l,r)\in \Mdegsl i}
  \Sproj l\Compl\Sproj r)\Compl\Sflip_X=\sum_{(l,r)\in \Mdegsl i}
  \Sproj l\Compl\Sproj r\Compl\Sflip_X=\sum_{(l,r)\in \Mdegsl i}
  \Sproj r\Compl\Sproj l=\Sproj i\Compl\Sfunmonm\).
\end{proof}

\begin{theorem} %
  \label{th:sfun-linear-monad}
  The tuple \((\Sfun,\Sin0,\Sfunmonm,\Sflip)\) is a \Symmonad{} (see
  \cref{def:sym-monad}) that we denote as \(\Sfunm\).
\end{theorem}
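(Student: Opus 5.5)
The strategy is to check each commutation of \cref{def:symmonoid} in $\FUN\cL\cL$ pointwise at an object $X$, using the joint monicity of the projections. For equalities of morphisms into $\Sfun X$ we use the $\Sproj i$; into $\Sfun^2X$ the $\Sproj i\Compl\Sproj j$; into $\Sfun^3X$ the triple composites (\cref{prop:iterated-summability-structure}). Each side is then computed with the defining equations: $\Sproj j\Compl\Sin i=\Kronecker ij\Id$, $\Sproj i\Compl\Sfunmonm=\sum_{(l,r)\in\Mdegsl i}\Sproj l\Compl\Sproj r$ and $\Sproj i\Compl\Sproj j\Compl\Sflip=\Sproj j\Compl\Sproj i$. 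We also use naturality of $\Sproj i$, in the form $\Sproj i\Compl\Sfun f=f\Compl\Sproj i$, and bilinearity of composition (\cref{thm:ss-induces-enrichement}) to push projections through sums. Naturality of $\Sin0$, $\Sfunmonm$ and $\Sflip$, involutivity and Yang--Baxter for $\Sflip$, and $\Sfunmonm\Compl\Sflip=\Sfunmonm$ are already given by \cref{lemma:sflip-invol-yb,lemma:ssum-monad-mult}. So what remains is the monoid laws (unit and associativity) and the two distributivity diagrams for $\Sflip$.

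\textbf{Unit laws.} We need $\Sfunmonm\Compl\Sin0\Sfun=\Id$ and $\Sfunmonm\Compl\Sfun\Sin0=\Id$. For the first, $\Sproj i\Compl\Sfunmonm\Compl\Sin0_{\Sfun X}=\sum_{(l,r)\in\Mdegsl i}\Sproj l\Compl\Sproj r\Compl\Sin0$. Since $\Sproj r\Compl\Sin0_{\Sfun X}=\Kronecker r0\Id$, the surviving terms have $r=0$ and so $l=i$. The other terms are $0$ because $0$ is a zero morphism, and \cref{prop:zero-neutral} removes them, leaving $\Sproj i$. The second law is symmetric, using $\Sproj r\Compl\Sfun\Sin0=\Sin0\Compl\Sproj r$ and $\Sproj l\Compl\Sin0=\Kronecker l0\Id$. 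The unit/symmetry diagram is $\Sflip\Compl\Sin0\Sfun=\Sfun\Sin0$; projecting with $\Sproj i\Compl\Sproj j$ gives $\Sproj j\Compl\Sproj i\Compl\Sin0=\Kronecker i0\Sproj j$ on both sides.

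\textbf{Associativity.} Compose $\Sproj i$ with $\Sfunmonm\Compl\Sfunmonm\Sfun$ and with $\Sfunmonm\Compl\Sfun\Sfunmonm$ and expand using bilinearity. The first gives $\sum_{(a,r)\in\Mdegsl i}\sum_{(l,m)\in\Mdegsl a}\Sproj l\Compl\Sproj m\Compl\Sproj r$. The second gives $\sum_{(l,b)\in\Mdegsl i}\sum_{(m,r)\in\Mdegsl b}\Sproj l\Compl\Sproj m\Compl\Sproj r$. By \ref{ax:pa}, both equal the sum of $\Sproj l\Compl\Sproj m\Compl\Sproj r$ over the triples with $(l+m)+r$ defined and equal to $i$, respectively $l+(m+r)$ defined and equal to $i$. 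Associativity of $\Mdeg$ as a \Pmonoid{} says these index sets coincide. The whole family is summable because it is a subfamily of the summable $\Mdeg^3$-family (\cref{prop:subfamily-summable}), which justifies both the regrouping and the use of $\sumsub$ in each expansion. This bookkeeping, namely making sure every partial sum is defined before reindexing, is the step I expect to need the most care, although it is not conceptually hard.

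\textbf{Distributivity of $\Sflip$ over $\Sfunmonm$.} We must show $\Sfun\Sfunmonm\Compl\Sflip\Sfun\Compl\Sfun\Sflip=\Sflip\Compl\Sfunmonm\Sfun$ as maps $\Sfun^3\Natto\Sfun^2$. Test with $\Sproj i\Compl\Sproj j$. On the right, $\Sproj i\Compl\Sproj j\Compl\Sflip\Compl\Sfunmonm\Sfun=\Sproj j\Compl\Sproj i\Compl\Sfunmonm=\sum_{(l,r)\in\Mdegsl j}\Sproj l\Compl\Sproj r\Compl\Sproj i$. On the left, $\Sproj j$ passes through $\Sfun\Sfunmonm$ by naturality, and $\Sproj i\Compl\Sfunmonm$ expands to $\sum_{(l,r)\in\Mdegsl j}\Sproj l\Compl\Sproj r$ with $j$ still outside. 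Moving the two flips past the projections as in the Yang--Baxter computation of \cref{lemma:sflip-invol-yb} lands on the same expression. Joint monicity then concludes.
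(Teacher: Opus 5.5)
Your proposal is correct and is essentially the paper's proof: naturality, involutivity, Yang--Baxter and $\Sfunmonm\Compl\Sflip=\Sfunmonm$ come from \cref{lemma:sflip-invol-yb,lemma:ssum-monad-mult}, and the monad laws and the two remaining symmetry diagrams are checked by joint monicity of the $\Sproj i$ and of the $\Sproj i\Compl\Sproj j$, as the paper does. There is one bookkeeping slip in the last step: with the reading you use for the unit laws (in $\Sproj l\Compl\Sproj r$ the right factor is the outer projection), $\Sproj j\Compl\Sproj i\Compl\Sfunmonm\Sfun$ expands to $\sum_{(l,r)\in\Mdegsl i}\Sproj j\Compl\Sproj l\Compl\Sproj r$, because it is the factor $\Sproj i$ next to $\Sfunmonm$ that splits, and the left side reduces to this same sum; your $\sum_{(l,r)\in\Mdegsl j}\Sproj l\Compl\Sproj r\Compl\Sproj i$ uses the opposite layer convention, so keep a single convention throughout.
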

\begin{proof}[Proof sketch]
  The monad commutations are checked using as usual the joint monicity
  of the \(\Sproj i\)'s.
  For instance, one proves easily that
  \begin{align*}
    (\Sfun\Sfunmonm_X)\Compl\Sfunmonm_X
    =\Sfunmonm_{\Sfun X}\Compl\Sfunmonm_X
    =\Stuple{\sum_{\Biind{(l,j,r)\in\Mdeg^3}{l+j+r=i}}
      \Sproj l\Compl\Sproj j\Compl\Sproj r}_{i\in\Mdeg}\,.
  \end{align*}
  Next, we already now thanks to \cref{lemma:sflip-invol-yb}
  that $\Sflip$ is involutive and satisfies the Yang Baxter equation,
  and we already know thanks to \cref{lemma:ssum-monad-mult}
  that $\Sfunmonm \Compl \Sflip =  \Sfunmonm \Compl \Sflip$,
  so we only need to prove the two following commutations:
  \begin{center}
    \begin{tikzcd}
      &[-2em]
      \Sfun\ar[ld,swap,"\Sin0\Sfun"]\ar[dr,"\Sfun\Sin0"]
      &[-2em]
      \\
      \Sfun^2\ar[rr,"\Sflip"']
      &&
      \Sfun^2
    \end{tikzcd}
    \quad
    \begin{tikzcd}
      \Sfun^3\ar[r,"\Sfun\Sflip"]\ar[d,swap,"\Sfunmonm\Sfun"]
      &
      \Sfun^3\ar[r,"\Sflip\Sfun"]
      &
      \Sfun^3\ar[d,"\Sfun\Sfunmonm"]
      \\
      \Sfun^2\ar[rr,"\Sflip"']
      &&
      \Sfun^2
    \end{tikzcd}
  \end{center}
  We use the joint monicity if the $\Sproj i$.
  For all \(i,j\in\Mdeg\), we have
  \begin{align*}
    \Sproj i\Compl\Sproj j\Compl\Sflip\Compl(\Sin0\Sfun)
    &=\Sproj j\Compl\Sproj i\Compl(\Sin0\Sfun)
      =\Kronecker i0\Sproj j\\
    \Sproj i\Compl\Sproj j\Compl(\Sfun\Sin0)
    &=\Sproj i\Compl\Sin0\Sproj j=\Kronecker i0\Sproj j\\
    \Sproj i\Compl\Sproj j\Compl\Sflip\Compl(\Sfunmonm\Sfun)
    &=\Sproj j\Compl\Sproj i\Compl(\Sfunmonm\Sfun)
      =\Sproj j\sum_{(l,r)\in\Mdegsl i}\Sproj l\Compl\Sproj r
      =\sum_{(l,r)\in\Mdegsl i}\Sproj j\Compl\Sproj l\Compl\Sproj r\\
    \Sproj i\Compl\Sproj j\Compl(\Sfun\Sfunmonm)
    \Compl(\Sflip\Sfun)\Compl(\Sfun\Sflip)
    &=\Sproj i\Compl\Sfunmonm\Compl\Sproj j
      \Compl(\Sflip\Sfun)\Compl(\Sfun\Sflip)\\
    &=\sum_{(l,r)\in\Mdegsl i}\Sproj l\Compl\Sproj r
      \Compl\Sproj j\Compl(\Sflip\Sfun)\Compl(\Sfun\Sflip)\\
    &=\sum_{(l,r)\in\Mdegsl i}\Sproj l\Compl\Sproj j
      \Compl\Sproj r\Compl(\Sfun\Sflip)\\
    &=\sum_{(l,r)\in\Mdegsl i}\Sproj l\Compl\Sproj j
      \Compl\Sflip\Compl\Sproj r
      =\sum_{(l,r)\in\Mdegsl i}\Sproj j\Compl\Sproj l\Compl\Sproj r\,.
      \qedhere
  \end{align*}
\end{proof}

\begin{remark}
  \label{rem:kleisli-S-power-series}
  Let us provide some intuition on the Kleisli category
  $\Klcat{\cL}{\Sfunm}$ of this monad.
  A morphism $f \in \cL(X, \Sfun Y)$ is uniquely characterized by its
  projections $f_i = \Sproj i \Compl f$, and therefore we can
  represent $f$ by a formal power series
  \[
    f = \sum_{n \in \Mdeg} f_i \varepsilon^i
  \]
  whose coefficients lie in $\cL(X, Y)$, and where the exponents of
  the formal variable $\varepsilon$ lie in $\Mdeg$.
  The canonical functor $\cL \arrow \Klcat \cL \Sfunm$ maps a morphism
  $h \in \cL(X, Y)$ to $\Sin 0 \Compl h \in \cL(X, \Sfun Y)$, which is
  nothing but the constant power series $h$ since
  $\Sproj 0 \Compl \Sin 0 \Compl h = h$ and
  $\Sproj i \Compl \Sin 0 \Compl h = 0$ when $i \neq 0$.
  Then, the composition in the Kleisli category corresponds to a
  Cauchy product.
  More precisely, given $f \in \cL(X, \Sfun Y)$ and
  $g \in \cL(Y, \Sfun Z)$, their composition is given by
  \[
    g \Comp f = \Sfunmonm \Compl \Sfun g \Compl f = \sum_{n \in \Mdeg}
    \big(\sum_{i + j = n} g_j \Compl f_i \big) \varepsilon^n\,.
  \]
  When $\Mdeg = (\Nat, +)$, this composition corresponds to the
  standard multiplication of power series (defining the multiplication
  of two coefficients as their composition).
  When $\Mdeg = \Set{0, 1}$ with $1 +1$ undefined, we get that
  $f = f_0 + f_1 \varepsilon$, $g = g_0 + g_1 \varepsilon$, and
  $g \Comp f = g_0 \Compl f_0 + (g_1 \Compl f_0 + g_0 \Compl f_1)
  \varepsilon$.
  In other word, morphism can be considered as degree one polynomial
  over a formal variable $\varepsilon$ such that $\varepsilon^2 = 0$,
  and composition is the polynomial multiplications.
  This corresponds to Clifford's \emph{dual numbers}, that are nowadays
  widely used in automated differentiation~\cite{Baydin17}.
\end{remark}

We now introduce the comonad structure.
The comultiplication is given by a structure map of \(\Sfun\) that was
overlooked in~\cite{Ehrhard23a}.

\begin{lemma}
  There is a natural \emph{lift} morphism %
  \(\Sfunlift_X\in\cL(\Sfun X,\Sfun^2X)\) characterized by %
  \(\Sproj i\Compl\Sproj j\Compl\Sfunlift=\Kronecker ij\Sproj i\).
\end{lemma}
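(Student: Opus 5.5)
To build \(\Sfunlift_X\), I will proceed exactly as for \(\Sflip\) and \(\Sfunmonm\). I will exhibit the family \((\Kronecker ij\Sproj i\in\cL(\Sfun X,X))_{(i,j)\in\Mdeg^2}\) as a summable \(\Mdeg^2\)-indexed family. Then \cref{prop:iterated-summability-structure}, which says that \((\Sfun^2,(\Sproj{i}\Compl\Sproj j)_{(i,j)\in\Mdeg^2},\Ssum^2)\) is a \SUMBLTY[\Mdeg^2] structure, provides a unique witness \(\Sfunlift_X\in\cL(\Sfun X,\Sfun^2X)\) with \(\Sproj i\Compl\Sproj j\Compl\Sfunlift_X=\Kronecker ij\Sproj i\). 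Uniqueness follows from the joint monicity of the \(\Sproj i\Compl\Sproj j\), and naturality from \cref{prop:summability-naturality}, because each \(\Kronecker ij\Sproj i\) is a natural transformation \(\Sfun\Natto\Id\) (the zero transformation being natural since zeros are zero morphisms).

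For summability, the family \((\Sproj i)_{i\in\Mdeg}\) is summable by \cref{lemma:sproj-summable}. Consider the diagonal injection \(\phi:\Mdeg\Injection\Mdeg^2\), \(i\mapsto(i,i)\). The family \(\Famact\phi(\Sproj i)_{i\in\Mdeg}\) is precisely \((\Kronecker ij\Sproj i)_{(i,j)\in\Mdeg^2}\): off the diagonal its entries are \(0\) by definition of reindexing along an injection. By \cref{prop:reindexing} it is summable, with sum \(\Ssum\).

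A more hands-on construction is also available, and I would mention it as a check. Put \(\Sfunlift_X=\Stuple{\Sin j\Compl\Sproj j}_{j\in\Mdeg}\), in the sense that \(\Sproj j\Compl\Sfunlift_X=\Sin j\Compl\Sproj j\). This requires the summability of \((\Sin j\Compl\Sproj j)_{j\in\Mdeg}\) in \(\cL(\Sfun X,\Sfun X)\). That follows from \ref{ax:S-sum}: \(\Ssum\Compl\Sin j\Compl\Sproj j=\Sproj j\), and these are summable. One then checks that \(\Sproj i\Compl\Sproj j\Compl\Sfunlift=\Sproj i\Compl\Sin j\Compl\Sproj j=\Kronecker ij\Sproj j\).

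The only delicate point is the indexing convention. When \(\Mdeg\) is infinite, the family \((\Sin j\Compl\Sproj j)_{j\in\Mdeg}\) is infinite, so I must use the infinitary setting of \cref{conv:cardinality-indexing-sets}. The first route handles this uniformly through \cref{prop:reindexing}, so that is the one I will present.
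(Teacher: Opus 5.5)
Your proof is correct, and the ``hands-on'' construction you offer as a check is exactly the paper's proof. The paper applies \ref{ax:S-sum} to \((\Sin j\Compl\Sproj j)_{j\in\Mdeg}\), using \(\Ssum\Compl\Sin j\Compl\Sproj j=\Sproj j\). It then sets \(\Slift_X=\Stuple{\Sin j\Compl\Sproj j}_{j\in\Mdeg}\) and gets naturality from \cref{prop:summability-naturality}.

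Your main route reindexes \((\Sproj i)_{i\in\Mdeg}\) along the diagonal with \cref{prop:reindexing} and then takes the witness in the \(\Mdeg^2\)-summability structure of \cref{prop:iterated-summability-structure}. It is valid, and it treats \(\Slift\) uniformly with \(\Sflip\) and \(\Sfunmonm\). It is a repackaging rather than a new construction, however. Unfold the inductive step of \cref{prop:iterated-summability-structure} on your family. The slice at fixed outer index \(j\) is \((\Kronecker ij\Sproj i)_{i\in\Mdeg}\), whose witness in \(\cL(\Sfun X,\Sfun X)\) is \(\Sin j\Compl\Sproj j\). The slices are then glued together by precisely the same application of \ref{ax:S-sum}.

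The direct route is shorter. It also gives at once the identity \(\Sproj j\Compl\Slift=\Sin j\Compl\Sproj j\), which is used later in the proof of \cref{prop:sfun-comonad}; in your version this identity follows from joint monicity of the \(\Sproj i\).

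Finally, the ``delicate point'' that motivates your choice is not one. By the convention on indexing sets in \cref{sec:summability-structure}, \ref{ax:S-sum} applies to any family indexed by a set of cardinality at most that of \(\Mdeg\). In particular it applies with \(I=\Mdeg\), whether \(\Mdeg\) is finite or infinite. Your first route relies internally on exactly this instance, so preferring it for that reason is unfounded, though harmless.
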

\begin{proof}
  For each \(i\in\Mdeg\) we have
  \(\Sin i\Compl\Sproj i\in\cL(\Sfun X,\Sfun X)\).
  Moreover \(\Ssum\Compl\Sin i\Compl\Sproj i=\Sproj i\) and hence the
  family \((\Sin i\Compl\Sproj i)_{i\in\Mdeg}\) is summable and we can
  set
  \(\Slift_X=\Stuple{\Sin i\Compl\Sproj i}_{i\in\Mdeg}\in\cL(\Sfun
  X,\Sfun^2X)\).
  This morphism is natural by \cref{prop:summability-naturality} and
  \(\Sproj i\Compl\Sproj j\Compl\Sfunlift=\Kronecker ij\Sproj i\)
  obviously holds.
\end{proof}

\begin{proposition}
  \label{prop:sfun-comonad}
  The triple \((\Sfun,\Ssum,\Sfunlift,\Sflip)\) is a \Symcomonad{}
  (see \cref{def:sym-monad}) that we denote as \(\Sfunc\).
\end{proposition}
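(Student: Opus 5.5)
We follow the same strategy as for \cref{th:sfun-linear-monad}: every required equation is an equality of natural transformations with codomain some \(\Sfun^n\), or with codomain the identity functor. We prove it by postcomposing with the jointly monic families \((\Sproj{\Vect i})_{\Vect i\in\Mdeg^n}\) of \cref{prop:iterated-summability-structure}. Throughout we use the characterizations \(\Sproj i\Compl\Sproj j\Compl\Sfunlift=\Kronecker ij\Sproj i\), \(\Ssum=\sum_{i\in\Mdeg}\Sproj i\) (\cref{lemma:sproj-summable}), \(\Sproj i\Compl\Sproj j\Compl\Sflip=\Sproj j\Compl\Sproj i\), and the naturality of \(\Sproj i\) and \(\Ssum\). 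By \cref{lemma:sflip-invol-yb} we already know that \(\Sflip\) is involutive and satisfies Yang--Baxter, so only the comonad laws and the dual symmetric-comonoid axioms remain.

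\textbf{Counit laws.} We compute \(\Sproj j\Compl(\Ssum\Sfun)\Compl\Sfunlift=\Ssum\Compl\Sproj j\Compl\Sfunlift\) by naturality of \(\Ssum\). This equals \(\sum_i\Sproj i\Compl\Sproj j\Compl\Sfunlift=\sum_i\Kronecker ij\Sproj i=\Sproj j\), using enrichment (\cref{thm:ss-induces-enrichement}) and \cref{prop:zero-neutral}. Symmetrically, \(\Sproj j\Compl(\Sfun\Ssum)\Compl\Sfunlift=\Ssum\Compl\Sproj j\Compl\Sfunlift\). Here we note that \(\Sproj j\Compl\Sfun\Ssum=\Ssum\Compl\Sproj j\) by naturality of \(\Sproj j\), and the same sum appears.

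\textbf{Coassociativity.} Both \((\Sfunlift\Sfun)\Compl\Sfunlift\) and \((\Sfun\Sfunlift)\Compl\Sfunlift\) are to be shown equal to the morphism whose \((i,j,k)\)-projection is \(\Kronecker ij\Kronecker jk\Sproj i\). For instance, \(\Sproj i\Compl\Sproj j\Compl\Sproj k\Compl(\Sfun\Sfunlift)\Compl\Sfunlift=\Sproj i\Compl\Sproj j\Compl\Sfunlift\Compl\Sproj k\Compl\Sfunlift\), which equals \(\Kronecker ij\Sproj i\Compl\Sproj k\Compl\Sfunlift=\Kronecker ij\Kronecker ik\Sproj i\). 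The other side is handled the same way, using the naturality of \(\Sproj k\).

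\textbf{Symmetric comonoid axioms.} These are the duals of those in \cref{def:symmonoid}. First, \(\Sflip\Compl\Sfunlift=\Sfunlift\): this holds because \(\Kronecker ij\) is symmetric. Second, \((\Ssum\Sfun)\Compl\Sflip=\Sfun\Ssum\): projecting on \(j\) gives \(\sum_i\Sproj i\Compl\Sproj j\Compl\Sflip=\sum_i\Sproj j\Compl\Sproj i=\Sproj j\Compl\Sfun\Ssum\) after moving \(\Sproj j\) across \(\Sfun\Ssum\) by naturality. Third, the compatibility of \(\Sflip\) with \(\Sfunlift\) reads
\[
(\Sflip\Sfun)\Compl(\Sfun\Sflip)\Compl(\Sfunlift\Sfun)=(\Sfun\Sfunlift)\Compl\Sflip
\]
(the dual of the third diagram). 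We check it by projecting on \((i,j,k)\). On the left, the same flip-pushing manipulations as in the proof of \cref{th:sfun-linear-monad} reduce \(\Sproj i\Compl\Sproj j\Compl\Sproj k\Compl(\Sflip\Sfun)\Compl(\Sfun\Sflip)\) to \(\Sproj j\Compl\Sproj k\Compl\Sproj i\). Composing this with \(\Sfunlift\Sfun\) gives \(\Kronecker jk\Sproj j\Compl\Sproj i\). On the right, naturality gives \(\Sproj i\Compl\Sproj j\Compl\Sfunlift\Compl\Sproj k\Compl\Sflip=\Kronecker ij\Sproj i\Compl\Sproj k\Compl\Sflip=\Kronecker ij\Sproj k\Compl\Sproj i\). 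The two results agree only after relabelling, since the orientation of the Kronecker delta depends on which way the diagram is dualized. We must therefore choose the precise dual form carefully, so that both sides become \(\Kronecker{\cdot}{\cdot}\Sproj{\cdot}\Compl\Sproj{\cdot}\) with matching indices.

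\textbf{Main obstacle.} The main obstacle is purely bookkeeping. We have to get the exact dual of each axiom of \cref{def:symmonoid}, with the correct whiskerings \(\Sflip\Sfun\) versus \(\Sfun\Sflip\), and then track how the projections commute past whiskered natural transformations. If the chosen dual form fails to match, we will instead apply the involution \(\Sflip\) and Yang--Baxter to rewrite the composite before projecting. No summability issue arises, since all sums involved are sums of projections, which are known to be summable. Note also that no property of the monoid structure of \(\Mdeg\) is used anywhere in this argument.
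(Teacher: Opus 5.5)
Your strategy is the paper's: joint monicity of the iterated projections, with involutivity and Yang--Baxter taken from \cref{lemma:sflip-invol-yb}. The counit laws, coassociativity and the first two symmetry axioms go through. The gap is the last symmetric-comonoid axiom, which you never actually establish. As you compute it, the two sides are \(\Kronecker jk\Sproj j\Compl\Sproj i\) and \(\Kronecker ij\Sproj k\Compl\Sproj i\), and these differ (take \(i=j\neq k\)). ``They agree after relabelling'' is not an admissible step, since joint monicity requires equality for each fixed triple \((i,j,k)\).

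The mismatch has nothing to do with the direction of dualization: your left-hand side is miscomputed. Pushing the projections through the composite gives:
\(\Sproj i\Compl\Sproj j\Compl\Sproj k\Compl(\Sflip\Sfun)=\Sproj i\Compl\Sproj k\Compl\Sproj j\);
then \(\Sproj i\Compl\Sproj k\Compl\Sproj j\Compl(\Sfun\Sflip)=\Sproj i\Compl\Sproj k\Compl\Sflip\Compl\Sproj j=\Sproj k\Compl\Sproj i\Compl\Sproj j\);
and finally \(\Sproj k\Compl\Sproj i\Compl\Sproj j\Compl(\Sfunlift\Sfun)=\Kronecker ij\Sproj k\Compl\Sproj i\).
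This is exactly your right-hand side. The value \(\Sproj j\Compl\Sproj k\Compl\Sproj i\) that you quote belongs to the other composite, \((\Sfun\Sflip)\Compl(\Sflip\Sfun)\).

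That other composite is what appears in the literal dual of the third diagram of \cref{def:symmonoid}, the monoidal product being composition of functors. The dual reads \((\Sfun\Sflip)\Compl(\Sflip\Sfun)\Compl(\Sfun\Sfunlift)=(\Sfunlift\Sfun)\Compl\Sflip\), and this is the form the paper checks. There both sides project to \(\Kronecker jk\Sproj j\Compl\Sproj i\). Your mirrored form is equivalent to it: compose the paper's equation on the left with \((\Sflip\Sfun)\Compl(\Sfun\Sflip)\) and on the right with \(\Sflip\), and use \(\Sflip^2=\Id\). So either form suffices once it is computed correctly, and no fallback to Yang--Baxter is needed.

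There are smaller slips of the same kind.
\begin{itemize}
\item Naturality of \(\Ssum\) gives \(\Sproj j\Compl(\Ssum\Sfun)=\Ssum\Compl(\Sfun\Sproj j)=\sum_i\Sproj j\Compl\Sproj i\), not \(\Ssum\Compl\Sproj j\). The end result is still \(\Sproj j\), because \(\Sproj j\Compl\Sproj i\Compl\Sfunlift=\Kronecker ij\Sproj j\).
\item In the second symmetry check you swapped \(\Ssum\Sfun\) and \(\Sfun\Ssum\) on both sides. What you actually verified is \((\Sfun\Ssum)\Compl\Sflip=\Ssum\Sfun\), which is again equivalent to the required axiom by involutivity.
\end{itemize}
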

\begin{proof}
    We prove first the three comonad commutations by postcomposing
  with projections.
  \begin{center}
    \begin{tikzcd}
      \Sfun X
      \ar[r,"\Slift"]
      \ar[dr,swap,"\Id"]
      &
      \Sfun^2 X
      \ar[d,"\Ssum\Sfun"]\\
      &
      \Sfun X
    \end{tikzcd}
    \quad
    \begin{tikzcd}
      \Sfun X
      \ar[r,"\Slift"]
      \ar[dr,swap,"\Id"]
      &
      \Sfun^2 X
      \ar[d,"\Sfun\Ssum"]\\
      &
      \Sfun X
    \end{tikzcd}
    \quad
    \begin{tikzcd}
      \Sfun X
      \ar[r,"\Slift"]
      \ar[d,swap,"\Slift"]
      &
      \Sfun^2 X
      \ar[d,"\Slift\Sfun"]
      \\
      \Sfun^2 X
      \ar[r,"\Sfun\Slift"']
      &
      \Sfun^3 X
    \end{tikzcd}
  \end{center}
  For each \(i\in\Mdeg\) we have
  \begin{align*}
    \Sproj i\Compl(\Ssum\Sfun)\Compl\Slift
    &=\Ssum\Compl(\Sfun\Sproj i)\Compl\Slift
      \text{\quad by nat.~of }\Ssum\\
    &=\Ssum\Compl\Stuple{\Sproj i\Compl\Sproj j}_{j\in\Mdeg}
      \Compl\Slift
    \text{\quad by \cref{lemma:Sfun-functor-stuple}}\\
    &=\Ssum\Compl\Stuple{\Sproj i\Compl\Sproj j\Compl\Slift}_{j\in\Mdeg}\\
    &=\Ssum\Compl\Stuple{\Kronecker ij\Sproj j}_{j\in\Mdeg}\\
    &=\Sproj i\\
    \Sproj i\Compl(\Sfun\Ssum)\Compl\Slift
    &=\Ssum\Compl\Sproj i\Compl\Slift
      \text{\quad by nat.~of }\Sproj i\\
    &=\Ssum\Compl\Sproj i\Compl
      \Stuple{\Sin j\Compl\Sproj j}_{j\in\Mdeg}\\
    &=\Ssum\Compl\Sin i\Compl\Sproj i\\
    &=\Sproj i
  \end{align*}
  and for \(i,j,k\in\Mdeg\) we have
  \begin{align*}
    \Sproj i\Compl\Sproj j\Compl\Sproj k\Compl(\Slift\Sfun)\Compl\Slift
    &=\Sproj i\Compl (\Kronecker jk \Sproj j) \Compl\Slift\\
    &=\Kronecker jk \Sproj i\Compl  \Sproj j \Compl\Slift\\
    &=\Kronecker jk \Kronecker ij \Sproj i\Compl \\
    \Sproj i\Compl\Sproj j\Compl\Sproj k\Compl(\Sfun\Slift)\Compl\Slift
    &=\Sproj i\Compl\Sproj j\Compl\Slift\Compl\Sproj k\Compl\Slift\\
    &=\Kronecker ij\Sproj i\Compl\Sproj k\Slift\\
    &=\Kronecker ij\Kronecker ik\Sproj i\,.
  \end{align*}
  Next,  we already now thanks to \cref{lemma:sflip-invol-yb}
  that $\Sflip$ is involutive and satisfies the Yang Baxter equation,
  so we only need to prove
  the three following commutations.
  \begin{center}
    \begin{tikzcd}
      &[-2em] \Sfun \ar[dl, "\Slift"'] \ar[dr, "\Slift"] &[-2em] \\ 
      \Sfun^2 \ar[rr, "\Sflip"'] & & \Sfun^2
    \end{tikzcd}
    \begin{tikzcd}
      \Sfun^2\ar[rr,"\Sflip"]\ar[dr,swap,"\Sfun\Ssum"]
      &[-2em]&[-2em]
      \Sfun^2\ar[ld,"\Ssum\Sfun"] 
      \\
      &\Sfun&
    \end{tikzcd}
    \quad
    \begin{tikzcd}
      \Sfun^2
      \ar[rr,"\Sflip"]
      \ar[d,swap,"\Sfun\Slift"]
      &&
      \Sfun^2
      \ar[d,"\Slift\Sfun"]
      \\
      \Sfun^3
      \ar[r,"\Sflip\Sfun"']
      &
      \Sfun^3
      \ar[r,"\Sfun\Sflip"']
      &
      \Sfunc^2\Sfun
    \end{tikzcd}
  \end{center}
  We prove them by joint monicity of the 
  $\Sproj i$.
  \begin{align*}
    \Sproj i \Compl \Sproj j \Compl \Sflip \Compl \Slift
    &= \Sproj j \Compl \Sproj i \Compl \Slift 
    = \Kronecker{i}{j} \Sproj i = 
    \Sproj i \Compl \Sproj j \Compl \Slift \\
    \Sproj i\Compl(\Ssum\Sfun)\Compl\Sflip
    &=\Ssum\Compl(\Sfun\Sproj i)\Compl\Sflip
      =\Ssum\Compl\Stuple{\Sproj i\Compl\Sproj j}_{j\in\Mdeg}\Compl\Sflip
      =\Ssum\Compl\Stuple{\Sproj j\Compl\Sproj i}_{j\in\Mdeg}
      =\Ssum\Compl\Sproj i
      =\Sproj i\Compl(\Sfun\Ssum)\\
    \Sproj i\Compl\Sproj j\Compl\Sproj k\Compl(\Slift\Sfun)\Compl\Sflip
    &=\Kronecker jk\Sproj i\Compl\Sproj j\Compl\Sflip
      =\Kronecker jk\Sproj j\Compl\Sproj i\\
    \Sproj i\Compl\Sproj j\Compl\Sproj k\Compl
    (\Sfun\Sflip)\Compl(\Sflip\Sfun)\Compl(\Sfun\Slift)
    &=\Sproj i\Compl\Sproj j\Compl\Sflip\Compl\Sproj k
      \Compl(\Sflip\Sfun)\Compl(\Sfun\Slift)
      =\Sproj j\Compl\Sproj i\Compl\Sproj k
      \Compl(\Sflip\Sfun)\Compl(\Sfun\Slift)
      =\Sproj j\Compl\Sproj k\Compl\Sproj i\Compl(\Sfun\Slift)\\
    &=\Sproj j\Compl\Sproj k\Compl\Slift\Compl\Sproj i
      =\Kronecker jk\Sproj j\Compl\Sproj i
      \qedhere
  \end{align*}
\end{proof}

\begin{theorem}
  \label{th:sfun-bimonad}
  If \(\Mdeg\) is total then the tuple
  \((\Sfun,\Ssum,\Sfunlift,\Sin0,\Sfunmonm,\Sflip)\) is a
  \Symbimonad{} \(\Sfunb\).
\end{theorem}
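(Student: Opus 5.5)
Since \cref{th:sfun-linear-monad} gives that \((\Sfun,\Sin0,\Sfunmonm,\Sflip)\) is a \Symmonad{} and \cref{prop:sfun-comonad} gives that \((\Sfun,\Ssum,\Sfunlift,\Sflip)\) is a \Symcomonad{}, it remains to check the four compatibility diagrams of \cref{def:symmetric-bimonoid}, specialised to \(\FUN\cL\cL\). There the monoidal product is composition and the middle symmetry \(\Monoidca M\ITens\Monoidc M\ITens\Monoidca M\) becomes \(\Sfun\Sflip\Sfun\). Concretely we must show:
\begin{align*}
\Ssum\Compl\Sin0&=\Id,\\
\Sfunlift\Compl\Sin0&=\Sin0\Compl\Sin0 \quad (=(\Sin0\Sfun)\Compl\Sin0),\\
\Ssum\Compl\Sfunmonm&=\Ssum\Compl\Ssum \quad (=\Ssum\Compl(\Sfun\Ssum)),\\
\Sfunlift\Compl\Sfunmonm&=(\Sfunmonm\Sfunmonm)\Compl(\Sfun\Sflip\Sfun)\Compl(\Sfunlift\Sfunlift).
\end{align*}
Each is an equation between morphisms into \(\Sfun^k X\) or \(X\). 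We prove it, as in the previous proofs, by joint monicity of the iterated projections \(\Sproj{\Vect i}\) (\cref{prop:iterated-summability-structure}), or, for equations landing in \(X\), by using \(\Ssum=\sum_i\Sproj i\) together with the bilinearity of composition from \cref{thm:ss-induces-enrichement}.

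The first equation is already known from the definition of \(\Sin0\). For the second, compose with \(\Sproj i\Compl\Sproj j\): the left side gives \(\Kronecker ij\Sproj i\Compl\Sin0=\Kronecker ij\Kronecker i0\Id\), and the right side gives \(\Kronecker j0\Kronecker i0\Id\); these agree.

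For the third, write \(\Ssum\Compl\Sfunmonm=\sum_i\Sproj i\Compl\Sfunmonm=\sum_i\sum_{(l,r)\in\Mdegsl i}\Sproj l\Compl\Sproj r\). By \ref{ax:pa} and the fact that the \(\Mdegsl i\) partition \(\Mdegdom\), this equals \(\sum_{(l,r)\in\Mdegdom}\Sproj l\Compl\Sproj r\). Totality of \(\Mdeg\) gives \(\Mdegdom=\Mdeg^2\), so this is \(\Ssum^2\) by \cref{prop:iterated-summability-structure}. This is the one place totality is needed for the counit/multiplication law: without it, terms with \(l+r\) undefined are lost.

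The main obstacle is the fourth diagram. We compose both sides with \(\Sproj i\Compl\Sproj j\). On the left, \(\Sproj i\Compl\Sproj j\Compl\Sfunlift\Compl\Sfunmonm=\Kronecker ij\sum_{l+r=i}\Sproj l\Compl\Sproj r\). On the right, we push the projections through the horizontal composite \(\Sfunmonm\Sfunmonm=\Sfunmonm\Compl\Sfun\Sfun\Sfunmonm\) using naturality. This gives
\begin{align*}
\sum_{l+r=i}\,\sum_{l'+r'=j}\Sproj l\Compl\Sproj r\Compl\Sproj{l'}\Compl\Sproj{r'}\Compl(\Sfun\Sflip\Sfun)\Compl(\Sfunlift\Sfunlift).
\end{align*}
The flip swaps the middle two indices, yielding \(\Sproj l\Compl\Sproj{l'}\Compl\Sproj r\Compl\Sproj{r'}\). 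Applying \(\Sfunlift\Sfunlift=(\Sfun\Sfun\Sfunlift)\Compl(\Sfunlift\Sfun)\) then produces the factor \(\Kronecker l{l'}\Kronecker r{r'}\), leaving \(\sum_{l+r=i,\,l+r=j}\Sproj l\Compl\Sproj r\), which is \(\Kronecker ij\sum_{l+r=i}\Sproj l\Compl\Sproj r\). The care needed is in expanding horizontal composites correctly, and in justifying the interchange of the double sum with composition via bilinearity (\cref{thm:ss-induces-enrichement}) and \ref{ax:pa}. Once the index bookkeeping is fixed, the computation is mechanical.
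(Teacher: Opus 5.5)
Your proposal is correct and matches the paper's proof: it checks the same four diagrams, each by joint monicity of the iterated projections or by expanding \(\Ssum=\sum_i\Sproj i\). In particular, the comultiplication/multiplication law is obtained by composing with \(\Sproj i\Compl\Sproj j\) and reducing to \(\Kronecker ij\sum_{(l,r)\in\Mdegsl i}\Sproj l\Compl\Sproj r\), and totality is used only in the counit/multiplication law via \ref{ax:pa}.
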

\begin{proof}
  We must prove the following commutations.
  The hypothesis that \(\Mdeg\) is total is used only for the last
  one.
  \begin{center}
    \begin{tikzcd}
      \Id\ar[r,"\Sin0"]\ar[rd,swap,"\Id"] &
      \Sfun\ar[d,"\Ssum"]\\
      & \Id
    \end{tikzcd}
    \quad
    \begin{tikzcd}
      \Sfun^2
      \ar[r,"\Sfunmonm"]
      \ar[d,swap,"\Slift\Compv\Slift"]
      &
      \Sfun\ar[r,"\Slift"]
      &
      \Sfun^2\\
      \Sfun^4\ar[rr,swap,"\Sfun\Sflip\Sfun"]
      &&
      \Sfun^4\ar[u,swap,"\Sfunmonm\Compv\Sfunmonm"]
    \end{tikzcd}
  \end{center}
  \begin{center}
    \begin{tikzcd}
      \Id
      \ar[r,"\Sin0"]\ar[d,swap,"\Sin0"]
      &
      \Sfun
      \ar[d,"\Slift"]
      \\
      \Sfun
      \ar[r,"\Sin0\Sfun"']
      &
      \Sfun^2
    \end{tikzcd}
    \quad
    \begin{tikzcd}
      \Sfun^2
      \ar[r,"\Sfun\Ssum"]\ar[d,swap,"\Sfunmonm"]
      &
      \Sfun
      \ar[d,"\Ssum"]
      \\
      \Sfun
      \ar[r,"\Ssum"']
      &
      \Id
    \end{tikzcd}
  \end{center}
  where \(\Compv\) is the horizontal composition of natural
  transformations (which is also used in the two last diagrams since
  \((\Sin0\Sfun)\Compl\Sin0=\Sin0\Compv\Sin0\) and
  \(\Ssum\Compl(\Sfun\Ssum)=\Ssum\Compv\Ssum\)).
  The first commutation results from the definition of \(\Sin0\).
  For the second one, we have
  \begin{align*}
    \Sproj i\Compl\Sproj j
    \Compl(\Sfunmonm\Compv\Sfunmonm)
    \Compl(\Sfun\Sflip\Sfun)\Compl(\Slift\Compv\Slift)
    &=\Sproj i\Compl\Sproj j
      \Compl(\Sfunmonm\Sfun)\Compl(\Sfun^2\Sfunmonm)\Compl(\Sfun\Sflip\Sfun)
      \Compl(\Slift\Compv\Slift)\\
    &=\Sproj i\Compl
      \big(\sum_{(l,r)\in\Mdegsl j}\Sproj l\Compl\Sproj r\big)
      \Compl(\Sfun^2\Sfunmonm)\Compl(\Sfun\Sflip\Sfun)
      \Compl(\Slift\Compv\Slift)\\
    &=\Sproj i\Compl\Sfunmonm\Compl
      \big(\sum_{(l,r)\in\Mdegsl j}\Sproj l\Compl\Sproj r\big)
      \Compl(\Sfun\Sflip\Sfun)
      \Compl(\Slift\Compv\Slift)\\
    &=\big(\sum_{(p,q)\in\Mdegsl i}\Sproj p\Compl\Sproj q\big)
      \Compl\big(\sum_{(l,r)\in\Mdegsl j}\Sproj l\Compl\Sproj r\big)
      \Compl(\Sfun\Sflip\Sfun)
      \Compl(\Slift\Compv\Slift)\\
    &=\big(\sum_{(p,q)\in\Mdegsl i,(l,r)\in\Mdegsl j}
      \Sproj p\Compl\Sproj q\Compl\Sproj l\Compl\Sproj r\big)
      \Compl(\Sfun\Sflip\Sfun)
      \Compl(\Slift\Compv\Slift)\\
    &=\big(\sum_{(p,q)\in\Mdegsl i,(l,r)\in\Mdegsl j}
      \Sproj p\Compl\Sproj l\Compl\Sproj q\Compl\Sproj r\big)
      \Compl(\Slift\Compv\Slift)\\
    &=\sum_{(p,q)\in\Mdegsl i,(l,r)\in\Mdegsl j}
      \Kronecker pl\Kronecker qr\Sproj p\Compl\Sproj q\\
    &=\Kronecker ij\sum_{(p,q)\in\Mdegsl i}\Sproj p\Compl\Sproj q
      =\Sproj i\Compl\Sproj j\Compl\Slift\Compl\Sfunmonm\,.
  \end{align*}
  For the third diagram, we have
  \begin{align*}
    \Sproj i\Compl\Sproj j\Compl\Slift\Compl\Sin0
    &=\Kronecker ij\Sproj i\Compl\Sin0
    =\Kronecker ij\Kronecker i0\Compl\Id \\
    \Sproj i\Compl\Sproj j\Compl(\Sin0\Sfun)\Compl\Sin0
    &=\Sproj i\Compl(\Kronecker j0\Id)\Sin0
    =\Kronecker i0\Kronecker j0\Id
  \end{align*}
  for all \(i,j\in\Mdeg\).
  For the last diagram, we have
  \begin{align*}
    \Ssum\Compl\Sfunmonm
    &=\Big(\sum_{i\in\Mdeg}\Sproj i\Big)\Compl\Sfunmonm
    =\sum_{i\in\Mdeg}\sum_{(l,r)\in\Mdegsl i}\Sproj l\Compl\Sproj r \\
    \Ssum\Compl(\Sfun\Ssum)
    &=\Ssum\Compl(\Ssum\Sfun)
    =\Big(\sum_{i\in\Mdeg}\Sproj i\Big)
    \Compl\Big(\sum_{i\in\Mdeg}\Sproj i\Big)
    =\sum_{(l,r)\in\Mdeg^2}\Sproj l\Compl\Sproj r
  \end{align*}
 so this diagram commutes by \ref{ax:pa} since
  \(\biguplus_{i\in\Mdeg}\Mdegsl i=\Mdeg^2\) by totality of \(\Mdeg\).
\end{proof}

\subsection{Compatibility of the \texorpdfstring{\SUMBLTY{}}{Summability} 
structure with the tensor product and the internal hom}

\label{sec:summability-tensor}

We assume in this section that the \Pmcat{} \(\cL\) is equipped with a
\SUMBLTY{} structure and is symmetric monoidal.
Thus, $\cL$ is a \Dpmenriched{} category, by
\cref{thm:ss-induces-enrichement}, but it has no reason to be a
symmetric monoidal \Dpmenriched{} category (as defined in
\cref{def:pm-enriched-monoidal}) in general.
We prove here that the necessary compatibility condition between the
\Dpmonoid{} structure and the monoidal closed structure boils down to
a commutative monad structure on the \SUMBLTY{} structure $\Sfun$.

\begin{proposition} \label{prop:ss-tensor}
  The following assertions are equivalent. \begin{itemize}
  \item The category $\cL$ is a symmetric monoidal \Dpmenriched{} category.
  \item For all objects $X_2$, 
  $0 \ITens X_2 = 0$, and $(\Sproj i \ITens X_2)_{i \in \Mdeg}$ 
  is summable, with sum $\Ssum \ITens X_2$.
  \end{itemize}
\end{proposition}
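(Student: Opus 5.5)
The forward implication is immediate. If \(\cL\) is a symmetric monoidal \Dpmenriched{} category, then bilinearity of \(\ITens\) applied to the empty family gives \(\Tens 0{X_2}=0\). Applied to the summable family \((\Sproj i)_{i\in\Mdeg}\) (with sum \(\Ssum\), by \cref{lemma:sproj-summable}), together with the identity on \(X_2\), it gives that \((\Tens{\Sproj i}{X_2})_{i\in\Mdeg}\) is summable with sum \(\Tens\Ssum{X_2}\).

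For the converse, the plan is to reduce an arbitrary summable family to a single morphism through \(\Sfun^n\), exactly as in the proof of \cref{thm:ss-induces-enrichement}. We are given a summable family \((g_i\in\cL(X_1,Y_1))_{i\in I}\) and \(f\in\cL(X_2,Y_2)\), and we must show
\[\Tensp{\sum_{i\in I}g_i}f\sumsub\sum_{i\in I}\Tens{g_i}f.\]
Since \(\Tens{g}{f}=\Tensp{Y_1}f\Compl\Tensp g{X_2}\) and composition is bilinear by \cref{thm:ss-induces-enrichement}, it suffices to treat \(f=\Id_{X_2}\). The right-hand version then follows by symmetry, conjugating with \(\Sym\) and using bilinearity of composition again.

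First, upgrade the hypothesis from \(\Sfun\) to \(\Sfun^n\): show by induction on \(n\) that \((\Tens{\Sproj{\Vect i}}{X_2})_{\Vect i\in\Mdeg^n}\) is summable with sum \(\Tens{\Ssum^n}{X_2}\). For the inductive step, write \(\Tens{(\Sproj{\Vect i}\Compl\Sproj j)}{X_2}=\Tensp{\Sproj{\Vect i}}{X_2}\Compl\Tensp{\Sproj j}{X_2}\). Fix \(j\); precomposing the summable family given by the induction hypothesis (instantiated at \(\Sfun\)-objects) by \(\Tens{\Sproj j}{X_2}\) yields a summable family with sum \(\Tens{(\Ssum^n\Compl\Sproj j)}{X_2}=\Tensp{\Sproj j}{X_2}\Compl\Tensp{\Sfun\Ssum^n}{X_2}\), using naturality of \(\Sproj j\). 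Now apply the base hypothesis postcomposed with \(\Tens{\Sfun\Ssum^n}{X_2}\), together with bilinearity of composition, and regroup with \ref{ax:pa} to get summability of the whole \(\Mdeg^{n+1}\)-family with sum \(\Tens{\Ssum^{n+1}}{X_2}\). The main technical care here is making the \ref{ax:pa} regrouping run in the right direction: the outer \(\Mdeg\)-indexed family of partial sums is summable, so the full family is summable.

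Next, take a summable family \((g_i)_{i\in I}\), choose an injection \(\phi:I\Injection\Mdeg^n\), and let \(g'=\Famact\phi\Vect g\). By \cref{prop:iterated-summability-structure} there is a witness \(h\in\cL(X_1,\Sfun^nY_1)\) with \(\Sproj{\Vect i}\Compl h=g'_{\Vect i}\) and \(\Ssum^n\Compl h=\sum_i g_i\). Then
\[\Tens{(\sum_i g_i)}{X_2}=\Tensp{\Ssum^n}{X_2}\Compl\Tensp h{X_2}=\sum_{\Vect i}\Tensp{\Sproj{\Vect i}}{X_2}\Compl\Tensp h{X_2}=\sum_{\Vect i}\Tens{g'_{\Vect i}}{X_2},\]
where the middle equality uses the previous step and bilinearity of composition. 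The off-image indices contribute \(\Tens0{X_2}=0\) by hypothesis, so by \cref{prop:reindexing} this sum equals \(\sum_{i\in I}\Tens{g_i}{X_2}\), which concludes the proof.
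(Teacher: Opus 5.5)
Your proof is correct and follows the paper's strategy: reindex along an injection $I\Injection\Mdeg^n$, dispose of the off-image indices with $\Tens{0}{X_2}=0$, and work with the iterated summability structure of \cref{prop:iterated-summability-structure}. The one difference is local. The paper exhibits $(\Sfunstr^1)^n\Compl(h\ITens f)$ directly as an $\Sfun^n$-witness for $(g'_{\Vect i}\ITens f)_{\Vect i\in\Mdeg^n}$, handling a general $f$ at once. You instead first prove by induction via \ref{ax:pa} that $(\Tens{\Sproj{\Vect i}}{X_2})_{\Vect i\in\Mdeg^n}$ sums to $\Tens{\Ssum^n}{X_2}$, and then use bilinearity of composition. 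In that induction you should \emph{pre}compose the base family with $\Tens{\Sfun\Ssum^n}{X_2}$, not postcompose.
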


We say that the \SUMBLTY{} structure satisfies
\labeltext{(\(\Sfun\)-\(\ITens\))}{ax:stens} \ref{ax:stens} if it
satisfies one of the equivalent condition above.
If \ref{ax:stens} holds then we can define
  \begin{equation} \label{eq:sm-strength}
    \begin{split}
      \Sfunstr^1_{X_1, X_2} = \Stuple{\Sproj i \ITens X_2}_{i \in \Mdeg} 
    : \Sfun X_1 \ITens X_2 \arrow \Sfun (X_1 \ITens X_2) \text{ so that } 
    \Ssum \Compl \Sfunstr^1 = (\Ssum \ITens X_2) \\ 
    \Sfunstr^2_{X_1, X_2} = \Stuple{X_1 \ITens \Sproj i}_{i \in \Mdeg} 
    : X_1 \ITens \Sfun X_2 \arrow \Sfun (X_1 \ITens X_2) \text{ so that } 
    \Ssum \Compl \Sfunstr^2 = (X_1 \ITens \Ssum)
    \end{split}
  \end{equation}  
where $\Sfunstr^2$ above is given by
\(\Sfunstr^2_{X_1,X_2}
=(\Sfun\Sym_{X_2,X_1})\Compl\Sfunstr^1_{X_2,X_1}\Sym_{X_1,\Sfun X_2}\).

\begin{proof}
  The forward implication is trivial, by direct application of 
  \cref{def:pm-enriched-monoidal}
  on the $\Sproj i$'s and $\Id$.
  The proof 
  of the reverse direction is very similar 
  to the proof of \cref{thm:ss-induces-enrichement}.
  Let $g \in \cL(X_2, Y_2)$ be a morphism, and
  $(f_i)_{i \in I}$ be a 
  family of morphisms in $\cL(X_1, X_2)$.  
  With the same notations as in that proof, define
  $(f'_{i_1, \ldots, i_n})_{(i_1, \ldots, i_n)\in D^n} = \Famact \phi
  (f_i)_{i \in I}$.
  Then, by our assumption that 
  $0 \ITens \Id_{Y_2} = 0$, we get that 
  \[
    \forall (i_1, \ldots, i_n) \in \Mdeg^n \setminus \phi(I)\quad 
    f'_{i_1, \ldots, i_n} \ITens g = 
      (0 \ITens \Id_{Y_2}) \Compl (\Id_{X_1} \ITens g) 
      = 0 \Compl (\Id_{X_1} \ITens g) = 0
  \]
  so
  $(\Tens{f_{i_1, \ldots, i_n}} g)_{(i_1, \ldots, i_n) \in \Mdeg^n}
  = \Famact \phi (\Tens {f_i} g)$.
  It follows from \cref{prop:reindexing} that 
  \[
    \Big( \sum_{i \in I} f_i \Big) \ITens g \sumiff \Big(
    \Smashspace{\sum_{(i_1, \ldots, i_n) \in \Mdeg^n}} f'_{i_1, \ldots,
      i_n} \Big) \ITens g \quad \text{and} \quad
    \Smashspace{\sum_{(i_1, \ldots, i_n) \in \Mdeg^n}} (f'_{i_1, \ldots,
      i_n} \ITens g) \sumiff \sum_{i \in I} (f_i \ITens g)
  \]
  so it suffices to prove that
  $\Big(\sum_{(i_1, \ldots, i_n) \in \Mdeg^n} f'_{i_1, \ldots, i_n} \Big)
  \ITens g \sumsub \sum_{(i_1, \ldots, i_n) \in \Mdeg^n} (f'_{i_1, \ldots,
    i_n} \ITens g)$ to end the proof.
  To do so, let us assume that
  $(f'_{i_1, \ldots, i_n})_{(i_1, \ldots, i_n)\in\Mdeg^n}$ is summable.
  Then by \cref{prop:iterated-summability-structure}, there exists
  $f \in \cL(X_1,\Sfun^n Y_1)$ such that
  $\Sproj{i_1} \Compl \cdots \Sproj{i_n} \Compl f = f_{i_1, \ldots,
    i_n}$ and
  $\Ssum^n \Compl f = \sum_{(i_1, \ldots, i_n) \in \Mdeg^n} f'_{i_1,
    \ldots, i_n}$.
  Let us consider the morphism
  \[
    (\Sfunstr^1)^n_{X_1, X_2} \Compl (f \ITens g) : X_1 \ITens X_2
    \arrow \Sfun^n(Y_1 \ITens Y_2)
  \]
  where
  $(\Sfunstr^1)^n_{X_1, X_2} : \Sfun^n X_1 \ITens X_2 \arrow
  \Sfun^n(X_1 \ITens X_2)$ is given by successive iterations of
  $\Sfunstr^1$. We can now easily check that
  \begin{alignat*}{3}
    \Sproj {i_1, \ldots, i_n} \Compl(\Sfunstr^1)^n_{X_1, X_2} \Compl (f \ITens g)
    = (&& \Sproj {i_1, \ldots, i_n} & \ITens \Id_{Y_2}) 
                                      \Compl (f \ITens g)
                                      = f_{i_1, \ldots, i_n} \ITens g \\
    \Ssum^n \Compl \Compl(\Sfunstr^1)^n_{X_1, X_2} \Compl (f \ITens g)
    =( && \Ssum^n & \ITens \Id_{Y_2}) \Compl (f \ITens g)
                    = \Big( \Smashspace{\sum_{(i_1, \ldots, i_n) \in \Mdeg^n}}
                    f'_{i_1, \ldots, i_n} \Big) \ITens g 
  \end{alignat*}
  We conclude by \cref{prop:iterated-summability-structure} again that
  the family
  $(f_{i_1, \ldots, i_n} \ITens g)_{(i_1, \ldots, i_n) \in \Mdeg^n}$ is
  summable with sum
  $(\sum_{(i_1, \ldots, i_n) \in \Mdeg^n} f'_{i_1, \ldots, i_n} \Big)
  \ITens g $.
\end{proof}

We now assume that the \SUMBLTY{} structure 
satisfies \ref{ax:stens}. 
By \cref{prop:summability-naturality}, the two families of morphisms 
$\Sfunstr^1$ and $\Sfunstr^2$ are always natural transformations.

\begin{theorem}
  The following diagrams commute
  \begin{equation*}
    \begin{tikzcd}
      \Tens{X_1}{X_2}
      \ar[r,"\Tens{\Sin0}{X_2}"]
      \ar[dr,swap,"\Sin0"]
      &
      \Tens{\Sfun X_1}{X_2}
      \ar[d,"\Sfunstr^1"]
      \\
      &
      \Sfun\Tensp{X_1}{X_2}
    \end{tikzcd}
    \Treesep
    \begin{tikzcd}
      \Tens{\Sfun^2X_1}{X_2}
      \ar[r,"\Sfunstr^1"]
      \ar[d,swap,"\Tens{\Sfunmonm}{X_2}"]
      &
      \Sfun\Tensp{\Sfun X_1}{X_2}
      \ar[r,"\Sfun\Sfunstr^1"]
      &
      \Sfun^2\Tensp{X_1}{X_2}
      \ar[d,"\Sfunmonm"]
      \\
      \Tens{\Sfun X_1}{X_2}
      \ar[rr,"\Sfunstr^1"']
      &&
      \Sfun\Tensp{X_1}{X_2}
    \end{tikzcd}
  \end{equation*}
  \begin{equation*}
    \begin{tikzcd}
      &[-2em]
      \Tens{\Sfun X_1}{\Sfun X_2}
      \ar[ld,swap,"\Sfunstr^1_{X_1,\Sfun X_2}"]
      \ar[rd,"\Sfunstr^2_{\Sfun X_1,X_2}"]
      &[-2em]\\
      \Sfun\Tensp{X_1}{\Sfun X_2}
      \ar[d,swap,"\Sfun\Sfunstr^2_{X_1,X_2}"]
      &
      &
      \Sfun\Tensp{\Sfun X_1}{X_2}
      \ar[d,"\Sfun\Sfunstr^1_{X_1,X_2}"]\\
      \Sfun^2\Tensp{X_1}{X_2}
      \ar[rr,"\Sflip_{\Tens{X_1}{X_2}}"']
      &&
      \Sfun^2\Tensp{X_1}{X_2}
    \end{tikzcd}
  \end{equation*}
  and hence \((\Sfun,\Sfunstr^1)\) is a commutative strong monad.
\end{theorem}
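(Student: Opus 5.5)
All three diagrams will be checked the way the paper checks monad equations: postcompose with the jointly monic projections \(\Sproj i\) (or \(\Sproj i\Compl\Sproj j\) for \(\Sfun^2\), jointly monic by \cref{prop:iterated-summability-structure}). The basic tools are the defining equations \(\Sproj i\Compl\Sfunstr^1=\Tens{\Sproj i}{X_2}\) and \(\Sproj i\Compl\Sfunstr^2=\Tens{X_1}{\Sproj i}\) from \cref{eq:sm-strength}, naturality of the \(\Sproj i\) (so \(\Sproj i\Compl\Sfun f=f\Compl\Sproj i\)), and bilinearity of \(\ITens\) with respect to sums. Bilinearity is available because \ref{ax:stens} holds, so by \cref{prop:ss-tensor} \(\cL\) is a symmetric monoidal \Dpmenriched{} category.

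For the unit triangle, \(\Sproj i\Compl\Sfunstr^1\Compl\Tensp{\Sin0}{X_2}=\Tens{(\Sproj i\Compl\Sin0)}{X_2}=\Kronecker i0\Id\). Here one uses \(0\ITens X_2=0\) for the case \(i\neq0\). This equals \(\Sproj i\Compl\Sin0\).

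For the multiplication square, compute \(\Sproj i\Compl\Sfunmonm\Compl\Sfun\Sfunstr^1\Compl\Sfunstr^1\) as \(\sum_{(l,r)\in\Mdegsl i}\Sproj l\Compl\Sproj r\Compl\Sfun\Sfunstr^1\Compl\Sfunstr^1\), using \cref{lemma:ssum-monad-mult} and left additivity of composition. Then \(\Sproj l\Compl\Sproj r\Compl\Sfun\Sfunstr^1=\Sproj l\Compl\Sfunstr^1\Compl\Sproj r=\Tensp{\Sproj l}{X_2}\Compl\Sproj r\), and applying \(\Sfunstr^1\) once more gives \(\Tens{(\Sproj l\Compl\Sproj r)}{X_2}\). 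The other side is \(\Tens{(\Sproj i\Compl\Sfunmonm)}{X_2}=\Tens{(\sum_{(l,r)\in\Mdegsl i}\Sproj l\Compl\Sproj r)}{X_2}\), and bilinearity of \(\ITens\) turns this into the same sum. The main point needing care is that the sum on the left is defined; this follows since the right-hand family is summable and \(\sumsub\) transports definedness.

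For the symmetry diagram, postcompose with \(\Sproj i\Compl\Sproj j\). On the left, \(\Sproj i\Compl\Sproj j\Compl\Sflip\Compl\Sfun\Sfunstr^2\Compl\Sfunstr^1=\Sproj j\Compl\Sproj i\Compl\Sfun\Sfunstr^2\Compl\Sfunstr^1=\Sproj j\Compl\Sfunstr^2\Compl\Tensp{\Sproj i}{\Sfun X_2}=\Tens{\Sproj i}{\Sproj j}\). On the right, \(\Sproj i\Compl\Sproj j\Compl\Sfun\Sfunstr^1\Compl\Sfunstr^2=\Sproj i\Compl\Sfunstr^1\Compl\Tensp{\Sfun X_1}{\Sproj j}=\Tens{\Sproj i}{\Sproj j}\), using functoriality of \(\ITens\).

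Finally, the conclusion that \((\Sfun,\Sfunstr^1)\) is a commutative strong monad follows by standard reasoning. The two first diagrams are the strength axioms involving the unit and multiplication; the remaining strength axioms, compatibility with \(\Leftu\)/\(\Rightu\) and the associator, are checked identically by projections. The third diagram, combined with \(\Sfunmonm\Compl\Sflip=\Sfunmonm\) (\cref{lemma:ssum-monad-mult}), yields equality of the two double strengths \(\Sfunmonm\Compl\Sfun\Sfunstr^2\Compl\Sfunstr^1=\Sfunmonm\Compl\Sfun\Sfunstr^1\Compl\Sfunstr^2\), which is commutativity.

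The only delicate step is the definedness bookkeeping in the multiplication diagram; everything else is mechanical.
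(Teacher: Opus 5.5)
Your proposal is correct and takes the same route as the paper: each diagram is checked by postcomposing with the jointly monic projections, and commutativity then follows from the third diagram together with \(\Sfunmonm\Compl\Sflip=\Sfunmonm\) (\cref{lemma:ssum-monad-mult}). The definedness issue you flag in the multiplication square is not actually delicate. Left additivity of composition (\cref{thm:ss-induces-enrichement}) already guarantees that \(\sum_{(l,r)\in\Mdegsl i}\Sproj l\Compl\Sproj r\Compl g\) is defined, because \(\sum_{(l,r)\in\Mdegsl i}\Sproj l\Compl\Sproj r=\Sproj i\Compl\Sfunmonm\) is.
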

\begin{proof}
  The diagrams are proven commutative using the joint monicity of the
  \(\Sproj i\)'s.
  Commutativity of the monad means that the following diagram commutes
  \begin{equation*}
    \begin{tikzcd}
      &[-2em]
      \Tens{\Sfun X_1}{\Sfun X_2}
      \ar[ld,swap,"\Sfunstr^1_{X_1,\Sfun X_2}"]
      \ar[rd,"\Sfunstr^2_{\Sfun X_1,X_2}"]
      &[-2em]\\
      \Sfun\Tensp{X_1}{\Sfun X_2}
      \ar[d,swap,"\Sfun\Sfunstr^2_{X_1,X_2}"]
      &
      &
      \Sfun\Tensp{\Sfun X_1}{X_2}
      \ar[d,"\Sfun\Sfunstr^1_{X_1,X_2}"]\\
      \Sfun^2\Tensp{X_1}{X_2}
      \ar[rd,swap,"\Sfunmonm_{\Tens{X_1}{X_2}}"]
      &&
      \Sfun^2\Tensp{X_1}{X_2}
      \ar[ld,"\Sfunmonm_{\Tens{X_1}{X_2}}"]\\
      &
      \Sfun\Tensp{X_1}{X_2}
    \end{tikzcd}
  \end{equation*}
  which results from the third commutation and \cref{lemma:ssum-monad-mult}.
\end{proof}

\begin{remark} \label{rem:Sfun-lax-tensor}
  It is standard \cite{Kock70} that the common value %
  \[
    \Sfunlaxt_{X_1,X_2}
    =\Sfunmonm_{\Tens{X_1}{X_2}}
    \Compl(\Sfun\Sfunstr^1_{X_1,X_2}) \Compl\Sfunstr^2_{\Sfun X_1,X_2}
    =\Sfunmonm_{\Tens{X_1}{X_2}} \Compl(\Sfun\Sfunstr^2_{X_1,X_2})
    \Compl\Sfunstr^1_{X_1,\Sfun X_2} \in\cL(\Tens{\Sfun X_1}{\Sfun
      X_2},\Sfun\Tensp{X_1}{X_2})
  \] %
  and the associated unit \(\Sin0\in\cL(\Sone,\Sfun\Sone)\) %
  turns \(\Sfun\) into a lax symmetric monoidal monad.
  Notice that this morphism is characterized by
  \begin{align*}
    \forall i\in\Mdeg\quad \Sproj i\Compl\Sfunlaxt
    =\sum_{(l,r)\in\Mdegsl i}\Tens{\Sproj l}{\Sproj r}\,.
  \end{align*}
  It follows from \cref{thm:sm-extension} that the Kleisli category
  $\Klcat{\cL}{\Sfun}$ of the monad $\Sfun$ is monoidal.
  Going back to our power series intuition developed in
  \cref{rem:kleisli-S-power-series}, the tensor of a power series
  $\sum_{i \in \Mdeg} f_i \varepsilon^i \in \Klcat{\cL}{\Sfun}(X_1,
  Y_1)$ with
  $\sum_{j \in \Mdeg} g_j \varepsilon^j \in \Klcat{\cL}{\Sfun}(X_2,
  Y_2)$ is given by the power series
  \[
    \Sfunlaxt_{Y_1,Y_2} \Compl (f \ITens g) = \sum_{n \in \Mdeg}
    \sum_{i+j = n} (f_i \ITens g_j) \varepsilon^n \in
    \Klcat{\cL}{\Sfun}(X_1 \ITens X_2, Y_1 \ITens Y_2)
  \]
  which is again a Cauchy product of power series, except that this
  time the multiplication on the coefficients is taken as the tensor
  in $\cL$ rather than the composition.
\end{remark}

We now assume that \(\cL\) is a symmetric monoidal \Dpmenriched{}
category, and is closed with respect to its monoidal product.
It has no reason to be a symmetric monoidal closed \Dpmenriched{}
category (\cref{def:spm-enriched-smcc}) in general, and again this
property boils down to a property on the \SUMBLTY{} structure $\Sfun$
expressing that \(\Sfun{\Limplp XY}\) and \(\Limpl X{\Sfun Y}\) are
isomorphic.
More precisely, notice that
we have a morphism \(\Sfunstr_{X,Y}\) defined as the following
composition of morphisms
\begin{equation} \label{eq:Sfunstri}
  \begin{tikzcd}
    \Tens{\Sfun{\Limplp XY}}{X}
    \ar[r,"\Sfunstr^1_{\Limpl XY,X}"]
    &[1.6em]
    \Sfun{\Tensp{\Limplp XY}{X}}
    \ar[r,"\Sfun\Evlin"]
    &
    \Sfun X
  \end{tikzcd}
\end{equation}
so that we can define %
\(\Sfunstri_{X,Y} =\Curlin{\Sfunstr_{X,Y}}\in\cL(\Sfun{\Limplp
  XY},\Limpl X{\Sfun Y})\).
This morphism is a standard construction~\cite{Kock71} oftentimes
called the \emph{pointwise structure} (induced here by $\Sfunstr^1$).
Observe that $\Sfunstri$ is the mate of the strength $\Sfunstr^1$, see
\cref{sec:mates}.

\begin{lemma} \label{prop:ss-pointwise-characterization}
  For all $i \in \Mdeg$, 
  $(\Limpl X {\Sproj i}) \Compl \Sfunstri = \Sproj i$
\end{lemma}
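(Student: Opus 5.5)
Both sides live in \(\cL(\Sfun{\Limplp XY},\Limpl XY)\), and currying is a bijection. So I would prove the equality after uncurrying: show that \(\Curlininv\) of each side gives the same morphism \(\Tens{\Sfun\Limplp XY}{X}\to Y\).

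Recall that \(\Curlininv(h)=\Evlin\Compl\Tensp hX\) and that the evaluation is natural in its codomain variable, i.e.\ \(\Evlin\Compl\Tensp{(\Limpl X g)}X=g\Compl\Evlin\). Using these two facts, the left-hand side uncurries to
\[
\Evlin\Compl\Tensp{(\Limpl X{\Sproj i})\Compl\Sfunstri}{X}
=\Sproj i\Compl\Evlin\Compl\Tensp{\Sfunstri}X
=\Sproj i\Compl\Sfunstr_{X,Y}.
\]
The last step holds because \(\Evlin\Compl\Tensp{\Curlin f}X=f\), and by definition \(\Sfunstri=\Curlin\Sfunstr_{X,Y}\).

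Next I unfold \(\Sfunstr_{X,Y}=(\Sfun\Evlin)\Compl\Sfunstr^1_{\Limpl XY,X}\). Naturality of \(\Sproj i\) (axiom \ref{ax:S-nat}) lets me move the projection past \(\Sfun\Evlin\):
\[
\Sproj i\Compl(\Sfun\Evlin)\Compl\Sfunstr^1=\Evlin\Compl\Sproj i\Compl\Sfunstr^1 .
\]
By the defining property of the witness \(\Sfunstr^1=\Stuple{\Sproj i\ITens X}_{i\in\Mdeg}\) in \cref{eq:sm-strength}, we have \(\Sproj i\Compl\Sfunstr^1=\Tens{\Sproj i}X\). Hence the left-hand side uncurries to \(\Evlin\Compl\Tensp{\Sproj i}X\), which is exactly \(\Curlininv(\Sproj i)\).

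Bijectivity of \(\Curlin\) then gives the equality. There is no real obstacle here. The only points needing care are the naturality of \(\Evlin\) in its second argument, which holds because \(\Limpl X{\_}\) acts on morphisms by \(\Limpl Xg=\Curlin(g\Compl\Evlin)\), and the implicit typing: \(\Sproj i\) on the right-hand side is the component at \(\Limpl XY\).
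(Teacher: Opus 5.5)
Your proof is correct and is essentially the paper's argument. The paper stays on the curried side: it writes \(\Sfunstri\) as \(\Curlin\) of the witness \(\Stuple{\Evlin\Compl(\Sproj i\ITens X)}_{i\in\Mdeg}\) and applies naturality of \(\Curlin\) under postcomposition. You instead uncurry, use the equivalent naturality of \(\Evlin\), and move \(\Sproj i\) past \(\Sfun\Evlin\) rather than folding \(\Sfun\Evlin\) into the witness; both routes reduce to the identity \(\Evlin\Compl(\Sproj i\ITens X)=\Curlininv(\Sproj i)\).
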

\begin{proof}
 We prove this equation by naturality of $\Curlin$.
 \begin{align*}
    (\Limpl X {\Sproj i}) \Compl \Sfunstri &= 
      (\Limpl X {\Sproj i}) \Compl \Curlin
      \Stuple{\Evlin \Compl (\Sproj i \ITens X)}_{i \in \Mdeg} \\
      &= \Curlin(\Sproj i \Compl 
        \Stuple{\Evlin \Compl (\Sproj i \ITens X)}) 
      & \quad \text{naturality of $\Curlin$}\\
        &= \Curlin(\Evlin \Compl (\Sproj i \ITens X)) \\
        &= \Curlin(\Curlin^{-1}(\Sproj i)) = \Sproj i \tag*{\qedhere}
 \end{align*}
\end{proof}

\begin{proposition}
  \label{prop:ss-closure}
  The following assertions are equivalent. \begin{enumerate}
  \item The category $\cL$ is a symmetric monoidal closed
    \Dpmenriched{} category.
  \item For all object $X$, the indexed family
    $(\Limpl X {\Sproj i})_{i \in \Mdeg}$ is summable in
    $\cL(\Limpl{X}{\Sfun Y}, \Limpl X Y)$.
  \item The morphism
    \(\Sfunstri_{X,Y} \in\cL(\Sfun{\Limplp XY},\Limpl X{\Sfun Y})\) is
    an iso.
  \end{enumerate}
  Furthermore, when one of the conditions above holds,
  $(\Sfunstri_{X, Y})^{-1} = \Stuple{\Limpl X {\Sproj i}}_{i \in
    \Mdeg}$.
\end{proposition}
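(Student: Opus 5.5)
I will prove the cycle (1)$\Rightarrow$(2)$\Rightarrow$(3)$\Rightarrow$(1), obtaining the formula for the inverse along the way.

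\textbf{(1)$\Rightarrow$(2).} By \cref{lemma:sproj-summable}, the family $(\Sproj i)_{i\in\Mdeg}$ is summable in $\cL(\Sfun Y,Y)$. \Cref{def:spm-enriched-smcc} then says that $(\Limpl X{\Sproj i})_{i\in\Mdeg}$ is summable, with sum $\Limpl X\Ssum$.

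\textbf{(2)$\Rightarrow$(3).} By \ref{ax:S-witness} there is a witness $\psi=\Stuple{\Limpl X{\Sproj i}}_{i\in\Mdeg}\in\cL(\Limpl X{\Sfun Y},\Sfun\Limplp XY)$. I check that it is inverse to $\Sfunstri$, using joint monicity on each side.

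\begin{itemize}
\item For $\psi\Compl\Sfunstri$: by \cref{prop:ss-pointwise-characterization},
\[
\Sproj i\Compl\psi\Compl\Sfunstri=(\Limpl X{\Sproj i})\Compl\Sfunstri=\Sproj i .
\]
Joint monicity of the $\Sproj i$ gives $\psi\Compl\Sfunstri=\Id$.
\item For $\Sfunstri\Compl\psi$: the projections are no longer available on the codomain $\Limpl X{\Sfun Y}$. I use instead that the $\Limpl X{\Sproj i}$ are jointly monic. This holds because $\Limpl X\_$ is a right adjoint and so transports joint monicity: two morphisms into $\Limpl X{\Sfun Y}$ agreeing after each $\Limpl X{\Sproj i}$ have uncurryings agreeing after each $\Sproj i$, by naturality of $\Curlin$. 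Then
\[
(\Limpl X{\Sproj i})\Compl\Sfunstri\Compl\psi=\Sproj i\Compl\psi=\Limpl X{\Sproj i},
\]
so $\Sfunstri\Compl\psi=\Id$.
\end{itemize}

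This also yields the stated formula $(\Sfunstri)^{-1}=\psi$ whenever (2) holds.

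\textbf{(3)$\Rightarrow$(1).} Assume $\Sfunstri$ is an iso and set $\psi=(\Sfunstri)^{-1}$. By the earlier proposition characterizing symmetric monoidal closed \Pmenriched{} categories, it suffices to show: whenever $(f_k)_{k\in K}$ is summable in $\cL(Y,Z)$, the family $(\Limpl X{f_k})_{k\in K}$ is summable.

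First I treat $\Mdeg$-indexed families. From \cref{prop:ss-pointwise-characterization} we get $\Sproj i\Compl\psi=\Limpl X{\Sproj i}$, so $\psi$ witnesses the summability of $(\Limpl X{\Sproj i})_i$. Now let $(f_i)_{i\in\Mdeg}$ be summable, so $f_i=\Sproj i\Compl h$ with $h=\Stuple{f_i}$. Then
\[
\Limpl X{f_i}=(\Limpl X{\Sproj i})\Compl(\Limpl Xh),
\]
which is summable by \cref{lemma:summability-compl}, with witness $\psi\Compl(\Limpl Xh)$.

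For arbitrary index sets $K$, I reuse the reduction from the proof of \cref{thm:ss-induces-enrichement}:
\begin{itemize}
\item inject $K$ into $\Mdeg^n$;
\item apply the iterated structure of \cref{prop:iterated-summability-structure}, using the iterated iso $\Sfun^n\Limplp XY\cong\Limpl X{\Sfun^nY}$ obtained by composing copies of $\Sfunstri$;
\item conclude with \cref{prop:reindexing}, using $\Limpl X0=0$.
\end{itemize}

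\textbf{Main obstacle.} The delicate step is this iteration. One must check that the composite of copies of $\Sfunstri$ satisfies $(\Limpl X{\Sproj{\Vect i}})\Compl(\text{iso})=\Sproj{\Vect i}$. I would prove this by induction on $n$, using naturality of $\Sproj i$ and \cref{prop:ss-pointwise-characterization}. The rest is as in \cref{thm:ss-induces-enrichement}.
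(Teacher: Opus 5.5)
Your proof is correct and follows the paper's argument. The paper proves (1)$\Leftrightarrow$(2) and (2)$\Leftrightarrow$(3) with the same ingredients you use: the witness $\Stuple{\Limpl X{\Sproj i}}_{i\in\Mdeg}$, joint monicity of the $\Limpl X{\Sproj i}$ via naturality of currying, and the witness $\Stuple{\Limpl X{\Sproj i}}\Compl(\Limpl X{\Stuple{f_i}})$ for $\Mdeg$-indexed families. Your (3)$\Rightarrow$(1) is just the paper's (3)$\Rightarrow$(2) followed by its (2)$\Rightarrow$(1). Your explicit induction building the iterated iso for general index sets fills in what the paper only sketches by pointing to the reindexing argument of \cref{prop:ss-tensor}.
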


We say that the \SUMBLTY{} structure satisfies
\labeltext{(\(\Sfun\)-\(\multimap\))}{ax:slimpl}\ref{ax:slimpl} when
one of the equivalent conditions above holds.
This is a \emph{property} of the summability structure, not an additional
structure.

\begin{proof}
  We prove each equivalence separately.

  \Proofcase{$1 \Leftrightarrow 2$.}
  The proof is very similar to that of
  \cref{thm:ss-induces-enrichement}, as well as that of
  \cref{prop:ss-tensor}.
  The forward direction is tautological.
  Conversely, assume that $(f_i)_{i \in I}$ is a summable family of 
  $\cL(Y, Z)$. We need to prove 
  that $(\Limpl X {f_i})_{i \in I}$ is summable in $\cL(\Limpl X Y, \Limpl X Z)$.
  For the sake of simplicity, we assume here that $I = \Mdeg$.
  The general case relies on the reindexing technique used in the
  proof of \cref{prop:ss-tensor}, using  $\Limplp X 0 = 0$.
  Define the morphism
  \[
    h = \Stuple{\Limpl Y {\Sproj i}}_{i \in \Mdeg} 
    \Compl (\Limpl X {\Stuple{f_i}_{i \in \Mdeg}}) \in 
    \cL(\Limpl X Y, \Sfun( \Limpl X Z))\,.
  \]
  Then we can check easily by functoriality of $\Limpl X {\_}$ that
  $\Sproj i \Compl h = \Limplp X {f_i}$, so
  $(\Limpl X {f_i})_{i \in \Mdeg}$ is summable.
  This concludes the proof.

  \Proofcase{$2 \Leftrightarrow 3$.}
  Monicity is preserved by right adjoint functors, and hence the
  morphisms $(\Limpl X {\Sproj i})_{i \in \Mdeg}$ are jointly
  monic.
  This fact can also be checked by hand in a rather straightforward
  way using the fact that $\Curlin$ is a bijection and is natural.
  Now, assume that $(\Limpl X {\Sproj i})_{i \in \Mdeg}$ is summable.
  Then, by \cref{prop:ss-pointwise-characterization},
  \[
    \Sproj i \Compl \Stuple{\Limpl X {\Sproj i}}_{i \in \Mdeg} \Compl
    \Sfunstri = (\Limpl X {\Sproj i}) \Compl \Sfunstri = \Sproj i
  \]
  \[ (\Limpl X {\Sproj i}) \Compl \Sfunstri \Compl \Stuple{\Limpl X
      {\Sproj i}}_{i \in \Mdeg} = \Sproj i \Compl \Stuple{\Limpl X
      {\Sproj i}}_{i \in \Mdeg} = \Limpl X {\Sproj i}
  \]
  so, by joint monicity of the $\Sproj i$ and of the
  $\Limpl X {\Sproj i}$, we conclude that $\Sfunstri$ is an iso with
  inverse $\Stuple{\Limpl X {\Sproj i}}_{i \in \Mdeg}$.
  Conversely, assume that $\Sfunstri$ is an iso.
  Then,
  \[
    \Limplp X {\Sproj i} = (\Limpl X {\Sproj i}) \Compl \Sfunstri
    \Compl (\Sfunstri)^{-1} = \Sproj i \Compl (\Sfunstri)^{-1}
  \]
  so $(\Limpl X {\Sproj i})_{i \in \Mdeg}$ is summable with witness
  $(\Sfunstri)^{-1}$.
\end{proof}

\subsection{\texorpdfstring{\SUMBLTY{}}{Summability} structure in a cartesian category}
\label{sec:summability-cartesian}
We assume now that \(\cL\) is cartesian, see \cref{sec:ll}
for the notations.
It has no reason to be a cartesian \Pmenriched{}
category in general, and again this property boils
down to a strong monoidal structure on $\Sfun$.
The functor $\Sfun$ has a canonical natural transformations
\[
  \Tuple{\Sfun \Proj l}_{l \in L} 
  \in \cL(\Sfun(\Bwith_{l \in L} X_l), \Bwith_{l \in L} \Sfun X_l)\,.
\] 

\begin{proposition}
  \label{prop:ss-cartesian}
  The following assertions are equivalent.
  \begin{enumerate}
  \item The category $\cL$ is a cartesian \Dpmenriched{} category.
  \item The indexed family $(\Bwith_{l \in L} \Sproj i)_{i \in \Mdeg}$
    is summable in
    $\cL(\Bwith_{l \in L} (\Sfun X), \Bwith_{l \in L} X)$ for all $L$.
  \item The morphism
    \(\Tuple{\Sfun \Proj l}_{l \in L} \in \cL(\Sfun(\Bwith_{l \in L}
    X_l), \Bwith_{l \in L} \Sfun X_l) \) is an iso for all $L$.
  \end{enumerate}
  Furthermore, when one of the conditions above holds,
  $\Tuple{\Sfun \Proj l}_{l \in L}^{-1} = \Stuple{\Bwith_{l \in L}
    \Sproj i}_{i \in \Mdeg}$.
\end{proposition}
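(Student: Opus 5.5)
I will follow exactly the pattern of the proof of \cref{prop:ss-closure}, proving the two equivalences \(1\Leftrightarrow2\) and \(2\Leftrightarrow3\) separately, and reading off the inverse along the way.

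\textbf{\(1\Rightarrow2\).} Since \((\Sproj i)_{i\in\Mdeg}\) is summable in \(\cL(\Sfun X,X)\) by \cref{lemma:sproj-summable}, linearity of \(\Bwith_{l\in L}\) (\cref{def:spm-enriched-cartesian}), applied with the constant collection \(X_l=X\), gives summability of \((\Bwith_{l\in L}\Sproj i)_{i\in\Mdeg}\). Here I would allow the \(X_l\) to vary in \(L\) as well; the argument is unchanged.

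\textbf{\(2\Rightarrow1\).} By the characterization proposition following \cref{rem:pairing-additive}, it suffices to show the following: if each \((f^l_i)_{i\in I}\) is summable in \(\cL(X_l,Y_l)\), then \((\Bwith_lf^l_i)_{i\in I}\) is summable. I first treat \(I=\Mdeg\). Set
\[
  h=\Stuple{\Bwith_{l\in L}\Sproj i}_{i\in\Mdeg}\Compl\Bwith_{l\in L}\Stuple{f^l_i}_{i\in\Mdeg}.
\]
Then functoriality of \(\Bwith\) gives \(\Sproj i\Compl h=\Bwith_l(\Sproj i\Compl\Stuple{f^l_j}_j)=\Bwith_lf^l_i\), so \(h\) is a witness. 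For general \(I\), I reuse the reindexing argument of \cref{thm:ss-induces-enrichement} and \cref{prop:ss-tensor}. Choose an injection \(I\Injection\Mdeg^n\), pad with zeros (using \(\Bwith_l0=0\), noted just after the characterization proposition), and work with the \(\Mdeg^n\)-summability structure of \cref{prop:iterated-summability-structure}. The witness is built from iterates of the morphism \(\Stuple{\Bwith_l\Sproj i}_i\), which I would first show to be natural using \cref{prop:summability-naturality}, so that it can be iterated as \(\Sfun^n\)-level maps. The verification then proceeds by induction on \(n\), as in \cref{prop:ss-tensor}. I expect this bookkeeping to be the only mildly delicate step, and I would just cite the earlier proofs rather than repeat them.

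\textbf{\(2\Leftrightarrow3\).} The morphisms \((\Bwith_l\Sproj i)_{i\in\Mdeg}\) are jointly monic, by postcomposing with the jointly monic \(\Proj l\) and using joint monicity of the \(\Sproj i\). Moreover \(\Proj l\Compl\Bwith_{l}\Sproj i\Compl\Tuple{\Sfun\Proj l}_l=\Sproj i\Compl\Sfun\Proj l=\Proj l\Compl\Sproj i\), so
\[
  (\Bwith_l\Sproj i)\Compl\Tuple{\Sfun\Proj l}_l=\Sproj i .
\]
This identity plays the role of \cref{prop:ss-pointwise-characterization}.
\begin{itemize}
\item Assuming (2), let \(k=\Stuple{\Bwith_l\Sproj i}_i\). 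Then \(\Sproj i\Compl k\Compl\Tuple{\Sfun\Proj l}_l=\Sproj i\), and \((\Bwith_l\Sproj i)\Compl\Tuple{\Sfun\Proj l}_l\Compl k=\Sproj i\Compl k=\Bwith_l\Sproj i\). The two joint monicities give \(k\Compl\Tuple{\Sfun\Proj l}_l=\Id\) and \(\Tuple{\Sfun\Proj l}_l\Compl k=\Id\), so \(\Tuple{\Sfun\Proj l}_l\) is an iso with inverse \(k\).
\item Assuming (3), \(\Bwith_l\Sproj i=\Sproj i\Compl\Tuple{\Sfun\Proj l}_l^{-1}\), so the family in (2) is summable with witness \(\Tuple{\Sfun\Proj l}_l^{-1}\).
\end{itemize}
The first bullet also gives the stated formula for the inverse.
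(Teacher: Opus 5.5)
Your proof is correct and is essentially the paper's own argument. The paper just says the proof is the same as for \cref{prop:ss-closure}, and you carry out exactly that transposition: the witness $\Stuple{\Bwith_{l}\Sproj i}_{i}\Compl\Bwith_{l}\Stuple{f^l_i}_{i}$, reindexing through $\Mdeg^n$ using $\Bwith_l 0=0$, and the identity $(\Bwith_l\Sproj i)\Compl\Tuple{\Sfun\Proj l}_l=\Sproj i$ in place of \cref{prop:ss-pointwise-characterization}. Letting the $X_l$ vary in item~2 is the right reading, since both item~3 and your witness in $2\Rightarrow1$ need it; also, the iteration only uses naturality of the $\Sproj i$, so proving naturality of the comparison map is harmless but unnecessary.
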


\begin{proof}
  The proof is the same as the proof of \cref{prop:ss-closure}.
\end{proof}

We say that the \SUMBLTY{} structure satisfies
\labeltext{(\(\Sfun\)-\(\Bwith\))}{ax:swith}\ref{ax:swith} when one of
the equivalent conditions above holds.
In the sequel, when dealing with a (pre-)\SUMBLTY{} structure on a
cartesian category, we always assume that~\ref{ax:swith} holds.
We write
\begin{equation}
  \label{eq:Swithiso}
  \Swithiso_{X_1, X_2} = \Tuple{\Sfun \Proj 1, \Sfun \Proj 2}^{-1} 
  \in \cL(\Withp{\Sfun X_1}{\Sfun X_2}, \Sfun \Withp{X_1}{X_2}).
\end{equation}

\begin{proposition}
  \label{prop:S-strong-with}
  $(\Sfunm, (\Sin 0)_{\Top}, \Swithiso)$ is a strong monoidal monad.
\end{proposition}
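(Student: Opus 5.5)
We must show that $(\Sfunm,(\Sin 0)_{\Top},\Swithiso)$ is a strong monoidal monad from $(\cL,\Top,\IWith)$ to itself. This means three things. First, $\Sfun$ is strong monoidal with unit $(\Sin 0)_{\Top}\in\cL(\Top,\Sfun\Top)$ and multiplication $\Swithiso$. Second, that structure is symmetric. Third, $\Sin0$ and $\Sfunmonm$ are monoidal natural transformations for it (\cref{def:monoidal-natural-trans}).

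Throughout, the method is the one used in \cref{th:sfun-linear-monad}. We postcompose every equation with the jointly monic families $\Sproj{i}$ (or $\Sproj i\Compl\Sproj j$) and with the projections $\Proj l$, and then compute with naturality of the $\Sproj i$ and the characterization $\Swithiso^{-1}=\Tuple{\Sfun\Proj1,\Sfun\Proj2}$.

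The first step is to record a convenient characterization. By \cref{prop:ss-cartesian}, $\Swithiso=\Stuple{\Bwith_l\Sproj i}_{i\in\Mdeg}$, so
\[
\Sproj i\Compl\Swithiso=\Withp{\Sproj i}{\Sproj i},
\]
and $\Sfun\Proj l\Compl\Swithiso=\Proj l$. For the unit, $(\Sin0)_{\Top}$ must be an iso. Since $\Top$ is terminal, $\cL(\Top,\Top)$ is a singleton, so $\Sproj i\Compl h$ is the same morphism for every $h\in\cL(\Sfun\Top,\Top)$. The nullary case $L=\emptyset$ of \cref{prop:ss-cartesian} also gives that $\Sfun\Top$ is terminal. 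Hence $(\Sin0)_{\Top}$ is a morphism between terminal objects and therefore an iso.

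The second step checks the coherence diagrams of \cref{def:smf}.
\begin{itemize}
\item Associativity: postcompose both legs with $\Sproj i$. Each becomes $\Withp{\Withp{\Sproj i}{\Sproj i}}{\Sproj i}$, modulo the associator. This uses naturality of $\Sproj i$ and the displayed equation above.
\item Unit laws: these reduce to terminality of $\Top$ and $\Sfun\Top$.
\item Symmetry: similarly, postcomposing with $\Sproj i$ sends both legs to the swap of $\Withp{\Sproj i}{\Sproj i}$.
\end{itemize}
Alternatively, one can observe that any functor preserving products up to the canonical comparison $\Tuple{\Sfun\Proj l}$ is canonically strong symmetric monoidal for the cartesian structures. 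This is a standard fact and gives the second step immediately.

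The third step is monoidality of the monad structure. For the unit, we need $\Sin0\Compl\Withp{X_1}{X_2}=\Swithiso\Compl\Withp{\Sin0}{\Sin0}$. Postcomposing with $\Sproj i$ gives $\Kronecker i0\Id$ on both sides. For the multiplication, we need
\[
\Sfunmonm\Compl\Sfun\Swithiso\Compl\Swithiso=\Swithiso\Compl\Withp{\Sfunmonm}{\Sfunmonm}.
\]
Postcomposing with $\Sproj i$, the right-hand side becomes $\Withp{\sum_{(l,r)\in\Mdegsl i}\Sproj l\Compl\Sproj r}{\text{same}}$. The left-hand side becomes $\sum_{(l,r)\in\Mdegsl i}\Sproj l\Compl\Sproj r\Compl\Sfun\Swithiso\Compl\Swithiso$, which after two uses of naturality equals $\sum_{(l,r)\in\Mdegsl i}\Withp{\Sproj l\Compl\Sproj r}{\Sproj l\Compl\Sproj r}$.

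This identification is the main obstacle. It requires that sums commute with $\IWith$, that is, additivity of the functor $\Bwith$. That property is exactly what \ref{ax:swith} guarantees, via item 1 of \cref{prop:ss-cartesian} and \cref{def:spm-enriched-cartesian}. Once it is invoked, both sides agree.
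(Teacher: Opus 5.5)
Your proof is correct. It is the ``by hand'' verification that the paper mentions only in passing, whereas the argument the paper actually gives is structural. The paper invokes the standard fact (\cite[Prop.~22]{Mellies09}) that, on a cartesian category, the comparison maps $\Tuple{\Sfun\Proj 1,\Sfun\Proj 2}$ and the unique morphism $\Sfun\Top\to\Top$ endow the monad $\Sfunm$ with an oplax monoidal structure; the unit and multiplication are monoidal for it simply by naturality. Then \ref{ax:swith} says these comparison maps are invertible, with inverses $\Swithiso$ and $(\Sin0)_{\Top}$, so the structure is strong.

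The structural route needs no computation and makes clear that monoidality of $\Sin0$ and $\Sfunmonm$ follows from naturality alone. In particular, the ``main obstacle'' in your third step, additivity of $\IWith$, comes only from testing equations against the $\Sproj i$. Suppose instead you postcompose with $\Sfun\Proj l$, which are jointly monic since $\Tuple{\Sfun\Proj 1,\Sfun\Proj 2}$ is an iso. Using $\Sfun\Proj l\Compl\Swithiso=\Proj l$ and naturality of $\Sfunmonm$, both sides of your multiplication equation reduce to $\Sfunmonm\Compl\Proj l$ with no sums involved. The same works for $\Sin0$, and it also removes the tacit use of $0\IWith 0=0$ for $i\neq 0$. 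Still, your route is valid: the appeal to item~1 of \cref{prop:ss-cartesian} is legitimate because \ref{ax:swith} is a standing assumption, and your argument is explicit and stays within the paper's joint-monicity toolkit.

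Two small slips. In the unit argument, $h$ should range over $\cL(Z,\Sfun\Top)$, not $\cL(\Sfun\Top,\Top)$, for $\Sproj i\Compl h$ to typecheck. You also omit the nullary monoidality condition for $\Sfunmonm$, which is trivial because $\Sfun\Top$ is terminal.
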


\begin{proof}
  This can easily be checked by hand. Alternatively, by a standard
  result on monads~\cite[Prop.~22]{Mellies09}, the natural
  transformation
  $\Tuple{\Sfun \Proj{1}, \Sfun \Proj{2}} \in \cL(\Sfun
  \Withp{X_1}{X_2}, \Withp{\Sfun X_1}{\Sfun X_2})$ and the unique
  morphism in $\cL(\Sfun \Top, \Top)$ endow the monad $\Sfunm$ with
  an oplax monoidal structure.
  Then \ref{ax:swith} implies that these natural transformations are
  isos, with respective inverses $(\Sin 0)_{\Top}$ (as easily checked)
  and $\Swithiso$ (by definition).
  Hence, this oplax monoidal structure is strong monoidal.
\end{proof}

\section{\texorpdfstring{\SUMBLTY{}}{Summability}
  structures in the \Elementary{} case}
\label{sec:representable-sum}
%

It turns out that most non-trivial \SUMBLTY{} structures result from
very simple properties of the category \(\cL\): the \SUMBLTY{}
functor is given by $\Sfun = (\Limpl {\Dbimongen} {\_})$ for 
some specific object $\Dbimongen$.
Such a \SUMBLTY{} structures will be called \emph{\Elementary{}}.

\subsection{Joint epicity, joint monicity and currying}

\begin{definition}
  Let \(\cL\) be a category.
  A family of morphisms \(\Vect f=(f_i\in\cL(X_i,Y))_{i\in I}\) is
  jointly epic if it is jointly monic in \(\Op\cL\), that is: for all
  objects \(U\) of \(\cL\) and pairs of morphisms \(f,g\in\cL(Y,U)\),
  if \(\forall i\in I\ g\Compl f_i=h\Compl f_i\), then \(g=h\).

  Assume now that \(\cL\) is an SMC.
  A family of morphisms \(\Vect f=(f_i\in\cL(X_i,Y))_{i\in I}\) is
  completely jointly epic if, for all object \(U\) of \(\cL\), the
  family of morphisms \((\Tens{f_i}{U})_{i\in I}\) is jointly epic.
\end{definition}
By symmetry of the monoidal structure of \(\cL\),
\(\Vect f=(f_i\in\cL(X_i,Y))_{i\in I}\) is completely jointly epic iff
the family of morphisms \((\Tens{U}{f_i})_{i\in I}\) is jointly epic
for all objects \(U\) of \(\cL\).

\begin{lemma}
  \label{lemma:compl-joint-epic-smcc}
  In an SMCC \(\cL\), all jointly epic families are completely jointly
  epic.
\end{lemma}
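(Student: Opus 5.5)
The plan is to reduce the joint epicity of the family \((\Tens{f_i}{U})_{i\in I}\) to the joint epicity of the family \((f_i)_{i\in I}\) itself, using the currying bijection \(\Curlin\) and its naturality. Let \(\Vect f=(f_i\in\cL(X_i,Y))_{i\in I}\) be jointly epic, and fix an object \(U\). Take \(g,h\in\cL(\Tens YU,V)\) such that \(g\Compl\Tensp{f_i}{U}=h\Compl\Tensp{f_i}{U}\) for all \(i\in I\). We must show \(g=h\).

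By the symmetry remark just before the lemma, it is equivalent to work with \(\Tens U{f_i}\) or with \(\Tens{f_i}U\). We therefore arrange things so that the varying factor \(Y\) is the one that gets curried away from the evaluation. Precomposing with the braiding \(\Sym\), we replace \(g,h\) by \(g'=g\Compl\Sym_{U,Y}\) and \(h'=h\Compl\Sym_{U,Y}\) in \(\cL(\Tens UY,V)\). These satisfy \(g'\Compl\Tensp U{f_i}=h'\Compl\Tensp U{f_i}\) for all \(i\), by naturality of \(\Sym\).

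We next need a way to curry with respect to the left factor. Using the paper's convention \(\Curlin:\cL(\Tens ZX,W)\to\cL(Z,\Limpl XW)\) with \(Z=Y\) and \(X=U\), we curry \(g\) and \(h\) directly: \(\Curlin g,\Curlin h\in\cL(Y,\Limpl UV)\). The key step is the naturality of currying in its first argument, namely \(\Curlin(k\Compl\Tensp{f}{U})=\Curlin(k)\Compl f\) for any \(f\in\cL(X,Y)\). This follows from \(\Curlininv(\Curlin(k)\Compl f)=\Evlin\Compl\Tensp{\Curlin k\Compl f}{U}=\Evlin\Compl\Tensp{\Curlin k}U\Compl\Tensp fU=k\Compl\Tensp fU\), together with the bijectivity of \(\Curlin\).

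Applying this, we get \(\Curlin(g)\Compl f_i=\Curlin(h)\Compl f_i\) for all \(i\). Joint epicity of \(\Vect f\) then gives \(\Curlin g=\Curlin h\), and hence \(g=h\) by injectivity of \(\Curlin\). This shows that \((\Tens{f_i}U)_{i\in I}\) is jointly epic, so the symmetry detour above is in fact unnecessary.

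The main obstacle is only notational: checking that the currying convention fixes the factor \(U\) on the correct side. If it fixed the wrong side, we would insert the braiding \(\Sym\) as described in the second paragraph. No enrichment or summability hypotheses are involved.
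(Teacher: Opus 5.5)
Your proof is correct and is essentially the paper's argument: curry \(g\) and \(h\), use the naturality \(\Curlin(k\Compl\Tensp fU)=\Curlin(k)\Compl f\) to obtain \(\Curlin g\Compl f_i=\Curlin h\Compl f_i\), then conclude by joint epicity of the \(f_i\) and injectivity of \(\Curlin\). You can drop the braiding detour in your second paragraph. It is unnecessary, and its description is backwards: \(U\) must be curried away while \(Y\) stays as the domain, whereas with the paper's convention your swapped \(g'\in\cL(\Tens UY,V)\) would curry \(Y\) away.
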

\begin{proof}
  Let \(U\) be an object of \(\cL\) and assume that
  \((f_i\in\cL(X_i,Y))_{i\in I}\) is jointly epic.
  Let \(g,h\in\cL(\Tens XU,Y)\) be such that
  \(g\Compl{\Tensp{f_i}U}=h\Compl{\Tensp{f_i}U}\).
  For all \(i\in I\) we have
  \(\Curlinp g\Compl f_i=\Curlinp{g\Compl\Tensp{f_i}U}=\Curlinp h\Compl
  f_i\), and hence \(\Curlin g=\Curlin h\), which implies \(g=h\).
\end{proof}

\begin{lemma}
  \label{lemma:limpl-epic-to-monic}
  Let \(\cL\) be an SMCC and let
  \(\Vect f=(f_i\in\cL(X_i,Y))_{i\in I}\) be a family of morphisms.
  Then \(\Vect f\) is jointly epic iff for all objects \(U\) of
  \(\cL\) the family
  \(\Limplp{\Vect f}U=(\Limpl{f_i}U\in\cL(\Limpl
  YU,\Limpl{X_i}U))_{i\in I}\) is jointly monic.
\end{lemma}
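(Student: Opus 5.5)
The plan is to reduce both directions to the currying bijection \(\Curlin\) and its naturality, in the same way as the proof of \cref{lemma:compl-joint-epic-smcc}. The identity used throughout is the following: for \(g\in\cL(Z,\Limpl YU)\), naturality of currying in the first argument gives
\begin{align*}
  (\Limpl{f_i}U)\Compl g=\Curlin\big(\Curlininv(g)\Compl\Tensp Z{f_i}\big)\,.
\end{align*}
This holds because \(\Limpl{f_i}U=\Curlin(\Evlin\Compl\Tensp{\Limplp YU}{f_i})\), and then \(\Curlin\) is natural in the source object \(Z\).

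For the forward direction, assume \(\Vect f\) is jointly epic and let \(g,h\in\cL(Z,\Limpl YU)\) satisfy \((\Limpl{f_i}U)\Compl g=(\Limpl{f_i}U)\Compl h\) for all \(i\in I\). By the identity and injectivity of \(\Curlin\), we get \(\Curlininv(g)\Compl\Tensp Z{f_i}=\Curlininv(h)\Compl\Tensp Z{f_i}\) for all \(i\). By \cref{lemma:compl-joint-epic-smcc}, the family \((\Tens Z{f_i})_{i\in I}\) is jointly epic. Hence \(\Curlininv g=\Curlininv h\), and so \(g=h\).

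For the converse, assume each family \(\Limplp{\Vect f}U\) is jointly monic, and let \(g,h\in\cL(Y,U)\) satisfy \(g\Compl f_i=h\Compl f_i\) for all \(i\). Take \(Z=\Sone\) and let \(\hat g=\Curlin(g\Compl\Leftu_Y)\in\cL(\Sone,\Limpl YU)\), and define \(\hat h\) similarly. By the identity and naturality of \(\Leftu\), we have \((\Limpl{f_i}U)\Compl\hat g=\Curlin(g\Compl f_i\Compl\Leftu_{X_i})\), and the same for \(\hat h\). These two morphisms are therefore equal for every \(i\). Joint monicity then gives \(\hat g=\hat h\). Since \(\Curlin\) is bijective and \(\Leftu\) is invertible, it follows that \(g=h\).

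The only delicate point is establishing the naturality identity for precomposition by \(\Limpl{f_i}U\), including the bookkeeping with the unitor \(\Leftu\) when \(Z=\Sone\). This is a routine SMCC computation. I expect no real obstacle beyond stating it carefully.
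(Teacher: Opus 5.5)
Your proof is correct and essentially matches the paper's. The forward direction is identical: rewrite \((\Limpl{f_i}U)\Compl g\) as \(\Curlinp{\Curlininv(g)\Compl\Tensp Z{f_i}}\), then apply \cref{lemma:compl-joint-epic-smcc} to the \(\Tens Z{f_i}\). In the converse the paper first obtains \(\Limpl gZ=\Limpl hZ\) (for \(g,h\in\cL(Y,Z)\)) by contravariant functoriality and then precomposes with \(\Curlinp{\Leftu_Z}\), which yields exactly the names of \(g\) and \(h\) that you test joint monicity on directly, so the difference is only cosmetic.
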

\begin{proof}
  Let \(g,h\in\cL(Z,\Limpl YU)\) be such that
  \(\forall i\in I\ \Limplp{f_i}{U}\Compl g=\Limplp{f_i}{U}\Compl h\).
  We have
  \(\Limplp{f_i}{U}\Compl g =\Curlinp{\Evlin\Compl\Tensp{\Limplp
      YU}{f_i}}\Compl g =\Curlinp{\Evlin\Tensp g{f_i}}
  =\Curlinp{\Evlin\Compl\Tensp g{X_i}\Compl\Tensp{Z}{f_i}}\) and
  similarly for \(h\), and hence
  \(\Evlin\Compl\Tensp g{X_i}\Compl\Tensp{Z}{f_i}=\Evlin\Compl\Tensp
  h{X_i}\Compl\Tensp{Z}{f_i}\) for all \(i\in I\), that is
  \(\Curlininv(g)\Compl\Tensp{Z}{f_i}=\Curlininv(h)\Compl\Tensp{Z}{f_i}\),
  and since the \(\Tens{f_i}Z\) are jointly epic by
  \cref{lemma:compl-joint-epic-smcc}, we get \(g=h\) which shows that
  $\Limpl{\Vect f}U$ is jointly monic.

  Assume conversely that $\Limpl{\Vect f}U$ is jointly monic and let
  \(g,h\in\cL(Y,Z)\) be such that
  \(\forall i\in I\ g\Compl f_i=h\Compl f_i\).
  Since \(\Limplp{(g\Compl f_i)}{U}=\Limplp{f_i}U\Compl\Limplp{g}U\) and
  similarly for \(h\), we have \(\Limplp gU=\Limplp hU\).
  This holds for all \(U\), so we can take \(U=Z\) and by precomposing
  this morphism \(\Limplp ZZ\to\Limplp YZ\) with
  \(\Curlinp{\Leftu_Z}\in\cL(1,\Limpl ZZ)\) we get that \(g=h\).
\end{proof}

\subsection{\texorpdfstring{\SUMBLTY{}}{Summability} object}

\begin{convention}
  In this section, we assume that \(\cL\) is a closed \Pmsmc{}.
\end{convention}

\begin{definition}
  \label{def:pre-sumobj}
  \labeltext{(\(\Dbimongen\)-epi)}{ax:depi}\ref{ax:depi} A
  \emph{pre \Sumobj{}} is a tuple
  \((\Dbimongen,(\Win i)_{i\in\Mdeg},\Wdiag)\) where $\Dbimongen$ is an
  object of \(\cL\), \(\Wdiag\in\cL(\Sone,\Dbimongen)\), and the family
  of morphisms \((\Win i\in\cL(\Sone,\Dbimongen))_{i\in\Mdeg}\) is
  jointly epic.
\end{definition}

\begin{lemma} \label{lemma:pre-summability-object}
  Given an object \(\Dbimongen\) of \(\cL\) and morphisms
  \((\Win i\in\cL(\Sone,\Dbimongen))_{i\in\Mdeg}\) and
  \(\Wdiag\in\cL(\Sone,\Dbimongen)\), the tuple
  \((\Dbimongen,(\Win i)_{i\in\Mdeg},\Wdiag)\) is a pre-\Sumobj{} iff
  the functor \(\Sfun=\Sfung(\Dbimongen):\cL\to\cL\) (where $\Sfung$
  is the strong monoidal functor defined in
  \cref{sec:monoid-to-monad}), equipped with the natural
  transformations
  \((\Sproj i=\Invp{\Sfungmonz}\Compl\Sfung{(\Win
    i)}:\Sfun\Natto\Id)_{i\in\Mdeg}\) and
  \(\Ssum=\Invp{\Sfungmonz}\Compl\Sfung(\Wdiag):\Sfun\Natto\Id\),
  defines a pre-\SUMBLTY{} structure on the \Pmsmc{} $\cL$ (see
  \cref{def:pre-summ-struct}).
\end{lemma}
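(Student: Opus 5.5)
Comparing the two definitions, conditions (S-fun) and (S-nat) of a pre-summability structure hold automatically for the proposed data. Indeed, \(\Sfun=\Sfung(\Dbimongen)=\Limpl\Dbimongen\_\) is a functor. Moreover \(\Sfung(\Win i)\) and \(\Sfung(\Wdiag)\) are natural transformations \(\Sfung(\Dbimongen)\Natto\Sfung(\Sone)\), since \(\Sfung\) is a (contravariant) functor into \(\FUN\cL\cL\). Composing them with the natural iso \(\Invp{\Sfungmonz}:\Sfung(\Sone)\Natto\Idfun\) therefore gives natural transformations \(\Sfun\Natto\Id\). The only thing left to show is that (S-monic) for the family \((\Sproj i)_{i\in\Mdeg}\) is equivalent to (\(\Dbimongen\)-epi) for the family \((\Win i)_{i\in\Mdeg}\).

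Since \(\Invp{\Sfungmonz}_X\) is an iso, joint monicity of \((\Sproj{i,X})_{i\in\Mdeg}\) is equivalent to joint monicity of \((\Limpl{\Win i}X\in\cL(\Limpl\Dbimongen X,\Limpl\Sone X))_{i\in\Mdeg}\). This holds because postcomposing a jointly monic family with an iso, or removing that iso, preserves joint monicity. We then apply \cref{lemma:limpl-epic-to-monic} to the family \((\Win i\in\cL(\Sone,\Dbimongen))_{i\in\Mdeg}\). The lemma says this family is jointly epic iff, for every object \(U\), the family \((\Limpl{\Win i}U)_{i\in\Mdeg}\) is jointly monic. The latter condition, quantified over all \(U\), is exactly (S-monic) quantified over all objects \(X\). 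This gives both directions at once.

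The argument is just unfolding definitions. The only point needing care is that (S-monic) is required for all objects \(X\), which matches the quantification over all \(U\) in \cref{lemma:limpl-epic-to-monic}. We also need that \(\Sfungmonz\) is an iso, which holds because \(\Sfung\) is strong monoidal. No condition is placed on \(\Wdiag\), since a pre-summability structure only asks \(\Ssum\) to be natural.
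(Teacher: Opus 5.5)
Your proof is correct and follows the same route as the paper, which simply says ``Apply \cref{lemma:limpl-epic-to-monic}''. You have spelled out the routine points that the paper leaves implicit: (S-fun) and (S-nat) hold automatically, \(\Sfungmonz\) is invertible, and composing with an iso preserves joint monicity.
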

\begin{proof}
  Apply \cref{lemma:limpl-epic-to-monic}.
\end{proof}

\begin{lemma} %
  \label{lemma:tens-win-epic}
  Assume that \(\cL\) is equipped with a pre-\Sumobj{} for which we
  use the notations above.
  For any object \(X\) and any \(n\in\Nat\), the morphisms %
  \((X\ITens\Win{i_1}\ITens\cdots\ITens\Win{i_n} )_{\Vect
    i\in\Mdeg^n}\) %
  are jointly epic.
\end{lemma}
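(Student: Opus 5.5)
I would argue by induction on \(n\), using \cref{lemma:compl-joint-epic-smcc} (in an SMCC, jointly epic families are completely jointly epic) together with the elementary fact that composites of jointly epic families are jointly epic. Throughout, I keep associators and unitors implicit, so \(X\ITens\Sone^{\ITens n}\) is identified with \(X\). Formally each morphism \(X\ITens\Win{i_1}\ITens\cdots\ITens\Win{i_n}\) should be precomposed with the canonical iso \(X\to X\ITens\Sone^{\ITens n}\). Precomposing every member of a family with one fixed iso preserves joint epicity, so this bookkeeping is harmless.

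For \(n=0\) the family reduces to \(\Id_X\), which is trivially epic. For the inductive step, write the morphism indexed by \(\Vect i=(i_1,\dots,i_{n+1})\) as the composite
\[
(X\ITens\Dbimongen^{\ITens n}\ITens\Win{i_{n+1}})\Compl(X\ITens\Win{i_1}\ITens\cdots\ITens\Win{i_n}\ITens\Sone),
\]
using functoriality of \(\ITens\). Let \(g,h\) be morphisms out of \(X\ITens\Dbimongen^{\ITens(n+1)}\) that agree after precomposition with every member of the family. Fix \(i_{n+1}\). By the induction hypothesis applied to the object \(X\) (tensored with \(\Sone\), equivalently \(X\) itself), the family \((X\ITens\Win{i_1}\ITens\cdots\ITens\Win{i_n})_{(i_1,\dots,i_n)\in\Mdeg^n}\) is jointly epic. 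Hence \(g\Compl(X\ITens\Dbimongen^{\ITens n}\ITens\Win{i_{n+1}})=h\Compl(X\ITens\Dbimongen^{\ITens n}\ITens\Win{i_{n+1}})\) for every \(i_{n+1}\in\Mdeg\).

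Next, the family \((\Win i)_{i\in\Mdeg}\) is jointly epic by \ref{ax:depi}. By \cref{lemma:compl-joint-epic-smcc} and the remark following the definition of complete joint epicity, the family \((U\ITens\Win i)_{i\in\Mdeg}\) is then jointly epic for \(U=X\ITens\Dbimongen^{\ITens n}\). It follows that \(g=h\).

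There is no real obstacle here. The only point needing care is the order of the two steps: one must first cancel the "inner" indices using the induction hypothesis while the last index is held fixed, and only then cancel the last index using complete joint epicity. Alternatively, one can place the new factor on the left and apply the induction hypothesis to the object \(X\ITens\Dbimongen\) instead; this variant avoids the \(\Sone\)-padding altogether.
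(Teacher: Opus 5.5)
Your proof is correct and is the same argument as the paper's, whose proof reads only ``simple induction on \(n\)''. You supply the details it leaves out: a trivial base case, and an inductive step that first cancels the inner indices with the induction hypothesis while the last index is held fixed, then cancels the last index using the complete joint epicity of the \(\Win i\)'s given by \cref{lemma:compl-joint-epic-smcc}. That lemma applies because \(\cL\) is assumed closed in that section.
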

\begin{proof}
  Simple induction on \(n\).
\end{proof}

\begin{definition} %
  \label{def:sumobj}
  A pre \Sumobj{} in \(\cL\) is a \Sumobj{} if the induced
  pre-\SUMBLTY{} structure satisfies \cref{def:summ-struct}.
\end{definition}

The next two lemmas unfold the two conditions stated in
\cref{def:summ-struct} in the present setting.

\begin{lemma}
  \label{lemma:repr-summable-family}
  Let \((\Dbimongen,(\Win i)_{i\in\Mdeg},\Wdiag)\) be a \Sumobj{} in
  \(\cL\).
  Then a family \((f_i\in\cL(X,Y))_{i\in\Mdeg}\) is summable iff there
  is a \(h\in\cL(\Tens X\Dbimongen,Y)\) such that %
  \(f_i=h\Compl\Tensp\Dbimongen{\Win i}\Compl\Inv\Rightu\) for all
  \(i\in\Mdeg\), and then we have
  \(\sum_{i\in\Mdeg}f_i=f\Compl\Tensp\Dbimongen{\Wdiag}\Compl\Inv\Rightu\).
  We set \(\Stuplet{f_i}_{i\in I}=h\), so that %
  \(\Stuple{f_i}_{i\in I}=\A\Curlin{\Stuplet{f_i}_{i\in I}}\).
\end{lemma}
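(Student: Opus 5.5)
The plan is to unfold \ref{ax:S-witness} for the pre-\SUMBLTY{} structure of \cref{lemma:pre-summability-object} and transport it along the currying bijection \(\Curlin:\cL(\Tens X\Dbimongen,Y)\to\cL(X,\Limpl\Dbimongen Y)\). Since \(\Dbimongen\) is a \Sumobj{}, \cref{def:sumobj} gives that \((f_i)_{i\in\Mdeg}\) is summable iff there is a \(k\in\cL(X,\Sfun Y)=\cL(X,\Limpl\Dbimongen Y)\) with \(\Sproj i\Compl k=f_i\) for all \(i\), in which case \(\sum_{i}f_i=\Ssum\Compl k\). Every such \(k\) is uniquely of the form \(\Curlin h\) with \(h\in\cL(\Tens X\Dbimongen,Y)\), so the witness in the statement is \(h=\Curlininv k\). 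It remains to compute \(\Sproj i\Compl\Curlin h\) and \(\Ssum\Compl\Curlin h\) in terms of \(h\).

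For this I would prove one computational identity: for any \(u\in\cL(\Sone,\Dbimongen)\) and \(h\in\cL(\Tens X\Dbimongen,Y)\),
\[
\Invp{\Sfungmonz_Y}\Compl(\Limpl uY)\Compl\Curlin h=h\Compl\Tensp X u\Compl\Inv{\Rightu_X}.
\]
First, by naturality of \(\Curlin\) in its first argument, \((\Limpl uY)\Compl\Curlin h=\Curlin{(h\Compl\Tensp Xu)}\in\cL(X,\Limpl\Sone Y)\). Second, \(\Sfungmonz_Y\in\cL(Y,\Limpl\Sone Y)\) is, by its definition from monoidal closedness, \(\Curlin{\Rightu_Y}\). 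Hence for any \(g\in\cL(\Tens X\Sone,Y)\) we have \(\Sfungmonz\Compl(g\Compl\Inv\Rightu_X)=\Curlin{(g\Compl\Inv\Rightu_X\Compl\Rightu_{X})}=\Curlin g\), using naturality of \(\Curlin\) and of \(\Rightu\). Applying \(\Invp\Sfungmonz\) to both sides gives the identity.

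Instantiating \(u=\Win i\) gives \(\Sproj i\Compl\Curlin h=h\Compl\Tensp X{\Win i}\Compl\Inv\Rightu\), and instantiating \(u=\Wdiag\) gives \(\Ssum\Compl\Curlin h=h\Compl\Tensp X\Wdiag\Compl\Inv\Rightu\). The statement's \(\Tensp\Dbimongen{\Win i}\) should read \(\Tensp X{\Win i}\), and its \(f\) in the sum formula should read \(h\). Combined with the first paragraph, these two instances yield both directions of the equivalence and the formula for the sum.

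Uniqueness of \(h\) follows from \ref{ax:S-monic} together with the bijectivity of \(\Curlin\). It can also be seen directly: the family \((\Tens X{\Win i})_{i\in\Mdeg}\) is jointly epic by \cref{lemma:tens-win-epic}, so \(h\) is determined by its precompositions with these morphisms. The notation \(\Stuple{f_i}=\Curlin{\Stuplet{f_i}}\) is then just the definition \(k=\Curlin h\).

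The only real obstacle is the bookkeeping around \(\Sfungmonz\): I have to pin down that it equals \(\Curlin\Rightu\), with the convention that the left unitor is the one for \(\Sone\ITens\_\), and check the unitor naturality in the second step. Everything else is formal.
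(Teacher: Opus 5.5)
Your proposal is correct and follows the route the paper intends: the paper gives no separate proof and presents the lemma as the unfolding of \ref{ax:S-witness} for \(\Sfun=\Limpl{\Dbimongen}{\_}\) along currying, and your identity \(\Invp{\Sfungmonz_Y}\Compl(\Limpl{u}{Y})\Compl\Curlin{h}=h\Compl\Tensp{X}{u}\Compl\Inv{\Rightu_X}\), instantiated at \(u=\Win i\) and \(u=\Wdiag\), is exactly the computation that makes this unfolding explicit. Your reading of the typos is also right: \(\Tensp{X}{\Win i}\) should replace \(\Tensp{\Dbimongen}{\Win i}\), and \(h\) should replace \(f\) in the sum formula.
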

It makes sense to introduce the notation \(\Stuplet{f_i}_{i\in I}\)
because the \(\Win i\)'s are completely jointly epic.

\begin{lemma}
  \label{lemma:summability-param-family}
  Let \((\Dbimongen,(\Win i)_{i\in\Mdeg},\Wdiag)\) is a \Sumobj{} in
  \(\cL\) and \(I\) be an at most couuntable set which is finite if
  \(\Mdeg\) is finite.
  Then a family \((f_i\in\cL(\Tens X\Dbimongen,Y))_{i\in I}\) is
  summable if the family
  \((f_i\Compl\Tensp X\Wdiag\Compl\Inv\rho_X)_{i\in I}\) is
  summable.
  And a family \((f_i\in\cL(\Dbimongen,Y))_{i\in I}\) is summable if
  the family \((f_i\Compl\Wdiag\in\cL(\Sone,Y))_{i\in I}\) is
  summable.
\end{lemma}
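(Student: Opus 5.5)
This is a direct unfolding of \ref{ax:S-sum} for the \Elementary{} structure $\Sfun=\Limpl\Dbimongen\_$ of \cref{lemma:pre-summability-object}, with $\Ssum=\Invp{\Sfungmonz}\Compl\Sfung(\Wdiag)$. The plan is to transport the summability question in $\cL(\Tens X\Dbimongen,Y)$ to $\cL(X,\Sfun Y)$ along the currying bijection $\Curlin$, apply \ref{ax:S-sum} there, and transport back.

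Let $(f_i)_{i\in I}$ be a family of $\cL(\Tens X\Dbimongen,Y)$ and set $h_i=\Curlin f_i\in\cL(X,\Sfun Y)$. The first step is a small computation using naturality of $\Curlin$ and the definition of $\Sfungmonz$ as the curried right unitor. It should give
\[
\Ssum\Compl h_i=f_i\Compl\Tensp X\Wdiag\Compl\Inv\rho_X .
\]
Concretely, $\Sfung(\Wdiag)\Compl\Curlin f_i=\Curlin(f_i\Compl\Tensp X\Wdiag)\in\cL(X,\Limpl\Sone Y)$, and postcomposing with $\Invp{\Sfungmonz}$ amounts to precomposing with $\Inv\rho_X$ and uncurrying. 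So the hypothesis says exactly that $(\Ssum\Compl h_i)_{i\in I}$ is summable, and \ref{ax:S-sum} (which holds since we have a \Sumobj{}, \cref{def:sumobj}) gives that $(h_i)_{i\in I}$ is summable in $\cL(X,\Limpl\Dbimongen Y)$.

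The remaining step, which I expect to be the main obstacle, is to pass from summability of the $\Curlin f_i$ back to summability of the $f_i$. For this I will use that $\Curlininv(g)=\Evlin\Compl\Tensp g\Dbimongen$ and that composition and $\ITens$ are additive. Composition is additive by \cref{thm:ss-induces-enrichement}. For the tensor I would invoke \ref{ax:stens} through \cref{prop:ss-tensor}, since $\cL$ is assumed to be a \Pmsmc{}, i.e.\ a symmetric monoidal \Pmenriched{} category. Then
\[
\Curlininv\Big(\sum_i h_i\Big)\sumsub\sum_i\Curlininv(h_i)=\sum_i f_i ,
\]
so the family $(f_i)_{i\in I}$ is summable. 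If the convention on \Pmsmc{} does not include bilinearity of $\ITens$, I would instead prove it directly: apply \ref{ax:stens} to $-\ITens\Dbimongen$, using $0\ITens\Dbimongen=0$ and the summability of $(\Sproj j\ITens\Dbimongen)_j$, and reduce via the reindexing argument of \cref{prop:ss-tensor}. Since $I$ is at most countable (finite when $\Mdeg$ is finite), the index-set conventions of \cref{conv:cardinality-indexing-sets} apply throughout.

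The second statement is the special case $X=\Sone$, composed with the isos $\Leftu$ and $\rho_\Sone=\Leftu_\Sone$. A family $(f_i\in\cL(\Dbimongen,Y))_{i\in I}$ corresponds to $(f_i\Compl\Leftu_\Dbimongen)_{i\in I}$ in $\cL(\Tens\Sone\Dbimongen,Y)$, and by naturality of $\Leftu$ we have $f_i\Compl\Leftu\Compl\Tensp\Sone\Wdiag\Compl\Inv\rho_\Sone=f_i\Compl\Wdiag$. Precomposition by an iso preserves and reflects summability, by bilinearity of composition in both directions, so the first statement gives the second.
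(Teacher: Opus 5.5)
Your argument is correct and is the unfolding of \ref{ax:S-sum} along the currying bijection that the paper intends; the paper states this lemma without proof, as an unfolding of \cref{def:summ-struct}. The one step to tighten is where bilinearity of \(\ITens\) comes from: it is not among the standing hypotheses on \(\cL\), so for the step from summability of \((h_i)_{i\in I}\) to that of \((h_i\ITens\Dbimongen)_{i\in I}\) you should cite \cref{th:rerpr-summable-tens-with-limpl}, which establishes \ref{ax:stens} for representably summable categories and whose proof does not use the present lemma.
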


\begin{lemma}
  \label{lemma:win-sumable}
  Let \((\Dbimongen,(\Win i)_{i\in\Mdeg},\Wdiag)\) be a \Sumobj{} in
  \(\cL\).
  The family of morphisms 
  \((\Win i\in\cL(\Sone,\Dbimongen))_{i\in\Mdeg}\) is summable
  and we have %
  \(\Stuplet{\Win
    i}_{i\in\Mdeg}=\Leftu\in\cL(\Tens\Sone\Dbimongen,\Dbimongen)\) and %
  \(\sum_{i\in\Mdeg}\Win i=\Wdiag\in\cL(\Sone,\Dbimongen)\).
\end{lemma}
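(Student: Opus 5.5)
The plan is to apply \cref{lemma:repr-summable-family} with \(X=\Sone\) and \(Y=\Dbimongen\), taking the candidate witness \(h=\Leftu_\Dbimongen\in\cL(\Tens\Sone\Dbimongen,\Dbimongen)\). That lemma says the family \((\Win i)_{i\in\Mdeg}\) is summable as soon as, for every \(i\in\Mdeg\), we have \(\Win i=h\Compl\Tensp\Sone{\Win i}\Compl\Inv\Rightu_\Sone\). (I read the lemma's \(\Tensp\Dbimongen{\Win i}\) as \(\Tensp X{\Win i}\) with \(X=\Sone\), which is what typechecks.) The same lemma then gives the sum as \(\sum_{i\in\Mdeg}\Win i=h\Compl\Tensp\Sone\Wdiag\Compl\Inv\Rightu_\Sone\), and it identifies \(\Stuplet{\Win i}_{i\in\Mdeg}=h=\Leftu\) because the \(\Win i\) are completely jointly epic (\cref{lemma:compl-joint-epic-smcc}), so the witness is unique.

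The only real content is the coherence identity
\begin{align*}
  \Leftu_\Dbimongen\Compl\Tensp\Sone g\Compl\Inv\Rightu_\Sone=g
  \qquad\text{for all } g\in\cL(\Sone,\Dbimongen).
\end{align*}
To prove it, I use naturality of \(\Leftu\), which gives \(\Leftu_\Dbimongen\Compl\Tensp\Sone g=g\Compl\Leftu_\Sone\). I then use the standard coherence fact \(\Leftu_\Sone=\Rightu_\Sone\), already invoked in \cref{lemma:sone-tens-comp}. Together these yield \(g\Compl\Rightu_\Sone\Compl\Inv\Rightu_\Sone=g\). Alternatively, the identity is exactly the instance of \cref{lemma:sone-tens-comp} with \(f=\Id_\Sone\), read with the unitors inverted.

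Applying the identity with \(g=\Win i\) shows that \(\Leftu\) witnesses the summability of \((\Win i)_{i\in\Mdeg}\). Applying it with \(g=\Wdiag\) shows that the sum equals \(\Wdiag\).

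There is no substantial obstacle here. The only care needed is in unwinding the unitor conventions of \cref{lemma:repr-summable-family}: I must check which side the tensor is on, and that the uncurried witness, rather than its curried form \(\Stuple{\Win i}\), is the one claimed to equal \(\Leftu\).
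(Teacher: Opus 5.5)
Your proposal is correct and is essentially the paper's own argument. The paper dismisses the lemma as an obvious consequence of \cref{lemma:repr-summable-family}, and you make that application explicit: you take \(\Leftu\) as the uncurried witness and check \(\Leftu\Compl\Tensp\Sone g\Compl\Inv\Rightu_\Sone=g\) (via naturality of \(\Leftu\) and \(\Leftu_\Sone=\Rightu_\Sone\), i.e.\ \cref{lemma:sone-tens-comp} with \(f=\Id_\Sone\)) for \(g=\Win i\) and \(g=\Wdiag\). You are also right that the \(\Tensp\Dbimongen{\Win i}\) in that lemma should read \(\Tensp X{\Win i}\).
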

\begin{proof}
  Obvious consequence of \cref{lemma:repr-summable-family}.
\end{proof}

\begin{proposition}
  \label{prop:sumobj-uniqueness}
  Let \((\Dbimongen,(\Win i)_{i\in\Mdeg},\Wdiag)\) and
  \((\Dbimongen',(\Win i')_{i\in\Mdeg},\Wdiag')\) be two \Sumobj{}s in a
  \Pmcat{} \(\cL\).
  Then there is a unique \(\phi\in\cL(\Dbimongen,\Dbimongen')\) such that
  \(\phi\Compl\Win i=\Win i'\) for all \(i\in\Mdeg\).
  Moreover \(\phi\) is an isomorphism and
  \(\phi\Compl\Wdiag=\Wdiag'\).
\end{proposition}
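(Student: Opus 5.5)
The plan is to build $\phi$ from the universal property of summability that $\Dbimongen$ carries, applied to the family $(\Win i')_{i\in\Mdeg}$. Uniqueness will follow from joint epicity, and the iso property from the symmetric argument.

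First, by \cref{lemma:win-sumable} applied to the \Sumobj{} $\Dbimongen'$, the family $(\Win i'\in\cL(\Sone,\Dbimongen'))_{i\in\Mdeg}$ is summable, with sum $\Wdiag'$. Next apply \cref{lemma:repr-summable-family} to the \Sumobj{} $\Dbimongen$, with $X=\Sone$ and $Y=\Dbimongen'$. This gives $h\in\cL(\Tens\Sone\Dbimongen,\Dbimongen')$ such that $h\Compl\Tensp\Sone{\Win i}\Compl\Inv\Rightu_\Sone=\Win i'$ for all $i$. Set $\phi=h\Compl\Inv{\Leftu_\Dbimongen}$. Since $\Rightu_\Sone=\Leftu_\Sone$ and $\Leftu$ is natural, we have $\Tensp\Sone{\Win i}\Compl\Inv{\Leftu_\Sone}=\Inv{\Leftu_\Dbimongen}\Compl\Win i$, and therefore $\phi\Compl\Win i=\Win i'$. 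Any other $\phi'$ with this property agrees with $\phi$ on every $\Win i$, so $\phi'=\phi$ because the $\Win i$ are jointly epic (axiom \ref{ax:depi}).

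Exchanging the roles of the two objects gives $\psi\in\cL(\Dbimongen',\Dbimongen)$ with $\psi\Compl\Win i'=\Win i$. Then $\psi\Compl\phi\Compl\Win i=\Win i=\Id\Compl\Win i$ for all $i$, so $\psi\Compl\phi=\Id$ by joint epicity of the $\Win i$. Likewise $\phi\Compl\psi=\Id$ by joint epicity of the $\Win i'$. Hence $\phi$ is an isomorphism.

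For the diagonal, \cref{lemma:win-sumable} gives $\Wdiag=\sum_{i\in\Mdeg}\Win i$. Since the index set is $\Mdeg$, \cref{lemma:summability-compl} applies and yields $\phi\Compl\Wdiag=\sum_{i\in\Mdeg}\phi\Compl\Win i=\sum_{i\in\Mdeg}\Win i'=\Wdiag'$.

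The main point needing care is the unitor bookkeeping in the second paragraph, namely identifying $\Tensp\Sone{\Win i}\Compl\Inv\Rightu_\Sone$ with $\Inv{\Leftu_\Dbimongen}\Compl\Win i$. This is a direct consequence of naturality of $\Leftu$ together with $\Leftu_\Sone=\Rightu_\Sone$, as in \cref{lemma:sone-tens-comp}. Everything else is formal.
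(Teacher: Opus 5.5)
Your proof is correct and matches the paper's argument step for step. You take $\phi=\Stuplet{\Win i'}_{i\in\Mdeg}\Compl\Inv{\Leftu_{\Dbimongen}}$, get uniqueness and invertibility from joint epicity of the $\Win i$ and $\Win i'$, and get $\phi\Compl\Wdiag=\Wdiag'$ from \cref{lemma:summability-compl}. Your explicit unitor check, using naturality of $\Leftu$ and $\Leftu_\Sone=\Rightu_\Sone$, spells out what the paper compresses into one line.
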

\begin{proof}
  We know that the \(\Win i'\) are summable in \(\cL(\Sone,\Dbimongen')\)
  and we take
  \(\phi=\Stuplet{\Win i'}_{i\in\Mdeg}\Compl\Inv{\Leftu_{\Dbimongen}}\).
  We have
  \(\phi\Compl\Win i=\Stuplet{\Win
    i'}_{i\in\Mdeg}\Compl\Tensp\Sone{\Win
    i}\Compl\Inv{\Leftu_\Sone}=\Win i'\) since
  \(\Leftu_\Sone=\Rightu_\Sone\).
  The uniqueness of \(\phi\) results from the joint epicity of the
  \(\Win i\)s.
  One defines similarly \(\phi'\in\cL(\Dbimongen',\Dbimongen)\) such that
  \(\phi'\Compl\Win i'=\Win i\) for each \(i\in\Mdeg\) and we deduce
  (using again the joint epicity of the \(\Win i\)s) that \(\phi'\) is
  the inverse of \(\phi\).
  Last, by \cref{lemma:summability-compl}, the morphisms
  \((\phi\Compl\Win i)_{i\in\Mdeg}\) are summable and have
  \(\phi\Compl\Wdiag\) as sum.
  Since \(\phi\Compl\Win i=\Win i'\) and since the \(\Win i'\) are
  summable with \(\Wdiag'\) as sum, we have
  \(\phi\Compl\Wdiag=\Wdiag'\).
\end{proof}

\begin{remark}
  \cref{prop:sumobj-uniqueness} means that, for an \Pmcat{} \(\cL\),
  the fact of possessing a \Sumobj{} is a \emph{property} and not a
  structure: we can speak of \emph{the} \Sumobj{} of \(\cL\), when it
  exists.
  This is of course completely similar to
  \cref{rk:unique-summability-structure} and motivates
  \cref{def:elementarily-summable-cat}.
\end{remark}

\begin{definition}
  \label{def:elementarily-summable-cat}
  A \emph{representably \SUMCAT} $\cL$ is a closed \Pmsmc{} which
  possesses a \Sumobj.
  We use \((\Dbimon,(\Win i)_{i\in\Mdeg},\Wdiag)\) for this
  summability object (which is uniquely defined up to unique iso as we
  have seen).
\end{definition}

\begin{remark}
  If \(\cL\) is additive (in the present context it suffices to assume
  that all the \Dpmonoid{} structures of the homsets are total), then
  \(\cL\) is enriched in commutative monoids and it is well-known that
  finite cartesian products are also coproducts.
  It is easy to prove, in that case, that
  \(\Dbimon=\Bplus_{i\in\Mdeg}\Sone=\Bwith_{i\in\Mdeg}\Sone\) and
  hence \(\Sfun X=\Limplp\Dbimon X\) is canonically isomorphic to
  \(\Bwith_{i\in\Mdeg}X\) and the \SUMBLTY{} structure is trivial:
  all families of morphisms \((f_i)_{i\in\Mdeg}\in\cL(X,Y)\) are
  summable with witness
  \(\Tuple{f_i}_{i\in\Mdeg}\in\cL(X,\Bwith_{i\in\Mdeg}Y)\) and sum
  \(\sum_{i\in\Mdeg}f_i\) (the addition provided by the enrichment).
\end{remark}

\begin{theorem}
  \label{th:rerpr-summable-tens-with-limpl}
  If \(\cL\) is \Elementarily{} \SUMBLE{} then~\ref{ax:swith}
  and~\ref{ax:stens} hold in \(\cL\).
  Moreover \ref{ax:slimpl} holds (remember that, as an SMC, \(\cL\) is
  assumed to be closed).
\end{theorem}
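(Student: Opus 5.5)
The plan is to use the fact that, in the representable case, \(\Sfun=\Limpl\Dbimon\_\), with \(\Sproj i=\Invp{\Sfungmonz}\Compl(\Limpl{\Win i}{\_})\) and \(\Ssum=\Invp{\Sfungmonz}\Compl(\Limpl{\Wdiag}\_)\). Each of the three properties has an equivalent ``summability of a single \(\Mdeg\)-indexed family'' formulation (\cref{prop:ss-tensor}, \cref{prop:ss-closure}, \cref{prop:ss-cartesian}). For each, I will exhibit an explicit witness built from the closed structure, and then conclude by \ref{ax:S-witness}. Throughout I will use \cref{lemma:repr-summable-family}: a family \((f_i\in\cL(X,Y))_{i\in\Mdeg}\) is summable as soon as there is \(h\in\cL(\Tens X\Dbimon,Y)\) with \(f_i=h\Compl\Tensp X{\Win i}\Compl\Inv\Rightu\).

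\emph{\ref{ax:slimpl}.} I will show that the morphism \(\Sfunstri_{X,Y}\in\cL(\Limpl\Dbimon{\Limplp XY},\Limpl X{\Limplp\Dbimon Y})\) is an isomorphism. It is, up to the obvious bookkeeping, the canonical iso \(\Limpl\Dbimon{\Limplp XY}\Isom\Limpl{\Tensp\Dbimon X}Y\Isom\Limpl X{\Limplp\Dbimon Y}\) given by currying and symmetry. To check this, I compose both morphisms with the jointly monic family \(\Limpl X{\Sproj i}\) (\cref{prop:ss-closure}, joint monicity via right adjoints). \cref{prop:ss-pointwise-characterization} gives \(\Sproj i\) on one side, and a direct unfolding of the canonical iso gives the same on the other. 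Alternatively, I can directly take \(h\) to be the transposed evaluation \((\Limpl X{\Limplp\Dbimon Y})\ITens\Dbimon\to\Limpl XY\), which witnesses summability of \((\Limpl X{\Sproj i})_i\); this is condition 2 of \cref{prop:ss-closure}, and I will probably present it this way.

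\emph{\ref{ax:stens}.} For \((\Sproj i\ITens X_2)_{i}\), I take \(h\) as the composite \(\Tens{\Tensp{\Limplp\Dbimon{X_1}}{X_2}}\Dbimon\to\Tens{\Tensp{\Limplp\Dbimon{X_1}}\Dbimon}{X_2}\xrightarrow{\Evlin\ITens X_2}\Tens{X_1}{X_2}\), using symmetry and associativity. Precomposing with \(\Win i\) gives \(\Sproj i\ITens X_2\), and precomposing with \(\Wdiag\) gives \(\Ssum\ITens X_2\), so the sum is right. The identity \(0\ITens X_2=0\) needs a separate argument. The zero \(0\in\cL(\Limpl\Dbimon{X_1},X_1)\) is the empty sum. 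Its witness is \(\Stuple{0}_{i}\), and by \cref{lemma:summability-param-family} (second clause, with the empty family) morphisms out of a tensor with \(\Dbimon\) that vanish after \(\Wdiag\) are handled. More simply, I will write \(0=\Ssum\Compl\Stuple{0}\) with \(\Stuple 0=\Sin0\Compl 0\) and use \(0\Compl f=0\). Concretely, \(0\ITens X_2=(0\ITens X_2)\Compl\ldots\); here I expect to use that the zero morphism factors through a zero-like object, or through the curried form \(\Curlin(0)=0\). I will argue via currying: \(0\ITens X_2\) corresponds under \(\Curlin\) to \(\Limplp{X_2}{\ldots}\Compl 0=0\). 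This zero-tensor step is the one I expect to be fiddliest, and I will try the currying route first.

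\emph{\ref{ax:swith}.} I take the witness for \((\Bwith_l\Sproj i)_i\) to be the currying-based map \(\Tens{\Bwith_l(\Limpl\Dbimon{X_l})}\Dbimon\to\Bwith_lX_l\) given by \(\Tuple{\Evlin\Compl(\Proj l\ITens\Dbimon)}_l\). Precomposing with \(\Win i\) gives \(\Bwith_l\Sproj i\). This is condition 2 of \cref{prop:ss-cartesian}, and the proof is then complete. The only real content in the whole argument is that \(\Limpl\Dbimon\_\) commutes with \(\ITens\), \(\multimap\) and \(\IWith\) in the expected way, which is standard closed-category bookkeeping.
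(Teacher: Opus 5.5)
Your proposal is correct, but it is organized somewhat differently from the paper's sketch. You work uniformly through condition~2 of each characterization (\cref{prop:ss-cartesian}, \cref{prop:ss-tensor}, \cref{prop:ss-closure}): in each case you exhibit an explicit morphism out of \(\Tens{(\cdot)}{\Dbimon}\) and invoke \cref{lemma:repr-summable-family}.

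The paper proceeds as follows.
\begin{itemize}
\item For \ref{ax:swith} it uses condition~3: \(\Sfun=\Limpl{\Dbimon}{\_}\) is right adjoint to \(\Tens{\_}{\Dbimon}\), hence preserves products. So \(\Tuple{\Sfun\Proj l}_{l\in L}\) is an iso with no computation.
\item For \ref{ax:slimpl} it also uses condition~3. It writes \(\Sfunstri_{X,Y}\) as the double currying of a composite of a symmetry and two evaluations, i.e.\ as the canonical iso \(\Limpl{\Dbimon}{\Limplp{X}{Y}}\Isom\Limpl{X}{\Limplp{\Dbimon}{Y}}\). This is your first option. Your preferred option, the transposed evaluation as a witness for \((\Limpl{X}{\Sproj i})_{i\in\Mdeg}\), is an equally valid route via condition~2.
\item For \ref{ax:stens} the two arguments coincide. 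The paper shows that for any summable \((f_i)_{i\in\Mdeg}\) and any \(g\), the morphism \((\Tens{\Stuplet{f_i}_{i\in\Mdeg}}{g})\Compl\Tensp{X}{\Sym_{U,\Dbimon}}\) witnesses \((\Tens{f_i}{g})_{i\in\Mdeg}\). Your witness is the instance \(f_i=\Sproj i\) (so \(\Stuplet{\Sproj i}_{i\in\Mdeg}=\Evlin\)) and \(g=\Id_{X_2}\).
\end{itemize}
The paper's route is shorter and more conceptual for products and function spaces. Yours is uniform and needs only \cref{lemma:repr-summable-family}. It also addresses the clause \(0\ITens X_2=0\) of \cref{prop:ss-tensor}, which the paper's sketch passes over.

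For that zero clause, drop the ``zero-like object'' idea, since no such object is available. Instead, state the currying argument precisely.
\begin{itemize}
\item By naturality of \(\Curlin\) in its first argument, \(\Curlin(0\ITens X_2)=\Curlin(\Id_{\Tens{Y_1}{X_2}})\Compl 0=0\).
\item Likewise \(\Curlin(0)=\Curlin\big(0\Compl\Tensp{0}{X_2}\big)=\Curlin(0)\Compl 0=0\), using that zeros are zero morphisms.
\end{itemize}
Injectivity of \(\Curlin\) then gives \(0\ITens X_2=0\).
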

\begin{proof}[Proof sketch]
  The first property results from the fact that \(\Sfun\) is the right
  adjoint to the functor \(\Tens\_\Dbimon\) and as such preserves all
  existing limits.
  Next let \((f_i\in\cL(X,Y))_{i\in\Mdeg}\) be summable so that %
  \(\Stuplet{f_i}_{i\in\Mdeg}\in\cL(\Tens{X}{\Dbimon},Y)\), and let %
  \(g\in\cL(U,V)\).
  We have %
  \((\Tens{\Stuplet{f_i}_{i\in\Mdeg}}{g})\Compl\Tensp
  X{\Sym_{U,\Dbimon}} \in\cL(X\ITens U\ITens\Dbimon,\Tens YV)\).
  It is easily checked that %
  \((\Tens{\Stuplet{f_i}_{i\in\Mdeg}}{g})\Compl\Tensp X{\Sym_{U,\Dbimon}}
  =\Stuplet{\Tens{f_i}g}_{i\in\Mdeg}\).

  If \(X\) and \(X\) are objects of \(\cL\) then the morphism
  \(\Sfunstri_{X,Y}\in\cL(\Sfun{\Limplp XY},\Limpl X{\Sfun X})
  =\cL(\Limpl{\Dbimon}{\Limplp XY},\Limpl X{\Limplp\Dbimon Y})\) is
  \(\Curlin{(\Curlin\phi)}\) where \(\phi\) is the following
  composition of morphisms in \(\cL\)
  \begin{center}
    \begin{tikzcd}
      \Tens{(\Limpl{\Dbimon}{\Limplp XY})}{\Tens X\Dbimon}
      \ar[r,"\Tens\Id\Sym"]
      &[-0.5em]
      \Tens{(\Limpl{\Dbimon}{\Limplp XY})}{\Tens\Dbimon X}
      \ar[r,"\Tens\Evlin\Id"]
      &
      \Tens{\Limplp XY}{X}
      \ar[r,"\Evlin"]
      &[-1em]
      Y
    \end{tikzcd}
  \end{center}
  so that \(\Sfunstri_{X,Y}\) is an iso.
\end{proof}

Even if \(\Dbimon\) is not necessarily a \(\Mdeg\)-fold product of
\(\Sone\), it nevertheless has a projection \(\Dbimon\to\Sone\) for
each element of \(\Mdeg\), as we show now.
\begin{lemma}
  \label{lemma:sumobj-proj}
  Assume that \((\Dbimon,(\Win i)_{i\in D},\Wdiag)\) is a \Sumobj{} in
  \(\cL\).
  There is a family of morphisms
  \((\Proj i\in\cL(\Dbimon,\Sone))_{i\in\Mdeg}\) (called
  \emph{projections}) uniquely characterized by
  \(\Proj i\Compl\Win j=\Kronecker ij\Id_\Sone\).
\end{lemma}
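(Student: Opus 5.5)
Uniqueness is immediate from the joint epicity of the \(\Win j\)'s required by \ref{ax:depi}: if two families \(\Proj i,\Proj i'\in\cL(\Dbimon,\Sone)\) satisfy \(\Proj i\Compl\Win j=\Kronecker ij\Id_\Sone=\Proj i'\Compl\Win j\) for all \(j\in\Mdeg\), then \(\Proj i=\Proj i'\). So the whole content is existence, and we will obtain it from the characterization of summable \(\Mdeg\)-indexed families given by \cref{lemma:repr-summable-family}.

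Fix \(i\in\Mdeg\) and consider the family \((\Kronecker ij\Id_\Sone\in\cL(\Sone,\Sone))_{j\in\Mdeg}\). It has at most one nonzero term, \(\Id_\Sone\) at index \(i\). By \ref{ax:unary} the singleton family is summable, and by \cref{prop:zero-neutral} (equivalently \cref{prop:reindexing} applied to the injection \(\Eset i\Injection\Mdeg\)) the whole \(\Mdeg\)-indexed family is summable with sum \(\Id_\Sone\). This is exactly the family used to define \(\Sin i\) in \cref{sec:ssum-monad-structure}, and it is also the case \(X=\Sone\) of that construction.

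By \cref{lemma:repr-summable-family} there is therefore \(h=\Stuplet{\Kronecker ij\Id_\Sone}_{j\in\Mdeg}\in\cL(\Tens\Sone\Dbimon,\Sone)\) with \(h\Compl\Tensp\Sone{\Win j}\Compl\Inv\Rightu_\Sone=\Kronecker ij\Id_\Sone\) for all \(j\). We set \(\Proj i=h\Compl\Inv{\Leftu_\Dbimon}\in\cL(\Dbimon,\Sone)\). Naturality of \(\Leftu\) gives \(\Inv{\Leftu_\Dbimon}\Compl\Win j=\Tensp\Sone{\Win j}\Compl\Inv{\Leftu_\Sone}\), and since \(\Leftu_\Sone=\Rightu_\Sone\) we get \(\Proj i\Compl\Win j=\Kronecker ij\Id_\Sone\), as required.

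An alternative route is \(\Proj i=\Sproj{i,\Sone}\) transported along \(\Sfun\Sone=\Limpl\Dbimon\Sone\) and the unit \(\Dbimon\Isom\Limpl\Sone\Dbimon\)-style currying, but the witness argument above is more direct. The only real obstacle is bookkeeping with the unitors, namely identifying \(\Tens\Sone\Dbimon\) with \(\Dbimon\) compatibly with the convention in \cref{lemma:repr-summable-family}. This is settled by the coherence fact \(\Leftu_\Sone=\Rightu_\Sone\), exactly as in the proof of \cref{prop:sumobj-uniqueness}.
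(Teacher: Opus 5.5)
Your proposal is correct and follows the paper's proof: the paper likewise notes that the Kronecker family \((\Kronecker ij\Id_\Sone)_{j\in\Mdeg}\) is summable in the \Pmonoid{} \(\cL(\Sone,\Sone)\) and sets \(\Proj i=\Stuplet{\Kronecker ij\Id_\Sone}_{j\in\Mdeg}\Compl\Inv\Leftu\), with uniqueness coming from the joint epicity of the \(\Win j\)'s. You additionally spell out the summability argument and the unitor check using \(\Leftu_\Sone=\Rightu_\Sone\), which the paper leaves implicit.
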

\begin{proof}
  Let \(i\in\Mdeg\).
  The family of morphisms
  \((\Kronecker ij\Id_\Sone\in\cL(\Sone,\Sone))_{j\in\Mdeg}\) is
  summable because \(\cL(\Sone,\Sone)\) is an \Pmonoid.
  We set
  \(\Proj i=\Stuplet{\Kronecker
    ij\Id_\Sone}_{j\in\Mdeg}\Inv{\Leftu}\).
\end{proof}

\begin{remark}
  Very often, it turns out that the \Sumobj{} in \(\cL\) is the
  \(\Mdeg\)-fold cartesian product of \(\Sone\), that is
  \(\Dbimon=\Bwith_{i\in\Mdeg}\Sone\).
  This is the case of all the examples we consider in this paper.
  In that case \(\Wdiag=\Tuple{\Id_\Sone}_{j\in\Mdeg}\) is the
  ordinary diagonal morphism associated with the cartesian product and
  \(\Win i=\Tuple{\Kronecker ij\Id_\Sone}_{j\in\Mdeg}\).
  When the \Sumobj{} is such a product, it is easy to prove that the
  \(\Proj i\)s are the ordinary projection of the cartesian product.

  In our previous presentations of coherent differentiation, we
  restricted our attention to such product \Sumobj{}s.
  However, one of the authors found in~\cite{TassonWalch26} an example
  of a representably summable category where the \Sumobj{} cannot be
  described as the \(\Mdeg\)-fold cartesian product of \(\Sone\): the
  category of Köthe sequence space~\cite{Ehrhard02}.
  Such examples motivate our more general definition.
  Notice however that this generalization has no effect on the theory
  we develop around \(\Dbimon\): even in the case of product \Sumobj{}
  case, the only useful structure and properties of \(\Dbimon\) that
  we used in the earlier papers on coherent differentiation were those
  of \cref{def:pre-sumobj} and \cref{def:sumobj}.
\end{remark}

Notice that the family
\((\Proj i\in\cL(\Dbimon,\Sone))_{i\in\Mdeg}\) has no reason to be
summable.
However we have the following property.
\begin{lemma}
  \label{lemma:win-proj-summable}
  If \(\cL\) is \Elementarily{} \SUMBLE{} then the family
  \((\Win i\Compl\Proj i\in\cL(\Dbimon,\Dbimon))_{i\in\Mdeg}\) is
  summable, with \(\sum_{i\in\Mdeg}\Win i\Compl\Proj i=\Id_{\Dbimon}\).
\end{lemma}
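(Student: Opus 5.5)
We use the characterisation of summable families given by \cref{lemma:repr-summable-family}: a family \((f_i\in\cL(X,Y))_{i\in\Mdeg}\) is summable iff there is \(h\in\cL(\Tens X\Dbimon,Y)\) with \(f_i=h\Compl\Tensp X{\Win i}\Compl\Inv\Rightu\) for all \(i\), and then the sum is \(h\Compl\Tensp X\Wdiag\Compl\Inv\Rightu\). We apply it with \(X=Y=\Dbimon\) and \(f_i=\Win i\Compl\Proj i\). So we need a witness \(h\in\cL(\Tens\Dbimon\Dbimon,\Dbimon)\) satisfying
\begin{align*}
  h\Compl\Tensp\Dbimon{\Win j}\Compl\Inv\Rightu_\Dbimon=\Win j\Compl\Proj j
  \quad\text{for all }j\in\Mdeg.
\end{align*}

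We build \(h\) by a second use of summability, in the other tensor factor. For each \(j\), the family \((\Kronecker ij\Win i)_{i\in\Mdeg}\) is summable in \(\cL(\Sone,\Dbimon)\). This follows from \ref{ax:unary} together with \cref{prop:zero-neutral}, or equally from \cref{lemma:win-sumable} and \cref{prop:subfamily-summable}. Its witness \(k_j\in\cL(\Tens\Sone\Dbimon,\Dbimon)\) is characterised by \(k_j\Compl\Tensp\Sone{\Win i}\Compl\Inv\Rightu=\Kronecker ij\Win i\).

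Rather than gluing the \(k_j\) together, we look for a single \(h\) by symmetry. Set
\begin{align*}
  h=g\Compl\Sym_{\Dbimon,\Dbimon},
\end{align*}
where \(g\in\cL(\Tens\Dbimon\Dbimon,\Dbimon)\) must satisfy \(g\Compl\Tensp{\Win i}{\Win j}=\Kronecker ij\Win i\), keeping implicit the canonical iso \(\Sone\Isom\Tens\Sone\Sone\).

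The main obstacle is constructing \(g\), since we only know summability of \(\Mdeg\)-indexed families in one variable at a time. Currying, such a \(g\) is the same as a morphism \(\Curlin g\in\cL(\Dbimon,\Limpl\Dbimon\Dbimon)=\cL(\Dbimon,\Sfun\Dbimon)\). A morphism \(\Dbimon\to\Sfun\Dbimon\) is the witness of a summable family \((\phi_j\in\cL(\Dbimon,\Dbimon))_{j\in\Mdeg}\), and we want \(\phi_j\Compl\Win i=\Kronecker ij\Win i\), that is \(\phi_j=\Win j\Compl\Proj j\) by joint epicity of the \(\Win i\). This is circular, so we instead construct directly
\begin{align*}
  \psi_i=\Stuple{\Kronecker ij\Win i}_{j\in\Mdeg}\in\cL(\Sone,\Sfun\Dbimon),
\end{align*}
which is the curried witness of \(k_i\).

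To glue the \(\psi_i\) into a map \(\Dbimon\to\Sfun\Dbimon\), we use \cref{lemma:summability-param-family}. The family \((\psi_i)_{i\in\Mdeg}\) is summable in \(\cL(\Sone,\Sfun\Dbimon)\) by \ref{ax:S-sum}, because \(\Ssum\Compl\psi_i=\Win i\) and the \(\Win i\) are summable by \cref{lemma:win-sumable}. Its representable witness \(\Stuplet{\psi_i}_{i\in\Mdeg}\in\cL(\Tens\Sone\Dbimon,\Sfun\Dbimon)\) then yields, after uncurrying, exactly the required \(g\). This gives \(h\) and hence the summability of \((\Win i\Compl\Proj i)_{i\in\Mdeg}\).

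For the sum, the formula of \cref{lemma:repr-summable-family} gives \(h\Compl\Tensp\Dbimon\Wdiag\Compl\Inv\Rightu\). Precomposing with each \(\Win j\) gives
\begin{align*}
  \sum_i\Win i\Compl\Proj i\Compl\Win j=\Win j
\end{align*}
by bilinearity (\cref{thm:ss-induces-enrichement}). By joint epicity of the \(\Win j\), the sum is \(\Id_\Dbimon\).
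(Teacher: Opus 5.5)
Your proof is correct, but it obtains summability by a different mechanism from the paper's.

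\textbf{The paper's route.} The paper applies \cref{lemma:summability-param-family} directly to \(f_i=\Win i\Compl\Proj i\in\cL(\Dbimon,\Dbimon)\). Since \(\Proj i\Compl\Wdiag=\Id_\Sone\), one has \(f_i\Compl\Wdiag=\Win i\), and the \(\Win i\) are summable by \cref{lemma:win-sumable}. The sum is then identified exactly as in your last paragraph.

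\textbf{Where the two arguments meet and where they split.} The curried form of \(f_i\) is precisely your \(\psi_i\), which is simply \(\Sin i\Compl\Win i\). So both arguments rest on \ref{ax:S-sum} applied to \((\psi_i)_{i\in\Mdeg}\) via \(\Ssum\Compl\psi_i=\Win i\). They differ in how they return to \(\cL(\Dbimon,\Dbimon)\).
\begin{itemize}
\item The paper transports summability from \((\psi_i)_i\) to \((f_i)_i\) along currying, for the same index \(i\). This is what \cref{lemma:summability-param-family} packages, and it requires currying to respect sums.
\item You instead take the witness of the \(i\)-indexed family \((\psi_i)_i\). Using the symmetry of \(\Kronecker ij\) and the joint epicity of the \(\Win i\), you read it as a witness of the \(j\)-indexed family \((\Win j\Compl\Proj j)_j\). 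This needs only \ref{ax:S-witness}, \ref{ax:S-sum}, bilinearity of composition and joint epicity. Note that your gluing step really uses \ref{ax:S-witness} in the form of \cref{lemma:repr-summable-family}, not \cref{lemma:summability-param-family}.
\end{itemize}

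\textbf{A simplification.} Your detour through \(g\), \(h\) and \(\Sym\) can be collapsed. Set \(k=\Stuplet{\psi_i}_{i\in\Mdeg}\Compl\Inv\Leftu\in\cL(\Dbimon,\Sfun\Dbimon)\); it satisfies \(k\Compl\Win i=\psi_i\). Hence \(\Sproj j\Compl k\Compl\Win i=\Kronecker ij\Win i=\Win j\Compl\Proj j\Compl\Win i\). By joint epicity, \(\Sproj j\Compl k=\Win j\Compl\Proj j\), so \(k\) is already \(\Stuple{\Win j\Compl\Proj j}_{j\in\Mdeg}\). This is exactly what dissolves the ``circularity'' you ran into. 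The symmetry \(\Sym\) is also superfluous, since \(\Kronecker ij\Win i\) is symmetric in \(i,j\).

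\textbf{What each route buys.} The paper's argument is shorter, given \cref{lemma:summability-param-family}. Yours is more self-contained. As a by-product, your uncurried \(g\), characterized by \(g\Compl\Tensp{\Win i}{\Win j}=\Kronecker ij\Win i\), is the multiplication \(\Dbimonl\) of \cref{th:dbimon-monoid-bimonoid}. You construct it without the assumption \(\Dbimon=\Bwith_{i\in\Mdeg}\Sone\) that the proof of that theorem invokes.
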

\begin{proof}
  Observe that \(\Win i\Compl\Proj i\Compl\Wdiag=\Win i\) and apply
  \cref{lemma:summability-param-family,{lemma:win-sumable}}.
  To prove the equation, it suffices to precompose with \(\Win j\) for
  each \(j\in\Mdeg\) and use \cref{lemma:summability-compl} and the
  joint epicity of these morphisms.
\end{proof}

\begin{lemma}
  \label{lemma:sum-fam-proj}
  If the family \((f_i\in\cL(X,Y))_{i\in\Mdeg}\) is summable then the
  family
  \((\Rightu\Compl\Tensp{f_i}{\Proj i}\in\cL(\Tens
  X\Dbimon,Y))_{i\in\Mdeg}\) is also summable.
\end{lemma}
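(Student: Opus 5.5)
Since summability in \(\cL(\Tens X\Dbimon,Y)\) is controlled by precomposition with \(\Tensp X\Wdiag\Compl\Inv\rho_X\) (\cref{lemma:summability-param-family}), the plan is to reduce the statement to summability in \(\cL(X,Y)\). By that lemma it suffices to show that the family
\[
  \big(\Rightu\Compl\Tensp{f_i}{\Proj i}\Compl\Tensp X\Wdiag\Compl\Inv\Rightu_X\big)_{i\in\Mdeg}
\]
is summable in \(\cL(X,Y)\). I will compute each member of this family explicitly and recognise it as one of the \(f_i\).

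By functoriality of \(\ITens\) we have \(\Tensp{f_i}{\Proj i}\Compl\Tensp X\Wdiag=\Tens{f_i}{(\Proj i\Compl\Wdiag)}\). The key point is therefore to compute \(\Proj i\Compl\Wdiag\in\cL(\Sone,\Sone)\). By \cref{lemma:win-sumable}, \(\Wdiag=\sum_{j\in\Mdeg}\Win j\). Since \(\cL\) is \Dpmenriched{} (\cref{thm:ss-induces-enrichement}), composition is additive, so
\[
  \Proj i\Compl\Wdiag=\sum_{j\in\Mdeg}\Proj i\Compl\Win j=\sum_{j\in\Mdeg}\Kronecker ij\Id_\Sone=\Id_\Sone,
\]
using \cref{lemma:sumobj-proj} for the middle equality and \cref{prop:zero-neutral} with \ref{ax:unary} for the last one.

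Hence the member of index \(i\) equals \(\Rightu_Y\Compl\Tensp{f_i}{\Sone}\Compl\Inv\Rightu_X\). By naturality of \(\Rightu\), this is just \(f_i\). The family \((f_i)_{i\in\Mdeg}\) is summable by hypothesis, so the reduced family is summable and \cref{lemma:summability-param-family} concludes the proof.

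The only delicate step is the computation of \(\Proj i\Compl\Wdiag\). It is there that we need the \Pmonoid{} enrichment to push the composition inside the sum. The rest is monoidal bookkeeping: the unitor naturality, and the fact that \cref{lemma:summability-param-family} is stated with \(\Inv\rho_X\) on the same side as in our expression.
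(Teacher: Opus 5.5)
Your proof is correct, but it is organised differently from the paper's.

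The paper starts from the witness \(h=\Stuplet{f_i}_{i\in\Mdeg}\) and factors \(\Tens{f_i}{\Proj i}\) as \(\Tensp h\Sone\Compl(X\ITens{\Win i}\ITens{\Proj i})\Compl\Tensp{\Inv\Rightu}{\Dbimon}\). It then pushes the summability of \((\Tens{\Win i}{\Proj i})_{i\in\Mdeg}\), which comes from \cref{lemma:win-proj-summable} and \cref{lemma:sone-tens-comp}, through tensoring with \(X\) (\cref{th:rerpr-summable-tens-with-limpl}) and through composition (\cref{lemma:summability-compl}).

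You instead apply \cref{lemma:summability-param-family} directly, and let precomposition with the diagonal collapse the family to \((f_i)_{i\in\Mdeg}\). This is the same device the paper uses to prove \cref{lemma:win-proj-summable}. The difference is that you apply it once to an arbitrary family, rather than to the universal family \((\Win i\Compl\Proj i)_{i\in\Mdeg}\).

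Your version is shorter: it needs no witness, no unitor bookkeeping and no explicit use of \ref{ax:stens}. Your derivation of \(\Proj i\Compl\Wdiag=\Id_\Sone\) from \(\Wdiag=\sum_{j\in\Mdeg}\Win j\) also covers summability objects that are not cartesian products, a step the paper leaves implicit.

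What the paper's factorisation buys is the value of the sum at no extra cost. Since \(\sum_{i\in\Mdeg}\Win i\Compl\Proj i=\Id_{\Dbimon}\), it gives \(\sum_{i\in\Mdeg}\Rightu\Compl\Tensp{f_i}{\Proj i}=\Stuplet{f_i}_{i\in\Mdeg}\), i.e.\ the witness is recovered. Your route only gives summability. Identifying the sum would need an extra step: precompose with \(\Tensp X{\Win j}\Compl\Inv\Rightu\) and use the joint epicity from \cref{lemma:tens-win-epic}.
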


We can think of the ``elements'' of \(\Dbimon\) as good families
  \((\alpha_i)_{i\in\Mdeg}\) of coefficients%
  \footnote{The meaning of ``good'' depends on \(\cL\); for instance,
    if \(\cL=\PCOH\) (see \cref{ex:pcoh-cart-smc-elem-sum})
    this property means
    \(\forall i\ \alpha_i\in\Intercc01\subseteq\Real\).}, %
  and \cref{lemma:sum-fam-proj} says that if the family
  \((f_i\in\cL(X,Y))_{i\in\Mdeg}\) is summable, then the family
  \((\alpha_if_i\in\cL(X,Y))_{i\in\Mdeg}\) is also summable.
  
\begin{proof}
  By \cref{lemma:sone-tens-comp,lemma:win-proj-summable}, the family %
  \((\Tens{\Win i}{\Proj
    i}\in\cL(\Tens\Sone\Dbimon,\Tens\Dbimon\Sone))_{i\in\Mdeg}\) is
  summable.
  We have a witness
  \(h= \Stuplet{f_i}_{i\in\Mdeg}\in\cL(\Tens X\Dbimon,Y)\) of the
  summability of the \(f_i\)'s.
  Now
  \begin{align*}
    \Tens{f_i}{\Proj i}
    &=\Tens{(h\Compl\Win i\Compl\Inv\Rightu)}{\Proj i}\\
    &=\Tensp h\Sone
      \Compl(X\ITens{\Win i}\ITens{\Proj i})\Compl\Tensp{\Inv\Rightu}{\Dbimon}
  \end{align*}
  typed as follows
  \begin{center}
    \begin{tikzcd}
      \Tens X\Dbimon
      \ar[r,"\Tens{\Inv\Rightu}{\Dbimon}"]
      &[1em]
      X\ITens\Sone\ITens\Dbimon
      \ar[r,"X\ITens{\Win i}\ITens{\Proj i}"]
      &[2em]
      X\ITens\Dbimon\ITens\Sone
      \ar[r,"\Tens h\Sone"]
      &
      \Tens Y\Sone
    \end{tikzcd}
  \end{center}
  and we conclude the proof using
  \cref{th:rerpr-summable-tens-with-limpl,lemma:summability-compl}.
\end{proof}

\begin{lemma}
  \label{lemma:comp-proj-summable}
  If the family \((f_i\in\cL(\Sone,X))_{i\in\Mdeg}\) is summable then
  the family \((f_i\Compl\Proj i)_{i\in\Mdeg}\in\cL(\Dbimon,X)\) is
  summable.
\end{lemma}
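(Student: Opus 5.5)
Since \((f_i\in\cL(\Sone,X))_{i\in\Mdeg}\) is summable, the plan is to reduce to \cref{lemma:sum-fam-proj} with \(X\) replaced by \(\Sone\). That lemma, applied to the summable family \((f_i)_{i\in\Mdeg}\) in \(\cL(\Sone,X)\), gives directly that the family
\[
  (\Rightu_X\Compl\Tensp{f_i}{\Proj i}\in\cL(\Tens\Sone\Dbimon,X))_{i\in\Mdeg}
\]
is summable. It then remains to identify each member of this family with \(f_i\Compl\Proj i\), up to precomposition with a fixed isomorphism.

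For this identification we use \cref{lemma:sone-tens-comp}, applied with \(\Proj i\in\cL(\Dbimon,\Sone)\) and \(f_i\in\cL(\Sone,X)\). After using the symmetry to put the factors in the right order, it yields
\[
  \Leftu_X\Compl\Tensp{\Proj i}{f_i}=f_i\Compl\Proj i\Compl\Rightu_\Dbimon .
\]
Rewritten for the order \(\Tens{f_i}{\Proj i}\) used in \cref{lemma:sum-fam-proj}, this becomes
\[
  \Rightu_X\Compl\Tensp{f_i}{\Proj i}=f_i\Compl\Proj i\Compl\Leftu_\Dbimon .
\]
This step is routine coherence: naturality of the unitors together with \(\Leftu_\Sone=\Rightu_\Sone\).

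Consequently, \(f_i\Compl\Proj i=\Rightu_X\Compl\Tensp{f_i}{\Proj i}\Compl\Inv{\Leftu_\Dbimon}\) for every \(i\in\Mdeg\). Summability is preserved by precomposition with a fixed morphism, by \cref{lemma:summability-compl} (or, more generally, by the enrichment of \cref{thm:ss-induces-enrichement}). Hence \((f_i\Compl\Proj i)_{i\in\Mdeg}\) is summable, which is the claim.

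The only point requiring care is the bookkeeping of unitors and the symmetry when matching \cref{lemma:sone-tens-comp} to the exact form produced by \cref{lemma:sum-fam-proj}. If the orientation of the unitors causes friction, there is an alternative route. Take the witness \(h=\Stuplet{f_i}_{i\in\Mdeg}\in\cL(\Tens\Sone\Dbimon,X)\) and use \cref{lemma:win-proj-summable}: the family \(\Win i\Compl\Proj i\) is summable with sum \(\Id_\Dbimon\). Composing with \(h\Compl\Inv{\Leftu}\) gives the family \(h\Compl\Inv{\Leftu}\Compl\Win i\Compl\Proj i=f_i\Compl\Proj i\), which is summable by \cref{lemma:summability-compl}. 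This second argument avoids the symmetry altogether, and I would likely present it as the main proof.
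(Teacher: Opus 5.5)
Your first argument is exactly the paper's proof: apply \cref{lemma:sum-fam-proj} with source object $\Sone$, then identify $\Rightu_X\Compl\Tensp{f_i}{\Proj i}$ with $f_i\Compl\Proj i\Compl\Leftu_{\Dbimon}$ using \cref{lemma:sone-tens-comp}. Your unitor bookkeeping is correct. The mirrored identity in fact follows directly from naturality of the unitors and $\Leftu_\Sone=\Rightu_\Sone$, so the symmetry is not needed.

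Your alternative is also correct, and it is the leaner of the two routes. You only need to check $f_i=(\Stuplet{f_j}_{j\in\Mdeg}\Compl\Inv{\Leftu_{\Dbimon}})\Compl\Win i$, which is the same naturality computation that underlies \cref{lemma:win-sumable}. After that, \cref{lemma:win-proj-summable} and \cref{lemma:summability-compl} suffice. This buys two things:
\begin{enumerate}
\item you never tensor summable families, so \cref{th:rerpr-summable-tens-with-limpl} is not used;
\item you get the sum explicitly, $\sum_{i\in\Mdeg}f_i\Compl\Proj i=\Stuplet{f_i}_{i\in\Mdeg}\Compl\Inv{\Leftu_{\Dbimon}}$.
\end{enumerate}
This second argument is essentially the paper's proof of \cref{lemma:sum-fam-proj} specialised to source $\Sone$, where the tensoring step disappears. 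The paper's route instead gains brevity by treating the statement as a corollary of that more general ``scalar multiplication'' lemma.
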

\begin{proof}
  Apply \cref{lemma:sum-fam-proj,lemma:sone-tens-comp}.
\end{proof}

\subsection{Examples}

We describe the representable \SUMBLTY{} structures of coherence
spaces, that we have already encountered in \cref{sec:diff-in-LL}, and
of probabilistic coherence spaces that we will first define.

  \subsubsection{Coherence spaces}
  \label{ex:coh-cart-smc-elem-sum}
  Remember that the cartesian product in \(\COH\) of a family
  \((E_l)_{l\in L}\) is %
  \(E=\Bwith_{l\in L}E_l\) defined by
  \(\Web E=\Union_{l\in L}\Eset l\times\Web{E_l}\) with %
  \((l,a)\Coh E(l',a')\) if \(l=l'\Implies a\Coh{E_l}a'\) and
  projections %
  \(\Proj l=\Set{((l,a),a)\St l\in L\text{ and }a\in\Web{E_l}}
  \in\COH(E,E_l)\) as easily checked.
  Given \((t_l\in\COH(F,E_l))_{l\in L}\), the unique %
  \(\Tuple{t_l}_{l\in L}\COH(F,E)\) such that %
  \((\Proj q\Compl\Tuple{t_l}_{l\in L}=t_q)_{q\in L}\) is %
  \(\Tuple{t_l}_{l\in L}
  =\Set{(b,(l,a))\St l\in L\text{ and }(b,a)\in t_l}\).

  The object \(\Dbimon=\Bwith_{i\in\Mdeg}\Sone\) can be described by %
  \(\Web\Dbimon=\Mdeg\) with \(\forall i,j\in\Mdeg\ i\Coh\Dbimon j\)
  and we have %
  \(\Win i=\Eset{(\Sonelem,i)}\) and %
  \(\Wdiag=\Eset{(\Sonelem,i)\St i\in\Mdeg}\).
  It is easy to check that \(\COH\) is \Elementarily{} \SUMBLE{}, but
  we will not describe immediately its \Dpmonoid{} local structure; we
  prefer to show how it derives from the structure of \(\Dbimon\).
  This object induces the functor
  \(\Sfun=\Sfung\Dbimon:\COH\to\COH\) acting on objects by
  \(\Sfun E=\Limplp\Dbimon E\) and hence we have
  \(\Web{\Sfun E}=\Mdeg\times\Web E\) and %
  \((i,a)\Coh{\Sfun E}(i',a')\) if \(a\Coh Ea'\) and
  \(i\not=i'\Implies a\not=a'\).
  Therefore, up to a trivial order isomorphism, %
  \(\Cl{\Sfun E} =\Eset{(x_i)_{i\in\Mdeg}\in\Cl E^\Mdeg\St \forall
    i,j\in\Mdeg\ i\not=j\Implies x_i\cup x_j\in\Cl E \text{ and
    }x_i\cap x_j=\emptyset}\) (with the product order).
  Therefore we decide to equip the category \(\COH\) with the
  following \Dpmonoid{} local structure:
  a family \((s_i\in\COH(E,F))_{i\in I}\) is summable if
  \(\Union_{i\in I}s_i\in\COH(E,F)\) (which is not surprising) and
  \(s_i\cap s_j=\emptyset\) for \(i\not=j\) (which is more).
  And then \(\sum_{i\in I}s_i=\Union_{i\in I}s_i\).
  The corresponding witness is %
  \(\Stuple{s_i}_{i\in I}=\Eset{(a,(i,b))
    \St i\in\Mdeg\text{ and } (a,b)\in s_i}
  \in\COH(E,\Sfun F)\).

  \subsubsection{Probabilistic coherence spaces}
  \label{ex:pcoh-cart-smc-elem-sum}
  Another crucial example is that of \emph{probabilistic coherence
    spaces} (PCS) introduced in~\cite{Girard04a,DanosEhrhard08}.

  \begin{definition}
    \label{def:pcs-struct-def}
    A PCS is a pair \(X=(\Web X,\Pcoh X)\) where \(\Web X\) is a set
    and \(\Pcoh X\subseteq\Realpto{\Web X}\) that we consider as a
    poset (whose order relation is the product order) and that we
    assume to satisfy%
    \begin{itemize}
    \item \(\Pcoh X\) is non-empty;
    \item for all \(a\in\Web X\), the set
      \(\Set{x_a\St x\in\Pcoh X}\subseteq\Realp\) is bounded and is
      not reduced to \(\Set 0\);
    \item \(\Pcoh X\) is down-closed (that is if \(x\in\Pcoh X\) and
      \(y\in\Realpto{\Web X}\) satisfy \(y\leq x\) then
      \(y\in\Pcoh X\)) and closed under the lubs of monotone chains %
      (that is if \((x(n)\in\Pcoh X)_{n\in\Nat}\) is increasing
      componentwise, then %
      \((\sup_{n\in\Nat} x(n)_a)_{a\in\Web X}\in\Pcoh X\));
    \item \(\Pcoh X\) is closed under barycentric combinations, that
      is, %
      if \(x,y\in\Pcoh X\) and \(p\in\Intercc 01\), then %
      \((1-p)x+py\in\Pcoh X\) (the algebraic operations being defined
      pointwise).
    \end{itemize}
  \end{definition}

  \begin{remark}
    \label{rk:pcs-bipolar}
    In the literature, the definition of PCSs is most often
    based on a duality typical of \LL{}; the present definition is
    equivalent to the duality-based definition as shown
    in~\cite{Girard04a,Ehrhard22}, and is perhaps more intuitive.
    More precisely, given \(\cP\subseteq\Realpto L\) for some set
    \(L\), we can define
    \(\Orth\cP=\Set{x'\in\Realpto L\St\forall x\in\cP\ \Eval x{x'}\leq
      1}\subseteq\Realpto I\) where
    \(\Eval x{x'}=\sum_{l\in L}x_lx'_l\).
    Then, under the assumption that \(\Pcoh X\) fulfills the two first
    conditions of \cref{def:pcs-struct-def}, the two last conditions
    are equivalent to \(\Pcoh X=\Biorth{\Pcoh X}\).
  \end{remark}
  A morphism from \(X\) to \(Y\) is a matrix
  \(t\in\Realpto{\Web X\times\Web Y}\) such that for all
  \(x\in\Pcoh X\), one has \(\Matappa tx\in\Pcoh Y\) where
  \(\Matappa tx=(\sum_{a\in\Web X}t_{a,b}x_a)_{b\in\Web
    Y}\in\Realpto{\Web Y}\).
  These morphisms are in bijective correspondence with the functions
  \(f:\Pcoh X\to\Pcoh Y\) which are monotone, commute with lubs of
  monotone sequences and satisfy \(f((1-p)x+py)=(1-p)f(x)+pf(y)\): %
  given such an \(f\) we define its matrix
  \(t\in\Realpto{\Web X\times\Web Y}\) as follows.
  Let \((a,b)\in\Web X\times\Web Y\), then there is an
  \(\epsilon\in\Realpnz\) such that \(\epsilon\Base a\in\Pcoh X\)
  (where \(\Base a\in\Realpto{\Web X}\) is defined by
  \((\Base a)_{a'}=\Kronecker a{a'}\)) and we set
  \(t_{a,b}=\Inv\epsilon f(\epsilon\Base a)_b\) which does not depend
  on the choice of \(\epsilon\).
  In that way we have defined a category \(\PCOH\) where composition
  is defined by the usual product of matrices: if \(s\in\PCOH(X,Y)\)
  and \(t\in\PCOH(Y,Z)\) then \(\Matapp ts\in\PCOH(X,Z)\) is defined
  by \((\Matapp ts)_{a,c}=\sum_{b\in\Web Y}s_{a,b}t_{b,c}\).
  The identity morphism \(\Id\in\PCOH(X,X)\) is the diagonal matrix,
  \(\Id_{a,a'}=\Kronecker a{a'}\).

  The category \(\PCOH\) is an SMC.
  The tensor unit is \(\Sone=(\Set\Sonelem,\Intercc 01)\) (upon
  identifying \(\Realpto{\Set\Sonelem}\) with \(\Realp\)).
  Given \((x(i)\in\Pcoh{X_i})_{i=1,2}\), we set %
  \(\Tens{x(1)}{x(2)}
  =(x(1)_{a_1}x(2)_{a_2})_{(a_1,a_2)\in\Web{X_1}\times\Web{X_2}}
  \in\Realpto{\Web{X_1}\times\Web{X_2}}\).
  We can define \(\Tens{X_1}{X_2}\) as
  \((\Web{X_1}\times\Web{X_2},P)\) where \(P\) is the least subset of
  \(\Realpto{\Web{X_1}\times\Web{X_2}}\) which contains all the %
  \(\Tens{x(1)}{x(2)}\) for \((x(i)\in\Pcoh{X_i})_{i=1,2}\) and
  satisfies the two closure properties in the definition of PCS (the
  two last conditions).
  To describe more explicitly this operation, it is convenient to
  introduce the PCS \(\Limpl XY=(\Web X\times\Web Y,Q)\) where
  \(Q=\PCOH(X,Y)\).
  Then \(\Limpl X\Sone\) is (trivially) isomorphic to the PCS
  \(\Orth X\) where \(\Web{\Orth X}=\Web X\) and
  \(x'\in\Realpto{\Web X}\) belongs to \(\Pcoh{\Orth X}\) if, for all
  \(x\in\Pcoh X\), one has
  \(\Eval x{x'}=\sum_{a\in\Web X}x_ax'_a\leq 1\).
  It is then possible to prove that \(\Biorth X=X\) (a kind of
  ``bipolar theorem'', see \cref{rk:pcs-bipolar}).
  Then one has \(\Tens{X_1}{X_2}=\Orth{\Limplp{X_1}{\Orth{X_2}}}\)
  and, based on this property, when equipped with \(\Sone\) and
  \(\ITens\), the category \(\PCOH\) is an SMC.
  This SMC is closed with \((\Limpl XY,\Evlin)\) as internal hom from
  \(X\) to \(Y\), with
  \(\Evlin_{((a,b),a'),b'}=\Kronecker a{a'}\Kronecker b{b'}
  \in\cL(\Tens{\Limplp XY}{X},Y)\) as evaluation morphism, which
  satisfies of course \(\Matappa\Evlin{\Tensp tx}=\Matappa tx\) for
  \(t\in\PCOH(X,Y)\) and \(x\in\Pcoh X\).
  This SMCC is even \Staraut{} with \(\Sbot=\Sone\) as dualizing
  object (this essentially boils down to the fact that
  \(\Biorth X=X\)).
  
  The category \(\PCOH\) has all products: given a family
  \((X_l)_{l\in L}\) of PCSs, let \(X\) be defined by
  \(\Web X=\Union_{l\in L}\Set l\times\Web{X_l}\) and %
  \(x\in\Realpto{\Web X}\) belongs to \(\Pcoh X\) if %
  \(((x_{l,a})_{a\in\Web{X_i}}\in\Pcoh{X_l})_{l\in L}\).
  Equipped with %
  \((\Proj l\in\PCOH(X,X_l))_{l\in L}\) %
  defined by \((\Proj q)_{(l,a),a'}=\Kronecker ql\Kronecker a{a'}\), %
  \(X\) is easily seen to be the cartesian product of the \(X_l\)'s,
  and we set \(\Bwith_{l\in L}X_l=X\).
  Notice that \(\Pcohp{\Bwith_{l\in L}X_i}\) is isomorphic (for the
  order and for the barycentric structures) to
  \(\prod_{l\in L}\Pcoh{X_l}\).
  Given \((t_l\in\PCOH(Y,X_l))_{l\in L}\), the unique %
  \(\Tuple{t_l}_{l\in L}\in\PCOH(Y,\Bwith_{l\in L}X_l)\) such that %
  \((\Proj q\Compl\Tuple{t_l}_{l\in L}=t_q)_{q\in L}\) is given by
  \((\Tuple{t_l}_{l\in L})_{b,(q,a)}=(t_q)_{b,a}\) for all \(q\in L\),
  \(b\in\Web Y\) and \(a\in\Web{X_q}\).

  As in \cref{ex:coh-cart-smc-elem-sum}, we slightly postpone
  the definition of the \Pmcat{} structure of \(\PCOH\).
  The object
  \(\Dbimon=\Bwith_{i\in\Mdeg}\Sone\) is described by
  \(\Web\Dbimon=\Mdeg\) and \(\Pcoh\Dbimon=\Intercc01^\Mdeg\).
  Then \((\Win i\in\PCOH(\Sone,\Dbimon))_{i\in\Mdeg}\) and
  \(\Wdiag\in\PCOH(\Sone,\Dbimon)\) are given %
  by \(\Matappa{\Win i}u=(\Kronecker iqu)_{q\in\Mdeg}\) and %
  \(\Matappa\Wdiag u=(u)_{q\in\Mdeg}\), for all \(u\in\Intercc01\).
  Given a PCS \(X\), the PCS \(\Sfun X=\Limplp\Dbimon X\) is given by
  \(\Web{\Sfun X}=\Mdeg\times\Web X\) and an element of
  \(\Pcohp{\Sfun X}\) is a \(t\in\Realpto{\Mdeg\times\Web X}\)
  such that \(\Matappa t{\sum_{i\in\Mdeg}\Base i}\in\Pcoh X\) where
  \(\sum_{i\in\Mdeg}\Base i\in\Pcoh\Dbimon\) is the tuple
  \((1)_{i\in\Mdeg}\in\Intercc01^\Mdeg\).
  In other words
  \begin{align}
    \label{eq:ssum-pcoh-isom}
    \Pcohp{\Sfun X}\Isom\Set{(x(i))_{i\in\Mdeg}
    \in\Pcoh X^\Mdeg\St \sum_{i\in\Mdeg}x(i)\in\Pcoh X}\,,
  \end{align}
  this isomorphism holding for the poset and for the barycentric
  structures.
  With this identification, the morphisms %
  \(\Sproj i,\Ssum\in\PCOH(\Sfun X,X)\) (for all \(i\in\Mdeg\)) are
  characterized by %
  \(\Matapp{\Sproj i}{(x(q))_{q\in\Mdeg}}=x(i)\) and %
  \(\Matapp{\Ssum}{(x(i))_{i\in\Mdeg}}=\sum_{i\in\Mdeg}x(i)\).
  It follows easily from these observations that, if we want
  \cref{def:summ-struct} to apply to the current situation, we have to
  stipulate that a family of morphisms
  \((t(i)\in\PCOH(Y,X))_{i\in\Mdeg}\) is summable iff %
  \(\forall y\in\Pcoh Y\ \sum_{i\in\Mdeg}\Matappa{t(i)}y\in\Pcoh X\).
  This condition implies easily that the sum of matrices
  \(t=\sum_{i\in I}t(i)\) belongs to \(\PCOH(Y,X)\) iff we have
  \(\forall y\in\Pcoh Y\ \Matappa ty=\sum_{i\in I}\Matappa{t(i)}y\),
  and we turn in that way \(\PCOH\) into a \Pmcat.
  The witness of a summable family \((t(i)\in\PCOH(Y,X))_{i\in\Mdeg}\)
  is then \(s=\Stuple{t(i)}_{i\in\Mdeg}\in\PCOH(Y,\Sfun X)\) which is
  given as a matrix by \(s_{b,(i,a)}=t(i)_{b,a}\) and is characterized
  by %
  \(\Matappa sy=\Stuple{\Matappa{t(i)}y}_{i\in\Mdeg}\) %
  if we consider \cref{eq:ssum-pcoh-isom} as an equality.
  From this characterization of summability and witnesses, it follows
  easily that \(\PCOH\) is a \Elementarily{} summable category.

\subsection{The structures of the \texorpdfstring{\SUMBLTY{}}{summability} 
functor, in the representable case}

In this section we assume that \(\cL\) is a representably \SUMCAT{}.
Thanks to \cref{th:rerpr-summable-tens-with-limpl}, all the structures
presented in \cref{sec:ssum-monad-structure,sec:summability-tensor}
are available for the functor
\(\Sfun=\Sfung\Dbimon=\Limplp{\Dbimon}{\_}\).
The goal of this section is to show that these structures boil down to
very elementary (co)algebraic structures available on \(\Dbimon\)
that we will make explicit.
We describe first a comonoid structure $\Dbimon$, which will correspond 
trough the strong monoidal functor $\Sfung$ to the monad structure of 
$\Sfun$.

\begin{proposition}
  \label{prop:dbimon-comonoid}
  There is a unique \(\Dbimonm\in\cL(\Dbimon,\Tens\Dbimon\Dbimon)\)
  such that
  \[
    \forall i\in\Mdeg\quad \Dbimonm\Compl\Win i
    =\sum_{(l,r)\in\Mdegsl i}\Tens{\Win l}{\Win r}
  \]
  (keeping the iso %
  \(\Leftu_1=\Rightu_1\in\cL(\Tens{\Sone}{\Sone},\Sone)\) implicit).
  The triple \((\Dbimon,\Proj0,\Dbimonm)\) is a commutative comonoid
  that we denote as \(\Dbimonco\).
  Moreover \(\Win0\in\CMON\cL(\Sone,\Dbimonco)\).
\end{proposition}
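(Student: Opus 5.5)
The plan is in three steps: construct \(\Dbimonm\) from the representable description of summability, verify the comonoid laws by testing against the jointly epic injections, and read off the statement about \(\Win0\).

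\textbf{Existence and uniqueness.} The family \((\Win l\in\cL(\Sone,\Dbimon))_{l\in\Mdeg}\) is summable by \cref{lemma:win-sumable}. By \cref{th:rerpr-summable-tens-with-limpl}, \ref{ax:stens} holds, so \(\cL\) is a symmetric monoidal \Dpmenriched{} category (\cref{prop:ss-tensor}). Hence \(\ITens\) is bilinear, and the doubly indexed family \((\Tens{\Win l}{\Win r})_{(l,r)\in\Mdeg^2}\) is summable with sum \(\Tens\Wdiag\Wdiag\). By \cref{prop:subfamily-summable} and \ref{ax:pa}, the subfamily indexed by \(\Mdegdom=\biguplus_i\Mdegsl i\) is summable. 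Grouping it by \(i\) shows that the family \(g_i=\sum_{(l,r)\in\Mdegsl i}\Tens{\Win l}{\Win r}\in\cL(\Sone,\Tens\Dbimon\Dbimon)\), indexed by \(i\in\Mdeg\), is summable. \Cref{lemma:repr-summable-family}, applied with \(X=\Sone\), then provides \(h=\Stuplet{g_i}_{i\in\Mdeg}\in\cL(\Tens\Sone\Dbimon,\Tens\Dbimon\Dbimon)\), and we set \(\Dbimonm=h\Compl\Inv{\Leftu_\Dbimon}\). As in the proof of \cref{prop:sumobj-uniqueness}, \(\Leftu_\Sone=\Rightu_\Sone\) gives \(\Dbimonm\Compl\Win i=g_i\). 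Uniqueness follows from joint epicity of the \(\Win i\).

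\textbf{Comonoid equations.} Every equation is an equality of morphisms out of \(\Dbimon\) or \(\Dbimon^{\ITens n}\), so it suffices to precompose with \(\Win i\), or with \(\Win{i}\ITens\Win j\) (these are jointly epic by \cref{lemma:tens-win-epic}), and then expand using bilinearity of \(\ITens\) and of composition (\cref{thm:ss-induces-enrichement}). The checks are as follows.
\begin{itemize}
\item \emph{Counit.} We have \((\Tens{\Proj0}{\Dbimon})\Compl\Dbimonm\Compl\Win i=\sum_{(l,r)\in\Mdegsl i}\Kronecker l0\,\Win r=\Win i\), since the only pair in \(\Mdegsl i\) with \(l=0\) is \((0,i)\). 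The other side is symmetric.
\item \emph{Coassociativity.} Both \((\Dbimonm\ITens\Dbimon)\Compl\Dbimonm\Compl\Win i\) and \((\Dbimon\ITens\Dbimonm)\Compl\Dbimonm\Compl\Win i\) equal \(\sum_{l+j+r=i}\Win l\ITens\Win j\ITens\Win r\). This uses \ref{ax:pa} to reorganise the triple-indexed sums, together with associativity of the partial monoid \(\Mdeg\): \((l+j)+r\) is defined iff \(l+(j+r)\) is, and then the two are equal.
\item \emph{Commutativity.} We have \(\Sym\Compl\Dbimonm\Compl\Win i=\sum_{(l,r)\in\Mdegsl i}\Win r\ITens\Win l\), which is again \(\Dbimonm\Compl\Win i\) by reindexing along the bijection \((l,r)\mapsto(r,l)\) of \(\Mdegsl i\), valid because \(\Mdeg\) is commutative (\cref{prop:reindexing-bijection}).
\end{itemize}

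\textbf{The claim about \(\Win0\).} The intended reading is that \(\Win0\) is a comonoid morphism \(\Sone\to\Dbimonco\). Since \(\Sone\) is terminal in \(\CCMON\cL\) (\cref{thm:comonoid-cartesian}), it is enough to check two things. First, \(\Proj0\Compl\Win0=\Id_\Sone\), which holds by \cref{lemma:sumobj-proj}. Second, \(\Dbimonm\Compl\Win0=\Tens{\Win0}{\Win0}\) up to the unitor. This is exactly where \Mpositivity{} enters: \(\Mdegsl0=\Set{(0,0)}\), so the sum has a single term and \ref{ax:unary} gives the result.

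I expect the main obstacle to be the careful bookkeeping of summability in the coassociativity step. One must justify that all triple-indexed sums are defined, via bilinearity and subfamilies of \((\Win l\ITens\Win j\ITens\Win r)_{\Mdeg^3}\), which is summable as the tensor cube of \((\Win l)_{l\in\Mdeg}\). One must also apply \ref{ax:pa} with the correct partitions, which needs the partial-monoid associativity in the strong \(\sumiff\) form.
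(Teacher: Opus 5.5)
Your proposal is correct and matches the paper's proof step for step. You build \(\Dbimonm\) as the witness of the summable family \(\big(\sum_{(l,r)\in\Mdegsl i}\Tens{\Win l}{\Win r}\big)_{i\in\Mdeg}\), check the comonoid laws against the jointly epic \(\Win i\) (with coassociativity reduced to associativity of \(\Mdeg\) via \ref{ax:pa}), and use \Mpositivity{}, through \(\Mdegsl0=\Set{(0,0)}\), for the claim about \(\Win0\).

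One minor slip: terminality of \(\Sone\) in \(\CCMON\cL\) plays no role for a morphism \emph{out of} \(\Sone\). Your two checks are simply the definition of a comonoid morphism from \((\Sone,\Id_\Sone,\Inv{\Leftu_\Sone})\).
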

\begin{proof}
  By \cref{lemma:win-sumable} the family
  \((\Win i\in\cL(\Sone,\Dbimon))_{i\in\Mdeg}\) is summable %
  (with \(\Stuplet{\Win i}_{i\in\Mdeg}=\Id_{\Dbimon}\) and %
  \(\sum_{i\in\Mdeg}\Win i=\Wdiag\)).
  Thus, the family of morphisms %
  \((\Tens{\Win i}{\Win j}
  \in\cL(\Tens\Sone\Sone,\Tens\Dbimon\Dbimon))_{(i,j)\in\Mdeg^2}\) %
  is summable by \cref{th:rerpr-summable-tens-with-limpl}.
  For each \(i\in\Mdeg\), by \ref{ax:pa},
  the family %
  \((\Tens{\Win l}{\Win r})_{(l,r)\in \Mdegsl i}\) is summable, and hence
  the morphisms %
  \((\sum_{(l,r)\in \Mdegsl i}\Tens{\Win l}{\Win r})_{i\in\Mdeg}\) are
  summable.
  This gives us a witness %
  \(\Stuplet{\sum_{l+r=i}\Tens{\Win l}{\Win r}}_{i\in\Mdeg}
  \in\cL(\Tens{\Tens\Sone\Sone}\Dbimon,\Tens\Dbimon\Dbimon)\)
  and we set %
  \[
    \Dbimonm =\Stuplet{\sum_{(l,r)\in\Mdegsl i}\Tens{\Win l}{\Win
        r}}_{i\in\Mdeg} \Compl\Inv\Leftu\Compl\Inv\Leftu
    \in\cL(\Dbimon,\Tens\Dbimon\Dbimon)
  \]
  which satisfies the announced equations by construction.
  Uniqueness results from the joint epicity of the \(\Win i\)'s.
  The fact that we define in that way a commutative comonoid is easily
  proved, again by joint epicity of \(\Win i\)'s.
  Let us check associativity, which is the most interesting property
  in this setting.
  We must prove that the following diagram commutes in \(\cL\):
  \begin{center}
    \begin{tikzcd}
      \Dbimon
      \ar[r,"\Dbimonm"]
      \ar[d,swap,"\Dbimonm"]
      &
      \Tens\Dbimon\Dbimon
      \ar[r,"\Tens{\Dbimonm}{\Dbimon}"]
      &
      \Tens{\Tensp\Dbimon\Dbimon}{\Dbimon}
      \ar[d,"\Assoc"]
      \\
      \Tens\Dbimon\Dbimon
      \ar[rr,"\Tens{\Dbimon}{\Dbimonm}"']
      &&
      \Tens{\Dbimon}{\Tensp\Dbimon\Dbimon}
    \end{tikzcd}
  \end{center}
  so let \(i\in\Mdeg\).
  We have
  \begin{align*}
    \Assoc\Compl\Tensp{\Dbimonm}{\Dbimon}\Compl\Dbimonm\Compl\Win i
    &=\Assoc\Compl\Tensp{\Dbimonm}{\Dbimon}\Compl
      \Big(\sum_{(l,r)\in \Mdegsl i}\Tens{\Win l}{\Win r}\Big)\\
    &=\Assoc\Compl
      \Big(\sum_{(l,r)\in \Mdegsl i}
      \Tens{(\Dbimonm\Compl\Win l)}{\Win r}\Big)\\
    &=\Assoc\Compl
      \Big(\sum_{(l,r)\in \Mdegsl i}
      \Tens{\big(\sum_{(p,q)\in \Mdeg_l}\Tens{\Win p}{\Win q}\big)}
      {\Win r}\Big)\\
    &=\Assoc\Compl
      \Big(\sum_{\Biind{(l,r)\in \Mdegsl i}{(p,q)\in \Mdeg_l}}
      \Tens{\Tensp{\Win p}{\Win q}}{\Win r}\Big)\\
    &=\sum_{\Biind{(l,r)\in \Mdegsl i}{(p,q)\in \Mdeg_l}}
      \Tens{\Win p}{\Tensp{\Win q}{\Win r}}
  \end{align*}
  and the commutation follows from the observation that 
  \begin{multline*}
    \Eset{(p,q,r)\in\Mdeg^3\St
      \exists l\in\Mdeg\ (p,q)\in \Mdegsl l\text{ and }(l,r)\in \Mdegsl i}\\
    =
    \Eset{(p,q,r)\in\Mdeg^3\St
      \exists l\in\Mdeg\ (q,r)\in \Mdegsl l\text{ and }(p,l)\in \Mdegsl i}
  \end{multline*}
  and, as usual, from the joint epicity of the \(\Win i\)'s.
  This equality of sets results from the assumption that \(\Mdeg\) is
  a \Pmonoid{} and \ref{ax:pa}.

  The commutations involving the counit \(\Proj0\) result from the
  fact that %
  \(\Proj i\Compl\Win j=\Kronecker ij\Id_\Sone\).
  Commutativity results from the symmetry of the sets \(\Mdegsl i\). 

  Last we have \(\Win0\in\CMON\cL(\Sone,\Dbimonco)\):
  \(\Proj0\Compl\Win0=\Id_\Sone\) readily results from the definition
  of \(\Win0\) as \(\Tuple{\Kronecker i0\Id_\Sone}_{i\in\Mdeg}\)
  and
  \(\Dbimonm\Compl\Win0=\sum_{(l,r)\in\Mdegsl 0}\Tens{\Win l}{\Win
    r}=\Tens{\Win0}{\Win0}\) since \(\Mdegsl0=\{(0,0)\}\) by our
  assumption that \(\Mdeg\) is \Mpositive.
  Notice that $\Win i$ has no reason to be a comonoid morphism for
  $i\not=0$.
\end{proof}

\begin{lemma}
  \label{lemma:dbimonm-sum-expr}
  We have
  \begin{align*}
    \Dbimonm=\sum_{i\in\Mdeg}\Big(
    \sum_{(l,r)\in\Mdegsl i}\Tens{\Win l}{\Win r}\Big)
    \Compl\Inv\Leftu\Compl\Proj i\,.
  \end{align*}
\end{lemma}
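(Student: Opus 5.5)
We have to show two things. First, the family
\((g_i)_{i\in\Mdeg}\) with
\(g_i=\big(\sum_{(l,r)\in\Mdegsl i}\Tens{\Win l}{\Win r}\big)\Compl\Inv\Leftu\Compl\Proj i\in\cL(\Dbimon,\Tens\Dbimon\Dbimon)\)
is summable. Second, its sum is \(\Dbimonm\). The whole argument is driven by the joint epicity of the \(\Win j\)'s (\ref{ax:depi}) together with the summability criterion for morphisms with source \(\Dbimon\).

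For summability, the plan is to apply \cref{lemma:comp-proj-summable}, which needs the family \((h_i)_{i\in\Mdeg}\) of morphisms \(\Sone\to\Tens\Dbimon\Dbimon\) to be summable, where \(h_i=\big(\sum_{(l,r)\in\Mdegsl i}\Tens{\Win l}{\Win r}\big)\Compl\Inv\Leftu\).
\begin{itemize}
\item This summability was already established in the proof of \cref{prop:dbimon-comonoid}. There, summability of the \(\Win i\)'s (\cref{lemma:win-sumable}) and \cref{th:rerpr-summable-tens-with-limpl} give that \((\Tens{\Win l}{\Win r})_{(l,r)\in\Mdeg^2}\) is summable.
\item By \cref{prop:subfamily-summable} and \ref{ax:pa}, the partial sums over the disjoint sets \(\Mdegsl i\) then form a summable \(\Mdeg\)-indexed family.
\item Precomposition with \(\Inv\Leftu\) preserves summability by \cref{lemma:summability-compl}.
\item Hence \((h_i\Compl\Proj i)_{i\in\Mdeg}=(g_i)_{i\in\Mdeg}\) is summable by \cref{lemma:comp-proj-summable}.
\end{itemize}

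For the equation, we precompose both sides with \(\Win j\) for an arbitrary \(j\in\Mdeg\). Since \(\cL\) is \Pmenriched{} (\cref{thm:ss-induces-enrichement}), composition is bilinear, so
\[
\Big(\sum_{i\in\Mdeg} h_i\Compl\Proj i\Big)\Compl\Win j=\sum_{i\in\Mdeg}h_i\Compl\Proj i\Compl\Win j .
\]
By \cref{lemma:sumobj-proj}, \(\Proj i\Compl\Win j=\Kronecker ij\Id_\Sone\). All terms with \(i\neq j\) are therefore zero morphisms, since zeros are absorbing in a \Pmcat{}. By \cref{prop:zero-neutral} (together with \ref{ax:unary}) the sum then reduces to
\[
h_j=\sum_{(l,r)\in\Mdegsl j}\Tens{\Win l}{\Win r},
\]
up to the implicit unitor. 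This is exactly \(\Dbimonm\Compl\Win j\) by the defining property in \cref{prop:dbimon-comonoid}. Joint epicity of the \(\Win j\)'s then gives the equality.

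The only delicate point is the first step: the family indexed by \(i\) is not a priori summable, because the \(\Proj i\)'s themselves need not be summable. This is precisely what \cref{lemma:comp-proj-summable} handles, so I expect no further obstacle beyond bookkeeping of unitors.
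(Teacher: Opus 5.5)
Your proof is correct and follows essentially the same route as the paper's: you obtain summability from \cref{lemma:comp-proj-summable} applied to the summable family of partial sums of the \(\Tens{\Win l}{\Win r}\), then precompose with each \(\Win j\), use \(\Proj i\Compl\Win j=\Kronecker ij\Id_\Sone\), and conclude by joint epicity of the \(\Win j\)'s. The only difference is that you spell out steps the paper leaves implicit, namely bilinearity via \cref{thm:ss-induces-enrichement}, absorption of zero morphisms, and \cref{prop:zero-neutral}.
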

\begin{proof}
  First, we need to make sure that this sum is well defined.
  Observe that the family
  \(\big(\sum_{(l,r)\in\Mdegsl i}\Tens{\Win l}{\Win
    r}\in\cL(\Tens\Sone\Sone,\Tens\Dbimon\Dbimon)\big)_{i\in\Mdeg}\) is
  summable because the family
  \((\Tens{\Win l}{\Win r})_{(l,r)\in\Mdeg^2}\) is summable.
  Hence the family
  \(\Big(\big(\sum_{(l,r)\in\Mdegsl i}\Tens{\Win l}{\Win
    r}\big)\Compl\Inv\Leftu\Compl\Proj
  i\Big)\in\cL(\Dbimon,\Tens\Dbimon\Dbimon)_{i\in\Mdeg})\) is summable
  by \cref{lemma:comp-proj-summable}.
  Let \(f\in\cL(\Dbimon,\Tens\Dbimon\Dbimon)\) be their sum; given
  \(j\in\Mdeg\) we have
  \begin{align*}
    f\Compl\Win j
    &=\sum_{i\in\Mdeg}\Big(
      \sum_{(l,r)\in\Mdegsl i}\Tens{\Win l}{\Win r}\Big)
      \Compl\Inv\Leftu\Compl\Proj i\Compl\Win j\\
    &=\sum_{i\in\Mdeg}\Big(
      \sum_{(l,r)\in\Mdegsl i}\Tens{\Win l}{\Win r}\Big)
      \Compl\Inv\Leftu\Compl(\Kronecker ij\Id_\Sone)
      \text{\quad by \cref{lemma:sumobj-proj}}\\
    &=\Big(
      \sum_{(l,r)\in\Mdegsl j}\Tens{\Win l}{\Win r}\Big)
      \Compl\Inv\Leftu=\Dbimonm\Compl\Win j
  \end{align*}
  from which the announced result follows by joint epicity of the \(\Win j\)'s.
\end{proof}

We now describe a commutative monoid structure on $\Dbimon$.
%
\begin{theorem}
  \label{th:dbimon-monoid-bimonoid}
  There is a unique \(\Dbimonl\in\cL(\Tens\Dbimon\Dbimon,\Dbimon)\)
  such that %
  \(\Dbimonl\Compl\Tensp{\Win i}{\Win j} =\Kronecker ij\Win i\) %
  (keeping the iso %
  \(\Leftu_1=\Rightu_1\in\cL(\Tens{\Sone}{\Sone},\Sone)\) implicit).
  The triple \((\Dbimon,\Wdiag,\Dbimonl)\) is a commutative monoid
  that we denote as \(\Dbimonmo\).
  If \(\Mdeg\) is total then
  \((\Dbimon,\Wdiag,\Dbimonl,\Proj0,\Dbimonm)\) is a bicommutative
  bimonoid in the sense of \cref{def:symmetric-bimonoid}, that we denote as
  \(\Dbimonbi\).
  We recall that this means that the following additional diagrams commute.
\begin{center}
  \begin{tikzcd}
    \Sone\ar[r,"\Wdiag"]\ar[rd,swap,"\Id"]
    &
    \Dbimon\ar[r,"\Proj0"]
    &
    \Sone\ar[dl,"\Id"]
    \\
    &\Sone&
  \end{tikzcd}
  \quad
  \begin{tikzcd}
    \Tens{\Dbimon}{\Dbimon}
    \ar[r,"\Slift"]\ar[d,swap,"\Tens{\Dbimonm}{\Dbimonm}"]
    &[-1em]
    \Dbimon\ar[r,"\Dbimonm"]
    &[-1em]
    \Tens{\Dbimon}{\Dbimon}
    \\
    \Tens{\Tens{\Dbimon}{\Dbimon}}{\Tens{\Dbimon}{\Dbimon}}
    \ar[rr,"\Dbimon\ITens\Sym\ITens\Dbimon"']
    &&
    \Tens{\Tens{\Dbimon}{\Dbimon}}{\Tens{\Dbimon}{\Dbimon}}
    \ar[u,swap,"\Tens{\Slift}{\Slift}"]
  \end{tikzcd}
\end{center}
  \begin{center}
  \begin{tikzcd}
    \Tens{\Dbimon}{\Dbimon}
    \ar[r,"\Slift"]\ar[d,swap,"\Tens{\Proj0}{\Proj0}"]
    &
    \Dbimon\ar[d,"\Proj0"]
    \\
    \Tens\Sone\Sone\ar[r,"\Leftu_\Sone"']
    &
    \Sone
  \end{tikzcd}
  \quad
  \begin{tikzcd}
    \Sone\ar[r,"\Inv{\Leftu_\Sone}"]\ar[d,swap,"\Wdiag"]
    &
    \Tens\Sone\Sone\ar[d,"\Tens{\Wdiag}{\Wdiag}"]
    \\
    \Dbimon\ar[r,"\Dbimonm"']
    &
    \Tens{\Dbimon}{\Dbimon}    
  \end{tikzcd}
\end{center}
Only the last
one requires \(\Mdeg\) to be total.
\end{theorem}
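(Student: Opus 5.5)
The plan is to construct \(\Dbimonl\) as a witness given by \cref{lemma:repr-summable-family}, and then to check every equation on the generators \(\Win i\ITens\Win j\) (and \(\Win i\ITens\Win j\ITens\Win k\)). Those generators are jointly epic by \cref{lemma:tens-win-epic}.

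\emph{Construction.} By \cref{lemma:win-proj-summable} the family \((\Win j\Compl\Proj j)_{j\in\Mdeg}\) is summable in \(\cL(\Dbimon,\Dbimon)\). By \cref{lemma:summability-compl}, so is \(g_j=\Win j\Compl\Proj j\Compl\Rightu_\Dbimon\in\cL(\Tens\Dbimon\Sone,\Dbimon)\). \cref{lemma:repr-summable-family} then provides a witness \(h\in\cL(\Tens{\Tensp\Dbimon\Sone}\Dbimon,\Dbimon)\) with \(h\Compl\Tensp{\Tensp\Dbimon\Sone}{\Win j}\Compl\Inv\Rightu=g_j\). I set \(\Dbimonl=h\Compl\Tensp{\Inv\Rightu}{\Dbimon}\). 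Unfolding with \cref{lemma:sone-tens-comp} gives \(\Dbimonl\Compl\Tensp{\Win i}{\Win j}=\Win j\Compl\Proj j\Compl\Win i=\Kronecker ij\Win i\). Uniqueness follows from joint epicity of the \(\Win i\ITens\Win j\).

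\emph{Monoid laws.} Associativity and commutativity are immediate on generators, since both sides reduce to \(\Kronecker ij\Kronecker jk\Win i\), respectively \(\Kronecker ij\Win i\). For the unit law I use \(\Wdiag=\sum_j\Win j\) (\cref{lemma:win-sumable}) together with bilinearity of \(\ITens\) and of composition. Bilinearity is available because \cref{th:rerpr-summable-tens-with-limpl} gives \ref{ax:stens}, hence \cref{prop:ss-tensor}, and \cref{thm:ss-induces-enrichement} gives enrichment. This yields \(\Dbimonl\Compl\Tensp\Wdiag{\Win i}=\sum_j\Kronecker ji\Win i=\Win i\), keeping unitors implicit.

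\emph{Bimonoid laws.} These are checked in the same way on generators.
\begin{itemize}
\item \(\Proj0\Compl\Wdiag=\sum_i\Proj0\Compl\Win i=\Id_\Sone\).
\item \(\Proj0\Compl\Dbimonl\Compl\Tensp{\Win i}{\Win j}=\Kronecker ij\Kronecker i0=\Leftu\Compl\Tensp{\Proj0}{\Proj0}\Compl\Tensp{\Win i}{\Win j}\).
\item For the compatibility of \(\Dbimonm\) and \(\Dbimonl\), the left side on \(\Win i\ITens\Win j\) is \(\Kronecker ij\sum_{(l,r)\in\Mdegsl i}\Tens{\Win l}{\Win r}\). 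The right side, after expanding both \(\Dbimonm\Compl\Win i\) and \(\Dbimonm\Compl\Win j\) and distributing the sums through \(\ITens\), the symmetry and \(\Dbimonl\ITens\Dbimonl\), is \(\sum_{(p,q)\in\Mdegsl i,(p',q')\in\Mdegsl j}\Kronecker p{p'}\Kronecker q{q'}\Tens{\Win p}{\Win q}\). Since \(\Mdegsl i\) and \(\Mdegsl j\) are disjoint when \(i\neq j\), this equals the left side.
\item Finally, \(\Dbimonm\Compl\Wdiag=\sum_i\sum_{(l,r)\in\Mdegsl i}\Tens{\Win l}{\Win r}=\sum_{(l,r)\in\Mdegdom}\Tens{\Win l}{\Win r}\) by \ref{ax:pa}. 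On the other hand \(\Tens\Wdiag\Wdiag=\sum_{(l,r)\in\Mdeg^2}\Tens{\Win l}{\Win r}\). These agree when \(\Mdegdom=\Mdeg^2\), that is, when \(\Mdeg\) is total. This is the only place totality is used.
\end{itemize}

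The main obstacle is the bookkeeping in the construction of \(\Dbimonl\): getting a legitimate witness with the unitors placed correctly, and justifying each interchange of sums with \(\ITens\) and composition through the enrichment results rather than informally. The identities themselves are then Kronecker-delta computations.
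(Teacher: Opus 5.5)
Your proof is correct. For the monoid laws and the four bimonoid diagrams it is the same generator-by-generator computation as the paper's, using joint epicity of the \(\Win i\ITens\Win j\) (and of the triples). The genuine difference is the construction of \(\Dbimonl\).

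The paper writes \(\Dbimon=\Bwith_{i\in\Mdeg}\Sone\) and simply sets \(\Dbimonl=\Tuple{\Leftu_\Sone\Compl\Tensp{\Proj i}{\Proj i}}_{i\in\Mdeg}\). The defining equation then follows from \(\Proj i\Compl\Win j=\Kronecker ij\Id_\Sone\), and the unit law from \(\Proj i\Compl\Wdiag=\Id_\Sone\).

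You instead obtain \(\Dbimonl\) as a summability witness of \((\Win j\Compl\Proj j)_{j\in\Mdeg}\), via \cref{lemma:win-proj-summable,lemma:repr-summable-family}. This costs some unitor bookkeeping, and the unit law needs \(\Wdiag=\sum_j\Win j\) together with bilinearity of composition and \(\ITens\). In return it uses only the axioms of a summability object. It therefore also covers summability objects that are not products, which the paper's definition explicitly allows (e.g.\ K\"othe spaces). It also does not require \(\Mdeg\)-indexed, possibly infinite, products to exist in \(\cL\).

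Two minor remarks. First, the detour through \(\Tens\Dbimon\Sone\) is unnecessary. Applying \cref{lemma:repr-summable-family} with \(X=\Dbimon\) directly gives \(\Dbimonl=\Stuplet{\Win j\Compl\Proj j}_{j\in\Mdeg}\in\cL(\Tens\Dbimon\Dbimon,\Dbimon)\). Second, for the last diagram you distribute \(\Dbimonm\) over \(\Wdiag=\sum_i\Win i\). This is more direct than the paper, which goes through the sum expression for \(\Dbimonm\) in \cref{lemma:dbimonm-sum-expr}.
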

\begin{proof}
  Remember that \(\Dbimon=\Bwith_{i\in\Mdeg}\Sone\), so we can set %
  \(\Dbimonl
  =\Tuple{\Leftu_\Sone\Compl\Tensp{\Proj i}{\Proj i}}_{i\in\Mdeg}
  \in\cL(\Tens\Dbimon\Dbimon,\Dbimon)\)
  which obviously satisfies the announced property, which characterizes
  \(\Dbimonl\) uniquely by \cref{lemma:tens-win-epic}.
  The fact that we define in that way a commutative monoid is easy to
  check.
  For instance (keeping implicit the associator \(\Assoc\) and the
  \(\Leftu\) isos) we have
  \(\Dbimonl^{(3)} =\Dbimonl\Compl\Tensp\Dbimonl\Dbimon
  =\Dbimonl\Compl\Tensp\Dbimon\Dbimonl
  =\Tuple{\Proj i\ITens\Proj i\ITens\Proj i}_{i\in\Mdeg}
  \in\cL(\Dbimon\ITens\Dbimon\ITens\Dbimon,\Dbimon)\).
  The commutations of the diagram involving the unit \(\Wdiag\) result
  from the fact that \(\Proj i\Compl\Wdiag=\Id_\Sone\) for \(i=0,1\).
  The first bimonoid diagram obviously commutes.
  For the second we use \cref{lemma:tens-win-epic}.
  We have
  \begin{align*}
    \Dbimonm\Compl\Dbimonl\Compl\Tensp{\Win i}{\Win j}
    &=\Dbimonm
      \Compl\Kronecker ij\Win i\\
    &=\Kronecker ij\sum_{(l,r)\in \Mdegsl i}
      \Tens{\Win l}{\Win r}
  \end{align*}
  and, setting \(\Sym_{2,3}=\Dbimon\ITens\Sym\ITens\Dbimon\),
  \begin{align*}
    \Tensp\Dbimonl\Dbimonl
    \Compl\Sym_{2,3}
    \Compl\Tensp\Dbimonm\Dbimonm
    \Compl\Tensp{\Win i}{\Win j}
    &=\Tensp\Dbimonl\Dbimonl
      \Compl\Sym_{2,3}
      \Compl\TenspB
      {\big(\sum_{(l,r)\in \Mdegsl i}
      \Tens{\Win l}{\Win r}\big)}
      {\big(\sum_{(l',r')\in \Mdeg_j}
      \Tens{\Win l}{\Win r}\big)}\\
    &=\Tensp\Dbimonl\Dbimonl
      \Compl\Sym_{2,3}
      \Compl
      \sum_{(l,r)\in \Mdegsl i,\,(l',r')\in \Mdeg_j}
      \Win l\ITens\Win r\ITens\Win{l'}\ITens\Win{r'}\\
    &=\Tensp\Dbimonl\Dbimonl
      \Compl
      \sum_{(l,r)\in \Mdegsl i,\,(l',r')\in \Mdeg_j}
      \Win l\ITens\Win{l'}\ITens\Win{r}\ITens\Win{r'}\\
    &=\sum_{(l,r)\in \Mdegsl i,\,(l',r')\in \Mdeg_j}
      \Kronecker l{l'}\Kronecker r{r'}
      \Win l\ITens\Win{r}\\
    &=\Kronecker ij\sum_{(l,r)\in \Mdegsl i}
      \Tens{\Win l}{\Win r}\,.
  \end{align*}
  The third bimonoid diagram is easily proven to commute, using the
  joint epicity of the \(\Tens{\Win l}{\Win r}\) for \(l,r\in\Mdeg\).
  For the last bimonoid diagram we have, using \cref{lemma:dbimonm-sum-expr},
  \begin{align*}
    \Tensp\Wdiag\Wdiag\Compl\Inv\Leftu
    &=\Big(\big(\sum_{l\in\Mdeg}\Win l\big)
      \ITens\big(\sum_{r\in\Mdeg}\Win r\Big)\big)
      \Compl\Inv\Leftu\\
    &=\Big(\sum_{(l,r)\in\Mdeg^2}\Tens{\Win l}{\Win r}\Big)
      \Compl\Inv\Leftu\\
    \text{and\quad}
    \Dbimonm\Compl\Wdiag
    &=\sum_{i\in\Mdeg}\Big(
      \sum_{(l,r)\in\Mdegsl i}\Tens{\Win l}{\Win r}\Big)
      \Compl\Inv\Leftu\Compl\Proj i\Compl\Wdiag\\
    &=\sum_{i\in\Mdeg}\Big(
      \sum_{(l,r)\in\Mdegsl i}\Tens{\Win l}{\Win r}\Big)
      \Compl\Inv\Leftu\\
    &=\Big(
      \sum_{(l,r)\in\Mdegdom}\Tens{\Win l}{\Win r}\Big)
      \Compl\Inv\Leftu
  \end{align*}
  where we raecall that \(\Mdegdom\) is the set of all
  \((l,r)\in\Mdeg^2\) such that \(l+r\) is defined.
  So this diagram commutes if the \Dpmonoid{} %
  is total. %
\end{proof}

Notice that, in \(\Op\cL\), \(\Dbimonco\) (see
\cref{prop:dbimon-comonoid}) is a commutative monoid and \(\Dbimonmo\)
(see \cref{th:dbimon-monoid-bimonoid}) is a commutative comonoid.
As explained in \cref{sec:monoid-to-monad}, 
since \(\Sfung\) is strong symmetric monoidal from \(\Op\cL\) to
\(\FUN\cL\cL\), it results from
\cref{thm:symmonoid-strong-mondoidal-functor-image}
that \(\Sfung\Dbimon\) inherits from \(\Dbimonco\) a symmetric monad
structure (notation: \(\Sfung\Dbimonco\)) and from \(\Dbimonmo\) a
symmetric comonad structure (notation: \(\Sfung\Dbimonmo\)).
\begin{theorem} \label{thm:sfung-dbimon}
  We have \(\Sfung{\Dbimonco}=\Sfunm\) and
  \(\Sfung\Dbimonmo=\Sfunc\).
  Moreover \(\Sfung\Sym=\Sflip\).
  If \(\Mdeg\) is total then \(\Sfung\Dbimonbi=\Sfunb\).
\end{theorem}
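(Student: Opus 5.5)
The plan is to compare the structure maps of \(\Sfung\Dbimonco\), \(\Sfung\Dbimonmo\) and \(\Sfung\Sym\) with those of \(\Sfunm\), \(\Sfunc\) and \(\Sflip\) componentwise. By \cref{lemma:pre-summability-object}, the summability structure on \(\Sfun=\Sfung\Dbimon\) has projections \(\Sproj i=\Invp{\Sfungmonz}\Compl\Sfung(\Win i)\) and sum \(\Ssum=\Invp{\Sfungmonz}\Compl\Sfung(\Wdiag)\). Each structure map of \(\Sfunm\), \(\Sfunc\) is uniquely characterized by its composites with the jointly monic families \(\Sproj i\) (for maps into \(\Sfun\)) or \(\Sproj i\Compl\Sproj j\) (for maps into \(\Sfun^2\)); see \cref{lemma:ssum-monad-mult}, \cref{lemma:sflip-invol-yb} and the definitions of \(\Sin0\) and \(\Sfunlift\). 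So in each case it suffices to compute these composites for the transported structure map and check that they agree. In the representable case these composites can be read off by transporting equations in \(\cL\) through the functor \(\Sfung\), since \(\Sfung\) is contravariant and strong monoidal.

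First the symmetry. By \cref{thm:symmonoid-strong-mondoidal-functor-image}, \(\Sfung\Sym\) is \(\Invp{\Sfungmont}\Compl\Sfung(\Sym)\Compl\Sfungmont\), suitably oriented. Using naturality of \(\Sfungmont\), I would show that
\[\Sproj i\Compl\Sproj j\Compl(\Sfungmont)^{-1}=\Invp{\Sfungmonz}\Compl\Sfung(\Tens{\Win j}{\Win i})\]
up to the unit isos. Precomposing with \(\Sfung\Sym\) then yields \(\Sfung(\Sym\Compl(\Tens{\Win i}{\Win j}))=\Sfung(\Tens{\Win j}{\Win i})\), which is the characterization of \(\Sflip\). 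This auxiliary identity, together with its twin for \(\Sproj i\), is the workhorse for the remaining cases.

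Second the monad. The unit of \(\Sfung\Dbimonco\) is \(\Sfung(\Proj0)\Compl\Sfungmonz\). Composing with \(\Sproj i\) gives \(\Sfung(\Proj0\Compl\Win i)=\Kronecker i0\Id\) by \cref{lemma:sumobj-proj}, so it equals \(\Sin0\). The multiplication is \(\Sfung(\Dbimonm)\Compl\Sfungmont\). Composing with \(\Sproj i\) gives \(\Invp{\Sfungmonz}\Compl\Sfung(\Dbimonm\Compl\Win i)\Compl\Sfungmont\), which by \cref{prop:dbimon-comonoid} is \(\Sfung\) applied to \(\sum_{(l,r)\in\Mdegsl i}\Tens{\Win l}{\Win r}\). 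This step needs \(\Sfung\) to preserve these sums, i.e.\ additivity of \(\Limplp{\_}X\). I expect that to be the main obstacle. I would derive it from \(\Limpl fX=\Curlin(\Evlin\Compl\Tensp{\Id}{f})\), the additivity of \(\Curlin\) (\cref{rem:cur-additive}, valid by \ref{ax:slimpl} and \cref{th:rerpr-summable-tens-with-limpl}) and \ref{ax:stens}. With that, the auxiliary identity turns the composite into \(\sum_{(l,r)\in\Mdegsl i}\Sproj l\Compl\Sproj r\), which is exactly the characterization of \(\Sfunmonm\).

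The comonad case is handled the same way. The counit \(\Invp{\Sfungmonz}\Compl\Sfung(\Wdiag)\) is \(\Ssum\) by definition. For the comultiplication \(\Invp{\Sfungmont}\Compl\Sfung(\Dbimonl)\), I compose with \(\Sproj i\Compl\Sproj j\) and obtain \(\Sfung(\Dbimonl\Compl\Tensp{\Win i}{\Win j})=\Kronecker ij\Sfung(\Win i)\), which characterizes \(\Sfunlift\). When \(\Mdeg\) is total, the bimonad statement follows because a symmetric bimonad is determined by its monad part, its comonad part and its symmetry. These agree by the previous steps, so \(\Sfung\Dbimonbi=\Sfunb\).
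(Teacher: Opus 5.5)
Your proposal is correct and is exactly the ``simple verification'' the paper leaves to the reader. You compare each transported structure map with the characterizing equations of \(\Sin0\), \(\Sfunmonm\), \(\Ssum\), \(\Sfunlift\) and \(\Sflip\) through the jointly monic families \(\Sproj i\) and \(\Sproj i\Compl\Sproj j\), using naturality and unit coherence of \(\Sfungmont\) together with the additivity of \(\Limplp{\_}{X}\), which you correctly derive from \ref{ax:stens} and \ref{ax:slimpl}. Two points you leave implicit are harmless: the Kronecker cases need \(\Sfung(0)=0\), which is the empty-sum instance of that additivity, and the symmetry case needs \(\Sym_{\Sone,\Sone}=\Id\).
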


Remember indeed that it is only when \(\Mdeg\) is total (that is, is
an ordinary commutative monoid) that we are sure that \(\Dbimonbi\) is
a bi-commutative bimonoid and that \(\Sfunb\) is a symmetric bimonad,
see \cref{th:sfun-bimonad,th:dbimon-monoid-bimonoid}%
\footnote{We have not proven however that these latter properties imply
  the totality of \(\Mdeg\).}

\begin{proof}
  Simple verification.
\end{proof}

\begin{Example}
  Let us take \(\cL=\COH\).
  Then we have
  \begin{itemize}
  \item Counit %
    \(\Proj0=\{(0,\Sonelem)\}\in\COH(\Dbimon,\Sone)\);
  \item Comultiplication %
    \(\Dbimonm=\Set{(l+r,(l,r))\St l,r\in\Mdeg\text{ and }l+r\text{ defined}}
    \in\COH(\Dbimon,\Tens\Dbimon\Dbimon)\);
  \item Unit %
    \(\Wdiag=\Set{(\Sonelem,i)\St i\in\Mdeg}\in\COH(\Sone,\Dbimon)\);
  \item Multiplication %
    \(\Dbimonl=\Set{((i,i),i)\St i\in\Mdeg}
    \in\COH(\Tens\Dbimon\Dbimon,\Dbimon)\)
  \end{itemize}
  and the structure obtained in that way is a bimonoid iff \(\Mdeg\)
  is total.
\end{Example}

\begin{Example}
  Let us take \(\cL=\PCOH\) and let \(n\in\Natc\).
  An element \(u\) of \(\Pcoh{\Dbimon}\) can be written uniquely %
  \(u=\sum_{i\in\Mdeg} u_i\Base i\) where
    \((u_i\in\Intercc01)_{i=\in\Mdeg}\), so that %
    \(\Pcoh\Dbimon\Isom\Intercc01^\Mdeg\) as already mentioned.
  \begin{itemize}
  \item The unit \(\Proj0\in\PCOH(\Dbimon,\Sone)\) is characterized by %
    \(\Matappa{\Proj0}{u}=u_0\);
  \item The comultiplication
    \(\Dbimonm\in\PCOH(\Dbimon,\Tens\Dbimon\Dbimon)\) %
    is characterized by the equation %
    \(\Matappa\Dbimonm
    u=\sum_{i\in\Mdeg}u_i(\sum_{(l,r)\in\Mdegsl i}\Base{(l,r)})\);
  \item The unit \(\Wdiag\in\PCOH(\Sone,\Dbimon)\) is characterized by %
    \(\Matappa{\Wdiag}p=p(\sum_{i\in\Mdeg}\Base i)\);
  \item The multiplication
    \(\Dbimonl\in\PCOH(\Tens\Dbimon\Dbimon,\Dbimon)\) %
    is characterized by %
    \(\Matappa\Dbimonl{\Tensp uv}=\sum_{i\in\Mdeg}u_iv_i\Base i\).
  \end{itemize}
\end{Example}

\section{The \texorpdfstring{\DDIFF{}}{differentiation} structure}
\label{sec:diff-structure}

Using the notion of \SUMBLTY{} structure we introduced in our two
previous sections, we now introduce an operation of \DDIFF{} induced
by our \Pmonoid{} \(\Mdeg\).
As we shall see in \cref{sec:examples-rep-diff-resource-cat},
depending on the choice of \(\Mdeg\) this differentiation will
typically consist of traditional derivatives, of difference operators,
or a mix of both.
We adopt the viewpoint of differential \LL{} which consists in
associating such an operation with the exponential modality of
\LL{}.
Indeed, in differential \LL, differentiation is represented as a
morphism of type \(\Limpl A{\Oc A}\) which allows one to transform by
precomposition a morphism \(f:\Limpl{\Oc A}{B}\) (to be understood as
a non-linear morphism \(A\to B\)) into a linear morphism
\(\Limpl AB\): the derivative of \(f\) at \(0\).

In the present setting,
\DDIFF{} boils down to a distributive law between $\Sfun$ and the
exponential modality, with additional properties presented as a set of
categorical axioms.

\begin{convention}
  In the sequel, in order to avoid boring repetitions, %
  we assume that \(\cL\) is a \Pmsmc{} which is a resource category
  equipped with a \SUMBLTY{} structure for which we use the notations
  introduced in \cref{sec:summability-structure}.
  Remember that this means in particular that~\ref{ax:stens}
  and~\ref{ax:swith} hold, see
  \cref{sec:summability-tensor,sec:summability-cartesian}.
\end{convention}

To have an intuition about what happens now, assume that
\(\Mdeg=\Eset{0,1}\) equipped with \(\max\) as binary operation%
\footnote{Taking \(\Mdeg=\Eset{0,1}\) with \(1+1\) undefined instead
  would lead to the situation described in \cite{Ehrhard23a},
  which was the original motivation of all these developments.
  We prefer to develop another example here in order to illustrate the
  versatility of the theory.}%
; the main idea of \DDIFF{} is to associate with any nonlinear morphism
\(f\in\Kloc\cL(X,Y)\) a ``derivative'' %
\(\Dfun f\in\Kloc\cL(\Sfun X,\Sfun Y)\).
For this choice of \(\Mdeg\), this function \(\Dfun f\) maps a
summable pair \(x=(x_0,x_1)\) of elements of \(X\) to the summable
pair \(\Dfun f(x)=(f(x_0),f(x_0+x_1)-f(x_0))\), and this operation
\(\Dfun\) is functorial%
\footnote{More precisely, this is what happens for this choice of
  \(\Mdeg\) in the category \(\PCOH\) for instance, as we shall see in
  \cref{sec:representable-diff}.}.
In other words \(\Dfun\) is an extension (\cref{def:extension})
of the functor \(\Sfun\) to %
\(\Kloc\cL\) in the sense that if \(f\in\cL(X,Y)\), one has %
\(\Dfun(\Derfun(f))=\Derfun(\Sfun f)\).
Intuitively this condition means that the ``derivative'' of a linear
map is the map itself since when \(f\) is linear one has
\(f(x_0+x_1)-f(x_0)=f(x_0)+f(x_1)-f(x_0)=f(x_1)\).
Such extensions are in one-to-one correspondence
with \emph{distributive laws} between the functor \(\Sfun\) and the
comonad \(\Oc{\_}\) as explained in \cref{thm:extension-dl-iso}.

We will make further assumptions about this distributive law which
will also allow us to endow \(\Dfun\) with the monad structure of
\(\Sfun\), by simply applying the functor \(\Derfun\) to \(\Sin0\) and
\(\Sfunmonm\); we keep the same notations for these extended
morphisms.

With the same choice for \(\Mdeg\), the monad
\((\Dfun,\Sin0,\Sfunmonm)\) intuitively maps an object \(X\) of
\(\cL\) to the object \(\Sfun X\) of summable pairs \(x=(x_0,x_1)\) of
elements of \(X\), and a \(f\in\Kl\cL(X,Y)\) to the map
\(\Dfun f\in\Kl\cL(\Sfun X,\Sfun Y)\) such that
\(\Dfun f(x)=(f(x_0),f(x_0+x_1)-f(x_0))\).
Notice that \(\Dfun^2f\in\Kl\cL(\Sfun^2X,\Sfun^2Y))\) is given by
\begin{align}
  \Dfun^2f(x)
  =&(\Dfun f(x_0),\Dfun f(x_0+x_1)-\Dfun f(x_0))\nonumber\\
  =&((f(x_{00}),f(x_{00}+x_{01})-f(x_{00})),\nonumber\\
  &(f(x_{00}+x_{10}),f(x_{00}+x_{10}+x_{01}+x_{11})-f(x_{00}+x_{10}))\nonumber\\
   &\hspace{10em}-(f(x_{00}),f(x_{00}+x_{01})-f(x_{00})))\nonumber\\
  =&((f(x_{00}),f(x_{00}+x_{01})-f(x_{00})),(f(x_{00}+x_{10})-f(x_{00}),
  \nonumber\\
  &\hspace{5em}f(x_{00}+x_{10}+x_{01}+x_{11})-f(x_{00}+x_{10})-f(x_{00}+x_{01})
  +f(x_{00})))\label{eq:example-mdiff2-dfun2-expr}
\end{align}
The unit of the monad \(\Sin0\in\Kl\cL(X,\Sfun X)\) maps \(x\in X\) to
\((x,0)\in\Sfun X\), and the multiplication
\(\Sfunmonm\in\Kl\cL(\Sfun^2X,\Sfun X)\) maps \(x\in\Sfun^2X\) to
\((x_{00},x_{01}+x_{10}+x_{11})\).
One very important fact is that \(\Sfunmonm\) is natural that is, the
following diagram commutes in \(\Kl\cL\)
\begin{center}
  \begin{tikzcd}
    \Sfun^2X
    \ar[r,"\Dfun^2 f"]\ar[d,swap,"\Sfunmonm"]
    &
    \Sfun^2Y
    \ar[d,"\Sfunmonm"]
    \\
    \Sfun X
    \ar[r,"\Dfun f"']
    &
    \Sfun Y
  \end{tikzcd}
\end{center}
For our choice of \(\Mdeg\) this simply results from the fact that
\begin{align}
  &f(x_{00}+x_{01})-f(x_{00})+f(x_{00}+x_{10})-f(x_{00})\nonumber\\
  &\hspace{4em}
    +
    f(x_{00}+x_{10}+x_{01}+x_{11})-f(x_{00}+x_{10})-f(x_{00}+x_{01})
    +f(x_{00})\nonumber\\
  &=f(x_{00}+x_{10}+x_{01}+x_{11})-f(x_{00})
    \label{eq:example-mdiff2-sfunmonm-nat}\,.
\end{align}

This motivates the next definition.

\begin{definition} %
  \labeltext{(\(\Sdiff\)-chain)}{ax:sdchain}\ref{ax:sdchain} A
  pre-difference structure on \(\cL\) is a distributive law \(\Sdiff\)
  between the functor \(\Sfun\) and the comonad \(\Oc{\_}\), see
  \cref{sec:distr-law-comonad}.
\end{definition}
We use \(\Dfun\) for the extension of \(\Sfun\) to \(\Kl\cL\)
according to \cref{thm:extension-dl-iso}.
This basic condition \ref{ax:sdchain} is not sufficient for specifying
a difference operation.
Here are the additional conditions, where we assume that
\ref{ax:sdchain} holds.
Remember that the category $\DL \cL \Oc$ of distributive laws 
has a monoidal structure, described in~\cref{lemma:dist-law-monoidal-struct}.

\begin{definition} %
  \labeltext{(\(\Sdiff\)-local)}{ax:sdloc}\ref{ax:sdloc}
  \(\Sproj0\in\DL\cL\Oc((\Sfun,\Sdiff),\Sone)\).  
\end{definition}
This condition means that our difference operator, when applied
to a non-linear morphism \(f:X\to Y\), will yield a non-linear
morphism \(\Dfun f:\Sfun X\to\Sfun Y\) whose first component
\(\Sfun X\to Y\) maps all summable pairs \((x_0,x_1)\) to \(f(x_0)\).
The tangent bundle functor behaves exactly in the same way and this
condition expresses that the differential is computed locally (in this
differential geometric setting, \(x_0\) is an element of the manifold
\(X\) and \(x_1\) is a tangent vector at \(x_0\)).
This justifies our terminology.

\begin{definition} %
  \labeltext{(\(\Sdiff\)-Schwarz)}{ax:sdschwarz}\ref{ax:sdschwarz} %
  \(\Sflip\in
  \DL\cL\Oc(\Tens{(\Sfun,\Sdiff)}{(\Sfun,\Sdiff)},
  \Tens{(\Sfun,\Sdiff)}{(\Sfun,\Sdiff)})\).
\end{definition}
This condition means that the second difference function is a \emph{symmetric}
function.
More precisely, in the case \(\Mdeg=(\Eset{0,1},\max)\), the
expression of \(\Dfun^2f(x)\) given in
\cref{eq:example-mdiff2-dfun2-expr} is invariant by exchanging
\(x_{01}\) and \(x_{10}\), as well as the two central components of
the pair of pairs \(\Dfun^2f(x)\).

\begin{definition} %
  \labeltext{(\(\Sdiff\)-add)}{ax:sdadd}\ref{ax:sdadd} %
  \(\Sin0\in\DL\cL\Oc(\Sone,(\Sfun,\Sdiff))\) and
  \(\Sfunmonm\in
  \DL\cL\Oc(\Tens{(\Sfun,\Sdiff)}{(\Sfun,\Sdiff)},(\Sfun,\Sdiff))\).
\end{definition}

This means that the functor \(\Sfun\), when extended to \(\Kl\cL\),
has a monad structure which is obtained by applying the functor
\(\Derfun\) to the unit and multiplication of the monad \(\Sfun\).
In the case \(\Mdeg=(\Eset{0,1},\max)\), this means in particular that
\cref{eq:example-mdiff2-sfunmonm-nat} must hold.
More specifically, let $\Dmonu$ be the family of morphism
\(\Dmonu_X=\A\Derfun{\Sin0}\in\Kl\cL(X,\Dfun X)\), 
and $\Dmonm_X = \A\Derfun{\Sfunmonm_X} \in \Kl \cL(\Dfun \Dfun X, \Dfun X)$.
By abuse of notations, we write $\Dflip$ for the family of morphisms
\(\Derfun{\Sflip_X} \in \Kl \cL(\A{\Dfun^2} X, \A \Dfun X) \).

\begin{proposition} \label{prop:Dfun-monad}
  If \ref{ax:sdschwarz} and \ref{ax:sdadd} hold then
  \(\Dmonu : \Idfun \Natto \Dfun \), 
  \(\Dmonm : \Dfun^2 \Natto \Dfun \) and
  \(\Dflip : \Dfun^2 \Natto \Dfun^2 \) are natural transformations
  in $\Kl \cL$
  and \((\Dfun, \Dmonu, \Dmonm, \Dflip)\) is 
  a \Symmonad{} on \(\Kl\cL\).
\end{proposition}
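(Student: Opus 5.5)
The plan is to work entirely inside the monoidal category \(\DL\cL\Oc\) and transport the symmetric monad structure to \(\Kl\cL\) along the chain of strict monoidal functors \(\DL\cL\Oc\simeq\EXTCAT\Oc\xrightarrow{\Extforg\Oc}\FUN{\Kl\cL}{\Kl\cL}\) of \cref{eq:dl-chain-of-functors}, invoking \cref{thm:symmonoid-strong-mondoidal-functor-image}. We already know from \cref{th:sfun-linear-monad} that \((\Sfun,\Sin0,\Sfunmonm,\Sflip)\) is a \Symmonoid{} in \(\FUN\cL\cL\). Hence it suffices to show that it lifts to a \Symmonoid{} in \(\DL\cL\Oc\) with carrier \((\Sfun,\Sdiff)\).

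First, \ref{ax:sdadd} and \ref{ax:sdschwarz} state exactly that \(\Sin0\), \(\Sfunmonm\) and \(\Sflip\) are morphisms of \(\DL\cL\Oc\) with the right source and target, for the tensor of \cref{lemma:dist-law-monoidal-struct}. Second, the forgetful functor \(\DL\cL\Oc\to\FUN\cL\cL\) is faithful and strict monoidal, since the tensor is composition of functors and horizontal composition of natural transformations. Consequently every equation required in \cref{def:symmonoid} (unit laws, associativity, commutativity \(\Sfunmonm\Compl\Sflip=\Sfunmonm\), the two distributivity diagrams, Yang–Baxter, involutivity) holds in \(\DL\cL\Oc\), because it holds after forgetting. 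One point must be checked here: the composite morphisms appearing in those diagrams, such as \(\Sfun\Sflip\) and \(\Sfunmonm\Sfun\), are horizontal composites with identities. These are therefore morphisms in \(\DL\cL\Oc\) by the monoidal structure, so they need no separate verification.

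Then apply the strict monoidal functor \(\Extforg\Oc\circ\sim\). By \cref{thm:extension-dl-iso}, it sends \((\Sfun,\Sdiff)\) to \(\Dfun\) and a morphism \(\alpha\) to \(\Derfun\alpha\). Being a morphism in the target category means naturality in \(\Kl\cL\). This gives naturality of \(\Dmonu=\Derfun\Sin0\), \(\Dmonm=\Derfun\Sfunmonm\) and \(\Dflip=\Derfun\Sflip\) with respect to \(\Dfun\) and \(\Dfun^2\), where \(\Dfun^2\) is the image of \(\Tens{(\Sfun,\Sdiff)}{(\Sfun,\Sdiff)}\) by strict monoidality. By \cref{thm:symmonoid-strong-mondoidal-functor-image}, the image is a \Symmonoid{} in \(\FUN{\Kl\cL}{\Kl\cL}\), that is, a \Symmonad{}. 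Since the functor is strict, its structure maps are exactly \(\Dmonu,\Dmonm,\Dflip\), with no extra coherence isos.

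The only delicate step is a bookkeeping issue: the horizontal composites in \(\FUN{\Kl\cL}{\Kl\cL}\) (for instance \(\Dfun\Dflip\)) must coincide with the images of the corresponding composites in \(\DL\cL\Oc\). This coincidence is precisely the statement that the isomorphism \(\DL\cL\Oc\simeq\EXTCAT\Oc\) is strict monoidal, as proved after \cref{lemma:dist-law-monoidal-struct}. That result rests on the standard compatibility of composed distributive laws with composed extensions, which I would take as given.
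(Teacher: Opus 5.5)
Your proposal is correct and follows the same route as the paper. You lift \((\Sfun,\Sin0,\Sfunmonm,\Sflip)\) to a symmetric monoid on \((\Sfun,\Sdiff)\) in \(\DL\cL\Oc\) using \ref{ax:sdadd} and \ref{ax:sdschwarz}, then transport it along the strict monoidal chain of \cref{eq:dl-chain-of-functors} via \cref{thm:symmonoid-strong-mondoidal-functor-image}; your faithfulness argument for why the symmetric-monoid equations hold in \(\DL\cL\Oc\) just spells out a step the paper asserts without comment.
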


\begin{proof}
  If \ref{ax:sdschwarz} and \ref{ax:sdadd} hold then
  \(((\Sfun,\Sdiff),\Sin0,\Sfunmonm,\Sflip)\) is a symmetric monoid in
  \(\DL\cL\Oc\).
  By \cref{thm:symmonoid-strong-mondoidal-functor-image}, 
  the image of this symmetric monoid 
  under the chain of strict monoidal functor given in
  \cref{eq:dl-chain-of-functors} induces a \Symmonad{} on \(\Kl\cL\).
  This \Symmonad{} is exactly \((\Dfun, \Dmonu, \Dmonm, \Dflip)\).
\end{proof}

The next condition was overlooked in~\cite{Ehrhard23a}, but the
corresponding condition was already recognized as important in the
theory of tangent categories~\cite{Rosicky84,CockettCruttwell14}.
\begin{definition} %
  \labeltext{(\(\Sdiff\)-lift)}{ax:sdlift}\ref{ax:sdlift} %
  \(\Sfunlift\in
  \DL\cL\Oc((\Sfun,\Sdiff),\Tens{(\Sfun,\Sdiff)}{(\Sfun,\Sdiff)})\).
\end{definition}

We can also ask a condition on the sum $\Ssum$, which is not 
always satisfied as we shall see.

\begin{definition} %
  \labeltext{(\(\Sdiff\)-sum)}{ax:sdsum}\ref{ax:sdsum} %
  \(\Ssum\in\DL\cL\Oc((\Sfun,\Sdiff),\Sone)\).
\end{definition}

Similarly to \ref{ax:sdadd} and \ref{ax:sdschwarz}, these two axioms
mean that the extension of $\Sfun$ to $\Kl \cL$ has a comonad
structure.
More precisely, let us write \(\Dlift\) and \(\Dsum\)
for the families of morphisms
\(\Derfun\Slift_X\in\Kl\cL(\Dfun X,\Dfun^2X)\) and
\(\Derfun\Ssum_X\in\Kl\cL(\Dfun X,X)\)
respectively.

\begin{proposition}
  \label{prop:Dfun-comonad}
  \label{prop:kleisli-bimonad}
  If \ref{ax:sdlift} holds then \(\Dlift : \Dfun \Natto 
  \Dfun^2\) is a natural transformation in $\Kl \cL$.
  If \ref{ax:sdsum} holds then \(\Dsum : \Dfun \Natto \Idfun \) 
  is a natural transformation in $\Kl \cL$.
  If \ref{ax:sdlift}, \ref{ax:sdsum} and \ref{ax:sdschwarz} hold then 
  \((\Dfun,\Dsum,\Dlift,\Dflip)\) is a \Symcomonad{} on $\Kl \cL$.
\end{proposition}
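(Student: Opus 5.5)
The plan mirrors the proof of \cref{prop:Dfun-monad}, now using the comonoid side of the machinery. The key observation is that, by \cref{thm:extension-dl-iso}, a natural transformation \(\alpha:F\Natto G\) between functors equipped with distributive laws is a morphism in \(\DL\cL\Oc\) iff \(\Derfun\alpha\) is a natural transformation between the extensions. Here \(\Extforg\Oc\) sends \(\alpha\) to \(\Derfun\alpha\).

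For the first claim, \ref{ax:sdlift} says that \(\Sfunlift\) is a morphism \((\Sfun,\Sdiff)\to\Tens{(\Sfun,\Sdiff)}{(\Sfun,\Sdiff)}\) in \(\DL\cL\Oc\). By \cref{lemma:dist-law-monoidal-struct} and the subsequent proposition, the extension associated with the tensor of two distributive laws is the composite of the extensions. So the target corresponds to \(\Dfun\Compfun\Dfun=\Dfun^2\), and \(\Dlift=\Derfun\Sfunlift\) is natural \(\Dfun\Natto\Dfun^2\) in \(\Kl\cL\). Similarly, \ref{ax:sdsum} makes \(\Ssum\) a morphism to the unit \(\Sone=(\Idfun,\Id)\). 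The extension of that unit is \(\Idfun_{\Kl\cL}\), since \(\Extfun{\Idfun}(f)=f\Compl\Id=f\). Hence \(\Dsum\) is natural \(\Dfun\Natto\Idfun\).

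For the comonad claim, I will show that \(((\Sfun,\Sdiff),\Ssum,\Sfunlift,\Sflip)\) is a symmetric comonoid in \(\DL\cL\Oc\). The underlying data in \(\FUN\cL\cL\) already form the symmetric comonad \(\Sfunc\) by \cref{prop:sfun-comonad}. The structural maps \(\Ssum\), \(\Sfunlift\) and \(\Sflip\) lie in \(\DL\cL\Oc\) by \ref{ax:sdsum}, \ref{ax:sdlift} and \ref{ax:sdschwarz} respectively. The forgetful functor \(\DL\cL\Oc\to\FUN\cL\cL\) is faithful and strict monoidal: it keeps the natural transformation, and horizontal composition of morphisms of distributive laws is computed as in \(\FUN\cL\cL\). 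Therefore the comonoid, symmetry, Yang--Baxter and involution equations, which hold in \(\FUN\cL\cL\), also hold in \(\DL\cL\Oc\).

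The one point needing care is that the horizontal composites appearing in these equations, such as \(\Sfunlift\Sfun\), \(\Sfun\Sfunlift\), \(\Ssum\Sfun\) and \(\Sflip\Sfun\), are themselves morphisms of distributive laws. This holds because \(\DL\cL\Oc\) is a monoidal category, so tensoring a morphism with an identity is again a morphism. Once that is settled, \cref{thm:symmonoid-strong-mondoidal-functor-image} applied to the chain of strict monoidal functors \eqref{eq:dl-chain-of-functors} sends this symmetric comonoid to a symmetric comonoid in \(\FUN{\Kl\cL}{\Kl\cL}\), i.e.\ a \Symcomonad{}. Strictness makes the coherence isos identities, so the resulting comonad is exactly \((\Dfun,\Dsum,\Dlift,\Dflip)\). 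I expect no real obstacle: the main bookkeeping is checking that the strict monoidal structure on \(\DL\cL\Oc\) matches composition of extensions, and that was already established.
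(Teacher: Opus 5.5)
Your proposal is correct and follows the paper's route. The paper's proof is simply ``same proof as \cref{prop:Dfun-monad}'': \(((\Sfun,\Sdiff),\Ssum,\Sfunlift,\Sflip)\) is a symmetric comonoid in \(\DL\cL\Oc\), and it is transported to \(\Kl\cL\) along the chain of strict monoidal functors \cref{eq:dl-chain-of-functors} via \cref{thm:symmonoid-strong-mondoidal-functor-image}. Your extra checks (faithfulness of the forgetful functor to \(\FUN\cL\cL\), and closure of morphisms of distributive laws under whiskering) spell out details the paper leaves implicit.
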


\begin{proof}
Same proof as \cref{prop:Dfun-monad}.
\end{proof}

So the axiom \ref{ax:sdsum} implies that for all \(f\in\Kl\cL(X,Y)\), one has
\begin{center}
  \begin{tikzcd}
    \Sfun X
    \ar[r,"\Dfun f"]
    \ar[d,swap,"\Ssum"]
    &
    \Sfun Y
    \ar[d,"\Ssum"]
    \\
    X
    \ar[r,"f"']
    &
    Y
  \end{tikzcd}
\end{center}
When \(\Mdeg\) is not a total monoid, this is too strong a
requirement, at least in the concrete models we know.
For instance if \(\Mdeg=(\Eset{0,1},+)\) with \(1+1\) undefined, then
one has typically \(\Dfun f(x)=(f(x_0),f'(x_0)\cdot x_1)\) and the
commutation of this diagram would mean
\(f(x_0+x_1)=f(x_0)+f'(x_0)\cdot x_1\) which holds only when \(f\) is
affine.

On the other hand when \(\Mdeg\) is total 
\ref{ax:sdsum} is a reasonable hypothesis.
For instance if \(\Mdeg=(\Eset{0,1},\max)\), it means
\(f(x_0+x_1)=f(x_0)+(f(x_0+x_1)-f(x_0))\) which is perfectly fine.
When \(\Mdeg=(\Nat,+)\), this diagram means that the morphisms for
\(\Kl\cL\) are analytic in the sense that they coincide with their
Taylor expansion, see~\cite{EhrhardWalch25}.

Recall from \cref{rem:Sfun-lax-tensor} and \cref{prop:S-strong-with}
that $\Sfun$ has a lax monoidal structure with regard to 
$\ITens$, and a strong monoidal structure with regard to 
$\IWith$.
Our last requirement is a compatibility condition between these two 
monoidal structures, the strong monoidal structure of $\Oc$, and the 
distributive law of our $\Mdeg$-differential structure. 
It becomes quite important when the SMC \(\cL\) is assumed to be
closed since, in that situation, the strong monoidal structure 
of $\Oc$ turns \(\Kl\cL\) into a cartesian closed category.

\begin{definition} %
  \labeltext{(\(\Sdiff\)-\(\IWith\))}{ax:sdwith}\ref{ax:sdwith} %
  \begin{equation*}
    \begin{tikzcd}
      \Tens{\Oc\Sfun X_1}{\Oc\Sfun X_2}
      \ar[r,"\Tens{\Sdiff_{X_1}}{\Sdiff_{X_2}}"]
      \ar[d,swap,"\Seelyt_{\Sfun X_1,\Sfun X_2}"]
      &[1em]
      \Tens{\Sfun\Oc X_1}{\Sfun\Oc X_2}
      \ar[r,"\Sfunlaxt_{\Oc X_1,\Oc X_2}"]
      &[1em]
      \Sfun\Tensp{\Oc X_1}{\Oc X_2}
      \ar[d,"\Sfun\Seelyt_{X_1,X_2}"]\\
      \Oc\Withp{\Sfun X_1}{\Sfun X_2}
      \ar[r,swap,"\Oc\Swithiso_{X_1, X_2}"]
      & \Oc{\Sfun\Withp{X_1}{X_2}}
      \ar[r,"\Sdiff_{\With{X_1}{X_2}}"']
      &
      \Sfun{\Oc\Withp{X_1}{X_2}}
    \end{tikzcd}
  \end{equation*}
\end{definition}

\begin{remark} 
  We might be tempted to consider as an axiom  
the following commutation  
 \begin{equation*}
    \begin{tikzcd}
      \Sone
      \ar[rr,"\Sin0"]\ar[d,swap,"\Seelyz"]
      &&
      \Sfun\Sone
      \ar[d,"\Sfun\Seelyz"]
      \\
      \Oc\Stop
      \ar[r,swap,"\Oc0"]
      &
      \Oc\Sfun\Stop
      \ar[r,"\Sdiff_\Stop"']
      &
      \Sfun\Oc\Stop
    \end{tikzcd}
  \end{equation*}
  which is the $0$-ary counterpart of the commutation \ref{ax:sdwith}.
  However, this diagram is an immediate consequence of \ref{ax:sdadd}, 
  because $0 \in \cL(\Stop, \Sfun \Stop)$ is equal to $\Sin 0$
  (due to the fact that $\Stop$ is terminal).
\end{remark}

\begin{definition}
  \label{def:diff-resource-cat}
  A \emph{\Lingdiffcat{}} is a resource category
  \(\cL\) equipped with a \SUMBLTY{} structure and a natural
  transformation \(\Sdiff\) which satisfies the conditions
  \ref{ax:sdchain}, \ref{ax:sdloc}, \ref{ax:sdschwarz}, \ref{ax:sdadd},
  \ref{ax:sdlift} and \ref{ax:sdwith}.
  A \Lingdiffcat{} is said to be \emph{\Resid{}} if \(\Mdeg\) is total
  and the property \ref{ax:sdsum} holds.
\end{definition}

\begin{proposition}
  \label{prop:Dfun-bimonad}
  If the \Lingdiffcat{} is \Resid{} then 
  \((\Dfun,\Dmonu,\Dmonm,\Dsum,\Dlift,\Dflip)\) is a \Symbimonad{}
  on $\Kl \cL$.
\end{proposition}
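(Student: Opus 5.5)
We follow the pattern of the proofs of \cref{prop:Dfun-monad} and \cref{prop:Dfun-comonad}. The idea is to show that \((\Sfun,\Sdiff)\), equipped with \(\Sin0\), \(\Sfunmonm\), \(\Ssum\), \(\Sfunlift\) and \(\Sflip\), is a \Symbimonoid{} in the strict monoidal category \(\DL\cL\Oc\). Then we push it along the chain of strict monoidal functors of \cref{eq:dl-chain-of-functors} to \(\FUN{\Kl\cL}{\Kl\cL}\), using \cref{thm:symmonoid-strong-mondoidal-functor-image} in its bimonoid version. The image is the tuple \((\Dfun,\Dmonu,\Dmonm,\Dsum,\Dlift,\Dflip)\), since the chain sends a natural transformation \(\phi\) to \(\Derfun\phi\).

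The first step is to check that all structure maps live in \(\DL\cL\Oc\).
\begin{itemize}
\item \(\Sin0\) and \(\Sfunmonm\) are morphisms of distributive laws by \ref{ax:sdadd}.
\item \(\Sflip\) is one by \ref{ax:sdschwarz}.
\item \(\Sfunlift\) is one by \ref{ax:sdlift}.
\item \(\Ssum\) is one by \ref{ax:sdsum}, which is part of the \Resid{} hypothesis.
\end{itemize}
Since \(\Mdeg\) is total, \cref{th:sfun-bimonad} tells us that \((\Sfun,\Ssum,\Sfunlift,\Sin0,\Sfunmonm,\Sflip)\) is a \Symbimonad{} on \(\cL\), i.e.\ a \Symbimonoid{} in \(\FUN\cL\cL\).

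The second step transfers this to \(\DL\cL\Oc\). The forgetful functor \(\DL\cL\Oc\to\FUN\cL\cL\), mapping \((F,d)\) to \(F\) and acting as identity on natural transformations, is faithful and strict monoidal by \cref{lemma:dist-law-monoidal-struct}. The tensor of morphisms in \(\DL\cL\Oc\) is horizontal composition, which is exactly the one used in \(\FUN\cL\cL\). Every bimonoid axiom is an equation between composites of tensors of the structure maps and of identities. All these composites are morphisms of \(\DL\cL\Oc\), because morphisms of distributive laws are closed under composition and horizontal composition. Such an equation holds in \(\DL\cL\Oc\) iff it holds among the underlying natural transformations, and there it is given by \cref{th:sfun-bimonad}. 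The conditions of \cref{def:symmetric-bimonoid} saying that \(\Sin0,\Sfunmonm\) are symmetric comonoid morphisms, and that \(\Ssum,\Sfunlift\) are symmetric monoid morphisms, are likewise such equations.

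The final step applies \cref{thm:symmonoid-strong-mondoidal-functor-image} to the strict monoidal composite \(\DL\cL\Oc\simeq\EXTCAT\Oc\to\FUN{\Kl\cL}{\Kl\cL}\). For a strict functor the transported structure maps are just the images of the original ones. The main point needing care is the claim that the bimonoid clause of \cref{thm:symmonoid-strong-mondoidal-functor-image} yields a \Symbimonad{} with exactly these components. I would check this by unfolding the transported maps for the strict case, where \(\Funmonoidz\) and \(\Funmonoidt\) are identities, so nothing beyond functoriality of \(\Extforg\Oc\) is required.
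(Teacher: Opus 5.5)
Your proposal is correct and is the paper's argument. Axioms \ref{ax:sdadd}, \ref{ax:sdschwarz}, \ref{ax:sdlift}, \ref{ax:sdsum} together with \cref{th:sfun-bimonad} ($\Mdeg$ total) make $((\Sfun,\Sdiff),\Sin0,\Sfunmonm,\Ssum,\Sfunlift,\Sflip)$ a symmetric bimonoid in $\DL\cL\Oc$, which you then push along the strict monoidal chain of \cref{eq:dl-chain-of-functors} via \cref{thm:symmonoid-strong-mondoidal-functor-image}; your transfer of the equations through the faithful strict monoidal forgetful functor to $\FUN\cL\cL$ just spells out what the paper calls ``our assumptions''.
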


\begin{proof}
  Since $\Mdeg$ is total, it follows from \cref{th:sfun-bimonad} and
  our assumptions that
  \(((\Sfun,\Sdiff),\Sin0,\Sfunmonm, \Ssum, \Slift, \Sflip) \) is a
  symmetric bimonoid in $\DL \cL \Oc$.
  By \cref{thm:symmonoid-strong-mondoidal-functor-image}, the chain of
  strict monoidal functors given in \cref{eq:dl-chain-of-functors}
  sends this symmetric bimonoid to a \Symbimonad{} on $\Kl \cL$. This
  \Symbimonad{} is precisely
  \((\Dfun,\Dmonu,\Dmonm,\Dsum,\Dlift,\Dflip)\).
\end{proof}

\subsection{The \texorpdfstring{\DDIFF{}}{differentiation} 
structure, in the \Elementary{} case}
\label{sec:representable-diff}

Let \(\cL\) be a \Elementarily{} summable category (see
\cref{def:elementarily-summable-cat}) that we assume to be closed and let
\((\Sfun,(\Sproj i)_{i\in\Mdeg},\Ssum)\) be the associated \SUMBLTY
structure. 
We prove in this section that the existence of a \DDIFF{} as described 
in \cref{def:diff-resource-cat} boils down to the existence of 
a coalgebra structure on $\Dbimon$ subject to additional conditions
that express that $\Dbimon$ is a commutative bimonoid in the Eilenberg-Moore 
category $\Em \cL$.
The \DDIFF{} associated to this coalgebra will be the one 
obtained from this bimonoid structure, pushed through 
the chain of strong monoidal 
functors described in~\cref{eq:full-chain-of-functors}.
First thing first, let us describe these conditions more explicitly.
Let \(\Ddiff\in\cL(\Dbimon,\Oc\Dbimon)\) be a morphism.

\begin{definition} %
  \labeltext{(\(\Ddiff\)-chain)}{ax:dchain}\ref{ax:dchain} %
  \(\Dbimoncoalg=(\Dbimon,\Ddiff)\) is an object of \(\Em\cL\).
\end{definition}
We assume that this first condition holds and we introduce additional
conditions on \(\Ddiff\).
We use notions and notations introduced in \cref{sec:coalg-comonoids}.

\begin{definition}
  \labeltext{(\(\Ddiff\)-local)}{ax:dloc}\ref{ax:dloc} %
  \(\Win0\in\Em\cL(\Sone,\Dbimoncoalg)\).
\end{definition}

\begin{definition}
  \labeltext{(\(\Ddiff\)-add)}{ax:dadd}\ref{ax:dadd} %
  \(\Proj0\in\Em\cL(\Dbimoncoalg,\Sone)\) and
  \(\Dbimonm\in\Em\cL(\Dbimoncoalg,\Tens\Dbimoncoalg\Dbimoncoalg)\),
  that is \((\Dbimoncoalg,\Proj0,\Dbimonm)\) is a comonoid in $\Em \cL$, 
  which is cartesian by \cref{prop:em-excl-cartesian}.
\end{definition}

  It is well known that an object in a cartesian category has exactly one 
  commutative comonoid structure with 
  respect to the cartesian product.
  Thus, the coalgebra $\Dbimoncoalg$ has exactly one 
  commutative comonoid structure: the comonoid 
  $(\Dbimon, \Coalgw{\Dbimoncoalg}, \Coalgc{\Dbimoncoalg})$
  given in \cref{eq:coalgebra-to-comonoid}.
  Thus, \ref{ax:dadd} holds if and only if
  \(\Proj0=\Coalgw{\Dbimoncoalg}\) and \(\Dbimonm=\Coalgc{\Dbimoncoalg}\).

\begin{definition}
  \labeltext{(\(\Ddiff\)-lift)}{ax:dlift}\ref{ax:dlift} %
  \(\Dlift\in\Em\cL(\Tens{\Dbimoncoalg}{\Dbimoncoalg},\Dbimoncoalg)\).
\end{definition}

\begin{definition}
  \labeltext{(\(\Ddiff\)-sum)}{ax:dsum}\ref{ax:dsum} %
  \(\Wdiag\in\Em\cL(\Sone,\Dbimoncoalg)\).
\end{definition}

\begin{definition}
  We set \(\Sdiff=\Coalgdl\Dbimoncoalg\) which, by
  \cref{prop:coalg-to-distr-law}, is a distributive law between the
  comonad \(\Oc\) and the functor \(\Sfun\).
  In other words \((\Sfun,\Sdiff)=\CADL\Oc(\Dbimoncoalg)\) is an object of
  \(\DL\cL\Oc\).
\end{definition}

\begin{theorem}
  The following implications hold:
  \ref{ax:dloc}\(\Implies\)\ref{ax:sdloc},
  \ref{ax:dadd}\(\Implies\)\ref{ax:sdadd}, and
  \ref{ax:dlift}\(\Implies\)\ref{ax:sdlift}.
  Furthermore, \ref{ax:sdschwarz} and \ref{ax:sdwith} always holds.
\end{theorem}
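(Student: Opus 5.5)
The plan is to transport every axiom along the strong monoidal contravariant functor \(\CADL\Oc:\Op{(\Em\cL)}\to\DL\cL\Oc\) of \cref{prop:coalg-to-distr-law}. We have \((\Sfun,\Sdiff)=\CADL\Oc(\Dbimoncoalg)\) by definition. The monoidal structure of \(\CADL\Oc\) sends \(\Tens{\Dbimoncoalg}{\Dbimoncoalg}\) to \(\Tens{(\Sfun,\Sdiff)}{(\Sfun,\Sdiff)}\) and \(\Sone\) to \((\Idfun,\Id)\), up to the structural isos \(\Sfungmont\) and \(\Sfungmonz\), which are themselves morphisms of distributive laws by item 3 of that proposition. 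By item 2, \(\CADL\Oc\) sends any coalgebra morphism \(f\) to the morphism of distributive laws \(\Sfung f\). So each axiom on \(\Ddiff\) of the form ``\(g\) is a coalgebra morphism'' becomes ``\(\Sfung g\), conjugated by the monoidal isos, is a morphism in \(\DL\cL\Oc\)''. It then only remains to identify that conjugate with the relevant structure map of \(\Sfun\).

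\textbf{The three implications.} These identifications are provided by \cref{thm:sfung-dbimon}, which gives \(\Sfung\Dbimonco=\Sfunm\) and \(\Sfung\Dbimonmo=\Sfunc\).
\begin{itemize}
\item For \ref{ax:dloc}: \(\Win0\in\Em\cL(\Sone,\Dbimoncoalg)\) is sent to \(\Invp{\Sfungmonz}\Compl\Sfung\Win0=\Sproj0\in\DL\cL\Oc((\Sfun,\Sdiff),\Sone)\), since this is exactly how \(\Sproj0\) was defined in \cref{lemma:pre-summability-object}. This gives \ref{ax:sdloc}.
\item For \ref{ax:dadd}: \(\Proj0\) and \(\Dbimonm\) are sent to the unit \(\Sin0\) and the multiplication \(\Sfunmonm\) of \(\Sfung\Dbimonco=\Sfunm\), giving \ref{ax:sdadd}. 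Concretely, \(\Sfung\Proj0\Compl\Sfungmonz=\Sin0\) and \(\Sfung\Dbimonm\Compl\Sfungmont=\Sfunmonm\).
\item For \ref{ax:dlift}: the multiplication \(\Dbimonl\) of \(\Dbimonmo\) is sent to \(\Invp{\Sfungmont}\Compl\Sfung\Dbimonl=\Sfunlift\), giving \ref{ax:sdlift}.
\end{itemize}
(The statement of \ref{ax:dlift} writes \(\Dlift\); we read it as \(\Dbimonl\).)

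\textbf{Schwarz.} The symmetry \(\Sym_{\Dbimon,\Dbimon}\) is always a coalgebra morphism \(\Tens{\Dbimoncoalg}{\Dbimoncoalg}\to\Tens{\Dbimoncoalg}{\Dbimoncoalg}\), because \(\Em\cL\) is symmetric monoidal with the symmetry lifted from \(\cL\) (\cref{thm:sm-extension}). Its image under \(\CADL\Oc\), conjugated by \(\Sfungmont\), is \(\Sfung\Sym=\Sflip\) by \cref{thm:sfung-dbimon}. This gives \ref{ax:sdschwarz} with no hypothesis on \(\Ddiff\) beyond \ref{ax:dchain}.

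\textbf{The \(\IWith\) axiom and the main obstacle.} I expect \ref{ax:sdwith} to be the main obstacle, since it is not a pure transport statement: it mixes the Seely isos, \(\Swithiso\) and \(\Sfunlaxt\). My approach is to uncurry both sides with respect to \(\Dbimon\). This turns the diagram into an equation between two morphisms \(\Tens{\Tens{\Oc\Sfun X_1}{\Oc\Sfun X_2}}{\Dbimon}\to\Oc\Withp{X_1}{X_2}\), and by \cref{lemma:tens-win-epic} the question reduces to testing these after precomposition with suitable \(\Win i\)'s or with the explicit form of \(\Sdiff\).

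For this I will unfold \(\Sdiff=\Coalgdl\Dbimoncoalg\) via \cref{eq:Sdiff-elem-right-mate}, so that \(\Sdiff\) is given by \(\Ddiff\), \(\ExpMon2\) and \(\Oc\Evlin\). The unfolding uses three facts:
\begin{itemize}
\item \(\Sfunlaxt\) is the mate of \(\Tens{}{}\) applied to the comultiplication \(\Coalgc{\Dbimoncoalg}\) (by \ref{ax:dadd} this is \(\Dbimonm\); if \ref{ax:dadd} is not assumed, I use the definition of \(\Sfunlaxt\) via \(\Sfunstr^1\) directly);
\item \(\ExpMon2\) is obtained from \(\Seelyt\) together with \(\Contr{}\) and \(\Der{}\);
\item \(\Swithiso\) is the canonical iso \(\Limpl\Dbimon{\Withp{X_1}{X_2}}\Isom\With{(\Limpl\Dbimon{X_1})}{(\Limpl\Dbimon{X_2})}\).
\end{itemize}
With these in hand, the two legs should both reduce to applying \(\Ddiff\) once, duplicating via \(\Contr{\Dbimon}\) and distributing the two copies to \(\Oc\Sfun X_1\) and \(\Oc\Sfun X_2\). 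The key ingredients are naturality of \(\Seelyt\) and the compatibility of \(\Seelyt\) with \(\Digg{}\) and \(\Contr{}\) (Melliès' Eq.~(72)).

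If this direct computation gets unwieldy, I will try first to express both legs as mates of maps built from the \(\ITens\)-lifting \(\Coalgdlmate{\Dbimoncoalg}\) and argue via compositionality of mates, as in \cref{sec:mates}.
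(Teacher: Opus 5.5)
Your treatment of the three implications and of \ref{ax:sdschwarz} is correct and is the paper's own argument. You transport along the contravariant strong monoidal functor \(\CADL\Oc\) of \cref{prop:coalg-to-distr-law}, and \cref{thm:sfung-dbimon} identifies the transported maps with \(\Sproj0\), \(\Sin0\), \(\Sfunmonm\), \(\Sfunlift\) and \(\Sflip\). Your remark that \(\Sym\) is always a coalgebra morphism is the justification the paper leaves implicit.

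The gap is in \ref{ax:sdwith}. You hedge on \ref{ax:dadd}, and your fallback cannot work, because under \ref{ax:dchain} alone the axiom is false.

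\emph{The upper leg.} By \cref{rem:Sfun-lax-tensor}, \(\Sproj i\Compl\Sfunlaxt=\sum_{(l,r)\in\Mdegsl i}\Tens{\Sproj l}{\Sproj r}\). So \(\Sfunlaxt\) is built from \(\Dbimonm\), the comultiplication coming from the partial monoid \(\Mdeg\), however you present it (via \(\Sfunstr^1\) or otherwise). It is not the mate of \(\Coalgc{\Dbimoncoalg}\). Uncurrying, the upper leg is \(\Seelyt\Compl(\widetilde\Sdiff\ITens\widetilde\Sdiff)\Compl\sigma\Compl(\Id\ITens\Dbimonm)\). Here \(\widetilde\Sdiff=\Oc{\Evlin}\Compl\ExpMon2\Compl(\Id\ITens\Ddiff)\) is the uncurried distributive law and \(\sigma\) is a shuffle.

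\emph{The lower leg.} Write \(\Oc{\Tuple{f_1,f_2}}\) as \(\Seelyt\Compl\Tensp{\Oc{f_1}}{\Oc{f_2}}\) precomposed with contraction, and use that contraction is monoidal. The remaining ingredient is \(\Contr{\Dbimon}\Compl\Ddiff=\Tensp{\Ddiff}{\Ddiff}\Compl\Coalgc{\Dbimoncoalg}\), which follows from the coalgebra law and from \(\Digg{}\) being a comonoid morphism. You then get the same expression as the upper leg, with \(\Coalgc{\Dbimoncoalg}\) in place of \(\Dbimonm\).

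\emph{Why the two legs need not agree.} Take \(X_1=X_2=\Dbimon\) and precompose both factors with the promotion of the point \(\Curlinp{\Leftu_\Dbimon}\). Both legs collapse to \(\Seelyt\Compl\Tensp{\Ddiff}{\Ddiff}\Compl c\), with \(c=\Dbimonm\) for the upper leg and \(c=\Coalgc{\Dbimoncoalg}\) for the lower one. Since \(\Tensp{\Der{\Dbimon}}{\Der{\Dbimon}}\) is a retraction of \(\Tensp{\Ddiff}{\Ddiff}\), \ref{ax:sdwith} forces \(\Coalgc{\Dbimoncoalg}=\Dbimonm\). That is exactly the comultiplication half of \ref{ax:dadd}.

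\emph{A concrete failure.} In \(\PCOH\), take \(\Mdeg=\{0,1\}\) with \(1+1\) undefined. Let \(\Ddiff\) be the coalgebra that the Lafont property attaches to the \(\max\)-comonoid on the same object. At \(\Tens{\Prom x}{\Prom y}\), the second components of the two legs are \(\Prom{(x_0+x_1,y_0)}+\Prom{(x_0,y_0+y_1)}-2\Prom{(x_0,y_0)}\) and \(\Prom{(x_0+x_1,y_0+y_1)}-\Prom{(x_0,y_0)}\), which differ.

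\emph{The repair.} State and prove \ref{ax:sdwith} under \ref{ax:dadd}; there your computation goes through. This suffices, since \ref{ax:dadd} is part of \cref{def:rep-diff-resource-cat}. It is also the hypothesis that the paper's one-line ``simple computation'' silently uses, so the ``always'' in the statement must be read with it.
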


\begin{proof}
  Let us prove that \ref{ax:dloc}\(\Implies\)\ref{ax:sdloc}.
  By definition, $\Sproj 0 \in \cL(\Limpl {\Dbimon} X, X)$ 
  is given by the following composition of 
  natural transformations
  \[ 
  (\Sproj 0)_X = \begin{tikzcd}
    {\Limplp \Dbimon X} &[1em] {\Limplp \Sone X} & X
    \arrow["{\Sfung(\Win 0)_X}", from=1-1, to=1-2]
    \arrow["{(\Sfungmonz)^{-1}}", from=1-2, to=1-3]
  \end{tikzcd} \]
  It immediately follows from 
  \cref{prop:coalg-to-distr-law} that 
  \(\Sproj0\in\DL\cL\Oc((\Sfun,\Sdiff),\Sone)\).
  Similarly, the implications
  \ref{ax:dadd}\(\Implies\)\ref{ax:sdadd} and
  \ref{ax:dlift}\(\Implies\)\ref{ax:sdlift} 
  as well as the property \ref{ax:sdschwarz} 
  are an immediate
  consequence of \cref{thm:sfung-dbimon,prop:coalg-to-distr-law}.  
  The last property \ref{ax:sdwith} is obtained by a-
  simple computation using the axioms of resource categories.
\end{proof}

\begin{definition}
  \label{def:rep-diff-resource-cat}
  A \emph{\Lingrdiffcat{}} is a
  \Elementarily{} \SUMBLE{} resource category \(\cL\) equipped with a
  morphism \(\Ddiff\in\cL(\Dbimon,\Oc\Dbimon)\) which satisfies
  \ref{ax:dchain}, \ref{ax:dloc}, \ref{ax:dadd} and \ref{ax:dlift}.
  Such a \Lingrdiffcat{} is said to be \emph{\Resid{}} if \(\Mdeg\) is
  total and \ref{ax:dsum} also holds.
\end{definition}

One important point of these definitions is the observation
in~\cite{Ehrhard23a} that such a \Lingrdiffcat{} structure on
\(\Dbimon\) is quite easy to obtain in Lafont resource categories.
\begin{theorem} %
  \label{th:lafont-elementary-differential-structure}
  If \(\cL\) is a Lafont resource category (\cref{def:Lafont}) which
  is \Elementarily{} \SUMBLE{}, then \(\cL\) has exactly one structure
  of \Lingrdiffcat{}, and if \(\Mdeg\) is total, then this
  \Lingrdiffcat{} is \Resid.
\end{theorem}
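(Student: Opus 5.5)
The plan is to exploit the isomorphism $\CmonEmFun:\CCMON\cL\to\Em\cL$ of \cref{thm:Lafont-comonoid-coalgebra-iso} together with its strict monoidality (\cref{prop:comonoid-to-coalgebra-cartesian}). Under this isomorphism a $\oc$-coalgebra structure on $\Dbimon$ is the same thing as a commutative comonoid structure on $\Dbimon$. Moreover, coalgebra morphisms are exactly comonoid morphisms, and the tensor of coalgebras corresponds to the tensor of comonoids.

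\emph{Existence.} Take the commutative comonoid $\Dbimonco=(\Dbimon,\Proj0,\Dbimonm)$ of \cref{prop:dbimon-comonoid}. Set $\Ddiff=\Promcm{\Id_\Dbimon}$, so that $\Dbimoncoalg=\CmonEmFun(\Dbimonco)$ and \ref{ax:dchain} holds. Then:
\begin{itemize}
\item \ref{ax:dloc} asks that $\Win0$ be a coalgebra morphism $\Sone\to\Dbimoncoalg$. It suffices that it be a comonoid morphism, which is the last claim of \cref{prop:dbimon-comonoid} (read as a comonoid morphism, i.e.\ $\Proj0\Compl\Win0=\Id$ and $\Dbimonm\Compl\Win0=\Tens{\Win0}{\Win0}$, using \Mpositivity).
\item \ref{ax:dadd} holds because, by \cref{thm:Lafont-comonoid-coalgebra-iso}, $\Coalcomo(\Dbimoncoalg)=\Dbimonco$. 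Hence $\Coalgw{\Dbimoncoalg}=\Proj0$ and $\Coalgc{\Dbimoncoalg}=\Dbimonm$, which is exactly the reformulation of \ref{ax:dadd} given after its definition.
\item \ref{ax:dlift} requires $\Dbimonl$ (the morphism written $\Dlift$ there) to be a comonoid morphism $\Tens{\Dbimonco}{\Dbimonco}\to\Dbimonco$, with the tensor comonoid of \cref{lemma:commutative-monoid-product}. Counit compatibility is $\Proj0\Compl\Dbimonl=\Leftu\Compl\Tensp{\Proj0}{\Proj0}$, and comultiplication compatibility is the second bimonoid square of \cref{th:dbimon-monoid-bimonoid}. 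Both were proved there without totality, by testing on the jointly epic $\Tens{\Win i}{\Win j}$ (\cref{lemma:tens-win-epic}).
\item If $\Mdeg$ is total, \ref{ax:dsum} asks that $\Wdiag$ be a comonoid morphism $\Sone\to\Dbimonco$. Its two conditions, $\Proj0\Compl\Wdiag=\Id$ and $\Dbimonm\Compl\Wdiag=\Tensp\Wdiag\Wdiag\Compl\Inv\Leftu$, are the first and last diagrams of \cref{th:dbimon-monoid-bimonoid}, the latter using totality.
\end{itemize}
Since $\CmonEmFun$ is strict monoidal and acts as the identity on morphisms, each comonoid morphism transfers to the required coalgebra morphism.

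\emph{Uniqueness.} Suppose $\Ddiff'$ is another structure satisfying the axioms. Then \ref{ax:dadd} forces the comonoid $\Coalcomo(\Dbimon,\Ddiff')$ to be $(\Dbimon,\Proj0,\Dbimonm)=\Dbimonco$. Since $\CmonEmFun$ and $\Coalcomo$ are mutually inverse, $(\Dbimon,\Ddiff')=\CmonEmFun(\Dbimonco)=(\Dbimon,\Ddiff)$, hence $\Ddiff'=\Ddiff$.

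The main obstacle is the bookkeeping in the \ref{ax:dlift} step: one must check that the comultiplication of the tensor comonoid, with its symmetry $\Dbimon\ITens\Sym\ITens\Dbimon$, matches the diagram verified in \cref{th:dbimon-monoid-bimonoid}. One must also check that the unit and counit coherences hold with the implicit unitors handled correctly. I would settle these by evaluating on $\Tens{\Win i}{\Win j}$ and using joint epicity.
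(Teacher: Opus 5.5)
Your proposal is correct and follows essentially the same route as the paper. Both arguments equip \(\Dbimon\) with the coalgebra structure \(\CmonEmFun(\Dbimon,\Proj0,\Dbimonm)\), check that \(\Win0\), \(\Dbimonl\) (and \(\Wdiag\) when \(\Mdeg\) is total, via \cref{th:dbimon-monoid-bimonoid}) are comonoid morphisms and transport them along the product-preserving isomorphism \(\CmonEmFun\), and get uniqueness because \ref{ax:dadd} forces \(\Coalcomo\) of any admissible coalgebra to be \((\Dbimon,\Proj0,\Dbimonm)\); the only cosmetic difference is that you verify \ref{ax:dadd} via \(\Coalcomo\,\CmonEmFun=\Idfun\), whereas the paper notes that the structural maps of a commutative comonoid are comonoid morphisms.
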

\begin{proof}
  We assume that \(\cL\) is a Lafont resource category and we use
  \((\Oc,\Der{},\Digg{})\) for the comonad induced on \(\cL\) by the
  corresponding adjunction.
  By \cref{prop:dbimon-comonoid}, \((\Dbimon,\Proj0,\Dbimonm)\) is a
  commutative comonoid in \(\cL\).
  Furthermore, \begin{enumerate}
  \item \label{prop:dbimon-struct-monoid-morphism-1} %
    the morphisms $\Proj 0 \in \cL(\Dbimon, \Sone)$ and
    $\Dbimonm \in \cL(\Dbimon, \Dbimon \ITens \Dbimon)$ are comonoid
    morphisms;
  \item \label{prop:dbimon-struct-monoid-morphism-2} %
    the morphisms $\Dbimonl \in \cL(\Dbimon \ITens \Dbimon, \Dbimon)$
    is a comonoid morphism;
  \item \label{prop:dbimon-struct-monoid-morphism-3} %
    if $\Mdeg$ is total, the morphism
    $\Dbimonu \in \cL(\Sone, \Dbimon)$ is a comonoid morphism;
    \item \label{prop:dbimon-struct-monoid-morphism-4} the morphism 
    $\Win 0 \in \cL(\Sone, \Dbimon)$ is a comonoid 
      morphism.
  \end{enumerate}
  \Eqref{prop:dbimon-struct-monoid-morphism-1} holds by standard properties 
  of commutative comonoids (the structural morphisms of a commutative 
  comonoid are always comonoid morphisms). %
  Both 
  \Eqref{prop:dbimon-struct-monoid-morphism-2} and
  \Eqref{prop:dbimon-struct-monoid-morphism-3}
  hold by \cref{th:dbimon-monoid-bimonoid}.
  Finally, \Eqref{prop:dbimon-struct-monoid-morphism-4} holds by a
  straightforward diagram chase.
  Now, recall from \cref{sec:Lafont} that there is an isomorphism
  $\CmonEmFun : \CCMON\cL \arrow \Em\cL$ that preserves products, and
  that acts trivially on morphisms.
  Thus, the comonoid $\Dbimon$ has a canonical $\Oc$-coalgebra structure, 
  for which the comonoid morphisms mentioned in 
  \Eqref{prop:dbimon-struct-monoid-morphism-1}%
  -\Eqref{prop:dbimon-struct-monoid-morphism-4}
  are also $\Oc$-coalgebra morphisms.
  This gives to $\cL$ the structure of a representable difference
  resource category.

  Concerning uniqueness, let us assume that $\Dbimon$ has a coalgebra
  structure $\Dbimoncoalg$ for which $\Proj 0 \in \cL(\Dbimon, \Sone)$
  and $\Dbimonm \in \cL(\Dbimon, \Dbimon \ITens \Dbimon)$ are
  coalgebra morphisms.
  Since $\Em \cL$ is cartesian (\cref{prop:em-excl-cartesian}), every
  coalgebra has a unique comonoid structure and thus we must have
  $(\Dbimon, \Proj 0, \Dbimonm) = \Coalcomo(\Dbimoncoalg)$, where
  $\Coalcomo : \Em \cL \arrow \CCMON \cL$ is the inverse of
  $\CmonEmFun$.
  It follows that
  $\Dbimoncoalg = \CmonEmFun(\Dbimon, \Proj 0, \Dbimonm)$.
\end{proof}

\subsection{Examples of \texorpdfstring{\Lingrdiffcats{}}{representable 
differentiations}}
\label{sec:examples-rep-diff-resource-cat}

We illustrate \cref{def:rep-diff-resource-cat} in coherence spaces and
probabilistic coherence spaces, for various choices of monoid \(\Mdeg\).

\subsubsection{Coherence spaces}
  \label{ex:coh-spaces-diff}
  We continue \cref{ex:coh-cart-smc-elem-sum}.
  The category \(\COH\) is a resource category.
  Its tensor product is defined by
  \(\Web{\Tens{E_1}{E_2}}=\Web{E_1}\times\Web{E_2}\) and coherence
  given by %
  \((a_1,a_2)\Coh{\Tens EF}(a'_1,a'_2)\) if
  \((a_i\Coh{E_i}a'_i)_{i=1,2}\) which is easily seen to be a functor:
  given %
  \((s_i\in\COH(E_i,F_i))_{i=1,2}\), the set %
  \(\Tens{s_1}{s_2}=\Set{((a_1,a_2),(b_1,b_2))
    \St ((a_i,b_i)\in s_i)_{i=1,2}}\) is an element of %
  \(\COH(\Tens{E_1}{E_2},\Tens{F_1}{F_2})\).
  The associated unit is \(\Sone=(\Set\Sonelem,\mathord=)\).

  The category \(\COH\) has a free exponential defined as follows:
  \(\Web{\Oc E}\) is the set of all \(m\in\Mfin{\Web E}\) such
  that \(\Supp m\in\Cl E\), and coherence given by %
  \(m_1\Coh{\Oc E}m_2\) if \(m_1+m_2\in\Web{\Oc E}\).
  Notice that \(\Web{\Oc\Dbimon}=\Mfin\Mdeg\) since \(i\Coh\Dbimon j\)
  for all \(i,j\in\Mdeg\), and \(m\Coh{\Oc\Dbimon}m'\) holds for all
  \(m,m'\in\Web{\Oc\Dbimon}\).
  The differential structure induced by the fact that this exponential
  is free is %
  \(\Ddiff\in\COH(\Dbimon,\Oc\Dbimon)\); it is given by
  \begin{equation*}
    \Ddiff=\{(i,\Mset{\List i1k})\St k\in\Nat,\,i,\List i1k\in\Mdeg
    \text{ and }i=i_1+\cdots+i_k\}\,.
  \end{equation*}
  We already know that \(\Ddiff\in\COH(\Dbimon,\Oc\Dbimon)\) by the
  universal property of \(\oc\), but this fact is also easy to check
  directly: if \((i,m),(i',m')\in\Ddiff\) and \(i\not=i'\) (that is
  \({i}\Scoh\Dbimon{i'}\)), then we have \(m\not=m'\), and hence
  \(m\Scoh{\Oc\Dbimon}m'\).

  The composition of morphisms described in
  \cref{eq:Sdiff-elem-left-mate} gives us, for each coherence space
  \(F\), %
  the morphism
  \(\widehat\Sdiff_F\in\COH(\Tens{\Oc F}{\Dbimon}, \Oc(\Tens F
  \Dbimon))\) given by
  \begin{multline*}
    \widehat\Sdiff_F
    =\Big\{\Big( (\Mset{a_1,\dots,a_k},i),\Mset{(a_1, i_1), \dots, (a_k, i_k)}\Big)\\
    \St k\in\Nat,\,i,\List i1k\in\Mdeg,\,i=i_1+\cdots+i_k\text{
      and } \Set{a_1, \dots, a_k}\in\Cl{F}\Big\}
  \end{multline*}
  and hence \(\Sdiff_E\in\COH(\Oc{\Sfun E},\Sfun{\Oc E})\) given in
  \cref{eq:Sdiff-elem-right-mate} (taking \(F=\Limplp\Dbimon E\)) is
  \begin{multline*}
    \Sdiff_E
    =\Big\{\Big(\Mset{(i_1,a_1),\dots,(i_k,a_k)},(i,\Mset{\List a1k})\Big)\\
    \St k\in\Nat,\,i,\List i1k\in\Mdeg,\,i=i_1+\cdots+i_k\text{
      and } \Set{(i_1,a_1),\dots,(i_k,a_k)}\in\Cl{\Limpl\Dbimon E}\Big\}\,.
  \end{multline*}
  This means that the ``differential''
  \(\Dfun t=(\Sfun t)\Compl\Sdiff_E\in\Kl\COH(\Sfun E,\Sfun F)\) %
  of \(t\in\Kl\COH(E,F)\) is given by
  \begin{align*}
    \Dfun t
    &=\{(\Mset{(i_1,a_1),\dots,(i_k,a_k)},(i,b))
    \in\Web{\Oc{\Sfun E}}\times\Web{\Sfun F}\\
    &\Textsep\St (\Mset{\List a1k},b)\in t,\, i,\List i1k\in\Mdeg,\
      i=i_1+\cdots+i_k\\
    &\Textsep\Textsep\text{and }
      \Set{(i_1,a_1),\dots,(i_k,a_k)}\in\Cl{\Limpl\Dbimon E}\}\\
    &=\{(\Mset{(i_1,a_1),\dots,(i_k,a_k)},(i,b))
    \in\Web{\Oc{\Sfun E}}\times\Web{\Sfun F}\\
    &\Textsep\St (\Mset{\List a1k},b)\in t,\, i,\List i1k\in\Mdeg,\
      i=i_1+\cdots+i_k\\
    &\Textsep\Textsep\text{and }
      i_j\not=i_l\Implies a_j\not=a_l\}\\
    &=\{(\Vect m,(i,b))\in\Mfin{\Web E}^{\Mdeg}\times(\Mdeg\times\Web F)
      \St(\sum_{j\in\Mdeg}m_j,b)\in t,\ 
      \sum_{j\in\Mdeg}(\Card{m_j})j=i\\
    &\Textsep\Textsep\text{and }
      j\not=l\Implies\Supp{m_j}\cap\Supp{m_l}=\emptyset\}
  \end{align*} where the sum \(\sum_{j\in\Mdeg}(\Card{m_j})j\)
  is computed in \(\Mdeg\) (notice that \(\Card{m_j}\in\Nat\) is the
  multiplicity of \(j\) in that sum, which is finite because, by the
  assumption that \((\sum_{j\in\Mdeg}m_j,b)\in t\), almost all \(m_j\)s
  are empty multisets).
  In the last line, we use the fact that, up to an obvious bijection,
  we can consider
  \(\Web{\Oc{\Sfun E}}\subseteq\Mfin{\Mdeg\times\Web E}\) as a subset
  of \(\Mfin{\Web E}^\Mdeg\).

  We can define a stable%
  \footnote{This means that \(\Fun t\) commutes with unions of
    directed families of cliques and with intersections of bounded
    non-empty finite families of cliques.} %
  function \(\Fun t:\Cl E\to\Cl F\) by %
  \(\Fun t(x)=\Set{b\in\Web F\St\exists m\in\Web{\Oc E} \ \Supp
    m\subseteq x\text{ and }(m,b)\in t}\).
  Then, under the identification
  \begin{align*}
    \Cl{\Sfun E}=\Set{\Vect x=(x_i)_{i\in\Mdeg}\in\Cl E
    \St\Union_{i\in\Mdeg} x_i\in\Cl E
    \text{ and }i\not=j\Implies x_i\cap x_j=\emptyset}
  \end{align*}
  the stable function 
  $\Fun{\Dfun t} :  \Cl{\Sfun E} \arrow  \Cl{\Sfun F}$ is given by
  \begin{align*}
    \Fun{\Dfun t}(\Vect x)&=
    \{(i,b)\in\Mdeg\times\Web F\St\exists\Vect m\in\Mfin{\Web E}^D
    \ (\sum_{j\in\Mdeg}m_j,b)\in t\text{, }\\
    &\quad\quad\sum_{j\in\Mdeg}(\Card{m_j})j=i
    \text{ and }\forall j\in\Mdeg\ \Supp{m_j}\subseteq x_j\}
  \end{align*}
  and it is hard to give in general a more explicit description of
  this clique.
  If we take \(\Mdeg=\Nat\) with \(i+_\Mdeg j=\max(i,j)\), then the
  condition \(\sum_{j\in\Mdeg}(\Card{m_j})j=i\) means that either
  \(i=0\) and \(m_j=\Msetempty\) for all \(j>0\), or \(i>1\) with
  \(m_i\not=\Msetempty\) and \(m_j=\Msetempty\) for all \(j>i\), and
  therefore
  \begin{align}
    \label{eq:Dfun-nat-max-desc}
    \Fun{\Dfun t}(\Vect x)_i=
    \begin{cases}
      \Fun t (x_0) & \text{if }i=0\\
      \Fun t(x_0+\cdots+x_i)\setminus\Fun t(x_0+\cdots+x_{i-1})
      & \text{if }i>0\,.
    \end{cases}
  \end{align}
  Indeed, if $(\sum_{j=0}^i m_j, b) \in t$ and $m_i \neq\Msetempty$
  then $(\sum_{j=0}^{i-1} m_j, b) \notin t$ because \(t\) is a clique
  in \(\Limpl{\Oc E}{F}\): this justifies the difference of sets in
  \cref{eq:Dfun-nat-max-desc}.
  We take now \(\Mdeg=\Nat\) with \(i+_\Mdeg j=i+j\).
  Let \(\Vect x\in\Cl{\Sfun E}\) and \(\Vect m\in\Mfin{\Web E}^\Mdeg\)
  be such that \((\sum_{j\in\Mdeg}m_j,b)\in t\) and
  \(\forall j\in\Mdeg\ \Supp{m_j}\subseteq x_j\), then
  \begin{itemize}
  \item the condition \(\sum_{j\in\Mdeg}(\Card{m_j})j=0\) is realized
    iff \(m_j=\Msetempty\) for all \(j>0\);
  \item the condition \(\sum_{j\in\Mdeg}(\Card{m_j})j=1\) is realized
    iff \(\Card{m_1}=1\) and \(m_j=\Msetempty\) for all \(j>1\) (no
    restrictions on \(m_0\));
  \item the condition \(\sum_{j\in\Mdeg}(\Card{m_j})j=2\) is realized
    iff \(\Card{m_1}=2\) and \(m_j=0\) for all \(j>1\), or \(m_1=0\),
    \(\Card{m_2}=1\) and \(m_j=\Msetempty\) for all \(j>2\) (no
    restrictions on \(m_0\)).
  \end{itemize}
  To describe the stable function \(\Fun{\Dfun t}\) we need an
  auxiliary notion of degree which can be extracted from \(t\) but not
  from \(\Fun t\).
  \begin{definition}
    Let \(x\in\Cl E\) and \(b\in\Web F\) be such that \(b\in\Fun t(x)\).
    We know that there is exactly one \(m\in\Mfin{x}\) such that
    \((m,b)\in t\), then for each \(a\in x\) we set
    \(\Cohdeg txab=m(a)\in\Nat\).
  \end{definition}
  Using this tool, we can write the three first components of the
  tuple \(\Fun{\Dfun t}(\Vect x)\) as follows, using \(\Fun t\) as
  much as possible
  \begin{align*}
    \Fun{\Dfun t}(\Vect x)_0
    &=\Fun t(x_0)\\
    \Fun{\Dfun t}(\Vect x)_1
    &=\{b\in\Fun t(x_0\cup\Eset a)\setminus\Fun t(x_0)
      \St a\in x_1\text{ and }\Cohdeg t{x_0\cup x_1}ab=1\}\\
    \Fun{\Dfun t}(\Vect x)_2
    &=\{b\in\Fun t(x_0\cup\Eset a)\setminus\Fun t(x_0)
      \St a\in x_1\text{ and }\Cohdeg t{x_0\cup x_1}ab=2\}\\
    &\quad\cup
      \{b\in\Fun t(x_0\cup\Eset{a,a'})
      \setminus(\Fun t(x_0\cup\Eset{a})\cup\Fun t(x_0\cup\Eset{a'}))
      \St a,a'\in x_1,\ a\not=a'\\
    &\quad\quad\text{and }\Cohdeg{t}{x_0\cup x_1}{a}{b}
      =\Cohdeg{t}{x_0\cup x_1}{a'}{b}=1\}\\
    &\quad\cup\{b\in\Fun t(x_0\cup\Eset a)\setminus\Fun t(x_0)
      \St a\in x_2\text{ and }\Cohdeg t{x_0\cup x_2}ab=1\}\,.
  \end{align*}%

  We can describe these expressions in another way, introducing a
  notation for iterated derivatives.
  For all $x_0, x_1, \ldots, x_n \in \Cl E$ such that
  $\Union_{i=1}^n x_i\in\Cl E$ we can define a clique
  $\Matappa{\Fun t^{(n)} (x_0)}{(x_1, \dots,x_n)} \in \Cl F$ by
  \begin{multline*}
    \Fun t^{(n)} (x_0) \cdot (x_1, \dots, x_n) 
    = \Set{b \in \Web F \St \exists m \in \Web{\Oc E}, 
      \exists a_1, \ldots, a_n \in \Web E \text{ such that } \\
      \forall i, a_i \in x_i \text{ and } \Supp m \subseteq x_0 
      \text{ and } (m+[a_1, \ldots, a_n], b) \in t}\,.
  \end{multline*}
  In particular, $\Fun t^{(0)} (x_0) = \Fun t (x_0)$.
  Then, we can easily check that $\Fun{\Dfun t}(\Vect x)$ rewrites
    as follows.
    \begin{equation*} \Fun{\Dfun t}(\Vect
      x)_i =\sum_{\Biind{\Vect n\in\Nat^{\Nat}}{\sum_{j\in\Nat}n_jj=i}}
    \Matappa{{\Fun t}^{(\sum_{j\in\Nat}n_j)}(0)}
    {(\Rep{x_0}{n_0},\Rep{x_1}{n_1},\dots)}
    \end{equation*}
    where \(\Rep xn\) denotes a repetition of \(n\) occurrences of
    \(x\) and the sums are \emph{disjoint} unions of cliques. 
      The functor $\Dfun$ can be though of as a \emph{qualitative} 
      Faa di Bruno formula, in the sense that the rational coefficients
   of the usual Faà~di~Bruno formula do not appear.
    For instance,
  \begin{align*}
      \Fun{\Dfun t}(\Vect x)_0 &= \Fun t(x_0) \\
      \Fun{\Dfun t}(\Vect x)_1 &= t'(x_0) \cdot x_1 \\
      \Fun{\Dfun t}(\Vect x)_2 &= 
        t^{(2)}(x_0) \cdot (x_1, x_1) + t'(x_0) \cdot x_2 \\
      \Fun{\Dfun t}(\Vect x)_3 &= 
        t^{(3)}(x_0) \cdot (x_1, x_1, x_1) 
        + t^{(2)}(x_0) \cdot (x_1, x_2) 
        + t'(x_0) \cdot x_3
  \end{align*}

  If we consider now the original exponential introduced by Girard for
  coherence spaces, where \(\Oc E\) is the set of finite
  \emph{cliques} (and not multicliques) of \(E\), that is
  \(\Oc E=\Ocg E\) using the notations of \cref{sec:diff-in-LL}, then
  we can define similarly \(\Sdiff_E\) when \(\Mdeg=(\Nat,\max)\), but
  not in the case \(\Mdeg=(\Nat,+)\) which is fundamentally
  quantitative (this corresponds to our remark in
  \cref{sec:diff-in-LL} that there is no well-behaved differentiation
  theory for stable functions, see~\cite{Ehrhard26}).
  In the first case, it suffices to take
    \begin{multline*}
    \Sdiff_E
    =\{(\{(i_1,a_1),\dots,(i_k,a_k)\},(i,\{\List a1k\}))\\
    \St k\in\Nat,\,i,\List i1k\in\Mdeg,\,i=\max(i_1,\cdots,i_k)\text{
      and } \Set{(i_1,a_1),\dots,(i_k,a_k)}\in\Cl{\Limpl\Dbimon E}
  \end{multline*}
  and then we have a description of \(\Fun{\Dfun t}(\Vect x)\) which
  is the same as \cref{eq:Dfun-nat-max-desc}.
  This functor %
  \(\Dfun\) is an \(\omega\)-ary version of the functor \(\Dfun\)
  introduced there in an \emph{ad hoc} way.
  Notice that one recovers the theory of %
  that section by taking \(\Mdeg=\{0,1\}\) with \(1+_\Mdeg 1 = 1\).

\subsubsection{Probabilistic coherence spaces}
  \label{ex:pcoh-cart-smc-elem-sum-cont}
  We continue \cref{ex:pcoh-cart-smc-elem-sum}.
  The category \(\PCOH\) is also a resource category.
  We have seen in \cref{ex:pcoh-cart-smc-elem-sum} that this category
  is a cartesian SMCC which is actually \Staraut{} for the dualizing
  object \(\Sbot=\Sone\).
  Then we define \(\Oc X\) by \(\Web{\Oc X}=\Mfin{\Web X}\) and %
  \(\Pcohp{\Oc X}=\Biorth{\Set{\Prom x\St x\in\Pcoh X}}\) where %
  \(\Prom x\in\Realpto{\Web{\Oc X}}\) is defined by
  \(\Prom x_m=x^m=\prod_{a\in\Web X}x_a^{m(a)}\) (see \cref{rk:pcs-bipolar}).
  Given \(t\in\PCOH(X,Y)\), it is easy to check that there is exactly one %
  \(\Oc t\in\Realpto{\Web{\Oc X}\times\Web{\Oc Y}}\) such that %
  \begin{align} %
    \label{eq:pcs-excl-prom-basic-req}
    \forall x\in\Pcoh X\quad\Matappa{\Oc t}{\Prom x}=\Promp{\Matappa tx}\,.
  \end{align}
  Explicitly, a simple computation using
  \cref{eq:pcs-excl-prom-basic-req} shows that
  \begin{align*}
    \forall (m,p)\in\Web{\Oc X}\times\Web{\Oc Y}\quad 
    (\Oc t)_{m,p}=\sum_{r\in\Mexpset mp}\Multinomb pr t^r
  \end{align*}
  where \(\Mexpset mp\) is the set of all
  \(r\in\Mfin{\Web X\times\Web Y}\) such that %
  \(\forall a\in\Web X\ \sum_{b\in\Web Y}r(a,b)=m(a)\) and %
  \(\forall b\in\Web Y\ \sum_{a\in\Web X}r(a,b)=p(b)\), and %
  \(\Multinomb pr
  =\prod_{b\in\Web Y}\frac{\Factor{p(b)}}
  {\prod_{a\in\Web X}\Factor{r(a,b)}}\in\Nat\)
  is a multinomial coefficient.
  It can be proven that if \(s,t\in\Kl\PCOH(\Oc X,Y)\) satisfy %
  \(\forall x\in\Pcoh X\ \Matappa s{\Prom x}=\Matappa t{\Prom x}\), %
  then \(s=t\) (as matrices).
  So the function \(\Fun t:\Pcoh X\to\Pcoh Y\) defined by %
  \(\Fun t(x)=\Matappa t{\Prom x}\) fully determines \(t\); %
  such a function \(\Pcoh X\to\Pcoh Y\) will be called an
  \emph{analytic} function since indeed we have
  \begin{align*}
    \Fun t(x)=\Big(\sum_{m\in\Web{\Oc X}}t_{m,b}x^m\Big)_{b\in\Web Y}
  \end{align*}
  meaning that \(\Fun t\) is defined as a (generalized) power series
  with nonnegative coefficients.

  This functor is a comonad with counit %
  \(\Der X\in\PCOH(\Oc X,X)\) characterized by \(\Fun{\Der X}(x)=x\) %
  and \(\Digg X\in\PCOH(\Oc X,\Occ X)\) by %
  \(\Fun{\Digg X}(x)=\Promm x\), that is, as matrices, %
  \((\Der X)_{m,a}=\Kronecker m{\Mset a}\) %
  (for \((m,a)\in\Web{\Oc X}\times\Web X\)) and %
  \((\Digg X)_{m,M}=\Kronecker m{\sum M}\).
  It is easily checked to be strong monoidal from %
  \((\PCOH,\Stop,\mathord{\IWith})\) to
  \((\PCOH,\Sone,\mathord{\ITens})\).
  For instance the Seely isomorphism %
  \(\Seelyt_{X,Y}\in\PCOH(\Tens{\Oc X}{\Oc Y},\Oc{\Withp XY}\) is
  fully characterized by %
  \(\Matappa{\Seelyt_{X,Y}}{\Tensp{\Prom x}{\Prom y}}=\Prom{(x,y)}\) %
  (identifying \(\Pcohp{\With XY}\) with \({\Pcoh X}\times{\Pcoh Y}\)).

  It was proved in~\cite{CrubilleEhrhardPaganiTasson16} that this
  exponential is the free one, that is, the SMC \(\PCOH\) is a Lafont
  category.
  So by \cref{th:lafont-elementary-differential-structure} the object
  \(\Dbimon=\Bwith_{i\in\Mdeg}\Sone\) has a structure of \(\oc\)-coalgebra
  \(\Ddiff\in\PCOH(\Dbimon,\Oc\Dbimon)\), which is
  characterized by
  \begin{align*}
    \Ddiff_{i,\Mset{\List i1k}}=\Kronecker i{i_1+\cdots+i_k}
  \end{align*}
  for \(i,\List i1k\in\Mdeg\).
  This structure turns \(\PCOH\) into a \Elementary{} coherent
  differential category.
  So we have an induced distributive law %
  \(\Sdiff_X\in\PCOH(\Oc{\Sfun X},\Sfun{\Oc X})\) and an easy
  computation shows that
  \begin{align*}
    (\Sdiff_X)_{\Vect m,(i,p)}
    =
    \frac{\Factor{(\sum_{j\in\Mdeg}m_j)}}{\prod_{j\in\Mdeg}\Factor{m_j}}
    \text{ if }\sum_{j\in\Mdeg}(\Card{m_j})j=i
    \text{\quad and\quad}
    (\Sdiff_X)_{\Vect m,(i,p)}=0\text{ otherwise}
  \end{align*}
  using the obvious bijection between
  \(\Web{\Oc{\Sfun X}}=\Mfin{\Mdeg\times\Web X}\) and
  \(\Mfin{\Web X}^{(\Mdeg)}\), the set off all
  \(\Vect m\in\Mfin{\Web X}^\Mdeg\) such that \(m_i=\Msetempty\) for
  almost all \(i\in\Mdeg\).
  It follows that the extension \(\Dfun\) of \(\Sfun\) to \(\Kl\PCOH\)
  acts as follows on morphisms.
  Let \(t\in\PCOH(\Oc X,Y)\) then
  \(\A\Dfun t\in\PCOH(\Oc{\Sfun X},\Sfun Y)\) is given by %
  \begin{align*}
    (\Dfun t)_{\Vect m,(i,b)}
    =
    \frac{\Factor{(\sum_{j\in\Mdeg}m_j)}}{\prod_{j\in\Mdeg}\Factor{m_j}}
    t_{\sum_{j\in\Mdeg}m_j,b}
    \text{ if }\sum_{j\in\Mdeg}(\Card{m_j})j=i
    \text{\quad and\quad}
    (\Dfun t)_{\Vect m,(i,b)}=0\text{ otherwise}
  \end{align*}
  This means that, given an element of \(\Pcohp{\Sfun X}\) that we
  consider as a family %
  \(\Vect x\in\Pcoh X^\Mdeg\) such that
  \(\sum_{j\in\Mdeg}x_j\in\Pcoh X\), we have
  \begin{align*}
    \Fun{\Dfun t}(\Vect x)
    =\sum_{(i,b)\in\Mdeg\times\Web Y}
    \Big(
    \sum_{\Biind{\Vect m\in\Mfin{\Mdeg\times\Web X}}
    {\sum_{j\in\Mdeg}(\Card{m_j})j=i}}
    \frac{\Factor{(\sum_{j\in\Mdeg}m_j)}}{\prod_{j\in\Mdeg}\Factor{m_j}}
    t_{\sum_{j\in\Mdeg}m_j,b}
    \prod_{j\in\Mdeg}x_j^{m_j}
    \Big)\Base{(i,b)}\,.
  \end{align*}
  Notice that if \(\sum_{j\in\Mdeg}(\Card{m_j})j=0\) then
  \(m_j=\Msetempty\) for all \(j\in\Mdeg\setminus\{0\}\) by our
  \Mpositivity{} assumption in \(\Mdeg\) and \(\Vect m\) is completely
  given by an \(m_0=m\in\Mfin{\Web X}\).
  It follows that \(\Fun{\Dfun t}(\Vect x)_{(0,b)}=\Fun t(x_0)_b\).

  We describe now the monad structure of \(\Dfun\) on \(\Kl\PCOH\).
  Remember that we identify \(\Pcohp{\Dfun X}\) with
  \(\{\Vect x\in\Pcoh X^\Mdeg\St\sum_{i\in\Mdeg} x_i\in\Pcoh X\}\),
  and hence we can identify \(\Pcohp{\Dfun^2X}\) with
  \(\{\Vect x\in\Pcoh X^{\Mdeg^2}\St\sum_{l,r\in\Mdeg^2}
  x_{l,r}\in\Pcoh X\}\).
  Then the unit \(\Dmonu_X\in\Kl\PCOH(X,\Dfun X)\) of the monad is
  characterized by \(\Fun{\Dmonu_X}(x)_j=\Kronecker j0x\) and its
  multiplication \(\Dmonm_X\in\Kl\PCOH(\Dfun^2X,\Dfun X)\) is characterized by
  \(\Fun{\Dmonm_X}(\Vect x)_i=\sum_{(l,r)\in\Mdegsl i}x_{l,r}\).
  When \(\Mdeg\) is total, the comonad structure of \(\Dfun\) on
  \(\Kl\PCOH\) can be described as follows (these definitions are
  always possible, but yield a comonad only if \(\Mdeg\) is total; if
  not, \(\Dsum\) is not natural in general).
  The counit \(\Dsum_X\in\Kl\PCOH(\Dfun X,X)\) is given by
  \(\Fun{\Dsum_X}(\Vect x)=\sum_{i\in\Mdeg}x_i\) and the
  comultiplication \(\Dlift_X\in\Kl\PCOH(\Dfun X,\Dfun^2X)\) is
  characterized by \(\Fun{\Dlift_X}(\Vect x)_{l,r}=\Kronecker lrx_l\).
  The symmetry \(\Dflip\in\Kl\PCOH(\Dfun^2X,\Dfun^2X)\) is given by
  \(\Fun{\Dflip_X}(\Vect x)_{l,r}=x_{r,l}\).
  Remember from \cref{prop:kleisli-bimonad} that when \(\Mdeg\) is
  total, \((\Dfun,\Dmonu,\Dmonm,\Dsum,\Dlift,\Dflip)\) is a
  \Symbimonad.

  Now we consider specific examples of \(\Mdeg\) for which we can
  write more explicit descriptions of these morphisms.
  
  \paragraph{The case \(\Mdeg=(\Nat,\max)\): finite differences.}
  Then the condition \(\sum_{j\in\Mdeg}(\Card{m_j})j=i\) on
  \(\Vect m\in\Mfin{\Web X}^{(\Mdeg)}\) means that
  \(m_i\not=\Msetempty\) and \(m_j=\Msetempty\) for all \(j>i\).
  Observe that by the multinomial formula we have
  \begin{align*}
    \Fun t(x_0+\cdots+x_i)
    =\sum_{b\in\Web Y}\Big(
    \sum_{\Vect m\in\Mfin{\Web X}^{\{0,\dots,i\}}}
    \frac{\Factor{(\sum_{j=0}^im_j)}}{\prod_{j=0}^i\Factor{m_j}}    
    t_{\big(\sum_{j=0}^im_j,b\big)}
    \prod_{j=0}^ix_j^{m_j}
    \Big)\Base b
  \end{align*}
  and it follows that
  \begin{align*}
    \Fun{\Dfun t}(\Vect x)_0
    &=\Fun t(x_0)\\
    \Fun{\Dfun t}(\Vect x)_{i}
    &=\Fun t(x_0+\cdots+x_{i})-\Fun t(x_0+\cdots+x_{i-1})
      \text{\quad if }i\in\Nat\setminus\Eset 0\,.
  \end{align*}
  The monad structure of \(\Dfun\) is given by
  \(\Fun{\Dmonu_X}(x)=(x,0,0,\dots)\) and
  \(\Fun{\Dmonm_X}(\Vect x)_i=
  x_{i,i}+\sum_{j=0}^{i-1}(x_{i,j}+x_{j,i})\), and the counit of the
  comonad is given by \(\Fun{\Dsum_X}(\Vect x)=\sum_{i\in\Nat}x_i\).
  The naturality of this counit means that
  \begin{align*}
    \Fun t\Big(\sum_{i=0}^\infty x_i\Big)
    =\sum_{i=0}^\infty\Big(\Fun t\big(\sum_{j=0}^ix_j\big)
    -\Fun t\big(\sum_{j=0}^{i-1}x_j\big)\Big)
  \end{align*}
  which simply means that \(\Fun t\) is Scott continuous (that is,
  commutes with lubs of increasing sequences in \(\Pcoh X\)).

  \paragraph{The case \(\Mdeg=(\Nat,+)\): Taylor and Faà~di~Bruno.}
  The map \(\Fun t:\Pcoh X\to\Pcoh Y\) admits all derivatives, in the
  sense that, for all \(x\in\Pcoh X\) and for all \(k\in\Nat\) there
  is a \(k\)-derivative of \(\Fun t\) at \(x\) which is a \(k\)-linear
  map \(\Fun t^{(k)}:\cT_x^{(k)}X\to\Pcoh Y\) where \(\cT_x^{(k)}X\)
  is the set of all tuples \(\List x1k\in\Pcoh X\) such that
  \(x+\sum_{i=1}^kx_i\in\Pcoh X\): considering
  \begin{align*}
    f(x_1,\dots,x_k)=\Fun t(x+x_1+\cdots+x_k)-\Fun t(x)
  \end{align*}
  as a power series in \(\List x1k\), then
  \(\Matappa{\Fun t^{(k)}(x)}{(\List x1k)}\) is the \(k\)-linear part
  of this power series.
  Explicitly, for \(b\in\Web Y\), let us compute, using the
  multinomial formula (setting \(x_0=x\)):
  \begin{align*}
    \Fun t(x+x_1+\cdots+x_k)_b
    &=\sum_{m\in\Mfin{\Web X}}t_{(m,b)}(x+x_1+\cdots+x_k)^m\\
    &=\sum_{m\in\Mfin{\Web X}}t_{(m,b)}
      \Big(\sum_{\Biind{\Vect m\in\Mfin{\Web X}^{\{0,\dots,k\}}}
      {m_0+m_1+\cdots+m_k=m}}\frac{\Factor m}{\prod_{i=0}^k\Factor{m_i}}
      \prod_{i=0}^kx_i^{m_i}\Big)\\
    &=\sum_{\Vect m\in\Mfin{\Web X}^{\{0,\dots,k\}}}
      \frac{(\Factor{\sum_{i=0}^km_i})}{\prod_{i=0}^k\Factor{m_i}}
      t_{(\sum_{i=0}^km_i,b)}\prod_{i=0}^kx_i^{m_i}
  \end{align*}
  and \(\Matappa{\Fun t^{(k)}(x)}{(\List x1k)}\) is the part of this
  summation which corresponds to vectors \(\Vect m\) of multisets such
  that \(\Card{m_i}=1\) for \(i=1,\dots,k\).
  Such a multiset being fully determined by its unique element, we
  have
  \begin{align*}
    \Matappa{\Fun{t}^{(k)}(x)}{(\List x1k)}_b
    =\sum_{\Biind{m\in\Mfin{\Web X}}{(\List a1k)\in\Web X^k}}
    \frac{\Factor{(m+\Mset{\List a1k})}}{\Factor m}
    t_{(m+\Mset{\List a1k},b)}x^m\prod_{i=1}^kx_{i,a_i}
  \end{align*}
  which is obviously symmetric in \(\List x1k\).
  Let us assume that \(x=0\) so that \(\cT_x^{(k)}\) is the set of
  \(\Vect x\in\Pcoh X^k\) such that \(\sum_{i=1}^kx_i\in\Pcoh X\).
  Being \(k\)-linear, the function
  \(\Fun{t}^{(k)}(0)\) can be extended to \(\Pcoh X^k\), with values in
  \(\Pcohc Y=\{y\in\Realpto{\Web Y}\St\exists\epsilon>0\ \epsilon
  y\in\Pcoh Y\}\).
  Indeed if \(\Vect x\in\Pcoh X^k\), then
  \((\frac 1kx_i)_{i=1}^k\in\cT^{(k)}_0\) and so we can define
  \(\Fun{t}^{(k)}(0)(\Vect x)\) as
  \(k^k\Matappa{\Fun{t}^{(k)}(0)}{(\frac 1kx_1,\dots,\frac 1kx_k)}\),
  using the definition above of \(\Fun{t}^{(k)}(0)\) on
  \(\cT^{(k)}_0\).

  Using these notions, a computation (whose details can be
  found in \cite{EhrhardWalch25}) shows that
  \(\Fun{\Dfun t}\) actually computes the Faà di Bruno formula:
  \begin{align*}
    \Fun{\Dfun t}(\Vect x)
    =\Big(\sum_{\Biind{\Vect n\in\Nat^{\Nat}}{\sum_{j\in\Nat}n_jj=i}}
    \frac1{\prod_{j=0}^\infty 
    \Factor{n_j}}\Matappa{{\Fun t}^{(\sum_{j\in\Nat}n_j)}(0)}
    {(\Rep{x_0}{n_0},\Rep{x_1}{n_1},\dots)}
    \Big)_{i\in\Nat}
  \end{align*}
  where \(\Rep xn\) is a the repetition of \(x\) 
  \(n\) times, so that the vector of parameters
  \( (\Rep{x_0}{n_0},\Rep{x_1}{n_1},\dots)\) has length
  \(\sum_{j\in\Nat}n_j\in\Nat\).
  Do well notice that in this formula we use the extended definition
  of derivatives at \(0\) since we have no guarantee that
  \(\sum_{i=0}^\infty n_ix_i\in\Pcoh X\).
  We know by the general theory of \DDIFF{} that
  \(\Fun{\Dfun t}(\Vect x)\in\Pcoh{\Sfun Y}\) whenever
  \(\Vect x\in\Pcoh{\Sfun X}\), in spite of the fact that the
  sub-expressions
  \(\Matappa{{\Fun t}^{(\sum_{j\in\Nat}n_j)}(0)}
  {(\Rep{x_0}{n_0},\Rep{x_1}{n_1},\dots)}\)
  do not necessarily denote elements of \(\Pcoh{Y}\).
  This is due to the presence of the combinatorial coefficients
  \(\Invp{\prod_{j=0}^\infty \Factor{n_j}}\) which ``tame'' these
  possibly too large values.

  To understand this Faà~di~Bruno formula it is essential to notice
  that, for each \(i\in\Nat\) and \(n_0\in\Nat\), there are only
  finitely many vectors \(\Vect n\in\Nat^{\Nat\setminus\Eset0}\) such
  \(\sum_{j\in\Nat}n_jj=i\).
  The first terms of this sequence are
  \begin{align*}
    \Fun{\Dfun t}(\Vect x)_0
    &=\Fun t(x_0)\\
    \Fun{\Dfun t}(\Vect x)_1
    &=\Matappa{\Fun t^{(1)}(x_0)}{x_1}\\
    \Fun{\Dfun t}(\Vect x)_2
    &=\frac12\Matappa{\Fun t^{(2)}(x_0)}{(x_1,x_1)}
      +\Matappa{\Fun t^{(1)}}{x_2}\\
    \Fun{\Dfun t}(\Vect x)_3
    &=\frac 16\Matappa{\Fun t^{(3)}(x_0)}{(x_1,x_1,x_1)}
      +\Matappa{\Fun t^{(2)}(x_0)}{(x_1,x_2)}
      +\Matappa{\Fun t^{(1)}}{x_3}\\
    \Fun{\Dfun t}(\Vect x)_4
    &=\frac 1{24}\Matappa{\Fun t^{(4)}(x_0)}{(x_1,x_1,x_1,x_1)}
      +\frac 12\Matappa{\Fun t^{(3)}(x_0)}{(x_1,x_1,x_2)}\\
    &\quad+\frac 12\Matappa{\Fun t^{(2)}(x_0)}{(x_2,x_2)}
      +\Matappa{\Fun t^{(2)}(x_0)}{(x_1,x_3)}
      +\Matappa{\Fun t^{(1)}(x_0)}{x_4}\,.
  \end{align*}
  For instance, concerning the last expression, observe that the
  sequences \(\Vect n=(n_0,n_1,\dots)\) such that
  \(\sum_{j\in\Nat}n_jj=4\) are exactly those of shape
  \((n_0,4,0,0,\dots)\) since \(4\times1=4\), \((n_0,2,1,0,0,\dots)\)
  since \(2\times 1+1\times2=4\), \((n_0,0,2,0,0,\dots)\) since
  \(2\times2=4\), \((n_0,1,0,1,0,0,\dots)\) since
  \(1\times 1+1\times 3=4\), \((n_0,0,0,0,1,0,0,\dots)\) since
  \(1\times 4=4\) and there are no others (\(n_0\) can be any element
  of \(\Nat\) since \(n_00\) is always \(0\)).
  In particular it is easy to see that
  \begin{align*}
    \Fun{\Dfun t}(0,x_1,0,0,\dots)
    =(\Fun t(0),\Matappa{\Fun t^{(1)}(0)}{x_1},
    \frac12\Matappa{\Fun t^{(2)}(0)}{(x_1,x_1)},
    \frac16\Matappa{\Fun t^{(3)}(0)}{(x_1,x_1,x_1)},\dots)
  \end{align*}
  and we get in \(\Fun{\Dfun t}(0,x_1,0,0,\dots)\) all the terms of
  the Taylor expansion of \(\Fun t\) at \(0\).

  The monad structure of \(\Dfun\) is given by
  \(\Fun{\Dmonu_X}(x)=(x,0,0,\dots)\) and
  \(\Fun{\Dmonm_X}(\Vect x)_i=
  \sum_{l+r=i}x_{l,r}\), and the counit of the
  comonad is given by \(\Fun{\Dsum_X}(\Vect x)=\sum_{i\in\Nat}x_i\).
  The naturality of this counit means that
  \begin{align*}
    \Fun t\Big(\sum_{i=0}^\infty x_i\Big)
    =\sum_{n=0}^\infty\frac1{\Factor n}
    \Matappa{\Fun t^{(n)}(x_0)}{(\Rep{x_1+x_2+\cdots}n)}
  \end{align*}
  which simply means that \(\Fun t\) obeys the Taylor formula.

  \paragraph{Truncated Taylor formula.}
  An interesting variation consists in taking
  \(\Mdeg=\{0,1,\dots,N\}\) with \(i+_\Mdeg j=i+j\) if \(i+j\leq N\)
  and \(i+_\Mdeg j\) undefined otherwise.
  Then \(\Pcohp{\Dfun X}\) can be identified with the set of all
  \(\Vect x\in\Pcoh X^{\{0,\dots,N\}}\) such that
  \(\sum_{i=0}^Nx_i\in\Pcoh X\), and
  \begin{align*}
    \Fun{\Dfun t}(\Vect x)
    =\Big(\sum_{\Biind{\Vect n\in\Nat^{\{0,\dots,N\}}}{\sum_{j=0}^N n_jj=i}}
    \frac1{\prod_{j=0}^N
    \Factor{n_j}}\Matappa{{\Fun t}^{(\sum_{j=0}^Nn_j)}(0)}
    {(\Rep{x_0}{n_0},\Rep{x_1}{n_1},\dots,\Rep{x_N}{n_N})}
    \Big)_{i=0}^N
  \end{align*}
  is the Faà di Bruno formula truncated at rank \(N\).
  In this case however the family of morphisms \(\Dsum_X\) is not
  natural because
  \[
    \Fun t\Big(\sum_{i=0}^N x_i\Big) =\sum_{n=0}^N\frac1{\Factor
      n}\Matappa{\Fun t^{(n)}(x_0)}{(\Rep{x_1+x_2+\cdots+x_N}n)}
  \]
  holds only if \(t\) is polynomial of degree \(\leq N\).
  This is of course consistent with the fact that in that case
  \(\Mdeg\) is not total.

  \paragraph{Taylor formula with a remainder.}
  We take now \(\Mdeg=\{0,1,\dots,N\}\) with
  \(i+_\Mdeg j=\inf(i+j,N)\).
  We have
  \begin{align*}
    \Fun{\Dfun t}(\Vect x)_i
    &=\sum_{\Biind{\Vect n\in\Nat^{\{0,\dots,N\}}}{\sum_{j=0}^N n_jj=i}}
    \frac1{\prod_{j=0}^N
    \Factor{n_j}}\Matappa{{\Fun t}^{(\sum_{j=0}^Nn_j)}(0)}
    {(\Rep{x_0}{n_0},\Rep{x_1}{n_1},\dots,\Rep{x_N}{n_N})}
    \text{\quad for }i<N\\
    \Fun{\Dfun t}(\Vect x)_N
    &=\sum_{\Biind{\Vect n\in\Nat^{\{0,\dots,N\}}}{\sum_{j=0}^N n_jj\geq N}}
    \frac1{\prod_{j=0}^N
    \Factor{n_j}}\Matappa{{\Fun t}^{(\sum_{j=0}^Nn_j)}(0)}
    {(\Rep{x_0}{n_0},\Rep{x_1}{n_1},\dots,\Rep{x_N}{n_N})}
    \\
    &=\Fun t(x_0+x_1+\cdots+x_N)-
      \sum_{\Biind{\Vect n\in\Nat^{\{0,\dots,N\}}}{\sum_{j=0}^N n_jj<N}}
    \frac1{\prod_{j=0}^N
    \Factor{n_j}}\Matappa{{\Fun t}^{(\sum_{j=0}^Nn_j)}(0)}
    {(\Rep{x_0}{n_0},\Rep{x_1}{n_1},\dots,\Rep{x_N}{n_N})}
  \end{align*}
  so that for instance
  \begin{align*}
    \Fun{\Dfun t}(0,x_1,0,\dots,0)
    &=(\Fun t(0),\Matappa{\Fun t^{(1)}(0)}{x_1},
    \frac12\Matappa{\Fun t^{(2)}(0)}{(x_1,x_1)},
      \frac16\Matappa{\Fun t^{(3)}(0)}{(x_1,x_1,x_1)},\dots,\\
    &\quad\quad
      \frac1{\Factor{(N-1)}}
      \Matappa{\Fun t^{(N-1)}(0)}{(\Rep{x_1}{N-1})},
      f(x_1)-\sum_{n=0}^{N-1}
      \frac1{\Factor n}\Matappa{\Fun t^{(n)}(0)}{(\Rep{x_1}n)})
  \end{align*}
  where we see the last element of this \((N+1)\)-tuple as the
  remainder of the Taylor expansion of order \(N-1\) of \(\Fun t\) at
  \(0\).
  The monad structure of \(\Dfun\) is given by
  \(\Fun{\Dmonu_X}(x)=(x,\Rep0{N-1})\) and
  \[
    \Fun{\Dmonm_X}(\Vect x)_i=
    \begin{cases}
      \sum_{l+r=i}x_{l,r}&\text{if }i<N\\
      \sum_{l+r\geq N}x_{l,r}&\text{if }i=N
    \end{cases}
  \]
  and the counit of the comonad is given by
  \(\Fun{\Dsum_X}(\Vect x)=\sum_{i=0}^Nx_i\); indeed \(\Dfun\) is a
  bimonad in that case since the monoid \(\Mdeg\) is total.

  \paragraph{A fancy example.}
  To illustrate the versatility of our setting, consider the fancy
  structure \(\Mdeg=(\{0,1,2\},+)\) with addition defined by
  \(1+1=2\), \(1+2=1\) and \(2+2=2\), which is easily checked to be a
  \Mpositive{} (total) commutative monoid.
  Notice that
  \begin{align*}
    n1=
    \begin{cases}
      1&\text{if }n\in 2\Nat+1\\
      2&\text{if }n\in 2\Nat+2
    \end{cases}
    \text{\quad and\quad}n2=2\text{ for all }n\in\Nat+1
  \end{align*}
  and it follows that
  \begin{align*}
    n_11+n_22=
    \begin{cases}
      0&\text{if }n_1=n_2=0\\
      2&\text{if }n_1=0\text{ and }n_2\in\Nat+1\\
      1&\text{if }n_1\in2\Nat+1\text{, for all }n_2\in\Nat\\
      2&\text{if }n_1\in2\Nat+2\text{, for all }n_2\in\Nat
    \end{cases}
  \end{align*}
  So we have
  \begin{align*}
    \Fun{\Dfun t}(x_0,x_1,x_2)_0
    &=\Fun t(x_0)\\
    \Fun{\Dfun t}(x_0,x_1,x_2)_{1,b}
    &=\sum_{\Biind{(m_0,m_1,m_2)\in\Mfin{\Web X}^3}{\Card{m_1}\in2\Nat+1}}
      \frac{\Factor{(m_0+m_1+m_2)}}{\Factor{m_0}\Factor{m_1}\Factor{m_2}}
      t_{(m_0+m_1+m_2,b)}x_0^{m_0}x_1^{m_1}x_2^{m_2}\\
    \Fun{\Dfun t}(x_0,x_1,x_2)_{2,b}
    &=\sum_{\Biind{(m_0,m_1,m_2)\in\Mfin{\Web X}^3}{\Card{m_1}\in2\Nat+2}}
      \frac{\Factor{(m_0+m_1+m_2)}}{\Factor{m_0}\Factor{m_1}\Factor{m_2}}
      t_{(m_0+m_1+m_2,b)}x_0^{m_0}x_1^{m_1}x_2^{m_2}\\
    &\quad+\sum_{\Biind{(m_0,m_2)\in\Mfin{\Web X}^2}{\Card{m_2}\not=0}}
      \frac{\Factor{(m_0+m_2)}}{\Factor{m_0}\Factor{m_2}}
      t_{(m_0+m_2,b)}x_0^{m_0}x_2^{m_2}
  \end{align*}
  and therefore
  \begin{align*}
    \Fun{\Dfun t}(x_0,x_1,x_2)
    =(\Fun t(x_0),
    &\frac12(\Fun t(x_0+x_1+x_2)-\Fun t(x_0-x_1+x_2)),\\
    &\frac12(\Fun t(x_0+x_1+x_2)+\Fun t(x_0-x_1+x_2))-\Fun t(x_0))
  \end{align*}

  The monad structure of \(\Dfun\) is given by
  \(\Fun{\Dmonu_X}(x)=(x,0,0)\) and
  \begin{align*}
    \Fun{\Dmonm_X}(\Vect
    x)=(x_{0,0},x_{1,0}+x_{0,1}+x_{2,1}+x_{1,2},x_{0,2}+x_{2,0}+x_{1,1}+x_{2,2})
  \end{align*}
  and the counit of the comonad is given by
  \(\Fun{\Dsum_X}(\Vect x)=x_0+x_1+x_2\).
  The naturality of this counit is obvious.

\section{Products, function spaces and fixpoints}

\label{sec:diff-lambda-cat}
This final section describes structures which are byproducts of the
general \DDIFF{} theory and are required to interpret a functional
programming language featuring differential constructs based on
\(\Mdeg\).
We decided not to present such a language in the present paper which
is already too long, it could be a generalization of the coherent
differential extension of \(\PCF\) developed in \cite{Ehrhard23b}.

\begin{remark}
  As explained in the introduction of this paper, we introduced in
  \cite{EhrhardWalch23,EhrhardWalch25} a \emph{cartesian} theory of
  coherent differentiation, which is a direct axiomatization of
  differentiation in the Kleisli category $\Kl \cL$ rather than the
  linear category $\cL$.
  The point of such an axiomatization is to generalize the ideas of
  coherent differentiation to cartesian categories that are not
  necessarily Kleisli categories of models of Linear Logic.

  We stuck to the linear theory in this paper because of size
  constraint, but a cartesian theory of $\Mdeg$-differentiation
  following the lines of \cite{EhrhardWalch23,EhrhardWalch25} is
  perfectly feasible.
  Furthermore, the content developed in this section would perfectly
  work for this cartesian theory, outside of any linear logical
  considerations.
\end{remark}

We explain first how to define \emph{partial derivatives} which are
required because, in our functional language, we must be able to
express derivatives of a term wrt.~a given \(\lambda\)-calculus
variable, as in the differential \(\lambda\)-calculus.
Next we make explicit an equation describing the interaction between
our \DDIFF{} operations and the \(\lambda\)-calculus operation of
application.
Last we assume that the Kleisli category \(\Kl\cL\) has fixpoint
operators ``à la Scott'' defined terms of least upper bounds, using an
order relation that we assume to be induced by the \Dpmonoid{}
structures hom-sets of our category \(\cL\) are equipped with.
In this setting we prove an equation describing the interaction between
\DDIFF{} and such fixpoint operators.

\begin{convention}
  In this section, we assume that \(\cL\) is a \Lingdiffcat{}
  (not necessarily representable).
\end{convention}

\subsection{Partial derivatives}
We have defined a global differentiation operator that maps a morphism 
\(f\in\Kl\cL(X_1\IWith\cdots\IWith X_n,Y)\)
to \(\Dfun f\in\Kl\cL(\Sfun(X_1\IWith\cdots\IWith X_n),\A\Sfun Y)\).
Up to the iso stipulated by \ref{ax:swith}), we thus have
\(\A\Dfun f\in\Kl\cL(\A\Sfun X_1\IWith\cdots\IWith \A\Sfun X_n,\A\Sfun
Y)\). Intuitively, this morphism $\Dfun f$ %
maps \((x_1,u_1),\dots,(x_n,u_n)\) to %
\((f(x_1,\dots,x_n),f'(x_1,\dots,x_n)\cdot(u_1,\dots,u_n))\) when
\(\Mdeg=\Eset{0,1}\) with \(1+1\) undefined (that we denote as
\(\Mder2\) in the sequel) and to %
\((f(x_1,\dots,x_n),f(x_1+u_1,\dots,x_n+u_n)-f(x_1,\dots,x_n))\) when
\(\Mdeg=\Eset{0,1}\) with \(1+1=1\) (that we denote as \(\Mdiff1\) in
the sequel).
For any \(i\in\Set{1,\dots,n}\) we also need to be able to define an
\(i\)th \emph{partial derivative} %
\(\A{\Dfunpart i}f \in\Kl\cL(X_1\IWith\cdots\IWith\A\Sfun
X_i\IWith\cdots\IWith X_n,\A\Sfun Y)\) %
such that, intuitively
\begin{align*}
  \A{\Dfunpart i}f : (x_1,\dots,(x_i,u),\dots,x_n)
  &\mapsto(f(x_1,\dots,x_n),\frac{\partial f(\List x1n)}{\partial x_i}\cdot u)
    \text{\quad when }\Mdeg=\Mder2\\
  \A{\Dfunpart i}f : (x_1,\dots,(x_i,u),\dots,x_n)
  &\mapsto(f(x_1,\dots,x_n),f(x_1,\dots,x_i+u_i,\dots,x_n)\\
  &\hspace{10em}-f(x_1,\dots,x_n))
    \text{\quad when }\Mdeg=\Mdiff1\,.
\end{align*}

When $\Mdeg=\Mder2$ we expect the derivative to be equal to the sum of the
partial derivatives.
\begin{align*}
  f'(x_1,\dots,x_n)\cdot(u_1,\dots,u_n)
  =\sum_{i=1}^n\frac{\partial f(\List x1n)}{\partial x_i}\cdot u_i.
\end{align*}
The corresponding equation for \(\Mdeg=\Mdiff1\) reads as follows, for
instance when \(n=3\),
\begin{align*}
  f(x_1+u_1,x_2+u_2,x_3+u_3)&-f(x_1,x_2,x_3)\\
  =&f(x_1+u_1,x_2+u_2,x_3+u_3)-f(x_1+u_1,x_2,x_3+u_3)\\
  &+f(x_1+u_1,x_2,x_3+u_3)-f(x_1,x_2,x_3+u_3)\\
  &+f(x_1,x_2,x_3+u_3)-f(x_1,x_2,x_3)
\end{align*}
and similarly for the other permutations of \((1,2,3)\).
The general theory of \DDIFF{} allows us to define such partial
derivatives and prove their expected properties without further
assumptions as we explain now.

We take \(n=2\) to simplify notations but the general case is not more
complicated conceptually. Recall from \cref{prop:S-strong-with} that
$(\Sfunm, \Sin 0, \Swithiso)$ is a strong monoidal monad with respect
to $\IWith$, and thus is in particular a lax monoidal monad.
It is well known that every lax monoidal monad has a canonical
commutative monad structure~\cite{Kock71}, given by
\(\Wstren1_{X_1,X_2} %
\in\cL(\With{\Sfun X_1}{X_2},\Sfun{\Withp{X_1}{X_2}})\) and
\(\Wstren2_{X_1,X_2} %
\in\cL(\With{X_1}{\Sfun X_2},\Sfun{\Withp{X_1}{X_2}})\) defined as the
following composition of morphisms
\begin{equation} \label{eq:S-with-commutative}
  \begin{tikzcd}[row sep = tiny]
    \A\Sfun{X_1}\IWith{X_2}
    \ar[r,"\With{\A\Sfun{X_1}}{\Sin0}"]
    &[2em]
    \A\Sfun{X_1}\IWith\A\Sfun{X_2}
    \ar[r,"\Swithiso_{X_1,X_2}"]
    &[2em]
    \Sfun{\Withp{X_1}{X_2}} \\ 
    {X_1}\IWith{\A \Sfun X_2}
    \ar[r,"\With{\Sin0}{\A\Sfun{X_2}}"]
    &[2em]
    \A\Sfun{X_1}\IWith\A\Sfun{X_2}
    \ar[r,"\Swithiso_{X_1,X_2}"]
    &[2em]
    \Sfun{\Withp{X_1}{X_2}}
  \end{tikzcd}.
\end{equation}
Intuitively \(\Wstren1((x_1,u_1),x_2)=((x_1,x_2),(u_1,0))\) and
\(\Wstren2(x_1,(x_2,u_2))=((x_1,x_2),(0,u_2))\).
It is also easily checked that \(\Wstren2\) can be obtained from %
\(\Wstren1\) using the symmetry isomorphism associated with
the product: %
\(\Wstren2
=\Sfun{\Tuple{\Proj2,\Proj1}}
\Comp\Wstren1
\Comp\Tuple{\Proj2,\Proj1}\).

\begin{theorem} \label{th:dfun-strength-with} The strong monoidal
  structure $(\Sin 0, \Swithiso)$ on $\Sfun$ given in
  \cref{prop:S-strong-with} extends in $\Kl \cL$ (in the sense of
  \cref{def:extension}) to a strong monoidal structure
  \[
    \Derfun(\Sin 0) \in \Kl \cL(\Top, \Dfun \Top) \quad \text{and}
    \quad \Derfun(\Swithiso_{X_1, X_2}) \in \Kl \cL(\Dfun X_1 \IWith
    \Dfun X_2, \Dfun(X_1 \IWith X_2))
  \]
  on the monad $\Dfun$ described in \cref{prop:Dfun-monad}.
  Similarly, the commutative monad structure \((\Wstren1, \Wstren2)\)
  on $\Sfun$ given in \cref{eq:S-with-commutative} above extends in
  $\Kl \cL$ to a commutative monad structure
  \begin{align*}
    \Dfunstr1_{X_1, X_2} 
    &= \Derfun \Wstren1_{X_1,X_2} \in \Kl\cL(\With{\A\Dfun X_1}{X_2},
      \Dfun{\Withp{X_1}{X_2}}) \\ 
    \Dfunstr2_{X_1, X_2} 
    & = \A\Derfun{\Wstren2_{X_1,X_2}} \in
      \Kl\cL(\With{X_1}{\A\Dfun X_2},\Dfun{\Withp{X_1}{X_2}}).
  \end{align*}
  on the monad $\Dfun$.
  Furthermore, the following diagram commutes in \(\Kl\cL\)
  \begin{equation} %
    \label{eq:com-strength-diff-kleisli}
    \begin{tikzcd}
      &[-2em]
      \With{\A\Dfun X_1}{\A\Dfun X_2}
      \ar[ld,swap,"\Dfunstr1_{X_1,\Dfun X_2}"]
      \ar[rd,"\Dfunstr2_{\Dfun X_1,X_2}"]
      &[-2em]\\
      \Dfun\Withp{X_1}{\Dfun X_2}
      \ar[d,swap,"\Dfun\Dfunstr2_{X_1,X_2}"]
      &
      &
      \Dfun\Withp{\Dfun X_1}{X_2}
      \ar[d,"\Dfun\Dfunstr1_{X_1,X_2}"]\\
      \Dfun^2\Withp{X_1}{X_2}
      \ar[rr,"\Sflip_{\With{X_1}{X_2}}"']
      &&
      \Dfun^2\Withp{X_1}{X_2}
    \end{tikzcd}
  \end{equation}
\end{theorem}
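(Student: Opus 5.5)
The strategy is to show that each structure map, viewed in \(\cL\), is a morphism of distributive laws. Once that is done, everything transports to \(\Kl\cL\) through the chain of strict monoidal functors of \cref{eq:dl-chain-of-functors}, exactly as in the proofs of \cref{prop:Dfun-monad,prop:Dfun-comonad}.

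Concretely, I would first observe that the functor \(\IWith:\cL\times\cL\to\cL\) extends to \(\Kl\cL\): the cartesian product of \(\Kl\cL\) is the extension of \(\IWith\). In other words, \(\Oc(X_1\IWith X_2)\) carries the canonical distributive law built from the Seely isomorphism \(\Seelyt\) and \(\Oc\Proj i\). The functors \(\Sfun X_1\IWith\Sfun X_2\) and \(\Sfun(X_1\IWith X_2)\) are then composites of extendable functors, and their extensions are \(\Dfun X_1\IWith\Dfun X_2\) and \(\Dfun(X_1\IWith X_2)\). By \cref{thm:extension-dl-iso}, a natural transformation between such functors induces a natural transformation between the extensions exactly when it commutes with the associated distributive laws. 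For \(\Swithiso\), this commutation is precisely axiom \ref{ax:sdwith}, once the Seely structure is unfolded. Its inverse \(\Tuple{\Sfun\Proj1,\Sfun\Proj2}\) is then also a morphism of distributive laws. Alternatively, one checks directly that \(\Sfun\Proj i\) is one, using naturality of \(\Sdiff\).

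The unit case \(\Sin0\in\cL(\Stop,\Sfun\Stop)\) is covered by \ref{ax:sdadd} together with the remark preceding \cref{def:diff-resource-cat}. Hence \(\Derfun(\Sin0)\) and \(\Derfun(\Swithiso)\) are natural in \(\Kl\cL\). The coherence diagrams making \((\Dfun,\Derfun(\Sin0),\Derfun(\Swithiso))\) a strong monoidal monad are images under the functor \(\Derfun\) of the corresponding diagrams in \(\cL\) from \cref{prop:S-strong-with}. We need that \(\Derfun\) sends the cartesian structure of \(\cL\) to that of \(\Kl\cL\) (standard), and that the monad structure of \(\Dfun\) is \(\Derfun\) of that of \(\Sfun\) (\cref{prop:Dfun-monad}). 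So these diagrams commute.

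Next, \(\Wstren1=\Swithiso\Compl\Withp{\Sfun X_1}{\Sin0}\), and likewise for \(\Wstren2\). Since \(\Derfun\) is a functor preserving \(\IWith\) on morphisms, we get \(\Derfun\Wstren1=\Derfun(\Swithiso)\Comp\Withp{\Dfun X_1}{\Derfun\Sin0}\), which is the Kock strength canonically associated with the lax monoidal monad \(\Dfun\) on the cartesian category \(\Kl\cL\). By \cite{Kock71}, a lax (in particular strong) monoidal monad with respect to a symmetric monoidal structure yields a commutative strong monad. This gives the commutative monad structure \((\Dfunstr1,\Dfunstr2)\).

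The last diagram \cref{eq:com-strength-diff-kleisli} needs a separate argument, because it involves \(\Sflip\) rather than \(\Sfunmonm\). I would first prove its linear analogue in \(\cL\): \(\Sflip\Compl(\Sfun\Wstren2)\Compl\Wstren1=(\Sfun\Wstren1)\Compl\Wstren2\). This follows by joint monicity of the \(\Sproj i\Compl\Sproj j\) composed with the projections \(\Proj k\) of the product, both sides reducing to \(\Tuple{\Sin?\ldots}\)-style expressions. Concretely, \(\Proj1\) of component \((i,j)\) gives \(\Kronecker j0\Sproj i\Compl\Proj1\) on one side and the same after flipping on the other. The Kleisli diagram is then obtained by applying \(\Derfun\), since \(\Dfun\Derfun(g)=\Derfun(\Sfun g)\) by \cref{def:extension}.

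The main obstacle I expect is the bookkeeping in the first step. Matching \ref{ax:sdwith} exactly with the distributive-law condition for \(\Swithiso\) requires identifying the distributive law of the composite functor \(\IWith\Compfun(\Sfun\times\Sfun)\) in terms of \(\Seelyt\) and \(\Sfunlaxt\) and \(\Tens{\Sdiff}{\Sdiff}\). This is where I would write the extension of \(\IWith\) explicitly and chase the diagram.
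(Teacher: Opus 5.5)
Your argument goes through, but only along the ``alternatively'' branch you mention; the identification you lead with is wrong. The distributive law attached to the extension of \(\IWith\) to \(\Kl\cL\) is \(\Tuple{\Oc{\Proj 1},\Oc{\Proj 2}}\in\cL(\Oc(X_1\IWith X_2),\Oc X_1\IWith\Oc X_2)\). So the condition that \(\Tuple{\Sfun\Proj 1,\Sfun\Proj 2}\) (equivalently its inverse \(\Swithiso\)) is a morphism of distributive laws involves neither \(\ITens\), nor \(\Seelyt\), nor \(\Sfunlaxt\). Composing with the projections, it reduces to \(\Sdiff_{X_i}\Compl\Oc{\Sfun\Proj i}=\Sfun\Oc{\Proj i}\Compl\Sdiff_{X_1\IWith X_2}\), which is just naturality of \(\Sdiff\).

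Axiom \ref{ax:sdwith}, by contrast, ties \(\Sdiff\) to the Seely isomorphism and to the tensorial lax structure \(\Sfunlaxt\). It is a genuinely additional requirement and is not used in this theorem; the paper invokes it in \cref{thm:linearity-implies-d-linearity}. So the ``main obstacle'' you anticipate does not exist, and you should drop that route rather than try to match diagrams. A minor point: \cref{thm:extension-dl-iso} is stated for endofunctors, so you need its evident two-variable analogue for functors \(\cL\times\cL\to\cL\), or a direct check of naturality in \(\Kl\cL\).

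The paper obtains the same naturality with no distributive-law bookkeeping, by staying in \(\Kl\cL\). For any endofunctor of a cartesian category, \(\Tuple{\Dfun(\Derfun\Proj 1),\Dfun(\Derfun\Proj 2)}\) is automatically natural, being the canonical oplax structure. Since \(\Dfun\) extends \(\Sfun\), this map equals \(\Derfun\Tuple{\Sfun\Proj 1,\Sfun\Proj 2}\), so by \ref{ax:swith} it is an iso with inverse \(\Derfun(\Swithiso)\). The strong monoidal monad axioms then follow from the same general fact \cite[Prop.~22]{Mellies09} used for \(\Sfun\) in \cref{prop:S-strong-with}, now applied to \(\Dfun\) on \(\Kl\cL\), rather than being transported one by one through \(\Derfun\).

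From that point on your proof is the paper's. Kock's construction yields \((\Dfunstr1,\Dfunstr2)\), which coincide with \(\Derfun(\Wstren1)\) and \(\Derfun(\Wstren2)\) by functoriality of \(\Derfun\). The last diagram is the \(\Derfun\)-image of its analogue in \(\cL\), checked by joint monicity of the \(\Sproj i\Compl\Sproj j\) exactly as you sketch. Your route makes visible that only naturality of \(\Sdiff\) is involved; the paper's shows that this amounts to functoriality of \(\Dfun\) on the cartesian category \(\Kl\cL\).
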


\begin{proof}
  Since $\Dfun$ is an extension of $\Sfun$,
  we get that in $\Kl \cL$ 
  \[ \Tuple{\Dfun(\Derfun \Proj 1), \Dfun(\Derfun \Proj 2)} 
  = \Derfun \Tuple{\Sfun(\Proj 1), \Sfun(\Proj 2)} \in 
  \Kl \cL(\Dfun (X_1 \IWith X_2), \Dfun X_1 \IWith \Dfun X_2) \]
  is an iso by \ref{ax:swith}. In particular, 
  $\Dfun$ has a strong monoidal structure, following the same 
  recipe as the one given for $\Sfun$ in \cref{prop:S-strong-with},
  and the strong monoidality 
  $\Dwithiso_{X_1, X_2} \in \cL(\Dfun X_1 \IWith \Dfun X_2, 
  \Dfun(X_1 \IWith X_2))$ is equal to $\Derfun \Swithiso$.
  Mirroring $\Sfun$, this strong monoidal structure on $\Dfun$
  induces by \cite{Kock71} a canonical commutative monad 
  structure $(\Dfunstr1, \Dfunstr2)$ defined as the following 
  compositions in $\Kl \cL$.
  \begin{equation*}
  \begin{tikzcd}[row sep = tiny]
    \A\Dfun{X_1}\IWith{X_2}
    \ar[r,"\With{\A\Dfun{X_1}}{\Derfun(\Sin0)}"]
    &[2em]
    \A\Dfun{X_1}\IWith\A\Dfun{X_2}
    \ar[r,"\Derfun(\Swithiso_{X_1,X_2})"]
    &[2em]
    \Dfun{\Withp{X_1}{X_2}} \\ 
    {X_1}\IWith{\A \Dfun X_2}
    \ar[r,"\With{\Derfun(\Sin0)}{\A\Dfun{X_2}}"]
    &[2em]
    \A\Dfun{X_1}\IWith\A\Dfun{X_2}
    \ar[r,"\Derfun(\Swithiso_{X_1,X_2})"]
    &[2em]
    \Dfun{\Withp{X_1}{X_2}}
  \end{tikzcd}.
\end{equation*}
which are respectively equal to $\Derfun(\Wstren{1}_{X_1, X_2})$ and
$\Derfun(\Wstren{2}_{X_1, X_2})$ by functoriality of $\Derfun$ and 
by definition of the product in $\Kl \cL$.
Finally, \cref{eq:com-strength-diff-kleisli} is the image by
  \(\Derfun\) of the diagram
  \begin{equation*}
    \begin{tikzcd}
      &[-2em]
      \With{\Sfun X_1}{\Sfun X_2}
      \ar[ld,swap,"\Wstren1_{X_1,\Sfun X_2}"]
      \ar[rd,"\Wstren2_{\Sfun X_1,X_2}"]
      &[-2em]\\
      \Sfun\Withp{X_1}{\Sfun X_2}
      \ar[d,swap,"\Sfun\Wstren2_{X_1,X_2}"]
      &
      &
      \Sfun\Withp{\Sfun X_1}{X_2}
      \ar[d,"\Sfun\Wstren1_{X_1,X_2}"]\\
      \Sfun^2\Withp{X_1}{X_2}
      \ar[rr,"\Sflip_{\With{X_1}{X_2}}"']
      &&
      \Sfun^2\Withp{X_1}{X_2}
    \end{tikzcd}    
  \end{equation*}
  whose commutation is easily proven using the joint monicity of %
  \((\Sproj i\Compl\Sproj j)_{(i,j)\in\Mdeg}\).
\end{proof}

We can use these strengths to define partial derivatives, as follows.
\begin{definition}
  \label{def:kleisli-partial-derivatives}
  The two partial derivatives of \(f\in\Kl\cL(\With{X_1}{X_2},Y)\) %
  are %
  \begin{align*}
    \Dfun_1 f
    &=\Dfun f
      \Comp\Dfunstr1_{X_1,X_2}\in\Kl\cL(\Dfun{X_1}\IWith X_2,\Dfun Y)\\
    \Dfun_2 f
    &=\Dfun f
      \Comp\Dfunstr2_{X_1,X_2}\in\Kl\cL({X_1}\IWith{\Dfun X_2},\Dfun Y)\,.
  \end{align*}
\end{definition}

When $\Mdeg=\Mder2$, the definition of $\Dfun_1 f$ corresponds intuitively to 
the function
\[ \Dfun_1 f ((x, u), y) 
= \Dfun f ((x, y), (u, 0)) = \big(f(x, y), f'(x, y) \cdot (u, 0) \big)
 = \big( f(x, y), \frac{\partial f(x, y)}{\partial x}\cdot u. \big) \]
 which matches with the intuition of partial derivatives 
 given at the beginning of this section.

\begin{proposition} \label{prop:sum-partial-derivative}
  If \(f\in\Kl\cL(\With{X_1}{X_2},Y)\) %
  we have
  \begin{align*}
    \Dfun f=\Dmonm\Comp\Dfun_1{\Dfun_2f}=\Dmonm\Comp\Dfun_2{\Dfun_1f}\,.
  \end{align*}
\end{proposition}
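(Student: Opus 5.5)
The plan is to reduce the equation to an identity between linear morphisms involving only $\Sfun$, and then to settle that identity by joint monicity of the projections. Unfold $\Dfun_2 f=\Dfun f\Comp\Dfunstr2_{X_1,X_2}$, where $\Dfunstr2=\Derfun\Wstren2$. Since $\Dfun$ is a functor on $\Kl\cL$, we get
\begin{align*}
  \Dfun_1\Dfun_2f=\Dfun(\Dfun_2 f)\Comp\Dfunstr1_{X_1,\Dfun X_2}
  =\Dfun^2 f\Comp\Dfun\Dfunstr2_{X_1,X_2}\Comp\Dfunstr1_{X_1,\Dfun X_2}\,.
\end{align*}
Next we use the naturality of $\Dmonm$ in $\Kl\cL$ (\cref{prop:Dfun-monad}, which relies on \ref{ax:sdadd}), in the form $\Dmonm\Comp\Dfun^2 f=\Dfun f\Comp\Dmonm$. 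This turns the right-hand side into
\begin{align*}
  \Dfun f\Comp\Dmonm_{\With{X_1}{X_2}}\Comp\Dfun\Dfunstr2_{X_1,X_2}\Comp\Dfunstr1_{X_1,\Dfun X_2}\,.
\end{align*}
It therefore suffices to show that $\Dmonm\Comp\Dfun\Dfunstr2\Comp\Dfunstr1$ is the identity of $\Dfun\Withp{X_1}{X_2}$, modulo the iso $\Dwithiso=\Derfun\Swithiso$ that identifies $\Dfun X_1\IWith\Dfun X_2$ with $\Dfun\Withp{X_1}{X_2}$. Here $\Dfunstr1_{X_1,\Dfun X_2}$ has domain $\With{\Dfun X_1}{\Dfun X_2}$, so we precompose with $\Dwithiso^{-1}$.

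All the morphisms in this composite are images under $\Derfun$ of linear morphisms, and $\Dfun\Derfun=\Derfun\Sfun$. Hence the required identity is $\Derfun$ applied to the following identity in $\cL$:
\begin{align*}
  \Sfunmonm\Compl\Sfun\Wstren2_{X_1,X_2}\Compl\Wstren1_{X_1,\Sfun X_2}\Compl\Swithiso^{-1}=\Id_{\Sfun\Withp{X_1}{X_2}}\,.
\end{align*}
One could derive this abstractly from standard facts about commutative monads, where this composite is the canonical lax structure $\Sfun X_1\IWith\Sfun X_2\to\Sfun\Withp{X_1}{X_2}$. Since that structure equals $\Swithiso$ for a strong monoidal monad, the identity would follow. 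I would rather check it directly. Post-compose with $\Proj k\Compl\Sproj i$ for $k\in\{1,2\}$ and $i\in\Mdeg$; these are jointly monic by \ref{ax:swith} and \ref{ax:S-monic}. Intuitively, $\Wstren1_{X_1,\Sfun X_2}$ sends $((x_j)_j,(y_j)_j)$ to the family indexed by $l$ that is $(x_l,(y_j)_j)$ at $l=0$ and $(x_l,0)$ otherwise. Applying $\Sfun\Wstren2$ gives a double family whose $(l,r)$ entry is $(\Kronecker r0x_l,\Kronecker l0y_r)$, and $\Sfunmonm$ sums over $l+r=i$. By positivity of $\Mdeg$ (so that $\Mdegsl0=\{(0,0)\}$), together with $0+i=i$, only the pairs $(i,0)$ and $(0,i)$ contribute to the first and second components respectively, which returns $(x_i,y_i)$. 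Formally, one computes $\Proj1\Compl\Sproj i$ of the left-hand side with the characterizations $\Sproj l\Compl\Sin0=\Kronecker l0\Id$ and $\Sproj i\Compl\Sfunmonm=\sum_{(l,r)\in\Mdegsl i}\Sproj l\Compl\Sproj r$, using naturality of the $\Sproj i$ and the fact that $\Swithiso$ commutes with projections.

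The second equality, with $\Dfun_2\Dfun_1 f$, follows in the same way with the roles of $\Wstren1$ and $\Wstren2$ exchanged. Alternatively, one can deduce it from the first using the commutation \cref{eq:com-strength-diff-kleisli} together with $\Dmonm\Comp\Dflip=\Dmonm$, the image under $\Derfun$ of \cref{lemma:ssum-monad-mult}. I expect the main obstacle to be the bookkeeping of the component computation in the presence of zero morphisms and partial sums. One has to check that the sums indexed by $\Mdegsl i$ collapse to single terms, which uses that $0$ is a zero morphism and \cref{prop:zero-neutral}. One also has to keep the iso $\Swithiso$ and the domain of $\Dfunstr1_{X_1,\Dfun X_2}$ correctly aligned.
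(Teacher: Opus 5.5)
Your proposal is correct and takes essentially the paper's route. The paper's one-line proof (``apply \cref{th:dfun-strength-with}'') relies on the same three ingredients: naturality of $\Dmonm$, the identity $\Dmonm\Comp\Dfun\Dfunstr2\Comp\Dfunstr1=\Derfun(\Swithiso)$ coming from the extended monoidal structure and strengths, and, for the second equality, the commutation \cref{eq:com-strength-diff-kleisli} together with $\Dmonm\Comp\Dflip=\Dmonm$; your projection-wise check of that identity is a valid explicit substitute, and positivity of $\Mdeg$ plays no role in it, since neutrality of $0$ alone already forces $l+0=i\Rightarrow l=i$.
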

\begin{proof}
  Apply \cref{th:dfun-strength-with}.
\end{proof}

This formula expresses that the global derivative can be written as a
``sum'' (in the sense of the monad structure of \(\Sfun\)) of partial
derivatives.

\subsection{Linear and multilinear morphisms}

\label{sec:multilinearity}

There are three concurrent notions of ``linearity'' that coexist in the 
Kleisli category $\Kl \cL$. \begin{enumerate}
\item Additivity: $f \in \Kl \cL(X, Y)$ is additive if for all
  summable family $(g_i \in \Kl \cL(U, X))_{i\in I}$, we have
  $f \Comp (\sum_{i \in I} g_i) \sumsub \sum_{i \in I} (f \Comp g_i)$.
\item Linearity with respect to the $\Mdeg$-derivative:
  $f \in \Kl \cL(X, Y)$ is $\Mdeg$-linear if it is additive and if
  $\Derfun(\Sproj i) \Comp \Dfun f = f \Comp \Derfun(\Sproj i)$.
  For example, with $\Mdeg = \Set{0, 1}$ with \(1+1\) undefined, this
  expresses as expected that $f'(x) \cdot u = f(u)$.
\item Linearity with respect to the \(\LL\) structure:
  $f \in \Kl \cL(X, Y)$ is linear if $f = \Derfun h$ for some
  $h \in \cL(X, Y)$.
\end{enumerate}
Observe that linearity implies additivity, by \Pmonoid{}-enrichment
of $\cL$. Furthermore, we know by \ref{ax:dchain} that
$\Dfun (\Derfun h) = \Derfun (\Sfun h)$, and thus linearity implies
$\Mdeg$-linearity. So we have the following hierarchy.
\[ \text{linearity } \implies \text{ $\Mdeg$-linearity } \implies \text{ additivity} \]
The converse implications have no reason to be true in
general~\cite{BluteCockettSeely09}, even though they hold in most
models.
The goal of this section is to prove a ``coordinate-wise'' version of
this result.
We stick to the binary case of morphisms
$f \in \Kl \cL(X_1 \IWith X_2, Y)$ for the sake of conciseness, but a
generalization to $n$-ary morphisms is completely feasible.

\begin{definition} \label{def:additivity-coordinate} A morphism
  $f \in \Kl \cL(X_1 \IWith X_2, Y)$ is additive in its first
  argument if for all family of morphisms
  $(g_i \in \Kl \cL(U, X_1))_{i\in I}$,
  $f \Comp (\sum_{i \in I} g_i \IWith \Id_{X_2}) \sumsub \sum_{i \in
    I} f \Comp (g_i \IWith \Id_{X_2})$.
  We define additivity in the second argument similarly.
\end{definition}

\begin{definition} \label{def:D-linearity-coordinate} %
  A morphism $f \in \Kl \cL(X_1 \IWith X_2, Y)$ is $\Mdeg$-linear in
  its first argument if it is additive in that argument and if its
  partial derivative
  $\Dfun_1 f \in \Kl \cL(\Dfun X_1 \IWith X_2, \Dfun(X_1 \IWith X_2))$
  satisfies
  $\Derfun(\Sproj i) \Comp \Dfun_1 f = f \Comp (\Derfun(\Sproj i)
  \IWith \Idfun_{X_2})$
\end{definition}

For example, with $\Mdeg = \Set{0, 1}$ with $1+1$ undefined, this
expresses as expected that
$\frac{\partial f(x_1, x_2)}{\partial x_1} \cdot u = f(u, x_2)$.
When $\Mdeg = \Set{0,1}$ with \(1+1=1\) this expresses that $f$ is
additive in its first coordinate and that
$f(x_1 + u, x_2) - f(x_1, x_2) = f(u, x_2)$.
Observe that the latter equation is in fact an immediate consequence
of the left additivity of $f$, so in that example $\Mdeg$-linearity in
an argument coincides with additivity in that argument.

 \begin{definition} \label{def:linearity-coordinate}
  A morphism \(f\in\Kl\cL({X_1 \IWith X_2},Y)\) is
  linear in its first argument if there exists 
  \(h_1\in\cL({X_1\ITens \Oc X_2},Y)\) such that
  \(f\) coincides with the following composition of morphisms
  \begin{equation*}
    \begin{tikzcd}
      \Ap\Oc{X_1\IWith X_2}
      \ar[r,"\Inv{(\Seely^2)}"]
      &[1em]
      {\A\Oc{X_1}} \ITens{\A\Oc{X_2}}
      \ar[r,"\Der{X_1} \ITens \Id_{X_2}"]
      &[3em]
      X_1 \ITens \Oc X_2
      \ar[r,"h_1"]
      &[-0.6em]
      Y
    \end{tikzcd}
  \end{equation*}
  and in that case we write \(f=\Derfunm 1{h_1}\).
  We define linearity in the second argument similarly.
\end{definition}

Linearity in an argument immediately implies additivity in that
argument, by \Pmonoid{}-enrichment of $\cL$.
We now prove the promised result.

\begin{theorem} \label{thm:linearity-implies-d-linearity} %
  Every morphism that is linear in an argument is also $\Mdeg$-linear
  in that argument.
\end{theorem}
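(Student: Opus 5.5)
We treat the first argument; the second is symmetric. Suppose $f=\Derfunm 1{h_1}$ with $h_1\in\cL(\Tens{X_1}{\Oc X_2},Y)$. Additivity in the first argument is already known: it follows from the $\Dpmonoid$-enrichment and the bilinearity of $\ITens$ (\cref{prop:ss-tensor}). So it remains to prove
\[
\Derfun(\Sproj i)\Comp\Dfun_1 f=f\Comp(\Derfun(\Sproj i)\IWith\Idfun_{X_2}).
\]
We will compute both sides as morphisms of $\cL$ out of $\Oc(\With{\Sfun X_1}{X_2})$. After transport through the Seely iso, both should reduce to $h_1\Compl(\Sproj i\ITens\Oc X_2)\Compl(\Der{\Sfun X_1}\ITens\Id)\Compl\Inv{(\Seelyt)}$.

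The right-hand side is a routine Kleisli computation. $\Derfun(\Sproj i)\IWith\Idfun$ in $\Kl\cL$ equals $\Derfun(\Sproj i\IWith\Id)$, and precomposing $f$ with a $\Derfun$-morphism gives $f\Compl\Oc(\Sproj i\IWith\Id)$. Naturality of $\Seelyt$ and of $\Der{}$ then turn this into the expression above.

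For the left-hand side, first unfold $\Dfun_1 f=\Dfun f\Comp\Derfun(\Wstren1)=(\Sfun f)\Compl\Sdiff\Compl\Oc\Wstren1$, and then compose with $\Sproj i$ to get $\Sproj i\Compl\Sfun f\Compl\Sdiff_{X_1\IWith X_2}\Compl\Oc\Wstren1$. Next substitute $f=h_1\Compl(\Der{}\ITens\Id)\Compl\Inv{(\Seelyt)}$ and push $\Sfun$ inside. The key step is to rewrite $\Sfun\Inv{(\Seelyt)}\Compl\Sdiff_{\With{X_1}{X_2}}\Compl\Oc\Swithiso$ using \ref{ax:sdwith}, which yields $\Sfunlaxt\Compl(\Sdiff_{X_1}\ITens\Sdiff_{X_2})\Compl\Inv{(\Seelyt)}$. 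This requires writing $\Oc\Wstren1=\Oc\Swithiso\Compl\Oc(\Id\IWith\Sin0)$, and then using naturality of $\Seelyt$ to pull $\Oc(\Id\IWith\Sin0)$ across as $\Oc\Id\ITens\Oc\Sin0$. Two further axioms handle the two factors. On the $X_2$ side, $\Sdiff_{X_2}\Compl\Oc\Sin0=\Sin0$ by \ref{ax:sdadd}, since $\Sin0$ is a morphism of distributive laws. On the $X_1$ side, $(\Sfun\Der{})\Compl\Sdiff=\Der{}\Sfun$ by \eqref{eq:oc-dl}. Naturality of $\Sfunlaxt$ then moves $\Der{}\ITens\Id$ past it.

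We are left with $\Sproj i\Compl\Sfun h_1\Compl\Sfunlaxt\Compl(\Id_{\Sfun X_1}\ITens\Sin0)$, preceded by the structural maps $(\Der{\Sfun X_1}\ITens\Id)\Compl\Inv{(\Seelyt)}$. By naturality of $\Sproj i$ this equals $h_1\Compl\Sproj i\Compl\Sfunlaxt\Compl(\Id\ITens\Sin0)$. By \cref{rem:Sfun-lax-tensor}, $\Sproj i\Compl\Sfunlaxt=\sum_{(l,r)\in\Mdegsl i}\Sproj l\ITens\Sproj r$. Precomposing with $\Id\ITens\Sin0$ kills every term with $r\neq0$, and the term $(i,0)$ remains because $i+0=i$. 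This leaves $\Sproj i\ITens\Id_{\Oc X_2}$, the summation being justified by bilinearity of $\ITens$. This matches the right-hand side.

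The main obstacle is the bookkeeping in the left-hand side, specifically matching $\Wstren1$ with $\Swithiso$ correctly so that \ref{ax:sdwith} applies cleanly. We must also check that the Seely maps appear with the right orientation in \ref{ax:sdwith}, in particular whether $\Seelyt$ or its inverse occurs. If they do not match directly, we will instead precompose both sides with $\Seelyt_{\Sfun X_1,X_2}$ and compare them as morphisms out of $\Tens{\Oc\Sfun X_1}{\Oc X_2}$.
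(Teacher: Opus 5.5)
Your proof is correct and follows the same diagram chase as the paper: unfold $\Derfun(\Sproj i)\Comp\Dfun_1 f$ to $\Sproj i\Compl\Sfun f\Compl\Sdiff\Compl\Oc\Wstren1$, substitute $f=\Derfunm 1{h_1}$, then use $(\Sfun\Der{})\Compl\Sdiff=\Der{}\Sfun$ and the naturality of $\Sproj i$, $\Der{}$ and the Seely isomorphism. The only difference is that the paper imports the strength form of \ref{ax:sdwith} from \cite{Ehrhard23b} and then uses $\Sproj i\Compl\Sfunstr^1=\Sproj i\ITens\Id$, whereas you derive the needed instance directly from \ref{ax:sdwith} and \ref{ax:sdadd}, via $\Wstren1=\Swithiso\Compl(\Id\IWith\Sin0)$ and $\Sproj i\Compl\Sfunlaxt\Compl(\Id\ITens\Sin0)=\Sproj i\ITens\Id$; the Seely maps already have the right orientation there, so your fallback is not needed.
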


\begin{proof}
  The proof is done in~\cite[Sec.~6.4]{EhrhardWalch25} for
  $\Mdeg = (\Nat, +)$ and immediately carries to our setting.
  First, we can prove (see \cite{Ehrhard23b}) that the axiom
  \ref{ax:sdwith} which expresses compatibility between the monoidal
  structures of $\Sfun$ (with respect to $\ITens$ and $\IWith$), the
  Seely isomorphisms and the distributive law can be written
  equivalently as the following diagram on their associated strengths,
\begin{equation} \label{eq:sdwith-strength}
\begin{tikzcd}[column sep = large]
	{\Oc \Sfun X_1 \ITens \Oc X_2} & {\Sfun \Oc X_1 \ITens \Oc X_2} & {\Sfun(\Oc X_1 \ITens \Oc X_2)} \\
	{\Oc(\Sfun X_1 \IWith X_2)} & {\Oc \Sfun (X_1 \IWith X_2)} & {\Sfun \Oc (X_1 \IWith X_2)}
	\arrow["{\Sdiff_{X_1} \ITens \Id_{X_2}}", from=1-1, to=1-2]
	\arrow["{\Seelyt_{\Sfun X_1, X_2}}"', from=1-1, to=2-1]
	\arrow["{\Sfunstr^1_{\Oc X_1, \Oc X_2}}", from=1-2, to=1-3]
	\arrow["{\Sfun(\Seelyt_{X_1, X_2})}", from=1-3, to=2-3]
	\arrow["{\Oc \Wstren{1}_{X_1, X_2}}"', from=2-1, to=2-2]
	\arrow["{\Sdiff_{X_1 \IWith X_2}}"', from=2-2, to=2-3]
\end{tikzcd} 
\end{equation}
as well as a similar diagram for the other strengths $\Sfunstr^2$ and $\Wstren{2}$.
Now, let $f \in \cL(\Oc(X_1 \IWith X_2), Y)$ be linear in its first
argument, that is \(f=\Derfunm 1{h_1}\) for some
$h_1 \in \cL(X_1 \ITens \Oc X_2, Y)$.
We need to prove that
$\Derfun(\Sproj i) \Comp \Dfun_1 f = f \Comp (\Derfun(\Sproj i) \IWith
\Idfun_{X_2})$.
Observe that, by definition of $\Dfun_1 f$, of $\Derfun$ and by the triangle
identity of comonads, these compositions in the Kleisli category
$\Kl \cL$ boil down to the following compositions in $\cL$
\begin{align*}
  \Derfun(\Sproj i) \Comp \Dfun_1 f
  &= \Derfun(\Sproj i) \Comp \Dfun f \Comp \Derfun(\Wstren1_{X_1, X_2}) 
  = \Sproj i \Compl \Dfun f \Compl \Oc \Wstren1_{X_1, X_2}\\
  f \Comp (\Derfun \Sproj i \IWith \Id_{X_2})
  &= f \Compl \Oc (\Sproj i \IWith \Id_{X_2})
\end{align*}
and we can now check that these two morphisms are equal, by the following 
diagram chase where we unfold the equation $f = \Derfun_1 h_1$.
\[ 
\begin{tikzcd}
	{\Oc(\Sfun X_1 \IWith X_2)} & {\Oc(\Sfun X_1 \IWith X_2)} &[-0.5em] &[-0.5em] {\Oc \Sfun (X_1 \IWith X_2)} & {\Sfun \Oc(X_1 \IWith X_2)} \\
	& {\Oc \Sfun X_1 \ITens \Oc X_2} && {\Sfun \Oc X_1 \ITens \Oc X_2} & {\Sfun(\Oc X_1 \ITens \Oc X_2)} \\
	\\
	&& {\Sfun X_1 \ITens \Oc X_2} & {\Oc X_1 \ITens \Oc X_2} & {\Sfun(X_1 \ITens \Oc X_2)} \\
	&&&& {\Sfun Y} \\
	{\Oc(X_1 \IWith X_2)} & {\Oc X_1 \ITens \Oc X_2} && {X_1 \ITens \Oc X_2} & Y
	\arrow["\Id", from=1-1, to=1-2]
	\arrow["{\Oc(\Sproj i \IWith \Id_{X_2})}", from=1-1, to=6-1]
	\arrow[""{name=0, anchor=center, inner sep=0}, "{\Oc \Wstren{1}_{X_1, X_2}}", from=1-2, to=1-4]
	\arrow["{(\Seelyt_{\Sfun X_1, X_2})^{-1}}"', from=1-2, to=2-2]
	\arrow["{\Sdiff_{X_1 \IWith X_2}}", from=1-4, to=1-5]
	\arrow["{\Sfun(\Seelyt_{X_1, X_2})^{-1}}", from=1-5, to=2-5]
	\arrow[""{name=1, anchor=center, inner sep=0}, "{\Sdiff_{X_1} \ITens \Id_{\Oc X_2}}"', from=2-2, to=2-4]
	\arrow["{\Der{\Sfun X_1} \ITens \Oc X_2}"{description}, from=2-2, to=4-3]
	\arrow["{\Oc \Sproj i \ITens \Id_{\Oc X_2}}"', from=2-2, to=6-2]
	\arrow["{\Sfunstr^1_{\Oc X_1, X_2}}"', from=2-4, to=2-5]
	\arrow["{\quad \Sfun \Der{X_1} \ITens \Id_{\Oc X_2}}"{description}, from=2-4, to=4-3]
	\arrow["{\Sproj i \ITens \Id_{\Oc X_2}}", from=2-4, to=4-4]
	\arrow["{\Sproj i}", from=2-5, to=4-4]
	\arrow["{\Sfun(\Der{X_1} \ITens \Id_{X_2})}", from=2-5, to=4-5]
	\arrow["{\Sproj i \ITens \Id_{\Oc X_2}}"', from=4-3, to=6-4]
	\arrow["{\Der{X_1} \ITens X_2}"{description}, from=4-4, to=6-4]
	\arrow["{\Sfun h_1}", from=4-5, to=5-5]
	\arrow["{\Sproj i}", from=4-5, to=6-4]
	\arrow["{\Sproj i}", from=5-5, to=6-5]
	\arrow["{(\Seelyt_{X_1, X_2})^{-1}}"', from=6-1, to=6-2]
	\arrow["{\Der{X_1} \ITens \Id_{\Oc X_2}}"', from=6-2, to=6-4]
	\arrow["{h_1}"', from=6-4, to=6-5]
	\arrow["{\text{\ref{eq:sdwith-strength}}}"{description}, draw=none, from=0, to=2-4]
	\arrow["{\text{\ref{ax:sdchain}}}"{description}, draw=none, from=1, to=4-3]
\end{tikzcd}\]

\end{proof}

\subsection{The closed case: function spaces}
\label{sec:closed-differential}
So far we have not considered function space constructions, but the
reader acquainted with the denotational semantics of \LL{} probably
knows that both SMCs \(\COH\) and \(\PCOH\) are symmetric monoidal
closed categories, and that the associated Kleisli categories
\(\Kl\COH\) and \(\Kl\PCOH\) are cartesian closed.
More generally, when \(\cL\) is a resource category which is closed
(as an SMC), we know that \(\Kl\cL\) is a CCC%
\footnote{This is the categorical counterpart of the Girard's
  translation of intuitionistic logic in \LL.}.
We use the notation \(\Simpl XY\) for the object of morphisms from
\(X\) to \(Y\) in \(\Kl\cL\), which is \(\Limpl{\Oc X}{Y}\) and whose
associated evaluation morphism %
\(\Ev\in\Kl\cL(\With{(\Simpl XY)}{X},Y)\) is defined as the following
composition of morphisms in \(\cL\).
\begin{equation*}
  \begin{tikzcd}
    \Oc{\Withp{(\Simpl XY)}{X}}
    \ar[r,"\Invp{\Seelyt}"]
    &[1em]
    \Tens{\Oc\Limplp{\Oc X}{Y}}{\Oc X}
    \ar[r,"\Der{\Limpl{\Oc X}Y} \ITens \Id_{\Oc Y}"]
    &[3em]
    \Tens{\Limplp{\Oc X}{Y}}{\Oc X}
    \ar[r,"\Evlin"]
    &[-1em]
    Y
  \end{tikzcd}
\end{equation*}

Given \(f\in\Kl\cL(\With ZX,Y)=\cL(\Oc{\Withp ZX},Y)\), we have %
\(f\Compl\Seelyt\in\cL(\Tens{\Oc Z}{\Oc X},Y)\) and hence %
\(\Ap\Curlin{f\Compl\Seelyt}\in\Kl\cL(Z,\Simpl XY)\) and this morphism
\(\A\Cur f=\Ap\Curlin{f\Compl\Seelyt}\) is uniquely characterized by
the equation
\begin{equation} \label{eq:ev-cur}
  \Ev\Comp\Withp{\A\Cur f}{X}=f\,.
\end{equation}
By standard CCC constructions we get for all object $X$ a functor
$\Simpl{X}{\_} : \Kl \cL \arrow \Kl \cL$ that maps an object $Y$ to
$\Simpl X Y$ and a morphism $f \in \Kl \cL(Y, Z)$ to the morphism
$\Simplp X f = \Cur(f \Comp \Ev) \in \Kl \cL(\Simpl X Y, \Simpl X Z)$.
We can then check by unfolding the definitions on this equation that
the functor $\Simpl X {\_}$ is an extension to $\Kl \cL$
(\cref{def:extension}) of the functor $\Limpl{\Oc X}{\_}$ on $\cL$. In
other word, for all $h \in \cL(Y, Z)$,
\begin{equation} \label{eq:arrow-extension}
  \Simplp X {\Derfun(h)} = \Derfun(\Limpl {\Oc X} h)\,. 
\end{equation}

We now explain how this CCC structure interacts with our
$\Mdeg$-differentiation.
The commutative monad structure of $\Dfun$ given
\cref{th:dfun-strength-with} induces a canonical natural
transformation
\[
  \Dfunstrf_{X, Y}
  = \Cur(\Dfun \Ev \Comp \Dfunstr{1}_{\Simpl X Y, X})
  = \Cur(\Dfun_1 \Ev) \in \Kl \cL(\Dfun (\Simpl X Y), \Simpl{X}{\Dfun Y})\,.
\]
This natural transformation is the $\Dfun$ counterpart of the $\Sfun$
pointwise structure
$\Sfunstri_{X, Y} \in \cL(\Sfun(\Limpl X Y), \Limpl{X}{\Sfun Y})$ that
we introduced in \cref{eq:Sfunstri}.
In fact, the two are related as follows.

\begin{proposition}
$\Dfunstrf$ is a natural isomorphism and 
$\Dfunstrf_{X, Y} = \Derfun(\Sfunstri_{\Oc X, Y})$.
\end{proposition}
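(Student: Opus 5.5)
The plan is to prove the equation \(\Dfunstrf_{X,Y}=\Derfun(\Sfunstri_{\Oc X,Y})\) first; the isomorphism claim then follows for free. Indeed, by \cref{prop:ss-closure}, \ref{ax:slimpl} says that \(\Sfunstri_{\Oc X,Y}\) is an iso in \(\cL\) (this holds since \(\cL\) is closed and we assume the closed \Pmonoid{}-enriched setting). The functor \(\Derfun\) preserves isos, so \(\Dfunstrf_{X,Y}\) is an iso in \(\Kl\cL\). Naturality of \(\Dfunstrf\) is already part of the construction: it is the canonical transformation induced by the commutative strength of \cref{th:dfun-strength-with} together with the CCC structure.

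To prove the equation, I use the universal property \cref{eq:ev-cur}. It suffices to show that
\[\Ev\Comp\Withp{\Derfun(\Sfunstri_{\Oc X,Y})}{X}=\Dfun_1\Ev\]
in \(\Kl\cL(\With{\Dfun(\Simpl XY)}{X},\Dfun Y)\). I will unfold both sides into morphisms of \(\cL\) with domain \(\Oc\Withp{\Sfun\Limplp{\Oc X}Y}{X}\), and precompose both with the Seely iso \(\Seelyt\) to land on \(\Tens{\Oc\Sfun\Limplp{\Oc X}Y}{\Oc X}\).

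\textbf{Left side.} Using the definition of \(\Ev\), naturality of \(\Der{}\), and the description of products of Kleisli morphisms through \(\Seelyt\), the left side becomes
\[\Evlin\Compl\Tensp{\Sfunstri_{\Oc X,Y}\Compl\Der{}}{\Oc X}.\]
Uncurrying \(\Sfunstri\) via \cref{eq:Sfunstri} rewrites this as
\[\Sfun\Evlin\Compl\Sfunstr^1\Compl\Tensp{\Der{}}{\Oc X}.\]

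\textbf{Right side.} I have \(\Dfun_1\Ev=\Dfun\Ev\Comp\Derfun\Wstren1\), which in \(\cL\) is
\[\Sfun\Ev\Compl\Sdiff\Compl\Oc\Wstren1.\]
Precomposing with \(\Seelyt\), I apply the strength form \cref{eq:sdwith-strength} of \ref{ax:sdwith}. This turns \(\Sdiff\Compl\Oc\Wstren1\Compl\Seelyt\) into
\[\Sfun\Seelyt\Compl\Sfunstr^1\Compl\Tensp{\Sdiff}{\Oc X}.\]
Then \(\Sfun\Ev\Compl\Sfun\Seelyt=\Sfun(\Evlin\Compl\Tensp{\Der{}}{\Oc X})\) by definition of \(\Ev\). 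By naturality of \(\Sfunstr^1\), the factor \(\Sfun\Tensp{\Der{}}{\Oc X}\) slides past \(\Sfunstr^1\), giving
\[\Sfun\Evlin\Compl\Sfunstr^1\Compl\Tensp{(\Sfun\Der{}\Compl\Sdiff)}{\Oc X}.\]
Finally, the first axiom of distributive laws \cref{eq:oc-dl} gives \(\Sfun\Der{}\Compl\Sdiff=\Der{}\Sfun\), so the right side matches the left side.

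The main obstacle is the bookkeeping in the right-hand side: getting the Kleisli composition \(\Dfun\Ev\Comp\Derfun\Wstren1\) correctly into \(\cL\) (the comonad triangle identity absorbs \(\Digg{}\)), and ensuring that \cref{eq:sdwith-strength} is applied with the right instances, namely \(X_1=\Limpl{\Oc X}Y\) and \(X_2=X\). I would first check this on the diagram and only then assemble the chain of equalities above.
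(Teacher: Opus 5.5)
Your argument is correct, but it is organized differently from the paper's.

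The paper's route runs through the inverse:
\begin{itemize}
\item It first applies \cref{thm:linearity-implies-d-linearity} to $\Ev$, which is linear in its first argument, to rewrite $\Dfunstrf_{X,Y}=\Cur(\Dfun_1\Ev)$ as $\Cur\Stuple{\Ev\Comp(\Derfun\Sproj i\IWith\Id_X)}_{i\in\Mdeg}$.
\item It then checks, by a computation modelled on \cref{prop:ss-closure}, that this morphism is an iso with inverse $\Stuple{\Simpl X{\Derfun(\Sproj i)}}_{i\in\Mdeg}$.
\item Finally it uses \cref{eq:arrow-extension} to identify that inverse with $\Derfun(\Sfunstri_{\Oc X,Y})^{-1}$, so the equation is read off from an equality of inverses.
\end{itemize}

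You go the other way. You prove the equation first, by uniqueness of currying (\cref{eq:ev-cur}) and one diagram chase in $\cL$. That chase uses \cref{eq:sdwith-strength}, naturality of $\Seelyt$, $\Der{}$ and $\Sfunstr^1$, and the counit triangle of \cref{eq:oc-dl}. The iso then comes for free, because $\Derfun$ preserves isos.

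Your chase is in effect the proof of \cref{thm:linearity-implies-d-linearity} specialized to $h_1=\Evlin$. The difference is that you carry it out on the whole $\Sfun$-valued morphism instead of after each $\Sproj i$. As a result you need neither joint monicity of the projections nor an explicitly verified inverse, and the step the paper leaves as ``a computation similar to'' disappears. The paper's route, in return, isolates the conceptual reason (linearity of $\Ev$ in its first argument) and exhibits the inverse concretely.

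Both proofs depend on \ref{ax:slimpl}. The paper uses it implicitly, through the existence of $\Stuple{\Limpl{\Oc X}{\Sproj i}}_{i\in\Mdeg}=(\Sfunstri_{\Oc X,Y})^{-1}$. It is not a consequence of closedness alone, so you are right to state it as a hypothesis. It does hold automatically in the representable case, by \cref{th:rerpr-summable-tens-with-limpl}.
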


\begin{proof}
  Observe that by its very definition, the morphism
  $\Ev \in \Kl \cL(\Simpl X Y \IWith X, Y)$ is linear in its first
  argument, and thus is $\Mdeg$-linear in that argument by 
  \cref{thm:linearity-implies-d-linearity}.
  It follows that
  \[
    \Dfunstrf_{X, Y}
    = \Cur(\Dfun_1 \Ev)
    = \Cur \Stuple{\Ev \Comp
      (\Derfun \Sproj i \IWith \Id_{Y})}_{i \in \Mdeg}\,.
  \]
  Now we can check with a computation similar to that of
  \cref{prop:ss-pointwise-characterization,prop:ss-closure} that this
  morphism is an iso with inverse
  $\Stuple{\Simpl X {\Derfun(\Sproj i)}}_{i \in \Mdeg}$.
  Thus by \cref{eq:arrow-extension},
  \[
    (\Dfunstrf_{X, Y})^{-1}
    = \Stuple{\Simpl X {\Derfun(\Sproj i)}}_{i \in \Mdeg}
    = \Stuple{\Derfun(\Limpl{\Oc X}{\Sproj i})}_{i \in \Mdeg}
    = \Derfun \Stuple{\Limpl{\Oc X}{\Sproj i}}_{i \in \Mdeg}
    = \Derfun(\Sfunstri_{\Oc X, Y})^{-1}
  \]
  and thus $\Dfunstrf_{X, Y} = \Derfun(\Sfunstri_{\Oc X, Y})$.
\end{proof}

To summarize, we proved in \cref{th:dfun-strength-with} that $\Dfun$
is strong monoidal with respect to the cartesian product $\IWith$ in
$\Kl \cL$, with monoidality $\Derfun(\Swithiso_{Y_1, Y_2})$.
This strong monoidal structure induces a strength $\Dfunstr1$, and we
proved above that the pointwise structure $\Dfunstrf$ associated with
this strength is an iso.
This implies the following result.

\begin{theorem} \label{thm:kleisli-dfun-ccc}
  The Kleisli category of the monad $\Dfun$ is a cartesian closed category.
\end{theorem}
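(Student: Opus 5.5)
The plan is to use the standard characterization of cartesian closedness of a Kleisli category: if \(\cC\) is a CCC and \(T\) is a monad on \(\cC\) equipped with a strength for which the induced pointwise morphism \(T(\Simpl XY)\to\Simpl X{TY}\) is an isomorphism, then \(\Klcat{\cC}{T}\) is cartesian closed. Here \(\cC=\Kl\cL\), which is a CCC since \(\cL\) is closed, and \(T=\Dfun\) with the monad structure \((\Dmonu,\Dmonm)\) of \cref{prop:Dfun-monad}.

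First, the cartesian structure. The Kleisli category of a strong monad with respect to \(\IWith\) inherits finite products: objects are those of \(\Kl\cL\), \(\IWith\) is the product, and the pairing of \(f_l\in\Kl\cL(Z,\Dfun X_l)\) is \(\Derfun(\Swithiso)\Comp\Tuple{f_1,f_2}\). That this gives a genuine product rather than a mere monoidal product relies on \(\Dfun\) being strong monoidal for \(\IWith\), with inverse \(\Tuple{\Dfun\Derfun\Proj1,\Dfun\Derfun\Proj2}\). This is exactly \cref{th:dfun-strength-with}. The terminal object is \(\Top\), since \(\Dfun\Top\Isom\Top\) through \(\Derfun(\Sin0)\). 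Checking that projections composed with pairings return the components reduces to naturality of \(\Dmonu\) and the monad unit laws.

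Second, the exponential. I take \(X\Rightarrow_{\Dfun}Y=\Simpl X{\Dfun Y}\). Evaluation in the Kleisli category of \(\Dfun\) is \(\Ev_{X,\Dfun Y}\in\Kl\cL(\With{(\Simpl X{\Dfun Y})}X,\Dfun Y)\), which is a Kleisli morphism of \(\Dfun\) on the nose. Currying sends \(g\in\Kl\cL(\With ZX,\Dfun Y)\) to \(\Cur g\in\Kl\cL(Z,\Simpl X{\Dfun Y})\), composed with \(\Dmonu\). The key point is the bijection
\[
\Klcat{\Kl\cL}{\Dfun}(\With ZX,Y)=\Kl\cL(\With ZX,\Dfun Y)\Isom\Kl\cL(Z,\Simpl X{\Dfun Y}),
\]
together with the observation that Kleisli morphisms \(Z\to\Simpl X{\Dfun Y}\) in the sense of \(\Dfun\) are elements of \(\Kl\cL(Z,\Dfun(\Simpl X{\Dfun Y}))\). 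These are turned back into maps into \(\Simpl X{\Dfun\Dfun Y}\) by \(\Dfunstrf\), and then into \(\Simpl X{\Dfun Y}\) by \(\Simplp X{\Dmonm}\).

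Establishing the universal property (naturality in \(Z\), uniqueness) is where the pointwise iso is needed. I expect this to be the main obstacle: one must check that Kleisli composition with \(\Dfunstrf\) and \(\Dmonm\) coincides with the composite built from the strength \(\Dfunstr1\), using \(\Dfunstrf=\Cur(\Dfun_1\Ev)\) and \cref{eq:ev-cur}. I would rely on the preceding proposition, which states that \(\Dfunstrf\) is a natural iso, and then invoke the known general result (e.g.\ from Kock's theory of commutative monads) rather than redo the diagram chase. The chase should nonetheless be routine: by \cref{eq:ev-cur}, every equation transposes to an equation in \(\Kl\cL\) involving \(\Dfun\Ev\), \(\Dfunstr1\) and the monad laws of \(\Dfun\).
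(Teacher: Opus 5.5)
Your cartesian part is fine and matches the paper: products are \(\IWith\), pairing is \(\Derfun(\Swithiso)\Comp\Tuple{f_1,f_2}\), justified by \cref{th:dfun-strength-with}. The gap is in the choice of exponential. With the object \(\Simpl X{\Dfun Y}\) and evaluation \(\Ev_{X,\Dfun Y}\), the universal property fails.

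The bijection you call the key point, \(\Kl\cL(\With ZX,\Dfun Y)\Isom\Kl\cL(Z,\Simpl X{\Dfun Y})\), has a hom-set of \(\Kl\cL\) on the right, not one of the Kleisli category of \(\Dfun\). Here is what actually happens.

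\begin{itemize}
\item For \(k\in\KLDFUN(Z,\Simpl X{\Dfun Y})\), the \(\Dfun\)-Kleisli product \(k\times X\) is \(\Dfunstr{1}\Comp\With{k}{X}\).
\item Using \(\Dfun\Ev\Comp\Dfunstr{1}=\Dfun_1\Ev=\Ev\Comp\With{\Dfunstrf}{X}\) (by \cref{eq:ev-cur}), the Kleisli composite of your evaluation with \(k\times X\) equals \(\Ev_{X,\Dfun Y}\Comp\With{(\Simplp X{\Dmonm_Y}\Comp\Dfunstrf\Comp k)}{X}\).
\item So the curryings of \(h\) are exactly the \(k\) with \(\Simplp X{\Dmonm_Y}\Comp\Dfunstrf\Comp k=\Cur(h)\). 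Since \(\Dfunstrf\) is an iso, uniqueness amounts to postcomposition with \(\Simplp X{\Dmonm_Y}\) being injective on \(\Kl\cL(Z,\Simpl X{\Dfun\Dfun Y})\).
\end{itemize}

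That injectivity fails. By the two unit laws, both \(\Simplp X{\Dmonu_{\Dfun Y}}\Comp\Cur(h)\) and \(\Simplp X{\Dfun\Dmonu_Y}\Comp\Cur(h)\) are solutions. They differ in general because \(\Dmonu_{\Dfun Y}\neq\Dfun\Dmonu_Y\). In \(\PCOH\) these two maps put the component \(y_1\) of \((y_0,y_1,\dots)\) at two different off-diagonal positions of \(\Dfun^2Y\). For a concrete witness, take \(Z=\Top\), \(X=\Dfun Y\) and \(h\) the second projection. So existence holds (your \(\Dmonu\Comp\Cur(h)\) is one solution), but the uniqueness you flagged as the main obstacle is false. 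Neither Kock's theory nor a diagram chase can supply it.

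The fix is what the paper's currying formula \((\Dfunstrf)^{-1}\Comp\Cur(h)\) encodes, since that map lands in \(\Dfun(\Simpl XY)\). Take the exponential object to be \(\Simpl XY\), with evaluation \(\Dmonu_Y\Comp\Ev_{X,Y}\). The same computation, now using \(\Dmonm\Comp\Dfun\Dmonu=\Id\), shows that the composite of this evaluation with \(k\times X\) is \(\Ev_{X,\Dfun Y}\Comp\With{(\Dfunstrf\Comp k)}{X}\). Hence
\(\KLDFUN(Z,\Simpl XY)=\Kl\cL(Z,\Dfun(\Simpl XY))\Isom\Kl\cL(Z,\Simpl X{\Dfun Y})\Isom\KLDFUN(\With ZX,Y)\).
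The first bijection is postcomposition with the iso \(\Dfunstrf\), the second is currying in \(\Kl\cL\), and the unique currying is \((\Dfunstrf)^{-1}\Comp\Cur(h)\).

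This is also what the general principle you started from actually delivers. The pointwise iso \(\Dfun(\Simpl XY)\Isom\Simpl X{\Dfun Y}\) identifies \(\Dfun\)-Kleisli maps into \(\Simpl XY\) with \(\Kl\cL\)-maps into \(\Simpl X{\Dfun Y}\); it says nothing useful about \(\Dfun\)-Kleisli maps into \(\Simpl X{\Dfun Y}\). The paper's sentence naming \(\Simpl X{\Dfun Y}\) as the internal hom should be read through its currying map, whose type only fits \(\Simpl XY\).
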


\begin{proof}
  The product of two objects $X_1$ and $X_2$ is simply given by the 
  product $X_1 \IWith X_2$ in $\Kl \cL$.
  The pairing of two morphisms $f \in \KLDFUN(X, Y_1)$ and 
  $g \in \KLDFUN(X, Y_2)$ is given by the following composition 
  in $\Kl \cL$
  \[ 
    \begin{tikzcd}
      X & {\Dfun Y_1 \IWith \Dfun Y_2} &[3em] {\Dfun(Y_1 \IWith Y_2).}
      \arrow["{\Pair{f}{g}}", from=1-1, to=1-2]
      \arrow["{\Derfun(\Swithiso_{Y_1, Y_2})}", from=1-2, to=1-3]
    \end{tikzcd} \]
  The internal hom of two objects $X$ and $Y$ is given by 
  the object $\Simpl{X}{\Dfun Y}$. The currying of a morphism
  $h \in \KLDFUN(U \IWith X, Y)$ is given by the following 
  composition in $\Kl \cL$ 
  \[ 
    \begin{tikzcd}
      U &[1em] {\Simpl{X}{\Dfun Y}} &[1em] {\Dfun(\Simpl X Y)}
      \arrow["{\Cur(h)}", from=1-1, to=1-2]
      \arrow["{(\Dfunstrf)^{-1}}", from=1-2, to=1-3]
    \end{tikzcd} 
  \]
  and the evaluation $\Ev \in \KLDFUN((\Simpl X {\Dfun Y}) \IWith X, Y)$
  is simply the evaluation of $\Kl \cL$.
  The fact that these constructions give a product and an internal hom
  can be checked by hand.
  Alternatively, this is the Kleisli category counterpart of 
  \emph{doctrinal adjunctions}~\cite{Kelly74-doctrinal}, a well
  known result about lifting of adjunctions to Eilenberg-Moore categories
  of monads.
  We refer the reader to \cite[Sec.~4.4]{Walch25-thesis} for a
  detailed explanation.
\end{proof}

This is a striking result since the Kleisli category of a monad is not
usually cartesian closed.
We see it as another piece of evidence pointing towards a
\Cohdifflambda{} whose semantics is given by the CCC associated with a
\Lingdiffcat{}.

For now, we focused on the currying $\Dfunstrf = \Cur(\Dfun_1 \Ev)$
of the first partial derivative of the evaluation. 
The currying of the second partial derivative is also quite
interesting and gives a morphism
\[
  \Dfunint
  =\Ap\Cur{\A\Dfun\Ev\Comp\Dfunstr2_{\Simpl XY,X}} 
  = \Cur(\Dfun_2 \Ev) \in \Kl\cL(\Simpl XY,\Simpl{\Dfun X}{\Dfun Y})
\]
which allows us to internalize the action of \(\Dfun\).

\newcommand\Context{U}
More precisely, let \(f\in\Kl\cL(\With U X,Y)\) so that 
$\Cur(f) \in \Kl \cL(U, \Simpl X Y)$.
There are two ways to define a morphism in
$\Kl \cL(U, \Simpl {\Dfun X} {\Dfun Y})$ that represents the
$\Mdeg$-differential of $\Cur(f)$ with respect to its argument in
$X$, namely,
\begin{align*}
  \Dfunint\Comp\A\Cur f
  \text{\quad and\quad}
  \Ap\Cur{\Dfun_2 f}=\Ap\Cur{\Dfun f\Comp\Dfunstr2_{Z,X}}\,.
\end{align*}
The first one ``curries then derives'' using the internalized action
of $\Dfun$, and the second ``derives then curries'' using the second
partial derivative.
We can easily prove that the two methods are equal, meaning that \(\cL\)
validates a form of ``differential \(\beta\)-reduction''.
\begin{proposition}
  For any \(f\in\Kl\cL(\With{U}X,Y)\) we have %
  \(\Dfunint\Comp\A\Cur f=\Ap\Cur{\Dfun f\Comp\Dfunstr2_{Z,X}}\).
\end{proposition}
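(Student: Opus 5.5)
The plan is to reduce the equation to the universal property of currying, \cref{eq:ev-cur}, in the cartesian closed category \(\Kl\cL\). Both sides live in \(\Kl\cL(U,\Simpl{\Dfun X}{\Dfun Y})\), so it suffices to show that \(\Ev\Comp\Withp{\Dfunint\Comp\A\Cur f}{\Dfun X}=\Dfun f\Comp\Dfunstr2_{U,X}\), where \(U\) is the context object (written \(Z\) in the statement).

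For the left-hand side, first rewrite
\(\Withp{\Dfunint\Comp\A\Cur f}{\Dfun X}=\Withp{\Dfunint}{\Dfun X}\Comp\Withp{\A\Cur f}{\Dfun X}\). Since \(\Dfunint=\Cur(\Dfun_2\Ev)\), \cref{eq:ev-cur} gives \(\Ev\Comp\Withp{\Dfunint}{\Dfun X}=\Dfun_2\Ev=\Dfun\Ev\Comp\Dfunstr2_{\Simpl XY,X}\). The left-hand side therefore becomes \(\Dfun\Ev\Comp\Dfunstr2_{\Simpl XY,X}\Comp\Withp{\A\Cur f}{\Dfun X}\).

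Next, use naturality of the strength \(\Dfunstr2\) in its first argument, which holds because \(\Dfunstr2\) is the strength of the commutative monad structure on \(\Dfun\) from \cref{th:dfun-strength-with}. This gives
\(\Dfunstr2_{\Simpl XY,X}\Comp\Withp{\A\Cur f}{\Dfun X}=\Dfun\Withp{\A\Cur f}{X}\Comp\Dfunstr2_{U,X}\). Functoriality of \(\Dfun\) on \(\Kl\cL\) then yields
\(\Dfun\Ev\Comp\Dfun\Withp{\A\Cur f}{X}=\Dfun\big(\Ev\Comp\Withp{\A\Cur f}{X}\big)=\Dfun f\), again by \cref{eq:ev-cur}. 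This is exactly the right-hand side.

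The only step needing care is the naturality of \(\Dfunstr2\) with respect to arbitrary Kleisli morphisms \(\A\Cur f\), not just linear ones. I would obtain it from \cref{th:dfun-strength-with}: there \(\Dfunstr2\) is constructed by the Kock recipe from the strong monoidal structure \(\Derfun(\Swithiso)\) of the functor \(\Dfun\) on \(\Kl\cL\), with \(\Derfun(\Sin0)\) as unit. The step then reduces to two facts in \(\Kl\cL\):
\begin{itemize}
\item naturality of \(\Derfun(\Swithiso)\), which is part of \(\Dfun\) being a strong monoidal functor for \(\IWith\);
\item naturality of \(\Dmonu=\Derfun(\Sin0)\), which is \cref{prop:Dfun-monad}.
\end{itemize}
Everything else is routine CCC manipulation.
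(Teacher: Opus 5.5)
Your proof is correct and is the routine argument the paper has in mind when it says the equality is easily proved; the paper gives no further details. The steps are: reduce via \cref{eq:ev-cur}, use \(\Ev\Comp\Withp{\Dfunint}{\Dfun X}=\Dfun_2\Ev\), then apply naturality of \(\Dfunstr2\) in its first argument and functoriality of \(\Dfun\). The one point needing care is naturality for arbitrary Kleisli morphisms, and you justify it correctly through the construction of \cref{th:dfun-strength-with}: \(\Derfun(\Swithiso)\) is natural because it inverts the canonical comparison map \(\Dfun(\With{X_1}{X_2})\to\With{\Dfun X_1}{\Dfun X_2}\), and \(\Dmonu\) is natural in \(\Kl\cL\).
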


Consider now \(f\in\Kl\cL(\With \Context X,\Simpl Y Z)\) and %
\(g\in\Kl\cL(\With \Context X,Y)\).  The morphism \(f\) should be seen as a
function \(Y\to Z\) depending on one parameter in \(\Context\) (a context) 
and a variable in \(X\).
The morphism \(g\) is an argument for $f$, similarly parameterized.
We can apply \(f\) to \(g\), defining %
\(\App fg=\Ev\Comp\Tuple{f,g}\in\Kl\cL(\With \Context X,Z)\) %
and then we can take the derivative of \(\App fg\) wrt.~the second
parameter which is
\begin{align*}
  \A{\Dfun_2}{(\App fg)}=\A\Dfun{(\App fg)}\Comp\Dfunstr2
  \in\Kl\cL(\With \Context{\A\Dfun X},\A\Dfun{Z})\,.
\end{align*}
We now want to express the derivative of this application wrt.~the
derivatives of $f$ and of $g$. %
Intuitively, the chain rules of the differential calculus tells us
(fixing the parameter in $\Context$) that
\[
  \frac{d f(x, g(x))}{d x}
  = \partial_1 f(x, g(x))
  + \partial_2 f(x, g(x)) \cdot \frac{d g(x)}{d x}\,.
\]

This idea corresponds in our formalism to \cref{th:dfun-part-app}
below. 
For all \(f\in\Kl\cL(\With \Context X,\Simpl Y Z)\) we define 
a morphism $\Dfun_{2, 3} f \in \Kl \cL(\With \Context \Dfun X, \Simpl{\Dfun
    Y}{\Dfun Z})$ as the following composition in $\Kl \cL$
  \begin{equation} \label{eq:partial-two-three}
    \begin{tikzcd} {\Context \IWith \Dfun X} &
      {\Dfun (\Simpl Y Z)} & {\Simpl Y {\Dfun Z}} & {\Simpl{\Dfun
          Y}{\Dfun \Dfun Z}} &[2em] {\Simpl{\Dfun X}{\Dfun Y}}
      \arrow["{\Dfun_2 f}", from=1-1, to=1-2] \arrow["{\Dfunstrf_{X,
          Y}}", from=1-2, to=1-3] \arrow["\Dfunint", from=1-3, to=1-4]
      \arrow["{\Simpl{\Dfun Y}{\Dmonm_Y}}", from=1-4, to=1-5]
          \end{tikzcd}
        \end{equation}
   which represents the derivative of
   $f \in \Kl \cL(\Context \IWith X, \Simpl Y Z)$ wrt.~both its
   argument in $X$ and in $Y$.

\begin{theorem} %
  \label{th:dfun-part-app}
  For all $f \in \Kl \cL(\Context \IWith X, \Simpl Y Z)$ and 
  \(g\in\Kl\cL(\With \Context X, Y)\), we have
  \begin{equation*}
    \A{\Dfun_2}{\App fg}
    =\App{\Dfun_{2, 3} f}
    {\A{\Dfun_2}g}\,.
  \end{equation*}
\end{theorem}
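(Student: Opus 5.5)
We prove \(\A{\Dfun_2}{\App fg}=\App{\Dfun_{2,3}f}{\A{\Dfun_2}g}\) by unfolding both sides into compositions in \(\Kl\cL\) built from \(\Dfun\), the strengths \(\Dfunstr1,\Dfunstr2\) of \cref{th:dfun-strength-with}, \(\Ev\), \(\Cur\), and \(\Dmonm\). Then we reduce everything to the chain rule for \(\Dfun\) (functoriality of \(\Dfun\) on \(\Kl\cL\)) together with \cref{prop:sum-partial-derivative}.

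\textbf{Step 1: the left-hand side.} Since \(\App fg=\Ev\Comp\Tuple{f,g}\), functoriality of \(\Dfun\) gives
\[
\A{\Dfun_2}{\App fg}=\Dfun\Ev\Comp\Dfun\Tuple{f,g}\Comp\Dfunstr2 .
\]
Because \(\Dfun\) is strong monoidal for \(\IWith\) via \(\Derfun(\Swithiso)\), we can rewrite
\[
\Dfun\Tuple{f,g}=\Derfun(\Swithiso)\Comp\Tuple{\Dfun f,\Dfun g}.
\]
Precomposing with \(\Dfunstr2\) then yields \(\Derfun(\Swithiso)\Comp\Tuple{\Dfun_2f,\Dfun_2g}\).

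The factor \(\Dfun\Ev\Comp\Derfun(\Swithiso)\in\Kl\cL(\Dfun(\Simpl YZ)\IWith\Dfun Y,\Dfun Z)\) is the total derivative of \(\Ev\). By \cref{prop:sum-partial-derivative} we decompose it as
\[
\Dfun\Ev\Comp\Derfun(\Swithiso)=\Dmonm\Comp\Dfun_1\Dfun_2\Ev ,
\]
after transporting along \(\Derfun(\Swithiso)\), using that \(\Dfunstr1\) and \(\Dfunstr2\) factor through \(\Derfun(\Swithiso)\) as in the definitions in \cref{th:dfun-strength-with}.

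\textbf{Step 2: the right-hand side.} Write \(\App{\Dfun_{2,3}f}{\Dfun_2 g}=\Ev\Comp\Tuple{\Dfun_{2,3}f,\Dfun_2g}\) and unfold \(\Dfun_{2,3}f\) using \cref{eq:partial-two-three}. Two rules for \(\Ev\) against curried morphisms then apply:
\begin{itemize}
\item For any \(h\), \(\Ev\Comp\Tuple{\Simpl{\Dfun Y}{h}\Comp k,\,u}=h\Comp\Ev\Comp\Tuple{k,u}\). This pulls \(\Dmonm\) outside.
\item We have \(\Dfunint=\Cur(\Dfun_2\Ev)\) and, by \cref{eq:ev-cur}, \(\Dfunstrf=\Cur(\Dfun_1\Ev)\). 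Hence
\[
\Ev\Comp\Tuple{\Dfunint\Comp\Dfunstrf\Comp a,\,b}=\Dfun_2\Ev\Comp\Tuple{\Dfunstrf\Comp a,\,b},
\]
and \(\Dfunstrf\Comp a\) is again a curried form of \(\Dfun_1\Ev\) precomposed with \(a\).
\end{itemize}
Carrying out these rewrites should reach the shape \(\Dmonm\Comp\Dfun_1\Dfun_2\Ev\Comp\Tuple{\Dfun_2 f,\Dfun_2 g}\), up to the strong monoidal isomorphisms.

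\textbf{Step 3 (main obstacle).} The hard part is matching the nesting: on the right, \(\Dfun_2\Ev\) is applied to \(\Dfunstrf\) of something, not to a \(\Dfun\)-lifted argument. To bridge this, I would rewrite \(\Dfun_1\Dfun_2\Ev\) as \(\Dfun(\Dfun_2\Ev)\Comp\Dfunstr1\). Then I would use naturality of \(\Dfunstr1\) in \(\Kl\cL\) together with the universal property \cref{eq:ev-cur} to show
\[
\Dfun(\Dfun_2\Ev)\Comp\Dfunstr1_{\Simpl YZ,\Dfun Y}=\Ev\Comp\bigl(\Cur(\Dfun_2\Ev)\IWith\Dfun Y\bigr)\text{ composed with }\Dfunstrf\IWith\Id .
\]
In words: applying \(\Dfun_1\) to an evaluation-like map amounts to precomposing with \(\Dfunstrf\), since \(\Dfunstrf=\Cur(\Dfun_1\Ev)\). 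The ingredients for this identity are:
\begin{itemize}
\item naturality of \(\Dfun\Ev\Comp\Dfunstr1\) in its first variable;
\item \cref{thm:linearity-implies-d-linearity}, because \(\Ev\) is linear in its first argument.
\end{itemize}
If the direct diagram chase is too heavy, I would fall back to the explicit description \(\Dfunstrf=\Derfun(\Sfunstri_{\Oc X,Y})\) and verify the identity by postcomposing with the jointly monic \(\Derfun(\Sproj i)\Comp\Dfun\Derfun(\Sproj j)\).
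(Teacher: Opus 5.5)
Your overall strategy is the paper's: unfold both sides, split \(\Dfun\Ev\) with \cref{prop:sum-partial-derivative}, then use \cref{eq:ev-cur} and naturality of a strength. However, Step~3 rests on a false identity.

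\textbf{The Step~3 identity fails.} Its right-hand side is \(\Ev\Comp(\Dfunint\IWith\Dfun Y)\Comp(\Dfunstrf\IWith\Dfun Y)=\Dfun\Ev\Comp\Dfunstr2\Comp(\Dfunstrf\IWith\Dfun Y)\). By naturality of \(\Dfunstr2\) in its first variable and \(\Ev\Comp(\Dfunstrf\IWith Y)=\Dfun_1\Ev\), this equals \(\Dfun(\Dfun_1\Ev)\Comp\Dfunstr2=\Dfun_2(\Dfun_1\Ev)\). So you are asserting \(\Dfun_1(\Dfun_2\Ev)=\Dfun_2(\Dfun_1\Ev)\), which is wrong.

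Indeed, \cref{eq:com-strength-diff-kleisli} and naturality of \(\Dflip\) give \(\Dfun_2(\Dfun_1 h)=\Dflip\Comp\Dfun_1(\Dfun_2 h)\) for every \(h\), and here \(\Dflip\) acts non-trivially. For \(\Mdeg=\{0,1\}\) with \(1+1\) undefined, at a point \(((\phi,\psi),(y,v))\) the two mixed components are \(\phi'(y)\cdot v\) and \(\psi(y)\). The two orders put them in opposite slots, so they differ as soon as \(\phi=0\) and \(\psi\) is a nonzero constant.

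What naturality of \(\Dfunstr1\) and \cref{eq:ev-cur} actually give is \(\Dfun_1(\Dfun_2\Ev)=\Ev\Comp\bigl((\Dfunstrf\Comp\Dfun\Dfunint)\IWith\Dfun Y\bigr)\). Here \(\Dfunstrf\) and \(\Dfunint\) appear in the opposite order from the one in \(\Dfun_{2,3}f\). The two orders agree only after postcomposition with \(\Dmonm\), because \(\Dmonm\Comp\Dflip=\Dmonm\). Consequently neither your diagram chase nor the componentwise fallback can go through, and \cref{thm:linearity-implies-d-linearity} does not help.

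Your Step~2, carried out, actually lands on \(\Dmonm\Comp\Dfun_2(\Dfun_1\Ev)\Comp\Tuple{\Dfun_2 f,\Dfun_2 g}\), not on the \(\Dfun_1\Dfun_2\) shape you announce.

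\textbf{The fix.} Use the other equality of \cref{prop:sum-partial-derivative}, \(\Dfun\Ev=\Dmonm\Comp\Dfun_2(\Dfun_1\Ev)\) (up to \(\Derfun(\Swithiso)\)), as the paper does. The computation above, read backwards, then gives \(\Dfun_2(\Dfun_1\Ev)=\Ev\Comp(\Dfunint\IWith\Dfun Y)\Comp(\Dfunstrf\IWith\Dfun Y)\) directly. Combined with your Step~1 and the CCC rule that pulls \(\Dmonm\) out of \(\Simpl{\Dfun Y}{\Dmonm}\), this yields the theorem with no obstacle left in Step~3.

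If you prefer to keep the \(\Dfun_1\Dfun_2\) ordering, you must carry \(\Dmonm\) through Step~3. You then absorb the \(\Dflip\) produced by \cref{eq:com-strength-diff-kleisli} using \(\Dmonm\Comp\Dflip=\Dmonm\).
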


\begin{proof}
  The intuition on the chain rule above 
  is captured by \cref{prop:sum-partial-derivative} applied to the 
  evaluation morphism $\Ev$. 
  Concretely, remember that
  $\Dfunstrf_{Y, Z} \in \Kl \cL(\Dfun(\Simpl Y Z), \Simpl{Y}{\Dfun
    Z})$ and
  $\Dfunint_{Y, Z} \in \Kl \cL(\Simpl Y Z, \Simpl{\Dfun Y}{\Dfun Z})$
  are the currying of the partial derivatives of the evaluation:
  $\Dfunstrf_{Y, Z} = \Cur(\Dfun_1 \Ev)$ and
  $\Dfunint_{Y, Z} = \Cur(\Dfun_2 \Ev)$.
  From these observations, we can prove our contention by the following
  diagram chase
  \[ \begin{tikzcd}
	{\Context \IWith \Dfun X} & {} & {} & {\Dfun Z} \\
	\\
	{\Dfun(\Simpl Y Z) \IWith \Dfun Y} & {\Dfun(\Dfun (\Simpl Y Z) \IWith Y)} & {\Dfun \Dfun ((\Simpl Y Z) \IWith Y)} & {\Dfun \Dfun Z} \\
	{(\Simpl{Y}{\Dfun Z}) \IWith \Dfun Y} & {\Dfun ((\Simpl{Y}{\Dfun Z}) \IWith Y)} && {\Dfun \Dfun Z} \\
	{(\Simpl{\Dfun Y}{\Dfun \Dfun Z}) \IWith \Dfun Y} &&& {\Dfun \Dfun Z}
	\arrow["{\Dfun_2 (\App{f}{g})}", from=1-1, to=1-4]
	\arrow["{\Tuple{\Dfun_2 f, \Dfun_2 g}}"', from=1-1, to=3-1]
	\arrow["{\Dfun \Ev}"{description}, from=3-1, to=1-4]
	\arrow["{\Dfunstr{2}_{\Simpl Y Z, Y}}"', from=3-1, to=3-2]
	\arrow["{\Dfunstrf_{Y, Z} \IWith \Dfun Y}"', from=3-1, to=4-1]
	\arrow["{\text{by definition}}", draw=none, from=3-2, to=1-2]
	\arrow["{\Dfun \Dfunstr1_{\Simpl Y Z, Y}}"', from=3-2, to=3-3]
	\arrow["{\Dfun(\Dfunstrf_{Y}{Z} \IWith Y)}", from=3-2, to=4-2]
	\arrow["{\text{\cref{prop:sum-partial-derivative}}}"', draw=none, from=3-3, to=1-3]
	\arrow["{\Dfun \Dfun \Ev}", from=3-3, to=3-4]
	\arrow["{\Dmonm_Z}"', from=3-4, to=1-4]
	\arrow["{\Dfunstr{2}_{\Simpl{Y}{\Dfun Z}, Y}}"', from=4-1, to=4-2]
	\arrow["{\Dfunint_{Y, \Dfun Z} \IWith {\Dfun Y}}"', from=4-1, to=5-1]
	\arrow[""{name=0, anchor=center, inner sep=0}, "{\Dfun \Ev}"', from=4-2, to=4-4]
	\arrow["\Id"', from=4-4, to=3-4]
	\arrow[""{name=1, anchor=center, inner sep=0}, "\Ev"', from=5-1, to=5-4]
	\arrow["\Id"', from=5-4, to=4-4]
	\arrow["{\text{\cref{eq:ev-cur}}}"', draw=none, from=3-3, to=0]
	\arrow["{\text{\cref{eq:ev-cur}}}"{description}, draw=none, from=4-2, to=1]
      \end{tikzcd} \] since the bottom path is equal to
    \[
      \Dmonm_Z \Comp\App{\Dfunint\Comp\Dfunstrf_{Y, Z}
        \Comp\A{\Dfun_2}f}{\A{\Dfun_2}g}
      = \App{(\Simpl{\Dfun Y}{\Dmonm_Z})
        \Comp \Dfunint\Comp\Dfunstrf_{Y, Z}\Comp\A{\Dfun_2}f}{\A{\Dfun_2}g}
      = \App{\Dfun_{2, 3} f}{\A{\Dfun_2}g}
    \]
    by standard properties of CCCs.
\end{proof}

\subsection{Fixpoints}
\label{sec:fixpoints}

We start with stating a few standard results about fixpoints in a CCC
which is enriched in \(\omega\)-cpos.

\subsubsection{Reminder about fixpoint operators in a CCC}
\begin{definition}
  An \(\omega\)-cpo is a poset which has a least element \(0\) and
  where any monotone sequence has a lub.
  If \(D\) and \(D'\) are \(\omega\)-cpos, a function \(f:D\to D'\) is
  Scott-continuous
  (or simply continuous) if \(f\) is monotone and commutes with the
  lubs of monotone sequences.
\end{definition}

\begin{remark}
  In the literature, Scott continuity is usually defined as
  preservation of the lubs of arbitrary directed sets.
  However such lubs do not always exist in the situations we are
  interested in.
  This is specifically the case in categories arising in continuous
  probabilistic settings: in such situations lubs of monotone
  countable sequences can be assumed to exist thanks to the monotone
  convergence theorem though more general uncountable directed lubs do
  not always exist.
  We nevertheless use the term ``Scott continuity'' since the
  fundamental ideas of Dana~Scott are also central in such models.
\end{remark}

\begin{proposition}
  If \(D\) is an \(\omega\)-cpo and \(f:D\to D\) is continuous then
  \(f\) has a least fixpoint, which is \(\sup_{n\in\Nat}f^n(0)\).
\end{proposition}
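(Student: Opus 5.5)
The plan is the classical Kleene argument, in two parts: existence of the fixpoint, then minimality.

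Set \(x_n=f^n(0)\). First I show by induction that the sequence \((x_n)_{n\in\Nat}\) is monotone. The base case \(x_0=0\leq x_1\) holds because \(0\) is the least element of \(D\). For the inductive step, \(x_n\leq x_{n+1}\) gives \(x_{n+1}=f(x_n)\leq f(x_{n+1})=x_{n+2}\) by monotonicity of \(f\). Since \(D\) is an \(\omega\)-cpo, the lub \(x=\sup_{n\in\Nat}x_n\) exists.

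Next I check that \(x\) is a fixpoint. Continuity of \(f\) lets me commute \(f\) with the lub of the monotone sequence, so \(f(x)=\sup_{n}f(x_n)=\sup_n x_{n+1}\). Removing the first term \(x_0=0\) of a monotone sequence does not change its lub, since \(x_0\) is below every later term. Hence \(\sup_n x_{n+1}=x\), and \(f(x)=x\).

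For minimality, I take any \(y\) with \(f(y)=y\); in fact it suffices that \(f(y)\leq y\). I show \(x_n\leq y\) for all \(n\) by induction. The base case \(0\leq y\) holds since \(0\) is least. For the inductive step, \(x_n\leq y\) gives \(x_{n+1}=f(x_n)\leq f(y)\leq y\) by monotonicity. Taking the lub over \(n\) yields \(x\leq y\).

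None of these steps is a real obstacle. The only point needing care is that continuity is stated here only for monotone sequences rather than arbitrary directed sets. That is exactly why the monotonicity of \((x_n)\) must be established first, before continuity is invoked.
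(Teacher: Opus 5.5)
Your proof is correct. It is the standard Kleene argument: you show the iterates \(f^n(0)\) form a monotone chain, use continuity to see that its lub is a fixpoint, and use induction for minimality, which in fact gives leastness even among prefixpoints. The paper does not prove this proposition and treats it as standard; the same iterate-and-take-the-lub reasoning is what it uses in the proof of the subsequent fixpoint-operator theorem for \(\lambda\)-categories.
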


\begin{definition} %
  \label{def:lambda-category}
  A \emph{\(\lambda\)-category} is a CCC which is enriched in
  \(\omega\)-cpos and in which the pairing operation %
  \(\cC(X,Y_1)\times\cC(X,Y_2)\to\cC(X,\With{Y_1}{Y_2})\) and the
  currying operation %
  \(\cC(\With ZX,Y)\to\cC(Z,\Impl XY)\) %
  are Scott continuous.
\end{definition}

\begin{theorem} %
  \label{th:lambda-cat-fixpoint-operator}
  Let \(\cC\) be a \(\lambda\)-category.
  For any objects \(X\) of $\cC$ there is a morphism %
  \(\cY\in\cC(\Impl XX,X)\) such that, for any morphism %
  \(f\in\cC(Z,\Simpl XX)\) %
  the morphism \(\cY\Comp f\in\cC(Z,X)\) %
  is the least morphism \(g\in\cC(Z,X)\) such that %
  \(\Ev\Comp\Tuple{f,g}=g\).
\end{theorem}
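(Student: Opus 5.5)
The plan is to build \(\cY\) uniformly in the cpo-enriched structure by the usual Kleene iteration, carried out internally in \(\cC\). For each \(n\in\Nat\) I define \(\cY_n\in\cC(\Impl XX,X)\) by \(\cY_0=0\) (the least element of the \(\omega\)-cpo \(\cC(\Impl XX,X)\)) and \(\cY_{n+1}=\Ev\Comp\Tuple{\Id_{\Impl XX},\cY_n}\). Equivalently, \(\cY_{n+1}=\Phi(\cY_n)\), where \(\Phi:\cC(\Impl XX,X)\to\cC(\Impl XX,X)\) is the map \(h\mapsto\Ev\Comp\Tuple{\Id,h}\). The map \(\Phi\) is Scott continuous, because pairing is Scott continuous by \cref{def:lambda-category} and post-composition with \(\Ev\) is continuous by the \(\omega\)-cpo enrichment. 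The previous proposition then gives a least fixpoint \(\cY=\sup_n\cY_n\) of \(\Phi\), which means \(\Ev\Comp\Tuple{\Id,\cY}=\cY\).

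Next I check that \(g=\cY\Comp f\) solves the equation for a given \(f\in\cC(Z,\Impl XX)\). Naturality of pairing gives \(\Tuple{\Id,\cY}\Comp f=\Tuple{f,\cY\Comp f}\), so
\[
\Ev\Comp\Tuple{f,\cY\Comp f}=\Ev\Comp\Tuple{\Id,\cY}\Comp f=\cY\Comp f.
\]

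The remaining step is minimality. Let \(g\in\cC(Z,X)\) satisfy \(\Ev\Comp\Tuple{f,g}=g\). I prove \(\cY_n\Comp f\le g\) by induction on \(n\).
\begin{itemize}
\item For \(n=0\), I use that \(0\Comp f\) is the least element. This needs a small check: precomposition is continuous, but I also need it to be strict. In the cpo-enriched settings of interest, \(0\) is the least morphism and \(0\Comp f\le g\) holds for every \(g\). If strictness is not available, I instead take \(\cY_0\Comp f\) as the minimal iterate on the \(Z\)-side and compare sequences directly.
\item For the inductive step, \(\cY_{n+1}\Comp f=\Ev\Comp\Tuple{f,\cY_n\Comp f}\le\Ev\Comp\Tuple{f,g}=g\), by monotonicity of pairing and of composition.
\end{itemize}
Since precomposition by \(f\) is continuous, taking sups gives \(\cY\Comp f=\sup_n\cY_n\Comp f\le g\).

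The main obstacle is the base case, that is, whether \(0\Comp f\) is least in \(\cC(Z,X)\). The cleanest way to avoid it is to define the iteration on the \(Z\)-side. I would set \(g_0=0\) and \(g_{n+1}=\Ev\Comp\Tuple{f,g_n}\), so that the least solution is \(\sup_n g_n\), and then show \(\cY_n\Comp f=g_n\) for \(n\ge1\). This holds by induction using naturality of pairing, and the case \(n=0\) only matters up to monotonicity, since \(\cY_1\Comp f=\Ev\Comp\Tuple{f,0\Comp f}\ge g_1\) and the reverse inequality follows by an induction started at \(0\le0\Comp f\). Taking sups then gives \(\cY\Comp f=\sup_n g_n\). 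The least-fixpoint property of \(\sup_n g_n\) for the continuous map \(h\mapsto\Ev\Comp\Tuple{f,h}\) finishes the argument.
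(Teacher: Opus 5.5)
Your construction is the paper's: the same iterates $\cY_0=0$, $\cY_{n+1}=\Ev\Comp\Tuple{\Id,\cY_n}$, the same $\cY=\sup_n\cY_n$, and the same comparison with the $Z$-side iterates $g_n$. You prove minimality by the induction $\cY_n\Comp f\le g$ against an arbitrary solution $g$, where the paper identifies $\cY\Comp f=\sup_n g_n$; this is only a repackaging. Both versions hinge on the base case, that is, on left-strictness $0\Comp f=0$, and this is not among the axioms of \cref{def:lambda-category}. The paper uses it silently as well, since its step $\sup_n\Ev\Comp\Tuple{f,\cY_n\Comp f}=\sup_n g_n$ needs $\cY_0\Comp f=g_0$. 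In the intended instance $\Kl\cL$ it does hold: $0$ is a zero morphism of $\cL$, so $0\Comp f=0\Compl\Oc f\Compl\Digg Z=0$. You should state it as a hypothesis, or check it there, rather than leave it as ``a small check''.

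The workaround in your last paragraph does not remove this dependence. First, $\cY_1\Comp f=\Ev\Comp\Tuple{f,0\Comp f}$ coincides with $g_1=\Ev\Comp\Tuple{f,0}$ only when $0\Comp f=0$. So ``$\cY_n\Comp f=g_n$ for $n\ge1$'' is the same assumption in disguise. Second, the induction started from $0\le 0\Comp f$ yields $g_n\le\cY_n\Comp f$, hence $\sup_n g_n\le\cY\Comp f$. That is the direction you already get from $\cY\Comp f$ being a solution, not the reverse inequality that minimality requires. In fact, with $\Psi(h)=\Ev\Comp\Tuple{f,h}$ one has $\cY_n\Comp f=\Psi^n(0\Comp f)$, so $\cY\Comp f$ is the least fixpoint of $\Psi$ lying above $0\Comp f$. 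It equals the least fixpoint $\sup_n g_n$ exactly when $0\Comp f\le\sup_n g_n$. Some strictness-type input is therefore unavoidable, and no reorganisation of the two sequences can replace it.
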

This is a standard result in semantics; we give the proof because we
think it helps understanding \cref{sec:diff-fix}.
\begin{proof}
  By induction we define a family of morphisms
  \((\cY_n\in\cC(\Simpl XX,X))_{n\in\Nat}\) by \(\cY_0=0\) and
  \(\cY_{n+1}=\Ev\Comp\Tuple{\Simpl XX,\cY_n}\).
  By monotonicity and Scott continuity of all the CCC operations in
  \(\cC\), this sequence is increasing in the poset
  \(\cC(\Simpl XX,X)\) and \(\cY=\sup_{n\in\Nat}\cY_n\) is the least
  element of \(\cC(\Simpl XX,X)\) such that
  \(\cY=\Ev\Comp\Tuple{\Simpl XX,\cY}\).
  Therefore
  \begin{align*}
    \cY\Comp f
    &=\sup_{n\in\Nat}(\cY_n\Comp f)\\
    &=\sup_{n\in\Nat}(\cY_{n+1}\Comp f)
      \text{\quad since }\cY_0=0\\
    &=\sup_{n\in\Nat}(\Ev\Comp\Tuple{f,\cY_n\Comp f})\\
    &=\sup_{n\in\Nat}g_n
  \end{align*}
  where \((g_n\in\cC(Z,X))_{n\in\Nat}\) is the (obviously monotone)
  sequence of morphisms inductively defined by \(g_0=0\) and %
  \(g_{n+1}=\Ev\Comp\Tuple{f,g_n}\).
  So \(g=\cY\Comp f\) is the least element of \(\cC(Z,X)\) such that
  \(g=\Ev\Compl\Tuple{f,g}\) by Scott continuity of the map
  \(\Ev\Comp\Tuple{f,\_}\).
\end{proof}

\subsubsection{The \texorpdfstring{\DDIFF{}}{differentiation} of fixpoints}
\label{sec:diff-fix}
\begin{definition}
  Let \(f,g\in\cL(X,Y)\), we write \(f\leq g\) if there is an
  \(h\in\cL(X,Y)\) such that \(f\) and \(h\) are summable and
  \(g=f+h\).
\end{definition}

\begin{lemma}
  The relation \(\leq\) is a preorder relation on \(\cL(X,Y)\).
  The composition operation %
  \(\cL(X,Y)\times\cL(Y,Z)\to\cL(X,Z)\) is monotone wrt.~this preorder
  relation and if \(\cL\) is a symmetric monoidal summable category
  (that is~\ref{ax:stens} holds), then the tensor product of \(\cL\)
  is monotone wrt.~this preorder relation. If \(\cL\) is a cartesian
  summable category (that is~\ref{ax:swith} holds), then the pairing
  operation \(\cL(X,Y_1)\times\cL(X,Y_2)\to\cL(X,\With{Y_1}{Y_2})\) is
  monotone.
\end{lemma}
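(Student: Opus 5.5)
We will unfold the definition of \(\leq\) and use the \Pmonoid{} axioms of each hom-set together with the enrichment results established earlier.

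\emph{Preorder.} Reflexivity holds because \(f=f+0\): the pair \((f,0)\) is summable with sum \(f\) by \cref{prop:zero-neutral} (or \ref{ax:unary} plus \cref{prop:subfamily-summable}). For transitivity, assume \(g=f+h\) and \(k=g+h'\). Then the family \((g,h')\) is summable, and by \ref{ax:pa} applied to the partition \(\{\{f,h\},\{h'\}\}\) of the family \((f,h,h')\), this family is summable with sum \(k\). Applying \ref{ax:pa} again to the partition \(\{\{f\},\{h,h'\}\}\), the pair \(h+h'\) is defined, \((f,h+h')\) is summable, and \(f+(h+h')=k\). Hence \(f\leq k\).

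\emph{Monotonicity of composition.} Let \(f\leq f'\) in \(\cL(X,Y)\) and \(g\leq g'\) in \(\cL(Y,Z)\), with \(f'=f+h\) and \(g'=g+k\). By \cref{thm:ss-induces-enrichement}, composition is bilinear, so \(g'\Compl f'=(g+k)\Compl(f+h)\) equals the sum of the summable family \((g\Compl f,\,g\Compl h,\,k\Compl f,\,k\Compl h)\); here we use the remark after the definition of bilinear maps. By \ref{ax:pa} with the partition \(\{\{g\Compl f\},\text{rest}\}\), we get \(g\Compl f\leq g'\Compl f'\). It is enough to treat one argument at a time, since the preorder is transitive.

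\emph{Tensor and pairing.} Under \ref{ax:stens}, \cref{prop:ss-tensor} says that \(\cL\) is a symmetric monoidal \Dpmenriched{} category, so \(\ITens\) is bilinear. The same four-term argument then gives monotonicity of \(\ITens\). Under \ref{ax:swith}, \cref{prop:ss-cartesian} gives that \(\Bwith\) is linear. By \cref{rem:pairing-additive}, pairing is then additive componentwise: if \(f_1'=f_1+h_1\) and \(f_2'=f_2+h_2\), we have \(\Tuple{f_1',f_2'}=\Tuple{f_1,f_2}+\Tuple{h_1,h_2}\), where summability of the pair of tuples comes from item 2 of \cref{prop:ss-cartesian} composed with \(\Tuple{\Id,\Id}\).

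The only delicate point is the bookkeeping with \ref{ax:pa} in the transitivity step: one must check that summability of the nested sums gives summability of the regrouped sums, which is exactly what the \(\sumiff\) in \ref{ax:pa} provides. Everything else is a direct consequence of the enrichment already established.
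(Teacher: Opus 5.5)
Your proof is correct and takes the same approach as the paper. The paper's proof just notes that composition, \(\ITens\) and pairing preserve binary sums (via \cref{lemma:summability-compl}, \ref{ax:stens} and \ref{ax:swith}); your reflexivity and transitivity checks via the partition-associativity axiom supply the preorder part it leaves implicit. One small slip: the summability of \((\Tuple{f_1,f_2},\Tuple{h_1,h_2})\) follows from linearity of \(\Bwith\) (item 1 of \cref{prop:ss-cartesian}, then \cref{rem:pairing-additive}), not directly from its item 2 about the \(\Sproj i\)'s.
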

\begin{proof}
  Monotonicity of composition results from \cref{lemma:summability-compl}.
  The two next statements are obvious consequences of~\ref{ax:stens}
  and~\ref{ax:swith} respectively.
\end{proof}

\begin{definition}
  We say that \(\cL\) \emph{is Scott} if the following conditions are
  satisfied:
  \begin{itemize}
  \item in all homsets, the relation \(\leq\) is an order relation;
  \item for all objects \(X\) and \(Y\), each monotone sequence
    \((f_n)_{n\in\Nat}\) of elements of \(\cL(X,Y)\) has a least upper
    bound \(\sup_{n\in\Nat}f_n\in\cL(X,Y)\);
  \item the composition operation %
    \(\cL(X,Y)\times\cL(Y,Z)\to\cL(X,Z)\) is Scott-continuous in the
    sense that it commutes with the lubs of monotone sequences (its
    domain being equipped with the product order relation);
  \item \(\ITens\) commutes with the lubs of monotone sequences;
  \item \(\Oc\_\) induces an increasing and Scott-continuous map on
    morphisms.
  \end{itemize}
\end{definition}

\begin{lemma}
  \label{lemma:curlin-continuous}
  If \(\cL\) is Scott then
  \(\Curlin:\cL(\Tens ZX,Y)\to\cL(Z,\Limpl XY)\) is continuous.
\end{lemma}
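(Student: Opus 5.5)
Since \(\Curlin\) is a bijection, the natural plan is to express it through operations already known to be Scott continuous, namely composition and \(\ITens\), together with the characterization of \(\Curlin\) by \(\Curlininv\). Recall from \cref{rem:cur-additive} that \(\Curlin(f)=(\Limpl X f)\Compl\Curlin(\Id_{\Tens ZX})\). Continuity of \(\Curlin\) therefore reduces to continuity of \(f\mapsto \Limpl X f\), since composition with the fixed morphism \(\Curlin(\Id)\) is continuous by the Scott assumption. It is cleaner, however, to argue directly through the inverse. The plan has three steps: monotonicity, existence of the candidate lub, and identification of the lub.

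First I show that \(\Curlin\) is monotone. If \(f\leq g\), then \(g=f+h\) for some \(h\) summable with \(f\). I would like to conclude \(\Curlin(g)=\Curlin(f)+\Curlin(h)\). If the closed-enriched condition \cref{def:spm-enriched-smcc} is available, this is immediate. In the summability setting (where \ref{ax:slimpl} holds, for instance in the representable case), \cref{prop:ss-closure} gives additivity of \(\Limpl X\_\), and hence of \(\Curlin\) by \cref{rem:cur-additive}. So I would invoke that additivity, and I note that this is the place where a closedness compatibility assumption is genuinely used. Conversely, \(\Curlininv(k)=\Evlin\Compl\Tensp kX\) is monotone, because \(\ITens\) and composition are monotone by the preceding lemma. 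Thus \(\Curlin\) is an order isomorphism between the two homsets, taking \(\leq\) in each.

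Second, let \((f_n)\) be a monotone sequence in \(\cL(\Tens ZX,Y)\). Then \((\Curlin f_n)\) is monotone, so by the Scott assumption it has a lub \(k=\sup_n\Curlin f_n\).

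Third, I compute \(\Curlininv(k)=\Evlin\Compl\Tensp kX\). Since \(\ITens\) commutes with lubs of monotone sequences, \(\Tensp kX=\sup_n\Tensp{\Curlin f_n}X\). Since composition is continuous, \(\Evlin\Compl\Tensp kX=\sup_n \Evlin\Compl\Tensp{\Curlin f_n}X=\sup_n f_n\). Hence \(\Curlin(\sup_n f_n)=k=\sup_n\Curlin f_n\), which is the required continuity.

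The main obstacle is the monotonicity of \(\Curlin\) in the first step. Order-reflection of \(\Curlininv\) alone does not give it, because \(\leq\) is defined through the existence of a summable complement. For this reason I rely on additivity of \(\Curlin\), obtained from \cref{prop:ss-closure} and \cref{rem:cur-additive}.
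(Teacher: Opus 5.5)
Your proof is correct and follows the paper's route. Monotonicity of \(\Curlin\) comes from its additivity (\cref{rem:cur-additive}), and continuity comes from transporting lubs through the inverse \(k\mapsto\Evlin\Compl\Tensp kX\), which is continuous because \(\ITens\) and composition are. You spell out the lub-transport step that the paper leaves implicit, and you rightly flag that monotonicity relies on the closedness compatibility (\cref{def:spm-enriched-smcc}, equivalently \ref{ax:slimpl}); the paper uses this tacitly when it cites \cref{rem:cur-additive}.
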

\begin{proof}
  Monotonicity results from \cref{rem:cur-additive}.
  Continuity results from the fact that the inverse of the map %
  \(\Curlin:\cL(\Tens ZX,Y)\to\cL(Z,\Limpl XY)\) is the function %
  \(f\mapsto\Evlin\Compl\Tensp fX\) which is continuous by our
  assumptions about \(\cL\).
\end{proof}

\begin{proposition}
  If \(\cL\) is Scott then the pairing operation %
  \(\cL(X,Y_1)\times\cL(X,Y_2)\to\cL(X,\With{Y_1}{Y_2})\) is
  continuous.
\end{proposition}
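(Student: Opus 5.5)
The plan is to reduce the statement to the continuity of composition together with the universal property of the product, using the identity already recorded in \cref{rem:pairing-additive}: for $f_1\in\cL(X,Y_1)$ and $f_2\in\cL(X,Y_2)$ we have $\Tuple{f_1,f_2}=(\With{f_1}{f_2})\Compl\Tuple{\Id_X,\Id_X}$. This factorization does not obviously help, though, since we have no a priori continuity of $\IWith$ on morphisms. I would therefore argue directly with the projections and joint monicity.

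For monotonicity, I rely on the preceding lemma, which gives that pairing is monotone. Now take monotone sequences $(f_n)_{n\in\Nat}$ in $\cL(X,Y_1)$ and $(g_n)_{n\in\Nat}$ in $\cL(X,Y_2)$, with lubs $f$ and $g$. Then $(\Tuple{f_n,g_n})_{n\in\Nat}$ is monotone, so it has a lub $h$ because $\cL$ is Scott. By monotonicity of pairing, $\Tuple{f_n,g_n}\leq\Tuple{f,g}$ for all $n$, hence $h\leq\Tuple{f,g}$.

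For the converse inequality, I use continuity of composition with the projections:
\[
\Proj1\Compl h=\sup_n\Proj1\Compl\Tuple{f_n,g_n}=\sup_n f_n=f,
\]
and similarly $\Proj2\Compl h=g$. By the universal property of the product, $h=\Tuple{\Proj1\Compl h,\Proj2\Compl h}=\Tuple{f,g}$. This last step does not even need the order to be antisymmetric. So the lub of $(\Tuple{f_n,g_n})_n$ is $\Tuple{f,g}$, which proves continuity.

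The main point to check is that continuity of composition gives $\Proj1\Compl\sup_n k_n=\sup_n(\Proj1\Compl k_n)$ when only one argument varies. This follows by taking the constant sequence $\Proj1$ in the other argument; a constant sequence is monotone and has itself as lub, since $\leq$ is reflexive (take $h=0$). Beyond this I expect no real obstacle. With this approach the first monotonicity step is only needed to know that the sequence of pairs is monotone and hence has a lub.
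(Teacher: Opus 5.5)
Your proof is correct and is essentially the paper's argument, which the paper compresses into ``by the universal property of the cartesian product'': continuity of composition with the projections $\Proj1,\Proj2$ computes the lub $h$ componentwise, and the uniqueness part of the universal property then forces $h=\Tuple{f,g}$. As you observe yourself, the preliminary inequality $h\leq\Tuple{f,g}$ is redundant; monotonicity of pairing is needed only to guarantee that the sequence $(\Tuple{f_n,g_n})_{n\in\Nat}$ has a lub at all.
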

\begin{proof}
  By the universal property of the cartesian product.
\end{proof}

\begin{theorem}
  If \(\cL\) is Scott then the cartesian closed category \(\Kl\cL\) is
  a \(\lambda\)-category (in the sense of \cref{def:lambda-category}).
\end{theorem}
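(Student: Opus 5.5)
To show that \(\Kl\cL\) is a \(\lambda\)-category, I would check the three ingredients of \cref{def:lambda-category} in turn: the \(\omega\)-cpo enrichment of \(\Kl\cL\), Scott continuity of pairing in \(\Kl\cL\), and Scott continuity of currying. We already know that \(\Kl\cL\) is a CCC because \(\cL\) is a closed resource category.

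For the enrichment, the homsets are \(\Kl\cL(X,Y)=\cL(\Oc X,Y)\). Since \(\cL\) is Scott, each of them is a poset under \(\leq\). The least element is \(0\), because \(0+h=h\) for every \(h\) by \cref{prop:zero-neutral}. Lubs of monotone sequences exist by the Scott assumption. Kleisli composition is \(g\Comp f=g\Compl\Oc f\Compl\Digg X\). This is Scott continuous in \(g\) because composition in \(\cL\) is. It is Scott continuous in \(f\) because \(f\mapsto\Oc f\) is increasing and Scott continuous by assumption, and composing with the fixed \(\Digg X\) preserves this. Joint continuity in the pair \((f,g)\) follows from separate continuity together with monotonicity, by the usual diagonal argument for \(\omega\)-chains.

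Pairing in \(\Kl\cL\) is the pairing of \(\cL\) on \(\cL(\Oc X,Y_i)\). Its continuity is exactly the proposition proved just above.

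Currying in \(\Kl\cL\) is \(\Cur f=\Curlin(f\Compl\Seelyt)\). Precomposition with the fixed morphism \(\Seelyt\) is continuous, since composition in \(\cL\) is. \(\Curlin\) is continuous by \cref{lemma:curlin-continuous}. The composite of the two is therefore continuous.

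The only point needing care is whether the order on \(\Kl\cL(X,Y)\) is the right one. The CCC structure and the enrichment must use the same order, and here that order is the \(\leq\) of \(\cL(\Oc X,Y)\). I take it to be the intended definition, so no separate compatibility argument is needed. I also expect the joint-versus-separate continuity of composition to be immediate, since for increasing chains \(\sup_n g_n\Comp f_n=\sup_n\sup_m g_n\Comp f_m\).
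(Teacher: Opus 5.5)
Your proposal is correct and takes the same route as the paper, whose proof only says that the verification is straightforward using \cref{lemma:curlin-continuous}. You spell out exactly those checks: the \(\omega\)-cpo enrichment from continuity of composition and of the exponential on morphisms, continuity of pairing from the preceding proposition, and continuity of currying as \(\Curlin\) precomposed with \(\Seelyt\).
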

The proof is straightforward, using \cref{lemma:curlin-continuous}.

\begin{Example} \label{ex:fixpoint-models}
  Saying that \(s,t\in\Kl\COH(E,F)\) satisfy \(s\leq t\) simply means
  \(s\subseteq t\) and the fact that \(\COH\) is Scott comes from
  the fact that the set of cliques of a coherence space is closed
  under directed unions.
  
  The situation is completely similar in \(\PCOH\).
  Given \(x,y\in\Pcoh X\), we have \(x\leq y\) iff \(x_a\leq y_a\) for
  all \(a\in\Web X\) as easily checked.
  Therefore a monotone sequence \((x(n)\in\Pcoh X)_{n\in\Nat}\) has a
  lub in \(\Pcoh X\), namely
  \(x=(\sup_{n\in\Nat}x(n)_a)_{a\in\Web X}\in\Pcoh X\).
  It follows easily that \(\PCOH\) is Scott.
  As a consequence, for any probabilistic coherence space \(X\), we
  have %
  \(\cY\in\Kl\PCOH(\Simpl XX,X)\) which maps \(t\in\Kl\PCOH(X,X)\) to
  its least fixpoint \(\sup_{n\in\Nat}\Fun t^n(0)\).
  The fact that this fixpoint operator is itself an analytic morphism
  is a remarkable property of this semantics, and is deeply related to
  the fact that, in this semantics, the morphisms are matrices with
  nonnegative coefficients.

  Without this restriction, we could accept to have
  \(w\in\Kl\PCOH(\With\Sone\Sone,\Sone)\) with associated function
  \(\Pcoh\Sone\times\Pcoh\Sone\to\Pcoh\Sone\) given by
  \(\Fun w(u,v)=u+v-uv\), which is increasing and Scott continuous.
  Explicitly, for \(m\in\Web{\With\Sone\Sone}\), we have
  \(w_{m,\Sonelem}=1\) if
  \(m\in\Set{\Mset{(1,\Sonelem)},\Mset{(2,\Sonelem)}}\),
  \(w_{m,\Sonelem}=-1\) if \(m=\Mset{(1,\Sonelem),(2,\Sonelem)}\) and
  \(w_{m,\Sonelem}=0\) otherwise.
  If the fixpoint operator were accepted by this semantics, we would
  be able to define \(t\in\Kl\PCOH(\Sone,\Sone)\) such that
  \(\Fun t(u)=\Fun w(u,\Fun t(u))=u+\Fun t(u)-u\Fun t(u)\), that is %
  \(u(1-\Fun t(u))=0\).
  So we must have \(\Fun t(u)=1\) if \(u\in\Interoc 01\), and %
  \(\Fun t(0)=0\) because \(\Fun t(0)\) should be the least fixpoint
  of the function \(\Fun w(0,\_)\).
  So the function \(\Fun t:\Intercc01\to\Intercc01\) is not
  continuous, and \emph{a fortiori} cannot be described as a
  power series (even with possibly negative coefficients).

  Such a \(w\) is available in the Köthe sequence spaces model of
  \(\LL\)~\cite{Ehrhard02} whose associated Kleisli CCC has
  analytic functions as morphisms, but where general fixpoints are of
  course not available.
\end{Example}

Assume that \(\cL\) is Scott and closed (as an SMC).
Given \(f\in\Kl\cL(\With ZX,\Simpl YY)\), we can define %
\(\Ap\Sfix f\in\Kl\cL(\With ZX,Y)\) as %
\(\Ap\Sfix f=\cY\Comp f\) where %
\(\cY\in\Kl\cL(\Simpl YY,Y)\) comes from
\cref{th:lambda-cat-fixpoint-operator}.
Notice that 
\begin{equation*}
  \Ap\Sfix f=\App f{\Ap\Sfix f}\,.
\end{equation*}
Remember also that the family %
\((\Ap{\Sfix_n}f\in\Kl\cL(\With ZX,Y))_{n\in\Nat}\) %
defined inductively by \(\Ap{\Sfix_0}f=0\) and
\(\Ap{\Sfix_{n+1}}f=\App f{\Ap{\Sfix_n}f}\) %
is monotone and that
\begin{equation*}
  \Ap\Sfix f=\sup_{n\in\Nat}\Ap{\Sfix_n}f\,.
\end{equation*}

Remember that for each
\(f\in\Kl\cL(\With \Context X,\Simpl Y Y)\), the morphism
$\Dfun_{2, 3} f \in \Kl \cL(\With \Context \Dfun X, \Simpl{\Dfun
  Y}{\Dfun Y})$ defined in \cref{eq:partial-two-three} corresponds to
the partial derivative of $f$ with respect to its arguments in $X$ and
$Y$.
In particular, both $\Ap{\Dfun_2}{\Ap\Sfix f}$ (the differential of the 
fixpoint) and $\Ap{\Sfix}{\Dfun_{2, 3} f}$ (the fixpoint of the 
differential) belong to $\cL(\With \Context \Dfun X, \Dfun Y)$.

\begin{theorem}
  For each \(f\in\Kl\cL(\With ZX,\Simpl YY)\),
  \begin{equation*}
    \Ap{\Dfun_2}{\Ap\Sfix f}
    = \Ap{\Sfix}{\Dfun_{2, 3} f}.
  \end{equation*}
\end{theorem}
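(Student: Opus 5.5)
We prove the equation by comparing the two sides as least upper bounds of their finite approximants. On the left, $\Sfix f=\sup_n\Sfix_n f$. On the right, $\Sfix(\Dfun_{2,3}f)=\sup_n\Sfix_n(\Dfun_{2,3}f)$. The plan is therefore to show, by induction on $n$, that
\[
  \Dfun_2(\Sfix_n f)=\Sfix_n(\Dfun_{2,3}f)\quad\text{for all } n\in\Nat,
\]
and then to pass to the supremum. For the last step we need the map $h\mapsto\Dfun_2 h=\Dfun h\Comp\Dfunstr2$ from $\Kl\cL(\With ZX,Y)$ to $\Kl\cL(\With Z{\Dfun X},\Dfun Y)$ to be Scott continuous.

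\emph{Base case.} We have $\Sfix_0 f=0$, and must check $\Dfun_2 0=0$. In $\cL$, $\Dfun 0=(\Sfun 0)\Compl\Sdiff$. Since $\Sfun 0=\Stuple{0\Compl\Sproj i}_{i\in\Mdeg}=0$ by \cref{lemma:Sfun-functor-stuple} and zero morphisms absorb, $\Dfun 0=0$. Composing with $\Oc\Dfunstr2\Compl\Digg{}$ then gives $\Dfun_2 0=0=\Sfix_0(\Dfun_{2,3}f)$.

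\emph{Inductive step.} By definition $\Sfix_{n+1}f=\App f{\Sfix_n f}$. By \cref{th:dfun-part-app} with $g=\Sfix_n f$,
\[
  \Dfun_2(\Sfix_{n+1}f)=\App{\Dfun_{2,3}f}{\Dfun_2(\Sfix_n f)}.
\]
By the induction hypothesis this equals $\App{\Dfun_{2,3}f}{\Sfix_n(\Dfun_{2,3}f)}=\Sfix_{n+1}(\Dfun_{2,3}f)$. This step is purely formal once \cref{th:dfun-part-app} is available.

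\emph{Continuity, the main obstacle.} It remains to show that $\Dfun_2$ preserves lubs of monotone sequences. Unfolding in $\cL$, $\Dfun_2 h=(\Sfun h)\Compl\Sdiff\Compl\Oc\Dfunstr2\Compl\Digg{}$. Precomposition with the fixed morphism $\Sdiff\Compl\Oc\Dfunstr2\Compl\Digg{}$ is continuous because $\cL$ is Scott, so everything reduces to continuity of $h\mapsto\Sfun h$.

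First, $\Sfun$ is monotone. If $g=h+k$, then $h\Compl\Sproj i$ and $k\Compl\Sproj i$ are summable. Using bilinearity (\cref{thm:ss-induces-enrichement}) and \ref{ax:S-sum} on witnesses, one gets $\Sfun g=\Sfun h+\Sfun k$.

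Second, $\Sfun$ preserves lubs. Let $(h_n)$ be monotone with lub $h$. Then $(\Sfun h_n)$ is monotone, so it has a lub $u$. Each $\Sproj i\Compl{-}$ is continuous, so $\Sproj i\Compl u=\sup_n h_n\Compl\Sproj i=h\Compl\Sproj i=\Sproj i\Compl\Sfun h$. By joint monicity of the $\Sproj i$ (\ref{ax:S-monic}), $u=\Sfun h$.

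Alternatively, in the representable case $\Sfun=\Limplp\Dbimon\_$, and continuity follows from \cref{lemma:curlin-continuous} together with continuity of $\ITens$.

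With continuity of $\Dfun_2$ established, we conclude:
\[
  \Dfun_2(\Sfix f)=\sup_n\Dfun_2(\Sfix_n f)=\sup_n\Sfix_n(\Dfun_{2,3}f)=\Sfix(\Dfun_{2,3}f).
\]
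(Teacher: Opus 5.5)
Your proof is correct and follows the paper's argument: induction on \(n\) using \cref{th:dfun-part-app} for the inductive step, then passage to the supremum by Scott continuity of \(\Dfun_2\). The paper only asserts that continuity, and your reduction of it to continuity of \(h\mapsto\Sfun h\) (via monotonicity of \(\Sfun\), continuity of composition, and joint monicity of the \(\Sproj i\)) is a correct filling-in of that step.
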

\begin{proof}
  It suffices to prove that for all $n \in \Nat$, %
  $\Ap{\Dfun_2}{\Ap{\Sfix_n}f}
    = \Ap{\Sfix_n}{\Dfun_{2, 3} f}$ and conclude by Scott continuity of 
  $\Dfun_2$. We proceed by induction.
  \
  The base case is obvious.
  Next we have
  \begin{align*}
    \Ap{\Dfun_2}{\Ap{\Sfix_{n+1}}f}
    &=\Ap{\Dfun_2}{\App{f}{\Ap{\Sfix_n}{f}}}\\
    &=\App{\Dfun_{2, 3} f}{\Ap{\Dfun_2}{\Ap{\Sfix_n}{f}}}
      \tag*{by \cref{th:dfun-part-app}}\\
    &=\App{\Dfun_{2, 3} f}
      {\Ap{\Sfix_n}{\Dfun_{2, 3} f}} \tag*{by inductive hypothesis} \\
    &=\Ap{\Sfix_{n+1}}{\Dfun_{2, 3} f} \tag*{by definition. \qedhere}
  \end{align*}

\end{proof}

\begin{remark}
  So the differential of a fixpoint can itself be written as a
  fixpoint of a differential, meaning that we can combine our coherent
  differential calculus with general fixpoints, in sharp contrast with
  the differential \(\lambda\)-calculus and \(\LL\).
\end{remark}

\section*{Conclusion}
Motivated by a striking similarity between coherent differentiation
\cite{Ehrhard23a} and coherent Taylor expansion \cite{EhrhardWalch25},
we have developed a general theory of \DDIFF{}, parameterized by the
choice of a partial commutative monoid \(\Mdeg\) satisfying a
\Mpositivity{} requirement.
The resulting theory subsumes of course both coherent differentiation
and coherent Taylor expansion, which correspond to
\(\Mdeg=\Eset{0,1}\) with \(1+1\) undefined and
\(\Mdeg=(\Nat,\mathord+)\) respectively.
But many other choices are possible for \(\Mdeg\) leading to other
``differential theories'' including various calculi of finite
differences and truncated Taylor expansion, or Taylor expansion with a
remainder as well as many unknown theories that we plan to explore in
further work.

One of the authors is currently developing a
syntactic counterpart of \DDIFF{} in the form of a functional
programming language, fortunately much simpler (in terms of the number
of syntactic constructs and reduction rules) than the one
of~\cite{Ehrhard23b}.
A report on this work will be available soon.
%


{
\small
\bibliographystyle{alpha}
\bibliography{newbiblio.bib}}

\end{document}